\newtheorem{theorem}{Theorem}[chapter]
\newtheorem{lemma}[theorem]{Lemma}
\newtheorem{proposition}[theorem]{Proposition}
\newtheorem{corollary}[theorem]{Corollary}
\newtheorem{definition}{Definition}
\numberwithin{definition}{chapter}
\newtheorem{question}{Question}\newtheorem{remark}{Remark}
\newtheorem{observation}{Observation}
\numberwithin{observation}{chapter}
\newtheorem{conj}{Conjecture}
\newcommand{\tr}{\operatorname{Tr}}
\newcommand{\id}{\mathds{1}}
\newcommand{\myinv}[1]{#1^{\scalebox{0.9}[1.0]{-}1}}
\newcommand{\bed}{\[}
\newcommand{\eed}{\]}
\newcommand{\beq}{\begin{equation}}
\newcommand{\eeq}{\end{equation}}
\newcommand{\ket} [1] {\vert #1 \rangle}
\newcommand{\bra} [1] {\langle #1 \vert}
\newcommand{\N}{\mathbb{N}}
\newcommand{\C}{\mathbb{C}}
\tikzset{
  every picture/.style = {
    baseline={([yshift=-.5ex]current bounding box.center)}, 
    scale=1.2,
    transform shape,
    font=\scriptsize
  }
}
\tikzset{
  tensor/.pic={
    \begin{scope}[canvas is zx plane at y=0]
      \filldraw (0,0) circle (0.07);
    \end{scope}
    \draw (0,0,0)--(0,0.2,0);
    
  }
}
\tikzset{
  tensord/.pic={
    \begin{scope}[canvas is zx plane at y=0]
      \filldraw (0,0) circle (0.07);
    \end{scope}
    \draw (0,0,0)--(0,-0.2,0);
    
  }
}
\tikzset{
  tensorud/.pic={
    \begin{scope}[canvas is zx plane at y=0]
      \filldraw (0,0) circle (0.07);
    \end{scope}
    \draw (0,0.2,0)--(0,-0.2,0);
    
  }
}
\tikzset{
  3dpeps/.pic={
    \begin{scope}[canvas is zx plane at y=0]
      \draw (-0.5,0)--(0.5,0);
      \draw (0,-0.4)--(0,0.4);
      \filldraw (0,0) circle (0.07);
    \end{scope}
    \draw (0,0,0)--(0,0.2,0);
    
  }
}
\tikzset{
  3dpepsshort/.pic={
    \begin{scope}[canvas is zx plane at y=0]
      \draw[thick] (-0.5,0)--(0.5,0);
      \draw[thick] (0,-0.3)--(0,0.3);
      \filldraw (0,0) circle (0.07);
    \end{scope}
    \draw[thick] (0,0,0)--(0,0.2,0);
    
  }
}
\tikzset{
  3dpepsdownshort/.pic={
    \draw[thick] (0,0,0)--(0,-0.2,0);
    \begin{scope}[canvas is zx plane at y=0]
      \draw[thick] (-0.5,0)--(0.5,0);
      \draw[thick] (0,-0.3)--(0,0.3);
      \filldraw (0,0) circle (0.07);
    \end{scope} 
  }
}
 \tikzset{
  3dpepspshort/.pic={
    \begin{scope}[canvas is zx plane at y=0]
      \draw[thick] (-0.3,0)--(0.3,0);
      \draw[thick] (0,-0.1)--(0,0.1);
      \filldraw[draw=black,fill=blue] (0,0) circle (0.07);
    \end{scope}
  }
}
\tikzset{
  3dpepsdownpshort/.pic={
    \begin{scope}[canvas is zx plane at y=0]
      \draw[thick] (-0.3,0)--(0.3,0);
      \draw[thick] (0,-0.1)--(0,0.1);
      \filldraw[draw=black,fill=blue] (0,0) circle (0.07);
    \end{scope} 
  }
}
\tikzset{
  3dpepsres/.pic={
    \begin{scope}[canvas is zx plane at y=0]
      \draw (-0.5,0)--(0.5,0);
      \draw (0,-0.4)--(0,0.4);
      \filldraw (-0.07,-0.07) rectangle (0.07,0.07); 
    \end{scope}
    \draw (0,0,0)--(0,0.16,0);
  }
}
 \tikzset{
  3dpepspb/.pic={
    \begin{scope}[canvas is zx plane at y=0]
      \draw[thick] (-0.3,0)--(0.3,0);
      \draw[thick] (0,-0.25)--(0,0.25);
      \filldraw[draw=black,fill=black] (0,0) circle (0.07);
    \end{scope}
  }
}
\tikzset{
  3dpepsdownpb/.pic={
    \begin{scope}[canvas is zx plane at y=0]
      \draw[thick] (-0.3,0)--(0.3,0);
      \draw[thick] (0,-0.25)--(0,0.25);
      \filldraw[draw=black,fill=black] (0,0) circle (0.07);
    \end{scope} 
  }
}
\tikzset{
  3dpepsdown/.pic={
    \draw (0,0,0)--(0,-0.2,0);
    \begin{scope}[canvas is zx plane at y=0]
      \draw (-0.5,0)--(0.5,0);
      \draw (0,-0.4)--(0,0.4);
      \filldraw (0,0) circle (0.07);
    \end{scope} 
  }
}
\tikzset{
  hopf/.style={
    line cap=round,
    line width=1.5mm,
  },
  epsilon/.style={
   draw = red,
   thin, 
   rounded corners,
   fill opacity=0.2,
   fill=red,
  },
  every picture/.style = {
    baseline={([yshift=-.5ex]current bounding box.center)}, 
    font=\scriptsize
  }  
}
\tikzset{
  pepsGisopen/.pic={
 \draw (0.07,0.1,0)--(0.07,0,0) --(0.4,0,0);
 \draw (-0.07,0.1,0) --(-0.07,0,0) --(-0.4,0,0);
 \draw (0,0.1,-0.1)--(0,0,-0.1) --(0,0,-0.5);
  \draw (0,0.1,0.1) --(0,0,0.1) --(0,0,0.5);
}}
\tikzset{
  tensor/.style={
    inner sep = 0.055cm,
    shape = circle,
    draw,
    canvas is zx plane at y=0,
    fill
  },
    tensorr/.style={
    inner sep = 0.045cm,
    shape = circle,
    draw,
    fill
  },
    tensorB/.style={
    inner sep = 0.065cm,
    shape = rectangle,
    draw,
    canvas is zx plane at y=0,
    fill
  },
  every picture/.style = {
    baseline={([yshift=-.5ex]current bounding box.center)}, 
    transform shape,
    font=\scriptsize
  }
}
\tikzset{
  every picture/.style = {
    baseline={([yshift=-.5ex]current bounding box.center)}, 
    scale=1.2,
    transform shape,
    font=\scriptsize
  }
}
\tikzset{
  O2/.pic={
\filldraw[rounded corners=.05cm] (-0.4,-0.05) rectangle (0.4, 0.05);
  \draw (-0.3,0.2)--(-0.3,-0.2);
    \draw (0.3,0.2)--(0.3,-0.2);
     }
}
\tikzset{
  O2p/.pic={
\filldraw[rounded corners=.02cm] (-0.2,-0.025) rectangle (0.2, 0.025);
  \draw (-0.15,0.1)--(-0.15,-0.1);
    \draw (0.15,0.1)--(0.15,-0.1);
     }
}
\tikzset{
  O2SVD/.pic={
\draw (-0.35,0)--(0.15,0);     
  \draw (-0.3,0.12)--(-0.3,-0.12);
    \draw (0.1,0.12)--(0.1,-0.12);
      \filldraw[draw=black, fill=blue] (-0.35,-0.05) rectangle (-0.25,0.05);
        \filldraw[draw=black, fill=purple] (0.05,-0.05) rectangle (0.15,0.05);
     }
}
\tikzset{
  3dpeps/.pic={
    \begin{scope}[canvas is zx plane at y=0]
      \draw (-0.5,0)--(0.5,0);
      \draw (0,-0.4)--(0,0.4);
      \filldraw (0,0) circle (0.07);
    \end{scope}
    \draw (0,0,0)--(0,0.2,0);
  }
}
 \tikzset{
  3dpepsp/.pic={
    \begin{scope}[canvas is zx plane at y=0]
      \draw (-0.3,0)--(0.3,0);
      \draw (0,-0.25)--(0,0.25);
      \filldraw[draw=black,fill=blue] (0,0) circle (0.07);
    \end{scope}
  }
}
\tikzset{
  pepsplus/.pic={
      \draw (0,0,-2)--(0,0,2);
      \draw (-2,0,0)--(2,0,0);
      \foreach \x/\z in {0/-1.5,0/0,-1.5/0,1.5/0,0/1.5}{
        \pic at (\x,0,\z) {3dpeps};
      }
  }
}
\tikzset{
  pepsplusres/.pic={
      \draw (0,0,-2)--(0,0,2);
      \draw (-2,0,0)--(2,0,0);
      \foreach \x/\z in {0/-1.5,0/0,-1.5/0,1.5/0,0/1.5}{
        \pic at (\x,0,\z) {3dpepsres};
      }
  }
}
\tikzset{
  pepspluscontrac/.pic={
      \draw (0,0,-2)--(0,0,2);
      \draw (-2,0,0)--(2,0,0);
      \foreach \x/\z in {0/-1.5,0/0,-1.5/0,1.5/0,0/1.5}{
        \pic at (\x,0,\z) {3dpeps};
      }
       \foreach \c/\x in {zy plane at x/-1.5, zy plane at x/1.5,xy plane at z/-1.5,xy plane at z/1.5}{
        \begin{scope}[canvas is \c=\x]
          \draw (-0.5,0) rectangle (0.5,-0.15);
        \end{scope}
      }
  }
}
\tikzset{
  pepspluscontracdown/.pic={
      \draw (0,0,-2)--(0,0,2);
      \draw (-2,0,0)--(2,0,0);
      \foreach \x/\z in {0/-1.5,0/0,-1.5/0,1.5/0,0/1.5}{
        \pic at (\x,0,\z) {3dpepsdown};
      }
       \foreach \c/\x in {zy plane at x/-1.5, zy plane at x/1.5,xy plane at z/-1.5,xy plane at z/1.5}{
        \begin{scope}[canvas is \c=\x]
          \draw (-0.5,0.15) rectangle (0.5,0);
        \end{scope}
      }
  }
}
\tikzset{
  pepsplusrescontrac/.pic={
      \draw (0,0,-2)--(0,0,2);
      \draw (-2,0,0)--(2,0,0);
      \foreach \x/\z in {0/-1.5,0/0,-1.5/0,1.5/0,0/1.5}{
        \pic at (\x,0,\z) {3dpepsres};
      }
       \foreach \c/\x in {zy plane at x/-1.5, zy plane at x/1.5,xy plane at z/-1.5,xy plane at z/1.5}{
        \begin{scope}[canvas is \c=\x]
          \draw (-0.5,0) rectangle (0.5,-0.15);
        \end{scope}
      }
  }
}
\tikzset{
  openrectcirc/.pic={
 \foreach \x in {-0.9,0.9}{
  \draw (-0.2+\x,0.2,0) -- (-0.3+\x,0.2,0) -- (-0.3+\x,0,0) -- (0.3+\x,0,0) -- (0.3+\x,0.2,0) --(0.2+\x,0.2,0);
}
 \foreach \z in {-0.9,0.9}{
          \draw (0,0.2,-0.2+\z) -- (0,0.2,-0.3+\z) -- (0,0,-0.3+\z) -- (0,0,0.3+\z) -- (0,0.2,0.3+\z) --(0,0.2,0.2+\z);
}
        \filldraw[draw=black, fill=black] (-0.9,0,0) circle (0.05);
         \filldraw[draw=black, fill=black] (0.9,0,0) circle (0.05);
          \filldraw[draw=black, fill=black] (0,0,-0.9) circle (0.05);
            \filldraw[draw=black, fill=black] (0,0,0.9) circle (0.05);
  }
}
\tikzset{
  openrectcirc1/.pic={

  \draw  (-0.3-0.9,0,0) -- (0.3-0.9,0,0) -- (0.3-0.9,0.2,0) --(0.2-0.9,0.2,0);
    \draw (-0.2+0.9,0.2,0) -- (-0.3+0.9,0.2,0) -- (-0.3+0.9,0,0) -- (0.3+0.9,0,0);

  \draw (0,0.2,-0.2+0.9) -- (0,0.2,-0.3+0.9) -- (0,0,-0.3+0.9) -- (0,0,0.3+0.9);
 
          \draw (0,0,-0.3-0.9) -- (0,0,0.3-0.9) -- (0,0.2,0.3-0.9) --(0,0.2,0.2-0.9);

        \filldraw[draw=black, fill=black] (-0.9,0,0) circle (0.05);
         \filldraw[draw=black, fill=black] (0.9,0,0) circle (0.05);
          \filldraw[draw=black, fill=black] (0,0,-0.9) circle (0.05);
            \filldraw[draw=black, fill=black] (0,0,0.9) circle (0.05);
  }
}
\tikzset{
  pepsplusdown/.pic={
      \draw (0,0,-2)--(0,0,2);
      \draw (-2,0,0)--(2,0,0);
      \foreach \x/\z in {0/-1.5,0/0,-1.5/0,1.5/0,0/1.5}{
        \pic at (\x,0,\z) {3dpepsdown};
      }
  }
}
\tikzset{
PAplus/.pic={
 \foreach \x/\z in {-2/0,-1.5/-0.5,-1.5/0.5, 1.5/0.5,1.5/-0.5,2/0}{
          \draw (\x,-0.15,\z)--(\x,0,\z);
        }
        \pic at (0,-0.15,0) {pepsplusdown};
        \pic at (0,0,0) {pepsplus};
        \foreach \x/\z in {-0.4/1.5, -0.4/-1.5,0/-2, 0/2, 0.4/-1.5, 0.4/1.5}{
          \draw (\x,-0.15,\z)--(\x,0,\z);
        }
 }
}
\tikzset{
  3dpepsdownp/.pic={
    \begin{scope}[canvas is zx plane at y=0]
      \draw (-0.3,0)--(0.3,0);
      \draw (0,-0.25)--(0,0.25);
      \filldraw[draw=black,fill=blue] (0,0) circle (0.07);
    \end{scope} 
  }
}
\tikzset{
  3dGisopeps/.pic={
\draw (-0.1,0,0)--(-0.1,0.4,0);
       \begin{scope}[canvas is zx plane at y=0]
          \draw (0.1,0)--(0.5,0);
        \draw (-0.5,0)--(-0.1,0);
       \filldraw (0.3,0) circle (0.04);
        \filldraw (0,-0.25) circle (0.04); 
    \end{scope} 
    
     \begin{scope}[canvas is zx plane at y=0.4]
     \filldraw (-0.3,0) circle (0.04);
     \filldraw (0,0.25) circle (0.04);
       \draw (-0.5,0)--(-0.1,0);
      \draw[preaction={draw, line width=1pt, white}] (0.1,0)--(0.5,0);
    \end{scope} 
     \draw (0,0,0.1)--(0,0.4,0.1);
\draw (0,0,-0.1)--(0,0.4,-0.1);
       \draw[preaction={draw, line width=1pt, white}] (0.1,0,0)--(0.1,0.4,0);
         \begin{scope}[canvas is zx plane at y=0]
          \draw (0,0.1)--(0,0.4);
       \draw (0,-0.4)--(0,-0.1);
          \end{scope} 
                  \begin{scope}[canvas is zx plane at y=0.4]
          \draw (0,0.1)--(0,0.4);
       \draw (0,-0.4)--(0,-0.1);
          \end{scope} 

  }
}
\tikzset{
  3dGisopepsproj/.pic={
         \filldraw (-0.3,0,0) circle (0.04);
     \filldraw (0,0,0.3) circle (0.04);
       \filldraw (0.3,0,0) circle (0.04);
        \filldraw (0,0,-0.3) circle (0.04); 
\draw (-0.1,0,0)--(-0.1,0.4,0);
       \begin{scope}[canvas is zx plane at y=0]
          \draw (0.1,0)--(0.5,0);
        \draw (-0.5,0)--(-0.1,0);
    \end{scope} 
     \begin{scope}[canvas is zx plane at y=0.4]
       \draw (-0.5,0)--(-0.1,0);
      \draw[preaction={draw, line width=1pt, white}] (0.1,0)--(0.5,0);
    \end{scope} 
     \draw (0,0,0.1)--(0,0.4,0.1);
\draw (0,0,-0.1)--(0,0.4,-0.1);
       \draw[preaction={draw, line width=1pt, white}] (0.1,0,0)--(0.1,0.4,0);
         \begin{scope}[canvas is zx plane at y=0]
          \draw (0,0.1)--(0,0.4);
       \draw (0,-0.4)--(0,-0.1);
          \end{scope} 
                  \begin{scope}[canvas is zx plane at y=0.4]
          \draw (0,0.1)--(0,0.4);
       \draw (0,-0.4)--(0,-0.1);
          \end{scope} 
  }
}
\tikzset{
  3disopeps/.pic={
\draw (-0.1,0,0)--(-0.1,0.4,0);
       \begin{scope}[canvas is zx plane at y=0]
          \draw (0.1,0)--(0.5,0);
        \draw (-0.5,0)--(-0.1,0);
    \end{scope} 
    
     \begin{scope}[canvas is zx plane at y=0.4]
       \draw (-0.5,0)--(-0.1,0);
      \draw[preaction={draw, line width=1pt, white}] (0.1,0)--(0.5,0);
    \end{scope} 
     \draw (0,0,0.1)--(0,0.4,0.1);
\draw (0,0,-0.1)--(0,0.4,-0.1);
       \draw[preaction={draw, line width=1pt, white}] (0.1,0,0)--(0.1,0.4,0);
         \begin{scope}[canvas is zx plane at y=0]
          \draw (0,0.1)--(0,0.4);
       \draw (0,-0.4)--(0,-0.1);
          \end{scope} 
                  \begin{scope}[canvas is zx plane at y=0.4]
          \draw (0,0.1)--(0,0.4);
       \draw (0,-0.4)--(0,-0.1);
          \end{scope} 

  }
}
\tikzset{
  3dpepsshort/.pic={
    \begin{scope}[canvas is zx plane at y=0]
      \draw (-0.5,0)--(0.5,0);
      \draw (0,-0.3)--(0,0.3);
      \filldraw (0,0) circle (0.07);
    \end{scope}
    \draw (0,0,0)--(0,0.2,0);
    
  }
}
\tikzset{
  3dpepsdownshort/.pic={
    \draw (0,0,0)--(0,-0.2,0);
    \begin{scope}[canvas is zx plane at y=0]
      \draw (-0.5,0)--(0.5,0);
      \draw (0,-0.3)--(0,0.3);
      \filldraw (0,0) circle (0.07);
    \end{scope} 
  }
}
\begin{document}

\thispagestyle{empty}

\begin{center}

{\huge \textbf{Symmetries in topological tensor network states: classification, construction and detection}}\\[1cm]

{\LARGE \textbf{ Simetr\'ias en estados de redes tensoriales topol\'ogicas: clasificaci\'on, construcci\'on y detecci\'on}}\\[1cm]

\begin{figure}[ht!]
\begin{center}
\includegraphics[scale=0.15]{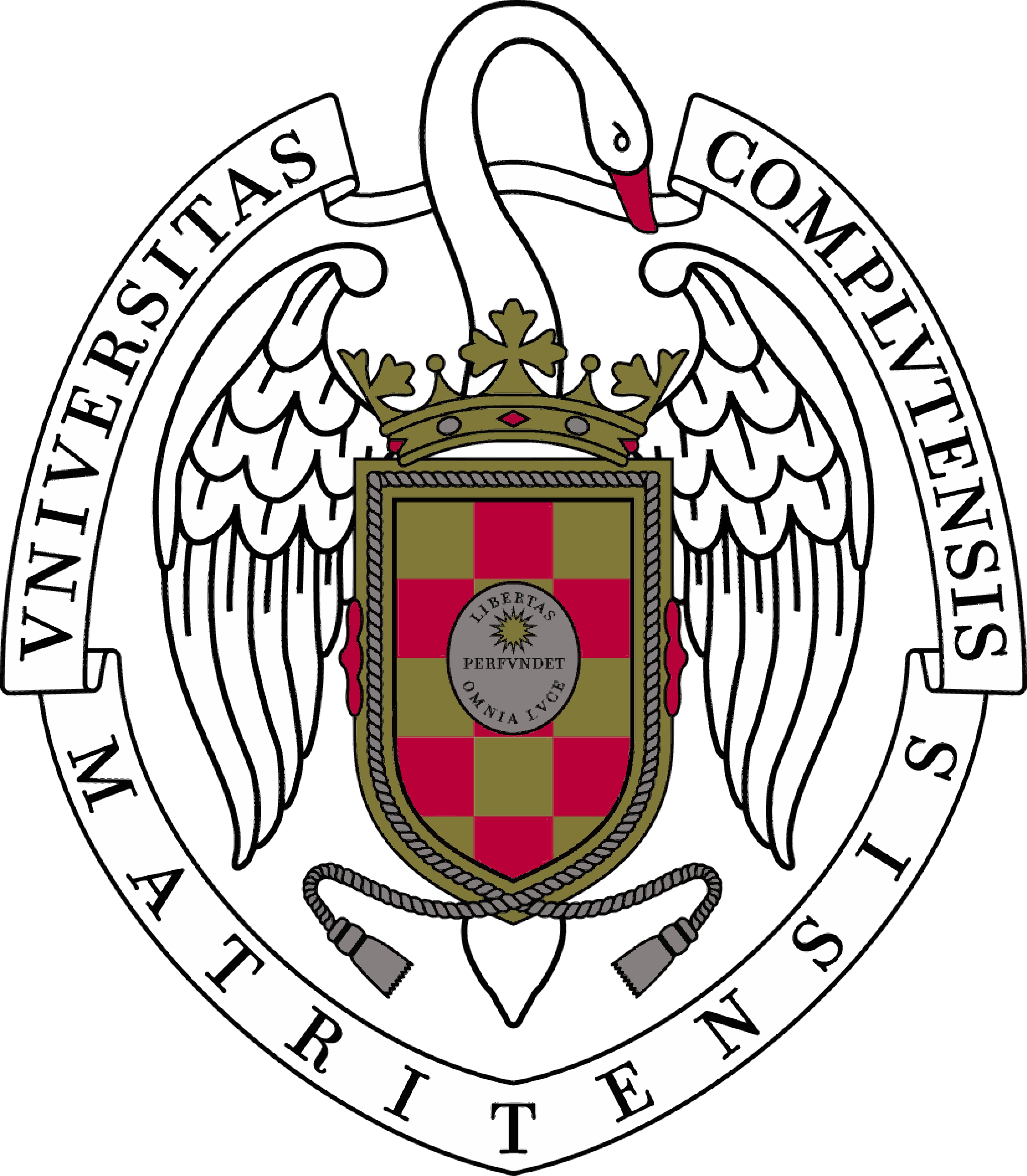}
\end{center}
\end{figure}

\textbf{{\LARGE UNIVERSIDAD COMPLUTENSE DE MADRID}}\\[0.3cm]
\textbf{{\LARGE Facultad de Ciencias Matem\'aticas}\\[0.3cm]
{\Large Departamento de An\'alisis y Matem\'atica Aplicada}}\\[2cm]

\hrule 
\vspace{25px}
 {\large A thesis submitted in fulfillment of the requirements \\ for the degree of Doctor of Philosophy by}\\[1cm]
{\Large \textbf{Jos\'e Garre Rubio}}\\[1cm]
{\large Under the supervision of} \\[0.4cm]
{\large David P\'erez Garc\'ia}\\[0.1cm]
{\large $\&$}\\[0.1cm]
{\large  Sofyan Iblisdir}\\[0.8cm]
\end{center}

\afterpage{\null\newpage}

\thispagestyle{empty}

\newpage\phantom{blabla}

\begin{flushright}
 \small {\it Estando el tío José 'el nano' segando con su Miguel, le dice José \\
 a su hermano: "Esta es mucha siega pa' los tres \\
  y aquí se nos va el verano".}\\
 -Refranero cartagenero; adaptación de José Garre Nieto.
\end{flushright}

%
%


\newpage\phantom{blabla}


\newpage

\section*{Abstract}

The exotic properties which appear in the quantum setting, mainly manifested in strongly-correlated systems, offer potential applications in future technologies. For instance, high-precision measurements or the new paradigm of a quantum computer.

One of the most prominent features of quantum physics is \emph{entanglement}: the correlations between the \emph{parties} of a system that cannot be described classically. This property is believed to be the one endowing quantum mechanics its complexity. Therefore, characterizing the entanglement properties of strongly-correlated systems plays a fundamental role for condensed-matter physics. However, this complexity comes hand in hand with a challenge: the number of parameters needed to describe a system grows exponentially with the number of parties in the system. This challenge lies at the heart of the mathematical description of quantum mechanics. Such a situation happens naturally in many-body systems and in particular, in condensed-matter physics where the relevant physics appears when considering large systems. Then, how can we deal with this difficulty? The key observation here is that realistic physical systems are the ones whose parties interact locally and this restricts the entanglement pattern in the low-energy sector (zero temperature). So the question is shifted to: Is there a framework that captures states with such entanglement pattern? The answer is yes: \emph{tensor network states}. 

Tensor network states describe many-body quantum states locally. This local description reflects the ubiquitous effort in physics to describe global (mesoscopic) properties in terms of local constituents. 
Despite their apparent simplicity, tensor network states are capable of approximating the entanglement pattern present in the low-energy sector of locally-interacting systems, as well as of precisely describing particularly relevant many-body quantum states (those which are fixed points of suitable renormalization processes). 

The fact that they are defined by a small number of parameters, all the while describing the relevant set of states, makes them highly suitable as a variational \emph{ansatz} for numerical calculations. 

Besides their initial numerical motivation, tensor network states also constitute a very valuable tool to study strongly correlated systems analytically. In this thesis, entitled 'Symmetries in topological tensor network states: classification, construction and detection' we focus on this line.

In particular, we analyze tensor networks that describe quantum phases which have been identified to possess a new type of non-local ordering: \emph{topological} order. One of the most important features of a system with topological order is the existence of quasiparticle excitations with unusual exchange properties which differ from regular particles such as bosons (like photons) or fermions (like electrons). These excitations, called \emph{anyons}, emerge as the collective behaviour of the constituents of the system and can transform non-trivially under global symmetries present in the system. As a paradigmatic example, the anyons of the fractional quantum Hall effect can have a fractional electric charge of an electron. Another example are spin liquid systems where the quasiparticles are chargeons and spinons, which can be seen as particles coming from the splitting of the electron (the electric charge plus the spin). That is, the symmetry charge has been \emph{fractionalized}.

It turns out that the low-energy sector of some topological models, including anyons, can be described in terms of tensor network states. Moreover, the topological order, a global property, is encoded locally: a symmetry of the local tensors. Such encodings are crucial since once the local structure that characterizes a global property is identified mathematically, one can focus on that family of tensor network states to systematically study different properties. In this sense, tensor network states arise as the formal framework to work in many-body physics. In particular, they have succeeded in rigorously classifying (gapped) quantum phases in one-dimensional systems under symmetries (operations that leave a system invariant). The goal of this thesis is to contribute to the classification of 2D topological tensor network states under symmetries.

An important problem, related to the local characterization of global symmetries, is how to characterize the relationship between tensors that describe the same tensor network state (and hence should be considered as 'equivalent'). Results clarifying this issue are usually entitled 'Fundamental Theorems' due to their profound implications. 

In this thesis, we prove such theorems for some previously studied classes of tensor network states  --injective and normal-- improving the existing results by relaxing some hypotheses. We also state a fundamental theorem for tensor network states describing topological phases, where no previous results were known as such. Specifically, we do it for the family of quantum double models of a finite group, the so-called $G$-injective PEPS. Once we have this theorem, global symmetries that act on-site (acting as a tensor product in each site of the network) can be characterized and classified. 

We classify global on-site symmetries, coming from a finite group $Q$, in $G$-injective PEPS obtaining a finite number of classes for each pair $(Q,G)$. This classification is related to the possible group extensions, $E$, of $G$ by $Q$. Moreover, we provide a method to construct a representative of each class, concluding that our classification is complete. The representatives are also constructed using the theory of group extensions, namely starting with a representation of the extension $E$ and then restricting the tensors to some (related) representation of $G$ locally. This representative is given in a renormalization group fixed point form, which facilitates the extraction of all properties of the phase. Since a general symmetric topological phase could be away from this representative point, a method to detect the class is required. We solve this problem by proposing an order parameter to detect the fractionalization of the symmetry (charge) in a given state which has been elusive in previous studies. The order parameter captures the symmetry class via the detection of an invariant quantity of the extension group. We conclude our thesis by touching on  some mathematical open problems in PEPS.

\newpage
\section*{Resumen}

Las inusuales propiedades que surgen en la mec\'anica cu\'antica, manifestadas especialmente en sistemas fuertemente correlacionados, ofrecen aplicaciones potenciales a nuevas tecnolog\'ias. Por ejemplo, mediciones de alta precisi\'on o la creaci\'on un nuevo modelo de computaci\'on (el ordenador cu\'antico).

Una de las propiedades m\'as destacadas de la f\'isica cu\'antica es el \emph{entrelazamiento}: las correlaciones entre las \emph{partes} de un sistema que no se pueden describir cl\'asicamente. Se cree que esta propiedad es la que dota a la mec\'anica cu\'antica de su complejidad. Por lo tanto, caracterizar el entrelazamiento de sistemas fuertemente correlacionados desempeña un papel fundamental en la f\'isica de la materia condensada. Pero esta complejidad viene unida a una dificultad: el n\'umero de par\'ametros necesario para describir un sistema crece exponencialmente con el n\'umero de partes en el sistema. Este reto est\'a en el coraz\'on de la descripci\'on matem\'atica de la mec\'anica cu\'antica. Esto ocurre de forma natural en sistemas de muchos cuerpos y en particular en f\'isica de la materia condensada, donde la f\'isica relevante aparece cuando el tamaño del sistema es grande. Entonces, ¿c\'omo podemos lidiar con esta dificultad? La observaci\'on clave es que los sistemas f\'isicos reales son aquellos en los que las partes interaccionan  localmente y esto restringe el patr\'on de entrelazamiento en el sector de baja energ\'ia de los sistemas. Por lo que ahora la cuesti\'on es: ¿hay un formalismo que capture los estados con ese tipo de patr\'on de entrelazamiento? La respuesta es s\'i: \emph{los estados de redes tensoriales}.

Los estados de redes tensoriales describen estados cu\'anticos de muchos cuerpos de forma local. Esta descripci\'on local refleja el esfuerzo 
ubicuo en la f\'isica de describir propiedades globales (mesosc\'opicas) en t\'erminos de constituyentes locales.
A pesar de su aparente simplicidad, los estados de redes tensoriales son capaces de aproximar el patr\'on de entrelazamiento en el sector de baja energ\'ia de sistemas que interaccionan localmente, as\'i como describir exactamente estados cu\'anticos de muchos cuerpos de enorme importancia (aquellos que son puntos fijos de un proceso de renormalizaci\'on).
El hecho de que est\'en definidos con un n\'umero pequeño de par\'ametros, describiendo al mismo tiempo el conjunto de estados relevantes, los hace adecuados como una soluci\'on variacional para c\'alculos num\'ericos.

A pesar de su motivaci\'on inicialmente num\'erica, los estados de redes tensoriales constituyen una herramienta anal\'itica muy valiosa para el estudio de sistemas fuertemente correlacionados. En esta tesis, con t\'itulo 'Simetr\'ias en estados de redes tensoriales topol\'ogicas: clasificaci\'on, construcci\'on y detecci\'on', nos centramos en esta l\'inea.

En particular, analizamos redes tensoriales que describen fases cu\'anticas que albergan un nuevo tipo de order no local: \emph{el orden topol\'ogico}. Uno de los aspectos m\'as importantes de un sistema con orden topol\'ogico es la existencia de excitaciones de cuasipart\'iculas con propiedades inusuales bajo intercambio que difieren tanto de los bosones (como fotones) como de los fermiones (como electrones). Estas excitaciones, denominadas anyones, emergen como un comportamiento colectivo de los constituyentes del sistema y pueden transformase de forma no trivial bajo las simetr\'ias globales del sistema. Como ejemplo paradigm\'atico, los anyones del efecto Hall cu\'antico fraccionario tienen una carga  que es una fracci\'on de la del electr\'on. Otro ejemplo son los sistemas de l\'iquidos de esp\'in cuyas cuasipart\'iculas son \emph{chargeons} y \emph{spinons} que pueden interpretarse como resultado de la separaci\'on de un electr\'on (carga el\'ectrica m\'as esp\'in). En este caso, la simetr\'ia se ha \emph{fraccionalizado}.

Resulta que el sector de baja energ\'ia de algunos modelos topol\'ogicos, incluidos los anyones, tienen una descripci\'on en t\'erminos de estados de redes de tensores. Adem\'as, el orden topol\'ogico, una propiedad global, se codifica localmente como una simetr\'ia en los tensores individuales. Esta codificaci\'on es crucial debido a que cuando una estructura local que caracteriza una propiedad local es identificada matem\'aticamente, uno puede centrarse en esa familia de estados de redes tensoriales para estudiar sistem\'aticamente sus propiedades. En este sentido, los estados de redes tensoriales surgen como el marco formal para trabajar en la f\'isica de muchos cuerpos. En particular, han conseguido clasificar rigurosamente las fases cu\'anticas (con \emph{gap}) en sistemas unidimensional con simetr\'ias (operaciones que dejan invariante el estado). El objetivo de esta tesis es contribuir a la clasificaci\'on de redes de tensores topol\'ogicas en dos dimensiones con simetr\'ias.

Una cuesti\'on importante, relacionada con la caracterizaci\'on local de simetr\'ias globales, es c\'omo caracterizar la relaci\'on que existe entre tensores que describen el mismo estado de red tensorial y que, por ende, deber\'ian considerarse equivalentes. Los resultados que aclaran esta cuesti\'on han sido denominados 'Teoremas Fundamentales' debido a sus profundas implicaciones.

En esta tesis probamos dichos teoremas para algunas clases de estados de redes de tensores previamente estudiadas, inyectivos y normales, mejorando los resultados existentes relajando algunas hip\'otesis. Adem\'as, probamos un teorema fundamental para estados de redes tensoriales que describen fases tipol\'ogicas, donde ning\'un resultado previo exist\'ia. Concretamente, lo probamos para la familia que describe los modelos dobles cu\'anticos de un grupo finito, los denominados PEPS $G$-inyectivos. Una vez tenemos este teorema, se pueden caracterizar y clasificar simetr\'ias globales que act\'uan localmente en cada sitio de la red.

Con la caracterización anterior, clasificamos simetr\'ias globales que act\'uan de forma local, provenientes de un grupo finito $Q$, en PEPS $G$-inyectivos. Obtenemos un n\'umero finito de clases para cada par $(Q,G)$. Esta clasificaci\'on est\'a relacionada con las posibles extensiones de grupo, $E$, de $G$ por $Q$. Adem\'as, proporcionamos un m\'etodo para construir un representante de cada clase, concluyendo que nuestra clasificaci\'on es completa. Los representantes tambi\'en se construyen usando la teor\'ia de extensiones de grupo, empezando por una representaci\'on de la extensi\'on $E$ y restringiendo localmente el tensor a los elementos de $G$. Este representante est\'a dado en una forma de punto fijo de renormalizaci\'on, lo que facilita la obtenci\'on de todas las propiedades de la fase cu\'antica. Debido a que un sistema con orden topol\'ogico y simetr\'ia estar\'a lejos de ese representante, es necesario un m\'etodo para detectar la fase. Nosotros resolvemos ese problema proponiendo unos invariantes de la fase y sus correspondientes par\'ametros de orden para detectar el patr\'on de fraccionalizaci\'on de la simetr\'ia en un estado determinado. Los par\'ametros de orden propuestos capturan la fase mediante la detecci\'on de cantidades invariantes en la extensi\'on de grupo. Concluimos  la tesis formulando algunas preguntas matemáticas abiertas en el campo de los PEPS.

\newpage

{\Large \textbf{Structure of the thesis}}\\

This thesis has been structured as follows: the first introductory chapter sets the basic formalism used in this work (the mathematics of quantum physics and tensor networks states). This introduction is based on the article \cite{Cirac19A}. The others chapters are meant to present the results of this thesis. We have included a summary of the results at the beginning of each chapter to facilitate the readability of the manuscript.  In addition, at the end of each chapter we discuss the results and the conclusions.

\

The second chapter states and proves the (Fundamental) theorems that characterize the tensors of two networks that describe the same state. The chapter includes the fundamental theorem for injective and normal PEPS in any dimension and geometry, based on the article \cite{Molnar18A}, and the fundamental theorem for $G$-injective PEPS based on the unpublished article \cite{Molnarinprep}. The third chapter exposes the classification of symmetries in $G$-injective PEPS by applying the fundamental theorem. It also includes a study of gauging and domain walls in these models, these results are part of the unpublished article \cite{Molnarinprep}. The fourth chapter shows how to construct a representative of each phase in the developed classification of symmetric $G$-injective PEPS. It also includes an application to MPS invariant under symmetries; these results are based on the article \cite{Garre17}. In the fifth chapter, we present a local method to identify the quantum phase of a given model, based on \cite{Garre19}. Finally, Chapter 6 outlines a review on mathematical open problems in PEPS, which is part of \cite{Cirac19A}.\\

The last chapter covers the conclusions, open questions and future work obtained from our results.

\

The author acknowledge financial support from MINECO (grant MTM2014- 54240-P), from Comunidad de Madrid (grant QUITEMAD+- CM, ref. S2013/ICE-2801), and the European Research Council (ERC) under the European Union's Horizon 2020 research and innovation programme (grant agreement No 648913). This work has been also partially supported by ICMAT Severo Ochoa project SEV-2015-0554 (MINECO).

\

{\Large \textbf{Publications}}

\begin{itemize}
\item \cite{Garre17} J. Garre-Rubio, D. P\'erez-Garc\'ia, S. Iblisdir: {\it Symmetry reduction induced by anyon condensation: a tensor network approach, Phys. Rev. B 96, 155123, 16 October 2017}

\item  \cite{Molnar18A} A. Molnar, J. Garre-Rubio, D. P\'erez-Garc\'ia, N. Schuch, J. I. Cirac: {\it Normal projected entangled pair states generating the same state, New Journal of Physics, 20, November 2018}

\item  \cite{Cirac19A} J. I. Cirac, J. Garre-Rubio, D. P\'erez-Garc\'ia : {\it Mathematical open problems in Projected Entangled Pair States}, Rev Mat Complut (2019) 32: 579. https://doi.org/10.1007/s13163-019-00318-x

\item \cite{Garre19} J. Garre-Rubio, S. Iblisdir: {\it Local order parameters for symmetry fractionalization}, New Journal of Physics, Volume 21, November 2019.  https://doi.org/10.1088/1367-2630/ab4fff
\end{itemize}
\

{\Large \textbf{In preparation}}

\begin{itemize}
 \item \cite{Molnarinprep} A. Molnar, J. Garre-Rubio, D. P\'erez-Garc\'ia, N. Schuch, J. I. Cirac: {\it In preparation}
\end{itemize}

\newpage

{\Large \textbf{Contributions}}\\
\begin{enumerate}

\item Talk 'Symmetries through anyon condensation with tensor networks' on 16/02/17 in the workshop 'Entanglement in Strongly Correlated Systems' at Centro de Ciencias de Benasque Pedro Pascual in Huesca, Spain.

\item Talk 'Symmetry reduction induced by anyon condensation: a tensor network approach' on 26/04/17 in the MathQI Seminar at UCM, Madrid, Spain.

\item Talk 'Symmetry reduction induced by anyon condensation: a tensor network approach' on 18/04/18 in the Special Seminar at the Max Planck Institute of Quantum Optics in Munich, Germany.

\item Talk 'Normal projected entangled pair states generating the same state' on 23/05/18 in the MathQI Seminar at UCM, Madrid, Spain.

\item Talk 'Symmetry reduction induced by anyon condensation: a TN approach' on 12/06/18 in the workshop 'Symposium on Quantum Matter' at University of Zurich, Switzerland. 

\item Poster 'An order parameter for symmetry fractionalization with projected entangled pair states' on 21 - 25/01/19 in the workshop 'Anyons in Quantum Many-Body Systems' at Max Planck Institute for the Physics of Complex Systems in Dresden, Germany.

\item Talk Benasque 2019, 'Symmetry fractionalization detection without dimensional compactification using PEPS' on 1/03/19 in the workshop 'Entanglement in Strongly Correlated Systems' at Centro de Ciencias de Benasque Pedro Pascual in Huesca, Spain.

\item Talk 'Symmetry obstructions to anyon condensation' on 22/05/19 in the workshop 'Bringing Young Mathematicians Together' at ICMAT, Madrid/Spain.

\end{enumerate}

{\bf Research visit:} Max Planck Institute of Quantum Optics in Munich, Germany hosted by Andras Molnar and Norbert Schuch.  3 weeks visit (from 9 to 27 April 2018).\\

\newpage
{\Large \textbf{List of acronyms}}\\

\begin{itemize}
\item [] AKLT: Affleck, Lieb, Kennedy and Tasaki
\item [] $\mathcal{D}(G)$: quantum Double model of the group $G$
\item [] GHZ: Greenberger-Horne-Zeilinger
\item [] irrep(s): irreducible representation(s)
\item [] LHS: Left Hand Side
\item [] MPS: Matrix Product State
\item [] PEPS: Projected Entangled Pair States
\item [] RG: Renormalization Group
\item [] RHS: Right Hand Side
\item [] SD: Schmidt decomposition
\item [] SET: Symmetry Enriched Topological
\item [] SF: Symmetry Frationalization
\item [] SPT: Symmetry Protected Topological
\item [] SVD: Singular Value Decomposition
\item [] TC: Toric Code
\item [] TN: Tensor Network
\end{itemize}

\tableofcontents 

\newpage

\pagenumbering{arabic}

\chapter{Introduction}\label{chapter:Intro}

\section{Basics notions on Quantum Physics}
Quantum mechanics were developed in the mid-20s and, since then, it has been the underlying description of any physical model, perhaps with the exception of gravitation. 
The manifestation of quantum mechanics as such universal framework came from its formulation as four axioms or postulates. These postulates provide a mathematical framework that has to be followed by any physical setup. The postulates, for bosonic systems and non-relativistic theories are the following-see \cite{NielsenChuang10} for a more complete description-:\\

\paragraph{Systems and states.} This postulate provides the frame where the physical objects are placed; a system is described by a Hilbert space $\mathcal{H}$, which we suppose here always finite-dimensional, so $\mathcal{H}=\mathbb{C}^d$, and a state is represented as a unit vector in that space.
We will use the standard notation in quantum mechanics, introduced by Dirac \cite{Dirac39} and named  {\it bra-ket notation}, where (column) vectors $v \in \mathcal{H}$ are denoted as $\ket{v}$. The scalar product between $\ket{u}$ and $\ket{v}$ is written as $\langle u |v \rangle$ which is justified by Riesz's theorem that establishes that any linear functional $F:\mathcal{H}\to \mathbb{C}$ is given by the scalar product $\langle \cdot ,\cdot \rangle$ with a vector $f\in \mathcal{H}$; $F [u]= \langle f,u\rangle$. Then, the \emph{bra} $\bra{u}$ represents the dual vector of the \emph{ket} $\ket{u}$ and a rank-one operator adopts the form of the product $\ket{u}\bra{v}$.

The simplest case of a system, but extremely relevant, is $\mathcal{H}=\mathbb{C}^{2}$. It is known as a qubit system, which is the quantum analog of a bit, and the canonical basis is usually denoted as $\{ \ket{0},\ket{1}\}$. Instances of qubit systems are the spin of an electron or the polarization of a photon. Quantum mechanics also allows for the description of not completely known states; it is a probabilistic theory. These states are represented as \emph{density} matrices $\rho= \sum_i p_i \ket{\psi_i} \bra{\psi_i}$, called mixed states, where $p_i \ge 0$ represents the probability for the system to be in the state $\ket{\psi_i}$ (if the $\ket{\psi_i}$ are mutually orthogonal) so they fullfil $\sum_i p_i=1$. The (pure) state $\ket{\psi}$ is just represented as $\ket{\psi}\bra{\psi}$ in the density matrix formalism. 

\paragraph{Measurements.} This postulate describes the way in which quantum measurements are implemented and how they affect the measured system. The magnitude to measure is represented by a hermitian operator $\mathcal{O}$, called observable, in the case of projective measurements (see Ref. \cite{NielsenChuang10} for the general case). The average value or the expectation value of $\mathcal{O}$ in the system described by the mixed state $\rho$ is $\langle \mathcal{O} \rangle_\rho=\tr[\mathcal{O}\rho]$ (which coincides with  $\bra{\psi} \mathcal{O} \ket{\psi}$ for a pure state $\ket{\psi}$).

\paragraph{Multiple systems.} The space associated to a composite system is mathematically represented by the tensor product of the components; $\mathcal{H}=\mathcal{H}_1\otimes\mathcal{H}_2 \otimes \cdots \otimes \mathcal{H}_N$. If we have $N$ systems, each of them in the state $\ket{\psi_i}\in \mathcal{H}_i$, the global state is $\ket{\psi_1}\otimes\ket{\psi_2}\otimes \cdots \otimes \ket{\psi_N}\in \mathcal{H}$. However, there are also states that cannot be written in a tensor product form, these are called entangled states. Let us consider an example: a two qubit system $\mathcal{H}=\mathbb{C}^{2}\otimes\mathbb{C}^{2}$  in the state $\ket{\phi}=(\ket{0}\otimes\ket{1}+\ket{1}\otimes \ket{0})/\sqrt{2}$ cannot be written as $\ket{\phi}= \ket{a}\otimes\ket{b}$. This property is known as entanglement and it is believed to be the one endowing quantum mechanics its complexity. We will simplify the notation in the tensor products writing $\ket{a}\ket{b}$ or $\ket{ab}$ instead of $\ket{a}\otimes\ket{b}$ so $\ket{\phi}\equiv (\ket{01}+\ket{10})/\sqrt{2}$.  \\

\paragraph{Evolution.} A quantum system changes with time according to a unitary transformation: $\ket{\psi(t_1)}=U(t_1,t_0)\ket{\psi(t_0)}$. The infinitesimal form of such evolutions is described by the Schr\"{o}dinger equation:
\begin{equation}
i\hbar \frac{d\ket{\psi}}{dt }=H\ket{\psi},
\end{equation}
where $\hbar$ is Planck's constant and $H$ is the self-adjoint operator known as the Hamiltonian of the system.
 The Hamiltonian is the observable that measures the energy of the system. Since we are dealing with a finite-dimensional Hilbert space, $H$ admits a discrete spectral decomposition, $H=\sum_i E_i \ket{e_i}\bra{e_i}$ where the eigenvectors $\ket{e_i}$ are called energy eigenstates and $E_i$ is the energy of $\ket{e_i}$.
 
 The eigenstate corresponding to the smallest eigenvalue (energy), $E_0$, is known as the ground state (GS) and the other eigenstates are called excited states. The difference between the two smallest eigenvalues (energy levels) is known as spectral gap, or just gap, of the Hamiltonian and plays a fundamental role in many problems.\\

The Hamiltonian plays then a fundamental role in the description of a system. But its study encounters two main difficulties. On the one hand, this operator has to be deduced from the physics of the problem --the interaction between the parties among other considerations-- which is not a simple task. On top of that, the Hamiltonian obtained in this way would be in general very complex. To simplify the task, effective Hamiltonians are defined that aim to capture the relevant features of the system. Then, effective models are proposed to describe the low energy sector of the problem,  where the relevant quantum behaviors are expected to appear. On the other hand, when one has the effective model, it has to be solved. That is, the ground state and the low-energy excitations of the Hamiltonian have to be found  together with their energies. Since we are interested in many body physics, that is, when a large number of parties is considered, we have to solve the problem in a \emph{huge} Hilbert space. Specifically, the dimension of the total Hilbert space grows exponentially with the number of parties in it because of the inherent tensor product structure. This means that to describe any (entangled) state an exponential number of parameters is needed, which makes the task intractable. For example $N$ qubit systems are described by the Hilbert space ${\mathbb{C}^{2}}^{ \otimes N}$, so the dimension of the full space is $2^N$. But the naive fact that the full Hilbert space of \emph{any} quantum system grows exponentially with the number of parties is not an unavoidable obstruction as we shall see in the next subsection.

\section{Setup}

The systems we will consider are placed on one-dimensional or two-dimensional finite size lattices $\Lambda$ where each vertex $v\in\Lambda$ represents a subsystem. The Hilbert space of each subsystem $\mathcal{H}_v$ is finite dimensional and isomorphic to $\mathbb{C}^{d_v}$ for some $d_v \in \mathbb{N}$. The total Hilbert space is thus:
$$\mathcal{H}_\Lambda=\bigotimes_{v\in \Lambda} \mathcal{H}_v. $$
We will focus on square lattices, so the one-dimensional cases are just segments or rings of length $L$ for open boundary conditions or periodic boundary conditions respectively. In 2D, we will consider an $L\times L$ square lattice. Therefore, for periodic boundary conditions  the system is placed on a torus ($L$ will correspond to the lattice size). 

The main assumption we will impose is that the interactions of the Hamiltonian are \emph{local}. This is motivated by the physical nature of the interactions:
\begin{definition} \label{def:localhamil}
An operator $H$ is a locally interacting Hamiltonian if it can be written as
$$H=\sum_i \mathfrak{h}_i\otimes \id_{\rm rest},$$ 
where $\mathfrak{h}_i$ acts only in $\bigotimes_{v\in \Omega_i } \mathcal{H}_v$ and $\Omega_i$ is a connected sublattice of $\Lambda$ with $|\Omega_i|\le C$( $C$ a constant independent of $i$ and $|\Lambda|$).
\end{definition}

We will further assume that the system is translationally invariant, meaning that $\mathcal{H}_v=\mathcal{H}_{v'}$ for all $v,v' \in \Lambda$, $\mathcal{H}_\Lambda= \mathcal{H}^{\otimes |\Lambda|}_v$, where $|\Lambda|$ is the total number of vertices, and the local terms $h\equiv h_i$ of $H$ are the same operators acting on translated sublattices. This implies that a given interaction $h$ defines the Hamiltonian $H$ for any lattice size.  

This allows us to define the limit when the system size grows to infinity (usually called thermodynamic limit). In particular, we can define the key notion of {\it gapped Hamiltonians}.

\begin{definition} \label{gapH} 
A family of Hamiltonians $H^{[L]}$ is gapped, where $L$ denotes the system size, if 
$$\mathcal{G}:=\liminf_{L\rightarrow \infty} \left(E_1^{[L]}-E_0^{[L]}\right)>0.$$
If this is the case $\mathcal{G}$ is called the gap of the system. 
\end{definition}
Note that for a finite Hilbert space the spectrum is discrete and then gapped, so the relevant information is how the gap behaves when the system size (and hence the Hilbert space dimension) grows to infinity.\\

The key observation here, proven rigorously in the 1D case, is that generally the ground states of locally interacting gapped Hamiltonians have a very restrictive pattern of entanglement -see \cite{Eisert10} for a review. The states satisfying this pattern correspond to the subset of the full Hilbert space we are interested in. To describe and characterize this pattern, let us introduce a measure of entanglement called entanglement entropy. Given a state $\ket{\psi}\in \mathcal{H}_{\mathcal{R}}\otimes \mathcal{H}_{{\mathcal{R}}^c}$ the reduced density matrix of the subsystem ${\mathcal{R}}\subset \Lambda$ is defined as the partial trace on the complementary of ${\mathcal{R}}$ in $\ket{\psi}$:
$$\rho_{\mathcal{R}}=\tr_{{\mathcal{R}}^c}\left[ \ket{\psi}\bra{\psi}\right],$$
where the partial trace is defined as the unique linear map fulfilling 
$$\tr_{{\mathcal{R}}^c}\left[ \ket{r_i}\bra{r_j} \otimes \ket{u}\bra{v} \right]= \ket{r_i}\bra{r_j} \langle v\ket{u}$$
for $\ket{r_i},\ket{r_j}\in \mathcal{H}_{\mathcal{R}}$ and $\ket{u},\ket{v}\in \mathcal{H}_{\mathcal{R}^{c}}$. The entanglement entropy of the subsystem ${\mathcal{R}}$ is defined as follows
$$S_{\mathcal{R}}( \ket{\psi})\equiv S_{VN}(\rho_{\mathcal{R}})= -\tr[\rho_{\mathcal{R}}\log(\rho_{\mathcal{R}})],$$
where $S_{VN}$ is the von Neumann entropy. We will now recall some basic properties of the entanglement entropy that can be found in e.g. \cite{NielsenChuang10}. The entanglement entropy for ${\mathcal{R}}$ is equal to that of ${\mathcal{R}}^c$ and  for product states, i.e states that can be written as $ \ket{\psi}= \ket{\phi_{\mathcal{R}}}\otimes  \ket{\sigma_{{\mathcal{R}}^c}}$,  it is zero. One important point is that the entanglement entropy is bounded by the logarithm of the dimension of the Hilbert space where ${\mathcal{R}}$ lives, $S_{\mathcal{R}}( \ket{\psi})\le \log|\mathcal{H}_{\mathcal{R}}|\propto |{\mathcal{R}}|$. In fact this maximum rate, a scaling with the volume, is the typical behaviour of a random state \cite{Hayden06}. But for ground states of locally interacting gapped Hamiltonians the entanglement entropy of a subsystem is expected to scale as the boundary of the region:
$$S_{\mathcal{R}}( \ket{\psi})\propto \log|\mathcal{H}_{\partial {\mathcal{R}}}|\propto |\partial {\mathcal{R}}|.$$
This is known as the Area Law Conjecture. It has been proven for one dimensional systems \cite{Hastings07A}, \cite{Arad13} and for some higher dimensional cases \cite{Hamza09,Masanes09}.

The area law seems to be the characteristic property of ground states of locally interacting gapped Hamiltonian so the following question arises naturally: does there exist a tractable parametrisation of the set of states fulfilling an area law? An answer is given by the so-called tensor network states, which by construction follow such entanglement patterns.

\section{Tensor Network States}\label{sec:TNS}

A tensor is a multilinear map $A: \mathbb{C}^{d_1}\otimes \cdots \otimes\mathbb{C}^{d_r}\mapsto \mathbb{C}$ and the rank of this tensor is the number of factors in the tensor product, rank$(A)=r$. Because of the linearity, one can work directly with a basis in each factor of the tensor product so one associates an index label to each element of the basis. The index in each factor runs from $1$ to the dimension of the space of that factor $d_j$. 
With the previous relation we will associate each factor of the tensor product to an index of the tensor.
Tensors can be composed. The tensor product of two tensors $A$ and $B: \mathbb{C}^{d'_1}\otimes \cdots \otimes\mathbb{C}^{d'_{r'}}\mapsto \mathbb{C}$ is the tensor $A\otimes B:\mathbb{C}^{d_1}\otimes \cdots \otimes\mathbb{C}^{d_r}\otimes \mathbb{C}^{d'_1}\otimes \cdots \otimes\mathbb{C}^{d'_{r'}}\mapsto \mathbb{C}$ with rank $r+r'$. We define the contraction of two indices $i$ and $j$ with $d_i=d_j$ as the map (defined in the basis): 
\begin{align*}
\mathcal{C}: \; & \mathbb{C}^{d_i}\otimes \mathbb{C}^{d_j} \to \mathbb{C}  \notag \\
 & \;\; \ket{\alpha \beta} \;\; \longmapsto \;\delta_{\alpha,\beta}  \notag,
\end{align*}
and extended to the whole space by linearity. Then, the contraction of two indices of $A$ is carried out by acting with $\mathcal{C}$ on those indices and with the identity on the rest of them. 
Tensors naturally describe states, or in general linear operators (like matrices), in a tensor product of spaces. If we consider that the tensor $A$ describes a multi-particle $ket$, we can write explicitly:
$$A=\sum_{l_1,\cdots,l_i,\cdots,l_j,\cdots, l_r=1}^{d_1,\cdots,d_i,\cdots,d_j,\cdots, d_r}A_{l_1,\cdots,l_i,\cdots,l_j,\cdots, l_r}\ket{l_1,\cdots,l_i,\cdots,l_j,\cdots, l_r}$$
and the contraction of the two indices $i,j$ is as follows:
$$ \mathcal{C}_{i,j}\otimes \id_{\rm rest} (A)= \sum_{l_1,\cdots,l_i,\cdots,\xcancel{l_j},\cdots, l_r=1}^{d_1,\cdots,d_i,\cdots,\xcancel{d_j},\cdots, d_r} A_{l_1,\cdots,{l_i},\cdots, {l_i},\cdots, l_r}\ket{l_1,\cdots,\xcancel{l_i},\cdots, \xcancel{l_j},\cdots, l_r},$$
where the resulting tensor has rank $r-2$. We will use the standard graphical notation of tensors, where they are shapes with legs attached, each of them representing an index. For example if $r=4$: 
\begin{equation*}
\begin{tikzpicture}
\filldraw (0,0) circle (0.1);
  \draw (0,0)--(-0.1,-0.3);
         \node at (0.05,-0.35) {$a_3$};
  \draw (0,0)--(0,0.3);
       \node at (0.2,0.25) {$a_1$};
     \draw (0,0)--(-0.3,0.2);
            \node at (-0.2,0.25) {$a_4$};
     \draw (0,0)--(0.3,0.1);
     \node at (0.3,-0.1) {$a_2$};
     \node at (-0.3,-0.2) {$A$};
 \end{tikzpicture},
 \end{equation*}
where have labeled the indices as $a_1,a_2, a_3, a_4$. The contraction of two indices is represented as a line connecting the legs of the corresponding indices, thus: 
\begin{equation*}
\mathcal{C}_{a_1,a_2}\otimes \id_{\rm rest} (A) \equiv
\begin{tikzpicture}
\filldraw (0,0) circle (0.1);
  \draw (0,0)--(-0.1,-0.3);
  \draw (0,0)--(0,0.3);
     \draw (0,0)--(-0.3,0.2);
     \draw (0,0)--(0.3,0);
     \node at (-0.3,-0.2) {$A$};
      \draw  (0.3,0) to [out=0, in=90] (0,0.3);
 \end{tikzpicture}.
 \end{equation*}
The graphical notation of tensors simplifies the overload of indices when working with these objects: just compare the previous expressions with their equivalent diagrams. This benefit grows when considering a large number of tensors and operations between them, as it typically occurs in many-body systems. 

The contraction between indices $a_1$ and $b_1$ of different tensors $A$ and $B$ is  represented graphically as:
\begin{equation*}
\mathcal{C}_{a_1,b_1} \otimes \id_{\rm rest} (A\otimes B) \equiv
 \begin{tikzpicture}
\filldraw (0,0) circle (0.1);
  \draw (0,0)--(-0.1,-0.3);
  \draw (0,0)--(0,0.3);
     \draw (0,0)--(-0.3,0.2);
     \draw (0,0)--(0.3,0);
     \node at (-0.3,-0.2) {$A$};
     \filldraw (1,0) circle (0.1);
  \draw (1,0)--(0.9,-0.3);
  \draw (1,0)--(1.1,0.3);
     \draw (1,0)--(0.7,0.2);
     \draw (1,0)--(0.7,-0.3);
     \node at (1.3,-0.3) {$B$};
     \draw  (0.3,0) to [out=0, in=-135] (0.7,-0.3);
 \end{tikzpicture}.
 \end{equation*}
Each index of a tensor can be associated with a ket or a bra (the dual space) so that the tensor itself can be seen as a multilinear operator. Simple examples of tensors are vectors and matrices:
\begin{equation*}
\ket{\phi}\equiv 
\begin{tikzpicture} [baseline=-1mm]
\filldraw (0,0) circle (0.08);
 \node at (0,0.25) {$\phi$};
  \draw (0,0)--(-0.4,0);
 \end{tikzpicture}
\; ,\;\; A\equiv
 \begin{tikzpicture}[baseline=-1mm]
 \node at (0,-0.25) {$A$};
\filldraw (0,0) circle (0.1);
  \draw (-0.4,0)--(0.4,0);
 \end{tikzpicture}
 \in \mathbb{C}^D\otimes \mathbb{C}^D\cong \mathcal{M}_D,
 \end{equation*}
 where $\mathcal{M}_D$ is the space of square matrices of dimension $D$. The multiplication of a vector by a matrix is represented as
 \begin{equation*}
A\ket{\phi}\equiv 
\begin{tikzpicture}
\filldraw (0.4,0) circle (0.08);
\filldraw (0,0) circle (0.1);
 \node at (0.4,0.25) {$\phi$};
  \node at (0,-0.25) {$A$};
  \draw (-0.4,0)--(0.4,0);
 \end{tikzpicture},
 \end{equation*}
  and the trace of a matrix is represented as
  \begin{equation*}
  \tr[A]=
\begin{tikzpicture}
\filldraw (0,0) circle (0.1);
  \node at (0,0.25) {$A$};
  \draw (-0.4,0)--(0.4,0)--(0.4,-0.4)--(-0.4,-0.4)--(-0.4,0);
 \end{tikzpicture} \; .
 \end{equation*}
Let us now consider a rank-3 tensor $A\in \mathbb{C}^D\otimes \mathbb{C}^D \otimes \mathbb{C}^d$. This is equivalent to $d$ matrices belonging to $\mathcal{M}_D$. We will denote each of these matrices as
 \begin{equation*}
A^i =
\begin{tikzpicture}
\draw (0,0)--(0,0.25);
\pic at (0,0) {tensor};
 \node at (0,0.4) {$i$};
  \node at (0,-0.25) {$A$};
  \draw (-0.4,0)--(0.4,0);
 \end{tikzpicture}
 \equiv
\begin{tikzpicture}
\filldraw (0,0.4) circle (0.08);
\draw (0,0)--(0,0.4);
\filldraw (0,0) circle (0.1);
 \node at (0.1,0.25) {$i$};
  \node at (0,-0.25) {$A$};
  \draw (-0.4,0)--(0.4,0);
 \end{tikzpicture},
 \end{equation*}
where the label above a leg is meant to fix the index to that label. 
\begin{definition}
A tensor network state is a multi-partite state placed on a lattice constructed via the contraction of local tensors placed on the vertices. 
\end{definition}

The first example of a tensor network state is called Matrix Product State (MPS) \cite{Fannes92, PerezGarcia07} and it defines a state placed on a unidimensional lattice. An MPS with periodic boundary condition (the system is placed on a ring) and constructed with local tensors independent of the site, i.e. translationally invariant, is written as follows

\begin{equation}\label{eq:MPS-TI}
\ket{\psi_A}= \sum^{d}_{i_1,\dots,i_N=1}\tr[A^{i_1}A^{i_2}\cdots A^{i_N}]\ket{i_1,\cdots,i_N}
\end{equation}
and it is represented as

 \begin{equation*}
 \begin{tikzpicture}[scale=1.4]
    \pic at (0,0) {tensor};
    \pic at (0.5,0) {tensor};
    \draw (-0.2,0)--(0.7,0);
     \draw (2.2,0)--(1.3,0);
    \draw (-0.2,-0.15)--(2.2,-0.15);
       \draw (2.2,0)--(2.2,-0.15);
       \draw (-0.2,0)--(-0.2,-0.15);
    \node at (1,0) {${\cdots}$};
    \pic at (1.5,0) {tensor};
    \pic at (2,0) {tensor};
    \draw (2,0)--(1.3,0);
    \end{tikzpicture} \; .
\end{equation*}

Let  $D$ be the maximum rank of the virtual indices (those that get contracted) which is called bond dimension. Then, the state is specified by $ND^2d$ parameters instead of the previous exponential dependance ($d^N$) on the number of subsystems. \\

One key aspect here is how $D$ depends on $N$, since any state can be written as a tensor network with a bond dimension that grows exponentially with the number of particles. Indeed, to obtain a tensor network description of any one-dimensional state successive Schmidt  Decompositions (SD) can be done \cite{Vidal03}. Performing a SD between the first subsystem and the rest of the chain we obtain 
$$\ket{\psi}=\sum_{\alpha=1}^d \lambda^{[1]}_{\alpha} \ket{\alpha}^{[1]}\ket{\alpha}^{[2,\dots, N]}=\sum_{i_1=1}^d \sum_{\alpha=1}^d A^{[1]}_{i_1,\alpha}\lambda^{[1]}_{\alpha}\ket{i_1} \ket{\alpha}^{[2,\dots, N]},$$
where $A^{[1]}_{i_1,\alpha}= \langle i_1| \alpha\rangle^{[1]}$ and $\lambda^{[1]}_{\alpha}$ are the Schmidt values. The SD of the first two subsystems with the rest of the chain can be written as follows:
\begin{equation}\label{psiSD}\ket{\psi}=\sum_{\beta=1}^{d^2} \lambda^{[2]}_{\beta} \ket{\beta}^{[1,2]}\ket{\beta}^{[3,\dots, N]}=\sum_{i_1=1}^d \sum_{\alpha=1}^d \sum_{\beta=1}^{d^2}  A^{[1]}_{i_1,\alpha} A^{[2]}_{i_2,\alpha,\beta} \lambda^{[2]}_{\beta}\ket{i_1} \ket{i_2} \ket{\beta}^{[3,\dots, N]},
\end{equation}
where we have introduced the resolution of the identity when needed and $A^{[2]}_{i_2,\alpha,\beta}= (\bra{ \alpha}^{[1]}\bra{ i_2}) \ket{\beta}^{[1,2]}$. In this way we obtain the expression
\begin{equation}\label{MPS-OBC}
\ket{\psi}= \sum^{d}_{i_1,\dots,i_N=1}A^{[1]}_{i_1}A^{[2]}_{i_2}\cdots A^{[N]}_{i_N} \ket{i_1,\cdots,i_N} ,
\end{equation}
in which the bond dimension grows in the worst case to $d^{N/2}$ in the middle of the chain. 

Note that, even if the state $\ket{\psi}$ is translationally invariant, the description obtained in this way does not reflect this fact. In particular, it is not of the form \eqref{eq:MPS-TI}. This can be fixed, but in some cases at the price of growing the bond dimension with the size of the system \cite{Sanz10}.

The successive SD are graphically represented as:
\begin{align*}
\begin{tikzpicture}[baseline=-1mm,scale=1.2]
\filldraw[rounded corners=.05cm] (-1,-0.05) rectangle (1, 0.05);
  \draw (0.5,0.2)--(0.5,0);
    \draw (0.9,0.2)--(0.9,0);
      \node at (0,0.15) {${\cdots}$};
     \draw (-0.5,0.2)--(-0.5,0);
    \draw (-0.9,0.2)--(-0.9,0);
 \end{tikzpicture}
\; \stackrel{{\rm SD}}{\longrightarrow}\;  & 
 \begin{tikzpicture}[baseline=-1mm,scale=1.2]
    \pic at (-0.9,0) {tensor};
    \filldraw[rounded corners=.05cm] (-0.6,-0.05) rectangle (1, 0.05);
    \draw (0.5,0.2)--(0.5,0);
    \draw (0.9,0.2)--(0.9,0);
      \node at (0,0.15) {${\cdots}$};
     \draw (-0.5,0.2)--(-0.5,0);
     \draw (-0.9,0)--(0,0);
        \end{tikzpicture}
\; \stackrel{{\rm SD}}{\longrightarrow} \; \cdots  \\ 
\; \stackrel{{\rm SD}}{\longrightarrow}  \;&
 \begin{tikzpicture}[baseline=-1mm,scale=1.2]
    \pic at (-0.9,0) {tensor};
        \pic at (-0.5,0) {tensor};
    \filldraw[rounded corners=.05cm] (-0.3,-0.05) rectangle (1, 0.05);
     \draw (-0.2,0.2)--(-0.2,0);    
       \node at (0.2,0.15) {${\cdots}$};
    \draw (0.6,0.2)--(0.6,0);
    \draw (0.9,0.2)--(0.9,0);
     \draw (-0.9,0)--(0,0);
        \end{tikzpicture}
\; \stackrel{{\rm SD}}{\longrightarrow} \;
        \begin{tikzpicture}[baseline=-1mm,scale=1.2]
            \pic at (0,0) {tensor};
    \pic at (0.5,0) {tensor};
    \draw (0,0)--(0.7,0);
    \node at (1,0) {${\cdots}$};
    \pic at (1.5,0) {tensor};
    \pic at (2,0) {tensor};
    \draw (2,0)--(1.3,0);
    \end{tikzpicture}.
\end{align*}

Suppose now that the matrices in \eqref{MPS-OBC} have size upper bounded by $D$ independent of $N$. Then $\ket{\psi}$ satisfies the area law for any bipartition in right and left. Indeed, in that case the SD $\ket{\psi}=\sum_{\alpha=1}^D \lambda_{\alpha} \ket{\alpha}^{[R]}\ket{\alpha}^{[L]}$ has only $D$ terms, and hence the entanglement entropy $S_R(\ket{\psi})=-\sum_{\alpha=1}^D\lambda_\alpha^2\log \lambda_\alpha^2 \le \log D$ is bounded by the boundary of the bipartition, which in 1D is just a constant. \\

At the level of mathematical proofs, it has been proven that MPS approximate well any ground state of a locally interacting gapped Hamiltonian in 1D \cite{Hastings07A, Arad13}.  Also any MPS is the (essentially unique) ground state of a locally interacting gapped Hamiltonian. 

So one can claim that the set of MPS {\it essentially coincides} with the set of GS of gapped locally interacting Hamiltonians and hence gives an efficient parametrization of this set. This makes MPS  the appropriate mathematical framework to prove statements about 1D systems. In 2D dimensions, despite some promising results along the same lines \cite{Hastings06, Molnar15}, the full picture is far from being completed. 

Every MPS is the GS of a locally interacting Hamiltonian, called {\it parent Hamiltonian} $H=\sum_i \mathfrak{h}_i$. $H$ is constructed in such a way that its GS keeps the local structure of the MPS. The local terms $\mathfrak{h}_i$ are defined by ${\rm ker}(\mathfrak{h}_i)={\rm Im}(\Gamma_\mathcal{R})$, choosing that  $\mathfrak{h}_i$ is an orthogonal projector, where 
\begin{equation}\label{eq:defGammma}
\Gamma_\mathcal{R}: (\C^D)^{\otimes 2}\rightarrow (\C^d)^{\otimes |\mathcal{R}|}.
\end{equation}
$\Gamma_\mathcal{R}$ maps linearly, using the tensors ${A}$ in the region $\mathcal{R}$, operators ({\it boundary conditions}) living in the virtual space to vectors in the physical Hilbert space of the region $\mathcal{R}$:

\begin{equation*}
\Gamma_{\mathcal{R}=5}(X)=
  \begin{tikzpicture}
    \draw (-0.5,0) rectangle (4.5,-0.3);
             \foreach \x in {0,1,2,3,4}{
        \node[tensor] at (\x,0) {};
        \draw (\x,0) -- (\x,0.3); }
    \node[tensorr,label=below:$X$] at (2,-0.3) {};
  \end{tikzpicture}\; .
\end{equation*}
It is clear that the given MPS is a ground state of $H$ and that $H$ is frustration free, meaning that the ground state of $H$ minimizes the energy of each local term $\mathfrak{h}_i$, {\it i.e.} $\mathfrak{h}_i \ket{\Psi_A}=0, \; \forall i$.

To obtain that the MPS is the \emph{unique} GS of its parent Hamiltonian we have to restrict to some classes of tensors. These classes come from imposing certain conditions on $\Gamma$ and $\mathcal{R}$. In particular, a tensor $A$ is defined to be injective if $\Gamma_{\mathcal{R}=1}$ is injective as a map from virtual to physical indices:

\begin{equation}\label{eq:injec1}
  \begin{tikzpicture}
    \draw (-0.5,0) rectangle (0.5,-0.3);
    \node[tensor] (a) at (0,0) {};
    \draw (a)--++(0,0.3);
    \node[tensorr,label=below:$X$] at (0,-0.3) {};
  \end{tikzpicture} \ = 0 \quad \Rightarrow \quad X = 0.
\end{equation}
This condition is equivalent to the existence of a right inverse $A^{-1}$ such that $A A^{-1}= \id_D\times\id_D$:
\begin{equation}\label{eq:injec}
   \begin{tikzpicture}
    \draw (-0.5,0)--(0.5,0);
    \draw (-0.5,0.4)--(0.5,0.4);
    \draw (0,0)--(0,0.4);
    \node[tensor,label=below:$A$] at (0,0) {};
    \node[tensor,label=above:$A^{-1}$] at (0,0.4) {};
  \end{tikzpicture}  =
  \begin{tikzpicture}
      \clip (0.2,-0.2) rectangle (1,0.7);
      \draw (0,0) rectangle (0.5,0.4);
      \draw[shift={(0.7,0)}] (0,0) rectangle (0.5,0.4);
  \end{tikzpicture} 
    \Rightarrow
   \begin{tikzpicture}
    \draw (-0.5,0)--(1,0);
    \draw (-0.5,0.4)--(1,0.4);
    \draw (0,0)--(0,0.4);
     \draw (0.5,0)--(0.5,0.4);
    \node[tensor,label=below:$A$] at (0,0) {};
    \node[tensor,label=above:$A^{-1}$] at (0,0.4) {};
        \node[tensor,label=below:$A$] at (0.5,0) {};
    \node[tensor,label=above:$A^{-1}$] at (0.5,0.4) {};
  \end{tikzpicture}  =
  D \cdot
  \begin{tikzpicture}[baseline=1mm]
      \clip (0.2,-0.2) rectangle (1,0.7);
      \draw (0,0) rectangle (0.5,0.4);
      \draw[shift={(0.7,0)}] (0,0) rectangle (0.5,0.4);
  \end{tikzpicture},
\end{equation}  
where the last equation expresses the fact that injectivity is preserved under blocking. 

\
For injective tensors the local terms of the Hamiltonian are defined by
$${\rm ker}(\mathfrak{h}_i)={\rm Im}(\Gamma_2)=
\left \{
  \begin{tikzpicture}
    \draw (-0.5,0) rectangle (1.5,-0.2);
             \foreach \x in {0,1}{
        \node[tensor] at (\x,0) {};
        \draw (\x,0) -- (\x,0.2);  }
         \node[tensorr,label=below:$X$] at (0.5,-0.2) {};
  \end{tikzpicture},
X\in \mathcal{M}_D  \right\}
,$$
which enforces to two neighbouring sites of the GS to be generated by two $A$ tensors. One can show that \cite{Schuch10}
$${\rm ker}(H)=\bigcap_i {\rm ker}(\mathfrak{h}_i)= \ket{\Psi_A}.$$
In general if $\Gamma_n$ is injective, the local terms are chosen to be ${\rm ker}(\mathfrak{h}_i)={\rm Im}(\Gamma_{n+1})$ to obtain that the MPS is the unique GS of $H=\sum_i \mathfrak{h}_i$.
\

The graphical representation of states can be extended to operators. An operator acting on $N$ sites is represeted as follows:
\begin{equation*}
\begin{tikzpicture}[baseline=-1mm]
\filldraw[rounded corners=.05cm] (-1,-0.05) rectangle (1, 0.05);
  \draw (0.5,0.2)--(0.5,-0.2);
    \draw (0.9,0.2)--(0.9,-0.2);
      \node at (0,0.15) {${\cdots}$};
            \node at (0,-0.15) {${\cdots}$};
     \draw (-0.5,0.2)--(-0.5,-0.2);
    \draw (-0.9,0.2)--(-0.9,-0.2);
 \end{tikzpicture}.
 \end{equation*}
One can also consider operators coming from a tensor network, that is, Matrix Product Operators (MPO):
\begin{equation*}
        \begin{tikzpicture}[baseline=-1mm]
            \pic at (0,0) {tensorud};
    \pic at (0.5,0) {tensorud};
    \draw (0,0)--(0.7,0);
    \node at (1,0) {${\cdots}$};
    \pic at (1.5,0) {tensorud};
    \pic at (2,0) {tensorud};
    \draw (2,0)--(1.3,0);
    \end{tikzpicture},
\end{equation*}
where the local tensor are matrices depending on two virtual indices. 
\

In particular the local terms $\mathfrak{h}_i$ of some parent Hamiltonians $H=\sum_i \mathfrak{h}_i$ can be written as an MPO. Let us consider an isometric MPS which is defined by $A$ being an isometry: $A^{-1}=A^{\dagger}$. The local term of the parent Hamiltonian is constructed as $\mathfrak{h}=\id_2-\Pi_2$, where
\begin{equation*}
\Pi_2=
  \begin{tikzpicture}
    \draw (-0.5,0) rectangle (1.5,-0.4);
             \foreach \x in {0,1}{
        \node[tensor] at (\x,0) {};
        \draw (\x,0) -- (\x,0.3); 
          \node[tensor] at (\x,-0.4) {};
           \node at (\x+0.4,-0.4-0.2) {$A^{-1}$};
        \draw (\x,-0.4) -- (\x,-0.7); }
  \end{tikzpicture}.
\end{equation*}

These terms, which are orthogonal projectors, commute with each other so that the Hamiltonian is gapped (see \cref{gapH}). For general injective MPS the gap of the parent Hamiltonian is proven in \cite{Fannes92}.

Let us finish this section by commenting briefly on the graphical description of operators that act on the Hilbert space and expected values. For example consider an operator acting only on one site:
\begin{equation*}
\mathcal{O} \equiv
\begin{tikzpicture}
\draw (0,-0.2)--(0,0.2);
 \filldraw[draw=black, fill=red] (0,0) circle (0.06);
 \end{tikzpicture}
 \longrightarrow  \mathcal{O}_{[2]}\ket{\psi} \equiv
        \begin{tikzpicture}[baseline=-1mm]
            \pic at (0,0) {tensor};
    \pic at (0.5,0) {tensor};
      \draw (0.5,0)--(0.5,0.3);
     \filldraw[draw=black, fill=red] (0.5,0.15) circle (0.06);
    \draw (0,0)--(0.7,0);
    \node at (1,0) {${\cdots}$};
    \pic at (1.5,0) {tensor};
    \pic at (2,0) {tensor};
    \draw (2,0)--(1.3,0);
    \end{tikzpicture}.
\end{equation*}
The expectation value is then represented as:
\begin{equation}\label{ExpValue}
\langle \mathcal{O}_{[2]} \rangle \equiv
        \begin{tikzpicture}
            \pic at (0,0) {tensor};
             \pic at (0,0.3) {tensord};
    \pic at (0.5,0) {tensor};
        \pic at (0.5,0.3) {tensord};
     \filldraw[draw=black, fill=red] (0.5,0.15) circle (0.06);
    \draw (0,0)--(0.7,0);
     \draw (0,0.3)--(0.7,0.3);
    \node at (1,0) {${\cdots}$};
     \node at (1,0.3) {${\cdots}$};
    \pic at (1.5,0) {tensor};
    \pic at (1.5,0.3) {tensord};
    \pic at (2,0) {tensor};
     \pic at (2,0.3) {tensord};
    \draw (2,0)--(1.3,0);
     \draw (2,0.3)--(1.3,0.3);
    \end{tikzpicture}
    = \begin{tikzpicture}
             \pic at (-0.3,0) {tensor};
             \pic at (-0.3,-0.2) {tensord};
            \pic at (0,0) {tensor};
             \pic at (0,-0.2) {tensord};
                  \filldraw[draw=black, fill=red] (0,0.11) circle (0.05);
                 \draw (-0.3,0)--(0.15,0);
                 \draw (-0.3,-0.2)--(0.15,-0.2);
                                   \draw[preaction={draw, line width=1pt, white}][line width=0.5pt] (0,0.19) to [out=45, in=-45] (0,-0.39);
                                                     \draw[preaction={draw, line width=1pt, white}][line width=0.5pt] (-0.3,0.19) to [out=45, in=-45] (-0.3,-0.39);
                 \node at (0.35,0) {${\cdots}$};
                 \node at (0.35,-0.2) {${\cdots}$};
    \pic at (0.7,0) {tensor};
    \pic at (0.7,-0.2) {tensord};
                         \draw (0.55,0)--(1,0);
                     \draw (0.55,-0.2)--(1,-0.2);
                    \draw[preaction={draw, line width=1pt, white}][line width=0.5pt] (0.7,0.19) to [out=45, in=-45] (0.7,-0.39);
    \pic at (1,0) {tensor};
    \pic at (1,-0.2) {tensord};
                \draw[preaction={draw, line width=1pt, white}][line width=0.5pt] (1,0.19) to [out=45, in=-45] (1,-0.39);
    \end{tikzpicture}
    \equiv \tr[\mathcal{O}_{[2]} \ket{\psi_A}\bra{\psi_A}],
\end{equation}
where the tensor 

\begin{equation}\label{Abar}
   \begin{tikzpicture}
    \draw (-0.4,0.4)--(0.4,0.4);
    \draw (0,0.1)--(0,0.4);
    \node[tensor] at (0,0.4) {};
  \end{tikzpicture}
  \equiv 
  \bar{A}
\end{equation}
is the complex conjugate of $A$ and from now on we will not label it in the diagrams. This tensor represet the state $\bra{\psi_A}$. In the contrary we will write the label $A^{-1}$ when we use it.

\section{Projected Entangled Pair States}
An MPS-analogous tensor network states in two dimensions are the so-called PEPS \cite{Verstraete04}. A translational invariant PEPS for a square lattice is defined by a set of rank-5 tensors ${A}^{[v]}\in \C^d\otimes (\C^D)^{\otimes 4}$:
\begin{equation*}
({A})^{i}_{\alpha,\beta,\gamma,\delta}=
\begin{tikzpicture} 
         \node at (-1.6,0.3,1) {$i$};
       \pic at (-1.6,0,1) {3dpeps};
       \node[anchor=south] at (-1.9,0,1.2) {$\alpha$};
       \node[anchor=north] at (-1.6,0,1.1) {$\delta$};
        \node[anchor=south] at (-1,0,1.35) {$\gamma$};
       \node[anchor=south] at (-1.35,0,0.9) {$\beta$};
\end{tikzpicture}.
\end{equation*}
The PEPS is the contraction of the tensors in all sites
$$ \ket{\Psi_A}=\sum \limits_{i_1, \cdots, i_N =1} ^d \mathcal{C}\{ {A}^{i_1},\dots, {A}^{i_N}\} |i_1\cdots i_N\rangle, $$ and it is represented graphically as follows
\begin{equation*}
\ket{\Psi_A}= 
 \begin{tikzpicture}
  \foreach \z in {0,0.7,1.4}{
      \foreach \x in {0,0.5,1,1.5}{
           \pic at (\x,0,\z) {3dpeps}; } }
            \node at (-0.6,0,0.7) {$\cdots$};
  \node at (2.9-0.6,0,0.7) {$\cdots$};
   \node[rotate=45] at (1.45-0.7,0,-0.7) {$\cdots$};
  \node[rotate=45] at (1.45-0.6,0,2.1) {$\cdots$};
    \end{tikzpicture},
\end{equation*}
where we will assume periodic boundary conditions, i.e. a torus, but we will not draw it.

It is not difficult to see that, for a fixed bond dimension $D$, PEPS also fulfill the area law. This and the fact that MPS approximate well the ground state of any locally interacting gapped Hamiltonian in 1D motivate the conjecture of that fact for PEPS. This can be seen as the {\it practical} version of the Area Law Conjecture since it comes with a concrete parametrization of the set of (approximate) ground states. Indeed, many algorithms (including the ubiquitous DMRG algorithm of S. White \cite{White92A,White92B}) aiming to solve locally interacting Hamiltonians implement different types of optimization procedures to find the MPS or PEPS that minimizes the energy. They turn out to work very well in practice (see \cite{Schollwock05, Schollwock11, Verstraete08, Ran17, Orus18} for reviews on that), supporting the validity of this practical Area Law Conjecture.

Analogously to MPS, a parent Hamiltonian for PEPS can be constructed. Any PEPS is a ground state of its parent Hamiltonian which is defined via \cref{eq:defGammma} analogously as for MPS:
\begin{equation*}
X\equiv
 \begin{tikzpicture}
     \begin{scope}[canvas is zx plane at y=0]
      \draw[rounded corners=.05cm, very thick] (-0.53,-0.4) rectangle (1.9,1.9);
    \end{scope}
        \draw (-0.4,0,0.7) -- (-0.2,0,0.7);
       \draw (-0.4,0,1.4) -- (-0.2,0,1.4);   
        \draw (-0.4,0,0) -- (-0.2,0,0);   
          \draw (0,0,-0.5)--(0,0,-0.2);
            \draw (0.5,0,-0.5)--(0.5,0,-0.2);
            \draw (1,0,-0.5)--(1,0,-0.2);
            \draw (1.5,0,-0.5)--(1.5,0,-0.2);
          \draw (0,0,1.6)--(0,0,1.9);
          \draw (0.5,0,1.6)--(0.5,0,1.9);
          \draw (1,0,1.6)--(1,0,1.9);
           \draw (1.5,0,1.6)--(1.5,0,1.9);
       \draw (1.9,0,0.7) -- (1.7,0,0.7);
        \draw (1.9,0,0) -- (1.7,0,0);
         \draw (1.9,0,1.4) -- (1.7,0,1.4);
    \end{tikzpicture}
\longrightarrow
 \begin{tikzpicture}
     \begin{scope}[canvas is zx plane at y=0]
      \draw[rounded corners=.05cm, very thick] (-0.53,-0.4) rectangle (1.9,1.9);
    \end{scope}
        \pic at (0,0,0.7) {3dpeps};
        \pic at (0,0,1.4) {3dpeps};   
          \pic at (0,0,0) {3dpeps};
      \pic at (0.5,0,0) {3dpeps};
      \pic at (0.5,0,0.7) {3dpeps};
      \pic at (0.5,0,1.4) {3dpeps};
      \pic at (1,0,0) {3dpeps};
        \pic at (1,0,0.7) {3dpeps};
        \pic at (1,0,1.4) {3dpeps};
         \pic at (1.5,0,0) {3dpeps};
        \pic at (1.5,0,0.7) {3dpeps};
        \pic at (1.5,0,1.4) {3dpeps};
    \end{tikzpicture}
    \equiv \Gamma_R(X).
\end{equation*}

Different classes of tensors can be defined to relate the PEPS and the ground subspace of its parent Hamiltonian. In particular, injective PEPS are unique GS of their associated parent Hamiltonians. As in the MPS case, injective PEPS are defined as those for which the tensors have an inverse  $A^{-1}$: 
\begin{equation}\label{eq:PEPSinj}
\begin{tikzpicture}
   \pic at (0,0,0) {3dpeps};
    \pic at (0,0.4,0) {3dpepsdown};
    \node[anchor=north] at (0,0,0) {${A}$};
    \node[anchor=south] at (0,0.4,0) {$\myinv{A}$};
 \end{tikzpicture}
    = 
     \begin{tikzpicture}
       \pic at (0,0,0) {3disopeps};
\end{tikzpicture}.
\end{equation} 
This is the so-called injectivity condition and it is equivalent to $\Gamma_1$ being injective, see \cref{eq:defGammma}. The region $\mathcal{R}$ that defines the local hamiltonian is a patch of $2\times 2$ tensors such that $\Gamma_{2\times 2}: (\C^D)^{\otimes 8}\rightarrow (\C^d)^{\otimes 4}$. In the next section we deal with a more general class of tensors which includes degenerate ground spaces and topological order. 
\section{G-injective PEPS}\label{sec:introGinj}

\begin{definition} \label{def:GPEPS} A PEPS is $G$-injective, introduced in Ref. \cite{Schuch10},  if its tensor $A$ satisfies the following
\begin{itemize}
\item the $G$-invariant condition: for a given representation $u_g$ of $G$

\begin{equation}\label{Ginva}
\begin{tikzpicture}[scale=1.2]
\pic at (0.1,0,4) {3dpeps};
	\node[anchor=east] at (1,0,4) {$=$};
	\pic at (1.8,0,4) {3dpeps};
	 \begin{scope}[canvas is zx plane at y=0]
      \draw (4,1.2)--(4,2.4);
      \draw (3.2,1.8)--(4.8,1.8);
    \end{scope}
	\node[anchor= south] at (1.4,0,4) {$\myinv{u}_g$};
	\node[anchor=west] at (1.8,0,4.5) {$\myinv{u}_g$};
        \node[anchor=east] at (1.7,0,3.1) {$u_{g}$};
        \node[anchor=south] at (2.3,0,4) {$u_{g}$};
        
	\filldraw[draw=blue,fill=white]  (1.4,0,4) circle (0.04);
	\filldraw[draw=black,fill=blue]  (2.2,0,4) circle (0.04);
	\filldraw[draw=black,fill=blue]  (1.8,0,3.5) circle (0.04);
	\filldraw[draw=blue,fill=white] (1.8,0,4.5) circle (0.04);
\end{tikzpicture}  \;\; \forall g\in G,
\end{equation}

where $u_g$ contains all the irreps of $G$ in its decomposition.
\item there exists a tensor $A^{-1}$ such that
\begin{equation}\label{Ginje}
 \mathcal{P}_G \equiv
\begin{tikzpicture}[scale=1.2]
   \pic at (0,0,0) {3dpeps};
    \pic at (0,0.4,0) {3dpepsdown};
    \node[anchor=north] at (0,0,0) {${A}$};
    \node[anchor=south] at (0,0.4,0) {$\myinv{A}$};
 \end{tikzpicture}
    =  \frac{1}{|G|}\sum_{g\in G} \;  
     \begin{tikzpicture}
       \pic at (0,0,0) {3dGisopepsproj};
                      \node at (-0.35,0.2,0.35) {$\myinv{u}_g$};
              \node at (0.15,-0.1,0.3) {$\myinv{u}_g$};
        \node at (0.37,0.35,0.3) {${u_g}$};
                \node at (0.5,-0.03,0.3) {${u_g}$};
\end{tikzpicture} . 
  \end{equation}    
\end{itemize}
We have supposed a TI PEPS since the representation of $G$ in each tensor is the same.
\end{definition}
\cref{Ginva} is equivalent to
$$A=A(u_g \otimes u_g \otimes \myinv{u}_g \otimes \myinv{u}_g ),$$
where the operators are acting on the virtual d.o.f., we will denote this fact by multiplying the operators from the right of the tensor. The operators acting on the physical Hilbert space will be placed on the LHS of the tensor, {\it e. g.} $\mathcal{O}A$. We will also establish an ordering concerning the virtual legs for the equations, the legs are ordered clockwise starting from the upper leg. 
\
\cref{Ginje} is $ \mathcal{P}_G\equiv \myinv{A}A=\frac{1}{|G|}\sum_{g\in G}(u_g \otimes u_g \otimes \myinv{u}_g \otimes \myinv{u}_g )$ where $\mathcal{P}_G$ is the projector onto the subspace invariant under the action of $u_g\otimes u_g\otimes \myinv{u}_g\otimes \myinv{u}_g$

\begin{observation} \label{obs:semireg}
Any representation of a finite group $G$ can be decomposed as follows: $u_g\cong \bigoplus_\sigma \pi_\sigma(g)\otimes \id_{m_\sigma}$ where the sum runs over the irreps  $\pi_\sigma$ ($\sigma$ denotes the label) of $G$, with dimension $d_\sigma$ and $m_\sigma$ is its multiplicity. $u_g$ is semiregular if $m_\sigma \ge 1$ for all the irreps of $G$.
\end{observation}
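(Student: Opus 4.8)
The statement is, up to the terminology introduced by the word \emph{semiregular}, just the complete reducibility of finite-group representations over $\C$ (Maschke's theorem) rewritten in a basis-independent block form. So nothing has to be proved about the notion of semiregularity itself; the content is (i) that $u_g$ decomposes as a direct sum of irreducibles, and (ii) that collecting the isomorphic summands produces the claimed shape $\bigoplus_\sigma \pi_\sigma(g)\otimes\id_{m_\sigma}$. The plan is to prove these two points in turn.

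For (i), the first step is to replace an arbitrary inner product on the representation space by the averaged one $\langle v,w\rangle_G := \frac{1}{|G|}\sum_{g\in G}\langle u_g v,\, u_g w\rangle$, with respect to which every $u_g$ is unitary. Consequently, whenever $W$ is a $G$-invariant subspace, its orthogonal complement $W^{\perp}$ (taken with $\langle\cdot,\cdot\rangle_G$) is again $G$-invariant. I would then induct on the dimension: if $u_g$ is already irreducible there is nothing to do; otherwise pick a proper nonzero invariant subspace $W$, write the space as $W\oplus W^{\perp}$ with both summands invariant, and apply the inductive hypothesis to the restrictions of $u_g$ to each. This yields $u_g\cong\bigoplus_k \rho_k(g)$ with every $\rho_k$ irreducible.

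For (ii), I would invoke the fact that $G$ has only finitely many isomorphism classes of irreps, labelled $\{\pi_\sigma\}$ with $\dim\pi_\sigma = d_\sigma$, and sort the $\rho_k$ by isomorphism type. If exactly $m_\sigma$ of the $\rho_k$ are isomorphic to $\pi_\sigma$, conjugating each such block by the intertwiner realizing the isomorphism turns the corresponding summand into $\pi_\sigma(g)\otimes\id_{m_\sigma}$; reordering the blocks then gives $u_g\cong\bigoplus_\sigma \pi_\sigma(g)\otimes\id_{m_\sigma}$. That the multiplicities $m_\sigma$ are well defined (independent of all the choices made above) follows from Schur's lemma, which identifies $m_\sigma$ with $\dim\mathrm{Hom}_G(\pi_\sigma,u)$; equivalently one can compute it from characters as $m_\sigma = \frac{1}{|G|}\sum_{g\in G}\overline{\chi_{\pi_\sigma}(g)}\,\chi_u(g)$.

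There is no genuine obstacle here: the argument is entirely standard. The only mild point worth flagging is that this decomposition is usually treated as known background, so if one wants a self-contained account the single nontrivial ingredient is complete reducibility, whose proof is exactly the averaged-inner-product argument sketched in the second paragraph; everything after that is bookkeeping via Schur's lemma.
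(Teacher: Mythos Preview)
Your proposal is correct. The paper states this as an \emph{Observation} without proof, treating it as standard background from representation theory; the second sentence is purely a definition of the term \emph{semiregular}. Your last paragraph already anticipates this: the only content is Maschke's theorem plus Schur-lemma bookkeeping, which you have sketched accurately.
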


We will simplified the graphics through the manuscript by writing $g$ instead of $u_g$. we will represent the operators as circles where the blue ones stand for $g$ and the white circles stand for $\myinv{g}$. Also we represent the elements of $G$ as black circles when a sum is involved.  

The $G$-invariance condition can be seen as a pulling through condition:
\begin{equation}\label{eq:Gmoves}
\begin{tikzpicture}[scale=1.2]
\pic at (0,0,4) {3dpeps};
	 \begin{scope}[canvas is zx plane at y=0]
      \draw (4,-0.6)--(4,0.6);
      \draw (3.2,0)--(4.8,0);
        \end{scope}
        \draw[densely dotted, blue, rounded corners] (-0.5,0,3.4)--(-0.4,0,4)--(-0.2,0,4.5)--(0,0,4.5)--(0.2,0,4.6)--(0.4,0,5);
        
        	\filldraw[draw=black,fill=blue]  (-0.4,0,4) circle (0.04);
	\filldraw[draw=black,fill=blue] (0,0,4.5) circle (0.04);
		\node[anchor=east] at (1.15,0,4) {$=$};
	\pic at (1.8,0,4) {3dpeps};
	 \begin{scope}[canvas is zx plane at y=0]
      \draw (4,1.2)--(4,2.4);
      \draw (3.2,1.8)--(4.8,1.8);
        \end{scope}
                \draw[densely dotted, blue, rounded corners] (1.5,0,3.2)--(1.6,0,3.4)--(1.8,0,3.5)--(2,0,3.6)--(2.2,0,4)--(2.2,0,4.5);
        	\filldraw[draw=black,fill=blue]  (2.2,0,4) circle (0.04);
	\filldraw[draw=black,fill=blue]  (1.8,0,3.5) circle (0.04);
	\end{tikzpicture}
	,\;
	\begin{tikzpicture}[scale=1.2]
		\pic at (0.3,0,3.5) {3dpeps};
	\filldraw [draw=black, fill=blue]  (0.05,0,3.5) circle (0.04);
	\draw[densely dotted, blue] (0.05,0,3) -- (0.05,0,4.05);
	\node[anchor=east] at (1.2,0,3.5) {$=$};
	\pic at (1.6,0,3.5) {3dpeps};
	\draw[densely dotted, blue, rounded corners] (1.35,0,3) -- 
	       (1.35,0,3.2) --  (1.85,0,3.2) -- (1.85,0,3.8) -- (1.35,0,3.8) --
	(1.37,0,4.05);
	 \filldraw [draw=black, fill=blue]  (1.6,0,3.2) circle (0.04);
	\filldraw [draw=black, fill=blue]  (1.85,0,3.5) circle (0.04);
	\filldraw [draw=blue, fill=white]  (1.6,0,3.8) circle (0.04);
\end{tikzpicture}
\end{equation}
which allows to deform strings of tensor product of $u_g$ operators in the tensor network.

The requirement of $u_g$ to be semiregular allows to use the matrix $\mathfrak{D} \cong \frac{1}{|G|}\bigoplus_\sigma \frac{d_\sigma}{m_\sigma} \id_{d_\sigma}\otimes \id_{m_\sigma}$ (with the same change of basis that the representation $u_g$), to obtain the inverse of blocked tensors:
\begin{equation}\label{conca}
\begin{tikzpicture}
   \pic at (0,0,0) {3dpeps};
   \draw (0,0,0)--(0,0.5,0);
    \pic at (0,0.5,0) {3dpepsdown};
        \node[anchor=south east] at (0,0.5,0) {$\myinv{A}$};
     \filldraw (0.3,0.5,0) circle (0.04);
      \node[anchor=south] at (0.3,0.5,0) {${\mathfrak{D}}$};
    \pic at (0.6,0,0) {3dpeps};
    \draw (0.6,0,0)--(0.6,0.5,0);
    \pic at (0.6,0.5,0) {3dpepsdown};
     \node[anchor=south west] at (0.6,0.5,0) {$\myinv{A}$};
 \end{tikzpicture}
    = \frac{1}{|G|} \sum_{g\in G}   
\begin{tikzpicture}
         \filldraw (-0.3,0,0) circle (0.04);
     \filldraw (0,0,0.3) circle (0.04);
       \filldraw (0.8,0,0) circle (0.04);
        \filldraw (0,0,-0.3) circle (0.04); 
             \filldraw (0.5,0,0.3) circle (0.04);
        \filldraw (0.5,0,-0.3) circle (0.04); 
                            \node at (-0.25,0.2,0.3) {$\myinv{g}$};
              \node at (0.15,-0.1,0.3) {$\myinv{g}$};
                \node at (0.65,-0.1,0.3) {$\myinv{g}$};
        \node at (0.25,0.37,0.3) {${g}$};
                \node at (1,0,0.3) {${g}$};
                     \node at (0.9,0.33,0.3) {${g}$};
\draw (-0.1,0,0)--(-0.1,0.4,0);
       \begin{scope}[canvas is zx plane at y=0]
          \draw (0.1,0)--(0.5,0);
        \draw (-0.5,0)--(-0.1,0);
           \draw (0.1,0.5)--(0.5,0.5);
        \draw (-0.5,0.5)--(-0.1,0.5);
    \end{scope} 
     \begin{scope}[canvas is zx plane at y=0.4]
       \draw (-0.5,0)--(-0.1,0);
      \draw[ preaction={draw, line width=1pt, white}] (0.1,0)--(0.5,0);
             \draw (-0.5,0.5)--(-0.1,0.5);
      \draw[preaction={draw, line width=1pt, white}] (0.1,0.5)--(0.5,0.5);
    \end{scope} 
     \draw (0,0,0.1)--(0,0.4,0.1);
\draw (0,0,-0.1)--(0,0.4,-0.1);
     \draw (0.5,0,0.1)--(0.5,0.4,0.1);
\draw (0.5,0,-0.1)--(0.5,0.4,-0.1);
       \draw[preaction={draw, line width=1pt, white}] (0.6,0,0)--(0.6,0.4,0);
         \begin{scope}[canvas is zx plane at y=0]
          \draw (0,0.6)--(0,0.9);
       \draw (0,-0.4)--(0,-0.1);
          \end{scope} 
                  \begin{scope}[canvas is zx plane at y=0.4]
          \draw (0,0.6)--(0,0.9);
       \draw (0,-0.4)--(0,-0.1);
          \end{scope} 
\end{tikzpicture},
\end{equation}
since $\tr[u_g\mathfrak{D}]=|G|\delta_{e,g}$ \cite{Schuch11} . The parent Hamiltonian $H$ is defined by the local term $\mathfrak{h}$, acting on a plaquette (a $2\times 2$ patch), so that ${\rm ker}(\mathfrak{h})={\rm Im}(\Gamma_{2\times 2})$. 
Let us study the GS subspace of this parent Hamiltonian. For that we consider deformations of the PEPS, placed on a torus, by inserting operators $g$ and $g^{-1}$ in the virtual d.o.f. Contractible loops of these operators do not modify the state due to the $G$-invariance, see \cref{Ginva}, {\it i.e.} a loop is absorbed by the tensors:

\begin{equation*}
		 \begin{tikzpicture}
     	   \foreach \z in {-0.7,0,0.7,1.4}{
      \foreach \x in {-0.5,0,0.5,1,1.5}{
           \pic at (\x,0,\z) {3dpeps}; 
             } } 
		 \filldraw[draw=blue, fill=white]  (1,0,1.05) circle (0.05);
		 \filldraw[draw=black, fill=blue]  (1.25,0,0.7) circle (0.05);
		 \filldraw[draw=black, fill=blue]  (1.25,0,0) circle (0.05);
		 \filldraw[draw=black, fill=blue]  (1,0,-0.35) circle (0.05);
		  \filldraw[draw=black, fill=blue]  (0.5,0,-0.35) circle (0.05);
		  \filldraw[draw=black, fill=blue]  (0,0,-0.35) circle (0.05);
		  \filldraw[draw=blue,fill=white]  (0.5,0,1.05) circle (0.05);		 
		   \filldraw[draw=blue, fill=white]  (0,0,1.05) circle (0.05);	
		 \filldraw[draw=blue, fill=white]  (-0.25,0,0.7) circle (0.05);
		 \filldraw[draw=blue, fill=white]  (-0.25,0,0) circle (0.05);
	\end{tikzpicture}
	=
			 \begin{tikzpicture}
     	   \foreach \z in {-0.7,0,0.7,1.4}{
      \foreach \x in {-0.5,0,0.5,1,1.5}{
           \pic at (\x,0,\z) {3dpeps}; 
             } } 
             \end{tikzpicture}.
\end{equation*}
But non-contractible loops are not absorbed by the $G$-invariance, so they can modify the state.   
They can be deformed due to the $G$-invariance of the tensors. Hence, non-contractible loops are locally non-detectable; the parent Hamiltonian cannot detect the presence of such loops. Graphically,
\begin{equation*}
\begin{tikzpicture}[scale=1.2]
  \foreach \z in {0,0.7,1.4,2.1}{
      \foreach \x in {0,0.5,1,1.5}{
           \pic at (\x,0,\z) {3dpeps};
           \filldraw [draw=black, fill=blue]  (0.75,0,\z) circle (0.05); } }    
          \draw[thick,densely dashed, magenta,rounded corners](0.7,0,2.25)--(1.13,0,2.25)--(1.13,0,1.25)--(0.37,0,1.25)--(0.37,0,2.25)--cycle; 
                                                \node[anchor=south] at (0.78,0,0) {${g}$};
         \draw[densely dotted, blue](0.75,0,-0.5)--(0.75,0,2.6);
        \end{tikzpicture}
=
\begin{tikzpicture}[scale=1.2]
  \foreach \z in {0,0.7,1.4,2.1}{
      \foreach \x in {0,0.5,1,1.5}{
           \pic at (\x,0,\z) {3dpeps}; } }         
          \draw[thick,densely dashed, magenta,rounded corners](0.7,0,2.25)--(1.13,0,2.25)--(1.13,0,1.25)--(0.37,0,1.25)--(0.37,0,2.25)--cycle; 
         \node[anchor=south] at (0.78,0,0) {${g}$};
                          \draw[densely dotted, blue, rounded corners] (0.75,0,-0.4)--(0.75,0,1.05) --(0.25,0,1.05)--(0.25,0,2.4)-- (0.75,0,2.4)--(0.75,0,2.7);
                   \filldraw [draw=black, fill=blue]  (0.75,0,0) circle (0.05);
                \filldraw [draw=black, fill=blue]   (0.75,0,0.7) circle (0.05);
        \filldraw [draw=black, fill=blue]  (0.25,0,1.4) circle (0.05);
        \filldraw [draw=black, fill=blue]  (0.25,0,2.1) circle (0.05);
           \filldraw [draw=blue, fill=white]  (0.5,0,1.05) circle (0.05);
        \filldraw [draw=black, fill=blue]  (0.5,0,2.4) circle (0.05);
 \end{tikzpicture},
\end{equation*}  
where the magenta dashed square represents the plaquette where $\mathfrak{h}$ acts on. Therefore the PEPS including such operators belongs to the ground subspace of the parent Hamiltonian. It turns out that, in fact, these loops exactly characterize the ground subspace. A basis for the ground subspace is given by the set of all pairs $g,h\in G$, $(g,h)$ such that $gh=hg$ with the equivalence relation $(g,h)\sim (xgx^{-1},xhx^{-1}),\; \forall x\in G$. These pairs, with the previous equivalence relation, are called pair conjugacy classes and correspond to two non-contractible loop operators acting on the tensor network:
\begin{equation}\label{eq:gsubspace}
H|\Psi_A(g,h)\rangle=0, \; \forall g,h\in G, \;gh=hg, \;\;
|\Psi_A(g,h)\rangle=
\begin{tikzpicture}[scale=1.2]
  \foreach \z in {0,0.7,1.4,2.1}{
      \foreach \x in {0,0.5,1,1.5}{
           \pic at (\x,0,\z) {3dpeps}; 
             \filldraw [draw=black,fill=blue] (0.75,0,\z) circle (0.05);
                      \filldraw [draw=black,fill=green] (\x,0,1.05) circle (0.05); } }
         \draw[densely dotted, blue](0.75,0,-0.5)--(0.75,0,2.6);
         \node[anchor=south] at (0.78,0,0) {${g}$};
	   \draw[densely dotted, green](-0.5,0,1.05) --(2,0,1.05);
	 \node[anchor=west] at (1.5,0,1.05) {${h}$};
 \end{tikzpicture},
\end{equation}
where the following holds $|\Psi_A(g,h)\rangle= |\Psi_A(xgx^{-1},xhx^{-1})\rangle$ for all $x\in G$ due to the $G$-invariance of the tensor.\\

An important case within the family of $G$-injective PEPS is when $u_g=L_g$ is the left regular representation, acting as $L_g|h\rangle=|gh\rangle$ on $\mathbb{C}[G]={\rm span}\{|h\rangle, h\in G\}$. This representation satisfies $\tr[L_g]=|G|\delta_{e,g}$ which implies that in this case $\mathfrak{D}=\id$. 
These states are the so-called $G$-isometric PEPS if moreover $A$ is unitary. Then, the tensor is unitarily equivalent to the operator $\mathcal{P}_G$ of Eq.(\ref{Ginje}) so w.l.o.g. 
$$A= \frac{1}{|G|}\sum_g L_g\otimes L_g\otimes L^\dagger_g\otimes L^\dagger_g$$ 
Then, the tensors satisfy the important property $\myinv{A}=\bar{A}$ so it follows that the contraction of two neighbouring sites reads:
\begin{equation}\label{Gisoconca}
\begin{tikzpicture}
   \pic at (0,0,0) {3dpeps};
    \pic at (0,0.4,0) {3dpepsdown};
    \node[anchor=north] at (0,0,0) {${A}$};
    \node[anchor=south] at (0,0.4,0) {$\bar{A}$};
       \pic at (0.6,0,0) {3dpeps};
    \pic at (0.6,0.4,0) {3dpepsdown};
    \node[anchor=north] at (0.6,0,0) {${A}$};
    \node[anchor=south] at (0.6,0.4,0) {$\bar{A}$};
  \end{tikzpicture}
   =  \frac{1}{|G|}\sum_{g\in G} \;  
\begin{tikzpicture}
         \filldraw (-0.3,0,0) circle (0.04);
     \filldraw (0,0,0.3) circle (0.04);
       \filldraw (0.8,0,0) circle (0.04);
        \filldraw (0,0,-0.3) circle (0.04); 
             \filldraw (0.5,0,0.3) circle (0.04);
        \filldraw (0.5,0,-0.3) circle (0.04); 
                            \node at (-0.25,0.2,0.3) {$\myinv{g}$};
              \node at (0.15,-0.1,0.3) {$\myinv{g}$};
                \node at (0.65,-0.1,0.3) {$\myinv{g}$};
        \node at (0.25,0.37,0.3) {${g}$};
                \node at (1,0,0.3) {${g}$};
                     \node at (0.9,0.33,0.3) {${g}$};
\draw (-0.1,0,0)--(-0.1,0.4,0);
       \begin{scope}[canvas is zx plane at y=0]
          \draw (0.1,0)--(0.5,0);
        \draw (-0.5,0)--(-0.1,0);
           \draw (0.1,0.5)--(0.5,0.5);
        \draw (-0.5,0.5)--(-0.1,0.5);
    \end{scope} 
     \begin{scope}[canvas is zx plane at y=0.4]
       \draw (-0.5,0)--(-0.1,0);
      \draw[ preaction={draw, line width=1pt, white}] (0.1,0)--(0.5,0);
             \draw (-0.5,0.5)--(-0.1,0.5);
      \draw[preaction={draw, line width=1pt, white}] (0.1,0.5)--(0.5,0.5);
    \end{scope} 
     \draw (0,0,0.1)--(0,0.4,0.1);
\draw (0,0,-0.1)--(0,0.4,-0.1);
     \draw (0.5,0,0.1)--(0.5,0.4,0.1);
\draw (0.5,0,-0.1)--(0.5,0.4,-0.1);
       \draw[preaction={draw, line width=1pt, white}] (0.6,0,0)--(0.6,0.4,0);
         \begin{scope}[canvas is zx plane at y=0]
          \draw (0,0.6)--(0,0.9);
       \draw (0,-0.4)--(0,-0.1);
          \end{scope} 
                  \begin{scope}[canvas is zx plane at y=0.4]
          \draw (0,0.6)--(0,0.9);
       \draw (0,-0.4)--(0,-0.1);
          \end{scope} 
\end{tikzpicture}.
 \end{equation}
 Eq.\eqref{Gisoconca} allows us to compute easily expectation values since 
 $$\langle\psi(A) |=|\psi(A)\rangle ^\dagger=\sum \mathcal{C}\{ \bar{A}^{i_1},\dots, \bar{A}^{i_N}\} \langle i_1\cdots i_N|.$$ 
 
 Concretely, considering a connected region $\mathcal{M}$, the contraction of each tensor $A$ inside $\mathcal{M}$ with its corresponding $\bar{A}$ tensor results in the boundary operator $\frac{1}{|G|}\sum_{b\in G} L_b\otimes \cdots \otimes L^{-1}_b$, where each term of the sum contains $|\partial \mathcal{M}|$ factors. The expectation value of a local operator acting on the complementary of $\mathcal{M}$, $\mathcal{M}^{c}$, is computed by applying the boundary operator $\frac{1}{|G|}\sum_{b\in G} L_b\otimes \cdots \otimes L^{-1}_b$ to the contraction between the tensors $A$ in $\mathcal{M}^{c}$, the local operator and the $\bar{A}$ in $\mathcal{M}^{c}$.
 
The norm of a $G$-isometric PEPS is $\sqrt{|G|^{\ell_h\times  \ell_v+1}}$ where $\ell_h$ and $\ell_v$ is the number of horizontal and vertical sites of the torus respectively; this is obtained by counting the number of loops coming from Eq.(\ref{Gisoconca}). We will omit this normalization in all calculations.\       
       
The parent Hamiltonian of a $G$-isometric PEPS is gapped, this is because the local terms $\mathfrak{h}=\id-\Pi_{2\times 2}$ are commuting orthogonal projectors, where 
\begin{equation*}
\Pi_{2\times 2}=
\begin{tikzpicture}
  \foreach \x in {0,0.6}{
      \foreach \z in {0,0.7}{
           \pic at (\x,0.5,\z) {3dpeps};
          \pic at (\x,0,\z) {3dpepsdown};
          \begin{scope}[canvas is xy plane at z=\z]
     	  \draw (-0.5,0.5) rectangle (1.1,0);
         \end{scope}
                   \begin{scope}[canvas is zy plane at x=\x]
     	  \draw (-0.5,0.5) rectangle (1.2,0);
         \end{scope}
            }}
            \end{tikzpicture} \; .
 \end{equation*}
Excitations of the parent Hamiltonian can be constructed as modifications of the tensors of the PEPS. The modified tensors do not belong to $ker(\mathfrak{h})$: they are eigenstates with eigenvalues greater than zero, that is, the energies of such excitations. The relevant excitations of the parent Hamiltonian of $G$-isometric PEPS are quasiparticle excitations that interact via braiding. This interaction does not depend on the distance, {\it i.e.}, it is of topological nature. The excitations, which are called anyons, are characterized by the set $\{ ([g],\alpha)\}$, where $[g]$ runs over all conjugacy classes of $G$ and $\alpha$ over the irreps of the normalizer of $g$. When $g=e$ they are called charges and they are characterized by the irreps $\{ \sigma\}$ of $G$, for $\alpha=1$ the quasiparticles are called fluxes and they are characterized solely by conjugacy classes of $G$. The combined object $\{ ([g],\alpha)\}$ with $g\neq e$ and $\alpha\neq 1$ is called dyon. We remark that there is a relation between anyons and ground states of these Hamiltonians. In particular the number of different anyons is the same as the dimension of the ground subspace of the Hamiltonian. This can be seen explicitly in the so-called Minimally Entangled States (MES) basis, see \cite{Buerschaper14}. In PEPS this basis is constructed as follows \cite{Norbertpriv}

\begin{equation*}
|\Psi_A([g],\alpha)\rangle= \sum_{n\in N_g}\chi_\alpha(n) |\Psi_A(g,n) \rangle
=\sum_{n\in N_g}\chi_\alpha(n)
\begin{tikzpicture}
        \pic at (0,0,0.7) {3dpeps};
        \pic at (0,0,1.4) {3dpeps};   
         \pic at (0,0,2.1) {3dpeps};  
          \pic at (0,0,0) {3dpeps};
      \pic at (0.5,0,0) {3dpeps};
      \pic at (0.5,0,0.7) {3dpeps};
      \pic at (0.5,0,1.4) {3dpeps};
     \pic at (0.5,0,2.1) {3dpeps};
      \pic at (1,0,0) {3dpeps};
        \pic at (1,0,0.7) {3dpeps};
        \pic at (1,0,1.4) {3dpeps};
         \pic at (1,0,2.1) {3dpeps};     
	  \pic at (1.5,0,0) {3dpeps};
        \pic at (1.5,0,0.7) {3dpeps};
        \pic at (1.5,0,1.4) {3dpeps};
         \pic at (1.5,0,2.1) {3dpeps};  
         
         \filldraw [draw=black, fill=blue]  (0.75,0,0.7) circle (0.04);
          \filldraw [draw=black, fill=blue]  (0.75,0,1.4) circle (0.04);
          \filldraw [draw=black, fill=blue]  (0.75,0,2.1) circle (0.04);
          \filldraw [draw=black, fill=blue]  (0.75,0,0) circle (0.04);
         \draw[densely dotted, blue](0.75,0,-0.5)--(0.75,0,2.6);
         \node[anchor=south] at (0.78,0,0) {${g}$};

         \filldraw [draw=black, fill=green]  (0,0,1.05) circle (0.04);
	 \filldraw [draw=black, fill=green]  (0.5,0,1.05) circle (0.04);
	 \filldraw [draw=black, fill=green]  (1,0,1.05) circle (0.04);
	 \filldraw [draw=black, fill=green]  (1.5,0,1.05) circle (0.04);
	   \draw[densely dotted, green](-0.5,0,1.05) --(2,0,1.05);
	 \node[anchor=west] at (1.5,0,1.05) {${n}$};
 \end{tikzpicture},
 \end{equation*}
 where $\chi_\alpha$ is the character of $\alpha$.

Fluxes and charges were studied in Ref.\cite{Schuch10} in the $G$-isometric PEPS framework, we will revisit their construction. For completeness we have constructed the dyons representation in the PEPS picture. This construction was developed in our work \cite{Garre17}, see Appendix \ref{ap:dyonic}. We remark that anyons are created in particle-antiparticle pairs that can be moved around unitarily. The PEPS representation of each type of anyon is the following:

\begin{itemize}
\item {\bf Fluxes.} 
The creation of a pair of fluxes associated with some conjugacy class $[g]$ (and its antiparticle $[g^{-1}]$), is described by a tensor product of operators: 
$$\bigotimes_{ i \in \gamma } (L_g)^{m_i},$$
 where $\gamma$ denotes a path of lattice edges connecting the plaquettes where each flux is located. For each link $i$, $m_i \in \{ -1, + 1 \}$, the precise value of $m_i$ depends on the path $\gamma$ via the $G$-invariance of the tensors. That is, if we introduce a string of $g$ operators in the following form
 \begin{equation*}
\begin{tikzpicture}
    \foreach \z in {0,0.7,1.4,2.1}{
   	   \foreach \x in {0,0.5,1,1.5,2,2.5}{
           	\pic at (\x,0,\z) {3dpeps}; 
             					} } 
 \draw[densely dotted, blue](0.5,0,1.05)--(2,0,1.05);
     \foreach \x in {0.5,1,1.5,2}{
             \filldraw [draw=black, fill=blue]  (\x,0,1.05) circle (0.04);
             				   }
 \end{tikzpicture}
 \; ,
\end{equation*}
the string can be modified according to the orientation chosen in \cref{Ginva} or equivalently using the moves of \cref{eq:Gmoves}.

Due to the $G$-injectivity, the chosen representative  $g\in [g]$ does not matter. Diagrammatically, we will represent such operators as a string of blue circles or circumferences on the edges of the square lattice. 
In particular the excitations, eigenstates with eigenvalue $1$ of $\mathfrak{h}_i$ and thus of $H$, are localized in the final plaquettes since the intermediate string can be moved using the $G$-invariance of the tensors:

\begin{equation*}
\begin{tikzpicture}[scale=1.2]
  \foreach \z in {0,0.7,1.4,2.1}{
      \foreach \x in {0,0.5,1}{
           \pic at (\x,0,\z) {3dpeps}; 
             } }    
                        \draw[densely dotted, blue, rounded corners] (0.75,0,0.7)--(0.75,0,0.35)--(0.25,0,0.35)--(0.25,0,0.7)-- (0.25,0,1.4)--(0.25,0,1.75)--(0,0,1.75);
	\filldraw[draw=black,fill=blue]  (0.75,0,0.7) circle (0.05);
	\filldraw[draw=black,fill=blue]  (0.5,0,0.35) circle (0.05);
	\filldraw[draw=black,fill=blue]  (0,0,1.75) circle (0.05);
		\filldraw[draw=blue,fill=white]  (0.25,0,0.7) circle (0.05);
	\filldraw[draw=blue,fill=white]  (0.25,0,1.4) circle (0.05);
 \end{tikzpicture} 
 =
\begin{tikzpicture}[scale=1.2]
  \foreach \z in {0,0.7,1.4,2.1}{
      \foreach \x in {0,0.5,1}{
           \pic at (\x,0,\z) {3dpeps}; 
             } }
             \draw[densely dotted, blue, rounded corners] (0,0,1.75)--(0.75,0,1.75)--(0.75,0,1.4);
	\filldraw[draw=blue,fill=white]  (0.75,0,1.4) circle (0.05);
		\filldraw[draw=black,fill=blue]  (0,0,1.75) circle (0.05);
	\filldraw[draw=black,fill=blue]  (0.5,0,1.75) circle (0.05);
	 \end{tikzpicture} 
\end{equation*}

\item {\bf Charges.}  The virtual operator associated with a pair of charges is 
$$\Pi_{\sigma}=\sum_{g,h\in G} \chi_\sigma(h^{-1}g)|g\rangle \langle g|\otimes |h\rangle \langle h|,$$
 where $\sigma$ is the label of an irrep of $G$,  $\chi_\sigma(\cdot)=\tr[\pi_\sigma(\cdot)]$ is its character and the tensor product represents the fact that the two particles are acting on two different virtual edges. 
If $\sigma$ is a one-dimensional irrep, $\Pi_{\sigma}$ can be factorized as follows: 
$$\Pi_{\sigma}=\left(\sum_{g\in G} \chi_\sigma(g)|g\rangle \langle g|\right)\otimes \left(\sum_{g\in G} \chi_\sigma(h^{-1}) |h\rangle \langle h|\right) \equiv C_{\sigma}\otimes \bar{C}_{\sigma}.$$ 
Otherwise $\Pi_\sigma$ is a sum of factors: $\Pi_{\sigma}\equiv\sum_{h\in G} C_{\sigma,h}\otimes \bar{C}_{\sigma,h}$. In any case, $\Pi_{\sigma}$ will be represented as two orange rectangles on two different edges of the lattice: 
 \begin{equation*} 
 \begin{tikzpicture}[scale=1.2]
  \foreach \z in {0,0.7,1.4}{
      \foreach \x in {0,0.5,1,1.5}{
           \pic at (\x,0,\z) {3dpeps}; 
             } } 
     	\filldraw[draw=black,fill=orange] (0.25,0,0.55) rectangle (0.25,0,0.85);           
	\filldraw[draw=black,fill=orange] (1.25,0,0.55) rectangle (1.25,0,0.85);
 \end{tikzpicture} .
\end{equation*}
Each term of the pair creates an excitation on the two neighbouring plaquettes of the edge where the operator is placed. We will denote by $\mathcal{O}_{ \sigma} (x,y)$  the physical operator that creates a particle-antiparticle of type $\sigma$ on the edges $x,y$ respectively.

\item {\bf Dyons.} Given $h\in G$ we denote the normalizer subgroup of this element as $N_h=\{ n\in G | nh=hn\}$. The normalizer of another element in the conjugacy class of $h$, $h^g\equiv ghg^{-1}\in [h]$, is $N_{h^g}=gN_hg^{-1}$. So the normalizers of the elements of a conjugacy class are all isomorphic, and the expression $N_{[h]}$ is meaningful. We can decompose the group $G$ in right cosets of $N_h$ with representatives $k_1=e,k_2,\cdots,k_\kappa$ where $\kappa=|G|/|N_h|$. A relation between these cosets and elements of the conjugacy class can be given by $h_j=k_jh k^{-1}_j$. \\

A dyon-antidyon pair, associated to $([h],\alpha)$ where $\alpha$ is an irrep of $N_{[h]}$, is represented virtually as (the explicit construction is given in Appendix \ref{ap:dyonic}):
$$ \sum_{n,m \in N_{h}} \chi_\alpha( nm^{-1}) L_h \left (\sum^\kappa_{j=1} |n k_j\rangle\langle n k_j|\right)\otimes L^{\otimes \ell}_h\otimes L_h \left (\sum^\kappa_{i=1} |m k_i\rangle\langle mk_i|\right),$$
where we have chosen the element $h$ as the representative of the conjugacy class, $\chi_\alpha$ is the character of the irrep $\alpha$ of $N_{[h]}$ and  the tensor products represent the different edges where the operators are placed on. Notice that we have chosen one of the equivalent virtual representations of the mobile string, in this case a straight line:

\begin{equation*}
\begin{tikzpicture}[scale=1.2]
        \pic at (0,0,2.1) {3dpeps};
        \pic at (0,0,1.4) {3dpeps}; 
         \pic at (0,0,0.7) {3dpeps}; 
      \pic at (0.5,0,0.7) {3dpeps};
      \pic at (0.5,0,1.4) {3dpeps};
     \pic at (0.5,0,2.1) {3dpeps};
       \pic at (1,0,0.7) {3dpeps};
        \pic at (1,0,1.4) {3dpeps};
         \pic at (1,0,2.1) {3dpeps};
           \pic at (1.5,0,0.7) {3dpeps};
        \pic at (1.5,0,1.4) {3dpeps};
         \pic at (1.5,0,2.1) {3dpeps};
                    \filldraw [yellow,rotate around={45:(0.805,0,2.1)}, draw=black]  (0.73,0,2) rectangle (0.88,0,2.2);
                              \filldraw [yellow,rotate around={45:(0.805,0,0.7)}, draw=black]  (0.73,0,0.6) rectangle (0.88,0,0.8);
	\filldraw [draw=black,fill=blue] (0.7,0, 2.1) circle (0.05);
	\filldraw [draw=black,fill=blue] (0.7,0,1.4) circle (0.05);
	\filldraw [draw=black,fill=blue] (0.7,0,0.7) circle (0.05);
        \draw[densely dotted, blue](0.7,0,2.1)--(0.7,0,0.7);
 \end{tikzpicture} 
\end{equation*}

This operator consists of a chain of $L_h$, corresponding to the flux part, ended in an operator representing the compatible charge part acting on the two ends. Here we focus on one end of the operator for simplicity and we define
\begin{equation}
\label{dyonend}
D^w_\alpha \equiv \sum_{n \in N_{h}} \chi_\alpha(w n) \sum^\kappa_{j=1} |nk_j\rangle\langle n k_j|,
\end{equation}
where $w\in N_h$ corresponds to the internal state of the charge part. We represent graphically this operator as a yellow rotated square at the end of the flux chain as follows:

\begin{equation*}
\begin{tikzpicture}[scale=1.2]
  \foreach \z in {0.7,1.4,2.1}{
      \foreach \x in {0,0.5,1,1.5}{
           \pic at (\x,0,\z) {3dpeps}; 
             } } 
          \filldraw [yellow,rotate around={45:(0.805,0,1.4)}, draw=black]  (0.73,0,1.3) rectangle (0.88,0,1.5);
	\filldraw [draw=black,fill=blue] (0.7,0,2.1) circle (0.05);
	\filldraw [draw=black,fill=blue] (0.7,0,1.4) circle (0.05);
        \draw[densely dotted, blue](0.7,0,2.5)--(0.7,0,1.4);
 \end{tikzpicture},
\end{equation*}
where we omit the other end that contains the inverse charge.

\end{itemize}
\subsection{Braiding properties of the anyons}\label{subsec:braiding}
Anyons can interact via braiding. We review the braiding properties of fluxes and charges shown in Ref. \cite{Schuch11}. We also introduce some properties of dyons develop in our work \cite{Garre17}.
\

The braiding of a flux with a flux is trivial, since the effect is a conjugation by another group element which does not change the conjugacy class. Braiding a charge with a charge is also trivial, because they are point-like operators in the virtual d.o.f. We will show what is the effect of braiding counter-clockwise a flux around a charge in $G$-isometric PEPS. We first create a charge -- anti-charge pair $\Pi_{\sigma}$ and a pair of fluxes, characterised by the conjugacy class $[g]$, and we focus on only one flux.
Using the $G$-injectivity of the tensors, we see that the action of this braiding is the conjugation by $L^\dagger_g$ on the operator of the charge:

	 \begin{equation*}      
	 \begin{tikzpicture}[scale=1.2]
	   \foreach \z in {0,0.7,1.4}{
      \foreach \x in {0,0.5,1,1.5,2}{
           \pic at (\x,0,\z) {3dpeps}; 
             } } 
     	\filldraw[draw=black,fill=orange] (0.75,0,0.55) rectangle (0.75,0,0.85);           
	\filldraw[draw=black,fill=orange] (1.75,0,0.55) rectangle (1.75,0,0.85);
		\filldraw[draw=black,fill=blue]  (0.75,0,1.4) circle (0.05);
		 \draw[densely dotted, blue, rounded corners] (0.75,0,2)--(0.75,0,1.05) --(1.25,0,1.05)--(1.25,0,0.35)-- (0.25,0,0.35)--(0.25,0,1.05)--(0.5,0,1.05);
		 \filldraw[draw=blue,fill=white]  (1,0,1.05) circle (0.05);
		 \filldraw[draw=black,fill=blue]  (1.25,0,0.7) circle (0.05);
		 \filldraw[draw=black,fill=blue]  (1,0,0.35) circle (0.05);
		  \filldraw[draw=blue,fill=white]  (0.5,0,1.05) circle (0.05);
		 \filldraw[draw=blue,fill=white]  (0.25,0,0.7) circle (0.05);
		 \filldraw[draw=black,fill=blue]  (0.5,0,0.35) circle (0.05);
	\end{tikzpicture}
	=
	 \begin{tikzpicture}[scale=1.2]
     	   \foreach \z in {0,0.7,1.4}{
      \foreach \x in {0,0.5,1,1.5,2}{
           \pic at (\x,0,\z) {3dpeps}; 
             } } 
  \filldraw[draw=blue,fill=white]  (0.63,0,0.7) circle (0.05);
     	\filldraw[draw=black,fill=orange] (0.75,0,0.55) rectangle (0.75,0,0.85);   
	  \filldraw[draw=black,fill=blue]  (0.87,0,0.7) circle (0.05);        
	\filldraw[draw=black,fill=orange] (1.75,0,0.55) rectangle (1.75,0,0.85);
	\draw[densely dotted, blue] (0.75,0,1.4) --(0.75,0,2) ;
		\filldraw[draw=black,fill=blue]  (0.75,0,1.4) circle (0.05);
	\end{tikzpicture}
	=
		 \begin{tikzpicture}[scale=1.2]
     	   \foreach \z in {0,0.7,1.4}{
      \foreach \x in {0,0.5,1,1.5,2}{
           \pic at (\x,0,\z) {3dpeps}; 
             } } 
     	\filldraw[draw=black,fill=orange] (0.75,0,0.55) rectangle (0.75,0,0.85);           
	\filldraw[draw=black,fill=orange] (1.75,0,0.55) rectangle (1.75,0,0.85);
		\filldraw[draw=black,fill=blue]  (0.75,0,1.4) circle (0.05);
		 \draw[densely dotted, blue, rounded corners] (0.75,0,2)--(0.75,0,1.05) --(1.25,0,1.05)--(1.25,0,-0.35)-- (-0.25,0,-0.35)--(-0.25,0,1.05)--(0.5,0,1.05);
		 \filldraw[draw=blue, fill=white]  (1,0,1.05) circle (0.05);
		 \filldraw[draw=black, fill=blue]  (1.25,0,0.7) circle (0.05);
		 \filldraw[draw=black, fill=blue]  (1.25,0,0) circle (0.05);
		 \filldraw[draw=black, fill=blue]  (1,0,-0.35) circle (0.05);
		  \filldraw[draw=black, fill=blue]  (0.5,0,-0.35) circle (0.05);
		  \filldraw[draw=black, fill=blue]  (0,0,-0.35) circle (0.05);
		  \filldraw[draw=blue,fill=white]  (0.5,0,1.05) circle (0.05);		 
		   \filldraw[draw=blue, fill=white]  (0,0,1.05) circle (0.05);	
		 \filldraw[draw=blue, fill=white]  (-0.25,0,0.7) circle (0.05);
		 \filldraw[draw=blue, fill=white]  (-0.25,0,0) circle (0.05);
	\end{tikzpicture},
\end{equation*}
where with the last drawing we want to emphasize that the braiding does not depend on the distance: only the surrounded topological charge matters.
That is, this action transforms $\Pi_{\sigma}$ as
\begin{align}\label{braidcwf}
B^{[\sigma]}_g(\Pi_{\sigma})& = \sum_{h\in G} \tau_{g^{-1}}(C_{\sigma, h})\otimes \bar{C}_{\sigma,h} = \sum_{h,t\in G}\chi_\sigma(t^{-1}h)L^{\dagger}_g |h\rangle \langle h| L_g\otimes  |t\rangle \langle t| \notag \\
  &= \sum_{h,t\in G}\chi_\sigma(t^{-1}gh) |h\rangle \langle h| \otimes |t\rangle \langle t|  ,
\end{align}
where $B^{[\sigma]}_g$ stands for braiding with $g$ on one charge $\sigma$ of the pair and $\tau_g$ for the conjugation with $g$. If any one of the anyons involved is abelian, i.e. if $\sigma$ is one-dimensional or $g$ belongs to the center of $G$, $Z(G)$, the effect of the braiding is a phase factor: $B^{[\sigma]}_g(\Pi_{\sigma}) = (\chi_\sigma(g)/d_\sigma) \Pi_{\sigma}$. Otherwise to detect the effect of the braiding, we project with the initial state (the configuration with only charges). Using $G$-injectivity, Eq.(\ref{Ginje}), we see that
 \begin{equation}\label{braid-result}
 \langle  \;  \mathcal{O}^\dagger_{ \sigma} (x,y) \; \mathcal{B}^{[\sigma]}_g \;  \mathcal{O}_{ \sigma} (x,y) \; \rangle  =  
\begin{tikzpicture}
           \pic at (0.4,0,0.7) {3dpeps};
        \pic at (0.4,0.4,0.7) {3dpepsdown};
                   \pic at (1.1,0,0.7) {3dpeps};
        \pic at (1.1,0.4,0.7) {3dpepsdown};
   \node[anchor=west] at (-0.4,0.2,0.7)  {$\cdots$};
      \node[anchor=west] at (2.7,0.2,0.7)  {$\cdots$};
         \node[rotate=45] at (1.45,0.4,0.2)  {$\cdots$};
         \node[rotate=45] at (1.45,0,1.2)  {$\cdots$};
           \pic at (1.8,0,0.7) {3dpeps};
        \pic at (1.8,0.4,0.7) {3dpepsdown};
	           \pic at (2.5,0,0.7) {3dpeps};
        \pic at (2.5,0.4,0.7) {3dpepsdown};
        	\filldraw[draw=blue,fill=white]  (0.55,0,0.7) circle (0.05);
             \filldraw[draw=black,fill=orange] (0.75,0,0.5) rectangle (0.75,0,0.9);
             	\filldraw[draw=black,fill=blue]  (0.95,0,0.7) circle (0.05);
            \filldraw[draw=black,fill=orange] (0.75,0.4,0.5) rectangle (0.75,0.4,0.9); 
                  \filldraw[draw=black,fill=orange] (2.15,0.4,0.5) rectangle (2.15,0.4,0.9 );  
            \filldraw[draw=black,fill=orange] (2.15,0,0.5) rectangle (2.15,0,0.9);	
  \end{tikzpicture}  
 \propto \; \sum_{b\in G} \;
 \begin{tikzpicture}
 \draw (-0.4,0.3) rectangle (0.4,-0.3); 
    \filldraw[draw=black,fill=orange] (-0.1,-0.4) rectangle (0.1,-0.2);
    \filldraw[draw=black,fill=orange] (-0.1,0.4) rectangle (0.1,0.2);
    \filldraw[draw=blue,fill=white]  (-0.25,-0.3) circle (0.06);
     \filldraw[draw=black,fill=blue]  (0.25,-0.3) circle (0.06);
         \node[anchor=west] at (-0.4,0)  {${b}$};
          \filldraw (-0.4,0)  circle (0.04);
 \node[anchor=west] at (0.4,0)  {$\myinv{b}$};
        \filldraw (0.4,0) circle (0.04);
 \end{tikzpicture}
\times
   \begin{tikzpicture}
    \draw (-0.4,0.3) rectangle (0.4,-0.3); 
    \filldraw[draw=black,fill=orange] (-0.1,-0.4) rectangle (0.1,-0.2);
    \filldraw[draw=black,fill=orange] (-0.1,0.4) rectangle (0.1,0.2);
      \node[anchor=west] at (-0.4,0)  {${b}$};
          \filldraw (-0.4,0)  circle (0.04);
 \node[anchor=west] at (0.4,0)  {$\myinv{b}$};
        \filldraw (0.4,0) circle (0.04);
 \end{tikzpicture},
 \end{equation}
where the sum over $b \in G$ is the only part that remains when $G$-injectivity is used to evaluate the overlap and $\mathcal{B}^{[\sigma]}_g$ stands for the physical action of the braiding. Each term of the sum of \eqref{braid-result} corresponds to
$$\sum_{h\in G} \tr[ C^{[u]}_{\sigma,h}  (bg)^{-1} C^{[d]}_{\sigma,h} (bg)  ] \times \tr[  \bar{C}^{[u]}_{\sigma,h}  b^{-1} \bar{C}^{[d]}_{\sigma,h} b ]. $$
It can be shown that
\bed
\langle  \; \mathcal{O}^\dagger_{ \sigma} (x,y) \; \mathcal{B}^{[\sigma]}_g \; \mathcal{O}_{ \sigma} (x,y) \; \rangle /\bra{\Psi_A}\Psi_A\rangle = \chi_\sigma(g)/d_\sigma.
\eed
The clockwise braiding would give $\chi_\sigma(g^{-1})/d_\sigma$. The square of this quantity is the probability of the pair of charges to fuse to the vacuum after braiding, i.e. the change in the total charge of the pair. This method allows us to identify the type of a given unknown flux using a probe charge (or a set of them)\cite{Preskill04, Schuch10}.

Let us now show some of the properties of the dyon in the framework of $G$-isometric PEPS:
\begin{itemize}

\item Self-braiding is the effect of a half exchange of a dyon and its antiparticle or equivalently a $2\pi$ rotation of one dyon. The $2\pi$ clockwise rotation of the dyon corresponds to the counter-clockwise braiding with the other quasiparticle string with the following effect: 

\begin{equation*}
\begin{tikzpicture}[scale=1.2]
     	   \foreach \z in {2.1,0.7,1.4}{
      \foreach \x in {0,0.5,1,1.5}{
           \pic at (\x,0,\z) {3dpeps}; 
             } } 
           \draw[densely dotted, blue, rounded corners ](0.65,0,2.5)--(0.65,0,1.75)--(0.25,0,1.75)--(0.25,0,1.05)--(1.25,0,1.05)--(1.25,0,1.75)--(0.75,0,1.75)--(0.75,0,1.4);
     	\filldraw [draw=black,fill=blue] (0.65,0,2.1) circle (0.04);
	\filldraw [draw=blue,fill=white] (0.5,0,1.75) circle (0.04);
	\filldraw [draw=blue,fill=white] (0.25,0,1.4) circle (0.04);
	\filldraw [draw=black,fill=blue] (0.5,0,1.05) circle (0.04);
	\filldraw [draw=black,fill=blue] (1,0,1.05) circle (0.04);
	\filldraw [draw=black,fill=blue] (1.25,0,1.4) circle (0.04);
	\filldraw [draw=blue,fill=white] (1,0,1.75) circle (0.04);
	\filldraw [draw=black,fill=blue] (0.75,0,1.4) circle (0.04);
	          \filldraw [yellow,rotate around={45:(0.845,0,1.4)}, draw=black]  (0.77,0,1.3) rectangle (0.92,0,1.5);
 \end{tikzpicture} 
=
\begin{tikzpicture}[scale=1.2]
     	   \foreach \z in {2.1,0.7,1.4}{
      \foreach \x in {0,0.5,1,1.5}{
           \pic at (\x,0,\z) {3dpeps}; 
             } } 
       	\filldraw [draw=black,fill=blue] (0.8,0,2.1) circle (0.04);
	\filldraw [draw=black,fill=blue] (0.8,0,1.4) circle (0.04);
        \draw[densely dotted, blue](0.8,0,2.5)--(0.82,0,1.4);
         \filldraw [yellow,rotate around={45:(0.705,0,1.4)}, draw=black]  (0.63,0,1.3) rectangle (0.78,0,1.5);
 \end{tikzpicture} 
\end{equation*}

In order to complete the whole $2\pi$ spin we express $ D^w_\alpha L_h$ as $L_h (L^{\dagger}_h D^w_\alpha L_h )$; the effect of this operation is the conjugation by $L^{\dagger}_h$ in the charge part of the dyon. Since $h$ is central in $N_h$ the matrix representation of $h$ is a multiple of the identity so  $L^{\dagger}_h D^w_\alpha L_h= \chi_\alpha(h)D^w_\alpha$. The corresponding topological spin is $\chi_\alpha(h)/d_\alpha$, where $d_\alpha$ is the dimension of the irrep $\alpha$.

\item Braiding with $g\in N_h$: this operation corresponds to the conjugation by $L_g$ over the string and over (\ref{dyonend}). The flux part remains invariant because $g^{-1}hg=h$ and the charge part transforms as $L^{\dagger}_g D^w_\alpha L_g= D^{g^{-1}w}_\alpha$. 

\item Braiding with $g\notin N_h$: the string gets conjugated $h\to ghg^{-1}\equiv h^{g} $ and the charge part gets also conjugated $L_g D^w_\alpha L^{\dagger}_g$. The conjugation action is given by:
$$ \sum_{n \in N_h} \chi_\alpha(w n) \sum^\kappa_{j=1} |gnk_j\rangle\langle gnk_j|.$$
In order to operate with this expression we rewrite $gnk_j=   gng^{-1} \;  (gk_jg^{-1} \;g\; \tilde{k}^{-1}_{x_j}) \;\tilde{k}_{x_j} $, where we have just inserted identities and the element $\tilde{k}_{x_j}$. To define this element let us denote the representatives of the right cosets of $G/N_{h^{g}}$ as $\tilde{k}_j=g k_jg^{-1}$ with the relation $\tilde{h}_j = \tilde{k}_j \;h^{g}\; \tilde{k}^{-1}_j $. We now denote with the index $x^{[g]}_j\in [1,\cdots,\kappa]$ the element corresponding to $\tilde{h}_{x^{[g]}_j}= \tilde{k}_{x^{[g]}_j}  h^{g}\tilde{k}^{-1}_{x^{[g]}_j}=g \tilde{h}_j g^{-1} =g \tilde{k}_j \;h^{g}\; \tilde{k}^{-1}_j g^{-1}$. By the previous definition, it is straightforward that $\tilde{n}^{-1}_{x^{[g]}_j} \equiv  \tilde{k}_j g \tilde{k}^{-1}_{x^{[g]}_j} $ belongs to $N_{h^{g}}$ and then $L_g D^w_\alpha L^{\dagger}_g$ equals

\begin{align}\label{eq:braidyon}
 \sum_{n \in N_h} \chi_\alpha(wn) \sum^\kappa_{j=1} \ket{ n^g \tilde{n}^{-1}_{x^{[g]}_j}  \tilde{k}_{x_j}  } \bra{n^g \tilde{n}^{-1}_{x^{[g]}_j}  \tilde{k}_{x_j}}  
&= \sum_{\tilde{n} \in N_{h^{g}} } \chi_\alpha(gwg^{-1}\tilde{n}) \sum^\kappa_{j=1}
 \ket{ \tilde{n} \tilde{n}^{-1}_{x^{[g]}_j}  \tilde{k}_{x_j}  } \bra{\tilde{n} \tilde{n}^{-1}_{x^{[g]}_j}  \tilde{k}_{x_j}}  \notag \\
&= \sum^\kappa_{j=1} \sum_{\tilde{n} \in N_{h^{g}} } \chi_\alpha( w^g  \tilde{n} \tilde{n}_{x^{[g]}_j} ) \ket{ \tilde{n} \tilde{k}_{x^{[g]}_j} } \bra{  \tilde{n} \tilde{k}_{x^{[g]}_j}} .
\end{align}

This action coincides with the symmetry transformations of the quantum double algebras described in \cite{Dijkgraaf91,ThesisMark}.
\end{itemize}

\subsection{Connection with quantum double models}\label{dis:qd} 

$G$-injective PEPS are the tensor network realization of the quantum doubles model of $G$, $\mathcal{D}(G)$, constructed by Kitaev \cite{Kitaev03}. Let us show this relation, from the simplest model: $G=\mathbb{Z}_2$ which is known as the Toric Code (TC), see Ref.\cite{Schuch10} for more details. Kitaev proposed his models as commuting local Hamiltonians acting on a (square) lattice:
$$H_{TC}= -\sum_s A_s-\sum_p B_p,$$
where $p,s$ run over all plaquettes and sites respectively, $ A_s=\sigma_x^{\otimes 4}$ and $B_p=\sigma_z^{\otimes 4}$: see Figure \ref{fig:QDGiso}. It can be shown that the ground subspace is four-fold degenerate. The related Hamiltonian whose GS is represented as a PEPS is the following:
$$H= \frac{1}{2}\sum_s (\id_s-A_s)+\frac{1}{2}\sum_p (\id_p-B_p).$$
It is just a constant displacement in energy of $H_{TC}$. The basis of the ground subspace can be written (up to normalization) as
$$\{ \mathcal{W}^i_x \mathcal{W}^j_z {\displaystyle \prod_{s}  }(\id_s+A_s)\ket{0}^{\otimes n}\}_{i,j}^{0,1},$$
where $\mathcal{W}^{i,j}_{x,y}$ are operators $\sigma_{x,z}$ acting on a non-contractible loop of the torus and $i,j=0$ or $1$ denote the absence or presence of these operators respectively. The state $\Pi_p(\id_p+A_s)\ket{0}^{\otimes n}$, which is the equal weight superposition of $1$-valued loops in a background of $0$'s, can be written as a PEPS with the following tensor: 
\begin{equation*}
(T)^{i}_{\alpha,\beta,\gamma,\delta}= 
\begin{tikzpicture} 
         \node at (-1.6,0.3,1) {$i$};
       \pic at (-1.6,0,1) {3dpeps};
       \node[anchor=south] at (-1.9,0,1.2) {$\alpha$};
       \node[anchor=north] at (-1.6,0,1.1) {$\sigma$};
        \node[anchor=south] at (-1,0,1.35) {$\gamma$};
       \node[anchor=south] at (-1.35,0,0.9) {$\beta$};
\end{tikzpicture}
=\delta_{i, \alpha+\beta ({\rm mod} 2)} \delta_{\beta, \gamma}\delta_{\alpha,\sigma} \; ,
\end{equation*}
where all indices take values $0$ or $1$ and the exact position on the lattice is shown in Figure \ref{fig:QDGiso}. 

\begin{figure}[ht!]
\begin{center}
\includegraphics[scale=1.5]{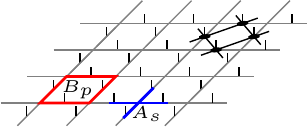}
\end{center}
\caption{The square lattice, drawn in gray, where the model of Ref. \cite{Kitaev03} is defined is shown. The physical spins live on the edges, they are draw in black. The red square represents the spins of the plaquette where the operator $B_p$ acts on. The blue cross represents the spins of the site where the operator $A_s$ acts on. The upper right corner shows how the tensor $T$ is  placed on the lattice to generate the GS of $H$.}
\label{fig:QDGiso}
\end{figure}

The tensor $T$ is invariant under the action of $\sigma_x$ on all the virtual legs: this corresponds to the $\mathbb{Z}_2$-invariance. But $T$ is also invariant under more purely virtual actions. These symmetries not coming from the $\mathbb{Z}_2$-invariance can be discarded when considering a block of four tensors. This is because that symmetry becomes local: it only acts between adjacent blocked tensors. It can be shown that the 4-block $T$ tensors are unitarily equivalent to the tensor product of a $G$-isometric  tensor and a tensor that creates a product state: the latter will not affect the topological properties.\\

Quantum doubles models of finite groups are not the most general topological models: string-net models \cite{Levin05} are believed to capture all non-chiral topological orders in 2D. They can also be represented as a tensor network, the so-called MPO-injective PEPS \cite{Sahinoglu14}. For chiral phases, that posses gapless edge modes propagating in only one direction,  models with topological order can be found in Ref.\cite{Kitaev06} or in Refs \cite{Yang15,Poilblanc15} using PEPS. Still, many questions for chiral topological PEPS are open.

One of the biggest advantages of tensor networks is that the topological order is encoded locally in the single tensors.  That encoding is a symmetry in the virtual d.o.f. of the tensor. Therefore a representation of each topological ordered phase can be given by constructing a tensor that has that symmetry.
%
%
Also if there is another property that we want to study for topological PEPS, one could wonder if there is a local encoding of  that property. Then, the interplay between the new property and the topological order could be analyzed via their local encodings. In this thesis, we are interested in the interplay between global symmetries and topological order in PEPS.  The key ingredient for that is to understand how the local tensors of two tensor networks that describe the same state must be related. This will be the content of the next chapter.

\newpage \cleardoublepage

\chapter{Fundamental theorems of tensor network states}\label{chap:FT}
A general property of TN is that two different sets of tensors can generate the same state. For instance, this happens in translational invariant MPS:
$$ \sum^{d}_{i_1,\dots,i_N=1}\tr[A^{i_1}A^{i_2}\cdots A^{i_N}]\ket{i_1,\cdots,i_N} =  \sum^{d}_{i_1,\dots,i_N=1}\tr[B^{i_1}B^{i_2}\cdots B^{i_N}]\ket{i_1,\cdots,i_N},$$
where $B^i = X A^i X^{-1}$, for any $X\in GL(D)$. This is a so-called gauge transformation: it adds an invertible matrix on each index of the tensors so that the matrices cancel out when contracting neighbouring tensors. 

A key question arises: if two sets of tensors generate the same state, is a gauge transformation the only relationship allowed between them? When this is the case, as it is the case for MPS, it has been used to define the canonical form \cite{Fannes92,PerezGarcia07,Vidal03} and to characterize global or local symmetries \cite{Sanz09,Kull17}. 
The reason for the latter is the following: if a state is symmetric, it means that an operation leaves it invariant. But it will generally change the tensors and for this particular case, it will do it with a gauge transformation. This implies that symmetries in the quantum states can be captured by transformations in the tensors. 

This question is also decisive in many other situations dealing with string order \cite{PerezGarcia08B}, topological order \cite{Sahinoglu14}, renormalization \cite{Cirac17A}, or time evolution \cite{Cirac17B}. Theorems answering such fundamental questions about the structure of TNs are typically referred to as Fundamental Theorems.

There is no Fundamental Theorem for the most general TNS. Ref. \cite{Scarpa18} shows that even for two tensors generating translationally invariant PEPS in an $N\times N$ square lattice, there cannot exist an algorithm to decide whether they will generate the same state for all $N$ or not. Therefore, some restrictions have to be imposed on the TN to avoid that no-go theorem: these restrictions are mainly imposed on the properties of the tensors.

For MPS, Fundamental Theorems have been proven for translationally invariant states \cite{Cirac17A,Cuevas17} when the tensors generate the same state for any system size. A Fundamental Theorem for MPS that are not necessarily translationally invariant has been proven in \cite{PerezGarcia10} for a fixed (but large enough) system size in the case of injective and normal tensors: the ones that become injective after blocking a few sites. In 2D, Ref. \cite{PerezGarcia10} also proves the result for normal tensors and Ref.\cite{Andras18B} proves a Fundamental Theorem for the so-called semi-injective tensors. Those theorems require only a fixed (but large enough) system size but they rely on the lattice structure to reduce to 1D techniques so they do not generalize to other geometries.  

In this chapter we prove a Fundamental Theorem for $G$-injective tensors that generate the same PEPS. There were no previous results in this direction for this class of tensors. This result is crucial for the rest of the thesis since it allows us to characterize on-site symmetries of $G$-injective PEPS. The precise statement is the following:

\begin{tcolorbox}
\begin{theorem}[Fundamental theorem for $G$-injective PEPS]\label{theo:FTGinjective}
 Suppose two G-injective tensors $A$ and $B$  generate the same PEPS for every system size on the square lattice; $|\Psi_A\rangle=|\Psi_B\rangle$. Then, there exist invertible matrices $X$ and $Y$ such that 
  \begin{equation*}
    \begin{tikzpicture}
      \pic at (0,0,0) {3dpeps};
      \node at (0.07,-0.27,0) {$A$};
      \draw (0,0,0) -- (0,0.35,0);
    \end{tikzpicture} =
    \begin{tikzpicture}
      \draw (-0.7,0,0) -- (0.7,0,0);
      \draw (0,0,-0.9) -- (0,0,0.9);
      \pic at (0,0,0) {3dpepsres};
      \node at (0.1,-0.27,0) {$B$};
      \filldraw [draw=black, fill=red] (-0.5,0,0) circle (0.05);
      \filldraw [draw=black, fill=red] (0.5,0,0) circle (0.05);
      \node[anchor=south] at (-0.5,0,0) {$\myinv{X}$};
      \node[anchor=north] at (0.5,0,0) {$X$};
      \filldraw [draw=black, fill=red] (0,0,-0.6) circle (0.05);
      \filldraw [draw=black, fill=red] (0,0,0.6) circle (0.05);
       \node[anchor=south] at (0,0,-0.6) {$Y$};
      \node[anchor=north] at (0,0,0.6) {$\myinv{Y}$};
    \end{tikzpicture}.
  \end{equation*}
\end{theorem}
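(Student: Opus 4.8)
The plan is to follow the three--step skeleton of the injective/normal case \cite{Molnar18A} --- (i) block until the $G$--injective structure saturates, (ii) read off from the global equality a relation between the blocked tensors up to a gauge on their boundary virtual legs, and (iii) prove that this gauge factorizes into single--bond matrices --- while tracking at every step the $G$--action on the virtual spaces and the topological degeneracy of the parent Hamiltonian. \emph{Step 1 (blocking and pseudo--inverse).} Let $\mathbb A_{\mathcal R}$ and $\mathbb B_{\mathcal R}$ be the tensors obtained by contracting $A$, resp.\ $B$, over a large rectangular patch $\mathcal R$. Using the concatenation identity \eqref{conca} --- available because $u_g$ is semiregular, see \cref{obs:semireg}, so that the matrix $\mathfrak D$ with $\tr[u_g\mathfrak D]=|G|\delta_{e,g}$ exists \cite{Schuch11} --- one builds a linear map $\mathbb A_{\mathcal R}^{-1}$ from the physical Hilbert space of $\mathcal R$ back to its boundary virtual legs with $\mathbb A_{\mathcal R}^{-1}\mathbb A_{\mathcal R}=\mathcal P^{G}_{\partial\mathcal R}$, the projector onto the subspace of $\bigotimes_{e\in\partial\mathcal R}\C^{D}$ invariant under the simultaneous action of $u_g$ (with the leg orientations inherited from \eqref{Ginva}) on all boundary bonds; symmetrically for $B$. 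Thus, modulo this $G$--invariant subspace, the blocked tensors behave like injective tensors, and one may expect the injective machinery to go through.

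\emph{Step 2 (from $\ket{\Psi_A}=\ket{\Psi_B}$ to a gauge on $\partial\mathcal R$).} Put the state on an $L\times L$ torus with $L$ large, cut out the patch $\mathcal R$, and write $\ket{\Psi_A}=\mathbb A_{\mathcal R}\ket{E^A_L}$, $\ket{\Psi_B}=\mathbb B_{\mathcal R}\ket{E^B_L}$, where $\ket{E^A_L},\ket{E^B_L}$ are the environment vectors (contraction of all remaining tensors) living in $(\text{boundary legs})\otimes(\text{physical space of }\mathcal R^{c})$. Applying $\mathbb B_{\mathcal R}^{-1}$ on the physical space of $\mathcal R$ to $\mathbb A_{\mathcal R}\ket{E^A_L}=\mathbb B_{\mathcal R}\ket{E^B_L}$ gives $\big(\mathbb B_{\mathcal R}^{-1}\mathbb A_{\mathcal R}\big)\ket{E^A_L}=\mathcal P^{G}_{\partial\mathcal R}\ket{E^B_L}=\ket{E^B_L}$. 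The key point is that, as $L\to\infty$ (and as one slides the cut), the environment vectors span on the boundary legs the whole invariant subspace $\operatorname{Im}\mathcal P^{G}_{\partial\mathcal R}$ --- a consequence of $G$--injectivity of the surrounding tensors together with the deformability of $u_g$--strings \eqref{eq:Gmoves}, and exactly where the hypothesis ``for every system size'' is used. Hence $G_{\mathcal R}:=\mathbb B_{\mathcal R}^{-1}\mathbb A_{\mathcal R}$ restricted to $\operatorname{Im}\mathcal P^{G}_{\partial\mathcal R}$ depends only on $\mathcal R$ and satisfies $\mathbb A_{\mathcal R}=\mathbb B_{\mathcal R}\,G_{\mathcal R}$ there; running the argument with $A$ and $B$ swapped yields the inverse, so $G_{\mathcal R}$ is invertible on that subspace and $\mathbb A_{\mathcal R}$, $\mathbb B_{\mathcal R}$ have the same image.

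\emph{Step 3 (locality and rigidity of the gauge).} It remains to show that $G_{\mathcal R}$ is the restriction of a product $\bigotimes_{e\in\partial\mathcal R}g_{e}$ of invertible matrices attached to the individual boundary bonds, with $g_{e}$ depending only on the direction of $e$. Take two patches $\mathcal R_1,\mathcal R_2$ overlapping in a strip: on the common bonds the two gauges from Step 2 must be compatible (they turn the same blocked $A$--tensor into the same blocked $B$--tensor), and an induction along an increasing connected family of patches --- the standard argument of \cite{Molnar18A,Cirac17A} --- forces $G_{\mathcal R}$ to factorize bond by bond; translational invariance then makes all horizontal bonds carry one matrix and all vertical bonds another, which, after reabsorbing the orientation of \eqref{Ginva}, we call $X$ and $Y$. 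Unwrapping the blocking down to a single tensor, and noting that on each bond the gauge extends from $\operatorname{Im}\mathcal P^{G}_{\partial\mathcal R}$ to an invertible matrix of $\C^{D}$ (composing the two mutually inverse gauges of Step 2 on that bond forces invertibility, whence in particular $D_A=D_B$), one arrives at the claimed identity $A=B\,(\myinv X\otimes X\otimes Y\otimes \myinv Y)$ on the four virtual legs. (An alternative route --- using that a block of $G$--injective tensors is, up to a gauge, a $G$--isometric fixed point dressed by a product state --- still has to deform away from that fixed point, so we prefer the direct argument above.)

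\emph{Main obstacle.} The genuinely new difficulty compared with the injective case is that the pseudo--inverse only sees the $G$--invariant subspace, so closed $u_g$--loops at the virtual level are invisible to the state; keeping the bond gauges $g_e$ well defined --- rather than defined only up to $u_g$ on that bond --- and showing they nevertheless factorize coherently across all bonds is the technical heart of the proof. It is precisely here that semiregularity of $u_g$, the matrix $\mathfrak D$ of \eqref{conca}, and the deformation moves \eqref{eq:Gmoves} enter in an essential way.
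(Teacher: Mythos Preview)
Your proposal has a genuine gap in Step~3, and it is exactly the place where the paper departs from the injective/normal strategy. In the injective case, the bond-wise factorization of the gauge works because the algebra of virtual operators on a bond is a \emph{full} matrix algebra: the isomorphism $X\mapsto Y$ between the two networks is then inner by Skolem--Noether, i.e.\ $Y=ZXZ^{-1}$ for a single invertible $Z$, and that $Z$ is your bond gauge. In the $G$-injective case the relevant algebra is only the \emph{centralizer} $\mathcal C^A_e\cong\bigoplus_\sigma \id_{d_\sigma}\otimes\mathcal M_{m_\sigma}$, which is semisimple but not simple; an isomorphism between such algebras may permute irrep blocks and need not be inner (the paper gives an explicit counterexample: $a\oplus a\oplus b\mapsto b\oplus b\oplus a$). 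So your overlapping-patch induction does not even get started: you do not have, on a single bond, a well-defined invertible matrix $g_e$ to propagate. Your closing remark that the gauge ``extends from $\operatorname{Im}\mathcal P^G_{\partial\mathcal R}$ to an invertible matrix of $\C^D$'' is precisely the statement to be proved, not a consequence of composing the two mutually inverse maps of Step~2 (those live on the invariant subspaces and say nothing about the complementary blocks).

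The paper's route is substantially different and does not rely on the patch-compatibility induction at all. First, it closes one direction to a $1\times n$ torus and applies the 1D $G$-injective MPS fundamental theorem (canonical form) to obtain invertible $X$ and $Y$ directly. This reduces the problem to showing that the residual discrepancy $T\in\mathcal A^{\otimes 3}$ in
\[
B\,(X^{-1}\otimes Y\otimes X\otimes Y^{-1})\;=\;A\cdot T
\]
is trivial. The triviality of $T$ is then established via a sequence of Hopf-algebra arguments: a normalization lemma (projecting onto a one-dimensional irrep vector $\ket{w}$, which in turn uses the large-edge analysis of \S2.4.1 to show the centralizer isomorphism becomes inner after sufficient tensoring), a concatenation lemma producing a twist $Y\in\mathcal A^{\otimes2}$, and finally the fact that any \emph{symmetric} twist of a group algebra is trivial (a result of Etingof--Gelaki). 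None of this structure appears in your outline; the ``main obstacle'' you correctly identify is resolved by these Hopf-algebra tools rather than by semiregularity and string deformation alone.
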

\end{tcolorbox}

Before writing the proof for $G$-injective PEPS, we show the Fundamental Theorem for normal PEPS in arbitrary lattices (geometries and dimensions). This case will establish the basic tools for the proof of \cref{theo:FTGinjective}. The main ingredient of the proof is a new technique introduced in Ref.\cite{Molnar18A} which is a local reduction to the MPS case,  instead of a \emph{slice} reduction along one dimension. This Fundamental Theorem generalizes the previous results as follows. First, we do not require equality of states for all system sizes, as required for example in Ref. \cite{Cirac17A}, our result is valid for a fixed (but large enough) size which is smaller than the one required in Ref. \cite{PerezGarcia10}. Second, the TN considered here do not need to be translationally invariant, which is important when applying the results to local symmetries. 
Third, the results hold for any geometry (including, for instance, hyperbolic as it is used in the constructions of AdS/CFT correspondence \cite{Pastawski15,Hayden16}) and dimension.

\section{Injective tensor network states}

In this section we prove the two main lemmas leading to the Fundamental Theorem for non-translational invariant MPS. In the following, we consider two injective tensor networks generating the same state. The defining tensors of the two MPS are labeled by $A_s$ and $B_s$, where $s$ denote the lattice site. 

The first lemma assigns a special gauge transformation to each edge of one of the tensor networks; the second lemma shows that once these gauge transformations are absorbed into the tensors $B_s$, the resulting tensors are equal to $A_s$.

\begin{lemma}\label{lem:inj_isomorph}
 Suppose $\{A_1,A_2,A_3\}$,  $\{B_1,B_2,B_3\}$ are injective non-necessarily translational invariant tensors that generate the same tripartite MPS. Then for every edge and for every matrix $X$  there is a matrix $Y$ such that
  \begin{equation*}
    \begin{tikzpicture}
      \draw (0.5,0) rectangle (3.5,-0.5);
      \foreach \x in {1,2,3}{
        \node[tensor,label=below:$A_\x$] (t\x) at (\x,0) {};
        \draw (t\x) --++ (0,0.5);
      }
      \node[tensorr, draw=black,fill=red,label=above:$X$] (x) at (1.5,0) {};
    \end{tikzpicture} = 
    \begin{tikzpicture}
      \draw (0.5,0) rectangle (3.5,-0.5);
      \foreach \x in {1,2,3}{
        \node[tensor,label=below:$B_\x$] (t\x) at (\x,0) {};
        \draw (t\x) --++ (0,0.5);
      }
      \node[tensorr, draw=black,fill=red,label=above:$Y$] (x) at (1.5,0) {};
    \end{tikzpicture} \ .
  \end{equation*} 
  Moreover, $X$ and $Y$ have the same dimension and there is an invertible matrix $Z$ such that $Y=Z^{-1}X Z$. This $Z$ is uniquely defined up to multiplication with a constant.
\end{lemma}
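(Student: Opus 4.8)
The plan is to exploit injectivity to localize the equality of the two tripartite MPS into a statement about individual tensors, and then to propagate a chosen gauge $X$ on one edge into a gauge $Y$ on the other tensor network. First I would use the fact that $\{A_1,A_2,A_3\}$ are injective, so each tensor $A_s$ admits a (left/right) inverse $A_s^{-1}$ on the physical index that returns the identity on the virtual indices after blocking. Starting from the identity of states $|\Psi_A\rangle = |\Psi_B\rangle$ on the triangle/ring of three sites, contracting the physical legs at sites $1$ and $3$ with $A_1^{-1}$ and $A_3^{-1}$ leaves an equation relating $A_2$ to $B_2$ dressed with maps on the two virtual edges adjacent to site $2$; applying the analogous inverses cyclically gives, for each edge $e$, an invertible map $M_e$ (built out of the $B$'s and $A^{-1}$'s) such that contracting $B_s$ with $M_{e}$ on one incident edge and $M_{e'}^{-1}$ on the other recovers $A_s$. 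The standard argument here is that these maps $M_e$ on each edge are well-defined and invertible because injectivity also implies that the $B$-tensors are injective (two tensors generating the same injective MPS must both be injective, since injectivity is detectable from the rank of reduced density matrices of a single site after blocking), so the $B$ side has its own inverses and the composition works in both directions.

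Next, given an arbitrary matrix $X$ inserted on a chosen edge of the $A$-network between $A_1,A_2,A_3$, I would slide $X$ through using the edge maps: the insertion of $X$ on edge $e$ in the $A$-picture, once we rewrite $A_s = M_e B_s M_{e'}^{-1}$ for the two tensors bordering $e$, becomes the insertion of $Y := M_e^{-1} X M_e$ on the corresponding edge of the $B$-network, because the $M$'s on the other edges telescope around the loop of three tensors and cancel. This immediately gives the displayed equation with $Z := M_e$ (so $Y = Z^{-1} X Z$), and shows $X$ and $Y$ have the same dimension since $M_e$ is a square invertible matrix. For uniqueness of $Z$ up to a scalar: if $Z'$ is another invertible matrix implementing the same conjugation map $X \mapsto Z^{-1}XZ$ for \emph{all} $X$, then $Z' Z^{-1}$ commutes with every matrix, hence is a multiple of the identity by Schur's lemma; the subtlety is that we only know the intertwining relation edge-by-edge, so I would argue that varying $X$ over all of $\mathcal{M}_D$ (which the hypothesis allows) forces $Z^{-1}Z'$ into the commutant of the full matrix algebra.

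The step I expect to be the main obstacle is setting up the edge maps $M_e$ cleanly and verifying they are genuinely invertible and consistent around the loop — in particular, checking that the map assigned to an edge from the "left" reduction agrees (up to the scalar freedom) with the one obtained from the "right" reduction, and that no spurious rank drop occurs when we contract with the inverses. This is where one must be careful about whether three sites suffice or whether a constant amount of blocking is needed so that $\Gamma$ on the relevant region is injective; I would state explicitly that "injective" here includes the blocked version of \eqref{eq:injec}, and that three (blocked) sites is enough precisely because the parent-Hamiltonian/intersection-property argument from \cite{Schuch10} guarantees the state on the ring is the unique one compatible with the local reduced structure. The rest — sliding $X$ through and the Schur-lemma uniqueness — is then routine diagrammatic manipulation.
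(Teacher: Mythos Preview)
Your proposal has a genuine gap at exactly the point you flag as ``the main obstacle''. When you apply $A_1^{-1}$ and $A_3^{-1}$ to the $B$-side of the equality, what you obtain is
\[
(A_2^j)_{\beta\gamma}\,\delta_{\delta\alpha} \;=\; \mathrm{tr}\!\left[P^{(\alpha\beta)}\,B_2^j\,Q^{(\gamma\delta)}\right],
\]
where $P^{(\alpha\beta)}=\sum_i (A_1^{-1})^i_{\alpha\beta}B_1^i$ and $Q^{(\gamma\delta)}=\sum_k (A_3^{-1})^k_{\gamma\delta}B_3^k$. This relates $A_2$ to $B_2$ dressed by a \emph{joint} map on the two adjacent edges, not by a product $M_e\otimes M_{e'}^{-1}$ of single-edge maps. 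Extracting single-edge maps $M_e$ from this joint dressing is precisely the factorization content of the Fundamental Theorem, which in the paper's logic is proved \emph{using} this lemma --- so your route is circular. Your fallback appeal to parent-Hamiltonian/intersection arguments from \cite{Schuch10} addresses ground-state uniqueness, not the gauge factorization you need here.

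The paper's proof avoids this by never trying to build $Z$ directly. Instead it isolates a single edge: the virtual insertion of $X$ on the $A$-bond is realized by a physical operator $O_1(X)$ acting on one neighboring site (built from $A_1$ and $A_1^{-1}$); since the two states agree, the same $O_1(X)$ acts on the $B$-network, and there one shows (using $B_2^{-1},B_3^{-1}$) that this physical action is again a virtual insertion of some $Y$ on the corresponding $B$-bond. The resulting map $X\mapsto Y$ is checked to be a bijective algebra homomorphism between full matrix algebras, whence Skolem--Noether gives $Y=Z^{-1}XZ$ with $Z$ unique up to scalar. The key trick you are missing is this passage through a one-site physical operation, which decouples the chosen edge from the rest of the ring and so sidesteps the two-edge factorization problem entirely.
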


This Lemma will be used to assign a local gauge transformation to all edges on one of two tensor networks generating the same state. These local gauges will then be incorporated in the defining tensors; doing so will lead us to two tensor networks where inserting any matrix $X$ on any bond simultaneously in the two networks gives two new states that are still equal. 

The proof of \cref{lem:inj_isomorph} is based on the observation that any operator acting on one edge, at the virtual level, can be realized by a physical operation on either of the neighboring sites.  And vice versa: two physical operations on neighboring sites that transform the state in the same way correspond to the same virtual operator acting on the bond connecting the two sites. Given two tensor networks generating the same state, this correspondence establishes an isomorphism between the algebra of virtual operations. The basis change realizing this isomorphism is the local gauge relating the two tensors.

Before proceeding to the proof, notice that due to injectivity of the tensors using \eqref{eq:injec}, if 
\begin{equation}\label{eq:equalbininj}
  \begin{tikzpicture}
    \draw (0.5,0) rectangle (3.5,-0.5);
    \foreach \x in {1,2,3}{
      \node[tensor,label=below:$A_\x$] (t\x) at (\x,0) {};
      \draw (t\x) --++ (0,0.5);
    }
    \node[tensorr,  draw=black,fill=red,label=above:$X_1$] (x) at (1.5,0) {};
  \end{tikzpicture} = 
  \begin{tikzpicture}
    \draw (0.5,0) rectangle (3.5,-0.5);
    \foreach \x in {1,2,3}{
      \node[tensor,label=below:$A_\x$] (t\x) at (\x,0) {};
      \draw (t\x) --++ (0,0.5);
    }
    \node[tensorr,  draw=black,fill=red,label=above:$X_2$] (x) at (1.5,0) {};
  \end{tikzpicture} \ ,
\end{equation}
then $X_1=X_2$.

\begin{proof}[Proof of \cref{lem:inj_isomorph}]
Consider now a deformation of the TN by inserting a matrix $X$ on one of the bonds. This deformation can be realized by physical operations acting on either of the two neighbouring sites:
\begin{equation*}
  \begin{tikzpicture}[baseline=-0.1cm]
    \draw (0.5,0) rectangle (3.5,-0.5);
    \foreach \x in {1,2,3}{
      \node[tensor,label=below:$A_\x$] (t\x) at (\x,0) {};
      \draw[ ] (t\x) --++ (0,0.5);
    }
    \node[tensorr,  draw=black,fill=red,label=above:$X$] (x) at (1.5,0) {};
  \end{tikzpicture} = 
  \begin{tikzpicture}[baseline=-0.1cm]
    \draw (0.5,0) rectangle (3.5,-0.5);
    \foreach \x in {1,2,3}{
      \node[tensor,label=below:$A_\x$] (t\x) at (\x,0) {};
      \draw[ ] (t\x) --++ (0,0.5);
    }
    \node[tensorr,  draw=black,fill=red,label=left:$O_1(X)$] (o) at (1,0.5) {};
    \draw[ ] (o)--++(0,0.5); 
  \end{tikzpicture} = 
  \begin{tikzpicture}[baseline=-0.1cm]
    \draw (0.5,0) rectangle (3.5,-0.5);
    \foreach \x in {1,2,3}{
      \node[tensor,label=below:$A_\x$] (t\x) at (\x,0) {};
      \draw[ ] (t\x) --++ (0,0.5);
    }
    \node[tensorr,  draw=black,fill=red,label=left:$O_2(X)$] (o) at (2,0.5) {};
    \draw[ ] (o)--++(0,0.5); 
  \end{tikzpicture} \ ,
\end{equation*} 
where
\begin{equation}\label{eq:X->O}
   O_1(X) = 
  \begin{tikzpicture}
    \draw (-0.5,0) rectangle (0.5,1.2);
    \node[tensor,label=below:$A_1$] (t) at (0,1.2) {};
    \node[tensor,label=above:$A_1^{-1}$] (b) at (0,0) {};
    \node[tensorr,draw=black,fill=red,label=right:$X$] (x) at (0.5,0.6) {};
    \draw[ ] (t)--++(0,0.5);
    \draw[ ] (b)--++(0,-0.5);
  \end{tikzpicture} \quad \text{and} \quad 
   O_2(X) = 
  \begin{tikzpicture}
    \draw (-0.5,0) rectangle (0.5,1.2);
    \node[tensor,label=below:$A_2$] (t) at (0,1.2) {};
    \node[tensor,label=above:$A_2^{-1}$] (b) at (0,0) {};
    \node[tensorr,draw=black,fill=red,label=left:$X$] (x) at (-0.5,0.6) {};
    \draw[ ] (t)--++(0,0.5);
    \draw[ ] (b)--++(0,-0.5);
  \end{tikzpicture} \ .     
\end{equation}
This can be checked easily using \eqref{eq:injec}. It is important to note that the mappings $X\mapsto  O_1(X)$ and $X\mapsto O_2^T(X)$ are algebra homomorphisms as they are linear, $O_1(\alpha[X_1+X_2])=\alpha [O_1(X_1)+O_1(X_2)]$, and satisfy $O_1(X_1X_2)=O_1(X_1)O_1(X_2)$. The virtual bonds of the tensors $A_s$ read from left to right, thus the loops in \cref{eq:X->O} read clockwise; hence the transpose in the mapping $X\mapsto O_2^T(X)$. Also these mappings do not depend on $A_3$.

Consider now the converse: two physical operations on neighboring sites that maps the MPS to the same state:
\begin{equation}\label{eq:resonate}
  \begin{tikzpicture}[baseline=-0.1cm]
    \draw (0.5,0) rectangle (3.5,-0.5);
    \foreach \x in {1,2,3}{
      \node[tensor,label=below:$B_\x$] (t\x) at (\x,0) {};
      \draw[ ] (t\x) --++ (0,0.5);
    }
    \node[tensorr,  draw=black,fill=red,label=left:$S_1$] (o) at (1,0.5) {};
    \draw[ ] (o)--++(0,0.5); 
  \end{tikzpicture} = 
  \begin{tikzpicture}[baseline=-0.1cm]
    \draw (0.5,0) rectangle (3.5,-0.5);
    \foreach \x in {1,2,3}{
      \node[tensor,label=below:$B_\x$] (t\x) at (\x,0) {};
      \draw[ ] (t\x) --++ (0,0.5);
    }
    \node[tensorr,  draw=black,fill=red,label=left:$S_2$] (o) at (2,0.5) {};
    \draw[ ] (o)--++(0,0.5); 
  \end{tikzpicture} \ .
\end{equation}
 We apply $B^{-1}_2$ and $B^{-1}_3$ on both sides of equality \eqref{eq:resonate} and we arrive at
\begin{equation}\label{eq:inj_O->X_argument}
  \begin{tikzpicture}
    \node[tensor,label=below:$B_1$] (l) at (0,-0.6) {};
      \draw[ ] (l)--++(0,0.6);
    \draw (l)++(-0.6,0)--(l)--++(0.6,0);
    \node[tensorr, draw=black,fill=red,label=left:$S_1$] (o) at ($(l)+(0,0.5)$) {};
    \draw[  ] (o) --++(0,0.5);
  \end{tikzpicture} = 
  \begin{tikzpicture}
    \node[tensor,label=below:$B_1$] (l) at (0,-0.6) {};
    \node[tensor,label=below:$B_2$] (r) at (1,-0.6) {};
    \foreach \x in {l,r}{
      \draw[ ] (\x)--++(0,0.6);
    }
    \draw (l)++(-0.6,0)--(l)--(r)--++(0.6,0);
    \node[tensorr, draw=black,fill=red,label=left:$S_2$] (o) at ($(r)+(0,0.5)$) {};
    \draw[ ] (o) --++(0,0.5);      
    \node[tensor,label=above:$B_2^{-1}$] (i) at ($ (r) + (0,1)$) {};
    \draw (i)++(-0.6,0)--(i)--++(0.6,0)--++(0,-1);
        \draw[rounded corners=2mm,red] (0.2,-1) rectangle (2.3,1);
         \node[anchor=north] at (1.9,0) {$ D_{23}^{-1}$};
    \node[red] at (1,-1.3) {$W$};
  \end{tikzpicture}  =  
  \begin{tikzpicture}
    \draw (-0.5,0)--(1.5,0);
    \node[tensor,label=below:$B_1$] (a) at (0,0) {};
    \draw[ ] (a)--++(0,0.5);
    \node[tensorr, draw=black,fill=red,label=below:$W$] (x) at (1,0) {};
  \end{tikzpicture} \ ,
\end{equation}  
for some matrix $W$ and where $D_{23}$ is the dimension of the vector space on the bond between sites $2$ and $3$.  Similarly, inverting $B_1$ and $B_3$, we arrive at
\begin{equation*}
  \begin{tikzpicture}
    \draw (-0.5,0)--(0.5,0);
    \node[tensor,label=below:$B_2$] (a) at (0,0) {};
    \draw[ ] (a)--++(0,0.5);
    \node[tensorr,  draw=black,fill=red,label=left:$S_2$] (o) at (0,0.5) {};
    \draw[ ] (o)--++(0,0.5); 
  \end{tikzpicture} = 
  \begin{tikzpicture}
    \draw (-1.5,0)--(0.5,0);
    \node[tensor,label=below:$B_2$] (a) at (0,0) {};
    \draw[ ] (a)--++(0,0.5);
    \node[tensorr, draw=black,fill=red,label=below:$V$] (x) at (-1,0) {};
  \end{tikzpicture} \ ,
\end{equation*}
for some matrix $V$.  Therefore
\begin{equation*}
  \begin{tikzpicture}[baseline=-0.1cm]
    \draw (0.5,0) rectangle (3.5,-0.5);
    \foreach \x in {1,2,3}{
      \node[tensor,label=below:$B_\x$] (t\x) at (\x,0) {};
      \draw[ ] (t\x) --++ (0,0.5);
    }
    \node[tensorr,  draw=black,fill=red,label=above:$W$] (x) at (1.5,0) {};
  \end{tikzpicture} = 
  \begin{tikzpicture}[baseline=-0.1cm]
    \draw (0.5,0) rectangle (3.5,-0.5);
    \foreach \x in {1,2,3}{
      \node[tensor,label=below:$B_\x$] (t\x) at (\x,0) {};
      \draw[ ] (t\x) --++ (0,0.5);
    }
    \node[tensorr,  draw=black,fill=red,label=left:$S_1$] (o) at (1,0.5) {};
    \draw[ ] (o)--++(0,0.5); 
  \end{tikzpicture} = 
  \begin{tikzpicture}[baseline=-0.1cm]
    \draw (0.5,0) rectangle (3.5,-0.5);
    \foreach \x in {1,2,3}{
      \node[tensor,label=below:$B_\x$] (t\x) at (\x,0) {};
      \draw[ ] (t\x) --++ (0,0.5);
    }
    \node[tensorr,  draw=black,fill=red,label=left:$S_2$] (o) at (2,0.5) {};
    \draw[ ] (o)--++(0,0.5); 
  \end{tikzpicture} \ = \ 
  \begin{tikzpicture}[baseline=-0.1cm]
    \draw (0.5,0) rectangle (3.5,-0.5);
    \foreach \x in {1,2,3}{
      \node[tensor,label=below:$B_\x$] (t\x) at (\x,0) {};
      \draw[ ] (t\x) --++ (0,0.5);
    }
    \node[tensorr,  draw=black,fill=red,label=above:$V$] (x) at (1.5,0) {};
  \end{tikzpicture} \ ,
\end{equation*}
and thus by injectivity, using \eqref{eq:equalbininj}, $V=W$. Therefore
\begin{equation} \label{eq:O->X2}
  \begin{tikzpicture}
    \draw (-0.5,0)--(0.5,0);
    \node[tensor,label=below:$B_1$] (a) at (0,0) {};
    \draw[ ] (a)--++(0,0.5);
    \node[tensorr,  draw=black,fill=red,label=left:$S_1$] (o) at (0,0.5) {};
    \draw[ ] (o)--++(0,0.5); 
  \end{tikzpicture} = 
  \begin{tikzpicture}
    \draw (-0.5,0)--(1.5,0);
    \node[tensor,label=below:$B_1$] (a) at (0,0) {};
    \draw[ ] (a)--++(0,0.5);
    \node[tensorr, draw=black,fill=red, label=below:$W$] (x) at (1,0) {};
  \end{tikzpicture} \quad \text{and} \quad 
  \begin{tikzpicture}
    \draw (-0.5,0)--(0.5,0);
    \node[tensor,label=below:$S_2$] (a) at (0,0) {};
    \draw[ ] (a)--++(0,0.5);
    \node[tensorr,  draw=black,fill=red,label=left:$S_2$] (o) at (0,0.5) {};
    \draw[ ] (o)--++(0,0.5); 
  \end{tikzpicture} = 
  \begin{tikzpicture}
    \draw (-1.5,0)--(0.5,0);
    \node[tensor,label=below:$B_2$] (a) at (0,0) {};
    \draw[ ] (a)--++(0,0.5);
    \node[tensorr,draw=black,fill=red, label=below:$W$] (x) at (-1,0) {};
  \end{tikzpicture} \ ,
\end{equation}
and the maps $S_1\mapsto W$ and $S_2^T \mapsto W$ are uniquely defined and are  algebra homomorphisms. 

Consider now two three-site non-necessarily translational invariant injective MPS generating the same state:
\begin{equation*}
  \begin{tikzpicture}
    \draw (0.5,0) rectangle (3.5,-0.5);
    \foreach \x in {1,2,3}{
      \node[tensor,label=below:$A_\x$] (t\x) at (\x,0) {};
      \draw[ ] (t\x) --++ (0,0.5);
    }
  \end{tikzpicture} = 
  \begin{tikzpicture}
    \draw (0.5,0) rectangle (3.5,-0.5);
    \foreach \x in {1,2,3}{
      \node[tensor,label=below:$B_\x$] (t\x) at (\x,0) {};
      \draw[ ] (t\x) --++ (0,0.5);
    }
  \end{tikzpicture} \ .
\end{equation*}
Deform the MPS on the LHS by inserting a matrix $X$ on one of the bonds. By the above arguments, this deformation is equivalent to any of the two physical operations:
\begin{equation*}
  \begin{tikzpicture}[baseline=-0.1cm]
    \draw (0.5,0) rectangle (3.5,-0.5);
    \foreach \x in {1,2,3}{
      \node[tensor,label=below:$A_\x$] (t\x) at (\x,0) {};
      \draw[ ] (t\x) --++ (0,0.5);
    }
    \node[tensorr,  draw=black,fill=red,label=above:$X$] (x) at (1.5,0) {};
  \end{tikzpicture} = 
  \begin{tikzpicture}[baseline=-0.1cm]
    \draw (0.5,0) rectangle (3.5,-0.5);
    \foreach \x in {1,2,3}{
      \node[tensor,label=below:$A_\x$] (t\x) at (\x,0) {};
      \draw[ ] (t\x) --++ (0,0.5);
    }
    \node[tensorr,  draw=black,fill=red,label=left:$O_1(X)$] (o) at (1,0.5) {};
    \draw[  ] (o)--++(0,0.5); 
  \end{tikzpicture} = 
  \begin{tikzpicture}[baseline=-0.1cm]
    \draw (0.5,0) rectangle (3.5,-0.5);
    \foreach \x in {1,2,3}{
      \node[tensor,label=below:$A_\x$] (t\x) at (\x,0) {};
      \draw[ ] (t\x) --++ (0,0.5);
    }
    \node[tensorr,  draw=black,fill=red,label=left:$O_2(X)$] (o) at (2,0.5) {};
    \draw[ ] (o)--++(0,0.5); 
  \end{tikzpicture} \ .
\end{equation*} 
As the MPS defined by the $A$ and $B$ tensors is the same state, these physical operators also satisfy
\begin{equation*}
  \begin{tikzpicture}[baseline=-0.1cm]
    \draw (0.5,0) rectangle (3.5,-0.5);
    \foreach \x in {1,2,3}{
      \node[tensor,label=below:$A_\x$] (t\x) at (\x,0) {};
      \draw[ ] (t\x) --++ (0,0.5);
    }
    \node[tensorr,  draw=black,fill=red,label=above:$X$] (x) at (1.5,0) {};
  \end{tikzpicture} = 
  \begin{tikzpicture}[baseline=-0.1cm]
    \draw (0.5,0) rectangle (3.5,-0.5);
    \foreach \x in {1,2,3}{
      \node[tensor,label=below:$B_\x$] (t\x) at (\x,0) {};
      \draw[ ] (t\x) --++ (0,0.5);
    }
    \node[tensorr,  draw=black,fill=red,label=left:$O_1(X)$] (o) at (1,0.5) {};
    \draw[ ] (o)--++(0,0.5); 
  \end{tikzpicture} = 
  \begin{tikzpicture}[baseline=-0.1cm]
    \draw (0.5,0) rectangle (3.5,-0.5);
    \foreach \x in {1,2,3}{
      \node[tensor,label=below:$B_\x$] (t\x) at (\x,0) {};
      \draw[ ] (t\x) --++ (0,0.5);
    }
    \node[tensorr,  draw=black,fill=red,label=left:$O_2(X)$] (o) at (2,0.5) {};
    \draw[ ] (o)--++(0,0.5); 
  \end{tikzpicture} \ ,
\end{equation*} 
and thus, by \cref{eq:O->X2}, for every $X$ there is a matrix $Y$ such that 
\begin{equation*}
  \begin{tikzpicture}
    \draw (0.5,0) rectangle (3.5,-0.5);
    \foreach \x in {1,2,3}{
      \node[tensor,label=below:$A_\x$] (t\x) at (\x,0) {};
      \draw[ ] (t\x) --++ (0,0.5);
    }
    \node[tensorr,  draw=black,fill=red,label=above:$X$] (x) at (1.5,0) {};
  \end{tikzpicture} = 
  \begin{tikzpicture}
    \draw (0.5,0) rectangle (3.5,-0.5);
    \foreach \x in {1,2,3}{
      \node[tensor,label=below:$B_\x$] (t\x) at (\x,0) {};
      \draw[ ] (t\x) --++ (0,0.5);
    }
    \node[tensorr,  draw=black,fill=red,label=above:$Y$] (x) at (1.5,0) {};
  \end{tikzpicture} \ .
\end{equation*}
Due to injectivity of the $B$ tensors, the mapping $X\mapsto Y$ is uniquely defined. Due to injectivity of the $A$ tensors, it is an injective map. As the argument is symmetric with respect of the exchange of the $A$ and $B$ tensors, it also has to be surjective (for every $Y$ there is a corresponding $X$) and therefore the map $X\mapsto Y$ is a bijection (a one-to-one mapping). Moreover, it is clear from the construction that it is an algebra homomorphism, as both $X\mapsto O_1$ and $O_1\mapsto Y$ are algebra homomorphisms. Therefore the mapping $X\mapsto Y$ is an algebra isomorphism. 

Since the isomorphism is between finite dimensional spaces and $X$ (and $Y$) can be any matrix this implies that the bond dimensions on the edges of the two different tensor network are the same. Moreover the algebra is simple, full matrix algebra, so the algebra isomorphism is of the form  $X\mapsto Y=ZXZ^{-1}$ for some invertible $Z$  where $Z$ is uniquely defined (up to a multiplicative constant); this is a consequence of the Skolem-Noether Theorem.
\end{proof}

\begin{lemma}\label{lem:inj_equal_tensors}
Let $A_1,A_2$ and $B_1,B_2$ be injective MPS tensors. Suppose that for all $X$ and $Y$
  \begin{equation*}
    \begin{tikzpicture}
      \draw (0.5,0) rectangle (2.5,-0.5);
      \foreach \x in {1,2}{
        \node[tensor,label=below:$A_\x$] (t\x) at (\x,0) {};
        \draw[ ] (t\x) --++ (0,0.5);
      }
      \node[tensorr,  draw=black,fill=red,label=above:$X$] (x) at (1.5,0) {};
    \end{tikzpicture} = 
    \begin{tikzpicture}
      \draw (0.5,0) rectangle (2.5,-0.5);
      \foreach \x in {1,2}{
        \node[tensor,label=below:$B_\x$] (t\x) at (\x,0) {};
        \draw[ ] (t\x) --++ (0,0.5);
      }
      \node[tensorr,  draw=black,fill=red,label=above:$X$] (x) at (1.5,0) {};
    \end{tikzpicture} \quad \text{and} \quad
    \begin{tikzpicture}
      \draw (0.5,0) rectangle (2.5,-0.5);
      \foreach \x in {1,2}{
        \node[tensor,label=below:$A_\x$] (t\x) at (\x,0) {};
        \draw[ ] (t\x) --++ (0,0.5);
      }
      \node[tensorr,  draw=black,fill=red,label=below:$Y$] (x) at (1.5,-0.5) {};
    \end{tikzpicture} = 
    \begin{tikzpicture}
      \draw (0.5,0) rectangle (2.5,-0.5);
      \foreach \x in {1,2}{
        \node[tensor,label=below:$B_\x$] (t\x) at (\x,0) {};
        \draw[ ] (t\x) --++ (0,0.5);
      }
      \node[tensorr,  draw=black,fill=red,label=below:$Y$] (x) at (1.5,-0.5) {};
    \end{tikzpicture} \ .    
  \end{equation*}
  Then $A_1 = \lambda B_1$ and $A_2 = \lambda^{-1} B_2$ for some constant $\lambda$.
\end{lemma}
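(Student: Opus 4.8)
The plan is to translate the two diagrammatic hypotheses into matrix identities, extract from the first of them a single invertible ``gauge'' matrix sitting on the shared bond, and then use the second identity to force that matrix to be a multiple of the identity. Fixing the physical index of each tensor, write $A_k^i,B_k^i$ for the resulting matrices and contract the two-site ring. The first hypothesis (with $X$ on the bond shared by the two sites) reads $\tr[A_1^iXA_2^j]=\tr[B_1^iXB_2^j]$ for all $i,j$ and all $X$; since $X$ is arbitrary this is equivalent to $A_2^jA_1^i=B_2^jB_1^i$ for all $i,j$. The second hypothesis (with $Y$ on the other bond) gives likewise $A_1^iA_2^j=B_1^iB_2^j$ for all $i,j$. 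For the two equations to even make sense, the bond dimensions of $A_k$ must equal those of $B_k$; call them $D_b$ (inner bond) and $D_c$ (outer bond). Recall finally that injectivity of $A_k$ and $B_k$ means that $\{A_k^i\}_i$ and $\{B_k^i\}_i$ each span the full space of matrices of the appropriate rectangular shape, and consequently that $\{A_2^jA_1^i\}_{i,j}$ spans $\mathcal M_{D_b}$ and $\{A_1^iA_2^j\}_{i,j}$ spans $\mathcal M_{D_c}$.

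\textbf{Extracting the gauge.} Starting from $A_1^iA_2^j=B_1^iB_2^j$, I would contract both sides with $(B_2^{-1})^j$ and sum over $j$, where $B_2^{-1}$ is a right inverse of $B_2$ (it exists by injectivity, cf.\ \eqref{eq:injec}). On the $B$-side this collapses to $B_1^i$ times a Kronecker delta on the surviving outer-bond indices; on the $A$-side it produces $A_1^i$ contracted against the fixed object $R:=\sum_j A_2^j\otimes(B_2^{-1})^j$. Comparing the two, and using that the entries of $A_1^i$ are arbitrary as $i$ varies (injectivity of $A_1$), forces $R$ to be of the form $\delta\otimes N$ for a single matrix $N\in\mathcal M_{D_b}$, whence $B_1^i=A_1^iN$ for all $i$. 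A symmetric contraction, now against a right inverse of $B_1$ on the other pair of legs, gives an $M\in\mathcal M_{D_b}$ with $B_2^j=MA_2^j$ for all $j$. Substituting both relations into $A_1^iA_2^j=B_1^iB_2^j$ yields $A_1^i(NM-\id)A_2^j=0$ for all $i,j$; since $\{A_1^i\}_i$ and $\{A_2^j\}_j$ span their matrix spaces, this forces $NM=\id$, so $N$ is invertible and $M=N^{-1}$.

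\textbf{Making the gauge scalar.} Now substitute $B_1^i=A_1^iN$ and $B_2^j=N^{-1}A_2^j$ into the other identity $A_2^jA_1^i=B_2^jB_1^i$: it becomes $A_2^jA_1^i=N^{-1}(A_2^jA_1^i)N$, i.e.\ $N$ commutes with every product $A_2^jA_1^i$. Since these span $\mathcal M_{D_b}$, the matrix $N$ lies in the centre of $\mathcal M_{D_b}(\C)$, so $N=\lambda^{-1}\id$ for some nonzero scalar $\lambda$. Then $B_1^i=\lambda^{-1}A_1^i$ and $B_2^j=\lambda A_2^j$, equivalently $A_1=\lambda B_1$ and $A_2=\lambda^{-1}B_2$, which is the claim.

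The only genuinely delicate point is the index bookkeeping in the middle step: one must contract the right inverses on exactly the correct legs so that the $B$-side really reduces to a Kronecker delta, and keep careful track of which rectangular matrix space each spanning statement refers to (the left and right bonds generally differ in dimension). Everything else is driven by injectivity alone: the first identity already rigidifies the two tensors up to one invertible matrix on the shared bond, and the second identity promotes that matrix to a scalar.
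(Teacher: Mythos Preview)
Your argument is correct. Both proofs begin by opening the two bonds to obtain $A_2^jA_1^i=B_2^jB_1^i$ and $A_1^iA_2^j=B_1^iB_2^j$, but then diverge in how they force the gauge to be scalar. The paper applies $A_2^{-1}$ to \emph{each} of these identities separately, obtaining $A_1=B_1Z$ (with $Z$ on the right bond) from one and $A_1=WB_1$ (with $W$ on the left bond) from the other; a final application of $B_1^{-1}$ then yields $\id\otimes Z=W\otimes\id$, which immediately forces both $Z$ and $W$ to be scalar. You instead mine only the second identity---via $B_2^{-1}$ and $B_1^{-1}$, together with the factorization argument for the auxiliary tensor $R$---to extract a single invertible gauge $N$ on the inner bond, and then feed this into the \emph{first} identity to obtain a commutation relation $N(A_2^jA_1^i)=(A_2^jA_1^i)N$ against a spanning family of $\mathcal M_{D_b}$, whence $N$ is central and hence scalar. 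The paper's two-sided trick is slightly slicker in that it avoids the intermediate $R$-factorization step; your commutant formulation, on the other hand, makes the role of the algebra generated by the virtual operators more explicit, which is closer in spirit to the $G$-injective generalisation pursued later in the paper where that algebra is no longer simple.
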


\begin{proof}
  From the first equation as $X$ can be any matrix and in particular any projection to the space basis:
  \begin{equation*}
    \begin{tikzpicture}
      \draw (0.5,0)--(2.5,0);
      \node[tensor,label=below:$A_2$] (t1) at (1,0) {};
      \node[tensor,label=below:$A_1$] (t2) at (2,0) {};
      \draw[ ] (t1) --++ (0,0.5);
      \draw[ ] (t2) --++ (0,0.5);
    \end{tikzpicture} \ = \ 
    \begin{tikzpicture}
      \draw (0.5,0)--(2.5,0);
      \node[tensor,label=below:$B_2$] (t1) at (1,0) {};
      \node[tensor,label=below:$B_1$] (t2) at (2,0) {};
      \draw[ ] (t1) --++ (0,0.5);
      \draw[ ] (t2) --++ (0,0.5);
    \end{tikzpicture} \ .
  \end{equation*}
  Similarly, from the second equation,
  \begin{equation*}
    \begin{tikzpicture}
      \draw (0.5,0)--(2.5,0);
      \node[tensor,label=below:$A_1$] at (1,0) {};
      \node[tensor,label=below:$A_2$] at (2,0) {};
      \draw[ ] (t1) --++ (0,0.5);
      \draw[ ] (t2) --++ (0,0.5);
    \end{tikzpicture} \ = \ 
    \begin{tikzpicture}
      \draw (0.5,0)--(2.5,0);
      \node[tensor,label=below:$B_1$] at (1,0) {};
      \node[tensor,label=below:$B_2$] at (2,0) {};
      \draw[ ] (t1) --++ (0,0.5);
      \draw[ ] (t2) --++ (0,0.5);
    \end{tikzpicture} \ .
  \end{equation*}  
  Therefore, applying $A_2^{-1}$ to both equations, we get that
  \begin{equation*}
    \begin{tikzpicture}
      \draw (-0.5,0)--(0.5,0);
      \node[tensor,label=below:$A_1$] (a) at (0,0) {};
      \draw[ ] (a)--++(0,0.5);
    \end{tikzpicture} \ = \ 
    \begin{tikzpicture}
      \draw (-0.5,0)--(1.5,0);
      \node[tensor,label=below:$B_1$] (a) at (0,0) {};
      \draw[ ] (a)--++(0,0.5);
      \node[tensorr, draw=black,fill=red, label=below:$Z$] (x) at (1,0) {};
    \end{tikzpicture} \ = \ 
    \begin{tikzpicture}
      \draw (-0.5,0)--(1.5,0);
      \node[tensor,label=below:$B_1$] (a) at (1,0) {};
      \draw[ ] (a)--++(0,0.5);
      \node[tensorr, draw=black,fill=red, label=below:$W$] (x) at (0,0) {};
    \end{tikzpicture} \ ,
  \end{equation*}
  for some matrices $Z$ and $W$. Applying the inverse of $B_1$, we conclude that both $Z$ and $W$ are proportional to identity and hence $A_1 = \lambda B_1$. Using the same arguments we can conclude that $A_2 = \mu B_2$ for some other constant $\mu$ and $\mu = 1/\lambda$.
\end{proof}

\subsection{Injective MPS}

We now show how to use the previous lemmas for injective MPS to prove the Fundamental Theorem. This is a special case of the next section, but we present it here to explain the main ideas.

\begin{theorem}[Fundamental theorem for injective MPS]\label{thm:inj_MPS}
  Let the tensors $A_s$ and $B_s$ define two injective, non-necessarily translational invariant MPS on at least three particles. Suppose they generate the same state:
  \begin{equation*}
    \ket{\Psi} = 
    \begin{tikzpicture}
      \draw (0.5,0) rectangle (4.5,-0.5);
      \foreach \x/\t in {1/1,2/2,4/n}{
        \node[tensor,label=below:$A_\t$] (t\x) at (\x,0) {};
        \draw[ ] (t\x) --++ (0,0.5);
      }
      \node[fill=white] at (3,0) {$\dots$};
    \end{tikzpicture} = 
    \begin{tikzpicture}
      \draw (0.5,0) rectangle (4.5,-0.5);
      \foreach \x/\t in {1/1,2/2,4/n}{
        \node[tensor,label=below:$B_\t$] (t\x) at (\x,0) {};
        \draw[ ] (t\x) --++ (0,0.5);
      }
      \node[fill=white] at (3,0) {$\dots$};
    \end{tikzpicture} \ .
  \end{equation*}  
  Then there are invertible matrices $Z_s$ ($s=1,...,n+1$, $Z_{n+1}=Z_1$) such that 
    \begin{equation*}
      \begin{tikzpicture}
        \draw (-0.5,0)--(0.5,0);
        \node[tensor,label=below:$B_s$] (t) at (0,0) {};
        \draw[ ] (t)--(0,0.5);
      \end{tikzpicture}  = 
      \begin{tikzpicture}
        \draw (-1,0)--(1,0);
        \node[tensorr,draw=black,fill=red, label=below:$Z_s^{-1}$] at (-0.5,0) {};
        \node[tensorr,draw=black,fill=red,label=below:\ $Z_{s+1}$\vphantom{$Z_s^{-1}$}] at (0.5,0) {};
        \node[tensor,label=below:$A_s$\vphantom{$Z_s^{-1}$}] (t) at (0,0) {};
        \draw[ ] (t)--(0,0.5);
      \end{tikzpicture} \ .
    \end{equation*}  
    Moreover, the matrices $Z_s$ are unique up to a multiplicative constant.
\end{theorem}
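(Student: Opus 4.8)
The plan is to bootstrap from the two lemmas: first use \cref{lem:inj_isomorph} to extract a gauge matrix on every bond of the ring, then absorb these gauges into the $B$-tensors, and finally invoke \cref{lem:inj_equal_tensors} on overlapping pairs of (blocked) tensors to conclude that the gauged $B$-tensors coincide with the $A$-tensors up to scalars that telescope to $1$.

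First I would fix, for each bond $s$ of the $n$-site ring (say $s$ joins sites $s-1$ and $s$, indices mod $n$), a blocking of the ring into three consecutive groups in which the site $s-1$ sits alone and the bond $s$ is one of the three inter-group bonds; since $n\ge 3$ this is always possible, and blocking preserves both injectivity and the property of generating the same state. Applying \cref{lem:inj_isomorph} to this tripartite MPS at the bond $s$ produces an invertible $Z_s$ (unique up to a scalar) with the property that inserting any matrix $X$ on bond $s$ of the $A$-network equals inserting $Z_s^{-1}XZ_s$ on bond $s$ of the $B$-network. Inspecting the proof of \cref{lem:inj_isomorph} — the map $X\mapsto O_1(X)$ of \eqref{eq:X->O} and its inverse \eqref{eq:O->X2} only involve the tensor immediately to the left of the bond — one sees that $Z_s$ depends only on $A_{s-1}$ and $B_{s-1}$, hence is independent of how the remaining $n-1$ sites were grouped; so the assignment $s\mapsto Z_s$ is well defined.

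Next I would perform the gauge transformation $\widetilde B_s := Z_s B_s Z_{s+1}^{-1}$, with the convention $Z_{n+1}=Z_1$, so that the inserted matrices cancel on every bond and the $\widetilde B$-network generates the same state as the $B$-network, hence as the $A$-network. A short computation using the previous paragraph shows that inserting any $X$ on any single bond $s$ now gives the \emph{same} state in the $A$-network and in the $\widetilde B$-network. Then, for each $s$, I would block the ring into the two groups $\{s\}$ and $\{s+1,\dots,s-1\}$ and apply \cref{lem:inj_equal_tensors} to the resulting two-site MPS: its two bonds are the original bonds $s$ and $s+1$, and the two hypotheses of that lemma are exactly the two single-bond insertion identities just established. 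This yields constants $\lambda_s$ with $A_s=\lambda_s\widetilde B_s$ and, from the companion conclusion of the lemma applied to the large block, $\prod_t\lambda_t=1$ (alternatively this follows because the common state is nonzero, being the image under an injective map of a nonzero boundary condition). Finally I would rescale $Z_s\mapsto c_sZ_s$ with $c_1=1$ and $c_{s+1}=c_s/\lambda_s$; the relation $\prod_t\lambda_t=1$ makes this consistent with $Z_{n+1}=Z_1$, and after rescaling $A_s=\widetilde B_s=Z_sB_sZ_{s+1}^{-1}$, i.e.\ $B_s=Z_s^{-1}A_sZ_{s+1}$, which is the asserted relation. Uniqueness is the usual intertwiner argument: if $Z_s$ and $Z'_s$ both work, then $M_s:=Z'_sZ_s^{-1}$ satisfies $M_sA_sM_{s+1}^{-1}=A_s$, and contracting with the right inverse $A_s^{-1}$ of \eqref{eq:injec} forces $M_s\otimes M_{s+1}^{-1}=\id\otimes\id$, so each $M_s$ is a scalar and, propagating around the ring, one and the same scalar.

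The point needing real care is Step 1: verifying that the $Z_s$ produced by \cref{lem:inj_isomorph} is genuinely attached to the bond — independent of the auxiliary three-fold grouping — since this is what makes ``absorb the gauges into the $B$-tensors'' a coherent operation; this is precisely where one must reuse the explicit construction inside the proof of \cref{lem:inj_isomorph} rather than just its statement. Everything else is bookkeeping: blocking preserves injectivity and equality of states, and the remaining scalar ambiguities collapse because the global contraction around the ring is a fixed nonzero number.
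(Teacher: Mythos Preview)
Your proof is correct and follows essentially the same strategy as the paper: block to a tripartite injective MPS around each bond to extract $Z_s$ via \cref{lem:inj_isomorph}, absorb the gauges into $\tilde B_s$, then block to a bipartite MPS and invoke \cref{lem:inj_equal_tensors} to get $A_s=\lambda_s\tilde B_s$ with $\prod_s\lambda_s=1$. Your emphasis on the well-definedness of $Z_s$ under different blockings is unnecessary (one simply fixes a single blocking per bond, and the insertion identity from \cref{lem:inj_isomorph} is a statement about the full $n$-site state regardless), and your claim that $Z_s$ depends only on $A_{s-1},B_{s-1}$ is not quite right---but neither point affects the validity of the argument.
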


\begin{proof}
  First let us choose any edge, for example the edge between sites $1$ and $2$. Let us block the tensors $A_3,\dots, A_n$ (and $B_3,\dots ,B_n$) into one tensor:
  \begin{align*}
    \begin{tikzpicture}
      \draw (0.5,0) -- (1.5,0);
      \node[tensor,label=below:$a$] (t) at (1,0) {};
      \draw[ ] (t) --++ (0,0.5);
    \end{tikzpicture} &= 
    \begin{tikzpicture}
      \draw (2.5,0) -- (6.5,0);
      \foreach \x/\t in {3/3,4/4,6/n}{
        \node[tensor,label=below:$A_\t$] (t\x) at (\x,0) {};
        \draw[ ] (t\x) --++ (0,0.5);
      }
      \node[fill=white] at (5,0) {$\dots$};
    \end{tikzpicture} \; , \\
    \begin{tikzpicture}
      \draw (0.5,0) -- (1.5,0);
      \node[tensor,label=below:$b$] (t) at (1,0) {};
      \draw[ ] (t) --++ (0,0.5);
    \end{tikzpicture} &= 
    \begin{tikzpicture}
      \draw (2.5,0) -- (6.5,0);
      \foreach \x/\t in {3/3,4/4,6/n}{
        \node[tensor,label=below:$B_\t$] (t\x) at (\x,0) {};
        \draw[ ] (t\x) --++ (0,0.5);
      }
      \node[fill=white] at (5,0) {$\dots$};
    \end{tikzpicture} \ .
  \end{align*}
  As injectivity is preserved under blocking, both $a$ and $b$ are injective tensors. With this notation, the MPS can be written as a non-translational invariant MPS on three sites:
  \begin{equation*}
    \ket{\Psi} = 
    \begin{tikzpicture}
      \draw (0.5,0) rectangle (3.5,-0.5);
      \foreach \x/\t in {1/A_1,2/A_2,3/a}{
        \node[tensor,label=below:$\t$] (t\x) at (\x,0) {};
        \draw[ ] (t\x) --++ (0,0.5);
      }
    \end{tikzpicture} = 
    \begin{tikzpicture}
      \draw (0.5,0) rectangle (3.5,-0.5);
      \foreach \x/\t in {1/B_1,2/B_2,3/b}{
        \node[tensor,label=below:$\t$] (t\x) at (\x,0) {};
        \draw[ ] (t\x) --++ (0,0.5);
      }
    \end{tikzpicture} \ .
  \end{equation*}    
Therefore \cref{lem:inj_isomorph} can be applied leading to a gauge transformation realized by the invertible matrix $Z_2$ on the edge between sites $1$ and $2$ that, for all $X$ with  $Y = Z_2^{-1} X Z_2$, satisfies
  \begin{equation*}
    \begin{tikzpicture}
      \draw (0.5,0) rectangle (3.5,-0.5);
      \foreach \x/\t in {1/$A_1$,2/$A_2$,3/$a$}{
        \node[tensor,label=below:\t] (t\x) at (\x,0) {};
        \draw[ ] (t\x) --++ (0,0.5);
      }
      \node[tensorr,  draw=black,fill=red,label=above:$X$] (x) at (1.5,0) {};
    \end{tikzpicture} = 
    \begin{tikzpicture}
      \draw (0.5,0) rectangle (3.5,-0.5);
      \foreach \x/\t in {1/$B_1$,2/$B_2$,3/$b$}{
        \node[tensor,label=below:\t] (t\x) at (\x,0) {};
        \draw[ ] (t\x) --++ (0,0.5);
      }
      \node[tensorr,  draw=black,fill=red,label=above:$Y$] (x) at (1.5,0) {};
    \end{tikzpicture} \ .
  \end{equation*}
The lemma can be applied to all edges leading to gauge transformations realized by the matrices $Z_s$ on the edge between sites $s-1$ and $s$. After incorporating these gauges into the tensors $B_s$, to define the new tensors $\tilde{B_s}$
  \begin{equation}\label{eq:B tilde}
    \begin{tikzpicture}
      \draw (-0.5,0)--(0.5,0);
      \node[tensor,label=below:$\tilde{B}_s$] (t) at (0,0) {};
      \draw[ ] (t)--(0,0.5);
    \end{tikzpicture}  = 
    \begin{tikzpicture}
      \draw (-1,0)--(1,0);
      \node[tensorr,draw=black,fill=red,label=below:$Z_s$\vphantom{$Z_{i+1}^{-1}$}] at (-0.5,0) {};
      \node[tensorr,draw=black,fill=red, label=below:\ $Z_{i+1}^{-1}$] at (0.5,0) {};
      \node[tensor,label=below:$B_s$\vphantom{$Z_{s+1}^{-1}$}] (t) at (0,0) {};
      \draw[ ] (t)--(0,0.5);
    \end{tikzpicture} \ ,
  \end{equation}  
  we arrive at two MPS with the property that on every bond for every matrix $X$
  \begin{equation*} 
    \begin{tikzpicture}
      \draw (0.5,0) rectangle (4.5,-0.6);
      \foreach \x/\t in {1/1,2/2,4/n}{
        \node[tensor,label=below:$A_\t$] (t\x) at (\x,0) {};
        \draw[ ] (t\x) --++ (0,0.5);
      }
      \node[fill=white] at (3,0) {$\dots$};
      \node[tensorr,  draw=black,fill=red,label=above:$X$] (x) at (1.5,0) {};
    \end{tikzpicture} = 
    \begin{tikzpicture}
      \draw (0.5,0) rectangle (4.5,-0.6);
      \foreach \x/\t in {1/1,2/2,4/n}{
        \node[tensor,label=below:$\tilde{B}_\t$] (t\x) at (\x,0) {};
        \draw[ ] (t\x) --++ (0,0.5);
      }
      \node[fill=white] at (3,0) {$\dots$};
      \node[tensorr,  draw=black,fill=red,label=above:$X$] (x) at (1.5,0) {};
    \end{tikzpicture} \ .
  \end{equation*}
  In particular,
  \begin{equation*} 
    \begin{tikzpicture}
      \draw (0.5,0) rectangle (4.5,-0.6);
      \foreach \x/\t in {1/1,2/2,4/n}{
        \node[tensor,label=below:$A_\t$] (t\x) at (\x,0) {};
        \draw[ ] (t\x) --++ (0,0.5);
      }
      \node[fill=white] at (3,0) {$\dots$};
      \node[tensorr,  draw=black,fill=red,label=below:$Y$] (x) at (1.5,-0.6) {};
    \end{tikzpicture} = 
    \begin{tikzpicture}
      \draw (0.5,0) rectangle (4.5,-0.6);
      \foreach \x/\t in {1/1,2/2,4/n}{
        \node[tensor,label=below:$\tilde{B}_\t$] (t\x) at (\x,0) {};
        \draw[ ] (t\x) --++ (0,0.5);
      }
      \node[fill=white] at (3,0) {$\dots$};
      \node[tensorr,  draw=black,fill=red,label=below:$Y$] (x) at (1.5,-0.6) {};
    \end{tikzpicture} \ .
  \end{equation*}
  Let us now block the MPS into a bipartite MPS:
  \begin{equation*}
    \ket{\Psi} = 
    \begin{tikzpicture}
      \draw (0.5,0) rectangle (2.5,-0.5);
      \foreach \x/\t in {1/A_1,2/a}{
        \node[tensor,label=below:$\t$] (t\x) at (\x,0) {};
        \draw[ ] (t\x) --++ (0,0.5);
      }
    \end{tikzpicture} = 
    \begin{tikzpicture}
      \draw (0.5,0) rectangle (2.5,-0.5);
      \foreach \x/\t in {1/\tilde{B}_1,2/b}{
        \node[tensor,label=below:$\t$] (t\x) at (\x,0) {};
        \draw[ ] (t\x) --++ (0,0.5);
      }
    \end{tikzpicture}\ ,  
  \end{equation*}
  with 
  \begin{align*} 
    \begin{tikzpicture}
      \node[tensor,label=below:$a$] (t) at (1,0) {};
      \draw[ ] (t) --++ (0,0.5);
      \draw (0.5,0)--(1.5,0);
    \end{tikzpicture} & = 
    \begin{tikzpicture}
      \draw (1.5,0) -- (4.5,0);
      \foreach \x/\t in {2/2,4/n}{
        \node[tensor,label=below:$A_\t$] (t\x) at (\x,0) {};
        \draw[ ] (t\x) --++ (0,0.5);
      }
      \node[fill=white] at (3,0) {$\dots$};
    \end{tikzpicture} \\
    \begin{tikzpicture}
      \node[tensor,label=below:$b$] (t) at (1,0) {};
      \draw[ ] (t) --++ (0,0.5);
      \draw (0.5,0)--(1.5,0);
    \end{tikzpicture} & = 
    \begin{tikzpicture}
      \draw (1.5,0) -- (4.5,0);
      \foreach \x/\t in {2/2,4/n}{
        \node[tensor,label=below:$\tilde{B}_\t$] (t\x) at (\x,0) {};
        \draw[ ] (t\x) --++ (0,0.5);
      }
      \node[fill=white] at (3,0) {$\dots$};
    \end{tikzpicture} \ . 
  \end{align*}  
  After this blocking, the requirements of \cref{lem:inj_equal_tensors} are satisfied, therefore $A_1 = \lambda_1 \tilde{B}_1$. Using the same argument for all sites $s$,  $A_s = \lambda_s \tilde{B}_s$ and $\prod_s \lambda_s = 1$. Notice that these $\lambda_s$ can be sequentially absorbed in the matrices $Z_s$ in \cref{eq:B tilde}. 
\end{proof}

Notice that if the two MPS are translationally invariant, i.e. the tensors at each site are the same, then the gauges relating them are also translationally invariant (up to a constant), as 
\begin{equation*}
  \begin{tikzpicture}
    \draw (-1,0)--(1,0);
    \node[tensorr,draw=black,fill=red,label=below:$Z_{s-1}^{-1}$] at (-0.5,0) {};
    \node[tensorr,draw=black,fill=red,label=below:$Z_{s}$\vphantom{$Z_{s-1}^{-1}$}] at (0.5,0) {};
    \node[tensor,label=below:$A$\vphantom{$Z_{s-1}^{-1}$}] (t) at (0,0) {};
    \draw[ ] (t)--(0,0.5);
  \end{tikzpicture} \ =
  \begin{tikzpicture}
    \draw (-1,0)--(1,0);
    \node[tensorr,draw=black,fill=red,label=below:$Z_s^{-1}$] at (-0.5,0) {};
    \node[tensorr,draw=black,fill=red,label=below:$Z_{s+1}$\vphantom{$Z_s^{-1}$}] at (0.5,0) {};
    \node[tensor,label=below:$A$\vphantom{$Z_s^{-1}$}] (t) at (0,0) {};
    \draw[ ] (t)--(0,0.5);
  \end{tikzpicture} \ \Rightarrow \ Z_s \propto Z_{s+1},
\end{equation*}  
which can be seen by inverting the tensor $A$. We conclude therefore that
\begin{corollary}
  Let the tensors $A$ and $B$ define two injective, translationally invariant MPS on $n\geq 3$ sites. Suppose they generate the same state:
  \begin{equation*}
    \ket{\Psi} = 
    \begin{tikzpicture}
      \draw (0.5,0) rectangle (4.5,-0.5);
      \foreach \x/\t in {1/1,2/2,4/n}{
        \node[tensor,label=below:$A$] (t\x) at (\x,0) {};
        \draw[ ] (t\x) --++ (0,0.5);
      }
      \node[fill=white] at (3,0) {$\dots$};
    \end{tikzpicture} = 
    \begin{tikzpicture}
      \draw (0.5,0) rectangle (4.5,-0.5);
      \foreach \x/\t in {1/1,2/2,4/n}{
        \node[tensor,label=below:$B$] (t\x) at (\x,0) {};
        \draw[ ] (t\x) --++ (0,0.5);
      }
      \node[fill=white] at (3,0) {$\dots$};
    \end{tikzpicture} \ .
  \end{equation*}  
  Then there is an invertible matrix $Z$ and a constant $\lambda\in\mathbb{C}$, $\lambda^n = 1$, such that 
    \begin{equation*}
      \begin{tikzpicture}
        \draw (-0.5,0)--(0.5,0);
        \node[tensor,label=below:$B$] (t) at (0,0) {};
        \draw[ ] (t)--(0,0.5);
      \end{tikzpicture}  =  \lambda \cdot\ 
      \begin{tikzpicture}
        \draw (-1,0)--(1,0);
        \node[tensorr,draw=black,fill=red,label=below:$Z^{-1}$] at (-0.5,0) {};
        \node[tensorr,draw=black,fill=red,label=below:$Z$\vphantom{$Z^{-1}$}] at (0.5,0) {};
        \node[tensor,label=below:$A$\vphantom{$Z^{-1}$}] (t) at (0,0) {};
        \draw[ ] (t)--(0,0.5);
      \end{tikzpicture} \ .
    \end{equation*}    
Moreover, the matrix $Z$ is unique up to a multiplicative constant.    
\end{corollary}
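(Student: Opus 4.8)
The plan is to deduce the corollary from \cref{thm:inj_MPS} by showing that translational invariance forces the site-dependent gauges produced there to collapse to a single matrix (up to scalars). First I would apply \cref{thm:inj_MPS} verbatim: since $A$ and $B$ are injective and generate the same state on $n\geq 3$ sites, there are invertible matrices $Z_1,\dots,Z_{n+1}$ with $Z_{n+1}=Z_1$ such that on every site $s$ one has $B = Z_s^{-1} A Z_{s+1}$ (here $A_s\equiv A$, $B_s\equiv B$, and the constants $\lambda_s$ of \cref{thm:inj_MPS} have already been absorbed into the $Z_s$, exactly as noted in its proof). The only thing left is to control how the $Z_s$ depend on $s$.

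Second, I would compare the relation at two consecutive bonds. Since the left-hand side is always $B$, we get $Z_{s-1}^{-1} A Z_s = Z_s^{-1} A Z_{s+1}$ for every $s$ (indices mod $n$). Applying the right inverse $A^{-1}$ of the injective tensor $A$, using \eqref{eq:injec}, turns each side into an operator on the two virtual spaces: the left-hand side becomes $Z_{s-1}^{-1}$ on the left leg and $Z_s$ (up to a transpose, coming from the clockwise orientation of the loop as in the text around \eqref{eq:X->O}) on the right leg, and similarly the right-hand side gives $Z_s^{-1}$ and $Z_{s+1}$. Thus we obtain an equality of tensor products $M_1\otimes N_1 = M_2\otimes N_2$ with all four matrices invertible, and uniqueness of the tensor-product decomposition forces $M_1\propto M_2$, i.e. $Z_{s-1}\propto Z_s$ for all $s$. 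This is precisely the observation recorded in the remark just before the corollary. Writing $Z:=Z_1$, we conclude $Z_s = \mu_s Z$ for scalars $\mu_s$ with $\mu_1=1$.

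Third, I would substitute back: $B = Z_s^{-1} A Z_{s+1} = (\mu_{s+1}/\mu_s)\,Z^{-1} A Z$ for every $s$. Since the left-hand side is independent of $s$ and $Z^{-1} A Z$ is a fixed nonzero tensor (because $A$ is injective, hence nonzero, and $Z$ is invertible), the ratio $\lambda := \mu_{s+1}/\mu_s$ is independent of $s$, so $\mu_s = \lambda^{s-1}$. The cyclic condition $Z_{n+1}=Z_1$ gives $\mu_{n+1}=\mu_1=1$, hence $\lambda^n=1$, and $B=\lambda\,Z^{-1}AZ$, which is the claimed form. For uniqueness, if also $B=\lambda' W^{-1}AW$, then $(WZ^{-1})\,A\,(ZW^{-1}) = (\lambda'/\lambda)\,A$; applying $A^{-1}$ once more (as above) forces $ZW^{-1}$ to be a scalar multiple of the identity and $\lambda=\lambda'$, so $Z$ is determined up to a multiplicative constant — alternatively this follows at once from the uniqueness clause of \cref{thm:inj_MPS}.

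The main obstacle is the second step: one must make sure that ``inverting $A$'' is legitimate — it is, by injectivity via \eqref{eq:injec} — and that the transpose appearing on one virtual leg, inherited from the orientation convention for the loops, does not spoil the argument. It does not, because transposition is a linear bijection on matrices and therefore preserves all scalar proportionality relations; everything else is routine bookkeeping with the scalars $\mu_s$ around the ring.
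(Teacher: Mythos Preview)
Your proof is correct and follows essentially the same approach as the paper: apply \cref{thm:inj_MPS} to obtain site-dependent gauges $Z_s$, equate the relations at consecutive sites to get $Z_{s-1}^{-1}AZ_s=Z_s^{-1}AZ_{s+1}$, and invert the injective tensor $A$ to conclude $Z_s\propto Z_{s+1}$. The paper states this more tersely (essentially just the displayed implication before the corollary), while you have spelled out the scalar bookkeeping around the ring and the uniqueness more explicitly, but the underlying argument is identical.
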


\subsection{Injective PEPS}\label{sec:injPEPS}
In this section we will work with injective TNS placed in arbitrary lattices. We say that the tensor network is \emph{injective} if all tensors interpreted as maps from the virtual space to the physical space are injective. This is just the generalization of the MPS definition of Eq.\eqref{eq:injec1}. Again, injectivity is equivalent to the tensor having an inverse, as in the MPS case -Eq.\eqref{eq:injec}- or the square lattice PEPS case -Eq.\eqref{eq:PEPSinj}-. Similarly, the contraction of two injective tensors results in an injective tensor.
 
An example of a PEPS placed on a non-square lattice is the following:
\begin{equation*}
  \begin{tikzpicture}
  \node[tensor] (c1) at (0,0) {};
    \node[tensor] (c2) at (1,0) {};
    \node[tensor] (c3) at (1.5,1) {}; 
    \node[tensor] (c4) at (0.6,1.5) {};
    \node[tensor] (c5) at (-0.6,1.5) {};
    \foreach \x in {1,2,3,4,5}{
      \draw[ ] (c\x) --++(0,0.5);
    }
    \draw (c1)--(c2)--(c3)--(c4)--(c5)--(c1);
    \draw (c2)--(c5);
  \end{tikzpicture} \ .
\end{equation*}

One can group sites of the PEPS together treating the corresponding tensors as one bigger tensor. This regrouping can naturally be reflected in PEPS to block tensor networks to a three site MPS as follows. Choose one edge of the PEPS and group together all vertices except the two that connects the chosen edge. This regrouped tensor together with the two endpoints of the chosen edge forms a three-site MPS. 

For the example above we can do the following:  
\begin{equation}\label{eq:block_to_mps}
  \begin{tikzpicture}
    \node[tensor] (c1) at (0,0) {};
    \node[tensor] (c2) at (1,0) {};
    \node[tensor] (c3) at (1.5,1) {}; 
    \node[tensor] (c4) at (0.6,1.5) {};
    \node[tensor] (c5) at (-0.6,1.5) {};
    \foreach \x in {1,2,3,4,5}{
      \draw[ ] (c\x) --++(0,0.5);
    }
    \draw (c1)--(c2)--(c3)--(c4)--(c5)--(c1);
    \draw (c2)--(c5);
%
    \draw[red] (c1) circle (0.3cm);
    \draw[red] (c5) circle (0.3cm);
    \draw[rounded corners=2mm,red] (0.4,-0.2) rectangle (1.7,1.7);
    \node[red] at (-0.45,-0.35) {$A'_2$};
    \node[red] at (-1.1,1.8) {$A'_1$};
    \node[red] at (1.7,1.8) {$A'_3$};
  \end{tikzpicture} \ \Rightarrow \ 
  \begin{tikzpicture}
    \draw (0.5,0) rectangle (3.5,-0.6);
    \foreach \x in {1,2,3}{
      \node[tensor,red, label=below:$A'_\x$] (t\x) at (\x,0) {};
      \draw[ ] (t\x) --++ (0,0.5);
    }
  \end{tikzpicture}  \,
\end{equation}
where the resulting MPS is injective. We consider now two injective PEPS defined on the same graph that generate the same state,
\begin{equation}\label{eq:TN_5_particle_eq}
  \begin{tikzpicture}
    \node[tensor, label=below:$A_1$] (c1) at (0,0) {};
    \node[tensor, label=below:$A_2$] (c2) at (1,0) {};
    \node[tensor, label=below right:$A_3$] (c3) at (1.5,1) {}; 
    \node[tensor, label=below:$A_4$] (c4) at (0.6,1.5) {};
    \node[tensor, label=below left:$A_5$] (c5) at (-0.6,1.5) {};
    \foreach \x in {1,2,3,4,5}{
      \draw[ ] (c\x) --++(0,0.5);
    }
    \draw (c1)--(c2)--(c3)--(c4)--(c5)--(c1);
    \draw (c2)--(c5);
  \end{tikzpicture} \ = \ 
  \begin{tikzpicture}
    \node[tensor, label=below:$B_1$] (c1) at (0,0) {};
    \node[tensor, label=below:$B_2$] (c2) at (1,0) {};
    \node[tensor, label=below right:$B_3$] (c3) at (1.5,1) {};
    \node[tensor, label=below:$B_4$] (c4) at (0.6,1.5) {};
    \node[tensor, label=below left:$B_5$] (c5) at (-0.6,1.5) {};
    \foreach \x in {1,2,3,4,5}{
      \draw[ ] (c\x) --++(0,0.5);
    }
    \draw (c1)--(c2)--(c3)--(c4)--(c5)--(c1);
    \draw (c2)--(c5);
  \end{tikzpicture} \ .
\end{equation}

After blocking to a three-site MPS as described above, we arrive at two injective MPS generating the same state; hence \cref{lem:inj_isomorph} can be applied. This establishes a gauge transformation on the edge between sites $1$ and $5$ of the original PEPS. Similar regrouping can be done around every edge; applying then \cref{lem:inj_isomorph} results in a gauge transformation assigned to every edge. Define now the tensors $\tilde{B}_s$ by absorbing the matrices corresponding to these gauge transformations into the tensors $B_s$. For the resulting PEPS, we have that for every edge and matrix $X$ the following is satisfied
\begin{equation}\label{eq:inj_equal_edge}
  \begin{tikzpicture}
    \node[tensor, label=below:$A_1$] (c1) at (0,0) {};
    \node[tensor, label=below:$A_2$] (c2) at (1,0) {};
    \node[tensor, label=below right:$A_3$] (c3) at (1.5,1) {};
    \node[tensor, label=below:$A_4$] (c4) at (0.6,1.5) {};
    \node[tensor, label=below left:$A_5$] (c5) at (-0.6,1.5) {};
    \foreach \x in {1,2,3,4,5}{
      \draw[ ] (c\x) --++(0,0.5);
    }
    \draw (c1)--(c2)--(c3)--(c4)--(c5)--(c1);
    \draw (c2)--(c5);
    \node[draw=black,fill=red,tensorr, label=right:$X$] at ($(c2)!0.5!(c5)$) {};
  \end{tikzpicture} \ = \ 
  \begin{tikzpicture}
    \node[tensor, label=below:$\tilde{B}_1$] (c1) at (0,0) {};
    \node[tensor, label=below:$\tilde{B}_2$] (c2) at (1,0) {};
    \node[tensor, label=below right:$\tilde{B}_3$] (c3) at (1.5,1) {};
    \node[tensor, label=below:$\tilde{B}_4$] (c4) at (0.6,1.5) {};
    \node[tensor, label=below left:$\tilde{B}_5$] (c5) at (-0.6,1.5) {};
    \foreach \x in {1,2,3,4,5}{
      \draw[ ] (c\x) --++(0,0.5);
    }
    \draw (c1)--(c2)--(c3)--(c4)--(c5)--(c1);
    \draw (c2)--(c5);
    \node[draw=black,fill=red,tensorr, label=right:$X$] at ($(c2)!0.5!(c5)$) {};
  \end{tikzpicture} \ .
\end{equation}
To conclude that $A_s = \lambda_s \tilde{B}_s$ as in the previous section, we will need to use a more general version of \cref{lem:inj_equal_tensors}:
\begin{lemma}\label{lem:inj_equal_tensors_2}
  Let $A_1,A_2$ and $B_1,B_2$ be injective tensors. Suppose that for all $X$ and for all edges the following is true
  \begin{equation}\label{eq:lem_inj_eq_ten_2}
    \begin{tikzpicture}
      \foreach \x/\t in {1/1,3/2}{
        \node[tensor,label=below:$A_\t$] (t\t) at (\x,0) {};
        \draw[ ] (t\x) --++ (0,0.5);
      }
      \draw (t1) to[bend left=30] node[midway,tensorr, draw=black,fill=red,label=above:$X$] {} (t2);
      \draw (t1) to[bend right=30] (t2);
      \draw (t1) to[bend right=45] (t2);
      \node at (2,0.05) {$\vdots$};
    \end{tikzpicture} = 
    \begin{tikzpicture}
      \foreach \x/\t in {1/1,3/2}{
        \node[tensor,label=below:$B_\t$] (t\t) at (\x,0) {};
        \draw[ ] (t\x) --++ (0,0.5);
      }
      \draw (t1) to[bend left=30] node[midway,tensorr, draw=black,fill=red,label=above:$X$] {} (t2);
      \draw (t1) to[bend right=30] (t2);
      \draw (t1) to[bend right=45] (t2);
      \node at (2,0.05) {$\vdots$};
    \end{tikzpicture}  \ .    
  \end{equation}
  Then $A_1 = \lambda B_1$ and $A_2 = \lambda^{-1} B_2$ for some constant $\lambda$.  
\end{lemma}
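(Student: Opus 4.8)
The plan is to translate the hypothesis into a statement about the algebra of matrices acting on the composite bond space $V=\bigotimes_{l=1}^{k}V_{l}$ (with $V_l$ the space on the $l$-th of the $k$ edges joining the two sites), and similarly $V'=\bigotimes_l V'_l$ for the $B$-tensors. First I would insert $X=\id$ on any one edge in \cref{eq:lem_inj_eq_ten_2}: this collapses the left-hand side to the complete contraction of $A_1$ with $A_2$ over all $k$ edges, and the right-hand side to that of $B_1$ with $B_2$, so these two two-legged tensors coincide. Reading $A_1,B_1$ as maps $V,V'\to\mathcal{H}_1$ and $A_2,B_2$ as maps $V,V'\to\mathcal{H}_2$, injectivity means each has full column rank. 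Comparing ranks of the two equal contractions already forces $\dim V=\dim V'$, and the per-edge dimensions agree because the same matrix $X$ has to fit on a given edge in both networks; so one may identify $V\cong V'$ edge by edge. Composing the equality of complete contractions with $A_1^{-1}$ on the first physical leg (using $A_1^{-1}A_1=\id_V$ on the $A$-side) then shows that $A_2$ equals $B_2$ with an invertible matrix $R\colon V\to V$ inserted on its bonds; symmetrically, composing with $A_2^{-1}$ shows $A_1$ equals $B_1$ with an invertible $S\colon V\to V$ on its bonds. (Invertibility of $S$ and $R$ follows from the full-column-rank property once $\dim V=\dim V'$ is known.)

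Next I would substitute $A_1=B_1\circ S$ and $A_2=B_2\circ R$ back into \cref{eq:lem_inj_eq_ten_2} with a general $X$ on a fixed edge $e_l$, and strip off $B_1$ on one side and $B_2$ on the other by injectivity — exactly the cancellation used in the proof of \cref{lem:inj_equal_tensors}. This leaves the purely algebraic identity
\[
 S\,\bigl(\id_{V_1}\otimes\cdots\otimes X\otimes\cdots\otimes\id_{V_k}\bigr)\,R=\id_{V_1}\otimes\cdots\otimes X\otimes\cdots\otimes\id_{V_k},
\]
valid for every edge $e_l$ and every matrix $X$ acting on $V_l$ (possibly with a transpose on $R$ depending on index conventions, which does not affect what follows). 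Setting $X=\id$ gives $SR=\id_V$, hence $R=S^{-1}$, and the identity becomes $S\,(\id\otimes\cdots\otimes X\otimes\cdots)\,S^{-1}=\id\otimes\cdots\otimes X\otimes\cdots$ for all $X$ and all $l$; that is, $S$ commutes with every subalgebra $\id^{\otimes(l-1)}\otimes\mathcal{M}(V_l)\otimes\id^{\otimes(k-l)}$. Since such subalgebras jointly generate the full matrix algebra $\mathcal{M}(V)$ (products of the elementary operators $X_1\otimes\id\otimes\cdots$ and $\id\otimes X_2\otimes\id\otimes\cdots$ yield $X_1\otimes X_2\otimes\id\otimes\cdots$, and so on), $S$ lies in the centre of $\mathcal{M}(V)$, so $S=\lambda\,\id_V$ for a scalar $\lambda$. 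Then $R=\lambda^{-1}\id_V$, and unwinding the two gauge relations gives $A_1=\lambda B_1$ and $A_2=\lambda^{-1}B_2$, as claimed.

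The only genuinely delicate point — and the place where the \emph{multi-edge} form of the hypothesis is actually needed — is the final step: it is not enough that $S$ commute with each \emph{single-edge} matrix algebra; one must notice that products of these operators exhaust $\mathcal{M}(V_1)\otimes\cdots\otimes\mathcal{M}(V_k)=\mathcal{M}(V)$, which forces $S$ to be scalar. This is also why one cannot just merge all $k$ edges into a single composite bond and quote \cref{lem:inj_equal_tensors}: that lemma deforms two independent bonds, whereas on one lumped bond the deformations available here only span the (non-generating) sum of the single-edge algebras. A more computational alternative would mimic \cref{lem:inj_equal_tensors} directly via rank-one insertions $X=\ket{\alpha}\bra{\beta}$ on a chosen edge, but one then has to carry the remaining free bonds through the argument, which is messier than the gauge/commutant route above.
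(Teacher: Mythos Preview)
Your proof is correct. The route differs from the paper's in how you reach the key structural fact about the gauge $S=B_1^{-1}A_1$. The paper \emph{cuts} each edge in turn (varying $X$ over rank-one operators on that edge to open it), applies $A_2^{-1}$, and obtains $A_1=B_1\cdot(\id_l\otimes Z_{(l)})$ for each $l$; inverting $B_1$ then gives a single operator that is simultaneously of the form $\id_l\otimes Z_{(l)}$ for every $l$, which forces it to be scalar by a direct tensor-factor comparison. You instead extract $S$ and $R$ from the fully contracted equation, substitute back into the deformed equation, and strip $B_1\otimes B_2$ to get $S X_l R^T=X_l$; setting $X=\id$ yields $R^T=S^{-1}$ and hence $[S,X_l]=0$ for all single-edge $X_l$, so $S$ lies in the centre of $\mathcal{M}(V)$. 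The two arguments are equivalent at the level of information extracted (``$S$ acts as $\id_l\otimes Z$ for every $l$'' is the commutant statement in disguise), but yours avoids the edge-cutting bookkeeping and phrases the obstruction cleanly as a centrality condition, while the paper's version is more pictorial and keeps the argument parallel to the two-edge \cref{lem:inj_equal_tensors}.
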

\begin{proof}
  W.l.o.g.\ suppose that there are three lines connecting the tensors. Similar to the proof of \cref{lem:inj_equal_tensors}, if \cref{eq:lem_inj_eq_ten_2} holds for all $X$, then
  \begin{align*}
    \begin{tikzpicture}
      \foreach \x/\t in {1/1,3/2}{
        \node[tensor,label=below:$A_\t$] (t\t) at (\x,0) {};
        \draw[ ] (t\x) --++ (0,0.5);
      }
      \draw (t1) to[bend left=30]  (t2);
      \draw (t1) to[bend right=10] (t2);
      \draw (t1) to[bend right=45] (t2);
      \fill[white] (1.3,0.1) rectangle (2.7,0.4);
    \end{tikzpicture} &= 
    \begin{tikzpicture}
      \foreach \x/\t in {1/1,3/2}{
        \node[tensor,label=below:$B_\t$] (t\t) at (\x,0) {};
        \draw[ ] (t\x) --++ (0,0.5);
      }
      \draw (t1) to[bend left=30]  (t2);
      \draw (t1) to[bend right=10] (t2);
      \draw (t1) to[bend right=45] (t2);
      \fill[white] (1.3,0.1) rectangle (2.7,0.4);
    \end{tikzpicture}  \\
    \begin{tikzpicture}
      \foreach \x/\t in {1/1,3/2}{
        \node[tensor,label=below:$A_\t$] (t\t) at (\x,0) {};
        \draw[ ] (t\x) --++ (0,0.5);
      }
      \draw (t1) to[bend left=30]  (t2);
      \draw (t1) to[bend right=10] (t2);
      \draw (t1) to[bend right=45] (t2);
      \fill[white] (1.3,-0.15) rectangle (2.7,-0.5);
    \end{tikzpicture} &= 
    \begin{tikzpicture}
      \foreach \x/\t in {1/1,3/2}{
        \node[tensor,label=below:$B_\t$] (t\t) at (\x,0) {};
        \draw[ ] (t\x) --++ (0,0.5);
      }
      \draw (t1) to[bend left=30]  (t2);
      \draw (t1) to[bend right=10] (t2);
      \draw (t1) to[bend right=45] (t2);
      \fill[white] (1.3,-0.15) rectangle (2.7,-0.5);
    \end{tikzpicture}  \\ 
    \begin{tikzpicture}
      \foreach \x/\t in {1/1,3/2}{
        \node[tensor,label=below:$A_\t$] (t\t) at (\x,0) {};
        \draw[ ] (t\x) --++ (0,0.5);
      }
      \draw (t1) to[bend left=30]  (t2);
      \draw (t1) to[bend right=10] (t2);
      \draw (t1) to[bend right=45] (t2);
      \fill[white] (1.3,-0.15) rectangle (2.7,0.1);
    \end{tikzpicture} &= 
    \begin{tikzpicture}
      \foreach \x/\t in {1/1,3/2}{
        \node[tensor,label=below:$B_\t$] (t\t) at (\x,0) {};
        \draw[ ] (t\x) --++ (0,0.5);
      }
      \draw (t1) to[bend left=30]  (t2);
      \draw (t1) to[bend right=10] (t2);
      \draw (t1) to[bend right=45] (t2);
      \fill[white] (1.3,-0.15) rectangle (2.7,0.1);
    \end{tikzpicture}  \ .              
  \end{align*} 
  Applying now the inverse of $A_2$, we conclude that 
  \begin{equation*}
    \begin{tikzpicture}
      \node[tensor, label=below:$A_1$] (a) {};
      \draw[ ] (a) --++ (0,0.5);
      \draw (a) -- (30:0.8);
      \draw (a) -- (-10:0.8);
      \draw (a) -- (-45:0.8);
    \end{tikzpicture} = 
    \begin{tikzpicture}
      \node[tensor, label=below:$B_1$] (a) {};
      \draw[ ] (a) --++ (0,0.5);
      \draw (a) -- (30:0.8);
      \draw (a) -- (-10:0.8);
      \draw (a) -- (-45:0.8);
       \draw[black, line width=1.2mm] (-57:0.6) arc (-57:2:0.6);
      \draw[red, line width=1mm] (-55:0.6) arc (-55:0:0.6);
      \node at (-27.5:0.9) {$Z$};
    \end{tikzpicture} = 
    \begin{tikzpicture}
      \node[tensor, label=below:$B_1$] (a) {};
      \draw[ ] (a) --++ (0,0.5);
      \draw (a) -- (30:0.8);
      \draw (a) -- (-10:0.8);
      \draw (a) -- (-45:0.8);
      \draw[red, line width=1mm] (-20:0.6) arc (-20:40:0.6);
      \node at (10:0.9) {$U$};
    \end{tikzpicture} = 
    \begin{tikzpicture}
      \node[tensor, label=below:$B_1$] (a) {};
      \draw[ ] (a) --++ (0,0.5);
      \draw (a) -- (30:0.8);
      \draw (a) -- (-10:0.4);
      \draw (a) -- (-45:0.8);
      \draw[red, line width=1mm] (-55:0.6) arc (-55:40:0.6);
      \node at (-7.5:0.9) {$W$};
    \end{tikzpicture} \ . 
  \end{equation*}
  Inverting $B_1$ we arrive at the following equations that satisfy the matrices $Z,U,W$: 
  \begin{equation*}
    \sum_j \id \otimes Z^{(1)}_j \otimes Z^{(2)}_j= \sum_j  U^{(1)}_j \otimes U^{(2)}_j \otimes \id = \sum_j  W^{(1)}_j \otimes \id  \otimes W^{(2)}_j , 
  \end{equation*}
  where we have written 
  \begin{align*}
    Z &=\sum_j Z_j^{(1)}\otimes Z_j^{(2)},\\
    U &=\sum_j U_j^{(1)}\otimes U_j^{(2)},\\ 
    W &=\sum_j W_j^{(1)}\otimes W_j^{(2)} \ .
  \end{align*}
Since each matrix acts trivially in one different factor of the tensor product and all of them are equal, by comparing pair by pair we conclude that the three matrices are proportional to the identity. Thus $A_1 = \lambda B_1$. In the same way we get $A_2 = 1/\lambda B_2$.
\end{proof}

Let us now block the PEPS in \cref{eq:inj_equal_edge} into two injective tensors: select one tensor and block all the others into another injective tensor. These PEPS now satisfy the requirements of  \cref{lem:inj_equal_tensors_2} and thus for all $s$, $A_s = \lambda_s \tilde{B}_s$ for some constant $\lambda_s$, giving the Fundamental Theorem for general injective PEPS (the constants $\lambda_s$ can be incorporated into the matrices of the gauge transformations):

\begin{theorem}[Fundamental theorem for injective PEPS]\label{thm:inj}
  Two injective PEPS  -- defined on a graph that does not contain double edges and self-loops -- generate the same state if and only if the generating tensors are related by a gauge transformation. The matrices defining the gauge transformation are unique up to a multiplicative constant. 
\end{theorem}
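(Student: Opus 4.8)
The plan is to prove the nontrivial (`only if') direction by reusing, essentially verbatim, the architecture of Theorem~\ref{thm:inj_MPS}: first produce a gauge matrix on every edge by reducing to a three-site MPS and invoking Lemma~\ref{lem:inj_isomorph}, then, after absorbing these gauges, conclude with Lemma~\ref{lem:inj_equal_tensors_2} that the tensors coincide up to scalars, and finally reabsorb the scalars. The `if' direction is immediate: if the $B$-tensors arise from the $A$-tensors by inserting an invertible $Z_e$ on one end of each edge $e$ and $Z_e^{-1}$ on the other, then in the full contraction every pair $Z_eZ_e^{-1}$ cancels, so $|\Psi_A\rangle=|\Psi_B\rangle$.

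Assume $|\Psi_A\rangle=|\Psi_B\rangle$. First I fix an edge $e=(u,v)$. Because the graph has no self-loops and no double edges, $u\neq v$ and $e$ is the only edge between them, so I may block every vertex other than $u$ and $v$ into a single tensor and merge parallel bonds; since contractions of injective tensors are injective, the result is a three-site (ring) MPS --- with sites $A_u$, $A_v$ and a single bond along $e$ --- and likewise for $B$, with the two MPS generating the same state. Lemma~\ref{lem:inj_isomorph} then delivers an invertible matrix $Z_e$, unique up to a scalar, such that inserting any $X$ on $e$ in the $A$-network equals inserting $Z_e^{-1}XZ_e$ on $e$ in the $B$-network. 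Running this over all edges and absorbing $Z_e^{-1}$ into the $B$-tensor at one endpoint of $e$ and $Z_e$ into the $B$-tensor at the other, I obtain tensors $\tilde B_s$ with $|\Psi_{\tilde B}\rangle=|\Psi_B\rangle=|\Psi_A\rangle$ and, crucially, with the property that inserting the same matrix $X$ on the same edge in the $A$- and $\tilde B$-networks produces identical states --- this is exactly \eqref{eq:inj_equal_edge}.

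Next I fix a vertex $s$ and block everything except $A_s$ (resp.\ $\tilde B_s$) into one injective tensor. The bonds joining the two resulting tensors are precisely the edges incident to $s$, so the property just established is verbatim the hypothesis of Lemma~\ref{lem:inj_equal_tensors_2}, whence $A_s=\lambda_s\tilde B_s$ for a constant $\lambda_s$. Contracting the whole network gives $\prod_s\lambda_s=1$ (componentwise), so the $\lambda_s$ can be pushed back into the gauge: rescaling the $Z_e$ by suitable scalars --- the resulting multiplicative conditions at the vertices are solvable along a spanning tree of each connected component, the last one being automatic because $\prod_s\lambda_s=1$ --- makes the redefined $\tilde B_s$ equal $A_s$ identically, exhibiting $B$ as a gauge transform of $A$. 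For uniqueness, any gauge transformation relating $A$ and $B$ restricts, edge by edge, to a matrix of the kind supplied by Lemma~\ref{lem:inj_isomorph}, hence is determined up to a scalar on each edge; imposing in addition that the tensors coincide exactly then removes all but a global scalar redundancy.

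The genuine content has already been isolated in Lemmas~\ref{lem:inj_isomorph} and~\ref{lem:inj_equal_tensors_2}, so the remaining delicate points are purely structural: arranging the blocking so that the chosen edge survives as a single bond of a genuine three-site MPS --- this is precisely where the no-double-edges and no-self-loops hypotheses are used, together with the preservation of injectivity under blocking --- and the modest combinatorial bookkeeping needed to reabsorb the site constants $\lambda_s$, which rests on $\prod_s\lambda_s=1$ and a spanning-tree argument. I expect this last reabsorption to be the main (and only mildly annoying) obstacle.
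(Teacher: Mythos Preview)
Your proposal is correct and follows essentially the same route as the paper: block to a three-site MPS around each edge and invoke Lemma~\ref{lem:inj_isomorph} to produce the gauge matrices $Z_e$, absorb them to obtain \eqref{eq:inj_equal_edge}, then block around each vertex and apply Lemma~\ref{lem:inj_equal_tensors_2} to get $A_s=\lambda_s\tilde B_s$, finally reabsorbing the scalars into the gauges. The paper is terser on the last step (simply stating the $\lambda_s$ can be incorporated), while you spell out the spanning-tree argument; your remark on uniqueness is slightly more elaborate than what the paper asserts, but the core argument is the same.
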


As the defining graph cannot contain double edges and self-loops, the theorem is applicable for MPS of size $N$ only if $N\geq 3$, and for 2D PEPS of size $N\times M$ only if both $N\geq 3$ and $M\geq 3$. As an illustration of the theorem, for the two PEPS in \cref{eq:TN_5_particle_eq} there are matrices $Z_{12},Z_{23},Z_{34},Z_{45},Z_{51}$ and $Z_{25}$ such that
\begin{align*}
  \begin{tikzpicture}
    \node[tensor, label=below:$A_1$] (c1) at (0,0) {};
    \coordinate (c2) at (1,0) {};
    \coordinate (c5) at (-0.6,1.5) {};
    \draw[ ] (c1) --++(0,0.5);
    \draw (c1)--(c2);
    \draw ($(c1)!0.5!(c5)$)--(c1);
    \node[ draw=black,fill=red,tensorr,label=below:$Z_{12}$] (z12) at ($(c1)!0.5!(c2)$) {}; 
    \node[ draw=black,fill=red,tensorr,label=below left:$Z_{51}$] (z15) at ($(c1)!0.25!(c5)$) {}; 
  \end{tikzpicture}    =
  \begin{tikzpicture}
    \node[tensor, label=below:$B_1$] (c1) at (0,0) {};
    \coordinate (c2) at (1,0) {};
    \coordinate (c5) at (-0.6,1.5) {};
    \draw[ ] (c1) --++(0,0.5);
    \draw (c1)--(c2);
    \draw ($(c1)!0.5!(c5)$)--(c1);
  \end{tikzpicture}  \quad &\text{and} \quad
  \begin{tikzpicture}
    \coordinate (c1) at (0,0);
    \node[tensor, label=below:$A_2$] (c2) at (1,0) {};
    \coordinate (c3) at (1.5,1);
    \coordinate (c5) at (-0.6,1.5) {};
    \draw[ ] (c2) --++(0,0.5);
    \draw (c1)--(c2);
    \draw (c2)--(c3);
    \draw (c2)--($(c2)!0.5!(c5)$);
    \node[ draw=black,fill=red,tensorr,label=below:$Z_{12}^{-1}$] (z12) at ($(c1)!0.5!(c2)$) {}; 
    \node[ draw=black,fill=red,tensorr,label=right:$Z_{23}$] (z23) at ($(c3)!0.5!(c2)$) {}; 
    \node[ draw=black,fill=red,tensorr,label=left:$Z_{25}$] (z25) at ($(c2)!0.25!(c5)$) {}; 
  \end{tikzpicture}   = 
  \begin{tikzpicture}
    \coordinate (c1) at (0,0);
    \node[tensor, label=below:$B_2$] (c2) at (1,0) {};
    \coordinate (c3) at (1.5,1);
    \coordinate (c5) at (-0.6,1.5) {};
    \draw[ ] (c2) --++(0,0.5);
    \draw (c1)--(c2);
    \draw (c2)--(c3);
    \draw (c2)--($(c2)!0.5!(c5)$);
  \end{tikzpicture}   \ , \\
  \begin{tikzpicture}
    \coordinate (c2) at (1,0);
    \node[tensor, label=right:$A_3$] (c3) at (1.5,1) {};
    \coordinate (c4) at (0.6,1.5);
    \draw[ ] (c3) --++(0,0.5);
    \draw (c2)--(c3);
    \draw (c3)--(c4);
    \node[ draw=black,fill=red,tensorr,label=right:$Z_{23}^{-1}$] (z23) at ($(c3)!0.5!(c2)$) {}; 
    \node[ draw=black,fill=red,tensorr,label=left:$Z_{34}$] (z34) at ($(c3)!0.5!(c4)$) {}; 
  \end{tikzpicture}   = 
  \begin{tikzpicture}
    \coordinate (c2) at (1,0);
    \node[tensor, label=right:$B_3$] (c3) at (1.5,1) {};
    \coordinate (c4) at (0.6,1.5);
    \draw[ ] (c3) --++(0,0.5);
    \draw (c2)--(c3);
    \draw (c3)--(c4);
  \end{tikzpicture}  \quad &\text{and} \quad
  \begin{tikzpicture}
    \coordinate (c3) at (1.5,1);
    \node[tensor, label=below:$A_4$] (c4) at (0.6,1.5) {};
    \coordinate (c5) at (-0.6,1.5);
    \draw[ ] (c4) --++(0,0.5);
    \draw (c3)--(c4);
    \draw (c4)--(c5);
    \node[ draw=black,fill=red,tensorr,label=right:$Z_{34}^{-1}$] (z34) at ($(c3)!0.5!(c4)$) {}; 
    \node[ draw=black,fill=red,tensorr,label=below:$Z_{45}$] (z45) at ($(c4)!0.5!(c5)$) {}; 
  \end{tikzpicture}  = 
  \begin{tikzpicture}
    \coordinate (c3) at (1.5,1);
    \node[tensor, label=below:$B_4$] (c4) at (0.6,1.5) {};
    \coordinate (c5) at (-0.6,1.5);
    \draw[ ] (c4) --++(0,0.5);
    \draw (c3)--(c4);
    \draw (c4)--(c5);
  \end{tikzpicture} \ ,  \\
  \begin{tikzpicture}
    \coordinate (c1) at (0,0);
    \coordinate (c2) at (1,0);
    \coordinate (c4) at (0.6,1.5);
    \node[tensor, label=left:$A_5$] (c5) at (-0.6,1.5) {};
    \draw[ ] (c5) --++(0,0.5);
    \draw (c4)--(c5);
    \draw (c5)--($(c1)!0.5!(c5)$);
    \draw ($(c2)!0.5!(c5)$)--(c5);
    \node[ draw=black,fill=red,tensorr,label=left:$Z_{51}^{-1}$] (z51) at ($(c5)!0.25!(c1)$) {}; 
    \node[ draw=black,fill=red,tensorr,label=right:$Z_{25}^{-1}$] (z25) at ($(c5)!0.25!(c2)$) {}; 
    \node[ draw=black,fill=red,tensorr,label=above:$Z_{45}^{-1}$] (z45) at ($(c5)!0.5!(c4)$) {}; 
  \end{tikzpicture} & = 
  \begin{tikzpicture}
    \coordinate (c1) at (0,0);
    \coordinate (c2) at (1,0);
    \coordinate (c4) at (0.6,1.5);
    \node[tensor, label=left:$B_5$] (c5) at (-0.6,1.5) {};
    \draw[ ] (c5) --++(0,0.5);
    \draw (c4)--(c5);
    \draw (c5)--($(c1)!0.5!(c5)$);
    \draw ($(c2)!0.5!(c5)$)--(c5);
  \end{tikzpicture} \ .
\end{align*}

\section{Normal PEPS}\label{sec:normal}

\begin{definition}\label{def:nomalTN}
A PEPS is normal if after blocking regions the resulting blocked tensors are injective.
\end{definition}

To derive a Fundamental theorem for this kind of PEPS, we use the same arguments as above after blocking to a scale large enough to get injectivite tensors. For simplicity, we consider TI normal PEPS on a square lattice, but it can easily be generalized to the non-translational invariant case on any geometry. We will need the following result:

\begin{lemma}\label{lem:injective_union}
  For any tensor network, the union of two injective regions is injective.
\end{lemma}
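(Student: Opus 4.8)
The plan is to reduce the lemma to two elementary facts about injective tensors that have already been recorded in this chapter: the tensor product of injective tensors is injective, and the contraction of two injective tensors along a bond is again injective (this is the fact used in \cref{sec:injPEPS}). First I would block all the tensors lying in $R_1$ into a single tensor $a$, and all the tensors lying in $R_2$ into a single tensor $b$; by the definition of an injective region (\cref{def:nomalTN}) both $a$ and $b$ are injective, i.e.\ each admits a left inverse $\myinv{a}$, $\myinv{b}$ as a map from its virtual indices to its physical indices. Taking the two regions to be disjoint as sets of vertices, the blocked tensor $\Gamma_{R_1\cup R_2}$ of the union is obtained from $a$ and $b$ by contracting, one after another, the bonds of the network that join a vertex of $R_1$ to a vertex of $R_2$ (if there are no such bonds, $\Gamma_{R_1\cup R_2}=a\otimes b$).

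Next I would verify the two ingredients. On the one hand $a\otimes b$ is injective, with left inverse $\myinv{a}\otimes\myinv{b}$, since $(\myinv{a}\otimes\myinv{b})(a\otimes b)=\myinv{a}a\otimes\myinv{b}b=\id\otimes\id$. On the other hand, contracting two virtual legs of an injective tensor preserves injectivity: if the tensor is the linear map $f$ with domain $V_i\otimes V_j\otimes V_{\mathrm{rest}}$ and $\dim V_i=\dim V_j=D$, then contracting the legs $i,j$ produces $\tilde f\colon w\mapsto f\big((\sum_{k=1}^{D}\ket{k}\ket{k})\otimes w\big)$, that is, $f$ precomposed with the map $w\mapsto \ket{\Omega}\otimes w$ where $\ket{\Omega}=\sum_k\ket{k}\ket{k}$. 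Because $\ket{\Omega}\neq 0$ this precomposition is injective, and a composition of injective maps is injective, so $\tilde f$ is injective whenever $f$ is; applying this with the two contracted legs belonging respectively to $a$ and $b$ is exactly the statement that the contraction of two injective tensors is injective.

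Combining the ingredients, $\Gamma_{R_1\cup R_2}$ is obtained from the injective tensor $a\otimes b$ by a finite sequence of such leg contractions, hence is injective, which is precisely the assertion that $R_1\cup R_2$ is an injective region. I do not expect a conceptual obstacle here; the point that needs care is purely the bookkeeping of which legs are virtual (so that they sit on the domain side of the maps and the ``contraction $=$ precomposition with $\ket{\Omega}$'' identity applies verbatim), together with the mild hypothesis that the two regions be vertex-disjoint. The latter is all that is used in this section, where one merges injective blocks to eliminate multiple edges before \cref{thm:inj} is invoked; if the regions shared vertices the tensor of the union could no longer be written simply as a contraction of $a$ with $b$, and a separate argument would be required.
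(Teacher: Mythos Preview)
Your argument is correct for vertex-disjoint regions, and the precomposition-with-$\ket{\Omega}$ trick is a clean way to see that contracting a bond between two injective tensors yields an injective tensor. However, the restriction to disjoint regions is a genuine gap: the lemma as stated, and as it is actually used in this section, requires the overlapping case. In \cref{ob1:RS} the $3\times 3$ square $S$ is declared injective because it is a union of $2\times 3$ (or $3\times 2$) blocks; since $S$ has nine sites and each such block has six, any two blocks covering $S$ must share a row or column. The same happens for the L-shaped region $R$ and for the complement region $T$ in \cref{ob2:T}. Your claim that ``the latter is all that is used in this section'' is therefore incorrect.

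The paper's proof handles overlap directly and does \emph{not} rely on the pieces $A\setminus B$, $A\cap B$, $B\setminus A$ being individually injective. It blocks the network into those three pieces plus the complement, assumes a boundary tensor $X$ annihilates the combined tensor on $A\cup B$, inverts the full region $A=(A\setminus B)\cup(A\cap B)$ (which \emph{is} injective by hypothesis), then reinserts the tensor on $A\cap B$ and inverts the full region $B=(A\cap B)\cup(B\setminus A)$ to conclude $X=0$. This ``invert, plug back, invert'' manoeuvre is precisely what your tensor-product/contraction approach cannot reproduce when $R_1\cap R_2\neq\emptyset$, because the shared vertices would be duplicated in $a\otimes b$ and no sequence of bond contractions recovers $\Gamma_{R_1\cup R_2}$ from it.
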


\begin{proof}
  Let $A$ and $B$ be two injective regions. W.l.o.g.\ the TN can be blocked as follows :
    \begin{equation*}
    \begin{tikzpicture}
      \node[tensor,label=below left:$A\backslash B$] (1) at (0,0) {};
      \node[tensor,label=right:$A\cap B$] (2) at (0.8,0.8) {};
      \node[tensor,label=below right:$B\backslash A$] (3) at (2,0) {};
      \node[tensor,label=below:$(A\cup B)^{c}$] (4) at (1.3,-0.7) {};
      \draw[ ] (1)--++(0,0.5);
      \draw[ ] (2)--++(0,0.5);
      \draw[ ] (3)--++(0,0.5);
      \draw[ ] (4)--++(0,0.5);
      \draw (1)--(2)--(3)--(4)--(1)--(3);
      \draw (2) -- (4);
    \end{tikzpicture} \ .
  \end{equation*}
  Notice that $A\cup B = (A\backslash B) \cup (A\cap B) \cup (B\backslash A)$ and $ \emptyset = (A\backslash B) \cap (A\cap B) \cap (B\backslash A)$ . Let $X$ now be a tensor such that 
  \begin{equation*}
    \begin{tikzpicture}
      \node[tensor,label=below left:$A\backslash B$] (1) at (0,0) {};
      \node[tensor,label=right:$A\cap B$] (2) at (0.7,0.7) {};
      \node[tensor,label=below right:$B\backslash A$] (3) at (2,0) {};
      \node[tensor,label=below:$X$] (4) at (1.3,-0.7) {};
      \draw[ ] (1)--++(0,0.5);
      \draw[ ] (2)--++(0,0.5);
      \draw[ ] (3)--++(0,0.5);
      \draw (1)--(2)--(3)--(4)--(1)--(3);
      \draw (2) -- (4);
    \end{tikzpicture} \ = 0.
  \end{equation*}
  As the region $A = (A\backslash B) \cup (A\cap B)$ is injective,
  \begin{equation*}
    \begin{tikzpicture}
      \coordinate (1) at (0,0);
      \coordinate (2) at (0.7,0.7) {};
      \node[tensor,label=below right:$B\backslash A$] (3) at (2,0) {};
      \node[tensor,label=below:$X$] (4) at (1.3,-0.7) {};
      \draw[ ] (3)--++(0,0.5);
      \draw (3)--(4);
      \draw (4) -- ($(4)!0.3!(2)$);
      \draw (4) -- ($(4)!0.3!(1)$);
      \draw (3) -- ($(3)!0.3!(2)$);
      \draw (3) -- ($(3)!0.3!(1)$);
    \end{tikzpicture} \ = 0.
  \end{equation*}
  Plugging back the tensor over the region $A\cap B$,
  \begin{equation*}
    \begin{tikzpicture}
      \coordinate (1) at (0,0);
      \node[tensor,label=right:$A\cap B$] (2) at (0.7,0.7) {};
      \node[tensor,label=below right:$B\backslash A$] (3) at (2,0) {};
      \node[tensor,label=below:$X$] (4) at (1.3,-0.7) {};
      \draw[ ] (2)--++(0,0.5);
      \draw[ ] (3)--++(0,0.5);
      \draw (3)--(4);
      \draw (4) -- (2);
      \draw (4) -- ($(4)!0.3!(1)$);
      \draw (3) -- (2);
      \draw (3) -- ($(3)!0.3!(1)$);
      \draw (2) -- ($(2)!0.3!(1)$);
    \end{tikzpicture} \ = 0.
  \end{equation*}
  Finally, the region $B = (A\cap B) \cup (B\backslash A)$ is injective, hence inverting the tensor over that region gives
  \begin{equation*}
    \begin{tikzpicture}
      \coordinate (1) at (0,0);
      \coordinate (2) at (0.7,0.7);
      \coordinate (3) at (2,0);
      \node[tensor,label=below:$X$] (4) at (1.3,-0.7) {};
      \draw (4) -- ($(4)!0.3!(1)$);
      \draw (4) -- ($(4)!0.3!(2)$);
      \draw (4) -- ($(4)!0.3!(3)$);
    \end{tikzpicture} \ = 0,
  \end{equation*}
  which means that the region $A\cup B$ is injective. 
\end{proof}
\begin{observation}\label{ob1:RS}
For example, if in a TI 2D PEPS every $2\times 3$ and $3\times 2$ region is injective, then the following regions:
\begin{equation*}
  \begin{tikzpicture}
          \foreach \x in {0,0.5,1,1.5,2}{
      \foreach \z in {1,1.5,2,2.5,3}{
        \node[tensor] at (\x,0,\z) {};
        \draw (\x,0,\z)--(\x,0.15,\z);
      }}
           \begin{scope}[canvas is xz plane at y=0]       
    \clip (-0.25,0.75) rectangle (2.25,3.25);
    \draw[step=0.5] (-1,-1) grid (4,4);
       \draw[red,fill=red!40, thick, rounded corners] (0.3,1.3) -- (1.2,1.3) --(1.2,1.8)--(1.7,1.8)--(1.7,2.7)--(0.3,2.7)--cycle;
              \end{scope}  
           \node  at (0.75,0,2) {$R$}; 
  \end{tikzpicture}  
  \ {\rm and} \     
  \begin{tikzpicture}
            \foreach \x in {0,0.5,1,1.5,2}{
      \foreach \z in {1,1.5,2,2.5,3}{
        \node[tensor] at (\x,0,\z) {};
        \draw (\x,0,\z)--(\x,0.15,\z);
      }}
       \begin{scope}[canvas is xz plane at y=0]  
    \clip (-0.25,0.75) rectangle (2.25,3.25);
    \draw[step=0.5] (-1,-1) grid (4,4);
    \draw[red,fill=red!40, thick, rounded corners] (0.3,1.3) rectangle (1.7,2.7);
      \end{scope}  
    \node  at (1,0,2) {$S$}; 
  \end{tikzpicture}  
\end{equation*}
are unions of smaller injective regions, and they are thus injective. 
\end{observation}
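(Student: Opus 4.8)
The plan is to deduce the statement directly from \cref{lem:injective_union}, which says that the union of two injective regions is again injective. Since the hypothesis guarantees that every $2\times 3$ and every $3\times 2$ block is injective, it suffices to exhibit each of $R$ and $S$ as a union of two overlapping blocks of these two shapes, and then invoke the lemma once. First I would fix coordinates by reading off the figure: $S$ is the $3\times 3$ block of sites, and $R$ is that same $3\times 3$ block with a single corner site deleted (the site sitting in the notch of the L-shape).

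For $S$, I would write $S = T_1\cup T_2$, where $T_1$ is the $2\times 3$ block formed by the top two rows and $T_2$ the $2\times 3$ block formed by the bottom two rows. Each $T_i$ is injective by hypothesis, and their intersection is the middle row, which is nonempty, so \cref{lem:injective_union} applies and gives that $S$ is injective. For $R$, I would take $U_1$ to be the $2\times 3$ block consisting of the two rows \emph{not} containing the deleted corner, and $U_2$ the $3\times 2$ block consisting of the two columns \emph{not} containing the deleted corner. By the geometry, $U_1\cup U_2$ is exactly the full $3\times 3$ block with the corner removed, i.e.\ $U_1\cup U_2 = R$, while $U_1\cap U_2$ is a $2\times 2$ block, hence nonempty. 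Both $U_1$ and $U_2$ are injective by hypothesis, so \cref{lem:injective_union} yields injectivity of $R$.

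There is no serious obstacle here; the only thing to check carefully is the combinatorial bookkeeping, namely that the two chosen pieces genuinely cover the target region and overlap in a nonempty set, so that the blocking $A\setminus B$, $A\cap B$, $B\setminus A$ used in the proof of \cref{lem:injective_union} is available. This is the \textbf{sole point requiring attention}, and for the L-shaped $R$ it amounts to verifying that a $2\times 3$ piece together with a $3\times 2$ piece, chosen away from the missing corner, leaves out precisely that corner. For more elaborate regions one would simply iterate the lemma---a union of injective regions being again injective---adding one injective rectangle at a time, but in both cases at hand two pieces already suffice.
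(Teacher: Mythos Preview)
Your proposal is correct and matches exactly what the paper intends: the observation is stated immediately after \cref{lem:injective_union} and is meant to be a direct application of it, with no separate proof given. Your explicit decompositions of $S$ as two overlapping $2\times 3$ blocks and of $R$ as a $2\times 3$ block together with a $3\times 2$ block are precisely the intended ones; the only minor remark is that \cref{lem:injective_union} as stated does not actually require the intersection to be nonempty, so your caution on that point is harmless but unnecessary.
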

\begin{observation}\label{ob2:T}
If the size of the PEPS is at least $5\times 6$ and every $2\times 3$ and $3\times 2$ region is injective, the region $T$ depicted below is injective:

\begin{equation*}
  \begin{tikzpicture}
            \foreach \x in {0.5,1,1.5,2,2.5}{
      \foreach \z in {1,1.5,2,2.5,3}{
        \node[tensor] at (\x,0,\z) {};
        \draw (\x,0,\z)--(\x,0.15,\z);
      } }
         \begin{scope}[canvas is xz plane at y=0]  
             \clip (-0.25,0.25) rectangle (3.25,3.25);
    \draw[step=0.5] (-1,-1) grid (4,4);
          \filldraw[draw=red, fill=red!40, thick, rounded corners,even odd rule] (0.3,2.7) -- (1.2,2.7) -- (1.2,1.7) -- (2.7,1.7)--(2.7,0.8)--(1.3,0.8)--(1.3,1.3)--(0.3,1.3)--cycle
    (-0.25,0.25) rectangle (3.25,3.25);   
            \end{scope}  
    \node  at (2,0,2.3) {$T$}; 
  \end{tikzpicture} \ .
\end{equation*} 
\end{observation}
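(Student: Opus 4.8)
The plan is to prove this exactly as in Observation~\ref{ob1:RS}: write the non-convex region $T$ as a union of finitely many rectangular patches of shape $2\times 3$ or $3\times 2$ — each injective by hypothesis — and glue them together with repeated applications of Lemma~\ref{lem:injective_union}. The $3\times 3$ square and the L-shaped region already shown to be injective in Observation~\ref{ob1:RS} may also be used as building blocks when convenient.

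First I would exhibit a finite family $R_1,\dots,R_k$ of such patches with $R_i\subseteq T$ for every $i$ and $R_1\cup\dots\cup R_k=T$. Away from the $Z$-shaped region that has been removed, this is routine: any sufficiently wide strip of the bulk is tiled by overlapping $2\times 3$ and $3\times 2$ patches. The care is needed at the places where $T$ is pinched — at the narrow arms of the $Z$ — down to a strip only two or three tensors wide; there a patch placed ``straight'' would stick outside $T$, so one instead routes the three rows of a $2\times 3$ patch (or the three columns of a $3\times 2$ patch) around through rows (resp.\ columns) that avoid the pinch. This is precisely where the hypothesis that the PEPS has size at least $5\times 6$ enters: the removed region spans four consecutive rows and, at its widest, five consecutive columns, so with at least $6$ rows there are always two adjacent pinch-free rows to reach around through, and with at least $5$ columns $T$ is not severed. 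If the lattice is any smaller, one of the arms becomes too tightly pinched to be covered by a legal patch, and the argument breaks — which is why $5\times 6$ is the threshold.

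Second, glue: $R_1\cup R_2$ is injective by Lemma~\ref{lem:injective_union}, and, inductively, $R_1\cup\dots\cup R_j$ is the union of the injective region $R_1\cup\dots\cup R_{j-1}$ with the injective region $R_j$, hence injective; after $k-1$ steps, $T=R_1\cup\dots\cup R_k$ is injective. One should order the patches so that every partial union is connected and stays a legitimate (non-wrapping) region of the torus — again possible because of the size bound. The only genuine work is the combinatorial first step: producing the explicit cover, checking that every pinched arm of $T$ can be reached around by a $2\times 3$ or $3\times 2$ patch lying inside $T$, and that no tensor of $T$ is left uncovered. Everything analytic is already contained in Lemma~\ref{lem:injective_union}, so I expect this bookkeeping, together with pinning down exactly why the $5\times 6$ bound is sharp, to be the main obstacle.
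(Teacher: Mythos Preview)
Your proposal is correct and matches the paper's intended argument: the observation is stated without an explicit proof, but in the proof of Theorem~\ref{thm:3} the paper makes clear that $T$ (there called $A_3$) is injective precisely because it can be written as a union of $2\times 3$ and $3\times 2$ injective regions, whence Lemma~\ref{lem:injective_union} applies iteratively. One small remark: Lemma~\ref{lem:injective_union} does not require the two regions to overlap or their union to be connected, so your care about ordering the patches to keep partial unions connected is unnecessary---any order works.
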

In the following we prove the Fundamental Theorem for a normal TI 2D PEPS. In particular, we prove it in detail for the case where every region of size $2\times 3$ and $3\times 2$ is injective as in the examples above. Then, we generalize the proof for any normal PEPS that is big enough to allow the necessary blockings.

\begin{theorem}\label{thm:3}
Let $A$ and $B$ be two normal 2D PEPS tensors such that every $2\times 3$ and $3\times 2$ region is injective. Suppose they generate the same state on some region $n\times m$ with $n,m\geq 7$. Then $A$ and $B$ are related to each other by a gauge transformation: 
  \begin{equation*}
    \begin{tikzpicture}[baseline=-0.1cm]
      \draw (-0.5,0)--(0.5,0);
      \draw (0,0,-0.5)--(0,0,0.5);
      \node[tensor,label=below:$B$] at (0,0) {};
      \draw[ ] (0,0)--++(0,0.5);
    \end{tikzpicture} = \lambda \cdot 
    \begin{tikzpicture}[baseline=-0.1cm]
      \draw (-1.3,0)--(1.3,0);
      \draw (0,0,-1.6)--(0,0,1.6);
      \node[tensor,label=below:$A$] at (0,0) {};
      \node[tensorr,draw=black,fill=red,label=right:$Y^{-1}$] at (0,0,-1) {};
      \node[tensorr,draw=black,fill=red,label=below:$Y$] at (0,0,1) {};
      \node[tensorr,draw=black,fill=red,label=below:$X^{-1}$] at (0.8,0,0) {};
      \node[tensorr,draw=black,fill=red,label=below:$X$] at (-0.8,0,0) {};
      \draw[ ] (0,0)--++(0,0.5);
    \end{tikzpicture} \ , 
  \end{equation*}
  with $\lambda^{n\cdot m} = 1$ and $X,Y$ invertible matrices. $X$ and $Y$ are unique up to a multiplicative constant.
\end{theorem}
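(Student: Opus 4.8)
The plan is to reduce Theorem~\ref{thm:3} to the injective case (Theorem~\ref{thm:inj}) by blocking to a scale at which the tensors become injective, exactly as in the proof sketches of the MPS and PEPS cases above, and then to argue that the gauge transformations obtained at the coarse-grained level descend to the original single-site tensors. The key observation, already recorded in Observations~\ref{ob1:RS} and \ref{ob2:T}, is that \emph{unions} of injective regions are injective (\cref{lem:injective_union}), so on a lattice of linear size $\geq 7$ one can carve out the ``staircase'' regions $R$, $S$, $T$ and exhibit the torus as a three-vertex (non-translationally-invariant) MPS with injective vertex tensors around any chosen edge.

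First I would fix one edge, say a horizontal one, and block the PEPS into the three injective regions meeting that edge as in \cref{eq:block_to_mps}, so that \cref{lem:inj_isomorph} applies: this yields an invertible matrix on the chosen edge intertwining the corresponding operator algebras of the $A$- and $B$-networks, uniquely defined up to a scalar. Repeating this for \emph{every} edge of the lattice produces a gauge transformation $\{X_e\}$; I would then absorb these matrices into $B$ to form $\tilde B$ (site-dependent a priori), so that inserting any matrix $X$ on any edge simultaneously in the $A$-network and the $\tilde B$-network gives the same state. Next, blocking the resulting PEPS into \emph{two} injective tensors (one single site, the rest) and invoking \cref{lem:inj_equal_tensors_2} (whose double-edge version is needed because on a torus two blocked regions share more than one virtual bond) gives $A = \lambda_s \tilde B_s$ at every site $s$, with $\prod_s \lambda_s = 1$; translational invariance of $A$ and $B$ then forces the gauge matrices to be translationally invariant up to scalars (invert $A$ in the relation $Z_{s-1}^{-1}A Z_s = Z_s^{-1} A Z_{s+1}$, as in the corollary after \cref{thm:inj_MPS}), so that there are just two matrices $X$ (horizontal) and $Y$ (vertical) and a single constant $\lambda$ with $\lambda^{nm}=1$; uniqueness up to a scalar is inherited from \cref{lem:inj_isomorph} via Skolem--Noether. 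Finally, the general normal case follows verbatim after replacing ``$2\times3$ and $3\times2$ injective'' by ``injective after blocking $k_0\times k_0$'' and choosing the lattice large enough that the analogues of $R,S,T$ fit.

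The main obstacle I expect is purely combinatorial-geometric: one must check that, for a \emph{periodic} lattice (a torus) of size $\geq 7$ in each direction, the regions surrounding any fixed edge can genuinely be assembled from the elementary injective blocks so that (i) they cover the torus, (ii) two of the three pieces touch the chosen edge and nothing else touches it, and (iii) when one later blocks into two pieces, the two pieces are each unions of elementary injective regions. On the torus the ``complement'' region wraps around, and one has to be careful that wrapping does not create a piece which is not itself a union of translates of an injective rectangle; the bound $n,m\ge 7$ (and $5\times 6$ in Observation~\ref{ob2:T}) is exactly what makes the staircase constructions fit with room to spare. Once this bookkeeping is done, every analytic step is a direct citation of \cref{lem:inj_isomorph,lem:inj_equal_tensors_2,lem:injective_union} and of \cref{thm:inj}, together with the translational-invariance argument already used in the corollary to \cref{thm:inj_MPS}.
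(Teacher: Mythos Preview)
Your overall strategy matches the paper's: block to a tripartite injective MPS around each edge, apply \cref{lem:inj_isomorph} to obtain edge gauges, absorb them into $B$, and then use \cref{lem:inj_equal_tensors_2} to conclude. The geometric bookkeeping you worry about is exactly what the paper handles via Observations~\ref{ob1:RS} and~\ref{ob2:T}.

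However, there is a genuine gap in your step where you ``block the resulting PEPS into two injective tensors (one single site, the rest)'' and invoke \cref{lem:inj_equal_tensors_2}. Under the hypotheses of the theorem, a \emph{single site is not injective}: only $2\times 3$ and $3\times 2$ regions are assumed injective. \cref{lem:inj_equal_tensors_2} requires both pieces to be injective, so it cannot be applied with one piece equal to a single tensor $A$. This is precisely the obstacle that distinguishes the normal case from the injective case.

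The paper circumvents this with a nesting trick. It exhibits two injective regions $R$ and $S$ (Observation~\ref{ob1:RS}) that differ by exactly one site and whose complements $R^c$, $S^c$ are also injective. Applying \cref{lem:inj_equal_tensors_2} to the bipartitions $(R,R^c)$ and $(S,S^c)$ gives $A_R\propto\tilde B_R$ and $A_S\propto\tilde B_S$. Since $A_S$ is the contraction of $A_R$ with the single-site tensor $A$ (and similarly for $\tilde B$), one then applies the inverse of $A_R\propto\tilde B_R$ to peel off the single site and conclude $A\propto\tilde B$. This is the missing idea in your proposal: you cannot isolate the single tensor directly, but you can isolate it as the \emph{difference} of two injective regions.
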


\begin{proof}
  Let us block the TN into three injective parts around an edge. This can be done with e.g. the following choice of regions:
  \begin{equation*}
    \begin{tikzpicture}
          \foreach \x in {0,0.5,...,3}{
        \foreach \z in {0,0.5,...,3}{
          \node[tensor] at (\x,0,\z) {};
          \draw (\x,0,\z)--(\x,0.15,\z);
        } }
           \begin{scope}[canvas is xz plane at y=0]  
      \clip (-0.25,-0.25) rectangle (3.25,3.25);
      \draw[step=0.5] (-1,-1) grid (4,4);
      \draw[line width=0.6mm,green] (1.1,1.5)--(1.4,1.5);
            \draw[red,thick, rounded corners] (0.3,1.3) rectangle (1.2,2.7);
      \draw[blue,thick, rounded corners] (1.3,1.7) rectangle (2.7,0.8);
      \end{scope}  
    \end{tikzpicture} 
    \ \Rightarrow \ 
    \begin{tikzpicture}
      \draw (0.5,0) rectangle (3.5,-0.5);
      \draw[thick,green] (1,0)--(2,0);
      \foreach \x/\c in {1/red,2/blue,3/black}{
        \node[tensor,\c,label=below:$\color{\c}A_\x$] (t\x) at (\x,0) {};
        \draw[ ] (t\x) --++ (0,0.5);
      }
    \end{tikzpicture}  \ ,    
  \end{equation*} 
  where $A_1$ corresponds to the red region, $A_2$ to the blue and $A_3$ to the rest. By \cref{ob2:T} the region $A_3$ is injective as long as the size of the PEPS is at least $5\times 7$. A PEPS of size $5\times 6$ is not enough since there would be regions $A_3$ that are not unions of injective regions and then one cannot conclude that $A_3$ is injective using \cref{lem:injective_union}. Therefore a $7\times 7$ PEPS can be blocked to form an injective tripartite MPS where the green can be any edge (including the vertical edges that require a PEPS size at least $7 \times 5$). Therefore \cref{lem:inj_isomorph} can be applied giving a gauge transformation on every edge.  Due to translation invariance, these gauges are described by the same matrix $X$ ($Y$) on all horizontal (vertical) edges. 
  
Define now $\tilde{B}$ by incorporating the local gauges into the tensors $B$:
  \begin{equation*}
    \begin{tikzpicture}[baseline=-0.1cm]
      \draw (-0.5,0)--(0.5,0);
      \draw (0,0,-0.5)--(0,0,0.5);
      \node[tensor,label=below:$\tilde{B}$] at (0,0) {};
      \draw[ ] (0,0)--++(0,0.5);
    \end{tikzpicture} = 
    \begin{tikzpicture}[baseline=-0.1cm]
      \draw (-1.3,0)--(1.3,0);
      \draw (0,0,-1.6)--(0,0,1.6);
      \node[tensor,label=below:$B$] at (0,0) {};
      \node[tensorr,draw=black,fill=red,label=right:$Y$] at (0,0,-1) {};
      \node[tensorr,draw=black,fill=red,label=below:$Y^{-1}$] at (0,0,1) {};
      \node[tensorr,draw=black,fill=red,label=below:$X$] at (0.8,0,0) {};
      \node[tensorr,draw=black,fill=red,label=below:$X^{-1}$] at (-0.8,0,0) {};
      \draw[ ] (0,0)--++(0,0.5);
    \end{tikzpicture} \ . 
  \end{equation*}
The two PEPS tensors $A$ and $\tilde{B}$ generate the same state. Moreover, inserting a matrix $Z$ on any bond of the first PEPS gives the same state as inserting the same matrix $Z$ on the corresponding bond of the second PEPS. By \cref{ob1:RS},
  \begin{equation*}
  \begin{tikzpicture}
          \foreach \x in {0,0.5,1,1.5,2}{
      \foreach \z in {1,1.5,2,2.5,3}{
        \node[tensor] at (\x,0,\z) {};
        \draw (\x,0,\z)--(\x,0.15,\z);
      }}
           \begin{scope}[canvas is xz plane at y=0]       
    \clip (-0.25,0.75) rectangle (2.25,3.25);
    \draw[step=0.5] (-1,-1) grid (4,4);
       \draw[red,fill=red!40, thick, rounded corners] (0.3,1.3) -- (1.2,1.3) --(1.2,1.8)--(1.7,1.8)--(1.7,2.7)--(0.3,2.7)--cycle;
              \end{scope}  
           \node  at (0.75,0,2) {$R$}; 
  \end{tikzpicture}  
  \ {\rm and} \     
  \begin{tikzpicture}
            \foreach \x in {0,0.5,1,1.5,2}{
      \foreach \z in {1,1.5,2,2.5,3}{
        \node[tensor] at (\x,0,\z) {};
        \draw (\x,0,\z)--(\x,0.15,\z);
      }}
       \begin{scope}[canvas is xz plane at y=0]  
    \clip (-0.25,0.75) rectangle (2.25,3.25);
    \draw[step=0.5] (-1,-1) grid (4,4);
    \draw[red,fill=red!40, thick, rounded corners] (0.3,1.3) rectangle (1.7,2.7);
      \end{scope}  
    \node  at (1,0,2) {$S$}; 
  \end{tikzpicture}  
\end{equation*}
are injective regions and notice that the two regions differ in a single site. Moreover, if the PEPS is at least $5\times 5$, their complement regions $R^c$ and $S^c$ are also injective. Let us denote the tensors of the two networks on region $R$ as $A_R$, $\tilde{B}_R$ and on region $S$ as $A_S$, $\tilde{B}_S$. Then, by \cref{lem:inj_equal_tensors_2}, $A_R \propto \tilde{B}_R$ and $A_S \propto \tilde{B}_S$. This can be represented as
  \begin{equation*}
    \begin{tikzpicture}
      \draw (0.5,0)--(2.5,0);
      \node[tensor,label=below:$A_{R}$] (a1) at (1,0) {};
      \draw[ ] (a1)--++(0,0.5);
      \node[tensor,label=below:$A$] (a2) at (2,0) {};
      \draw[ ] (a2)--++(0,0.5);
    \end{tikzpicture} = 
    \begin{tikzpicture}
      \draw (0.5,0)--(1.5,0);
      \node[tensor,label=below:$A_{S}$] (a1) at (1,0) {};
      \draw[ ] (a1)--++(0,0.5);
    \end{tikzpicture} \propto
    \begin{tikzpicture}
      \draw (0.5,0)--(1.5,0);
      \node[tensor,label=below:$\tilde{B}_{S}$] (a1) at (1,0) {};
      \draw[ ] (a1)--++(0,0.5);
    \end{tikzpicture} = 
    \begin{tikzpicture}
      \draw (0.5,0)--(2.5,0);
      \node[tensor,label=below:$\tilde{B}_{R}$] (a1) at (1,0) {};
      \draw[ ] (a1)--++(0,0.5);
      \node[tensor,label=below:$\tilde{B}$] (a2) at (2,0) {};
      \draw[ ] (a2)--++(0,0.5);
    \end{tikzpicture}.
  \end{equation*}
  Applying the inverse of $A_R\propto \tilde{B}_R$ on the two ends of the equation, we get that the tensors $A$ and $\tilde{B}$ are proportional.  
\end{proof}
 
The above proof can be repeated for any PEPS as long as it is possible to form blocks of injective regions as required by \cref{lem:inj_isomorph} and \cref{lem:inj_equal_tensors_2}. This leads to: 
\begin{theorem}[Fundamental theorem of normal PEPS]\label{thm:normal}
  Suppose two normal PEPS generating the same state satisfy the following:
  \begin{itemize}
  \item they can be blocked into tripartite injective MPS around every edge,
  \item for every site, there are injective regions that differ only in the given site and also their complements are injective.
  \end{itemize}
Then the defining tensors are related with a local gauge transformation, {\it i.e.} invertible matrices $Z_i$ that cancel out with the neighbour tensor when the contraction is carried out:
$$ A_s= B_s (\bigotimes_i Z_i).$$
Moreover, the matrices of the gauge transformation are unique up to a multiplicative constant.
\end{theorem}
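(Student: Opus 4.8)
The plan is to abstract the proof of \cref{thm:3} so that it relies only on the two blocking hypotheses in the statement, rather than on the explicit $2\times 3$/$3\times 2$ injectivity and the concrete regions of \cref{ob1:RS} and \cref{ob2:T}. First I would use \cref{def:nomalTN}: since the PEPS is normal, there is a finite blocking scale beyond which every blocked tensor is injective, so from now on we work with sufficiently blocked tensors and treat injectivity as available on every block we need. Then \cref{lem:injective_union} lets us assemble larger injective regions from overlapping injective blocks; the first hypothesis is precisely the statement that this assembly succeeds around each edge, producing a regrouping into three injective tensors that form a tripartite MPS with the chosen edge as the surviving internal bond.

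Next I would apply \cref{lem:inj_isomorph} to each of these tripartite MPS. This yields, for every edge $e$, an invertible matrix $Z_e$ implementing a gauge transformation from the $A$-network to the $B$-network on that edge, unique up to a scalar by the Skolem--Noether part of the lemma. Absorbing $\{Z_e\}$ into the $B$ tensors defines $\tilde B$; exactly as in \cref{thm:3}, the networks $A$ and $\tilde B$ then generate the same state and, moreover, inserting an arbitrary matrix on any bond of one gives the same state as inserting it on the corresponding bond of the other. One point to check is that the gauges coming from the two endpoints of a given edge are mutually consistent, i.e.\ that the homomorphisms $X\mapsto O_1(X)$ and $X\mapsto O_2^{T}(X)$ of \cref{eq:X->O} applied at the two endpoints agree; this follows from the uniqueness clause of \cref{lem:inj_isomorph}, just as in the MPS case (\cref{thm:inj_MPS}).

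Then I would use the second hypothesis to fix the single tensors. For each site $s$, pick injective regions $R\supset S$ that differ only in $s$ and whose complements $R^{c},S^{c}$ are also injective. Writing the network as a bipartite contraction of the $R$-tensor (resp.\ $S$-tensor) against the $R^{c}$-tensor (resp.\ $S^{c}$-tensor), and using that $A$ and $\tilde B$ agree after inserting arbitrary matrices on the connecting bonds, \cref{lem:inj_equal_tensors_2} gives $A_R\propto\tilde B_R$ and $A_S\propto\tilde B_S$. Applying the inverse of $A_R\propto\tilde B_R$ to both ends of $A_S\propto\tilde B_S$ isolates the tensor at $s$ and yields $A_s=\lambda_s\tilde B_s$ for a scalar $\lambda_s$, that is, $A_s=B_s(\bigotimes_i Z_i)$ after rescaling the $Z$'s. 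The $\lambda_s$ are then absorbed sequentially into the edge matrices (in the translationally invariant case they collapse to a single overall scalar $\lambda$), and uniqueness up to a constant is inherited from \cref{lem:inj_isomorph}.

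The part I expect to be the genuine obstacle is not algebraic but combinatorial--geometric: verifying that the two hypotheses are actually attainable for the lattice at hand, i.e.\ that enough overlapping injective blocks exist to (a) carve out the complement of the two endpoints of every edge as a union of injective regions, and (b) produce the nested pair $R\supset S$ with injective complements around every site. On a translationally invariant square lattice this is the bookkeeping of \cref{ob1:RS} and \cref{ob2:T}, which fixes the minimal lattice size; for other geometries one must argue that the injectivity scale is small compared with the system and treat small systems and graphs with multiple edges or self-loops as excluded corner cases. Once these blockings are in place, the rest of the argument is a verbatim replay of \cref{thm:inj_MPS} and \cref{thm:3}.
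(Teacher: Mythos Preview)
Your proposal is correct and follows essentially the same approach as the paper, which in fact presents \cref{thm:normal} simply as the abstraction of the proof of \cref{thm:3}: first use the tripartite-MPS hypothesis with \cref{lem:inj_isomorph} to obtain the edge gauges $Z_e$ and define $\tilde B$, then use the nested-regions hypothesis with \cref{lem:inj_equal_tensors_2} to pin down each tensor up to a scalar. One small slip: with your convention $R\supset S$ (so $R=S\cup\{s\}$), you should apply the inverse of $A_S\propto\tilde B_S$ to the relation $A_R\propto\tilde B_R$ to isolate the site tensor, not the other way around.
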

Notice that this statement holds for a fixed system size (which is big enough to allow blocking into injective MPS), and translational invariance is not required. In case of a translational invariant system, the matrices $X, Y$ are also translational invariant.  Remarkably the gauge transformations relate tensor by tensor and not only the injective patch after blocking. In the following we present some particular cases, as normal MPS. For non-translational invariant MPS, the statement reads as
\begin{corollary}
  Let $\{A_s\}_{s=1}^n$ and $\{B_s\}_{s=1}^n$ two normal MPS on $n\geq 3L$ sites with the property that blocking any $L$ consecutive sites results in an injective tensor. Suppose they generate the same state:
  \begin{equation*}
    \ket{\Psi} = 
    \begin{tikzpicture}
      \draw (0.5,0) rectangle (4.5,-0.5);
      \foreach \x/\t in {1/1,2/2,4/n}{
        \node[tensor,label=below:$A_\t$] (t\x) at (\x,0) {};
        \draw[ ] (t\x) --++ (0,0.5);
      }
      \node[fill=white] at (3,0) {$\dots$};
    \end{tikzpicture} = 
    \begin{tikzpicture}
      \draw (0.5,0) rectangle (4.5,-0.5);
      \foreach \x/\t in {1/1,2/2,4/n}{
        \node[tensor,label=below:$B_\t$] (t\x) at (\x,0) {};
        \draw[ ] (t\x) --++ (0,0.5);
      }
      \node[fill=white] at (3,0) {$\dots$};
    \end{tikzpicture} \ .
  \end{equation*}  
  Then there are invertible matrices $Z_s$ (for $s=1 \dots n$, $n+1\equiv 1$) such that for all $s=1\dots n$
    \begin{equation*}
      \begin{tikzpicture}
        \draw (-0.5,0)--(0.5,0);
        \node[tensor,label=below:$B_s$] (t) at (0,0) {};
        \draw[ ] (t)--(0,0.5);
      \end{tikzpicture}  = 
      \begin{tikzpicture}
        \draw (-1,0)--(1,0);
        \node[tensorr,draw=black,fill=red,label=below:$Z_s^{-1}$] at (-0.5,0) {};
        \node[tensorr,draw=black,fill=red,label=below:$\ Z_{s+1}$\vphantom{$Z_s^{-1}$}] at (0.5,0) {};
        \node[tensor,label=below:$A_s$\vphantom{$Z_s{-1}$}] (t) at (0,0) {};
        \draw[ ] (t)--(0,0.5);
      \end{tikzpicture} \ .
    \end{equation*}    
    Moreover, the matrices $Z_s$ are unique up to a multiplicative constant.
\end{corollary}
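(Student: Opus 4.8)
The plan is to obtain this as a direct corollary of \cref{thm:normal} (the Fundamental Theorem of normal PEPS), so the only work is to check that a ring MPS on $n\geq 3L$ sites satisfies the two blocking hypotheses of that theorem. First I would record the basic observation: since any $L$ consecutive sites block to an injective tensor and the contraction of injective tensors is again injective (equivalently, by \cref{lem:injective_union}), \emph{every} arc of at least $L$ consecutive sites of the ring gives an injective tensor.

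To verify the first hypothesis --- blocking into a tripartite injective MPS around every edge --- I would fix the edge between sites $s$ and $s+1$ and take $C_1=\{s-L+1,\dots,s\}$, $C_2=\{s+1,\dots,s+L\}$, and $C_3$ the remaining $n-2L\geq L$ sites, which form a single arc. All three are arcs of length at least $L$, hence injective, and the chosen edge joins $C_1$ and $C_2$; this is exactly the situation of \cref{lem:inj_isomorph}, which attaches a gauge matrix to every edge. To verify the second hypothesis I would fix a site $s$ and take $S=\{s+1,\dots,s+L\}$ and $R=\{s\}\cup S=\{s,\dots,s+L\}$: both are arcs of length $\geq L$, hence injective, $R\setminus S=\{s\}$, and their complements are arcs of lengths $n-L\geq 2L$ and $n-L-1\geq 2L-1\geq L$ respectively (here $n\geq 3L$ and $L\geq 1$ are used), hence also injective. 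Both hypotheses of \cref{thm:normal} then hold, so it produces invertible matrices $Z_s$, one per edge, relating $A_s$ and $B_s$ by a local gauge transformation and unique up to a multiplicative constant --- which is precisely the assertion, with $n+1\equiv 1$ closing the ring.

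There is no genuine obstacle here; the content is purely the bookkeeping that $n\geq 3L$ is exactly the slack needed so that, in both hypotheses, every block and every complementary block is an arc of at least $L$ consecutive sites. If one preferred a self-contained argument, the same steps as in the proof of \cref{thm:inj_MPS} go through verbatim after replacing ``site'' by ``block of $L$ sites'': use \cref{lem:inj_isomorph} on the tripartite blocking to assign a gauge to each edge, absorb these gauges into the $B$ tensors to obtain $\tilde B_s$, note that inserting any matrix on any bond now yields the same state in both networks, and apply \cref{lem:inj_equal_tensors} to the bipartition $\{s\}$ versus the complementary arc to conclude $A_s=\lambda_s\tilde B_s$ with $\prod_s\lambda_s=1$; the constants $\lambda_s$ are then absorbed sequentially into the $Z_s$, and uniqueness up to a scalar is inherited from \cref{lem:inj_isomorph}.
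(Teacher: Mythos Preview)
Your proposal is correct and matches the paper's approach: the corollary is stated there as an immediate specialization of \cref{thm:normal} with no separate proof, and the only content is precisely the verification you supply, namely that $n\geq 3L$ guarantees both the tripartite injective blocking around every edge and the existence of injective regions $R\supset S$ differing by a single site with injective complements. Your alternative self-contained argument is also fine and simply retraces the proof of \cref{thm:normal} (equivalently \cref{thm:inj_MPS}) in the one-dimensional setting.
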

In the appendix of Ref. \cite{Molnar18A} we strengthen the statement to include system sizes $n\geq 2L +1$. For TI MPS, the statement reads as
\begin{corollary}
  Let $A$ and $B$ be two normal TI MPS on $n\geq 3L$ sites with the property that blocking $L$ consecutive sites results in an injective tensor. Suppose they generate the same state:
  \begin{equation*}
    \ket{\Psi} = 
    \begin{tikzpicture}
      \draw (0.5,0) rectangle (4.5,-0.5);
      \foreach \x/\t in {1/1,2/2,4/n}{
        \node[tensor,label=below:$A$] (t\x) at (\x,0) {};
        \draw[ ] (t\x) --++ (0,0.5);
      }
      \node[fill=white] at (3,0) {$\dots$};
    \end{tikzpicture} = 
    \begin{tikzpicture}
      \draw (0.5,0) rectangle (4.5,-0.5);
      \foreach \x/\t in {1/1,2/2,4/n}{
        \node[tensor,label=below:$B$] (t\x) at (\x,0) {};
        \draw[ ] (t\x) --++ (0,0.5);
      }
      \node[fill=white] at (3,0) {$\dots$};
    \end{tikzpicture} \ .
  \end{equation*}  
  Then there is an invertible matrix $Z$ and a constant $\lambda$ with $\lambda^n=1$ such that 
    \begin{equation*}
      \begin{tikzpicture}
        \draw (-0.5,0)--(0.5,0);
        \node[tensor,label=below:$B$] (t) at (0,0) {};
        \draw[ ] (t)--(0,0.5);
      \end{tikzpicture}  = \lambda \cdot
      \begin{tikzpicture}
        \draw (-1,0)--(1,0);
        \node[tensorr,draw=black,fill=red,label=below:$Z^{-1}$] at (-0.5,0) {};
        \node[tensorr,draw=black,fill=red,label=below:$Z$\vphantom{$Z^{-1}$}] at (0.5,0) {};
        \node[tensor,label=below:$A$\vphantom{$Z^{-1}$}] (t) at (0,0) {};
        \draw[ ] (t)--(0,0.5);
      \end{tikzpicture} \ .
    \end{equation*}    
    Moreover the matrix $Z$ is unique up to a multiplicative constant.
\end{corollary}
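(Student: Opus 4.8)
The plan is to obtain this result as a corollary of the non-translationally-invariant normal MPS statement just established, in exact parallel to the way the injective TI corollary was deduced from \cref{thm:inj_MPS}. Since $n \ge 3L$ guarantees that blocking any $L$ consecutive sites yields injective tensors, the hypotheses of the preceding (non-TI) normal MPS corollary — equivalently, of \cref{thm:normal} specialized to one dimension — are met. First I would apply that corollary to the two MPS, regarded for the moment as non-necessarily-translationally-invariant. This produces invertible matrices $Z_1,\dots,Z_n$ (with the cyclic convention $Z_{n+1}\equiv Z_1$) such that $B_s = Z_s^{-1} A_s Z_{s+1}$ on every edge, and — crucially for what follows — these matrices are unique up to a single global multiplicative constant.

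The heart of the argument is then to promote the site-dependent gauges $\{Z_s\}$ to one gauge $Z$ by exploiting translation invariance. Because $A_s = A$ and $B_s = B$ for all $s$, the cyclically shifted sequence $\{Z'_s\}_s := \{Z_{s+1}\}_s$ also satisfies $Z'^{-1}_s A_s Z'_{s+1} = Z_{s+1}^{-1} A_{s+1} Z_{s+2} = B_{s+1} = B = B_s$, so it is an equally valid gauge relating $A$ to $B$. By the uniqueness clause, the two sequences must coincide up to a single scalar: there is $c\in\mathbb{C}$, $c\neq 0$, with $Z_{s+1} = c\,Z_s$ for every $s$. This is precisely the relation $Z_s \propto Z_{s+1}$ that appeared in the derivation of the injective TI corollary, now transported to the normal setting.

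Iterating $Z_{s+1} = c\,Z_s$ gives $Z_s = c^{\,s-1} Z_1$, and the periodicity on the ring, $Z_{n+1} = Z_1$, forces $c^n = 1$. Setting $Z := Z_1$ and $\lambda := c$, I would substitute back to obtain $B = Z_1^{-1} A Z_2 = Z_1^{-1} A (c\,Z_1) = \lambda\, Z^{-1} A Z$, which is exactly the asserted relation with $\lambda^n = 1$; uniqueness of $Z$ up to a multiplicative constant is inherited directly from the uniqueness of the $Z_s$. The main point requiring care is the reading of the uniqueness statement: I must verify that ``unique up to a multiplicative constant'' yields a single global scalar $c$ rather than independent scalars per site. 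As in the injective case, rescaling $Z_s \mapsto c_s Z_s$ preserves $B_s = Z_s^{-1} A_s Z_{s+1}$ only when all $c_s$ are equal, so the clause does furnish one $c$; keeping the cyclic indexing consistent so that $Z_{n+1}=Z_1$ legitimately yields $\lambda^n=1$ is the remaining bookkeeping, and everything else is a direct transcription of the injective TI argument.
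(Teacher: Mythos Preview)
Your proposal is correct and follows essentially the same approach as the paper: apply the non-TI result, then use translation invariance together with the uniqueness clause to force all the $Z_s$ to be proportional, yielding a single $Z$ and an $n$-th root of unity $\lambda$. Your use of uniqueness (comparing the gauge sequence with its cyclic shift) is the natural way to obtain $Z_s\propto Z_{s+1}$ in the normal setting, where one cannot simply invert a single tensor $A$ as in the injective case; the paper presents this corollary without a separate proof, relying on the same reasoning.
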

In the appendix of Ref. \cite{Molnar18A} we strengthen the statement to include system sizes $n\geq 2L +1$. For 2D TI PEPS, the statement reads as 
\begin{corollary}
Let $A$ and $B$ be two normal 2D PEPS tensors such that every $L \times K$ region ($L,K>1$) is injective. Suppose they generate the same state on some region $n\times m$ with $n \geq 3 L$ and $m\geq 3K$. Then $A$ and $B$ are related to each other by a gauge transformation: 
  \begin{equation*}
    \begin{tikzpicture}[baseline=-0.1cm]
      \draw (-0.5,0)--(0.5,0);
      \draw (0,0,-0.5)--(0,0,0.5);
      \node[tensor,label=below:$B$] at (0,0) {};
      \draw[ ] (0,0)--++(0,0.5);
    \end{tikzpicture} = \lambda \cdot 
    \begin{tikzpicture}[baseline=-0.1cm]
      \draw (-1.3,0)--(1.3,0);
      \draw (0,0,-1.6)--(0,0,1.6);
      \node[tensor,label=below:$A$] at (0,0) {};
      \node[tensorr,draw=black,fill=red,label=right:$Y^{-1}$] at (0,0,-1) {};
      \node[tensorr,draw=black,fill=red,label=below:$Y$] at (0,0,1) {};
      \node[tensorr,draw=black,fill=red,label=below:$X^{-1}$] at (0.8,0,0) {};
      \node[tensorr,draw=black,fill=red,label=below:$X$] at (-0.8,0,0) {};
      \draw[ ] (0,0)--++(0,0.5);
    \end{tikzpicture} \ , 
  \end{equation*}
where $\lambda^{n\cdot m} = 1$ and $X,Y$ are invertible matrices. Moreover these matrices $X,Y$ are unique up to a multiplicative constant.
\end{corollary}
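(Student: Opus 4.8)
The plan is to obtain this corollary as the translationally invariant, two-dimensional specialization of \cref{thm:normal}, so that the work reduces to checking its two structural hypotheses and then collapsing the resulting family of edge gauges to two matrices. First I would fix an arbitrary edge $e$ and place on its two endpoints two disjoint injective rectangular blocks $A_1$ and $A_2$ (of shape $L\times K$ or $K\times L$ according to the orientation of $e$), arranged so that $e$ is the only edge they share and $A_1\cup A_2$ is connected; setting $A_3=(A_1\cup A_2)^{c}$, one must show that $A_3$ is injective. This is exactly where \cref{lem:injective_union} enters: it suffices to exhibit $A_3$ as a union of translated injective rectangular blocks, which is possible precisely because the bounds $n\ge 3L$ and $m\ge 3K$ leave enough room to cover the staircase-shaped hole left after deleting $A_1\cup A_2$. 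This is the direct analogue of \cref{ob2:T}, just with blocks of general size, so the PEPS can be blocked into a tripartite injective MPS around every edge, which is the first hypothesis of \cref{thm:normal}.

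For the second hypothesis I would mimic the construction of \cref{ob1:RS}: given a site $v$, build an injective region $R\ni v$ by attaching $v$ as a single-site bump to a union of admissible blocks, and put $S=R\setminus\{v\}$; then $R$, $S$ and, using $n\ge 3L$ and $m\ge 3K$ once more, their complements $R^{c}$ and $S^{c}$ are all unions of injective blocks, hence injective by \cref{lem:injective_union}. With both hypotheses established, \cref{thm:normal} supplies invertible matrices, one per edge, realizing a local gauge transformation that relates the tensors site by site, and these matrices are unique up to a multiplicative constant.

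It remains to collapse the per-edge gauges to the two matrices $X$ on horizontal edges and $Y$ on vertical edges. Since $A$ and $B$ are the same tensor at every site and the gauge on each edge is unique up to a scalar, writing the gauge equation at a site $v$ and at its translate and then inverting $A$ forces the gauge on an edge and on its translate to be proportional --- the same $Z_s\propto Z_{s+1}$ argument already used in the translationally invariant MPS case. This produces a single $X$ and a single $Y$, each defined up to a scalar, and hence the displayed identity up to an overall per-site constant $\lambda$. Contracting the whole periodic $n\times m$ lattice, every $X$, $X^{-1}$, $Y$, $Y^{-1}$ cancels with the matrix on the neighbouring tensor, so that only the product of the $nm$ copies of $\lambda$ survives; since $|\Psi_A\rangle=|\Psi_B\rangle$, this product equals $1$, i.e.\ $\lambda^{nm}=1$. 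Uniqueness of $X$ and $Y$ up to a constant is inherited from the Skolem--Noether uniqueness in \cref{lem:inj_isomorph}, propagated through \cref{thm:normal}.

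The step I expect to be the main obstacle is the first one: with only the bounds $n\ge 3L$ and $m\ge 3K$ available, one must genuinely produce the complement region $A_3$ (and likewise $R^{c}$ and $S^{c}$) as an explicit union of translated injective rectangular blocks that fits inside the lattice. The hole created by deleting $A_1\cup A_2$ has an L-shaped, staircase boundary, so the covering is geometrically delicate and is exactly what pins down the factor of $3$ in the size condition; everything downstream of that is bookkeeping built on the already-proven lemmas.
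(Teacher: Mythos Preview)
Your proposal is correct and follows essentially the same route as the paper. The corollary is presented there as the translationally invariant 2D specialization of \cref{thm:normal}, obtained by rerunning the proof of \cref{thm:3} with $L\times K$ blocks in place of $2\times 3$ blocks; the two structural hypotheses are checked exactly as you outline (generalizing \cref{ob1:RS} and \cref{ob2:T} via \cref{lem:injective_union}), and the collapse of the per-edge gauges to a single $X$ and $Y$ is done by the same translation-invariance argument you describe, yielding $\lambda^{nm}=1$ upon contraction.
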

In the appendix of \cite{Molnar18A} we strengthen the statement to include system sizes $n\geq 2L +1$ and $m\geq 2K +1$. Similar statements can be made for the non-translational invariant case as well as for other situations, including PEPS in 3 and higher dimensions, other lattices (e.g. triangular, honeycomb, Kagome), and other geometries (e.g. hyperbolic, as it is used in the AdS/CFT constructions \cite{Hayden16,Pastawski15}). 

Furthermore, the results hold for general tensor networks as well (including tensors that do not have physical index), provided that the TN satisfies the conditions in \cref{thm:normal}.  An example for TNs that contain tensors without physical index is the class of Tree Tensor Network (TTN) States \cite{Shi06}. For this particular class, our proof method works: given two normal or injective TTNs generating the same state, the generating tensors are related to each other by local gauge transformations. A sufficient criterion for a binary TTN to be normal is that the tensors are of minimal bond dimension \cite{Singh13}. MERA \cite{Vidal07} is another class of TNs that contain tensors without physical index. For this class, however, we did not find a simple way to block to tripartite injective MPS due to the particular geometry of the network. Therefore, our proof method is not directly applicable for MERA.

\section{Application to symmetries}

Consider a normal TN on $n$ sites describing the state $\ket{\Psi}$. Suppose that $\ket{\Psi}$ has a global on-site symmetry: $U^{\otimes n} \ket{\Psi_A} = \ket{\Psi_A}$. Then, if the TN satisfies the conditions in \cref{thm:normal}, the symmetry operators transform the individual tensors as 
$$U A_s= A_s (\bigotimes_i Z_i),$$
up to multiplicative constants. For example, in TI MPS, this is reflected as follows:
\begin{equation*}
  \begin{tikzpicture}[baseline=-0.1cm]
    \draw (-0.5,0)--(0.5,0);
    \node[tensor,label=below:$A$] (t) at (0,0) {};
    \draw (t)--(0,0.7);
    \node[tensorr,draw=black, fill=red, label=left:$U$] (t) at (0,0.35) {};
  \end{tikzpicture}  =  \lambda \cdot\ 
  \begin{tikzpicture}[baseline=-0.1cm]
    \draw (-1,0)--(1,0);
    \node[tensorr,draw=black, fill=red,label=below:$Z^{-1}$] at (-0.5,0) {};
    \node[tensorr,draw=black, fill=red, label=below:$Z$\vphantom{$Z^{-1}$}] at (0.5,0) {};
    \node[tensor,label=below:$A$\vphantom{$Z^{-1}$}] (t) at (0,0) {};
    \draw (t)--(0,0.5);
  \end{tikzpicture} \ ,
\end{equation*}    
with $\lambda^n = 1$. Similar statements are true in the non TI case (in which case the matrices of the gauges might be different on every edge) and for any geometry. An important situation happens when the symmetry operators are a representation of a group $G$:
$$U_gU_h=U_{gh},\;\; \forall g,h\in G.$$
Then, the matrices on the virtual indices are also labelled by elements of $G$: $Z_g$. But these matrices are not required to be a linear representation, they could form a projective representation (see \cref{Ap:projrep}):
$$Z_g Z_h= e^{i \omega(g,h)}Z_{gh},$$
this is because $U_g$ is translated into $Z_g\otimes Z^{-1}_g$ in the virtual level, so phase factors from $Z_g$ can cancel out with the phase factors from $Z^{-1}_g$. It turns out, that the different (non-equivalent) projective representations classify the 1D quantum phases under a symmetry -see \cite{Pollmann10, Chen11, Schuch11}.

\newpage \cleardoublepage

\section{Proof of the Fundamental theorem for $G$-injective PEPS}

We would like to follow the same route of \cref{sec:injPEPS}, the injective PEPS case, to prove a fundamental theorem for $G$-injective PEPS. One of the key steps in that proof is \cref{lem:inj_isomorph}, stated in terms of MPS, which establishes an isomorphism between the bonds of the tensor networks that generate the same state. We will work out the corresponding result for $G$-injective PEPS and we will show the differences with the injective case.

For that purpose we introduce the analogous MPS class of $G$-injective PEPS -see in \cref{def:GPEPS} in \cref{chapter:Intro}.

\begin{definition} An MPS is $G$-injective if its tensor $A$ satisfies the following
\begin{itemize}
\item the $G$-invariant condition: for a given representation $u_g$ of $G$
\begin{equation}\label{eq:Ginjmps}
  \begin{tikzpicture}[baseline=-0.1cm]
    \draw (-0.5,0)--(0.5,0);
    \node[tensor,label=below:$A$] (t) at (0,0) {};
    \draw (t)--(0,0.5);
  \end{tikzpicture}  = 
  \begin{tikzpicture}[baseline=-0.1cm]
    \draw (-0.8,0)--(0.8,0);
    \node[tensorr,draw=black, fill=red,label=below:$u_g^{-1}$] at (-0.5,0) {};
    \node[tensorr,draw=black, fill=red, label=below:$u_g$\vphantom{$u_g^{-1}$}] at (0.5,0) {};
    \node[tensor,label=below:$A$\vphantom{$u_g^{-1}$}] (t) at (0,0) {};
    \draw (t)--(0,0.5);
  \end{tikzpicture}
  \;\; \forall g\in G,
\end{equation}    
where $u_g$ contains all the irreps of $G$ in its decomposition.
\item there exists a tensor $A^{-1}$ such that
\begin{equation*}
 \begin{tikzpicture}
    \node[tensor,label=below:$A$] at (0,0) {};
     \node[tensor,label=above:$A^{-1}$] at (0,0.7) {};
     \draw (-0.5,0)--(0,0)--(0,0.7)--(-0.5,0.7);
      \draw (0.5,0)--(0,0)--(0,0.7)--(0.5,0.7);
   \end{tikzpicture}
   =\frac{1}{|G|}\sum_g
  \begin{tikzpicture}
    \draw (-0.8,0) -- (-0.3,0) -- (-0.3,0.7) -- (-0.8,0.7);
    \draw ( 0.8,0) -- ( 0.3,0) -- ( 0.3,0.7) -- ( 0.8,0.7);
    \node[tensorr, label=left:$u_g^{-1}$]    at (-0.3,0.35) {};
    \node[tensorr, label=right:$u_g$]  at ( 0.3,0.35) {};
  \end{tikzpicture} . 
  \end{equation*}    
\end{itemize}
We note that this definition can be generalized to non-necessarily TI MPS where the representation of $G$ on each side of the tensor in Eq.\eqref{eq:Ginjmps} can be different.
\end{definition}

We can block the $G$-injective PEPS in a tripartite MPS:
 \begin{equation*}
   \begin{tikzpicture}
         
    \draw[canvas is xz plane at y=0] (-0.5,-0.5) grid (3.5,2.5);
    \foreach \x in {0,1,2,3}{
      \foreach \z in {0,1,2}{
          \draw[green,  ] (1,0,1)--(2,0,1);
          \draw (\x,0,\z)--(\x,0.3,\z);
       \node[tensor] at (\x,0,\z) {}; 
      }
        \node[tensor, red] at (1,0,1) {};
            \node[anchor=north] at (1.1,0,1) {$A_1$};
        \node[tensor, blue]  at (2,0,1) {};
                    \node[anchor=north] at (2.1,0,1) {$A_2$};
    }
  \end{tikzpicture} \; \equiv \;
    \begin{tikzpicture}
    \draw (0.5,0) rectangle (3.5,-0.6);
    \draw[green,  ] (1,0)--(2,0);
    \foreach \x in {1,2,3}{
      \draw (\x,0) --++ (0,0.3);
    }    %
    \node[red,tensor,label=below:$A_1$] at (1,0) {}; 
    \node[blue,tensor,label=below:$A_2$] at (2,0) {}; 
    \node[ tensor,label=below:$A_3$] at (3,0) {}; 
  \end{tikzpicture},
\end{equation*}
where the single edge 1-2 (green) can be any edge of the lattice. From the $G$-injectivity of the PEPS tensors we can conclude that the single tensors $A_1$ and $A_2$ are $G$-injective MPS tensors. This is also the case for $A_3$; the $G$-invariance is guaranteed by the $G$-invariance of the PEPS tensors. Also there exists an inverse, which is the concatenation of the inverse PEPS tensors plus the matrix $\mathfrak{D}$ on each bond. 

\begin{observation}
We recall that the representation can be decomposed as $u_g\cong \sum_\sigma \pi_\sigma(g)\otimes \id_{m_\sigma}$ (see \cref{obs:semireg}). The block decomposition of the matrix algebra generated by this representation is $\mathcal{A}^A\cong \sum_\sigma \mathcal{M}_{d_\sigma}\otimes \id_{m_\sigma}$, where the super-index denotes the tensor, $A$ or $B$, which the representation comes from. Then the matrix algebra that commute with this $\mathcal{A}^A$, the centralizer $\mathcal{C}^A$, has the structure $\mathcal{C}^A \cong \sum_\sigma \id_{d_\sigma}\otimes \mathcal{M}_{m_\sigma}$. This centralizer is associated to each edge of the tensor network and it can be different for each edge if the tensor network is not translationally invariant, so we will denote it by $\mathcal{C}_e^A$.
\end{observation}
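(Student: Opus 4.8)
The Observation is a specialization of the Artin--Wedderburn structure theory of $\C[G]$ together with a commutant computation on a tensor product, so the plan is to assemble it from those two classical ingredients. The decomposition $u_g\cong \bigoplus_\sigma \pi_\sigma(g)\otimes \id_{m_\sigma}$ is simply complete reducibility of a finite-group representation over $\C$ (Maschke's theorem), with equivalent irreps grouped together and $m_\sigma$ their multiplicities; this is already recorded in \cref{obs:semireg}, and by the $G$-injectivity hypothesis all irreps of $G$ occur (so the sum runs over every $\sigma$). I would fix a basis realizing this decomposition, so that the carrier space is $\bigoplus_\sigma(\C^{d_\sigma}\otimes\C^{m_\sigma})$ and each $u_g$ is block diagonal of the stated form.

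Next I would identify $\mathcal{A}^A$, the algebra generated by $\{u_g\}_{g\in G}$, as the image of the group algebra $\C[G]$ under the representation. The inclusion $\mathcal{A}^A\subseteq \bigoplus_\sigma \mathcal{M}_{d_\sigma}\otimes\id_{m_\sigma}$ is immediate, since every $u_g$ lies in the right-hand side and the latter is closed under sums and products. The reverse inclusion is the step with genuine content: I would invoke the Schur orthogonality relations, which say that the matrix coefficients $g\mapsto(\pi_\sigma(g))_{ij}$, ranged over all $\sigma$ and all indices $i,j$, are linearly independent functions on $G$, $\sum_\sigma d_\sigma^2=|G|$ of them. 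This makes the linear map $\C[G]\to\bigoplus_\sigma \mathcal{M}_{d_\sigma}$, $\sum_g c_g\,g\mapsto \bigoplus_\sigma \sum_g c_g\,\pi_\sigma(g)$, surjective (indeed the Artin--Wedderburn isomorphism $\C[G]\cong\bigoplus_\sigma\mathcal{M}_{d_\sigma}$). Since the multiplicity spaces are untouched by $u_g$, tensoring each block with $\id_{m_\sigma}$ yields $\mathcal{A}^A=\bigoplus_\sigma \mathcal{M}_{d_\sigma}\otimes\id_{m_\sigma}$; equivalently one may apply Burnside's (Jacobson density) theorem block by block.

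For the centralizer I would argue in two steps. The center of $\mathcal{A}^A$ contains the orthogonal projectors onto the isotypic components, namely the idempotents $\id_{d_\sigma}\otimes\id_{m_\sigma}$, which are realized inside $\mathcal{A}^A$ as $\frac{d_\sigma}{|G|}\sum_g \overline{\chi_\sigma(g)}\,u_g$; hence any $X\in\mathcal{C}^A$ commutes with them and is block diagonal, $X=\bigoplus_\sigma X_\sigma$ with $X_\sigma$ acting on $\C^{d_\sigma}\otimes\C^{m_\sigma}$. Within a fixed block, $X_\sigma$ must commute with all of $\mathcal{M}_{d_\sigma}\otimes\id_{m_\sigma}$; writing $X_\sigma$ as an $m_\sigma\times m_\sigma$ array of $d_\sigma\times d_\sigma$ blocks and imposing commutation with $M\otimes\id_{m_\sigma}$ for every $M\in\mathcal{M}_{d_\sigma}$ forces each such block to be a scalar multiple of $\id_{d_\sigma}$. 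Therefore $X_\sigma\in\id_{d_\sigma}\otimes\mathcal{M}_{m_\sigma}$ and $\mathcal{C}^A=\bigoplus_\sigma \id_{d_\sigma}\otimes\mathcal{M}_{m_\sigma}$, as claimed. The subscript in $\mathcal{C}_e^A$ just records that this computation is run edge by edge, and that the representation $u_g$ (hence the data $d_\sigma,m_\sigma$) may depend on the edge when the tensor network is not translationally invariant.

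The only non-routine point is the surjectivity in the second paragraph, i.e.\ that the representation algebra fills out all of $\bigoplus_\sigma\mathcal{M}_{d_\sigma}$ rather than a proper subalgebra; this rests entirely on the linear independence of the matrix coefficients of inequivalent irreducibles. Everything else is bookkeeping with block-diagonal matrices, and the two computed inclusions together are exactly the mutual-commutant (double centralizer) statement for $u_g$.
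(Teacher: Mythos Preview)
Your argument is correct. In the paper this Observation is stated as a recollection of standard representation-theoretic facts and is not accompanied by any proof; your write-up supplies exactly the standard Artin--Wedderburn/Schur justification that the paper takes for granted.
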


\begin{lemma}\label{lem:Ginj_isomorph}
  Suppose $A,B$ are two G-injective, non-necessarily translational invariant MPS on three sites that generate the same state. Then for every edge $e$ and for every matrix $X\in\mathcal{C}_e^A$ (the centralizer of the representation on $A$ of the bond/edge $e$) there is a matrix $Y\in\mathcal{C}_e^B$ such that
  \begin{equation*}
    \begin{tikzpicture}
      \draw (0.5,0) rectangle (3.5,-0.5);
      \foreach \x in {1,2,3}{
        \node[tensor,label=below:$A_\x$] (t\x) at (\x,0) {};
        \draw (t\x) --++ (0,0.3);
      }
      \filldraw[draw=black, fill=red] (1.5,0) circle (0.06);
       \node[anchor=south] at (1.5,0) {$X$};
    \end{tikzpicture} = 
    \begin{tikzpicture}
      \draw (0.5,0) rectangle (3.5,-0.5);
      \foreach \x in {1,2,3}{
        \node[tensor,label=below:$B_\x$] (t\x) at (\x,0) {};
        \draw (t\x) --++ (0,0.3);
      }
      \filldraw[draw=black, fill=red] (1.5,0) circle (0.06);
       \node[anchor=south] at (1.5,0) {$Y$};  
         \end{tikzpicture}\; ,
  \end{equation*} 
and the mapping $X\mapsto Y$ is an algebra isomorphism between the corresponding centralizers. 
\end{lemma}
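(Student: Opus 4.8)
The plan is to mimic the proof of \cref{lem:inj_isomorph}, but being careful about the fact that $G$-injective tensors are not invertible: the contraction $A^{-1}A$ only gives the projector $\mathcal{P}_G$ onto the $G$-invariant subspace, not the identity. The key observation is that the operators we are allowed to insert on a bond must commute with the representation $u_g$, i.e. must lie in the centralizer $\mathcal{C}_e$, precisely so that they are compatible with (can be ``pushed through'' using) the $G$-invariance relation \cref{eq:Ginjmps}. So the whole argument runs in the category of centralizer algebras instead of full matrix algebras.

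First I would set up the two physical-to-virtual correspondences. Given $X \in \mathcal{C}_e^A$ inserted on the edge $e$ of the $A$-network, I define physical operators $O_1(X)$ and $O_2(X)$ on the two sites adjacent to $e$ by the same diagrams as in \cref{eq:X->O}, except that now closing the loop via $A^{-1}A = \mathcal{P}_G$ produces $\frac{1}{|G|}\sum_g (u_g \otimes \cdots) X (u_{g}^{-1}\otimes\cdots)$; since $X$ commutes with $u_g$ this sum collapses back to $X$, so the identity $X \equiv O_1(X) \equiv O_2(X)$ on the tripartite MPS still holds. One checks that $X\mapsto O_1(X)$ and $X\mapsto O_2^T(X)$ are still algebra homomorphisms on $\mathcal{C}_e^A$. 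Conversely, given physical operators $S_1,S_2$ on neighbouring sites of the $B$-network that act identically on the state, I invert the ``far'' tensors (using that their inverse composed with them gives a group-averaging projector, which on the already $G$-invariant subspace generated by the tensors acts trivially up to the $\frac{1}{|G|}\sum_g$ which again reconstitutes the configuration) to produce a virtual operator $W$ on $e$; the resulting $W$ automatically lies in $\mathcal{C}_e^B$ because any virtual operator that is ``seen'' by the tensors must be compatible with the $G$-invariance, and by $G$-injectivity \cref{Ginje} the maps $S_1\mapsto W$ and $S_2^T\mapsto W$ are well-defined algebra homomorphisms.

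Then I would chain these: inserting $X \in \mathcal{C}_e^A$ in the $A$-network equals $O_1(X)$ acting physically, which (since $\ket{\Psi_A}=\ket{\Psi_B}$) equals $O_1(X)$ acting on the $B$-network, which by the converse direction equals some virtual $Y \in \mathcal{C}_e^B$ on $e$. By $G$-injectivity of $B$ the map $X\mapsto Y$ is well-defined; by $G$-injectivity of $A$ it is injective; by symmetry of the argument under $A\leftrightarrow B$ it is surjective; and it is an algebra homomorphism because it is the composite of the homomorphisms built above. Hence $X\mapsto Y$ is an algebra isomorphism $\mathcal{C}_e^A \to \mathcal{C}_e^B$, which is the claim.

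The main obstacle — and the real difference from the injective case — is verifying carefully that the ``push the operator around using $G$-invariance'' steps are valid for operators in the centralizer, and that when one closes loops with $A^{-1}A=\mathcal{P}_G$ the unwanted group-average collapses correctly (i.e. that inserting $\mathcal{P}_G$ on a bond of a configuration already built from $G$-invariant tensors does nothing, and that $X$ commuting with $u_g$ lets $\frac{1}{|G|}\sum_g u_g X u_g^{-1} = X$). A secondary subtlety is that $\mathcal{C}_e^A \cong \bigoplus_\sigma \id_{d_\sigma}\otimes \mathcal{M}_{m_\sigma}$ is a semisimple but generally \emph{non-simple} algebra, so one should not expect (and the lemma does not claim) that the isomorphism is inner; one only needs it to be an abstract algebra isomorphism. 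I would also need to confirm that blocking $A_3$ (the ``rest'') preserves $G$-injectivity with the \emph{same} representation on the relevant edge, which was already argued in the paragraph preceding the lemma via the matrix $\mathfrak{D}$ on each internal bond, and that the homomorphisms $O_i$ do not depend on $A_3$, exactly as in the injective case.
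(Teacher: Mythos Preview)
Your proposal is correct and follows essentially the same approach as the paper's proof: define physical operators $O_1,O_2$ via the pseudo-inverse, use the centralizer condition to verify they give algebra homomorphisms, run the converse direction on the $B$-network to extract a virtual $W\in\mathcal{C}_e^B$, and chain the two to obtain the isomorphism. One small sharpening: the centralizer condition $[X,u_g]=0$ is not needed to verify that inserting $X$ equals applying $O_1(X)$ (that follows already from $A_1\mathcal{P}_G=A_1$), but rather to prove the multiplicativity $O_1(X)O_1(Y)=O_1(XY)$ and, in the converse direction, the paper obtains $W\in\mathcal{C}_e^B$ concretely from the $G$-invariance of $B_2$ and $B_2^{-1}$ in the defining contraction (with a $\mathfrak{D}$ inserted), which then also guarantees uniqueness of $W$ via $G$-injectivity.
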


\begin{proof}[Proof of \cref{lem:Ginj_isomorph}] 
Consider an MPS of three sites with a matrix $X\in\mathcal{C}_{12}^A$ inserted on the bond $(1,2)$. This state can be realized by physical operations acting on either of the two neighboring sites of the MPS:
\begin{equation*}
  \begin{tikzpicture}
    \draw (0.5,0) rectangle (3.5,-0.5);
    \foreach \x in {1,2,3}{
      \node[tensor,label=below:$A_\x$] (t\x) at (\x,0) {};
      \draw (t\x) --++ (0,0.3);
    }
      \filldraw[draw=black, fill=red] (1.5,0) circle (0.06);
       \node[anchor=south] at (1.5,0) {$X$};
         \end{tikzpicture} = 
  \begin{tikzpicture}
    \draw (0.5,0) rectangle (3.5,-0.5);
    \foreach \x in {1,2,3}{
      \node[tensor,label=below:$A_\x$] (t\x) at (\x,0) {};
      \draw (t\x) --++ (0,0.3); }
        \draw (1,0.3)--(1,0.5);
    \filldraw[draw=black, fill=red] (1,0.3) circle (0.06);
       \node[anchor=east] at (1,0.3) {$O_1$};
  \end{tikzpicture} 
  = 
  \begin{tikzpicture}
    \draw (0.5,0) rectangle (3.5,-0.5);
    \foreach \x in {1,2,3}{
      \node[tensor,label=below:$A_\x$] (t\x) at (\x,0) {};
      \draw (t\x) --++ (0,0.3);
    }
          \draw (2,0.3)--(2,0.5);
    \filldraw[draw=black, fill=red] (2,0.3) circle (0.06);
       \node[anchor=east] at (2,0.3) {$O_2$}; 
  \end{tikzpicture} \ ,
\end{equation*} 
where
\begin{equation}\label{eq:gX->O}
   O_1 = 
  \begin{tikzpicture}
    \draw (-0.5,0) rectangle (0.5,1);
    \node[tensor,label=below:$A_1$] (t) at (0,1) {};
    \node[tensor,label=above:$A_1^{-1}$] (b) at (0,0) {};
    \filldraw[draw=black, fill=red] (0.5,0.5) circle (0.06);
       \node[anchor=west] at (0.5,0.5) {$X$};
    \draw (t)--++(0,0.3);
    \draw (b)--++(0,-0.3);
  \end{tikzpicture} \quad \text{and} \quad 
   O_2 = 
  \begin{tikzpicture}
    \draw (-0.5,0) rectangle (0.5,1);
    \node[tensor,label=below:$A_2$] (t) at (0,1) {};
    \node[tensor,label=above:$A_2^{-1}$] (b) at (0,0) {};
     \filldraw[draw=black, fill=red] (-0.5,0.5) circle (0.06);
       \node[anchor=east] at (-0.5,0.5) {$X$};
    \draw (t)--++(0,0.3);
    \draw (b)--++(0,-0.3);
  \end{tikzpicture} \ .     
\end{equation}
The mappings $X\mapsto O_1$ and $X\mapsto O_2^T$ are algebra homomorphisms as they are linear and satisfy for $X,Y\in \mathcal{C}_{12}^A$
 \begin{equation*}
  O_1(X) \cdot O_1(Y) = 
  \begin{tikzpicture}
    \draw (-0.5,0) rectangle (0.5,1);
    \node[tensor,label=below:$A_1$] (t) at (0,1) {};
    \node[tensor,label=above:$A_1^{-1}$] (b) at (0,0) {};
       \filldraw[draw=black, fill=red] (0.5,0.5) circle (0.06);
       \node[anchor=west] at (0.5,0.5) {$Y$};
    \draw (t)--++(0,0.3);
    \draw (b)--++(0,-0.3);
    \begin{scope}[shift={(0,1.5)}]
      \draw (-0.5,0) rectangle (0.5,1);
      \node[tensor,label=below:$A_1$] (t) at (0,1) {};
      \node[tensor,label=above:$A_1^{-1}$] (b) at (0,0) {};
         \filldraw[draw=black, fill=red] (0.5,0.5) circle (0.06);
       \node[anchor=west] at (0.5,0.5) {$X$};
      \draw (t)--++(0,0.3);
      \draw (b)--++(0,-0.3);
    \end{scope}
  \end{tikzpicture}  = \frac{1}{|G|}\sum_{g\in G}
  \begin{tikzpicture}
    \draw (-0.5,0) rectangle (0.5,2.5);
    \node[tensor,label=below:$A_1$] (t) at (0,2.5) {};
    \node[tensor,label=above:$A_1^{-1}$] (b) at (0,0) {};
       \filldraw[draw=black, fill=red] (0.5,0.5) circle (0.06);
       \node[anchor=west] at (0.5,0.5) {$Y$};
          \filldraw[draw=black, fill=red] (0.5,2.0) circle (0.06);
       \node[anchor=west] at (0.5,2.0) {$X$};
\filldraw (0.5,1.25) circle (0.06);
       \node [anchor=west] at (0.5,1.25) {$g$};
       \filldraw  (-0.5,1.25) circle (0.06);
       \node [anchor=east] at (-0.5,1.25) {$g^{-1}$};
    \draw (t)--++(0,0.3);
    \draw(b)--++(0,-0.3);
  \end{tikzpicture} = \frac{1}{|G|}\sum_{g\in G}
    \begin{tikzpicture}
      \draw (-0.5,0) rectangle (0.5,2.5);
      \node[tensor,label=below:$A_1$] (t) at (0,2.5) {};
      \node[tensor,label=above:$A_1^{-1}$] (b) at (0,0) {};
         \filldraw[draw=black, fill=red] (0.5,0.5) circle (0.06);
       \node[anchor=west] at (0.5,0.5) {$Y$};
          \filldraw[draw=black, fill=red] (0.5,1.25) circle (0.06);
       \node[anchor=west] at (0.5,1.25) {$X$};
  \filldraw (0.5,2.0) circle (0.06);
       \node[anchor=west] at (0.5,2.0) {$g$};
  \filldraw (-0.5,2.0) circle (0.06);
       \node[anchor=east] at (-0.5,2.0) {$g^{-1}$};
      \draw (t)--++(0,0.3);
      \draw (b)--++(0,-0.3);
    \end{tikzpicture} = O_1(XY) \ ,   
\end{equation*}
where the third equation holds because $gX = Xg$ for all $g\in G$ ($X\in\mathcal{C}_{12}^A$) and the last equation holds since $A_1$ is $G$-invariant. We consider now the oposite situation where two physical operations on neighbouring sites give the same modified MPS:
\begin{equation}\label{eq:resonate}
   \begin{tikzpicture}
    \draw (0.5,0) rectangle (3.5,-0.5);
    \foreach \x in {1,2,3}{
      \node[tensor,label=below:$B_\x$] (t\x) at (\x,0) {};
      \draw (t\x) --++ (0,0.3); }
        \draw (1,0.3)--(1,0.5);
    \filldraw[draw=black, fill=red] (1,0.3) circle (0.06);
       \node[anchor=east] at (1,0.3) {$O_1$};
  \end{tikzpicture} 
  = 
  \begin{tikzpicture}
    \draw (0.5,0) rectangle (3.5,-0.5);
    \foreach \x in {1,2,3}{
      \node[tensor,label=below:$B_\x$] (t\x) at (\x,0) {};
      \draw (t\x) --++ (0,0.3);
    }
          \draw (2,0.3)--(2,0.5);
    \filldraw[draw=black, fill=red] (2,0.3) circle (0.06);
       \node[anchor=east] at (2,0.3) {$O_2$}; 
  \end{tikzpicture}.
\end{equation} 

Inverting $B_2$, $B_3$  and inserting $\mathfrak{D}$,  the l.h.s. becomes
\begin{equation*}
  \begin{tikzpicture}
    \draw (0.5,0) rectangle (3.5,-0.5);
    \foreach \x in {1,2,3}{
      \node[tensor,label=below:$B_\x$] (t\x) at (\x,0) {};
      \draw (t\x) --++ (0,0.5);
    }
     \filldraw[draw=black, fill=red] (1,0.3) circle (0.06);
       \node[anchor=east] at (1,0.3) {$O_1$};
    \draw (1.5,0.5)--(3.5,0.5);
    \node[tensor,label=above:$B_2^{-1}$] (i2) at (2,0.5) {};
    \node[tensor,label=above:$B_3^{-1}$] (i3) at (3,0.5) {};
         \filldraw (2.5,0.5) circle (0.06);
       \node[anchor=north] at (2.5,0.5) {$\mathfrak{D}$};
  \end{tikzpicture} = \sum_{g\in G} 
  \begin{tikzpicture}
    \draw (1.5,0.5)--(2,0.5)--(2,0)--(0.5,0) --(0.5,-0.5)--(3.5,-0.5)--(3.5,0)--(3,0)--(3,0.5)--(3.5,0.5);
    \node[tensor,label=below:$B_1$] (t1) at (1,0) {};
    \draw(t1) --++ (0,0.5);
      \filldraw[draw=black, fill=red] (1,0.3) circle (0.06);
       \node[anchor=east] at (1,0.3) {$O_1$};
    \node[tensorr,label=right:$g^{-1}$]  at (3,0.25) {};
    \node[tensorr,label=left:$g$]  at (2,0.25) {};
  \end{tikzpicture} \ = 
  \begin{tikzpicture}
    \node[tensor,label=below:$B_1$] (l) at (0,-0.6) {};
      \draw (l)--++(0,0.6);
    \draw (l)++(-0.6,0)--(l)--++(0.6,0);
    \node[tensorr, draw=black, fill=red,label=left:$O_1$] (o) at ($(l)+(0,0.3)$) {};
  \end{tikzpicture} \ ,  
\end{equation*}
where in the last equation we have used the $G$-invariance of $B_1$. Similarly,
\begin{equation*}
  \begin{tikzpicture}
    \draw (0.5,0) rectangle (3.5,-0.5);
    \foreach \x in {1,2,3}{
      \node[tensor,label=below:$B_\x$] (t\x) at (\x,0) {};
    }
    \draw (t1) --++ (0,0.5);
    \draw (t2) --++ (0,1);
    \draw (t3) --++ (0,1);
    \draw (1.5,1)--(3.5,1);
    \node[tensorr, draw=black, fill=red,label=left:$O_2$] (o) at (2,0.5) {};
    \node[tensor,label=above:$B_2^{-1}$] (i2) at (2,1) {};
    \node[tensor,label=above:$B_3^{-1}$] (i3) at (3,1) {};
    \node[tensorr,label=above:$\mathfrak{D}$] (d) at (2.5,1) {};
  \end{tikzpicture} \ = \sum_{g\in G} \ 
  \begin{tikzpicture}
    \draw (1.5,1)--(3,1)--(3,0)--(0.5,0) --(0.5,-0.5)--(4,-0.5)--(4,0)--(3.5,0)--(3.5,1)--(4,1);
    \foreach \x in {1,2}{
      \node[tensor,label=below:$B_\x$] (t\x) at (\x,0) {};
    }
    \draw(t1) --++ (0,0.5);
    \draw (t2) --++ (0,1);
    \node[tensorr, draw=black, fill=red,label=left:$O_2$] (o) at (2,0.5) {};
    \node[tensor,label=above:$B_2^{-1}$] (i2) at (2,1) {};
    \node[tensorr,label=above:$\mathfrak{D}$] (d) at (2.5,1) {};
    \node[tensorr,label=right:$g^{-1}$]  at (3.5,0.5) {};
    \node[tensorr,label=left:$g$]  at (3,0.5) {};
  \end{tikzpicture} \ = 
  \begin{tikzpicture}
    \node[tensor,label=below:$B_1$] (l) at (0,-0.6) {};
    \node[tensor,label=below:$B_2$] (r) at (1,-0.6) {};
    \foreach \x in {l,r}{
      \draw (\x)--++(0,0.6);
    }
    \draw (l)++(-0.6,0)--(l)--(r)--++(0.6,0);
    \node[tensorr, draw=black, fill=red,label=left:$O_2$] (o) at ($(r)+(0,0.5)$) {};
    \draw (o) --++(0,0.5);      
    \node[tensor,label=above:$B_2^{-1}$] (i) at ($ (r) + (0,1)$) {};
    \draw (i)++(-0.6,0)--(i)--++(0.6,0)--++(0,-1);
        \node[tensorr,label=left:$\mathfrak{D}$]  at (1.6,0) {};
  \end{tikzpicture}  =  
  \begin{tikzpicture}
    \draw (-0.5,0)--(1.5,0);
    \node[tensor,label=below:$B_1$] (a) at (0,0) {};
    \draw (a)--++(0,0.4);
    \node[tensorr, draw=black, fill=red, label=below:$W$] (x) at (1,0) {};
  \end{tikzpicture} \ ,
\end{equation*}
where in the second equation we have used the $G$-invariance of the tensors $B_1$ and $B_2$. As both $B_2^{-1}$ and $B_2$ are $G$-invariant, $W\in \mathcal{C}_{12}$. Therefore
\begin{equation}\label{eq:inj_O->X_argument}
   \begin{tikzpicture}
    \node[tensor,label=below:$B_1$] (l) at (0,-0.6) {};
      \draw (l)--++(0,0.6);
    \draw (l)++(-0.6,0)--(l)--++(0.6,0);
    \node[tensorr, draw=black, fill=red,label=left:$O_1$] (o) at ($(l)+(0,0.3)$) {};
  \end{tikzpicture} = 
  \begin{tikzpicture}
    \draw (-0.5,0)--(1.5,0);
    \node[tensor,label=below:$B_1$] (a) at (0,0) {};
    \draw (a)--++(0,0.4);
    \node[tensorr, draw=black, fill=red, label=below:$W$] (x) at (1,0) {};
  \end{tikzpicture} \ ,
\end{equation}  
for some matrix $W\in \mathcal{C}_{12}$.  Similarly, inverting $B_1$ and $B_3$, we arrive at the identity 
\begin{equation*}
  \begin{tikzpicture}
    \node[tensor,label=below:$B_2$] (l) at (0,-0.6) {};
      \draw (l)--++(0,0.6);
    \draw (l)++(-0.6,0)--(l)--++(0.6,0);
    \node[tensorr, draw=black, fill=red,label=left:$O_2$] (o) at ($(l)+(0,0.3)$) {};
  \end{tikzpicture} = 
  \begin{tikzpicture}
    \draw (-1.5,0)--(0.5,0);
    \node[tensor,label=below:$B_2$] (a) at (0,0) {};
    \draw (a)--++(0,0.4);
    \node[tensorr, draw=black, fill=red, label=below:$V$] (x) at (-1,0) {};
  \end{tikzpicture} \ ,
\end{equation*}
for some matrix $V\in\mathcal{C}_{12}$.  Therefore
\begin{equation*}
  \begin{tikzpicture}
    \draw (0.5,0) rectangle (3.5,-0.5);
    \foreach \x in {1,2,3}{
      \node[tensor,label=below:$B_\x$] (t\x) at (\x,0) {};
      \draw (t\x) --++ (0,0.3);
    }
    \node[tensorr, draw=black, fill=red, label=above:$W$] (x) at (1.5,0) {};
  \end{tikzpicture} = 
     \begin{tikzpicture}
    \draw (0.5,0) rectangle (3.5,-0.5);
    \foreach \x in {1,2,3}{
      \node[tensor,label=below:$B_\x$] (t\x) at (\x,0) {};
      \draw (t\x) --++ (0,0.3); }
        \draw (1,0.3)--(1,0.5);
    \filldraw[draw=black, fill=red] (1,0.3) circle (0.06);
       \node[anchor=east] at (1,0.3) {$O_1$};
  \end{tikzpicture} 
  = 
  \begin{tikzpicture}
    \draw (0.5,0) rectangle (3.5,-0.5);
    \foreach \x in {1,2,3}{
      \node[tensor,label=below:$B_\x$] (t\x) at (\x,0) {};
      \draw (t\x) --++ (0,0.3);
    }
          \draw (2,0.3)--(2,0.5);
    \filldraw[draw=black, fill=red] (2,0.3) circle (0.06);
       \node[anchor=east] at (2,0.3) {$O_2$}; 
  \end{tikzpicture} \ = \ 
  \begin{tikzpicture}
    \draw (0.5,0) rectangle (3.5,-0.5);
    \foreach \x in {1,2,3}{
      \node[tensor,label=below:$B_\x$] (t\x) at (\x,0) {};
      \draw (t\x) --++ (0,0.3);
    }
    \node[tensorr, draw=black, fill=red , label=above:$V$] (x) at (1.5,0) {};
  \end{tikzpicture} \ ,
\end{equation*}
and thus by the $G$-injectivity, as both $V\in\mathcal{C}_{12}$ and $W\in\mathcal{C}_{12}$, $V=W$. Therefore the maps $O_1\mapsto W$ and $O_2^T \mapsto W$ are uniquely defined and are algebra homomorphisms.

Consider now two three-site $G$-injective MPS generating the same state:
\begin{equation*}
  \begin{tikzpicture}
    \draw (0.5,0) rectangle (3.5,-0.5);
    \foreach \x in {1,2,3}{
      \node[tensor,label=below:$A_\x$] (t\x) at (\x,0) {};
      \draw (t\x) --++ (0,0.3);
    }
  \end{tikzpicture} = 
  \begin{tikzpicture}
    \draw (0.5,0) rectangle (3.5,-0.5);
    \foreach \x in {1,2,3}{
      \node[tensor, label=below:$B_\x$] (t\x) at (\x,0) {};
      \draw (t\x) --++ (0,0.3);
    }
  \end{tikzpicture} \ .
\end{equation*}
Deform the MPS on the LHS by inserting a matrix $X\in\mathcal{C}_{12}$ on the bond $(1,2)$. By the above arguments, this deformation is equivalent to either of two physical operations:
\begin{equation*}
  \begin{tikzpicture}
    \draw (0.5,0) rectangle (3.5,-0.5);
    \foreach \x in {1,2,3}{
      \node[tensor,label=below:$A_\x$] (t\x) at (\x,0) {};
      \draw[ ] (t\x) --++ (0,0.3);
    }
    \node[tensorr, draw=black, fill=red, label=above:$X$] (x) at (1.5,0) {};
  \end{tikzpicture} = 
\begin{tikzpicture}
    \draw (0.5,0) rectangle (3.5,-0.5);
    \foreach \x in {1,2,3}{
      \node[tensor,label=below:$A_\x$] (t\x) at (\x,0) {};
      \draw (t\x) --++ (0,0.3); }
        \draw (1,0.3)--(1,0.5);
    \filldraw[draw=black, fill=red] (1,0.3) circle (0.06);
       \node[anchor=east] at (1,0.3) {$O_1$};
  \end{tikzpicture} 
  = 
  \begin{tikzpicture}
    \draw (0.5,0) rectangle (3.5,-0.5);
    \foreach \x in {1,2,3}{
      \node[tensor,label=below:$A_\x$] (t\x) at (\x,0) {};
      \draw (t\x) --++ (0,0.3);
    }
          \draw (2,0.3)--(2,0.5);
    \filldraw[draw=black, fill=red] (2,0.3) circle (0.06);
       \node[anchor=east] at (2,0.3) {$O_2$}; 
  \end{tikzpicture}  \ .
\end{equation*} 
As the MPS defined by the $A$ and $B$ tensors is the same state, these physical operators also satisfy
\begin{equation*}
  \begin{tikzpicture}
    \draw (0.5,0) rectangle (3.5,-0.5);
    \foreach \x in {1,2,3}{
      \node[tensor,label=below:$A_\x$] (t\x) at (\x,0) {};
      \draw[ ] (t\x) --++ (0,0.3);
    }
    \node[tensorr, draw=black, fill=red, label=above:$X$] (x) at (1.5,0) {};
  \end{tikzpicture} = 
\begin{tikzpicture}
    \draw (0.5,0) rectangle (3.5,-0.5);
    \foreach \x in {1,2,3}{
      \node[tensor,label=below:$B_\x$] (t\x) at (\x,0) {};
      \draw (t\x) --++ (0,0.3); }
        \draw (1,0.3)--(1,0.5);
    \filldraw[draw=black, fill=red] (1,0.3) circle (0.06);
       \node[anchor=east] at (1,0.3) {$O_1$};
  \end{tikzpicture} 
  = 
  \begin{tikzpicture}
    \draw (0.5,0) rectangle (3.5,-0.5);
    \foreach \x in {1,2,3}{
      \node[tensor,label=below:$B_\x$] (t\x) at (\x,0) {};
      \draw (t\x) --++ (0,0.3);
    }
          \draw (2,0.3)--(2,0.5);
    \filldraw[draw=black, fill=red] (2,0.3) circle (0.06);
       \node[anchor=east] at (2,0.3) {$O_2$}; 
  \end{tikzpicture}  \ .
\end{equation*} and thus for every $X\in\mathcal{C}_{12}$ there is a matrix $Y\in\mathcal{C}_{12}$ such that 
\begin{equation*}
  \begin{tikzpicture}
    \draw (0.5,0) rectangle (3.5,-0.5);
    \foreach \x in {1,2,3}{
      \node[tensor,label=below:$A_\x$] (t\x) at (\x,0) {};
      \draw (t\x) --++ (0,0.3);
    }
    \node[tensorr, draw=black, fill=red,label=above:$X$] (x) at (1.5,0) {};
  \end{tikzpicture} = 
  \begin{tikzpicture}
    \draw (0.5,0) rectangle (3.5,-0.5);
    \foreach \x in {1,2,3}{
      \node[tensor,label=below:$B_\x$] (t\x) at (\x,0) {};
      \draw (t\x) --++ (0,0.3);
    }
    \node[tensorr, draw=black, fill=red,label=above:$Y$] (x) at (1.5,0) {};
  \end{tikzpicture} \ .
\end{equation*}
Due to $G$-injectivity of the $B$ tensors, the mapping $X\mapsto Y$ is uniquely defined. Due to $G$-injectivity of the $A$ tensors, it is an injective map. As the argument is symmetric with respect of the exchange of the $A$ and $B$ tensors, it also has to be surjective and therefore the map $X\mapsto Y$ is a bijection. Moreover, it is clear from the construction that it is an algebra homomorphism, as both $X\mapsto O_1$ and $O_1\mapsto Y$ are algebra homomorphisms. Therefore the mapping $X\mapsto Y$ is an algebra isomorphism between the corresponding centralizers.\\

\end{proof}

\begin{observation}
Notice that in the case of $G$-injective MPS we cannot conclude the existence of an invertible matrix $Z$ so that the isomorphism is $Z(\; \cdot \;) Z^{-1}$ as in \cref{lem:inj_isomorph} for injective MPS (which does not allow to follow the same strategy). This is because the algebra now is not simple: it is semi-simple, where the block decomposition corresponds to the irreps. For example an isomorphism of the algebra $( \id_2 \otimes \mathcal{M}_D)\oplus \mathcal{M}_D$ is $a\oplus a \oplus b \mapsto b\oplus b \oplus a$ for any $a,b\in \mathcal{M}_D$ which cannot be realized by a gauge transformation since the traces do not match.
\end{observation}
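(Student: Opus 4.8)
The statement to establish here is negative in character: rather than constructing an inner isomorphism, I must exhibit a situation in which the map $X\mapsto Y$ produced by \cref{lem:Ginj_isomorph} admits \emph{no} invertible $Z$ with $Y=ZXZ^{-1}$, thereby showing that the Skolem--Noether step used in \cref{lem:inj_isomorph} has no analogue for $G$-injective MPS. The plan is to isolate the purely algebraic obstruction and then display it on the smallest centralizer that is semisimple without being simple. Recall from the preceding observation that $\mathcal{C}^A_e\cong\bigoplus_\sigma \id_{d_\sigma}\otimes\mathcal{M}_{m_\sigma}$; as soon as more than one irrep $\sigma$ occurs this is a direct sum of several full matrix blocks, and an algebra isomorphism between two such centralizers is free to permute any blocks that are abstractly isomorphic (equal multiplicities $m_\sigma$), whereas an inner automorphism can never do so.

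First I would pin down the obstruction in one line: any inner map $X\mapsto ZXZ^{-1}$ on the bond space preserves traces, and in particular preserves the trace of every central idempotent. Hence if two central idempotents of $\mathcal{C}^A_e$ carry different traces inside the ambient representation $u_g$, no gauge transformation can interchange them. This single fact is what rules out the inner form.

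Next I would write down the explicit example. Take a group $G$ (for instance $G=S_3$) with a two-dimensional irrep $\sigma_1$ and a one-dimensional irrep $\sigma_2$, each occurring with the same multiplicity $m_{\sigma_1}=m_{\sigma_2}=D$; the associated centralizer is
\[
  \mathcal{A}=(\id_2\otimes\mathcal{M}_D)\oplus\mathcal{M}_D\subset\mathcal{M}_{3D},
\]
whose generic element I write as $a\oplus a\oplus b$ with $a,b\in\mathcal{M}_D$. The map
\[
  \phi(a\oplus a\oplus b)=b\oplus b\oplus a
\]
is linear, satisfies $\phi(XY)=\phi(X)\phi(Y)$ and $\phi(\id)=\id$, and is its own inverse, hence an algebra automorphism interchanging the two abstractly isomorphic simple summands.

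Finally I would apply the obstruction. The central idempotents are $p_1=\id_{2D}\oplus 0_D$ and $p_2=0_{2D}\oplus\id_D$, with $\tr p_1=2D\neq D=\tr p_2$, while $\phi(p_1)=p_2$. Were $\phi$ inner, trace preservation would give $\tr p_1=\tr\phi(p_1)=\tr p_2$, a contradiction. Thus $\phi$ is not of the form $Z(\,\cdot\,)Z^{-1}$, which is exactly the assertion of the observation. The only point demanding care --- and the precise reason this blocks the injective-case strategy --- is that the freedom to permute isomorphic blocks is genuine rather than an artefact of the chosen embedding: it is a structural feature of the semisimple (non-simple) centralizer, so the subsequent proof of \cref{theo:FTGinjective} must instead control how such block permutations are constrained by the $G$-invariance of the tensors.
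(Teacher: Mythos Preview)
Your argument is correct and matches the paper's reasoning exactly: the observation in the paper is self-contained and does not carry a separate proof, the justification being the single sentence that the map $a\oplus a\oplus b\mapsto b\oplus b\oplus a$ ``cannot be realized by a gauge transformation since the traces do not match.'' You have simply unpacked this by naming the central idempotents $p_1=\id_{2D}\oplus 0_D$ and $p_2=0_{2D}\oplus\id_D$ and checking explicitly that $\phi$ is an algebra automorphism, which is a faithful expansion of the same idea.
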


Therefore with the blocking procedure showed above we conclude from \cref{lem:Ginj_isomorph} that there exists an isomorphism on each edge of two $G$-injective PEPS that generate the same state. Since we are focusing on translational invariant PEPS the isomorphism is the same for every single edge. Moreover due to its uniqueness this isomorphism has to be compatible under blocking. That is, the tensor product of isomorphism of singles edges has to be equal to the isomorphism of the tensor product of the edges. We will come back to this point later.\\

The proof of the fundamental theorem for $G$-injective PEPS can be separated mainly in two steps. The first one is \cref{prop:groupalgebra1} below which shows that a global property, equality of states in TN, is reflected locally; there is a relation between the tensors at each site. This result is achieved using the fundamental theorem of Ref.\cite{PerezGarcia07} which assumes the equality of states for every system size: this is the main limitation compared to \cref{thm:inj}. The local relation between the tensors in \cref{prop:groupalgebra1} is not a gauge transformation. The second step tackles this issue by pushing this local relation to a gauge transformation using \cref{lem:Ginj_isomorph}. This step will be separated in three propositions for the sake of readability.\\

Let us first prove the following lemma:
\begin{lemma}\label{lem:local_op}
Let $\mathcal{O}_{\mathcal{R}}$  be an operator acting on a compact and contractible region $\mathcal{R}$ of a lattice $\Lambda$ and let us denote by $\partial \mathcal{R}$ the sites surrounding $\mathcal{R}$.  
If a $G$-injective PEPS of size at least $\mathcal{R} \cup \partial \mathcal{R}$ is left invariant by $\mathcal{O}_{\mathcal{R}}$, then $\mathcal{O}_{\mathcal{R}}$ leaves invariant the $\mathcal{R}$ patch of tensors.
\end{lemma}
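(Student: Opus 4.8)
The plan is to reduce the statement to an equality of tensors by ``peeling off'' the environment of $\mathcal{R}$, much as in the injective case, but being careful that for $G$-injective tensors the pseudo-inverse of the environment does not erase it cleanly: it reconstructs the projector $\mathcal{P}_G$ on the cut rather than the identity. First I would block all sites outside $\mathcal{R}$ into a single tensor $A_{\mathcal{R}^c}$ and all sites of $\mathcal{R}$ into a single tensor $A_{\mathcal{R}}$. Since blocking preserves $G$-injectivity (insert the matrices $\mathfrak{D}$ on the internal bonds, as in \eqref{conca}), both $A_{\mathcal{R}}$ and $A_{\mathcal{R}^c}$ are again $G$-injective tensors; in particular each has a pseudo-inverse in the sense of \eqref{Ginje}. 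Because $\mathcal{R}$ is compact and contractible and the lattice contains $\mathcal{R}\cup\partial\mathcal{R}$, the region $\mathcal{R}^c$ is nonempty and connected, so this blocking is legitimate, and the PEPS can be written as the contraction of $A_{\mathcal{R}}$ with $A_{\mathcal{R}^c}$ along the boundary edges.

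Next I would use the hypothesis. Since $\mathcal{O}_{\mathcal{R}}$ acts only on the physical legs inside $\mathcal{R}$, the invariance $\mathcal{O}_{\mathcal{R}}\ket{\Psi_A}=\ket{\Psi_A}$ becomes, after blocking, the equality of the two tensor networks obtained by contracting $(\mathcal{O}_{\mathcal{R}}A_{\mathcal{R}})\otimes A_{\mathcal{R}^c}$ and $A_{\mathcal{R}}\otimes A_{\mathcal{R}^c}$ along $\partial\mathcal{R}$. Now apply the pseudo-inverse $\myinv{A}_{\mathcal{R}^c}$ to the physical legs of $\mathcal{R}^c$ on both sides: on each side the block $\myinv{A}_{\mathcal{R}^c}A_{\mathcal{R}^c}$ collapses to the projector $\mathcal{P}_G$ acting on the virtual legs of $A_{\mathcal{R}^c}$, and after contracting with $A_{\mathcal{R}}$ this leaves $\mathcal{P}_G$ acting on the boundary virtual legs of $A_{\mathcal{R}}$ (together with a fixed, nonzero $\mathcal{P}_G$ on any remaining outer legs, which can simply be projected away). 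Hence one obtains $\mathcal{O}_{\mathcal{R}}A_{\mathcal{R}}\,\mathcal{P}_G = A_{\mathcal{R}}\,\mathcal{P}_G$, with $\mathcal{P}_G$ acting on $\partial\mathcal{R}$.

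Finally I would invoke the $G$-invariance of $A_{\mathcal{R}}$ itself: being $G$-injective it is unchanged by the action of $u_g$ on all of its boundary virtual legs (the blocked version of \eqref{Ginva}), so $A_{\mathcal{R}}\,\mathcal{P}_G = A_{\mathcal{R}}$, and likewise $\mathcal{O}_{\mathcal{R}}A_{\mathcal{R}}\,\mathcal{P}_G = \mathcal{O}_{\mathcal{R}}A_{\mathcal{R}}$ since $\mathcal{P}_G$ and $\mathcal{O}_{\mathcal{R}}$ act on disjoint legs. Combining the three equalities gives $\mathcal{O}_{\mathcal{R}}A_{\mathcal{R}} = A_{\mathcal{R}}$, i.e.\ $\mathcal{O}_{\mathcal{R}}$ leaves the $\mathcal{R}$-patch of tensors invariant.

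The main obstacle, and the only place where the argument departs from the plain injective case, is exactly the step where the pseudo-inverse of the environment reintroduces $\mathcal{P}_G$ on the boundary rather than simply removing the environment; one must check that this extra projector is harmless, which works precisely because the patch tensor $A_{\mathcal{R}}$ is itself $G$-invariant and absorbs it. A secondary point requiring care is that the complement really does block to a single $G$-injective tensor admitting a pseudo-inverse — this is where contractibility of $\mathcal{R}$ and the condition that the lattice contains $\partial\mathcal{R}$ are used, to ensure a connected, nonempty environment.
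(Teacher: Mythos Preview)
Your proposal is correct and follows essentially the same approach as the paper: apply the pseudo-inverse of the tensors on the complement $\Lambda\setminus\mathcal{R}$, use that $\mathcal{O}_{\mathcal{R}}$ commutes with this operation (disjoint supports), and recover the bare patch $A_{\mathcal{R}}$. The paper's proof is very terse --- it simply asserts $\myinv{A}_{\Lambda\setminus\mathcal{R}}|\Psi(A)_\Lambda\rangle = A_{\mathcal{R}}$ in one line --- whereas you have made explicit the intermediate step that the pseudo-inverse produces $\mathcal{P}_G$ on the boundary and that this projector is absorbed by the $G$-invariance of $A_{\mathcal{R}}$; this is exactly what the paper is implicitly using.
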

\begin{proof}
Applying the inverse of the tensor $\myinv{A}$ on the complementary region of $\mathcal{R}$ to the equation $\mathcal{O}_{\mathcal{R}} |\Psi(A)_\Lambda\rangle= |\Psi(A)_\Lambda\rangle$ we end up with $\myinv{A}_{\Lambda\setminus \mathcal{R}} \mathcal{O}_{\mathcal{R}} |\Psi(A)_\Lambda\rangle= \myinv{A}_{\Lambda\setminus \mathcal{R}}|\Psi(A)_\Lambda\rangle= A_{ \mathcal{R}}$, where $A_{ \mathcal{R}}$ denotes the contraction of the tensors $A$ in the region $\mathcal{R}$ with OBC. Since $[\mathcal{O}_{\mathcal{R}},\myinv{A}_{\Lambda\setminus \mathcal{R}}]=0$ (they act on non-overlapping regions) then $\myinv{A}_{\Lambda\setminus \mathcal{R}} \mathcal{O}_{\mathcal{R}} |\Psi(A)_\Lambda\rangle= \mathcal{O}_{\mathcal{R}} \myinv{A}_{\Lambda\setminus \mathcal{R}}|\Psi(A)_\Lambda\rangle =\mathcal{O}_{\mathcal{R}}A_{ \mathcal{R}}= A_{ \mathcal{R}}$.
\end{proof}

\begin{proposition}\label{prop:groupalgebra1}
 Suppose two $G$-injective tensors $A$ and $B$ generate the same PEPS for every system size ( $\ket{\Psi(A)}=\ket{\Psi(B)}$)
 
 \begin{equation*}
  \begin{tikzpicture}
    \draw[canvas is xz plane at y=0] (-0.5,-0.5) grid (3.5,2.7);
    \foreach \x in {0,1,2,3}{
      \foreach \z in {0,1,2}{
       \node[tensorB] at (\x,0,\z) {}; 
        \draw (\x,0,\z)--(\x,0.3,\z);
      }
            \node[anchor=north] at (0.05,0.05) {$ {B}$};
    }
  \end{tikzpicture} \ = \ 
  \begin{tikzpicture}
    \draw[canvas is xz plane at y=0] (-0.5,-0.5) grid (3.5,2.7);
    \foreach \x in {0,1,2,3}{
      \foreach \z in {0,1,2}{
        \node[tensor] at (\x,0,\z) {}; 
        \draw (\x,0,\z)--(\x,0.3,\z);
      }
            \node[anchor=north] at (0.05,0.05) {$A$};
    }
  \end{tikzpicture},
\end{equation*}
where we represent each tensor with a different shape ($A$ rounded and $B$ squared):
\begin{equation*}
  \begin{tikzpicture}
    \pic at (0,0,0) {3dpeps};
    \node at (-0.2,0.2) {A};
  \end{tikzpicture}
\quad  ,\quad
  \begin{tikzpicture} 
    \pic at (0,0,0) {3dpepsres};
    \node at (-0.2,0.2) {B};
  \end{tikzpicture}\ .
\end{equation*}  
 Then, there are invertible matrices $X,Y$ and $T \in (\mathcal{A}^A)^{\otimes 3}$ such that 
\begin{equation}
  \begin{tikzpicture}
    \draw (-1.2,0,0)--(1.2,0,0);
    \draw (0,0,-1.4)--(0,0,1.4);
          \foreach \x in {(1,0,0),(-1,0,0),(0,0,1.1),(0,0,-1.1)}{
        \filldraw[draw=black, fill=red] \x circle (0.05);
      }      
   
       \draw[canvas is xz plane at y=0,double=red, double distance=0.7mm,line cap=round]  (-0.1,0.8) -- (0,0.8) arc (90:-90:0.8)-- (-0.1,-0.8); 

    \node[anchor=north] at (-1.2,0,0) {$X$};
    \node[anchor=north west] at (0,0,1.2) {$Y$};
    \node[anchor=north] at (1.2,0,0) {$\myinv{X}$};
    \node[anchor=west] at (0,0,-1.2) {$\myinv{Y}$};
    \pic at (0,0,0) {3dpeps};
    \node[anchor=north west] at (0.3,0,0.6) {T};
    \node at (-0.2,0.2) {A};
  \end{tikzpicture}=
  \begin{tikzpicture} [baseline=-1mm]
    \pic at (0,0,0) {3dpepsres};
    \node at (-0.2,0.2) {B};
  \end{tikzpicture}\ .
\end{equation}  
\end{proposition}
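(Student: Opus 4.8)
The plan is to mimic the injective‑PEPS argument of \cref{sec:injPEPS}, but replacing "gauge transformations" with "elements of the centralizer algebras" where needed, and extracting the extra tensor $T$ at the end. First I would invoke the classical Fundamental Theorem for MPS of Ref.\ \cite{PerezGarcia07}: block the whole PEPS (on a torus of any size) along horizontal lines to obtain an MPS whose "super‑site" is a column of $A$‑tensors; since $\ket{\Psi(A)}=\ket{\Psi(B)}$ for \emph{every} system size, the MPS Fundamental Theorem applies and gives invertible matrices on the super‑bonds relating the column tensors. Doing the same along vertical lines yields invertible matrices on the other family of super‑bonds. The columns/rows are $G$‑injective and hence normal; by uniqueness of the MPS gauge these column‑ and row‑gauges must be compatible, and unblocking them produces, for each single edge of the square lattice, a linear map — but a priori only an \emph{algebra isomorphism of the centralizers} $\mathcal{C}^A_e\to\mathcal{C}^B_e$, exactly as in \cref{lem:Ginj_isomorph}, not an honest invertible matrix. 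This is the content of \cref{prop:groupalgebra1}: the global equality forces a \emph{local} relation, and we must identify what that relation is.

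The second part is to upgrade the per‑edge centralizer isomorphism to the stated normal form with matrices $X$, $Y$ and a residual tensor $T\in(\mathcal{A}^A)^{\otimes 3}$. Concretely I would argue as follows. Apply \cref{lem:Ginj_isomorph} on each edge (this needs only fixed, large‑enough size, via the tripartite blocking already described after the lemma) to get algebra isomorphisms $\phi_e:\mathcal{C}^A_e\to\mathcal{C}^B_e$, translation‑invariant so $\phi_e=\phi_h$ (horizontal) and $\phi_e=\phi_v$ (vertical). Now use \cref{lem:local_op}‑type reasoning: inverting $B$ on the complement of a single site, the equality $\ket{\Psi(A)}=\ket{\Psi(B)}$ collapses to a local identity between $A$ and $B$ decorated by boundary operators that must lie in the centralizers; "pushing" these boundary decorations onto the four legs of the single site, using the $G$‑invariance (the pulling‑through moves \cref{eq:Gmoves}) to move string fragments around, expresses $B$ in terms of $A$, four matrices on the legs, and a left‑over operator acting purely on the physical/virtual space of that one tensor, which by injectivity of the representation is constrained to lie in $(\mathcal{A}^A)^{\otimes 3}$ — this is the $T$. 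Translation invariance and consistency under blocking then force the four matrices to come in the pairs $(X,\myinv{X})$ on the horizontal legs and $(Y,\myinv{Y})$ on the vertical legs, exactly the decoration drawn in the statement, and the double‑red loop around $T$ records that $T$ intertwines the two representations (coming from $A$ vs.\ $B$) appropriately.

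The main obstacle, flagged already in the Observation right after \cref{lem:Ginj_isomorph}, is that the centralizer algebra is \emph{semisimple but not simple}: an algebra isomorphism need not be inner, so we cannot simply write $\phi_e(\cdot)=Z_e(\cdot)Z_e^{-1}$ the way the Skolem–Noether argument did in \cref{lem:inj_isomorph}. An isomorphism can permute the simple blocks (i.e.\ permute the irrep sectors $\sigma$) or rescale multiplicities, and the price is precisely that the relation between $A$ and $B$ is \emph{not} a gauge transformation — hence the appearance of the extra tensor $T$ which absorbs the "non‑inner part" of the isomorphism, and the fact that \cref{prop:groupalgebra1} produces $T\in(\mathcal{A}^A)^{\otimes 3}$ rather than $T=\id$. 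So the hard step is bookkeeping: showing that whatever $\phi_e$ does on the sectors, it can be realized on the PEPS tensor by (i) a genuine gauge piece $X\otimes\myinv X$, $Y\otimes\myinv Y$ that I can pull out, times (ii) a residual $T$ living in the algebra $\mathcal{A}^A$ on each of the three legs of the unblocked tensor, and that this decomposition is forced and consistent with blocking. I expect the subsequent propositions (the ones the excerpt announces will "push this local relation to a gauge transformation") to then analyze $T$ — using that $A$ and $B$ are \emph{both} $G$‑injective with representations $u^A_g$, $u^B_g$ containing all irreps — and eventually show $T$ is itself (virtually) a gauge, yielding \cref{theo:FTGinjective}; but within \cref{prop:groupalgebra1} the goal is only to isolate $X$, $Y$, and $T$, so I would stop once the displayed diagrammatic identity is established.
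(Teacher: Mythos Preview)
Your proposal identifies the right landscape (MPS Fundamental Theorem, centralizer isomorphisms, a residual $T$) but has a genuine gap at exactly the point you flag as the ``main obstacle'': you never actually produce invertible matrices $X$ and $Y$. You argue via \cref{lem:Ginj_isomorph}, obtaining per-edge algebra isomorphisms $\phi_e:\mathcal{C}^A_e\to\mathcal{C}^B_e$, and then assert that ``translation invariance and consistency under blocking'' force the leg-matrices into pairs $(X,\myinv X)$, $(Y,\myinv Y)$. But a centralizer isomorphism does not give a matrix at all, and nothing in your argument extracts an invertible $X$ from $\phi_h$. Nor can the ``non-inner part'' simply be absorbed into $T$: $T$ lives in $(\mathcal{A}^A)^{\otimes 3}$, the \emph{group algebra}, not the centralizer, so it cannot implement a permutation of irrep sectors on a single leg. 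Your blocking-into-columns idea yields gauges on $(\mathbb{C}^D)^{\otimes L}$, and ``unblocking'' these to per-edge invertible matrices is precisely the step that fails in the $G$-injective setting.

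The paper's proof sidesteps this by a different route. First, $X$ and $Y$ are obtained not from any 2D unblocking but by passing to the \emph{degenerate} $1\times n$ and $n\times 1$ tori: closing one direction to a single site turns $A$ and $B$ into $G$-injective MPS tensors in canonical form, and the classical MPS Fundamental Theorem of Ref.~\cite{PerezGarcia07} then gives honest invertible $X$, $Y$ directly --- this is where the ``for every system size'' hypothesis is spent. Second, to show the residual lies in the group algebra, the paper uses a concrete device you omit: form the plus-shaped patch $A_+$ of five tensors, let $P^{[A]}_+$ be the projector onto its range, and use \cref{lem:local_op} to get $B_+=P^{[A]}_+B_+$. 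Applying $A^{-1}$ sitewise writes the result as an explicit sum over group elements $g_1,\dots,g_5$; closing the outer legs of the plus with $X\mathfrak{D}$, $Y\mathfrak{D}$, $\myinv X\mathfrak{D}$, $\myinv Y\mathfrak{D}$ (using the relations just obtained from the degenerate tori) collapses this to an operator in $\mathcal{A}^{\otimes 4}$ on the central virtual legs. Applying $A$ on top and using $G$-invariance to eliminate one leg gives $T\in(\mathcal{A}^A)^{\otimes 3}$. Your ``push boundary decorations onto the four legs using \cref{eq:Gmoves}'' gestures in this direction, but without the plus-shape projector and the $\mathfrak{D}$-contraction there is no reason the decoration lands in the group algebra rather than the full matrix algebra.
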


\begin{proof}[Proof of \cref{prop:groupalgebra1}]
Let us denote by $A_+$ the tensor constructed with the concatenation of $5$ tensors sharing a vertex of the square lattice:
  \begin{equation*}
 A_+ = 
	\begin{tikzpicture}
             \pic at (0,-0.4,0) {pepsplus};
        \end{tikzpicture}
       \; .
  \end{equation*}
$B_+$ is defined analogously. The projector onto the physical subspace generated by $A_+$ is $P^{[A]}_+$:
  \begin{equation*}
    P^{[A]}_+ = 
	\begin{tikzpicture}
              \pic at (0,0,0) {PAplus};
        \end{tikzpicture}
       \; .
  \end{equation*}
Since $ P^{[A]}_+ A_+=A_+$, we have that $ P^{[A]}_+\ket{\Psi(A)}=\ket{\Psi(A)}$ so $P^{[A]}_+\ket{\Psi(B)}=\ket{\Psi(B)}$.  Using \cref{lem:local_op} we obtain  $ B_+= P^{[A]}_+ B_+$. That is,

  \begin{equation*}
    \begin{tikzpicture}
     \pic at (0,0,0) {pepsplusres};
    \end{tikzpicture} 
    =
    \begin{tikzpicture}
      \pic at (0,-0.8,0) {pepsplusres}; 
      \pic at (0,0,0) {PAplus};
       \foreach \x/\z in {0/-1.5,0/0,-1.5/0,1.5/0,0/1.5}{
        \draw (\x,-0.8,\z)--(\x,0,\z);
      }
    \end{tikzpicture}
 \ .
  \end{equation*}
  
Now we apply the tensor $\myinv{A}$ on each site without contracting their virtual indices. We carry out the product between $A$ and $A^{-1}$ so:

 \begin{align}\label{eq:PAtensors}
    \begin{tikzpicture}
    \foreach \x/\z in {0/-1.5,0/0,-1.5/0,1.5/0,0/1.5}{
        \pic at (\x,0,\z) {3dpepsdown};
        \draw (\x,-0.5,\z)--(\x,0,\z);
      }
      \pic at (0,-0.5,0) {pepsplusres}; 
    \end{tikzpicture}
      = &\sum_{g_1,g_2,g_3,g_4,g_5}
    \begin{tikzpicture}
           \foreach \x/\z in {0/-1.5,0/0,-1.5/0,1.5/0,0/1.5}{
        \draw (\x,-0.7,\z)--(\x,0,\z);
      }         
     \pic at (0,0,0) {pepsplusdown};
      \foreach \x in {(1,0,0),(-1,0,0),(0,0,1),(0,0,-1)}{
          \filldraw[draw=black, fill=black] \x circle (0.05);
      }
                   \node[anchor=north] at (-1,0,0) {$\myinv{g}_1$};
          \node[anchor=north] at(1,0,0) {$g_3$};
           \node[anchor= north] at (0,0,-1) {$g_2 $};
            \node[anchor=west] at (0,0,1) {$\myinv{g}_4$};
     \pic at (0,-0.7,0) {pepsplusres};
    \end{tikzpicture}   \\ 
    & \times 
    \begin{tikzpicture}
     \pic at (0,0.5,0) {openrectcirc};
             \node[anchor=north] at (-0.9,0.5,0) {$g_1\myinv{g}_5$};
          \node[anchor=north] at (0.9,0.5,0) {$g_5 \myinv{g}_3$};
           \node[anchor= east] at (0,0.5,-0.9) {$g_5 \myinv{g}_2$};
            \node[anchor=west] at (0,0.5,0.9) {$g_4\myinv{g}_5$};
        \end{tikzpicture}  
    \ . \notag 
  \end{align}

Notice that because the tensors $A$ and $B$ generate the same state for any system size and in particular for 1D systems, a $1\times n$ torus, the following holds for an invertible $X$ acting on the horizontal bonds
  \begin{equation*}
    \begin{tikzpicture}[baseline=-1mm]
      \draw [canvas is zy plane at x=0] (-0.5,-0.15) rectangle (0.5,0);
      \pic at (0,0,0) {3dpeps};
    \end{tikzpicture}  = 
    \begin{tikzpicture}[baseline=-1mm]
      \draw (-0.7,0)--(0.7,0);
      \draw [canvas is zy plane at x=0] (-0.5,-0.15) rectangle (0.5,0);
      \pic at (0,0,0) {3dpepsres};
      \filldraw[draw=black, fill=red] (0.4,0,0) circle (0.05);
      \filldraw[draw=black, fill=red] (-0.4,0,0) circle (0.05);
      \node[anchor=south] at (0.4,-0.4) {$X$};
      \node[anchor=south] at (-0.4,-0.4) {$\myinv{X}$};
    \end{tikzpicture} \; \Rightarrow \; 
        \begin{tikzpicture}
      \pic at (0,0,0) {3dpepsres};
      \draw (0,0,0)--(0,0.5,0);
      \pic at (0,0.5,0) {3dpepsdown};
      \filldraw[draw=black, fill=red] (0.3,0,0) circle (0.05);
      \node[ anchor=south] at (0.3,-0.4) {$X\mathfrak{D}$};
      \draw (0.4,0)--(0.4,0.5);
      \draw[canvas is zy plane at x=0] (-0.5,-0.15) rectangle (0.5,0);
      \draw[canvas is zy plane at x=0] (-0.5,0.5) rectangle (0.5,0.65);
    \end{tikzpicture} 
    =
           \begin{tikzpicture}
                 \pic at (0,0,0) {3dpeps};
      \draw (0,0,0)--(0,0.5,0);
      \pic at (0,0.5,0) {3dpepsdown};
            \filldraw[draw=black, fill=red] (-0.3,0,0) circle (0.05);
      \node[ anchor=south] at (-0.3,-0.4) {$X$};
      \filldraw[draw=black, fill=red] (0.3,0,0) circle (0.05);
      \node[ anchor=south] at (0.3,-0.4) {$\mathfrak{D}$};
      \draw (0.4,0)--(0.4,0.5);
      \draw[canvas is zy plane at x=0] (-0.5,-0.15) rectangle (0.5,0);
      \draw[canvas is zy plane at x=0] (-0.5,0.5) rectangle (0.5,0.65);
    \end{tikzpicture}
        = 
       \begin{tikzpicture}
      \draw (-0.3,0,0)--(0,0,0)--(0,0.5,0)--(-0.3,0.5,0);
            \filldraw[draw=black, fill=red] (0,0.25,0) circle (0.05);
      \node[ anchor=west] at (0,0.25) {$X$};
    \end{tikzpicture} .
  \end{equation*}  
We have used the fundamental theorem of Ref \cite{PerezGarcia07} since the tensors
  \begin{equation*}
    \begin{tikzpicture}
      \draw [canvas is zy plane at x=0] (-0.5,-0.15) rectangle (0.5,0);
      \pic at (0,0,0) {3dpeps};
    \end{tikzpicture}
    \quad, \quad
        \begin{tikzpicture}
      \draw [canvas is zy plane at x=0] (-0.5,-0.15) rectangle (0.5,0);
      \pic at (0,0,0) {3dpepsres};
    \end{tikzpicture}
  \end{equation*}
are G-injective, and then in canonical form, generating the same state. We can repeat the same argument above in the other directions, including an invertible $Y$ acting on the vertical bonds, to conclude that:
\begin{equation*}
   \begin{tikzpicture}
      \pic at (0,0,0) {3dpepsres};
      \draw (0,0,0)--(0,0.5,0);
      \pic at (0,0.5,0) {3dpepsdown};
       \filldraw[draw=black, fill=red] (-0.3,0,0) circle (0.05);
      \node[ anchor=south] at (-0.5,-0.5) {$\mathfrak{D} \myinv{X}$};
      \draw (-0.4,0)--(-0.4,0.5);
      \draw[canvas is zy plane at x=0] (-0.5,-0.15) rectangle (0.5,0);
      \draw[canvas is zy plane at x=0] (-0.5,0.5) rectangle (0.5,0.65);
    \end{tikzpicture}= \myinv{X}, \;
     \begin{tikzpicture}
      \pic at (0,0,0) {3dpepsres};
      \draw (0,0,0)--(0,0.5,0);
      \pic at (0,0.5,0) {3dpepsdown};
   \draw[canvas is zy plane at x=0] (0,0.5) rectangle (-0.7,0);
      \draw[canvas is xy plane at z=0] (-0.5,-0.15) rectangle (0.5,0);
      \draw[canvas is xy plane at z=0] (-0.5,0.5) rectangle (0.5,0.65);
       \draw (0,0,0)--(0,0,1);
        \draw (0,0.5,0)--(0,0.5,1);
        \filldraw[draw=black, fill=red] (0.17,0.17,0) circle (0.05);
      \node[ anchor=west] at (0.17,0.17,0) {$Y\mathfrak{D}$};
    \end{tikzpicture}=Y,\;
      \begin{tikzpicture}
      \pic at (0,0,0) {3dpepsres};
      \draw (0,0,0)--(0,0.5,0);
      \pic at (0,0.5,0) {3dpepsdown};
   \draw[canvas is zy plane at x=0] (0,0.5) rectangle (1,0);
      \draw[canvas is xy plane at z=0] (-0.5,-0.15) rectangle (0.5,0);
      \draw[canvas is xy plane at z=0] (-0.5,0.5) rectangle (0.5,0.65);
       \draw (0,0,0)--(0,0,-1);
        \draw (0,0.5,0)--(0,0.5,-1);
       \filldraw[draw=black, fill=red] (-0.27,-0.27,0) circle (0.05);
      \node[ anchor=west] at (-0.3,-0.3,0) {$\myinv{Y}\mathfrak{D}$};
    \end{tikzpicture}= \myinv{Y} .
    \end{equation*}

We now apply $X\mathfrak{D},Y\mathfrak{D}, X^{-1}\mathfrak{D}, Y^{-1}\mathfrak{D}$  in Eq.\eqref{eq:PAtensors} to use the previous relations and we also contract the rest of the open  virtual indices as follows:

  \begin{equation*}
    \begin{tikzpicture}
           \draw (-0.7,0,0)--(0.7,0,0);
      \draw (0,0,-0.8)--(0,0,0.8);
            \pic at (0,0,0) {3dpepsres};
       \draw (0,0,0)--(0,0.5,0);
      \pic at (0,0.5,0) {3dpepsdown};
      \foreach \x in {(0.5,0,0),(-0.5,0,0),(0,0,0.5),(0,0,-0.5)}{
          \filldraw[draw=black, fill=red] \x circle (0.05);
      }
      \node[anchor=north] at (-0.5,0,0) {$\myinv{X}$};
      \node[anchor=north west] at (0,0,0.5) {$\myinv{Y}$};
      \node[anchor=north] at (0.5,0,0) {$ X$};
      \node[anchor=west] at (0,0,-0.5) {$Y$};
    \end{tikzpicture} 
    =  \sum_{g_1,g_2,g_3,g_4,g_5}
    \begin{tikzpicture}
            \pic at (-3.5,0,0) {openrectcirc1};
             \node[anchor=north] at (-4.6,0,0) {$g_1\myinv{g}_5$};
          \node[anchor=north east] at(-2.3,0,0) {$g_5 \myinv{g}_3$};
           \node[anchor= north] at (-3.5,0,-0.9) {$g_5 \myinv{g}_2$};
            \node[anchor=north] at (-3.5,0,0.9) {$g_4\myinv{g}_5$};
    \pic at (0,0,0) {pepsplusrescontrac};
          \foreach \x in {(1,0.7,0),(-1,0.7,0),(0,0.7,1),(0,0.7,-1)}{
          \filldraw[draw=black, fill=black] \x circle (0.05);
      }
                   \node[anchor=south] at (-1,0.7,0) {$\myinv{g}_1$};
          \node[anchor=south west] at(1,0.7,0) {$g_3$};
           \node[anchor= east] at (0,0.7,-1) {$g_2 $};
            \node[anchor=south] at (0,0.7,1) {$\myinv{g}_4$};
                 \pic at (0,0.7,0) {pepspluscontracdown};
     \draw (1.5,0,0)--(1.5,0.7,0);
     \draw (2,0,0)--(2,0.7,0);
       \node[tensorr, draw=black, fill=red , label=below:$X\mathfrak{D}$] at (1.8,0,0) {};
   \node[tensorr, draw=black, fill=red , label=below:$\mathfrak{D}\myinv{X}$] at (-1.8,0,0) {};
   \node[tensorr, draw=black, fill=red , label=above:$Y\mathfrak{D}$] at (0,0,-1.8) {};
  \node[tensorr, draw=black, fill=red , label=below:$\mathfrak{D} \myinv{Y}$] at (0,0,1.8) {};
                               \draw (-1.5,0,0)--(-1.5,0.7,0);
     \draw (-2,0,0)--(-2,0.7,0);
          \draw (0,0,1.5)--(0,0.7,1.5);
     \draw (0,0,2)--(0,0.7,2);       
        \draw (0,0,-1.5)--(0,0.7,-1.5);
     \draw (0,0,0)--(0,0.7,0);
       \draw (0,0,-2)--(0,0.7,-2);
    \end{tikzpicture}
    \in \mathcal{A}^{\otimes 4}.
  \end{equation*}
  
%

We apply $A$ to the previous equation in the top layer. The LHS is $P^{[A]} B(X\otimes Y\otimes X^{-1}\otimes Y^{-1})$ which is equal to $B(X\otimes Y\otimes X^{-1}\otimes Y^{-1})$. The RHS is an operator $W\in \mathcal{A}^{\otimes 4}$ acting on $A$. That is,

\begin{align}\label{eq:defT}
 \begin{tikzpicture}
      \draw (-0.7,0,0) -- (0.7,0,0);
      \draw (0,0,-0.9) -- (0,0,0.9);
      \pic at (0,0,0) {3dpepsres};
      \node at (0.1,-0.27,0) {$B$};
      \filldraw [draw=black, fill=red] (-0.5,0,0) circle (0.05);
      \filldraw [draw=black, fill=red] (0.5,0,0) circle (0.05);
      \node[anchor=south] at (-0.5,0,0) {$\myinv{X}$};
      \node[anchor=north] at (0.5,0,0) {$X$};
      \filldraw [draw=black, fill=red] (0,0,-0.6) circle (0.05);
      \filldraw [draw=black, fill=red] (0,0,0.6) circle (0.05);
       \node[anchor=south] at (0,0,-0.6) {$Y$};
      \node[anchor=north] at (0,0,0.6) {$\myinv{Y}$};
    \end{tikzpicture}  =
    &  \sum_{g_1,g_2,g_3, g_4\in G} W_{g_1,g_2,g_3, g_4}
  \begin{tikzpicture}
   \draw (-0.7,0,0) -- (0.7,0,0);
      \draw (0,0,-0.9) -- (0,0,0.9);
      \pic at (0,0,0) {3dpeps};
      \filldraw [draw=black, fill=red] (-0.5,0,0) circle (0.05);
      \filldraw [draw=black, fill=red] (0.5,0,0) circle (0.05);
      \node[anchor=south] at (-0.5,0,0) {$g_2$};
      \node[anchor=north] at (0.5,0,0) {$g_4$};
      \filldraw [draw=black, fill=red] (0,0,-0.6) circle (0.05);
      \filldraw [draw=black, fill=red] (0,0,0.6) circle (0.05);
       \node[anchor=south] at (0,0,-0.6) {${g_1}$};
      \node[anchor=north] at (0,0,0.6) {$g_3$};
  \end{tikzpicture} 
   = \sum_{g_1,g_2,g_3, g_4\in G} W_{g_1,g_2,g_3, g_4}
     \begin{tikzpicture}
   \draw (-0.7,0,0) -- (0.7,0,0);
      \draw (0,0,-0.9) -- (0,0,0.9);
      \pic at (0,0,0) {3dpeps};
     \filldraw [draw=black, fill=red] (0.5,0,0) circle (0.05);
      \node[anchor=north] at (0.5,0,0) {$g_2g_4$};
      \filldraw [draw=black, fill=red] (0,0,-0.6) circle (0.05);
      \filldraw [draw=black, fill=red] (0,0,0.6) circle (0.05);
       \node[anchor=south] at (0,0,-0.6) {${g_2g_1}$};
      \node[anchor=north] at (0,0,0.6) {$g_3\myinv{g}_2$};
  \end{tikzpicture} \notag
 \\
 \equiv &
   \sum_{g_1,g_2, g_3\in G} T_{g_1,g_2, g_3}
     \begin{tikzpicture}
   \draw (-0.7,0,0) -- (0.7,0,0);
      \draw (0,0,-0.9) -- (0,0,0.9);
      \pic at (0,0,0) {3dpeps};
     \filldraw [draw=black, fill=red] (0.5,0,0) circle (0.05);
      \node[anchor=north] at (0.5,0,0) {$g_2$};
      \filldraw [draw=black, fill=red] (0,0,-0.6) circle (0.05);
      \filldraw [draw=black, fill=red] (0,0,0.6) circle (0.05);
       \node[anchor=south] at (0,0,-0.6) {${g_1}$};
      \node[anchor=north] at (0,0,0.6) {$g_3$};
  \end{tikzpicture}  
 \equiv
  \begin{tikzpicture}
    \draw (-0.8,0,0) -- (0.8,0,0);
    \draw(0,0,-1) -- (0,0,1);
        \pic at (0,0,0) {3dpeps};
\draw[canvas is xz plane at y=0,double=red, double distance=0.6mm,line cap=round]  (-0.1,0.7) -- (0,0.7) arc (90:-90:0.7)-- (-0.1,-0.7);
    \node at (0.3,-0.4) {$T$};
  \end{tikzpicture} \ ,
\end{align}
where we have used the $G$-invariance of the tensor $A$ to define the operator $T\in\mathcal{A}^{\otimes 3}$. 
\end{proof}

Let us introduce the tensor $\tilde{B}$ defined by 
 \begin{equation}\label{eq:defBtilde}
   \begin{tikzpicture}[baseline=-1mm]
 \pic at (0,0,0) {3dpepsres};
      \node at (0.1,-0.27,0) {$\tilde{B}$};
      \end{tikzpicture} =
  \begin{tikzpicture}
      \draw (-0.7,0,0) -- (0.7,0,0);
      \draw (0,0,-0.9) -- (0,0,0.9);
      \pic at (0,0,0) {3dpepsres};
      \node at (0.1,-0.27,0) {$B$};
      \filldraw [draw=black, fill=red] (-0.5,0,0) circle (0.05);
      \filldraw [draw=black, fill=red] (0.5,0,0) circle (0.05);
      \node[anchor=south] at (-0.5,0,0) {$\myinv{X}$};
      \node[anchor=north] at (0.5,0,0) {$X$};
      \filldraw [draw=black, fill=red] (0,0,-0.6) circle (0.05);
      \filldraw [draw=black, fill=red] (0,0,0.6) circle (0.05);
       \node[anchor=south] at (0,0,-0.6) {$Y$};
      \node[anchor=north] at (0,0,0.6) {$\myinv{Y}$};
    \end{tikzpicture}
    =
     \begin{tikzpicture}
    \draw (-0.8,0,0) -- (0.8,0,0);
    \draw(0,0,-1) -- (0,0,1);
        \pic at (0,0,0) {3dpeps};
\draw[canvas is xz plane at y=0,double=red, double distance=0.6mm,line cap=round]  (-0.1,0.7) -- (0,0.7) arc (90:-90:0.7)-- (-0.1,-0.7);
    \node at (0.3,-0.4) {$T$};
  \end{tikzpicture},
 \end{equation}
 where the invertible matrices $X,Y$ are the ones obtained in \cref{prop:groupalgebra1}. Note that the representation of the $G$-invariance of $\tilde{B}$ is the one of $B$ but conjugated by $X$ or $Y$ depending on the direction. It is clear that the tensor $\tilde{B}$ generates the same state as the tensor $B$ and therefore the same state as $A$. The next three propositions show some properties of $T$ which will end up in the conclusion that the operator $T$ is actually the identity operator. The first one is about the normalization of $T$:

\begin{proposition}[Normalization of $T$] \label{lem:Toperator} The operator $T\in\mathcal{A}^{\otimes 3}$ satisfies the following:
\begin{equation*}
  \begin{tikzpicture}
        \draw (0.8,0,0) -- (0.4,0,0);
    \draw(0,0,0.4) -- (0,0,1);
        \draw(0,0,-0.4) -- (0,0,-1);
\draw[canvas is xz plane at y=0,double=red, double distance=0.6mm,line cap=round]  (-0.1,0.7) -- (0,0.7) arc (90:-90:0.7)-- (-0.1,-0.7);
        \node[draw=black,fill=orange,inner sep=1pt, canvas is xz plane at y=0,shape=semicircle,  label={[shift={(0.3,0)}]$\ket{w}$}]  at (0,0,-1) {};
                \node[draw=black,fill=orange,inner sep=1pt,canvas is xz plane at y=0,shape=semicircle, label={[shift={(-0.2,-0.1)}]$\ket{w}$}]  at (0,0,0.4) {};
            \node[draw=black,fill=orange,inner sep=1pt,rotate=180,canvas is xz plane at y=0,shape=semicircle, label={[shift={(-0.3,-0.2)}]$\bra{w}$}]  at (0,0,-0.4) {}; 
                        \node[draw=black,fill=orange,inner sep=1pt,rotate=180,canvas is xz plane at y=0,shape=semicircle, label={[shift={(0.3,-0.4)}]$\bra{w}$}]  at (0,0,1) {};      
    \node at (0.4,-0.35) {$T$};
  \end{tikzpicture} 
  =
  \tikz  \draw (0.8,0,0) -- (0.4,0,0);
\; \longleftrightarrow \;
\left(\bra{w}\otimes \id \otimes \bra{w}\right) T\left(\ket{w}\otimes \id \otimes \ket{w}\right)=\id,
  \end{equation*}
where $\ket{w}$ is a unit vector from some one-dimensional irrep sector of the representation of $\tilde{B}$.
\end{proposition}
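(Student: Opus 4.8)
The plan is to reduce the claim to a one-dimensional statement by ``capping'' the vertical legs of the relation $\tilde B = AT$ of \cref{eq:defBtilde} with the one-dimensional irrep sector containing $\ket{w}$, and then to invoke the fundamental theorem for (G-injective) MPS of Ref.~\cite{PerezGarcia07}, exactly as it was used to produce $X$ and $Y$ in \cref{prop:groupalgebra1}. First I would record the algebraic fact that makes the statement meaningful: since $\ket{w}$ spans a one-dimensional $G$-invariant subspace (a copy of a one-dimensional irrep $\sigma_0$), the linear map $\phi\colon M\mapsto\bra{w}M\ket{w}$, restricted to the matrix algebra $\mathcal{A}$ generated by the representation, is the projection onto the $\sigma_0$-block, hence an algebra homomorphism $\mathcal{A}\to\C$. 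Consequently $\widehat T:=(\phi\otimes\id\otimes\phi)(T)$ is a genuine element of $\mathcal{A}$ acting on the remaining (horizontal) bond of $A$, with $\phi(u_g)=\chi_{\sigma_0}(g)$, and the content of the proposition is precisely that $\widehat T=\id$.

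\textbf{One-dimensional reduction.} Next I would cap the two vertical legs of $A$ with $\ket{w}$ on one side and $\bra{w}$ on the other (the latter from the dual one-dimensional sector), producing an MPS tensor $\widehat A$; the same operation applied to $\tilde B$ yields, because $\tilde B = AT$ and $\ket{w}$ is a one-dimensional irrep vector (so the vertical $T$-factors $u_{g_1},u_{g_3}$ collapse to the scalars $\chi_{\sigma_0}(g_1),\chi_{\sigma_0}(g_3)$), exactly $\widehat A$ with $\widehat T$ inserted on its horizontal bond. The crucial point is that a vertical string of representation operators threaded through such a cap only produces the phases $\chi_{\sigma_0}(g)$ and $\chi_{\sigma_0}(g)^{-1}$ at its two ends, and these cancel; via the pulling-through moves of \cref{eq:Gmoves} restricted to $\ket{w}$ this shows that $\widehat A$ is again a G-injective MPS tensor with horizontal representations (and so is its $\tilde B$-counterpart). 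Since $\ket{\Psi(A)}=\ket{\Psi(\tilde B)}$ for every system size and the capping is the same linear map in both cases, the two G-injective MPS $\widehat A$ and $\widehat A\,\widehat T$ generate the same one-dimensional state on every ring.

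\textbf{Conclusion and main obstacle.} By the fundamental theorem of Ref.~\cite{PerezGarcia07} the tensors $\widehat A$ and $\widehat A\,\widehat T$ are then related by a gauge on the horizontal bond together with a scalar; because the gauges $X,Y$ have already been absorbed into $\tilde B$ in \cref{eq:defBtilde}, so that $\widehat A$ and its $\tilde B$-counterpart are written in the same frame, the gauge is forced to be trivial, and pushing the residual relation through $\widehat A$ with its G-injective inverse (as in the proof of \cref{lem:Ginj_isomorph}) gives $\widehat T=\lambda\,\id$. Finally $\lambda=1$: with $\widehat T=\lambda\,\id$ the closed state obtained from $\tilde B$ on a ring of $n$ sites equals $\lambda^{n}$ times the one obtained from $A$, so $\lambda^{n}=1$ for all sufficiently large $n$, hence $\lambda=1$. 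The step I expect to be the main obstacle is exactly the passage from $\widehat T\propto\id$ to $\widehat T=\id$, together with the bookkeeping it requires: one must check that the one-dimensional sectors of the vertical representations of $A$ and of $\tilde B$ can be matched so that \emph{the same} $\ket{w},\bra{w}$ are contracted in both networks, and that the virtual operators produced by Ref.~\cite{PerezGarcia07} really collapse to scalars rather than to arbitrary elements of the centralizer — this is where both the absorption of $X,Y$ and the choice of $\ket{w}$ in a \emph{one-dimensional} sector (rather than a higher-dimensional one, cf.\ the observation after \cref{lem:Ginj_isomorph}) are used in an essential way.
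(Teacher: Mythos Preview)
Your overall strategy---slice the 2D relation into a 1D one by capping the vertical legs and then invoke the MPS fundamental theorem---is the same as the paper's. But the step you flag as ``bookkeeping'' is in fact the substantive part of the argument, and as written your proof has a genuine gap there.

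The problem is the sentence ``since $\ket{\Psi(A)}=\ket{\Psi(\tilde B)}$ for every system size and the capping is the same linear map in both cases, the two G-injective MPS $\widehat A$ and $\widehat A\,\widehat T$ generate the same one-dimensional state''. Inserting a projector on a \emph{virtual} bond is not a physical operation on the state; it is a modification of the tensor, and two tensors generating the same state need not continue to do so after the same virtual modification. Concretely, the centralizer isomorphism of \cref{lem:Ginj_isomorph} says that inserting $X\in\mathcal C^A$ on the $A$-network equals inserting $\Phi(X)\in\mathcal C^{\tilde B}$ on the $\tilde B$-network, and there is no reason a priori that $\Phi$ fixes your chosen $\ket{w}\bra{w}$. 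Absorbing $X,Y$ into $\tilde B$ does not force the vertical representations of $A$ and $\tilde B$ to coincide (the paper notes explicitly after \cref{eq:defBtilde} that $\tilde B$'s representation is that of $B$ conjugated by $X,Y$), so your collapse $u^A_{g_1}\ket{w}=\chi_{\sigma_0}(g_1)\ket{w}$ and your equality of capped states are both unjustified.

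The paper closes this gap in two steps you are missing. First it starts on the $A$ side with $V=\ket v\bra v$ in the \emph{trivial} irrep sector of $u^A$, and uses \cref{lem:Ginj_isomorph} to produce the corresponding $W\in\mathcal C^{\tilde B}$; this is what guarantees that the two capped PEPS are equal. Second, and this is the technical heart, it must show that $W$ is again a \emph{rank-one} projector (so that the $\tilde B$-PEPS actually factorizes into MPS rows): this uses the result of \cref{sec:isols} that for a large number of blocked edges the centralizer isomorphism becomes a similarity, hence preserves traces, giving $\operatorname{Tr}[W]^n=\operatorname{Tr}[V^{\otimes n}]=1$. Only then does one get $W=\ket{w}\bra{w}$ with $\ket{w}$ in a one-dimensional sector of $\tilde B$, and the MPS fundamental theorem applies. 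Finally, the residual gauge $Z$ produced by that theorem is not ``forced to be trivial'' as you assert; the paper shows by a separate trace argument that $Z\in\mathcal A$, and then uses the intertwining property $(Z\otimes\id)\sum_g g^{-1}\otimes g=(\sum_g g^{-1}\otimes g)(\id\otimes Z)$ to cancel it against its inverse on the other leg. Your $\lambda^n=1$ argument for normalizing the scalar is fine, but it only kicks in once these earlier steps are in place.
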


 \begin{proof}[Proof of \cref{lem:Toperator}]
 
The physical operator that corresponds to the insertion of an operator $X\in \mathcal{C}_A$ on a virtual bond, 

 \begin{equation*}
   \begin{tikzpicture}
    \draw[canvas is xz plane at y=0] (-0.5,-0.5) grid (3.5,2.5);
    \foreach \x in {0,1,2,3}{
      \foreach \z in {0,1,2}{
          \draw (\x,0,\z)--(\x,0.35,\z);
       \node[tensor] at (\x,0,\z) {}; 
      }
        \node[tensorr, draw=black, fill=red] at (1,0.2,1) {};
            \node[anchor=east] at (1.05,0.2,1) {$O(X)$};
    }
  \end{tikzpicture} \; = \;
   \begin{tikzpicture}
    \draw[canvas is xz plane at y=0] (-0.5,-0.5) grid (3.5,2.5);
    \foreach \x in {0,1,2,3}{
      \foreach \z in {0,1,2}{
          \draw (\x,0,\z)--(\x,0.35,\z);
       \node[tensor] at (\x,0,\z) {}; 
      }
        \node[tensorr, draw=black, fill=red] at (1.5,0,1) {};
            \node[anchor=north] at (1.5,0,1) {$X$};
    }
  \end{tikzpicture}, 
\end{equation*}
is
\begin{equation*}
   O(X)=
    \begin{tikzpicture}
      \draw (-1,0.25)--(-1,0)--(1,0)--(1,0.25);
      \draw[canvas is zy plane at x=0] (-1,0.25)--(-1,0)--(1,0)--(1,0.25);
      \draw (-1,0.25)--(-1,0.5)--(1,0.5)--(1,0.25);
      \draw [canvas is zy plane at x=0] (-1,0.25)--(-1,0.5)--(1,0.5)--(1,0.25);
      \filldraw [draw=black, fill=red] (1,0.25,0) circle (0.05);
      \pic at (0,0,0) {3dpepsdown};
      \pic at (0,0.5,0) {3dpeps};
      \node[anchor=west] at (1,0.25,0) {$ X$};
    \end{tikzpicture}.
  \end{equation*}

Let us consider some unit vector $\ket{v}$ from the trivial irrep sector, $g\ket{v}=\ket{v}\; \forall g\in G$, of the representation of $A$. There can be more than one depending on the multiplicity of the trivial irrep. All the irreps are in the decomposition of $u_g$ since it is semi-regular. Then, the rank-one projector $V=\ket{v}\bra{v}$ belongs to $\mathcal{C}_A$, in fact $V$ satisfies $gV=V=Vg$. Therefore using \cref{lem:Ginj_isomorph}, for the PEPS defined by $\tilde{B}$ there is a $W\in\mathcal{C}_{\tilde{B}}$ such that
\begin{equation}\label{eq:creaWV}
  \begin{tikzpicture}
    \draw[canvas is xz plane at y=0] (-0.5,-0.5) grid (3.5,2.7);
    \foreach \x in {0,1,2,3}{
      \node[tensorr, draw=black, fill=orange,label=left:$W$] at (\x, 0, 0.55) {};
      \node[tensorr, draw=black, fill=orange, label=left:$W$] at (\x, 0, 1.55) {};
          \node[tensorr, draw=black, fill=orange, label=left:$W$] at (\x, 0, 2.4) {};

      \foreach \z in {0,1,2}{
       \node[tensorB] at (\x,0,\z) {}; 
        \draw (\x,0,\z)--(\x,0.3,\z);
      }
            \node[anchor=north] at (0.05,0.05) {$ \tilde{B}$};
    }
  \end{tikzpicture} \ = \ 
  \begin{tikzpicture}
    \draw[canvas is xz plane at y=0] (-0.5,-0.5) grid (3.5,2.7);
    \foreach \x in {0,1,2,3}{
      \node[tensorr, draw=black, fill=red ,label=left:$V$] at (\x, 0, 0.55) {};
      \node[tensorr, draw=black, fill=red ,label=left:$V$] at (\x, 0, 1.55) {};
       \node[tensorr, draw=black, fill=red ,label=left:$V$] at (\x, 0, 2.4) {};
      \foreach \z in {0,1,2}{
        \node[tensor] at (\x,0,\z) {}; 
        \draw (\x,0,\z)--(\x,0.3,\z);
      }
            \node[anchor=north] at (0.05,0.05) {$A$};
    }
  \end{tikzpicture} .
\end{equation}

If we block $n$ bonds of the two tensor networks, the generated algebras of the representation of $G$ are $\mathcal{A}^{\otimes n}_A$ and $\mathcal{A}^{\otimes n}_{\tilde{B}}$. The corresponding centralizers are denoted by $\mathcal{C}^{n}_{A}$ and $\mathcal{C}^n_{\tilde{B}}$. Using \cref{lem:Ginj_isomorph} there exists also an isomorphism between such centralizer algebras. It can be shown that for a large $n$ that isomorphism is a gauge transformation, see \cref{sec:isols} for the proof. Then, if we denote by 
$\xi(\cdot )=C(\cdot )C^{-1}$ the isomorphism from $\mathcal{C}^{n}_{A}$ to $\mathcal{C}^n_{\tilde{B}}$ of \cref{lem:Ginj_isomorph} we see that
$$1=\tr[V^{\otimes n}]=\tr[\xi(W^{\otimes n})]\Rightarrow 1=\tr[W]^{n}.$$
This implies that $W$ is a rank-one projector. Therefore $W=\ket{w}\bra{w}$, where $\ket{w}$ is some unit vector that belongs to the eigenspace of a one-dimensional irrep of the representation of $G$ in $B$  because $W\in\mathcal{C}_{\tilde{B}}$.\\ 

With the charaterization of $V$ and $W$ the PEPS in Eq.\eqref{eq:creaWV} factorizes into a product of $G$-injective MPS with the following generating tensors
\begin{equation*}
  \begin{tikzpicture}[scale=1.2]
    \draw (-0.5,0,0) -- (0.5,0,0);
    \draw (0,0,-0.5) -- (0,0,0.5);
    \node[draw=black,fill=orange,inner sep=1pt,canvas is xz plane at y=0,shape=semicircle,label=above:$\ket{w}$]  at (0,0,-0.5) {};
        \node[draw=black,fill=orange,inner sep=1pt,rotate=180,canvas is xz plane at y=0,shape=semicircle,label=below:$\bra{w}$]  at (0,0,0.5) {};
    \node[tensorB] (b) at (0,0) {};
    \draw (b)--(0,0.2,0);
  \end{tikzpicture}
  \;  {\rm and} \;
  \begin{tikzpicture}[scale=1.2]
    \draw (-0.5,0,0) -- (0.5,0,0);
    \draw (0,0,-0.5) -- (0,0,0.5);
    \node[draw=black,fill=red,inner sep=1pt,canvas is xz plane at y=0,shape=semicircle,label=above:$\ket{v}$]  at (0,0,-0.5) {};
        \node[draw=black,fill=red,inner sep=1pt,rotate=180,canvas is xz plane at y=0,shape=semicircle,label=below:$\bra{v}$]  at (0,0,0.5) {};
    \node[tensor] (b) at (0,0) {};
    \draw (b)--(0,0.2,0);
  \end{tikzpicture},
\end{equation*}
which are $G$-invariant
\begin{equation*}
  \begin{tikzpicture}[scale=1.2]
     \draw (-0.5,0,0) -- (0.5,0,0);
    \draw (0,0,-0.5) -- (0,0,0.5);
    \node[draw=black,fill=orange,inner sep=1pt,canvas is xz plane at y=0,shape=semicircle]  at (0,0,-0.5) {};
        \node[draw=black,fill=orange,inner sep=1pt,rotate=180,canvas is xz plane at y=0,shape=semicircle]  at (0,0,0.5) {};
    \node[tensorB] (b) at (0,0) {};
    \draw (b)--(0,0.2,0);
          \filldraw[black] (0.3,0) circle (0.04);
          \node[anchor=north] at (0.3,0) {$g$};
          \filldraw[black] (-0.3,0) circle (0.04);
          \node[anchor=south] at (-0.3,0) {$g^{-1}$};
  \end{tikzpicture} =
  \begin{tikzpicture}[scale=1.2]
    \draw (-0.5,0,0) -- (0.5,0,0);
    \draw (0,0,-0.5) -- (0,0,0.5);
    \node[draw=black,fill=orange,inner sep=1pt,canvas is xz plane at y=0,shape=semicircle]  at (0,0,-0.5) {};
        \node[draw=black,fill=orange,inner sep=1pt,rotate=180,canvas is xz plane at y=0,shape=semicircle]  at (0,0,0.5) {};
    \node[tensorB] (b) at (0,0) {};
    \draw (b)--(0,0.2,0);
              \filldraw[black] (0,0,0.25) circle (0.04);
          \node[anchor=east] at (0,0,0.25) {$g$};
          \filldraw[black] (0,0,-0.25) circle (0.04);
          \node[anchor=west] at (0,0,-0.25) {$g^{-1}$};
  \end{tikzpicture} =
  \begin{tikzpicture}[scale=1.2]
    \draw (-0.5,0,0) -- (0.5,0,0);
    \draw (0,0,-0.5) -- (0,0,0.5);
    \node[draw=black,fill=orange,inner sep=1pt,canvas is xz plane at y=0,shape=semicircle]  at (0,0,-0.5) {};
        \node[draw=black,fill=orange, inner sep=1pt,rotate=180,canvas is xz plane at y=0,shape=semicircle]  at (0,0,0.5) {};
    \node[tensorB] (b) at (0,0) {};
    \draw (b)--(0,0.2,0);
  \end{tikzpicture} \ .
\end{equation*}
They also have pseudo inverses:
\begin{equation*}
  \begin{tikzpicture}
  
    \draw (-0.5,0,0) -- (0.5,0,0);
    \draw (0,0,-0.5) -- (0,0,0.5);
    \draw (-0.5,0.5,0) -- (0.5,0.5,0);
    \draw (0,0.5,-0.5) -- (0,0.5,0.5);
    \node[draw=black,fill=orange,inner sep=1pt,canvas is xz plane at y=0,shape=semicircle]  at (0,0,-0.5) {};
        \node[draw=black,fill=orange, inner sep=1pt,rotate=180,canvas is xz plane at y=0,shape=semicircle]  at (0,0,0.5) {};
            \node[draw=black,fill=orange,inner sep=1pt,canvas is xz plane at y=0,shape=semicircle]  at (0,0.5,-0.5) {};
        \node[draw=black,fill=orange, inner sep=1pt,rotate=180,canvas is xz plane at y=0,shape=semicircle]  at (0,0.5,0.5) {};
    \node[tensorB] (b) at (0,0,0) {};
    \node[tensorB] (c) at (0,0.5,0) {};
    \draw (b)--(c);
  \end{tikzpicture} \ =  \sum_g 
  \begin{tikzpicture}
    \draw (0,0,-0.8) -- (0,0,-0.2) -- (0,0.5,-0.2) -- (0,0.5,-0.8);
    \draw (0,0, 0.8) -- (0,0, 0.2) -- (0,0.5, 0.2) -- (0,0.5, 0.8);
    \draw (-0.8,0,0) -- (-0.3,0,0) -- (-0.3,0.5,0) -- (-0.8,0.5,0);
    \draw ( 0.8,0,0) -- ( 0.3,0,0) -- ( 0.3,0.5,0) -- ( 0.8,0.5,0);
        \node[draw=black,fill=orange,inner sep=1pt,canvas is xz plane at y=0,shape=semicircle]  at (0,0,-0.8) {};
        \node[draw=black,fill=orange, inner sep=1pt,rotate=180,canvas is xz plane at y=0,shape=semicircle]  at (0,0,0.8) {};
            \node[draw=black,fill=orange,inner sep=1pt,canvas is xz plane at y=0,shape=semicircle]  at (0,0.5,-0.8) {};
        \node[draw=black,fill=orange, inner sep=1pt,rotate=180,canvas is xz plane at y=0,shape=semicircle]  at (0,0.5,0.8) {};
 \filldraw[black] (0,0.25,-0.2) circle (0.05);
          \node[anchor=south west] at (0,0.25,-0.2) {$g$};    
       \filldraw[black] (0,0.25,0.2) circle (0.05);
          \node[anchor=north east] at (0,0.25,0.2) {$g^{-1}$};
 \filldraw[black] (-0.3,0.35,0) circle (0.05);
          \node[anchor= east] at (-0.3,0.35,0) {$g^{-1}$};    
       \filldraw[black] ( 0.3,0.15,0) circle (0.05);
          \node[anchor= west] at ( 0.3,0.15,0) {$g$};
  \end{tikzpicture} \ = \sum_g 
  \begin{tikzpicture}
    \draw (-0.8,0,0) -- (-0.3,0,0) -- (-0.3,0.5,0) -- (-0.8,0.5,0);
    \draw ( 0.8,0,0) -- ( 0.3,0,0) -- ( 0.3,0.5,0) -- ( 0.8,0.5,0);
    \node[tensorr, label=left:$g^{-1}$]       at (-0.3,0.25,0) {};
    \node[tensorr, label=right:$g$]  at ( 0.3,0.25,0) {};
  \end{tikzpicture} \ .
\end{equation*}
Both generated MPSs are the same for every system size. Therefore, using here the fundamental theorem of Ref. \cite{PerezGarcia07}, the generating tensors are related to each other by an invertible matrix:
\begin{equation*}
  \begin{tikzpicture}[scale=1.2]
    \draw (-0.5,0,0) -- (0.5,0,0);
    \draw (0,0,-0.5) -- (0,0,0.5);
    \node[draw=black,fill=orange,inner sep=1pt,canvas is xz plane at y=0,shape=semicircle,label=above:$\ket{w}$]  at (0,0,-0.5) {};
        \node[draw=black,fill=orange,inner sep=1pt,rotate=180,canvas is xz plane at y=0,shape=semicircle,label=below:$\bra{w}$]  at (0,0,0.5) {};
    \node[tensorB] (b) at (0,0) {};
    \draw (b)--(0,0.2,0);
  \end{tikzpicture}
=
  \begin{tikzpicture}[scale=1.2]
    \draw (-0.5,0,0) -- (0.5,0,0);
    \draw (0,0,-0.5) -- (0,0,0.5);
    \node[draw=black,fill=red,inner sep=1pt,canvas is xz plane at y=0,shape=semicircle,label=above:$\ket{v}$]  at (0,0,-0.5) {};
        \node[draw=black,fill=red,inner sep=1pt,rotate=180,canvas is xz plane at y=0,shape=semicircle,label=below:$\bra{v}$]  at (0,0,0.5) {};
    \node[tensor] (b) at (0,0) {};
        \filldraw[draw=black, fill=red] (0.3,0) circle (0.06);
          \node[anchor=north] at (0.3,0) {$Z$};
          \filldraw[draw=black, fill=red] (-0.3,0) circle (0.06);
          \node[anchor=south] at (-0.3,0) {$Z^{-1}$};
    \draw (b)--(0,0.2,0);
  \end{tikzpicture},
\end{equation*}

Writing out the relation between the $A$ and $\tilde{B}$ tensors, we get

\begin{equation*}
  \begin{tikzpicture}
      \node[tensor] (b) at (0,0) {};
          \draw (-0.5,0,0) -- (0.5,0,0);
    \draw (0,0,-0.5) -- (0,0,0.5);
        \draw (b)--(0,0.2,0);
        \draw (0.8,0,0) -- (0.4,0,0);
    \draw(0,0,0.4) -- (0,0,1);
        \draw(0,0,-0.4) -- (0,0,-1);
\draw[canvas is xz plane at y=0,double=red, double distance=0.6mm,line cap=round]  (-0.1,0.7) -- (0,0.7) arc (90:-90:0.7)-- (-0.1,-0.7);
        \node[draw=black,fill=orange,inner sep=1pt, canvas is xz plane at y=0,shape=semicircle,  label={[shift={(0.3,0)}]$\ket{w}$}]  at (0,0,-1) {};
                        \node[draw=black,fill=orange,inner sep=1pt,rotate=180,canvas is xz plane at y=0,shape=semicircle, label={[shift={(0.3,-0.4)}]$\bra{w}$}]  at (0,0,1) {};      
    \node at (0.4,-0.35) {$T$};
  \end{tikzpicture} 
  =
  \begin{tikzpicture}[scale=1.2]
    \draw (-0.5,0,0) -- (0.5,0,0);
    \draw (0,0,-0.5) -- (0,0,0.5);
    \node[draw=black,fill=red,inner sep=1pt,canvas is xz plane at y=0,shape=semicircle,label=above:$\ket{v}$]  at (0,0,-0.5) {};
        \node[draw=black,fill=red,inner sep=1pt,rotate=180,canvas is xz plane at y=0,shape=semicircle,label=below:$\bra{v}$]  at (0,0,0.5) {};
    \node[tensor] (b) at (0,0) {};
        \filldraw[draw=black, fill=red] (0.3,0) circle (0.06);
          \node[anchor=north] at (0.3,0) {$Z$};
          \filldraw[draw=black, fill=red] (-0.3,0) circle (0.06);
          \node[anchor=south] at (-0.3,0) {$Z^{-1}$};
    \draw (b)--(0,0.2,0);
  \end{tikzpicture}.
\end{equation*}
The operator acting on $A$ in the LHS, using \cref{eq:defT}, is 

\begin{align}
T\left(\ket{w}\otimes \id \otimes \ket{w}\right)=&  \sum_{g_1,g_2g_3}T_{g_1,g_2,g_3}g_1\ket{w}\otimes g_2\otimes g_3\ket{w} \notag\\
=&   \sum_{g_1,g_2g_3}T_{g_1,g_2,g_3}\ket{w}\otimes g_2\otimes \ket{w} \notag \\
=&  \left(\ket{w}\otimes \id \otimes \ket{w}\right) \sum_{g_1,g_2g_3}T_{g_1,g_2,g_3} \id \otimes g_2\otimes \id \notag \\
=&  \left(\ket{w}\otimes \id \otimes \ket{w}\right)  \id \otimes \left( \sum_{g_1,g_2g_3}T_{g_1,g_2,g_3}g_2 \right)\otimes \id \notag .
\end{align}

That is,

\begin{equation*}
  \begin{tikzpicture}[scale=1.2]
    \draw (-0.5,0,0) -- (0.5,0,0);
    \draw (0,0,-0.5) -- (0,0,0.5);
    \node[draw=black,fill=orange,inner sep=1pt,canvas is xz plane at y=0,shape=semicircle,label=above:$\ket{w}$]  at (0,0,-0.5) {};
        \node[draw=black,fill=orange,inner sep=1pt,rotate=180,canvas is xz plane at y=0,shape=semicircle,label=below:$\bra{w}$]  at (0,0,0.5) {};
            \node[tensorr, draw=black, fill=red , label=below:$T|_2$] at (0.35,0) {};
    \node[tensor] (b) at (0,0) {};
    \draw (b)--(0,0.2,0);
  \end{tikzpicture}
=
  \begin{tikzpicture}[scale=1.2]
    \draw (-0.5,0,0) -- (0.5,0,0);
    \draw (0,0,-0.5) -- (0,0,0.5);
    \node[draw=black,fill=red,inner sep=1pt,canvas is xz plane at y=0,shape=semicircle,label=above:$\ket{v}$]  at (0,0,-0.5) {};
        \node[draw=black,fill=red,inner sep=1pt,rotate=180,canvas is xz plane at y=0,shape=semicircle,label=below:$\bra{v}$]  at (0,0,0.5) {};
    \node[tensor] (b) at (0,0) {};
        \filldraw[draw=black, fill=red] (0.3,0) circle (0.06);
          \node[anchor=north] at (0.3,0) {$Z$};
          \filldraw[draw=black, fill=red] (-0.3,0) circle (0.06);
          \node[anchor=south] at (-0.3,0) {$Z^{-1}$};
    \draw (b)--(0,0.2,0);
  \end{tikzpicture},
\end{equation*}
where $T|_2=\left(\bra{w}\otimes \id \otimes \bra{w}\right) T\left(\ket{w}\otimes \id \otimes \ket{w}\right)=\sum_{g_1,g_2g_3}T_{g_1,g_2,g_3}g_2$. We now apply the inverse of $A$

\begin{equation*}
\sum_g 
  \begin{tikzpicture}[baseline=1mm]
    \draw (-0.8,0,0) -- (-0.3,0,0) -- (-0.3,0.5,0) -- (-0.8,0.5,0);
    \draw ( 0.8,0,0) -- ( 0.3,0,0) -- ( 0.3,0.5,0) -- ( 0.8,0.5,0);
    \node[tensorr, label=left:$g^{-1}$]       at (-0.3,0.25,0) {};
    \node[tensorr, label=right:$g$]  at ( 0.3,0.25,0) {};
        \node[tensorr, draw=black, fill=red, label=below:$T|_2$]  at (0.5,0,0) {};
  \end{tikzpicture} 
  =   \sum_g 
  \begin{tikzpicture}[baseline=1mm]
    \draw (-0.8,0,0) -- (-0.3,0,0) -- (-0.3,0.5,0) -- (-0.8,0.5,0);
    \draw ( 0.8,0,0) -- ( 0.3,0,0) -- ( 0.3,0.5,0) -- ( 0.8,0.5,0);
    \node[tensorr, label=left:$g^{-1}$]       at (-0.3,0.25,0) {};
    \node[tensorr, label=right:$g$]  at ( 0.3,0.25,0) {};
        \node[tensorr, draw=black, fill=red, label=below:$Z$]  at (0.5,0,0) {};
          \node[tensorr, draw=black, fill=red, label=below:$Z^{-1}$]  at (-0.5,0,0) {};
  \end{tikzpicture}  .
\end{equation*}
Applying $Z$ on the left bottom leg and tracing out the right term, we get 
\begin{equation*}
\sum_g 
  \begin{tikzpicture}[baseline=1mm]
    \draw (-0.8,0,0) -- (-0.3,0,0) -- (-0.3,0.5,0) -- (-0.8,0.5,0);
    \draw ( 0.8,0,0) -- ( 0.3,0,0) -- ( 0.3,0.5,0) -- ( 0.8,0.5,0)--( 0.8,0,0) ;
    \node[tensorr, label=left:$g^{-1}$]       at (-0.3,0.25,0) {};
    \node[tensorr, label=right:$g$]  at ( 0.3,0.25,0) {};
        \node[tensorr, draw=black, fill=red, label=below:$T|_2$]  at (0.5,0,0) {};
                  \node[tensorr, draw=black, fill=red, label=below:$Z$]  at (-0.5,0,0) {};

  \end{tikzpicture} 
  =   \sum_g 
  \begin{tikzpicture}[baseline=1mm]
    \draw (-0.8,0,0) -- (-0.3,0,0) -- (-0.3,0.5,0) -- (-0.8,0.5,0);
    \draw ( 0.8,0,0) -- ( 0.3,0,0) -- ( 0.3,0.5,0) -- ( 0.8,0.5,0)--( 0.8,0,0) ;
    \node[tensorr, label=left:$g^{-1}$]       at (-0.3,0.25,0) {};
    \node[tensorr, label=right:$g$]  at ( 0.3,0.25,0) {};
        \node[tensorr, draw=black, fill=red, label=below:$Z$]  at (0.5,0,0) {};
  \end{tikzpicture}\ ,
\end{equation*}
which implies that  $Z=\sum_g \zeta_g g \in \mathcal{A}$ for some coefficients $\zeta_g\in \mathbb{C}$.
Since $Z\in \mathcal{A}$, it satisfies $(Z\otimes \id)(\sum_g g^{-1}\otimes g)=(\sum_g g^{-1}\otimes g) (\id \otimes Z)$ so
\begin{equation*}
\sum_g 
  \begin{tikzpicture}[baseline=1mm]
    \draw (-0.8,0,0) -- (-0.3,0,0) -- (-0.3,0.5,0) -- (-0.8,0.5,0);
    \draw ( 0.8,0,0) -- ( 0.3,0,0) -- ( 0.3,0.5,0) -- ( 0.8,0.5,0);
    \node[tensorr, label=left:$g^{-1}$]       at (-0.3,0.25,0) {};
    \node[tensorr, label=right:$g$]  at ( 0.3,0.25,0) {};
        \node[tensorr, draw=black, fill=red, label=below:$T|_2$]  at (0.5,0,0) {};
  \end{tikzpicture} 
  =   \sum_g 
  \begin{tikzpicture}
    \draw (-0.8,0,0) -- (-0.3,0,0) -- (-0.3,0.5,0) -- (-0.8,0.5,0);
    \draw ( 0.8,0,0) -- ( 0.3,0,0) -- ( 0.3,0.5,0) -- ( 0.8,0.5,0);
    \node[tensorr, label=left:$g^{-1}$]       at (-0.3,0.25,0) {};
    \node[tensorr, label=right:$g$]  at ( 0.3,0.25,0) {};
  \end{tikzpicture} .
\end{equation*}
Therefore tracing out the left leg leads to the conclusion that $T|_2=\id$ 
\end{proof}

We are now going to consider the blocking of two tensors. This results in $G$-injective PEPS tensors that generate the same state for every system size. The representation of the group under the blocking is given by the linear map $\Delta:\mathcal{A}\mapsto \mathcal{A}\otimes \mathcal{A} $ which is defined by
$$\Delta(g)=g\otimes g,\; \forall g\in G,$$
where by linearity $\Delta(\sum_g\alpha_g g)= \sum_g\alpha_g g\otimes g$. The composition of the map will be denoted by $\Delta^n:\mathcal{A}\mapsto \mathcal{A}^{\otimes n}; \Delta^n(g)=g^{\otimes n}$ for $g \in G$. 
\

The map $\Delta$ is borrowed here from the axioms of Hopf algebras, see \cite{HAbook}, and it is called coproduct. In fact $\mathcal{A}$ is a Hopf algebra if we also consider the map $\epsilon:\mathcal{A}\mapsto \mathbb{C}$ and the anti-linear map $S:\mathcal{A}\mapsto \mathcal{A}$ defined by $\epsilon(g)=1$ and $S(g)=g^{-1}$ for all $g\in G$ respectively.

The next proposition shows how the operator $T$ behaves when concatenating two $\tilde{B}$ tensors, see \cref{eq:defBtilde}. For the sake of clarity the order of the factors in the tensor product in the equations would be:

\begin{equation*}  
  \begin{tikzpicture}
   \draw (0,0,0)--(1,0,0);
         \pic at (0,0,0) {3dpeps};
      \pic at (1,0,0) {3dpeps};
   \node[anchor=west] at (1.5,0,0) {$3 $};
    \draw (0,0,-1.1)--(0,0,-0.3);
       \node[anchor=south] at (0,0,-1.1) {$1$};
    \draw (0,0,0.3)--(0,0,1.1);
           \node[anchor=north] at (0,0,1.1) {$5$};
    \draw (1,0,-0.3)--(1,0,-1.1);
              \node[anchor=south] at (1,0,-1.1) {$2$};
    \draw (1,0,0.3)--(1,0,1.1);
                  \node[anchor=north] at (1,0,1.1) {$4$};
       \end{tikzpicture}
\end{equation*}

We find that two operators $T$ acting on each tensor are equal to the action of one $T$ acting on both tensors, using $\Delta$, plus a gauge transformation in the concatenating direction:

\begin{proposition}[Concatenation of $T$] \label{propYT}
  There is a $Y\in \mathcal{A}^{\otimes 2}$ invertible such that
  \begin{equation}\label{Tconca}
    \begin{tikzpicture}
      \draw (-0.7,0,0)--(2.7,0,0);
      \draw (0,0,-1.7)--(0,0,1.7);
      \draw (1.7,0,-1.7)--(1.7,0,1.7);
\draw[canvas is xz plane at y=0,double=red, double distance=0.6mm,line cap=round]  (-0.1,0.7) -- (0,0.7) arc (90:-90:0.7)-- (-0.1,-0.7);

\draw[canvas is xz plane at y=0,double=red, double distance=0.6mm,line cap=round]  (-0.1+1.7,0.7) -- (0+1.7,0.7) arc (90:-90:0.7)-- (-0.1+1.7,-0.7);

      \pic at (0,0,0) {3dpeps};
      \pic at (1.7,0,0) {3dpeps};
       \node at (0.3,-0.4) {$T$};
        \node at (2,-0.4) {$T$};
    \end{tikzpicture}   = 
    \begin{tikzpicture}
      \draw (0,0,0)--(2.1,0,0);
      \draw (0,0,-1.9)--(0,0,1.9);
      \draw (1,0,-1.9)--(1,0,1.9);
      \pic at (0,0,0) {3dpeps};
      \pic at (1,0,0) {3dpeps};
      \draw[rounded corners=.05cm,black, fill=red, canvas is xz plane at y=0] (-0.3,-1.3) rectangle (1.3,-1.5);
      \draw[rounded corners=.05cm,black, fill=red, canvas is xz plane at y=0] (-0.3,1.3) rectangle (1.3,1.5);
      
      \draw[canvas is xz plane at y=0,double=red, double distance=0.6mm,line cap=round]  (-0.1,0.8) -- (1,0.8) arc (90:-90:0.8)-- (-0.1,-0.8);

       \node at (1.7,-0.4) {$(\Delta\otimes \id\otimes \Delta)T$};
       \node at (-1.1,-0.4) {$Y^{-1}$};
       \node at (0.6,0.8) {$Y$};
    \end{tikzpicture},
  \end{equation}  
 where $(\Delta\otimes \id\otimes \Delta)T=\sum_{g_1,g_2g_3}T_{g_1,g_2,g_3}g^{\otimes 2}_1\otimes g_2\otimes g^{\otimes 2}_3$
\end{proposition}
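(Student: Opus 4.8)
The plan is to bootstrap \cref{propYT} from the single-tensor result \cref{prop:groupalgebra1} by blocking two tensors, and then to identify the operator that comes out with the coproduct of $T$. Write $A_2$ (resp.\ $\tilde B_2$) for the tensor obtained by contracting the right virtual leg of one copy of $A$ (resp.\ of $\tilde B$, see \eqref{eq:defBtilde}) with the left virtual leg of the next, the five remaining legs being numbered as in the statement. By the concatenation rule \eqref{conca} both $A_2$ and $\tilde B_2$ are again $G$-injective, with the group represented by $u_g$ on the single horizontal legs and by $\Delta(u_g)=u_g\otimes u_g$ on the doubled vertical legs, and since $A$ and $\tilde B$ generate the same PEPS for every system size so do $A_2$ and $\tilde B_2$.

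First I would apply \cref{prop:groupalgebra1} to the pair $A_2,\tilde B_2$ (its proof only uses $G$-injectivity, so it goes through verbatim even though the representation on the vertical legs is now $\Delta(u_g)$, and it relies internally on the MPS fundamental theorem \cite{PerezGarcia07}). This produces invertible matrices $\hat X$ on the horizontal bonds, $\hat Y$ on the doubled vertical bonds, and an operator $\hat T$ in the triple tensor product of the algebras generated by the representations on the respective legs, i.e.\ $\hat T\in\Delta(\mathcal A)\otimes\mathcal A\otimes\Delta(\mathcal A)$, with $\tilde B_2$ equal to $A_2$ dressed by $\hat X,\myinv{\hat X}$ (horizontal), $\hat Y,\myinv{\hat Y}$ (vertical) and $\hat T$. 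Since $\Delta:\mathcal A\to\mathcal A\otimes\mathcal A$ is an injective unital algebra morphism, $\Delta(\mathcal A)\cong\mathcal A$, so $\hat T=(\Delta\otimes\id\otimes\Delta)(t)$ for a unique $t\in\mathcal A^{\otimes3}$; the proposition will follow once we show $\hat X\propto\id$ and $t=T$.

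Next, to remove the horizontal gauge, I would use that concatenating the single-tensor identity $\tilde B=T\cdot A$ already writes $\tilde B_2=\mathcal T_2\cdot A_2$ with $\mathcal T_2$ an \emph{explicit} operator built only from $T$: the right-leg factor of the left $T$ sits on the shared edge and is transported onto the remaining legs of the right $A$ via its $G$-invariance \eqref{Ginva}, and the horizontal $X$-dressings of the two copies of $\tilde B$ cancel on the shared edge, so no net horizontal gauge sits on the outer legs. Comparing this with the $\hat X,\hat Y,\hat T$ description (the two agree on $A_2$, hence agree after conjugation by $\mathcal P_G$) and using $G$-injectivity of $A_2$ forces $\hat X\in\mathcal A$ — the same mechanism that forced the matrix $Z$ into $\mathcal A$ in the proof of \cref{lem:Toperator} — and an element of $\mathcal A$ on the outer horizontal legs can be pushed through the $G$-invariance of $A_2$ and absorbed, leaving $\tilde B_2=(\Delta\otimes\id\otimes\Delta)(t)\cdot A_2[\hat Y\text{ on the two upper legs},\ \myinv{\hat Y}\text{ on the two lower legs}]$, which is the shape of the right-hand side of \eqref{Tconca} with $Y:=\hat Y\in\Delta(\mathcal A)\subseteq\mathcal A^{\otimes2}$.

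Finally I would identify $t$ with $T$ — this is where the real work is. I would collapse one of the two blocks by contracting all four free virtual legs of the right $A$ in $A_2$ against the one-dimensional-irrep boundary vectors $\ket w,\bra w$ of \cref{lem:Toperator} (and against trivial-sector vectors where needed): this turns $A_2$ into the single tensor $A$ with a spectator factor, turns $\tilde B_2=\mathcal T_2\cdot A_2$ back into the single-tensor identity $\tilde B=T\cdot A$, and turns $(\Delta\otimes\id\otimes\Delta)(t)\cdot A_2[\hat Y,\myinv{\hat Y}]$ into the analogous operator built from $t$ acting on $A$, dressed by the collapsed image of $\hat Y$. Inverting $A$ and letting $\ket w$ range over enough one-dimensional sectors then gives $t=T$, while the normalization $T|_2=\id$ from \cref{lem:Toperator} removes any spurious scalar coming from the collapsed gauge and confirms that $\hat Y$ is invertible of the asserted form. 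The genuine obstacle here is to check that, modulo the vertical gauge $\hat Y$ and the projector $\mathcal P_G$, the cross terms produced in $\mathcal T_2$ when the shared-edge factor of $T$ is transported through the neighbouring $A$ reorganize \emph{exactly} into $g_1\otimes g_1$ on the two lower legs and $g_3\otimes g_3$ on the two upper legs with the same coefficients $T_{g_1,g_2,g_3}$ — so that nothing genuinely new survives the collapse — while keeping the clockwise leg-ordering convention of \eqref{Ginva} consistent at every transport.
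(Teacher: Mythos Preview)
Your overall strategy matches the paper's: apply \cref{prop:groupalgebra1} to the blocked pair, show the horizontal gauge lies in $\mathcal A$ and absorb it, then relate the resulting three-leg operator to $T$ via the trivial-sector vector $\ket w$. The gap is in the last step. When you collapse one of the two blocks by contracting with $\ket w$, the vertical gauge $\hat Y$ on the doubled leg does \emph{not} reduce to a scalar: it becomes $\hat Y|_1:=(\id\otimes\bra w)\hat Y(\id\otimes\ket w)$, a genuine operator on the surviving leg. After inverting $A$ you obtain
\[
T \;=\; \bigl(\hat Y|_1\otimes\id\otimes\myinv{\hat Y}|_1\bigr)\, t,
\]
and ranging $\ket w$ over several one-dimensional sectors only produces different operator-valued factors $\hat Y|_w$, never $t=T$. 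So the ``spurious scalar'' you plan to remove with $T|_2=\id$ is in fact an operator, and your collapse does not eliminate it.

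The paper does not try to force $t=T$. It accepts the relation above (this is Eq.~\eqref{eq:TY1F}, with the paper's $\tilde F$ playing the role of your $t$) and instead \emph{redefines} the vertical gauge as $Y:=\hat Y\bigl(\myinv{\hat Y}|_1\otimes\myinv{\hat Y}|_1\bigr)$, so that the $\hat Y|_1$ factors are absorbed and the right-hand side now carries $(\Delta\otimes\id\otimes\Delta)T$ on the nose. The normalization $T|_2=\id$ from \cref{lem:Toperator} is used afterwards, to check that the redefined $Y$ satisfies $Y|_1=Y|_2=\id$. Separately, your assertion that $\hat Y\in\Delta(\mathcal A)$ is unwarranted: \cref{prop:groupalgebra1} only hands you an invertible matrix on the doubled bond, with no algebraic constraint.
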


\begin{proof}
The tensors
\begin{equation*}
  \begin{tikzpicture}
    \draw (-0.7,0,0)--(2.7,0,0);
    \draw (0,0,-1.7)--(0,0,1.7);
    \draw (1.7,0,-1.7)--(1.7,0,1.7);
\draw[canvas is xz plane at y=0,double=red, double distance=0.6mm,line cap=round]  (-0.1,0.7) -- (0,0.7) arc (90:-90:0.7)-- (-0.1,-0.7);

\draw[canvas is xz plane at y=0,double=red, double distance=0.6mm,line cap=round]  (-0.1+1.7,0.7) -- (0+1.7,0.7) arc (90:-90:0.7)-- (-0.1+1.7,-0.7);
    \pic at (0,0,0) {3dpeps};
    \pic at (1.7,0,0) {3dpeps};
      \node at (0.3,-0.4) {$T$};
        \node at (2,-0.4) {$T$};
  \end{tikzpicture}  \quad \text{and} \quad 
  \begin{tikzpicture}
    \pic at (0,0,0) {3dpeps};
    \draw (0,0,0)--(1,0,0);
    \pic at (1,0,0) {3dpeps};
  \end{tikzpicture}
\end{equation*} 
are both $G$-injective and they generate the same state because of Eq.(\ref{eq:defT}). 
Using \cref{prop:groupalgebra1} there are invertible operators $X$, acting on one bond, and $\tilde{Y}$ acting on two bonds. Also there is an operator $T'$ acting on the bonds 1,2,3,4 and 5. Notice that since this operator belongs to the algebra generated by the representation of two blocked tensors $A$, $T'$ has a special form. That is, there exist an $F \in \mathcal{A}^{\otimes 3}$ such that $T'=(\Delta\otimes \id\otimes \Delta)F$ so
\begin{equation}\label{eq:XinG}
  \begin{tikzpicture}
    \draw (-0.7,0,0)--(2.7,0,0);
    \draw (0,0,-1.7)--(0,0,1.7);
    \draw (1.7,0,-1.7)--(1.7,0,1.7);
\draw[canvas is xz plane at y=0,double=red, double distance=0.6mm,line cap=round]  (-0.1,0.7) -- (0,0.7) arc (90:-90:0.7)-- (-0.1,-0.7);
\draw[canvas is xz plane at y=0,double=red, double distance=0.6mm,line cap=round]  (-0.1+1.7,0.7) -- (0+1.7,0.7) arc (90:-90:0.7)-- (-0.1+1.7,-0.7);
    \pic at (0,0,0) {3dpeps};
    \pic at (1.7,0,0) {3dpeps};
  \end{tikzpicture}   = 
  \begin{tikzpicture}
    \draw (-1.0,0,0)--(2.6,0,0);
    \draw (0,0,-1.9)--(0,0,1.9);
    \draw (1,0,-1.9)--(1,0,1.9);
    \pic at (0,0,0) {3dpeps};
    \pic at (1,0,0) {3dpeps};
    \draw[rounded corners=.05cm,black, fill=red, canvas is xz plane at y=0] (-0.3,-1.3) rectangle (1.3,-1.5);
    \draw[rounded corners=.05cm,black, fill=red, canvas is xz plane at y=0] (-0.3,1.3) rectangle (1.3,1.5);
      \draw[canvas is xz plane at y=0,double=red, double distance=0.6mm,line cap=round]  (-0.1,0.8) -- (1,0.8) arc (90:-90:0.8)-- (-0.1,-0.8);
    \filldraw[draw=black, fill=red]  (-0.5,0,0) circle (0.06);
    \filldraw[draw=black, fill=red]  (2.2,0,0) circle (0.06);
    \node at (1.7,-0.4) {$(\Delta\otimes \id\otimes \Delta)F$};
    \node at (-0.5,0.2) {$X$};
    \node at (2.5,0.2) {$X^{-1}$};
     \node at (-1.1,-0.4) {$\tilde{Y}^{-1}$};
       \node at (0.6,0.8) {$\tilde{Y}$};
  \end{tikzpicture}.
\end{equation}
After rearranging, we get

\begin{equation*}
  \begin{tikzpicture}
    \draw (-0.7,0,0)--(2.9,0,0);
    \draw (0,0,-1.7)--(0,0,1.7);
    \draw (1.7,0,-1.7)--(1.7,0,1.7);
    \pic at (0,0,0) {3dpeps};
    \pic at (1.7,0,0) {3dpeps};
\draw[canvas is xz plane at y=0,double=red, double distance=0.5mm,line cap=round]  (-0.1,0.5) -- (1.7,0.5) arc (90:-90:0.5)-- (-0.1,-0.5);
\draw[canvas is xz plane at y=0,double=red, double distance=0.6mm,line cap=round]  (-0.1+1.7,0.8) -- (0+1.7,0.8) arc (90:-90:0.8)-- (-0.1+1.7,-0.8);
    \filldraw[draw=black, fill=red] (2.7,0,0) circle (0.06);
     \node at (0.3,-0.4) {$(\id\otimes \Delta^2 \otimes \id)T$};
     \node at (2,-0.4) {$T$};
  \end{tikzpicture}   = 
  \begin{tikzpicture}
    \draw (-1.0,0,0)--(2.3,0,0);
    \draw (0,0,-1.9)--(0,0,1.9);
    \draw (1,0,-1.9)--(1,0,1.9);
    \pic at (0,0,0) {3dpeps};
    \pic at (1,0,0) {3dpeps};
    \draw[rounded corners=.05cm,black, fill=red, canvas is xz plane at y=0] (-0.3,-1.3) rectangle (1.3,-1.5);
    \draw[rounded corners=.05cm,black, fill=red, canvas is xz plane at y=0] (-0.3,1.3) rectangle (1.3,1.5);
      \draw[canvas is xz plane at y=0,double=red, double distance=0.6mm,line cap=round]  (-0.1,0.8) -- (1,0.8) arc (90:-90:0.8)-- (-0.1,-0.8);

             \node at (1.7,-0.4) {$(\Delta\otimes \id\otimes \Delta)F$};
  \end{tikzpicture},
\end{equation*} 
where $ (\id\otimes \Delta^2 \otimes \id)T$ is the operator $T$ after using the $G$-injectivity of $A$ to go through the second tensor. On the LHS, the leftmost virtual leg is not acted by any operator. We apply the inverse of the two concatenated tensors of \cref{eq:XinG} using \cref{conca}  and we trace out the leftmost virtual leg. Then, the tensors are removed from the LHS together with the leftmost leg. 
On the RHS, the same operation yields an element from the group algebra, $\tilde{X}= \sum_g \tr{g^{-1}X} g^{\otimes 3}\otimes {g^{-1}}^{\otimes 3}$. $\tilde{X}$ can be incorporated in $F$ which allow us to write: 
\begin{equation*}
  \begin{tikzpicture}
    \draw (2.0,0,0)--(2.9,0,0);
    \draw (0,0,-0.8)--(0,0,-0.3);
    \draw (0,0,0.3)--(0,0,0.8);
    \draw (1.7,0,-0.3)--(1.7,0,-1.1);
    \draw (1.7,0,0.3)--(1.7,0,1.1);
  \draw[canvas is xz plane at y=0,double=red, double distance=0.5mm,line cap=round]  (-0.1,0.5) -- (1.7,0.5) arc (90:-90:0.5)-- (-0.1,-0.5);
\draw[canvas is xz plane at y=0,double=red, double distance=0.6mm,line cap=round]  (-0.1+1.7,0.8) -- (0+1.7,0.8) arc (90:-90:0.8)-- (-0.1+1.7,-0.8);
     \filldraw[draw=black, fill=red] (2.7,0,0) circle (0.06);
    \node at (0.3,-0.4) {$(\id\otimes \Delta^2 \otimes \id)T$};
     \node at (2,-0.4) {$T$};
  \end{tikzpicture}   = 
  \begin{tikzpicture}
    \draw (1.6,0,0)--(2.0,0,0);
    \draw (0,0,-1.9)--(0,0,-0.4);
    \draw (0,0,0.4)--(0,0,1.9);
    \draw (1,0,-1.9)--(1,0,-0.4);
    \draw (1,0,0.4)--(1,0,1.9);
    \draw[rounded corners=.05cm,black, fill=red, canvas is xz plane at y=0] (-0.3,-1.3) rectangle (1.3,-1.5);
    \draw[rounded corners=.05cm,black, fill=red, canvas is xz plane at y=0] (-0.3,1.3) rectangle (1.3,1.5);
      \draw[canvas is xz plane at y=0,double=red, double distance=0.6mm,line cap=round]  (-0.1,0.8) -- (1,0.8) arc (90:-90:0.8)-- (-0.1,-0.8);
  \node at (1.7,-0.4) {$(\Delta\otimes \id\otimes \Delta)F$};
  \end{tikzpicture}.
\end{equation*}  
If we project the leftmost four legs on the trivial irrep, using $\ket{w}$, the LHS becomes $X$ according to \cref{lem:Toperator}. The RHS becomes an operator of $\mathcal{A}$. Therefore $X\in \mathcal{A}$, and the same happens for $X^{-1}$, so from \cref{eq:XinG} we can define a new operator $\tilde{F}$ incorporating $X,X^{-1}$ and $F$ so that the following holds:
\begin{equation}\label{eq:TF}
  \begin{tikzpicture}
    \draw (2.0,0,0)--(2.9,0,0);
    \draw (0,0,-0.8)--(0,0,-0.3);
    \draw (0,0,0.3)--(0,0,0.8);
    \draw (1.7,0,-0.3)--(1.7,0,-1.1);
    \draw (1.7,0,0.3)--(1.7,0,1.1);
   \draw[canvas is xz plane at y=0,double=red, double distance=0.5mm,line cap=round]  (-0.1,0.5) -- (1.7,0.5) arc (90:-90:0.5)-- (-0.1,-0.5);
\draw[canvas is xz plane at y=0,double=red, double distance=0.6mm,line cap=round]  (-0.1+1.7,0.8) -- (0+1.7,0.8) arc (90:-90:0.8)-- (-0.1+1.7,-0.8);
      \node at (0.3,-0.4) {$(\id\otimes \Delta^2 \otimes \id)T$};
     \node at (2,-0.4) {$T$};
  \end{tikzpicture}   = 
  \begin{tikzpicture}
    \draw (1.6,0,0)--(2.0,0,0);
    \draw (0,0,-1.9)--(0,0,-0.4);
    \draw (0,0,0.4)--(0,0,1.9);
    \draw (1,0,-1.9)--(1,0,-0.4);
    \draw (1,0,0.4)--(1,0,1.9);
    \draw[rounded corners=.05cm,black, fill=red, canvas is xz plane at y=0] (-0.3,-1.3) rectangle (1.3,-1.5);
    \draw[rounded corners=.05cm,black, fill=red, canvas is xz plane at y=0] (-0.3,1.3) rectangle (1.3,1.5);
      \draw[canvas is xz plane at y=0,double=red, double distance=0.6mm,line cap=round]  (-0.1,0.8) -- (1,0.8) arc (90:-90:0.8)-- (-0.1,-0.8);

       \node at (1.7,-0.4) {$(\Delta\otimes \id\otimes \Delta )\tilde{F}$};
  \end{tikzpicture}.
\end{equation} 
 Now we project the legs 1 and 5 onto the trivial irrep  and using again \cref{lem:Toperator} we obtain that
\begin{equation}\label{eq:TY1F}
  T = (\tilde{Y}|_1\otimes \id \otimes \myinv{\tilde{Y}}|_1)\tilde{F} ,
\end{equation} 
where we have defined $\tilde{Y}|_1=( \id \otimes \bra{w})\tilde{Y} ( \id \otimes \ket{w})$. 

We can now define $Y=\tilde{Y}(\myinv{\tilde{Y}}|_1\otimes \myinv{\tilde{Y}}|_1)$ provided that $(\tilde{Y}|_1)^{-1}=\myinv{\tilde{Y}}|_1$. 

%

With this transformation it is clear that now ${Y}$ satisfies ${Y}|_1=\id$. Finally, we can write
\begin{equation*}
  \begin{tikzpicture}
    \draw (2.0,0,0)--(2.9,0,0);
    \draw (0,0,-0.8)--(0,0,-0.3);
    \draw (0,0,0.3)--(0,0,0.8);
    \draw (1.7,0,-0.3)--(1.7,0,-1.1);
    \draw (1.7,0,0.3)--(1.7,0,1.1);
  \draw[canvas is xz plane at y=0,double=red, double distance=0.5mm,line cap=round]  (-0.1,0.5) -- (1.7,0.5) arc (90:-90:0.5)-- (-0.1,-0.5);
\draw[canvas is xz plane at y=0,double=red, double distance=0.6mm,line cap=round]  (-0.1+1.7,0.8) -- (0+1.7,0.8) arc (90:-90:0.8)-- (-0.1+1.7,-0.8);
      \node at (0.3,-0.4) {$(\id\otimes \Delta^2 \otimes \id)T$};
     \node at (2,-0.4) {$T$};
  \end{tikzpicture}   = 
  \begin{tikzpicture}
       \node at (-1.1,-0.4) {${Y}^{-1}$};
       \node at (0.6,0.8) {${Y}$};
    \draw (1.6,0,0)--(2.0,0,0);
    \draw (0,0,-1.9)--(0,0,-0.4);
    \draw (0,0,0.4)--(0,0,1.9);
    \draw (1,0,-1.9)--(1,0,-0.4);
    \draw (1,0,0.4)--(1,0,1.9);
    \draw[rounded corners=.05cm,black, fill=red, canvas is xz plane at y=0] (-0.3,-1.3) rectangle (1.3,-1.5);
    \draw[rounded corners=.05cm,black, fill=red, canvas is xz plane at y=0] (-0.3,1.3) rectangle (1.3,1.5);
      \draw[canvas is xz plane at y=0,double=red, double distance=0.6mm,line cap=round]  (-0.1,0.8) -- (1,0.8) arc (90:-90:0.8)-- (-0.1,-0.8);

       \node at (1.7,-0.4) {$(\Delta\otimes \id\otimes \Delta)T$};
  \end{tikzpicture}.
\end{equation*} 
Applying $\ket{w}$ on the two leftmost legs we get
\begin{equation*}
  T  = ({Y}|_2\otimes \id \otimes {Y}|_2^{-1})T,
\end{equation*}
and therefore ${Y}|_2=\id$.

%
\end{proof}
We study further the properties of the operator $T$. We show that ${Y}$ is an object also defined in the context of Hopf algebras, a so-called twist, see \cite{Andruskiewitsch17}.

\begin{definition}
A twist is an element ${Y}$ of $\mathcal{A}\otimes \mathcal{A}$ that satisfies the condition:
$$\id\otimes {Y}(\id\otimes \Delta) {Y} = {Y}\otimes \id (\Delta\otimes \id) {Y}.$$
\end{definition}
With this caracterization of $Y$ we can obtain how $T$ grows from one tensor to two, {\it i.e.} how the operator $T$ is pushed to the boundary when blocking tensors.

\begin{proposition}[The growth of $T$] The operator $Y$ satisfies the following:
\begin{itemize}
\item it is symmetric under transposition,
\item it is a twist.
\end{itemize}
We have
\begin{equation*}
  \begin{tikzpicture}
    \draw (-0.7,0,0)--(2.7,0,0);
    \draw (0,0,-1.7)--(0,0,1.7);
    \draw (1.7,0,-1.7)--(1.7,0,1.7);
    
    \draw[canvas is xz plane at y=0,double=red, double distance=0.6mm,line cap=round]  (-0.1,0.7) -- (0,0.7) arc (90:-90:0.7)-- (-0.1,-0.7);
\draw[canvas is xz plane at y=0,double=red, double distance=0.6mm,line cap=round]  (-0.1+1.7,0.7) -- (0+1.7,0.7) arc (90:-90:0.7)-- (-0.1+1.7,-0.7);

    \pic at (0,0,0) {3dpeps};
    \pic at (1.7,0,0) {3dpeps};
       \node at (0.3,-0.4) {$T$};
        \node at (2,-0.4) {$T$};
  \end{tikzpicture}   = 
  \begin{tikzpicture}
    \draw (0,0,0)--(2.1,0,0);
    \draw (0,0,-1.9)--(0,0,1.9);
    \draw (1,0,-1.9)--(1,0,1.9);
    \pic at (0,0,0) {3dpeps};
    \pic at (1,0,0) {3dpeps};
      \draw[canvas is xz plane at y=0,double=red, double distance=0.6mm,line cap=round]  (-0.1,0.8) -- (1,0.8) arc (90:-90:0.8)-- (-0.1,-0.8);

              \node at (1.7,-0.4) {$(\Delta\otimes \id\otimes \Delta)T$};
  \end{tikzpicture}.
\end{equation*}
\end{proposition}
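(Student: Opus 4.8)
The plan is to establish the two structural properties of $Y$ first and then read off the displayed identity from \cref{Tconca}.

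For the transposition symmetry I would use the reflection symmetry of the construction. The two $T$-decorated tensors in \cref{Tconca} are identical, so the picture on the left-hand side is invariant under the left--right reflection of the square lattice interchanging the two vertices; on the right-hand side this reflection interchanges the two tensor factors produced by each $\Delta$, but since $\Delta(g)=g\otimes g$ is symmetric, $(\Delta\otimes\id\otimes\Delta)T$ is unchanged, while the gauges $Y,Y^{-1}$ have their two legs swapped, i.e.\ they are replaced by $\tau(Y),\tau(Y)^{-1}$ with $\tau$ the flip on $\mathcal{A}\otimes\mathcal{A}$. Thus both $Y$ and $\tau(Y)$ realise the same perpendicular gauge relating the two blocked $G$-injective PEPS, and by uniqueness of that gauge (\cref{prop:groupalgebra1}, cf.\ \cref{lem:Ginj_isomorph}) we get $\tau(Y)=\mu\,Y$; the scalar is fixed to $\mu=1$ by the normalisation $Y|_1=Y|_2=\id$ from \cref{propYT}. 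Hence $Y=\tau(Y)$.

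For the twist property I would look at the concatenation of three $T$-decorated tensors in a row and compute the resulting perpendicular gauge in the two ways corresponding to the parenthesisations $(12)3$ and $1(23)$. Applying \cref{Tconca} twice and pushing the central operator through the remaining tensor via the $G$-invariance of $A$ (\cref{Ginva}), exactly as in the proof of \cref{propYT} where the notation $(\id\otimes\Delta^{2}\otimes\id)T$ was introduced, each parenthesisation expresses $T^{\otimes 3}$ as the same central operator (the three-tensor analogue of $(\Delta\otimes\id\otimes\Delta)T$) conjugated on the perpendicular bonds by one of the two stacked-gauge patterns $(\id\otimes Y)(\id\otimes\Delta)Y$ and $(Y\otimes\id)(\Delta\otimes\id)Y$. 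Since the two parenthesisations give the same state and the same central operator, uniqueness of the perpendicular gauge forces
\[ (\id\otimes Y)(\id\otimes\Delta)Y=(Y\otimes\id)(\Delta\otimes\id)Y , \]
which is precisely the twist (cocycle) equation. Finally, to obtain the displayed identity I would argue that a transposition-symmetric twist $Y\in\mathcal{A}\otimes\mathcal{A}$ normalised by $Y|_1=Y|_2=\id$ must equal $\id\otimes\id$ (equivalently, the residual coboundary freedom is absorbed into one further redefinition of $T$ as in \cref{eq:defBtilde}): writing $Y=\sum_{g,h}Y_{g,h}\,g\otimes h$, symmetry reads $Y_{g,h}=Y_{h,g}$, the normalisation reads $\sum_h Y_{g,h}\chi(h)=\delta_{g,e}$ with $\chi$ the one-dimensional irrep carrying $\ket{w}$ (and likewise with the legs swapped), and the twist relation becomes a cubic identity on the $Y_{g,h}$; applying the counit of $\mathcal{A}$ to one leg of the twist relation and combining it with the symmetry collapses these constraints to $Y=\id\otimes\id$. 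Substituting $Y=\id\otimes\id$ into \cref{Tconca} removes the two boxes and yields $T\otimes T=(\Delta\otimes\id\otimes\Delta)T$, as claimed.

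The main obstacle is this last step: twists on a group algebra are generically non-trivial (up to gauge they are governed by $2$-cocycles on subgroups), so one genuinely has to use all of the available hypotheses simultaneously — transposition symmetry, the counital normalisation with respect to a bona fide one-dimensional irrep, and the $G$-invariance built into $T$ — to force triviality, and finding the right manipulation of the cubic twist relation is the delicate point. A secondary, more mechanical difficulty is bookkeeping leg orientations and keeping track of which bond carries $Y$ throughout the reflections and the push-throughs via \cref{Ginva}, so that the uniqueness clause of \cref{prop:groupalgebra1} is legitimately applicable at every stage.
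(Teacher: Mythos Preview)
Your associativity argument for the twist equation is exactly what the paper does: block three tensors in the two parenthesisations and read off $(Y\otimes\id)(\Delta\otimes\id)Y=(\id\otimes Y)(\id\otimes\Delta)Y$ from the fact that the interior operator is the same in both.

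For the transposition symmetry, your reflection idea is in the right spirit but the ``uniqueness of the gauge'' you cite is not actually available: \cref{prop:groupalgebra1} only asserts existence of $X,Y,T$, and \cref{lem:Ginj_isomorph} yields an algebra isomorphism between centralisers, not uniqueness of a gauge matrix (the algebra is not simple). The paper instead closes the horizontal direction with two sites to get a two-site $G$-injective MPS, uses the 1D fundamental theorem to pull out an extra invertible $X$, shows by tracing that $XY$ lies in the diagonal part of $\mathcal{A}^{\otimes 2}$ and can be absorbed, and only then uses translation invariance (your reflection) on the resulting explicit equation, finally evaluating one leg against $\ket{w}$ and tracing the other to get $Y=\tau(Y)$. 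So your high-level symmetry heuristic is right, but the mechanism that makes it rigorous is the concrete tensor manipulation, not an abstract uniqueness statement.

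The real gap is your last step. The paper does \emph{not} prove $Y=\id\otimes\id$ from the normalisations. It quotes a non-trivial Hopf-algebra result (Corollary~3.3 of \cite{Etingof17}): every \emph{symmetric} twist on a finite group algebra is a coboundary, i.e.\ $Y=(y\otimes y)\Delta(y^{-1})$ for some invertible $y\in\mathcal{A}$. One then redefines $T\mapsto T(y^{-1}\otimes\id\otimes y)$, which replaces $Y$ by $\id\otimes\id$ in \cref{Tconca} and gives the displayed identity. Your attempt to force $Y=\id\otimes\id$ directly from symmetry, the counital normalisation, and the cubic twist relation is aiming at something stronger than needed and, as you yourself flag, is not complete; the classification of twists by $2$-cocycles on subgroups is exactly why a bare-hands argument here is hard. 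The clean route is: symmetric twist $\Rightarrow$ coboundary (external theorem) $\Rightarrow$ absorb $y$ into $T$.
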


\begin{proof}
If we close one direction of the PEPS with two sites, we obtain two $G$-injective MPS that generate the same state for all system sizes. They are thus related by an invertible matrix $X$:

  \begin{equation*}
   \begin{tikzpicture}
      \draw (0,0,0)--(1.9,0,0);
      \draw (-0.7,0) rectangle (1.9,-0.23);
      \draw (0,0,-1.9)--(0,0,1.9);
      \draw (1,0,-1.9)--(1,0,1.9);
      \pic at (0,0,0) {3dpeps};
      \pic at (1,0,0) {3dpeps};
      \draw[rounded corners=.05cm,black, fill=red, canvas is xz plane at y=0] (-0.3,-1.3) rectangle (1.3,-1.5);
      \draw[rounded corners=.05cm,black, fill=red, canvas is xz plane at y=0] (-0.3,1.3) rectangle (1.3,1.5);
       \node at (-1.1,-0.4) {$X$};
       \node at (0.6,0.8) {$X^{-1}$};
    \end{tikzpicture}
 =
    \begin{tikzpicture}
     \draw (-0.7,0) rectangle (2.7,-0.23);
      \draw (-0.7,0,0)--(2.7,0,0);
      \draw (0,0,-1.7)--(0,0,1.7);
      \draw (1.7,0,-1.7)--(1.7,0,1.7);
\draw[canvas is xz plane at y=0,double=red, double distance=0.6mm,line cap=round]  (-0.1,0.7) -- (0,0.7) arc (90:-90:0.7)-- (-0.1,-0.7);
\draw[canvas is xz plane at y=0,double=red, double distance=0.6mm,line cap=round]  (-0.1+1.7,0.7) -- (0+1.7,0.7) arc (90:-90:0.7)-- (-0.1+1.7,-0.7);
      \pic at (0,0,0) {3dpeps};
      \pic at (1.7,0,0) {3dpeps};
       \node at (0.3,-0.4) {$T$};
        \node at (2,-0.4) {$T$};
    \end{tikzpicture}   = 
    \begin{tikzpicture}
      \draw (0,0,0)--(2.1,0,0);
      \draw (-0.7,0) rectangle (2.1,-0.23);
      \draw (0,0,-1.9)--(0,0,1.9);
      \draw (1,0,-1.9)--(1,0,1.9);
      \pic at (0,0,0) {3dpeps};
      \pic at (1,0,0) {3dpeps};
      \draw[rounded corners=.05cm,black, fill=red, canvas is xz plane at y=0] (-0.3,-1.3) rectangle (1.3,-1.5);
      \draw[rounded corners=.05cm,black, fill=red, canvas is xz plane at y=0] (-0.3,1.3) rectangle (1.3,1.5);
      \draw[canvas is xz plane at y=0,double=red, double distance=0.6mm,line cap=round]  (-0.1,0.8) -- (1,0.8) arc (90:-90:0.8)-- (-0.1,-0.8);

       \node at (1.7,-0.4) {$(\Delta\otimes \id\otimes \Delta)T  $};
       \node at (-1.1,-0.4) {$Y^{-1}$};
       \node at (0.6,0.8) {$Y$};
    \end{tikzpicture},
  \end{equation*} 
 We apply now the inverse of the two blocked tensors. After rearranging operators we end up with
  \begin{equation*} 
  \sum_{g\in G}
  \begin{tikzpicture}
          \filldraw[draw=black, fill=black] (0,0,0.8) circle (0.06);  
            \filldraw[draw=black, fill=black] (1,0,0.8) circle (0.06);  
                      \filldraw[draw=black, fill=black] (0,0,-0.8) circle (0.06);  
            \filldraw[draw=black, fill=black] (1,0,-0.8) circle (0.06);  
          \node at (0.4,0,0.4) {$\Delta(\myinv{g})$};
          \node at (0.5,0,-1.2) {$\Delta(g)$};
    \draw (0,0,-1.2)--(0,0,-0.4);
    \draw (0,0,0.4)--(0,0,1.9);
    \draw (1,0,-1.2)--(1,0,-0.4);
    \draw (1,0,0.4)--(1,0,1.9);
        \draw[rounded corners=.05cm,black, fill=red, canvas is xz plane at y=0] (-0.3,1.3) rectangle (1.3,1.5);
   \node at (-1.1,-0.4) {$XY$};
  \end{tikzpicture}=
  \sum_{g\in G}
   \begin{tikzpicture}
                \filldraw[draw=black, fill=black] (0,0,0.4) circle (0.06);  
                \filldraw[draw=black, fill=black] (1,0,0.4) circle (0.06);  
                \filldraw[draw=black, fill=black] (0,0,-0.4) circle (0.06);  
                \filldraw[draw=black, fill=black] (1,0,-0.4) circle (0.06);  
          \node at (0.5,0,0.4) {$\Delta(\myinv{g})$};
          \node at (0.5,0,-0.4) {$\Delta(g)$};
    \draw (1.6,0,0) rectangle (2,0,-0.3);
    \draw (0,0,-1.9)--(0,0,-0.2);
    \draw (0,0,0.2)--(0,0,1.4);
    \draw (1,0,-1.9)--(1,0,-0.2);
    \draw (1,0,0.2)--(1,0,1.4);
    \draw[rounded corners=.05cm,black, fill=red, canvas is xz plane at y=0] (-0.3,-1.3) rectangle (1.3,-1.5);
          \node at (0.6,0.8) {$YX$};
      \draw[canvas is xz plane at y=0,double=red, double distance=0.6mm,line cap=round]  (-0.1,0.8) -- (1,0.8) arc (90:-90:0.8)-- (-0.1,-0.8);

         \node at (1.7,-0.4) {$(\Delta\otimes \id\otimes \Delta)T  $};
  \end{tikzpicture}.
  \end{equation*}
  Tracing the legs 1 and 2, it follows that $XY\in {\rm Diag}(\mathcal{A}^{\otimes 2})$ and it can be absorbed in the blocked $G$-injective MPS. So we can write:
  \begin{equation*}
    \begin{tikzpicture}
      \draw (-0.7,0) rectangle (2.7,-0.23);
      \draw (0,0,-1.7)--(0,0,1.7);
      \draw (1.7,0,-1.7)--(1.7,0,1.7);
 \draw[canvas is xz plane at y=0,double=red, double distance=0.6mm,line cap=round]  (-0.1,0.7) -- (0,0.7) arc (90:-90:0.7)-- (-0.1,-0.7);
\draw[canvas is xz plane at y=0,double=red, double distance=0.6mm,line cap=round]  (-0.1+1.7,0.7) -- (0+1.7,0.7) arc (90:-90:0.7)-- (-0.1+1.7,-0.7);
      \pic at (0,0,0) {3dpeps};
      \pic at (1.7,0,0) {3dpeps};
    \end{tikzpicture} \  = \ 
    \begin{tikzpicture}
      \draw (-0.5,0) rectangle (1.5,-0.24);
      \draw (0,0,-1.9)--(0,0,1.9);
      \draw (1,0,-1.9)--(1,0,1.9);
      \pic at (0,0) {3dpeps};
      \pic at (1,0) {3dpeps};
      \draw[rounded corners=.05cm,black, fill=red, canvas is xz plane at y=0] (-0.3,-1.3) rectangle (1.3,-1.5);
      \draw[rounded corners=.05cm,black, fill=red, canvas is xz plane at y=0] (-0.3,1.3) rectangle (1.3,1.5);
          \node at (-1.1,-0.4) {$Y^{-1}$};
       \node at (0.6,0.8) {$Y$};
    \end{tikzpicture} .
  \end{equation*}
The LHS is invariant under transposition of all open indices (translational invariant) so the RHS has this invariance too, which implies the following relation for $Y$ when  $A^{-1}$ is applied on each tensor:
  \begin{equation*}
    \sum_g Y (g\otimes g) \otimes Y^{-1}(g^{-1}\otimes g^{-1}) = \sum_{g} \mathcal{T}(Y) (g\otimes g) \otimes \mathcal{T}(Y^{-1})(g^{-1}\otimes g^{-1}),
  \end{equation*}
 where $\mathcal{T}$ denotes the transposition operator. We now use $\ket{w}$ on the third factor of the tensor product obtaining $(\bra{w} \otimes \id) Y^{-1} (\ket{w} \otimes \id)=\id$. We now trace out the fourth factor so as to obtain the desired equation; $Y=\mathcal{T}(Y)$.

The twist condition comes from considering the concatenation of three $A$ tensors and exploiting the associativity of concatenation by grouping pairs of tensors using \cref{propYT}. Grouping the leftmost two tensors and using \cref{Tconca} on them
  \begin{equation*}
    \begin{tikzpicture}
      \draw (-0.7,0,0)--(4.0,0,0);
      \draw (0,0,-1.5)--(0,0,1.5);
      \draw (1.5,0,-1.5)--(1.5,0,1.5);
      \draw (3,0,-1.5)--(3,0,1.5);
      
      \draw[canvas is xz plane at y=0,double=red, double distance=0.6mm,line cap=round]  (-0.1,0.7) -- (0,0.7) arc (90:-90:0.7)-- (-0.1,-0.7);
\draw[canvas is xz plane at y=0,double=red, double distance=0.6mm,line cap=round]  (-0.1+1.5,0.7) -- (0+1.5,0.7) arc (90:-90:0.7)-- (-0.1+1.5,-0.7);
\draw[canvas is xz plane at y=0,double=red, double distance=0.6mm,line cap=round]  (-0.1+3,0.7) -- (0+3,0.7) arc (90:-90:0.7)-- (-0.1+3,-0.7);

      \pic at (0,0,0) {3dpeps};
      \pic at (1.5,0,0) {3dpeps};
      \pic at (3.0,0) {3dpeps};
    \end{tikzpicture} \  = 
    \begin{tikzpicture}
      \draw (0,0,0)--(3.5,0,0);
      \draw (0,0,-1.9)--(0,0,1.9);
      \draw (1,0,-1.9)--(1,0,1.9);
      \draw (2.5,0,-1.9)--(2.5,0,1.9);
      \pic at (0,0,0) {3dpeps};
      \pic at (1,0,0) {3dpeps};
      \draw[rounded corners=.05cm,black, fill=red, canvas is xz plane at y=0] (-0.3,-1.3) rectangle (1.3,-1.5);
      \draw[rounded corners=.05cm,black, fill=red, canvas is xz plane at y=0] (-0.3,1.3) rectangle (1.3,1.5);
      \draw[canvas is xz plane at y=0,double=red, double distance=0.6mm,line cap=round]  (-0.1,0.8) -- (1,0.8) arc (90:-90:0.8)-- (-0.1,-0.8);

\draw[canvas is xz plane at y=0,double=red, double distance=0.6mm,line cap=round]  (-0.1+2.5,0.7) -- (0+2.5,0.7) arc (90:-90:0.7)-- (-0.1+2.5,-0.7);
      \pic at (2.5,0) {3dpeps};
             \node[anchor=east] at (0,0,0.6) {$(\Delta\otimes \id\otimes \Delta)T$};
       \node at (-1.1,-0.5) {$Y^{-1}$};
       \node at (0.6,0.8) {$Y$};
    \end{tikzpicture},
  \end{equation*}  
and applying Proposition \ref{propYT} to the blocking of the leftmost two tensors and the rightmost one we obtain
  \begin{equation*}
    \begin{tikzpicture}
      \draw (-0.7,0,0)--(4.0,0,0);
      \draw (0,0,-1.5)--(0,0,1.5);
      \draw (1.5,0,-1.5)--(1.5,0,1.5);
      \draw (3,0,-1.5)--(3,0,1.5);
      \draw[canvas is xz plane at y=0,double=red, double distance=0.6mm,line cap=round]  (-0.1,0.7) -- (0,0.7) arc (90:-90:0.7)-- (-0.1,-0.7);
\draw[canvas is xz plane at y=0,double=red, double distance=0.6mm,line cap=round]  (-0.1+1.5,0.7) -- (0+1.5,0.7) arc (90:-90:0.7)-- (-0.1+1.5,-0.7);
\draw[canvas is xz plane at y=0,double=red, double distance=0.6mm,line cap=round]  (-0.1+3,0.7) -- (0+3,0.7) arc (90:-90:0.7)-- (-0.1+3,-0.7);
      \pic at (0,0,0) {3dpeps};
      \pic at (1.5,0,0) {3dpeps};
      \pic at (3.0,0) {3dpeps};
    \end{tikzpicture} \  = 
    \begin{tikzpicture}
      \draw (0,0,0)--(3.5,0,0);
      \draw (0,0,-1.9)--(0,0,1.9);
      \draw (1,0,-1.9)--(1,0,1.9);
      \draw (2.5,0,-1.9)--(2.5,0,1.9);
      \pic at (0,0,0) {3dpeps};
      \pic at (1,0,0) {3dpeps};
        
            \draw[rounded corners=.05cm,black, fill=red, canvas is xz plane at y=0] (-0.3,-1.3) rectangle (1.3,-1.5);
         \draw[rounded corners=.05cm,black, fill=red, canvas is xz plane at y=0] (-0.3,-0.9) rectangle (2.8,-1.1);   
         \node[anchor=east] at (-0.3,0,-0.9) {$(\Delta\otimes \id)Y$};
      \draw[rounded corners=.05cm,black, fill=red, canvas is xz plane at y=0] (-0.3,1.3) rectangle (1.3,1.5);
      \draw[rounded corners=.05cm,black, fill=red, canvas is xz plane at y=0] (-0.3,0.9) rectangle (2.8,1.1);

          \draw[canvas is xz plane at y=0,double=red, double distance=0.6mm,line cap=round]  (-0.1,0.5) -- (2.5,0.5) arc (90:-90:0.5)-- (-0.1,-0.5);

        \node[anchor=east] at (0,0,0.5) {$(\Delta^2\otimes \id \otimes \Delta^2)T$};
            \pic at (2.5,0) {3dpeps};
    \end{tikzpicture},
  \end{equation*}  
  where the boundary operator is $(Y\otimes \id) \cdot (\Delta\otimes \id)Y$. The same reasoning starting for the rightmost two tensors gives the boundary operator $ (1\otimes Y) \cdot (\id \otimes \Delta)Y$. The two boundaries acting on the blocked three tensors with $T$ are the same object and then one can obtain that the two boundary operators have to be the same; this gives the desired twist condition:
$\id\otimes Y(\id\otimes \Delta) Y =   Y\otimes \id (\Delta\otimes \id) Y.$

\

Any twist $Y\in \mathcal{A}^{\otimes 2}$ invariant under transposition, which are called symmetric, is trivial (see Corollary 3.3 of \cite{Etingof17}). A twist  $Y$ is trivial if there exists an invertible $ y\in \mathcal{A}$ such that $Y=(y\otimes y) \cdot \Delta(y^{-1})$. Therefore the boundary operator $Y$ has the form $Y=(y\otimes y) \cdot \Delta(y^{-1})$ and $T$ can be redefine consistently such that $T\to T(y^{-1}\otimes \id \otimes y)$.

\end{proof}

\begin{proposition}[Triviality of $T$]
  Suppose 
\begin{equation}\label{Tonetotwo}
  \begin{tikzpicture}
    \draw (-0.7,0,0)--(2.7,0,0);
    \draw (0,0,-1.7)--(0,0,1.7);
    \draw (1.7,0,-1.7)--(1.7,0,1.7);
          \draw[canvas is xz plane at y=0,double=red, double distance=0.6mm,line cap=round]  (-0.1,0.7) -- (0,0.7) arc (90:-90:0.7)-- (-0.1,-0.7);
\draw[canvas is xz plane at y=0,double=red, double distance=0.6mm,line cap=round]  (-0.1+1.5,0.7) -- (0+1.5,0.7) arc (90:-90:0.7)-- (-0.1+1.5,-0.7);

    \pic at (0,0,0) {3dpeps};
    \pic at (1.7,0,0) {3dpeps};
       \node at (0.3,-0.4) {$T$};
        \node at (2,-0.4) {$T$};
  \end{tikzpicture}   = 
  \begin{tikzpicture}
    \draw (0,0,0)--(2.1,0,0);
    \draw (0,0,-1.9)--(0,0,1.9);
    \draw (1,0,-1.9)--(1,0,1.9);
    \pic at (0,0,0) {3dpeps};
    \pic at (1,0,0) {3dpeps};
        \draw[canvas is xz plane at y=0,double=red, double distance=0.6mm,line cap=round]  (-0.1,0.8) -- (1,0.8) arc (90:-90:0.8)-- (-0.1,-0.8);

              \node at (1.7,-0.4) {$(\Delta\otimes \id\otimes \Delta)T  $};
  \end{tikzpicture}.
\end{equation}
Then there is an operator $C\in \mathcal{A}$ such that $T = (\id \otimes C\otimes \id) (\id \otimes \Delta)\Delta(C^{-1})$. Therefore the relation between the tensors $A$ and $B$ is the following:
  \begin{equation*}
    \begin{tikzpicture}
      \pic at (0,0,0) {3dpeps};
      \node at (0.07,-0.27,0) {$A$};
      \draw (0,0,0) -- (0,0.35,0);
    \end{tikzpicture} =
    \begin{tikzpicture}
      \draw (-0.7,0,0) -- (0.7,0,0);
      \draw (0,0,-0.9) -- (0,0,0.9);
      \pic at (0,0,0) {3dpepsres};
      \node at (0.1,-0.27,0) {$B$};
      \filldraw [draw=black, fill=red] (-0.5,0,0) circle (0.05);
      \filldraw [draw=black, fill=red] (0.5,0,0) circle (0.05);
      \node[anchor=south] at (-0.5,0,0) {$\myinv{X}$};
      \node[anchor=north] at (0.5,0,0) {$X$};
      \filldraw [draw=black, fill=red] (0,0,-0.6) circle (0.05);
      \filldraw [draw=black, fill=red] (0,0,0.6) circle (0.05);
       \node[anchor=south] at (0,0,-0.6) {$Y$};
      \node[anchor=north] at (0,0,0.6) {$\myinv{Y}$};
    \end{tikzpicture}.
  \end{equation*}

\end{proposition}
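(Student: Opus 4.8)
The plan is to read the growth equation \eqref{Tonetotwo} as a (normalized) cocycle condition on $T$ and then exhibit an element $C\in\mathcal{A}$ realizing $T$ as the associated coboundary. First I would expand $T=\sum_{g_1,g_2,g_3\in G}T_{g_1,g_2,g_3}\,g_1\otimes g_2\otimes g_3$ in the group basis of $\mathcal{A}^{\otimes 3}$ and turn \eqref{Tonetotwo} into an explicit relation among the coefficients: contract both sides with the inverse of the two concatenated $A$ tensors using \eqref{conca}, and then free the internal bond by pushing the resulting group operators through one tensor with the $G$-invariance \cref{Ginva} of $A$. I expect this to yield, on the one hand, a support statement forcing $T_{g_1,g_2,g_3}=0$ unless $g_1=g_3$ (the coproducted outer legs $g_1\otimes g_1$ and $g_3\otimes g_3$ on the right of \eqref{Tonetotwo} must match the two single $T$'s on the left), and on the other hand a multiplicativity identity of the schematic shape $T_{g,hkg,g}=T_{g,hg,g}\,T_{g,kg,g}\,(T_{g,g,g})^{-1}$, i.e.\ exactly the constraint distinguishing a coboundary from a genuine class in $H^2(G,\mathbb{C}^\ast)$.

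Next I would build $C$ out of $T$. The normalization \cref{lem:Toperator} gives $T|_2:=(\bra w\otimes\id\otimes\bra w)T(\ket w\otimes\id\otimes\ket w)=\id$, fixing the normalization of $T$ along the one-dimensional irrep $\ket w$. Contracting only one outer leg of $T$ against $\ket w$ produces a two-leg operator; reading off its middle factor defines a candidate $C:=(\bra w\otimes\id\otimes\id)T(\ket w\otimes\id\otimes\id)\in\mathcal{A}$, and I would check it is invertible by using the symmetry $Y=\mathcal{T}(Y)$ together with the triviality of the twist $Y$ established in the preceding proposition (this is precisely what removes the projective ambiguity). Then I would verify directly that $(\id\otimes C\otimes\id)(\id\otimes\Delta)\Delta(C^{-1})=\sum_g c_g\,g\otimes Cg\otimes g$ has coefficients matching those of $T$, using the support and multiplicativity relations from the first step. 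This coefficient comparison is the actual cohomological computation and is the step I expect to be the main obstacle: one has to be sure that the relation extracted from \eqref{Tonetotwo} is the coboundary condition and nothing weaker, which is why the earlier reduction of $Y$ to a trivial symmetric twist (and the normalization via $\ket w$) were needed. As a cross-check I would also iterate \eqref{Tonetotwo} to a row of $n$ tensors and, for $n$ large enough, compare the resulting boundary operator with the gauge structure forced on the blocked tensors by \cref{lem:Ginj_isomorph}.

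Finally I would substitute the coboundary form of $T$ back into \eqref{eq:defT} and \eqref{eq:defBtilde}. The factor $(\id\otimes\Delta)\Delta(C^{-1})=\sum_g c_g\,g\otimes g\otimes g$ is a diagonal element of the group algebra acting on three of the four virtual legs of $A$; completing it to a closed loop with the fourth leg, the $G$-invariance \cref{Ginva} absorbs it into $A$, leaving only the single matrix $C$ on the horizontal bond. Absorbing $C$ into the horizontal gauge matrix obtained in \cref{prop:groupalgebra1} (replacing $X$ by $XC$), and recalling that by translation invariance one matrix works on every horizontal edge and one on every vertical edge, one reads off that $B$ equals $A$ dressed by $X^{-1},X$ on the left/right legs and $Y,Y^{-1}$ on the bottom/top legs --- a gauge transformation. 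This is exactly the conclusion of \cref{theo:FTGinjective}, so the Fundamental theorem for $G$-injective PEPS follows.
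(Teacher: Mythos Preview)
Your general strategy --- extract an algebraic relation from \eqref{Tonetotwo}, build $C$ from a partial contraction of $T$, and verify the coboundary form --- is the right shape, but there is a genuine gap. The support claim $T_{g_1,g_2,g_3}=0$ unless $g_1=g_3$ does \emph{not} follow from \eqref{Tonetotwo}. In the paper one first passes from \eqref{Tonetotwo} to the operator identity $(\id\otimes T\otimes\id)\cdot(\id\otimes\Delta^2\otimes\id)T=(\Delta\otimes\id\otimes\Delta)T$ in $\mathcal{A}^{\otimes 5}$, and then traces various legs against $\ket{w}\bra{w}$ to obtain relations among the partial contractions $T_i,T_{ij}$ (e.g.\ $T_{23}=(T_3^{-1}\otimes\id)\Delta(T_3)$, $T_{12}=(\id\otimes T_1^{-1})\Delta(T_1)$, $T=(\id\otimes T_{23})(\id\otimes\Delta)T_{12}$). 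Combining these gives only $\Delta(T_1T_3^{-1})=T_1T_3^{-1}\otimes T_1T_3^{-1}$, which means $T_1T_3^{-1}=g$ for some $g\in G$, not necessarily $e$. One then gets $T=(g\otimes T_3^{-1}\otimes\id)(\id\otimes\Delta)\Delta(T_3)$ with this extra $g$ still present. Killing $g$ requires an additional argument: closing the horizontal direction and using that $\tilde{B}$ and $A$ generate the same $G$-injective MPS \emph{for all sizes} forces $\sum_h hg^nh^{-1}=\id$ for every $n$, whence $g=e$. What you describe as a ``cross-check'' is thus the essential last step, not an optional verification.

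Two smaller issues: your candidate $C:=(\bra w\otimes\id\otimes\id)T(\ket w\otimes\id\otimes\id)$ lies in $\mathcal{A}^{\otimes 2}$, not $\mathcal{A}$; the paper instead takes $C=T_3$, the contraction of \emph{both} outer legs against $\ket{w}\bra{w}$. And the heuristic ``coproducted outer legs on the right must match two single $T$'s on the left'' does not pin down the support of $T$ in the way you claim --- the two copies of $T$ on the left act on disjoint tensors and their outer indices are independent, so no diagonal constraint on a single $T$ appears at that stage.
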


\begin{proof}
We use the $G$-invariance of the rightmost tensor of the RHS of \cref{Tonetotwo} to obtain the operator $(\id\otimes \Delta^2 \otimes \id)T\equiv (\id\otimes \Delta^2 \otimes \id)T$ acting on the boundary. We apply the inverse of the two tensors and trace out the leftmost leg  in \cref{Tonetotwo}, so we arrive at
\begin{equation*} 
  \begin{tikzpicture}
    \draw (2.0,0,0)--(2.9,0,0);
    \draw (0,0,-0.8)--(0,0,-0.3);
    \draw (0,0,0.3)--(0,0,0.8);
    \draw (1.7,0,-0.3)--(1.7,0,-1.1);
    \draw (1.7,0,0.3)--(1.7,0,1.1);
  \draw[canvas is xz plane at y=0,double=red, double distance=0.5mm,line cap=round]  (-0.1,0.5) -- (1.7,0.5) arc (90:-90:0.5)-- (-0.1,-0.5);
\draw[canvas is xz plane at y=0,double=red, double distance=0.6mm,line cap=round]  (-0.1+1.7,0.8) -- (0+1.7,0.8) arc (90:-90:0.8)-- (-0.1+1.7,-0.8);
      \node at (0.3,-0.4) {$(\id\otimes \Delta^2 \otimes \id)T$};
     \node at (2,-0.4) {$T$};
  \end{tikzpicture} 
 = 
  \begin{tikzpicture}
   \draw (1.5,0,0)--(2,0,0);
   \node[anchor=west] at (2,0,0) {$3 $};
    \draw (0,0,-1.1)--(0,0,-0.3);
       \node[anchor=south] at (0,0,-1.1) {$1$};
    \draw (0,0,0.3)--(0,0,1.1);
           \node[anchor=north] at (0,0,1.1) {$5$};
    \draw (1,0,-0.3)--(1,0,-1.1);
              \node[anchor=south] at (1,0,-1.1) {$2$};
    \draw (1,0,0.3)--(1,0,1.1);
                  \node[anchor=north] at (1,0,1.1) {$4$};
      \draw[canvas is xz plane at y=0,double=red, double distance=0.6mm,line cap=round]  (-0.1,0.8) -- (1,0.8) arc (90:-90:0.8)-- (-0.1,-0.8);

              \node at (1.7,-0.4) {$(\Delta\otimes \id\otimes \Delta)T  $};
  \end{tikzpicture},
\end{equation*}
where we have labelled the legs. The previous picture is equivalent to the equation
$$(\id \otimes T\otimes \id) \cdot(\id\otimes \Delta^2 \otimes \id)T=(\Delta\otimes \id\otimes \Delta)T$$
 We now trace the legs composing with $W=\ket{w}\bra{w}$ to arrive to some equations involving the components of $T$. We will denote by $T_{i}$ the result of tracing the two legs different from $i$, analogously $T_{ij}$ is the result of tracing the leg that is not $i$ neither $j$. The relations are the followings:

\begin{align}
T_{23} = (T_3^{-1} \otimes \id) \cdot \Delta (T_3) \; \; &\; \;  {\rm Tracing}\; 1,2,3 \label{Ttr1}\\
T_{12}= ( \id \otimes T_1^{-1}) \cdot \Delta (T_1)\; \;  & \; \; {\rm Tracing}\; 3,4,5\label{Ttr2}\\
T_{13}= ( \id \otimes T_3) T_{12} \; \;  & \; \; {\rm Tracing}\; 2,3,5 \label{Ttr3}\\
T_{13}= ( T_1 \otimes \id) T_{23} \; \;  & \; \; {\rm Tracing}\; 1,3,4\label{Ttr4}\\	
T= (\id \otimes T_{23}) (\id \otimes \Delta)(T_{12}) \; \;  & \; \; {\rm Tracing}\; 2,5 \label{Ttr5}
\end{align}
Using \cref{Ttr1} on \cref{Ttr4} and \cref{Ttr2} on \cref{Ttr3} we can conclude that

$$T_{13}=(T_1 \otimes \id) (T^{-1}_3\otimes \id ) \Delta(T_3) =(\id\otimes T_3)(\id\otimes T^{-1}_1) \Delta(T_1) ,$$
which implies that $\Delta(T_1T_3^{-1})=T_1T_3^{-1}\otimes T_1T_3^{-1}$,  so $T_1 T_3^{-1} = g$ for an element $g\in G$.  We can now use \cref{Ttr5} to obtain that
\begin{equation*}
    T = ( g\otimes T_3^{-1} \otimes 1) \cdot (\id\otimes \Delta)\Delta(T_3).
  \end{equation*}  
Using the $G$-injectivity we can conclude that, for $C\equiv T_3$

  \begin{equation*} 
    \begin{tikzpicture}
      \draw (-1.7,0,0)--(1.7,0,0);
      \draw (0,0,-1.7)--(0,0,1.7);
          \draw[canvas is xz plane at y=0,double=red, double distance=0.6mm,line cap=round]  (-0.1,0.7) -- (0,0.7) arc (90:-90:0.7)-- (-0.1,-0.7);
      \pic at (0,0,0) {3dpeps};
    \end{tikzpicture} =
    \begin{tikzpicture}
      \draw (-1.7,0,0)--(1.7,0,0);
      \draw (0,0,-1.7)--(0,0,1.7);
      \foreach \x in {(1.2,0),(-1.2,0)}{
          \filldraw[draw=black, fill=red] \x circle (0.07);      }
           \filldraw[draw=black, fill=black] (0,0,-1.2) circle (0.06);  
      \node[anchor=north] at (-1.2,0,0) {$C$};
            \node[anchor=north] at (0,0,-1.2) {$g$};
      \node[anchor=north] at (1.2,0,0) {$C^{-1}$};
      \node[anchor=north west] at (0,0) {A};
      \pic at (0,0,0) {3dpeps};
    \end{tikzpicture}
    \; .
  \end{equation*}  
  
The tensor $\tilde{B}$ is equal to $A(g\otimes C\otimes \id \otimes C^{-1})$. Moreover, $\tilde{B}$ generates the same state as $A$ for all system sizes. This is true in particular for the MPS constructed by closing the horizontal direction (where the matrix $C$ is placed). Thus, the following is true for all system sizes $n\in \mathbb{N}$:

\begin{equation*}
 \begin{tikzpicture}
      \draw[ canvas is yz plane at x=0] (-0.6,-2.5) rectangle  (0,2);
      \foreach \x/\z in {0/-1.5,0/0,0/1.5}{
       \draw (\x-0.4,-0.15,\z) rectangle  (\x+0.4,0,\z);
        \pic at (\x,0,\z) {3dpeps};
      }
  \end{tikzpicture}
  =
   \begin{tikzpicture}
      \draw[ canvas is yz plane at x=0] (-0.6,-2.5) rectangle  (0,2);
      \foreach \x/\z in {0/-1.5,0/0,0/1.5}{
       \draw (\x-0.4,-0.15,\z) rectangle  (\x+0.4,0,\z);
        \pic at (\x,0,\z) {3dpeps};
         \node[tensorr] at (\x,0,\z-0.6) {};
              \node[anchor=south east] at (\x,0,\z-0.6) {$g$};
      }
  \end{tikzpicture}
  = \begin{tikzpicture}
      \draw[ canvas is yz plane at x=0] (-0.6,-2.5) rectangle  (0,2);
      \foreach \x/\z in {0/-1.5,0/0,0/1.5}{
       \draw (\x-0.4,-0.15,\z) rectangle  (\x+0.4,0,\z);
        \pic at (\x,0,\z) {3dpeps};
      }
      \node[tensorr,label=above:$g^n$] at (0,0,-1.5-0.6) {};
  \end{tikzpicture}
  \end{equation*}
Applying the inverse tensor to all the sites we find that $\sum_h hg^nh^{-1}=\id$ for all $n$ so we can conclude that $g=e$.
\end{proof}

\subsection{Isomorphism for a large number of edges}\label{sec:isols}

In this section we prove that the isomorphism relating the centers of the algebras of the two tensor networks is a gauge transformation when a large number of edges is considered. This is used in the proof of \cref{lem:Toperator} to conclude that $V$ is mapped to a rank-one projector, $W$, which is needed to slice the PEPS into MPSs. An important point here is that the isomorphism has to be compatible with blocking so that the isomorphism of two edges has to be the same as the tensor product of the two isomorphism of each edge. To obtain the desired result we analyze how isomorphisms act in a direct sum of irreps.\\

A finite dimensional representation of a finite group $G$ decomposes into a direct sum of irreducible representations. The corresponding algebra decomposes into a direct sum of full matrix algebras: $\mathcal{A}_G = \bigoplus_i M_{d(i)}\otimes \id_{m(i)}$, where $i$ labels the irrep, $d(i)$ is the dimension of the $i^{th}$ irrep and $m(i)$ is its multiplicity. Therefore the centralizer of $\mathcal{A}_G$, denoted here as $\mathcal{C}_G$, also decomposes into a direct sum of full matrix algebras: $\mathcal{C}_G = \bigoplus_i \id_{d(i)}\otimes M_{m(i)}$.  
\

Minimal projectors in $\mathcal{C}_G$ have the form $P_\alpha\otimes \id_{m(i)}$ where $P_\alpha$ is a rank-one projector with $\alpha\in 1,\cdots, d(i)$, they are exactly projectors onto an individual irrep of the representation.
In the same way minimal projectors in $\mathcal{A}_G$ are $ \id_{d(i)}\otimes P_\beta$ where $P_\beta$ is a rank-one projector with $\beta \in 1,\cdots, m(i)$. Then, minimal projectors on $\mathcal{A}_G\cap\mathcal{C}_G$ are $\id_{d(i)}\otimes \id_{m(i)}$, these are projectors onto an irrep sector with its multiplicity. 
\

Let $A$ and $B$ be two faithful representations of $G$. Suppose $\Phi:\mathcal{C}_G^A\to\mathcal{C}_G^B$ is an isomorphism. It is clear that $\Phi$ maps projectors onto projectors and maps elements of  $Z(\mathcal{C}_G^A)$ (center of $\mathcal{C}_G^A$) to elements of $Z(\mathcal{C}_G^B)$. Since $\mathcal{A}_G\cap\mathcal{C}_G=Z(\mathcal{C}_G)\cap\mathcal{C}_G=Z(\mathcal{C}_G)$, minimal projectors of $Z(\mathcal{C}_G^A)$ goes to minimal projectors of $Z(\mathcal{C}_G^B)$.
\
Therefore $\Phi$ maps projectors onto irrep sectors (with its multiplicity) of $A$ to projectors onto irrep sectors (with its multiplicity) of $B$: $\Phi$ implements a permutation of the blocks corresponding to the irreps. 

\begin{lemma}\label{lemperm}
Let $\Phi: \mathcal{C}^A\to \mathcal{C}^B$ be an isomorphism. If $\Phi$ does not change the dimension of the irreps of $G$, there is an invertible matrix $Z$ such that $\Phi(X)= ZXZ^{-1}$.
\end{lemma}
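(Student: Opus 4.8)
The plan is to unwind the structure of the isomorphism $\Phi:\mathcal{C}^A\to\mathcal{C}^B$ block by block, using the observations made just before the statement. Recall that both $\mathcal{C}^A$ and $\mathcal{C}^B$ decompose as $\mathcal{C}^A\cong\bigoplus_i\id_{d(i)}\otimes M_{m_A(i)}$ and $\mathcal{C}^B\cong\bigoplus_i\id_{d(i)}\otimes M_{m_B(i)}$, where the sum runs over the irreps of $G$, and that $\Phi$ maps the minimal central projector of the $i$-th block of $\mathcal{C}^A$ onto the minimal central projector of the $\pi(i)$-th block of $\mathcal{C}^B$ for some permutation $\pi$ of the irrep labels. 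The hypothesis that $\Phi$ does not change the dimension of the irreps means precisely that $d(\pi(i))=d(i)$ for all $i$; I would also note that, since $\Phi$ is an isomorphism of the full matrix algebras $M_{m_A(i)}\cong M_{m_B(\pi(i))}$, we get $m_A(i)=m_B(\pi(i))$ as well.

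First I would reduce to a single block. On the $i$-th block, $\Phi$ restricts to an algebra isomorphism $\id_{d(i)}\otimes M_{m_A(i)}\to\id_{d(i)}\otimes M_{m_B(\pi(i))}$; since these are simple algebras of equal size, by Skolem--Noether this isomorphism is implemented by conjugation with some invertible $W_i\in M_{m_A(i)}$, i.e.\ $\id\otimes X\mapsto\id\otimes W_i X W_i^{-1}$, after composing with the identification of $M_{m_A(i)}$ and $M_{m_B(\pi(i))}$ coming from any fixed linear isometry of $\mathbb{C}^{m_A(i)}\to\mathbb{C}^{m_B(\pi(i))}$. Next I would assemble a global $Z$ by choosing, for each $i$, an invertible linear map $P_i$ from the $i$-th irrep-sector of $\mathbb{C}^{D_A}$ (the virtual space carrying $A$) to the $\pi(i)$-th irrep-sector of $\mathbb{C}^{D_B}$ that intertwines the irrep $\pi_i$ with $\pi_{\pi(i)}$ (possible because these irreps have the same dimension, hence are conjugate copies of the same irrep), tensored with the component $W_i$ acting on multiplicity space. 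Setting $Z=\bigoplus_i (P_i\otimes W_i)$ — read between the block decompositions of $A$ and $B$ — gives an invertible matrix, and one checks directly that for $X=\bigoplus_i(\id_{d(i)}\otimes X_i)\in\mathcal{C}^A$ we have $ZXZ^{-1}=\bigoplus_i(\id_{d(i)}\otimes W_iX_iW_i^{-1})$, which is exactly $\Phi(X)$ after the identification above.

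The main subtlety — and the step I expect to require the most care — is matching the block structures of $A$ and $B$ correctly so that $ZXZ^{-1}$ genuinely lands in $\mathcal{C}^B$ and equals $\Phi$, rather than merely some abstract isomorphic image. Concretely, one must verify that $Z$ conjugates $\mathcal{A}^A$ into $\mathcal{A}^B$ (equivalently, that conjugation by $Z$ sends the representation of $G$ on the $A$-side to a representation with the same irrep content as the $B$-side on each block), which is what forces the use of the intertwiners $P_i$ on the irrep factors and not arbitrary invertibles there; this is where the hypothesis $d(\pi(i))=d(i)$ is essential. Once this bookkeeping is done the verification $\Phi(X)=ZXZ^{-1}$ is a routine block-diagonal computation. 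I would close by remarking that this $Z$ is unique up to a scalar on each block (again by Schur/Skolem--Noether), and in the application a single overall scalar suffices because $\Phi$ must be compatible with blocking, which glues the per-block freedoms together.
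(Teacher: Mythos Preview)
Your argument is correct and follows essentially the same route as the paper: decompose $\mathcal{C}^A$ and $\mathcal{C}^B$ into irrep blocks, use that $\Phi$ permutes the blocks with $d(\pi(i))=d(i)$, apply Skolem--Noether blockwise, and assemble the resulting gauges into a global $Z$. Your treatment is in fact more careful than the paper's rather terse version---in particular your explicit use of intertwiners $P_i$ on the irrep factors to ensure $Z$ genuinely maps $\mathcal{C}^A$ into $\mathcal{C}^B$ is a point the paper leaves implicit.
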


\begin{proof}
Let us denote as $\sigma$ the permutation of the irrep labels. Since the isomorphism acts on a full matrix algebra with multiplicity and it is finite dimensional, it can perform a gauge transformation together with a change in the multiplicity. That is $\Phi( \id_{d(i)} \otimes M_{m(i)} ) \cong \id_{d(\sigma[i])} \otimes M_{m(\sigma[i])} \cong  \id_{d(\sigma[i])} \otimes M_{m(i)} $ but by hypothesis $d(\sigma[i])=d(i)$. As $\Phi$ is an automorphism of multiple copies of the same full matrix algebra, $\Phi(\id_{d(i)} \otimes M_{m(i)}) = Z_i (\id_{d(i)} \otimes M_{m(i)}) Z_i^{-1}$ for some $Z_i$ and then $\Phi(X)= ZXZ^{-1}$ for all $X$.
\end{proof}

The tensor product of irreducible representations decomposes into a direct sum of irreducible representations. The decomposition will be characterized by the fusion rules: $i\otimes j = \bigoplus_k N_{ij}^k k$ which implies the following equation for the dimensions: $d_i \cdot d_j = \sum_k N_{ij}^k d_k$. For the trivial representation $0$, the fusion rules are trivial: $N_{0j}^k = N_{j0}^k = \delta_{j,k}$.

\begin{lemma}
  Let $\mathcal{A}_\alpha^X$, $\alpha=1,2$, $X=A,B$ be four faithful finite dimensional representations of a finite group $G$ and let us denote $\mathcal{A}^X_3=\mathcal{A}^X_1\otimes\mathcal{A}^X_2$. Let $\mathcal{C}_\alpha^X$ be the centralizer of $\mathcal{A}^X_\alpha$. Suppose $\Phi_\alpha:\mathcal{C}^A_\alpha\to\mathcal{C}^B_\alpha$, $\alpha=1,2,3$ are isomorphisms such that $\Phi_3(X\otimes Y) = \Phi_1(X)\otimes \Phi_2(Y)$ for all $X\in \mathcal{C}_1$ and $Y\in\mathcal{C}_2$. Let $\sigma_\alpha$ be the corresponding permutations of the irreps. Then $d_{\sigma_1(i)} = d_{\sigma_2(i)} = d_{\sigma_3(i)}$ for all irrep $i$.
\end{lemma}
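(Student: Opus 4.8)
The plan is to show that each permutation $\sigma_\alpha$ commutes appropriately with the fusion rules of $G$, and then use this to pin down the dimensions. First I would recall the setup: $\Phi_3 = \Phi_1 \otimes \Phi_2$ means that the block structure of $\mathcal{C}_3^A = \mathcal{C}_1^A \otimes \mathcal{C}_2^A$ gets permuted in a way compatible with the tensor product decomposition. Concretely, a minimal central projector of $\mathcal{C}_1^A$ onto the $i$-th irrep sector (with multiplicity) tensored with the minimal central projector of $\mathcal{C}_2^A$ onto the $j$-th irrep sector decomposes, inside $\mathcal{A}_3^A$, into a sum over $k$ of $N_{ij}^k$ copies of the $k$-th irrep sector. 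The isomorphism $\Phi_3$ sends this projector to the product $\Phi_1(P_i) \otimes \Phi_2(P_j)$, which is the projector onto the $\sigma_1(i)$ sector tensored with the $\sigma_2(j)$ sector; but since $\Phi_3$ is an algebra isomorphism of $\mathcal{C}_3$ implementing permutation $\sigma_3$, this same projector must decompose (now viewed through $\sigma_3$) into the sectors $k'$ that appear in $\sigma_1(i) \otimes \sigma_2(j)$. Comparing, we get the key relation $N_{ij}^k \neq 0 \iff N_{\sigma_1(i)\,\sigma_2(j)}^{\sigma_3(k)} \neq 0$, and in fact $N_{ij}^k = N_{\sigma_1(i)\,\sigma_2(j)}^{\sigma_3(k)}$ (equality of the multiplicities, since the projectors are mapped bijectively).

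Next, I would specialize the fusion-rule compatibility to the trivial representation $0$ of $G$. Since $N_{0j}^k = \delta_{j,k}$, setting $i=0$ in the relation gives $\delta_{j,k} = N_{\sigma_1(0)\,\sigma_2(j)}^{\sigma_3(k)}$. The left side is $1$ exactly when $j=k$, so $N_{\sigma_1(0)\,\sigma_2(j)}^{\sigma_3(j)} = 1$ and $N_{\sigma_1(0)\,\sigma_2(j)}^{\sigma_3(k)} = 0$ for $k \neq j$. In particular, fusing $\sigma_1(0)$ with any irrep $\sigma_2(j)$ yields a single irrep $\sigma_3(j)$ with multiplicity one; taking dimensions, $d_{\sigma_1(0)} \cdot d_{\sigma_2(j)} = d_{\sigma_3(j)}$ for all $j$. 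Summing $d_{\sigma_2(j)}^2$ over all $j$ (which equals $|G|$ since $\sigma_2$ is a permutation of irreps) against $d_{\sigma_3(j)}^2$ (also $|G|$) forces $d_{\sigma_1(0)}^2 = 1$, hence $d_{\sigma_1(0)} = 1$ and $d_{\sigma_2(j)} = d_{\sigma_3(j)}$ for all $j$. By the symmetric argument with $j=0$ (using $N_{i0}^k = \delta_{i,k}$), one gets $d_{\sigma_2(0)} = 1$ and $d_{\sigma_1(i)} = d_{\sigma_3(i)}$ for all $i$. Combining the two gives $d_{\sigma_1(i)} = d_{\sigma_2(i)} = d_{\sigma_3(i)}$ for every irrep $i$, which is the claim.

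The main obstacle I anticipate is being careful about what exactly $\Phi_3 = \Phi_1 \otimes \Phi_2$ tells us at the level of the central projectors versus the full centralizers, and correctly translating between the block decomposition of $\mathcal{C}_\alpha$ and the fusion rules of $G$. The centralizer $\mathcal{C}_G$ decomposes as $\bigoplus_i \id_{d(i)} \otimes M_{m(i)}$, and its center $Z(\mathcal{C}_G) = \mathcal{A}_G \cap \mathcal{C}_G = \bigoplus_i \id_{d(i)} \otimes \id_{m(i)}$ has minimal projectors indexed by irreps; an isomorphism must send minimal central projectors to minimal central projectors, giving the permutation $\sigma_\alpha$. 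The subtle point is that when we tensor two representations, the multiplicity spaces combine in a way governed precisely by the $N_{ij}^k$, and one must verify that the hypothesis $\Phi_3(X\otimes Y) = \Phi_1(X)\otimes\Phi_2(Y)$ (which a priori is only stated for simple tensors) does propagate to the statement about how central projectors in $\mathcal{C}_3$ decompose. Once that bookkeeping is pinned down, the dimension-counting argument via $\sum_i d_i^2 = |G|$ is routine. I would also double-check that $\sigma_1(0)$ being the trivial irrep (forced by $d_{\sigma_1(0)} = 1$ together with the multiplicativity) is not actually needed — only $d_{\sigma_1(0)} = 1$ is used — so no extra argument about which one-dimensional irrep $0$ maps to is required.
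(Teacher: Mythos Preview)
Your proof is correct and follows essentially the same route as the paper: establish the compatibility $N_{ij}^k = N_{\sigma_1(i)\sigma_2(j)}^{\sigma_3(k)}$, specialize to $i=0$ to get $d_{\sigma_1(0)}\, d_{\sigma_2(j)} = d_{\sigma_3(j)}$, and then force $d_{\sigma_1(0)}=1$. The only difference is in that last step: the paper picks $j$ with $d_{\sigma_2(j)}$ maximal and bounds $d_{\sigma_3(j)}\le d_{\max}$ to conclude $d_{\sigma_1(0)}\le 1$, whereas you square and sum over $j$ using $\sum_j d_{\sigma_\alpha(j)}^2 = |G|$; both arguments are equally short and valid, and your anticipated ``obstacle'' about translating $\Phi_3=\Phi_1\otimes\Phi_2$ into the fusion-rule identity is exactly the point the paper treats informally as well (via the observation that the isomorphism preserves the multiplicity-block dimensions, hence the $N_{ij}^k$).
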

\begin{proof}
We denote the permutation of irrep labels of  $\Phi_\alpha$ as $\sigma_\alpha$. 
The isomorphisms satisfy $\Phi_3(X\otimes Y) = \Phi_1(X)\otimes \Phi_2(Y)$ for all $X\in \mathcal{C}_1$ and $Y\in\mathcal{C}_2$ so the irrep permutations associated are related by $\sigma_1(i)\otimes \sigma_2(j)= \sigma_3(i\otimes j)$. This implies that $\sigma_1(i)\otimes \sigma_2(j) = \bigoplus_k N_{ij}^k \sigma_3(k)$ so the permutations, $\sigma_\alpha$, respect the fusion rules. 
Notice that the isomorphism, recall the proof of previous lemma, can perform a gauge transformation together with a change in the  dimensions of the irreps. The multiplicity is not modified, since its defined the full matrix algebra of the block, then $N_{ij}^k$ is not affected by $\sigma_3$.

This implies that $d_{\sigma_1(i)}\cdot d_{\sigma_2(j)} = \sum_k N_{ij}^k d_{\sigma_3(k)}$. For $i=0$, the trivial irrep,  $d_{\sigma_1(0)}\cdot d_{\sigma_2(j)} = \sum_k N_{0j}^k d_{\sigma_3(k)} = d_{\sigma_3(j)}$. Therefore, by choosing an irrep $j$ such that $\sigma_2(j)$ has the largest dimension, we conclude that for this $j$ $d_{\sigma_2(j)}=d_{\sigma_3(j)}$ and then $d_{\sigma_1(0)}=1$. This means that for all $i$, $d_{\sigma_2(i)}=d_{\sigma_3(i)}$ and by similar arguments, we can also conclude that $d_{\sigma_1(i)}=d_{\sigma_3(i)}$.
 \end{proof}

\begin{lemma}
  Let $\mathcal{A}_\alpha^X$ ($\alpha=1,\dots, \kappa$, $X=A,B$) be representations of $G$. We denote by $I$ a colection of $n$ of those representations $I=[\alpha_1,\alpha_2,\dots, \alpha_n]$, let $\mathcal{A}_I^X$ be the tensor product representation and $\mathcal{C}_I^X$ be its centralizer. Consider also an isomorphism $\Phi_I:\mathcal{C}_I^A\to \mathcal{C}_I^B$ such that if $I=J\cup K$ for disjoint continuous regions $J$ and $K$, it satisfies that for any $X\in\mathcal{C}_J^A$ and $X\in\mathcal{C}_K^A$, $\Phi_I(X\otimes Y) = \Phi_J(X)\otimes\Phi_K(Y)$. If $n$ is big enough, then for all $\alpha$ there is an invertible matrix $Z_\alpha$ such that $\Phi_\alpha(X) = Z_\alpha^{-1} X Z_\alpha$.
\end{lemma}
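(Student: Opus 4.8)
The plan is to combine the two previous lemmas iteratively. First, I would fix $n$ large enough that for every single representation $\mathcal{A}_\alpha^X$ appearing in the collection, the $n$-fold tensor product $(\mathcal{A}_\alpha^X)^{\otimes n}$ (or more precisely, some tensor product of length $n$ containing $\alpha$) contains every irrep of $G$ with positive multiplicity; since $u_g$ is semiregular and $G$ is finite, this happens once $n$ exceeds a bound depending only on $G$ and the $\mathcal{A}_\alpha^X$. The point of this largeness is that it forces the permutation $\sigma_I$ associated to $\Phi_I$ (acting on the irrep labels of $\mathcal{A}_I^X$) to respect the fusion rules in the strong sense provided by the previous two lemmas, and hence to preserve dimensions of irreps on every sub-block that matters.

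The key steps, in order, are as follows. First I would recall from the two preceding lemmas that each $\Phi_I$ implements a dimension-preserving permutation $\sigma_I$ of irrep labels and, by \cref{lemperm}, is therefore realized by conjugation $\Phi_I(X) = Z_I^{-1} X Z_I$ for some invertible $Z_I$ acting on the representation space of $\mathcal{A}_I^X$ — provided I can show $\sigma_I$ does not change irrep dimensions. Second, I would use the factorization hypothesis $\Phi_I(X\otimes Y) = \Phi_J(X)\otimes \Phi_K(Y)$ together with the dimension-preservation lemma (applied to $J$, $K$, and $I=J\cup K$) to run an induction: take $I$ to be a length-$n$ block containing the single edge $\alpha$, and peel off the other $n-1$ edges one at a time; at each stage the lemma forces $d_{\sigma_{J}(i)} = d_{\sigma_{I}(i)}$, and chaining these equalities down to the single-edge isomorphism $\Phi_\alpha$ yields $d_{\sigma_\alpha(i)} = d_i$ for every irrep $i$. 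Third, having established that $\sigma_\alpha$ preserves dimensions, I apply \cref{lemperm} directly to $\Phi_\alpha$ to obtain the invertible $Z_\alpha$ with $\Phi_\alpha(X) = Z_\alpha^{-1} X Z_\alpha$.

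The main obstacle I anticipate is bookkeeping the combinatorics of \emph{which} tensor products of length $n$ are guaranteed to contain all irreps, and making sure the factorization hypothesis can actually be invoked in the configuration needed — i.e. that the single edge $\alpha$ sits inside a block $I$ of size $n$ that decomposes as $\{\alpha\} \cup (I\setminus\{\alpha\})$ with both pieces ``continuous'' in whatever sense the hypothesis demands, and that the dimension lemma's hypotheses (faithfulness of all four representations involved) are met at each peeling step. A secondary subtlety is that the previous dimension lemma was stated for a two-fold split; to handle the $n$-fold case one either iterates it or reproves it with the same fusion-rule argument ($d_{\sigma_1(0)}=1$ forces dimension preservation everywhere), and I should make explicit that this iteration is legitimate because the factorization hypothesis is associative. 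Once these are in place the argument is essentially mechanical: largeness of $n$ $\Rightarrow$ all irreps present $\Rightarrow$ fusion rules respected $\Rightarrow$ dimensions preserved $\Rightarrow$ conjugation, via \cref{lemperm}.
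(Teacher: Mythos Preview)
Your proposal has a genuine gap at the crucial step. The chaining argument you describe --- peeling off edges and invoking the previous lemma repeatedly --- only yields equalities of the form $d_{\sigma_\alpha(i)} = d_{\sigma_J(i)} = d_{\sigma_I(i)}$: all the permutations $\sigma_\bullet$ share the \emph{same} dimension function $i\mapsto d_{\sigma_\bullet(i)}$. But nowhere in that chain does the right-hand side become $d_i$ itself. You are implicitly assuming that for the large block $I$ the permutation $\sigma_I$ is dimension-preserving, which is precisely the statement to be proved. Likewise, ``all irreps present in the $n$-fold tensor product'' does not by itself force $\sigma_I$ to preserve irrep dimensions: the isomorphism $\Phi_I$ of centralizers only matches multiplicity blocks, so it constrains multiplicities, not the $d_\sigma$.

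The paper supplies exactly the missing ingredient, and it is a quantitative growth argument rather than a combinatorial peeling. Once the previous lemma gives $d_{\sigma_\alpha(i)}=d_{\sigma_\beta(i)}$ for all $\alpha,\beta$, one writes the dimension identity for the $n$-fold tensor product twice: $d_i^{\,n}=\sum_k N_{i\cdots i}^{k}\,d_k$ and, using that the permutations respect the fusion rules, $d_{\sigma_1(i)}^{\,n}=\sum_k N_{i\cdots i}^{k}\,d_{\sigma_1(k)}$. Dividing and bounding the sums crudely gives
\[
\frac{1}{d_{\max}}\;\le\;\Bigl(\frac{d_{\sigma_1(i)}}{d_i}\Bigr)^{n}\;\le\;d_{\max}
\]
for every $n$, which forces $d_{\sigma_1(i)}=d_i$. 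Only then does \cref{lemperm} apply. Your outline is correct from that point onward; what is missing is this exponential-versus-bounded comparison, and your ``all irreps present'' heuristic does not substitute for it.
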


\begin{proof}
The isomorphism $\Phi_I$ implements permutation $\sigma_I$ of the irreps. By repeated application of the previous lemma, for all $I$ and $J$, $d_{\sigma_I(k)} = d_{\sigma_J(k)}$ for any irrep $k$. Let us consider now the tensor product of $n$ copies of an irrep $i$. In the representations $\mathcal{A}_\alpha^A$, this is $i\otimes \dots \otimes i = \bigoplus_k N_{i \dots i}^k k$. After the application of the isomorphisms, this maps to $\sigma_1(i)\otimes \dots \otimes\sigma_n(i) = \bigoplus_k N_{i \dots i}^k \sigma_{[1\dots n]}(k)$. For the dimensions, this gives the equation $d_{\sigma_1(i)} \cdots d_{\sigma_n(i)} = \sum_k N_{i \dots i}^k d_{\sigma_{[1\dots n]}(k)}$. As the dimensions coincide, this means $d_{\sigma_1(i)}^n = \sum_k N_{i \dots i}^k d_{\sigma_{1}(k)}$ but also $d_{i}^n = \sum_k N_{i \dots i}^k d_{k}$. Then, we can obtain the following bounds:
  \begin{equation*}
   \frac{d_{\sigma_1(i)}^n}{d_{i}^n}=\frac{\sum_k N_{i \dots i}^k d_{\sigma_{1}(k)}}{\sum_q N_{i \dots i}^q d_{q}} \leq \frac{\sum_k N_{i \dots i}^k d_{max}}{\sum_q N_{i \dots i}^q } \leq d_{max},
  \end{equation*}
    \begin{equation*}
   \frac{d_{\sigma_1(i)}^n}{d_{i}^n}=\frac{\sum_k N_{i \dots i}^k }{\sum_q N_{i \dots i}^q d_{max}} \geq \frac{1}{d_{max}},
  \end{equation*}
  where $d_{max}$ is the biggest irrep dimension. Therefore, the following equation has to be satisfied for all $n$
  $$ \frac{1}{d_{max}}\leq \left(\frac{d_{\sigma_1(i)}}{d_{i}}\right)^n\leq d_{max}.$$
This implies that for $n$ big enough, $d_i = d_{\sigma_\alpha(i)}$ for all irrep $i$ and representation $\alpha$. That is, there always exists an $n$ such that the dimensions of the irreps do not change. Then, using \cref{lemperm} we conclude that there is an invertible matrix realizing the isomorphism as a conjugation.
\end{proof}

\section{Discussion} 

In this chapter we have proven fundamental theorems for (injective and) normal PEPS respectively: two such TNs generate the same state if and only if the defining tensors are related through a local gauge transformation. Moreover, the gauges relating the two sets of tensors are uniquely defined up to a multiplicative constant. This result holds for a fixed (but large enough) system size. It is valid for any geometry, TI and non-TI setting, including 1D (MPS), 2D PEPS, higher- dimensional PEPS, and other lattice geometries such as the honeycomb lattice, the Kagom\'e lattice, tree tensors networks, and the hyperbolic lattice used in the AdS/CFT correspondence \cite{Pastawski15, Hayden16}. The proof method, however, is not applicable for MERA, where we did not find a simple way to apply lemma 1 due to the particular geometry of the network.\\

Second we have proven a fundamental theorem for $G$-injective PEPS. The proof is valid for the square lattice and for TN that are equal for all system sizes. We obtain a local gauge relation between the tensors. This opens the possibility to obtain a fundamental theorem for more general families of PEPS such as MPO-injective PEPS. It is left for future work to relax the hypotheses of the theorem, in particular the square lattice dependence since topological models can be defined in any lattice embedded in an orientable surface. 

\chapter{Classification of symmetric $G$-injective PEPS}\label{chap:classsymGPEPS}

In this chapter we characterize global on-site symmetries of $G$-injective PEPS. Topological phases with a global symmetry acting non-trivially on their anyons and ground subspace are referred to as Symmetry Enriched Topological (SET) phases -see \cite{Barkeshli14,Chang15}. 
The global symmetry could permute between the anyons and between the ground states. Moreover, the symmetry could act projectively on the individual anyons. This effect is called Symmetry Fractionalization (SF). 
We show what the possible patterns of permutation and symmetry fractionalization are on  $G$-injective PEPS under a global on-site symmetry. We also prove that the different patterns correspond to different quantum phases.\\

A phase in quantum many-body systems is usually defined as the set of gapped locally-interacting Hamiltonians that can be deformed into each other without closing the spectral gap (see \cref{def:localhamil} and \cref{gapH}). 
Moreover, when a symmetry is imposed in the systems, the deformation in the Hamiltonians is required to preserve the symmetry. It is usually the case that the phase classification is translated into the classification of symmetric quantities unchanged by the Hamiltonian deformation.
Due to the difficulty of proving whether a Hamiltonian is gapped or not in 2D systems, see \cite{Cubitt15}, we will not consider the previous definition. 
Here we will define phases within the set of G-injective PEPS by focusing on the G-isometric ones (which, as renormalization fixed points, are the natural representatives). Two G-isometric PEPS, together with an on-site symmetry action in each of them, will be said to be in the same phase if they can be connected to each other in finite systems with a continuous path of Hamiltonians that keeps the symmetry (no gap assumption). We will show that this is possible if and only if both share the same invariants that connect the symmetry action and the topological order (the maps $\phi$ and $\omega$ of \cref{theo:class} which will be defined in the next section together with their equivalence relation). 
This is the content of the following two theorems:

\begin{tcolorbox}
\begin{theorem}[Classification of symmetric $G$-isometric PEPS]\label{theo:class}
Given two finite groups $G$ and $Q$ and a $G$-injective PEPS with $Q$ as a global on-site symmetry, one can define (an equivalence class of) a homomorphism $\phi:Q\rightarrow {\rm Aut} (G)$ and a 2-cocycle $\omega: Q\times Q\rightarrow G$ so that the pair $(\phi,\omega)$ is constant in a neighbourhood of any $G$-isometric PEPS, when perturbed with a {\it natural} perturbation (those that correspond to a continuous deformation of the parent Hamiltonian). 
\end{theorem}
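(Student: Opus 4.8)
The plan is to reduce the statement to a local (single-tensor) analysis via the Fundamental Theorem for $G$-injective PEPS (\cref{theo:FTGinjective}), extract the invariants $(\phi,\omega)$ as the obstruction data of a group extension, and then argue that the invariants are locally constant by a compactness/rigidity argument on the space of symmetric tensors.

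\medskip

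\textbf{Step 1: Local characterization of the symmetry.} Suppose $\ket{\Psi_A}$ is $G$-injective with a global on-site symmetry $U_q^{\otimes N}\ket{\Psi_A}=\ket{\Psi_A}$ for $q\in Q$. Since $U_q^{\otimes N}$ also produces a $G$-injective PEPS, apply \cref{theo:FTGinjective}: there exist invertible matrices $X_q,Y_q$ (acting on the horizontal, resp.\ vertical, virtual bonds) such that $U_q A = A\,(X_q\otimes Y_q\otimes X_q^{-1}\otimes Y_q^{-1})$, up to a phase $\lambda_q$. Using the group law $U_qU_{q'}=U_{qq'}$ together with the uniqueness clause of the Fundamental Theorem, deduce that $X_qX_{q'}$ and $X_{qq'}$ differ only by an element of the virtual symmetry group generated by the $u_g$'s and by a scalar; that is, $X_qX_{q'}=c(q,q')\,u_{\omega(q,q')}X_{qq'}$ for some $\omega(q,q')\in G$ and scalar $c(q,q')$, and similarly the adjoint action $X_q\,u_g\,X_q^{-1}=\chi_q(g)\,u_{\phi_q(g)}$ defines $\phi_q\in\operatorname{Aut}(G)$. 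The consistency of these relations (associativity of $X_qX_{q'}X_{q''}$, compatibility of $\phi$ with $\omega$) forces $\phi$ to be a homomorphism $Q\to\operatorname{Aut}(G)$ modulo inner automorphisms and $\omega$ to be a $2$-cocycle in $H^2_\phi(Q,G)$; equivalently $(\phi,\omega)$ is precisely the data classifying an extension $1\to G\to E\to Q\to 1$. The ambiguity of $X_q$ (multiplication by a virtual group element, which is exactly the gauge freedom in the Fundamental Theorem) induces precisely the equivalence relation on $(\phi,\omega)$ identifying equivalent extensions.

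\medskip

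\textbf{Step 2: Invariance under blocking and well-definedness.} Check that the pair $(\phi,\omega)$ does not depend on the choice of representative tensor: blocking tensors replaces $u_g$ by $u_g\otimes u_g$ and $X_q$ by $X_q\otimes X_q$, under which $\phi$ and $\omega$ are manifestly unchanged; a gauge transformation $A\mapsto A(Z\otimes W\otimes Z^{-1}\otimes W^{-1})$ conjugates $X_q\mapsto ZX_qZ^{-1}$ and $u_g\mapsto Zu_gZ^{-1}$, again leaving $(\phi,\omega)$ in the same equivalence class. This shows $(\phi,\omega)$ is a genuine invariant of the symmetric $G$-injective PEPS, and in particular it is defined at every $G$-isometric representative.

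\medskip

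\textbf{Step 3: Local constancy near a $G$-isometric point.} Fix a $G$-isometric PEPS $A_0$ with symmetry data $(\phi_0,\omega_0)$, and consider a continuous ``natural'' perturbation $A_t$ (one arising from a continuous deformation of the parent Hamiltonian that preserves the $Q$-symmetry), so that each $A_t$ remains $G$-injective with the same on-site $Q$-action $U_q$ for $t$ small. By Step 1 each $A_t$ has well-defined invariants $(\phi_t,\omega_t)$ valued in a \emph{finite} set (there are finitely many extensions of $G$ by $Q$ up to equivalence). The key point is that $t\mapsto(\phi_t,\omega_t)$ is locally constant: the matrices $X_q(t),Y_q(t)$ can be chosen to depend continuously on $t$ (they are obtained by the constructive procedure in the proof of the Fundamental Theorem, built out of the tensor $A_t$ and its inverse $A_t^{-1}$, both continuous in $t$ near the $G$-isometric point where injectivity is robust), hence the scalar-and-group-element data $\omega_t(q,q')\in G$ and $\phi_t\in\operatorname{Aut}(G)$ depend continuously on $t$; but $G$ and $\operatorname{Aut}(G)$ are discrete, so these are constant, and therefore so is the equivalence class. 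I expect \textbf{Step 3 is the main obstacle}: one must verify carefully that the gauge matrices $X_q(t)$ can be chosen continuously (the Fundamental Theorem only asserts uniqueness up to a scalar, so a priori one picks a branch, but the scalar ambiguity is exactly what gets quotiented out in the definition of $\omega$) and that the class of ``natural'' perturbations is rich enough that $A_t$ stays $G$-injective with unchanged physical symmetry representation $U_q$ — this requires invoking that $G$-injectivity is an open condition and that the parent-Hamiltonian deformation keeps the symmetry action strictly on-site with a fixed $U_q$. Once continuity of the data is established, discreteness of the target finishes the argument.
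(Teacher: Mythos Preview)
Your overall structure matches the paper's: extract $(\phi,\omega)$ locally via the Fundamental Theorem, verify cocycle/equivalence relations, then argue local constancy by discreteness. The main substantive difference lies in how Step~3 is executed.

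For Step~1, the paper does not simply invoke a ``uniqueness clause'' of \cref{theo:FTGinjective}; instead it isolates a dedicated lemma (\cref{Glemma}): if a $G$-injective tensor satisfies $A=A(X\otimes Y\otimes X^{-1}\otimes Y^{-1})$ for invertible $X,Y$, then $X=Y=u_g$ for some $g\in G$ (up to a constant). This lemma is what forces $v_q g v_q^{-1}\in G$ (defining $\phi_q$) and $v_kv_qv_{kq}^{-1}\in G$ (defining $\omega$). Your formulation with an extra scalar $\chi_q(g)$ is unnecessary --- \cref{Glemma} gives the equality on the nose.

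For Step~3, the paper does \emph{not} argue continuity of $X_q(t)$ by tracing through the constructive proof of the Fundamental Theorem. Instead it exploits a special feature of the $G$-isometric point: there $A=\mathcal{P}_G$ is literally the projector onto the invariant subspace, so $U_q$ restricted to $\mathrm{Im}\,A$ equals $v_q\otimes v_q\otimes v_q^{-1}\otimes v_q^{-1}$ exactly (\cref{Symform}). Writing the perturbed symmetry relation $U_q A(\epsilon)=A(\epsilon)(w_q(\epsilon)^{\otimes 2}\otimes w_q(\epsilon)^{-1\,\otimes 2})$ and using $A(\epsilon)=R(\epsilon)A$, one gets directly
\[
R(\epsilon)^{-1}\bigl(v_q^{\otimes 2}\otimes v_q^{-1\,\otimes 2}\bigr)R(\epsilon)
= g_q w_q(\epsilon)\otimes g_q w_q(\epsilon)\otimes (g_q w_q(\epsilon))^{-1}\otimes (g_q w_q(\epsilon))^{-1},
\]
whose left-hand side is manifestly continuous in $\epsilon$. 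From this, the matrix entries $\delta_{m,\phi_q^{[\epsilon]}(h)n}$ are continuous $\{0,1\}$-valued functions, hence constant; the same trick handles $\omega$. This is cleaner than your proposed route (continuity through the Fundamental Theorem construction), which you rightly flagged as delicate --- the paper simply bypasses that difficulty by working at the $G$-isometric point where the virtual operators are explicitly visible in $U_q$.
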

\end{tcolorbox}

\begin{tcolorbox}
\begin{theorem}[Continuos path of $G$-injective PEPS]\label{theo:path}
Given two $G$-injective PEPS invariant under a global on-site symmetry of $Q$, there is a continuous path connecting both if the class of the maps $(\phi, \omega)$ are the same for both $G$-injective PEPS.
\end{theorem}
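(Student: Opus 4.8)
The plan is to reduce the general statement to its natural fixed-point representatives and then exhibit an explicit interpolating path. First I would invoke the construction of Chapter~4 (the chapter on constructing a representative of each phase): given the data $(\phi,\omega)$ one builds a canonical $G$-isometric PEPS together with its on-site $Q$-symmetry, obtained by starting from a representation of the extension group $E$ determined by $(\phi,\omega)$ and restricting the tensors locally to (a related) representation of $G$. Call this representative $\Psi_{(\phi,\omega)}$. By Theorem~\ref{theo:class}, the pair $(\phi,\omega)$ is constant in a \emph{natural} neighbourhood of any $G$-isometric PEPS; the strategy is to show that each of the two given $G$-injective PEPS can be connected by a continuous path of symmetric Hamiltonians to the fixed-point $\Psi_{(\phi,\omega)}$ sharing the same data, so that transitivity of ``connected by a continuous symmetric path'' finishes the argument. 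Concretely, a generic $G$-injective PEPS lies, by the renormalization-group picture, on a path to its $G$-isometric fixed point; the content of Theorem~\ref{theo:class} guarantees this deformation does not alter $(\phi,\omega)$ (these are precisely the invariants stable under such natural perturbations), so we may assume from the outset that both endpoints are $G$-isometric.

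Next I would show that any $G$-isometric PEPS with symmetry data $(\phi,\omega)$ is connected to the canonical representative $\Psi_{(\phi,\omega)}$. The key input here is the Fundamental Theorem for $G$-injective PEPS, Theorem~\ref{theo:FTGinjective}: if two $G$-isometric tensors realise the same underlying quantum double $\mathcal{D}(G)$ state structure, they differ by invertible gauge matrices $X,Y$ on the virtual legs. One then checks that the two on-site $Q$-actions, once pulled through to the virtual level via \cref{Ginva}, are related by projective representations $Z_q\otimes Z_q^{-1}$-type data whose cohomological content is exactly encoded by $(\phi,\omega)$; equal data means the gauge transformation and the symmetry intertwiners can be chosen compatibly. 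Since $\mathrm{GL}(D)$ is path-connected, one interpolates $X$ and $Y$ (and any residual finite-dimensional unitary freedom) continuously to the identity, producing a continuous family of $G$-injective tensors all sharing the same $(\phi,\omega)$, and hence — taking parent Hamiltonians along the way — a continuous symmetric path from the given PEPS to $\Psi_{(\phi,\omega)}$.

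Finally, gluing the three pieces: $\Psi_A \leadsto \Psi_{(\phi,\omega)} \leadsto \Psi_B$, where the middle ``step'' is trivial because both sides produced the \emph{same} representative (same $(\phi,\omega)$). Concatenation of continuous paths of (finite-system) symmetric Hamiltonians is again such a path, which is the desired conclusion; note that — consistently with the definition of phase adopted just before the theorem — no spectral-gap control is claimed along the way, only continuity and preservation of the $Q$-symmetry.

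The main obstacle I anticipate is the middle step: verifying that \emph{equality} of the cohomological data $(\phi,\omega)$ genuinely suffices to choose the gauge transformation $X,Y$ of the Fundamental Theorem simultaneously compatible with the pulled-through symmetry intertwiners on all four virtual legs, including matching the $G$-valued $2$-cocycle $\omega$ up to the coboundary ambiguity. This requires carefully tracking how the gauge freedom in Theorem~\ref{theo:FTGinjective} (matrices unique only up to a constant) interacts with the freedom in representing $\phi$ and $\omega$, and showing the residual obstruction is a path-connected group so that it can be continuously trivialised; this is where the bulk of the honest work lies, the rest being either the RG deformation (covered by Theorem~\ref{theo:class}) or soft topology of $\mathrm{GL}(D)$.
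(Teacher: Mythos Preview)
Your middle step misuses the Fundamental Theorem. Theorem~\ref{theo:FTGinjective} applies only when two $G$-injective tensors generate the \emph{same} PEPS; it characterises the gauge freedom of a single state, not a relation between two distinct states that merely share the phase data $(\phi,\omega)$. The given $G$-isometric PEPS and the canonical representative $\Psi_{(\phi,\omega)}$ from Chapter~4 are in general different states (they may even have different bond dimensions and different physical Hilbert spaces), so the theorem cannot be invoked. And even if it could, interpolating the gauge $X,Y$ to the identity in $\mathrm{GL}(D)$ would leave the state \emph{unchanged} along the way --- gauges do not alter the PEPS --- so this would not produce a path between two different states at all. The obstacle you flag (compatibility of gauges with symmetry intertwiners) is therefore not the real problem; the approach breaks before that point.

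The paper's proof bypasses this entirely with a direct-sum construction. Given two $G$-injective tensors in projector form with representations $u^0_g,u^1_g$ and virtual symmetry operators $v^0_q,v^1_q$, set $u_g=u^0_g\oplus u^1_g$, $v_q=v^0_q\oplus v^1_q$, and define the interpolating tensor
\[
A(\lambda)=\frac{M(\lambda)}{|G|}\sum_{g\in G} u_g\otimes u_g\otimes u_g^{-1}\otimes u_g^{-1},\qquad M(\lambda)=\bigl(\lambda\,\id_{D^0}\oplus(1-\lambda)\,\id_{D^1}\bigr)^{\otimes 4}.
\]
For $\lambda\in(0,1)$ this is $G$-injective and interpolates between the two endpoints. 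One then checks directly that $U_q=v_q\otimes v_q\otimes v_q^{-1}\otimes v_q^{-1}$ is a symmetry of $A(\lambda)$ iff $\phi^0_q=\phi^1_q$ (so that $v_q u_g v_q^{-1}=u_{\phi_q(g)}$ block-diagonally), and that $U_q$ is a \emph{linear} representation iff $\omega^0=\omega^1$ (so that $v_kv_qv_{kq}^{-1}=\omega(k,q)$ is a single element of $G$, not merely of $G\oplus G$). No Fundamental Theorem and no canonical representative are needed; the equality of $(\phi,\omega)$ is used directly to verify that the block-diagonal symmetry operator is well-defined on the enlarged virtual space.
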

\end{tcolorbox}


%

\

We recall that $G$-injective PEPS describe the topological order associated to quantum doubles models of a finite group $G$.
But the topological order in $G$-injective PEPS is not guaranteed solely by the $G$-invariance of the tensors. Under local and continuous transformations, they can suffer phase transitions in the thermodynamic limit driven by boson condensations \cite{Duivenvoorden17}. 
To avoid these phase transitions we have restricted the classification to $G$-isometric PEPS, the renormalization fixed points, whose parent Hamiltonians are commuting (gapped in the thermodynamic limit), and which have zero correlation length. These points are in the same topological order as $\mathcal {D}(G)$ in the thermodynamic limit. In the case where the transformation preserves the topological order, that is, when the transformation does not close the gap, we end up with a separation of topological phases invariant under symmetries in the PEPS framework. Ref.\cite{Schuch11} shows some bounds for the gap of the parent Hamiltonian when these transformations are considered.

We do not construct interpolating paths in our classification since only one representative of each phase is considered. However, we will consider in \cref{sec:finitepath} an interpolation between two symmetric $G$-injective PEPS at finite sizes which gives us the desired condition of equality of the classes without gaps considerations. For the sake of completeness in \cref{sec:gauging} we will also study how the global symmetry can be gauged in these models.



%

We now explain the important connection between our classification of \cref{theo:class} and the theory of group extensions. The maps $(\phi, \omega)$ also appear when characterizing the possible group extensions of $G$ by $Q$. These extensions $E$ are defined by the short exact sequence
\begin{equation*}
1\rightarrow G \rightarrow E \rightarrow Q \rightarrow 1,
\end{equation*}
which relates the involved groups: $G\triangleleft E$ and $Q=E/G$. Since this connection will be crucial for our work, we review the notion of group extensions in Appendix \ref{ap:ext}.

\section{Proof of \cref{theo:class}}
 
In this section we prove \cref{theo:class}. First, in \cref{sec:defmaps},  we show how the maps $\phi$ and $\omega$ are defined and what are their equivalence classes. Second, in \cref{sec:deform}, we prove the robustness of the class in the maps $\phi$ and $\omega$  within a phase of  PEPS. This leads us to \cref{theo:class}. 

Let us prove a lemma which will be used to define the maps $(\phi,\omega)$. This lemma just states that the $G$-invariance is the only virtual symmetry of the considered tensors:

 \begin{lemma} \label{Glemma}
 Given a $G$-injective tensor $A$ and two invertible matrices $X$ and $Y$ such that
  $$  \begin{tikzpicture}
      \pic at (0,0,0) {3dpeps};
    \end{tikzpicture} 
    =
    \begin{tikzpicture}
      \draw (-0.7,0,0) -- (0.7,0,0);
      \draw (0,0,-0.9) -- (0,0,0.9);
      \pic at (0,0,0) {3dpeps};
      \filldraw[draw=black, fill=red] (-0.5,0,0) circle (0.05);
      \filldraw[draw=black, fill=red] (0.5,0,0) circle (0.05);
      \node[anchor=south] at (-0.5,0,0) {$\myinv{X}$};
      \node[anchor=north] at (0.5,0,0) {$X$};
      \filldraw[draw=black, fill=red] (0,0,-0.6) circle (0.05);
      \filldraw[draw=black, fill=red] (0,0,0.6) circle (0.05);
       \node[anchor=south] at (0,0,-0.6) {$Y$};
      \node[anchor=north] at (0,0,0.6) {$\myinv{Y}$}; 
    \end{tikzpicture},
$$
the matrices satisfy $X=Y=u_g$ (up to a constant) for some $g \in G$ where $u_g$ is the representation of the $G$-invariance. 
 \end{lemma}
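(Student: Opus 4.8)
\textbf{Proof plan for \cref{Glemma}.}

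The plan is to reduce this to the uniqueness part of the Fundamental Theorem for $G$-injective PEPS (\cref{theo:FTGinjective}), which is stated just before. Indeed, the hypothesis says precisely that the tensor $A$ and the gauged tensor $B := A(X \otimes Y \otimes X^{-1} \otimes Y^{-1})$ generate the same PEPS on every system size on the square lattice (since inserting the pairs $X^{-1}X$ and $Y^{-1}Y$ on every bond of the contracted network cancels them out), and both are $G$-injective — $B$ is $G$-injective with the representation $u_g$ conjugated by $X$ (resp.\ $Y$) on the horizontal (resp.\ vertical) legs. By \cref{theo:FTGinjective} together with the uniqueness clause, the pair of gauges relating them is unique up to a multiplicative constant. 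But the identity gauge $X = Y = \id$ trivially relates $A$ to itself, and the hypothesis exhibits $(X, Y)$ as another such gauge. So one concludes $X \propto \id$ and $Y \propto \id$; however this is too strong and is not what we want — the subtlety is that $A = A(u_g \otimes u_g \otimes u_g^{-1} \otimes u_g^{-1})$ also holds for every $g \in G$ by the $G$-invariance \cref{Ginva}, so the gauge is only unique \emph{modulo the $G$-invariance gauges} $\{(u_g, u_g)\}_{g \in G}$. This is exactly the content to be proved: the stabilizer of $A$ under gauge transformations is generated by the $G$-invariance representation.

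So the cleaner route is to not invoke the full power of \cref{theo:FTGinjective} but to argue directly, mirroring the structure of its proof in the degenerate (single-tensor) case. First I would block the network into a tripartite $G$-injective MPS around a single edge, exactly as done in the text preceding \cref{lem:Ginj_isomorph}. The hypothesis then gives, for every matrix $X$ inserted on that edge, the same deformation realized via the gauge $(X, Y)$. The key step is to extract from this, using the pseudo-inverse $A^{-1}$ and the matrix $\mathfrak{D}$ (as in \cref{conca}), a relation of the form studied in \cref{prop:groupalgebra1}: there is an operator $T \in (\mathcal{A}^A)^{\otimes 3}$ with $A(X \otimes Y \otimes X^{-1} \otimes Y^{-1}) = A T$ in the appropriate sense. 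In fact the present situation is the special case $A = B$ of \cref{prop:groupalgebra1}, so $T$ exists, and then Propositions on the normalization, concatenation, growth and triviality of $T$ apply verbatim, yielding $T = (\id \otimes C \otimes \id)(\id \otimes \Delta)\Delta(C^{-1})$ for some $C \in \mathcal{A}^A$, and ultimately (from the Triviality of $T$ proposition, with the element $g \in G$ there forced to be $e$ by the thermodynamic-limit argument) that $A = A(g_0 \otimes C \otimes \id \otimes C^{-1})$ for $C \in \mathcal{A}^A$ and in fact $X, Y$ expressible through $\mathcal{A}^A$-elements.

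From there the endgame is short. Having established $X, Y \in \mathcal{A}^A = \bigoplus_\sigma \mathcal{M}_{d_\sigma} \otimes \id_{m_\sigma}$, write $X = \sum_{g} x_g u_g$ and $Y = \sum_g y_g u_g$. Plugging back into the $G$-invariance identity $A = A(X \otimes Y \otimes X^{-1} \otimes Y^{-1})$ and using $G$-injectivity of $A$ (i.e.\ applying $A^{-1}$, which sends the virtual operators into the regular-representation structure of $\frac{1}{|G|}\sum_g u_g \otimes u_g \otimes u_g^{-1} \otimes u_g^{-1}$), I expect the relation to collapse to $X \otimes Y$ lying in the diagonal subalgebra $\{\sum_g c_g\, u_g \otimes u_g\}$ and moreover being grouplike, i.e.\ $X = \lambda u_g$, $Y = \mu u_g$ for a single $g \in G$ and scalars $\lambda, \mu$. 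The grouplike conclusion is where one uses that $u_g$ contains every irrep of $G$ (the semiregularity hypothesis of \cref{def:GPEPS}): this forces the coefficient function to be supported on a single group element. Matching the two sides fixes $\lambda = \mu$ up to the overall constant, giving $X = Y = u_g$ up to a constant, as claimed.

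The main obstacle I anticipate is the bookkeeping in the middle step — pinning down that the operator $T$ arising here really is the $A = B$ specialization of \cref{prop:groupalgebra1} with the gauges $X, Y$ of that proposition identified with the $X, Y$ of the present lemma, and then that the chain of Propositions (Normalization, Concatenation, Growth, Triviality of $T$) can be applied without change. If instead one wants a self-contained short proof, the hard part becomes proving directly that $X, Y \in \mathcal{A}^A$ (rather than merely in some centralizer), since \cref{lem:Ginj_isomorph} by itself only gives that the induced map on centralizers is an algebra isomorphism, not that it is inner — this is precisely why the dedicated argument of \cref{sec:isols} is needed, and invoking \cref{prop:groupalgebra1} and its sequels is the way to get the stronger conclusion that $X$ lies in the group algebra itself.
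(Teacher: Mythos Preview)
Your plan has a genuine gap. The Fundamental Theorem for $G$-injective PEPS, as stated in \cref{theo:FTGinjective}, only asserts \emph{existence} of a gauge relating $A$ and $B$; it has no uniqueness clause. Indeed it cannot have one in the form you want: the stabilizer of $A$ under gauge transformations is precisely the group $\{(u_g,u_g)\}_{g\in G}$, and proving that this stabilizer is no larger is exactly the content of \cref{Glemma}. So your first route assumes what is to be shown. Your fallback route --- rerunning \cref{prop:groupalgebra1} and its sequels with $A=B$ --- fails for a different reason: those propositions \emph{construct} gauges $X',Y'$ and an operator $T$ from the hypothesis that $A,B$ generate the same state. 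When $A=B$ the 1D reductions coincide, so the constructed $X',Y'$ are trivial and $T=\id$. None of this constrains the \emph{given} matrices $X,Y$ from the hypothesis of the lemma, which are a priori unrelated to the constructed ones. You are confusing an existence machine with a uniqueness statement.

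The paper's proof is direct and short, bypassing the Fundamental Theorem entirely. Apply $A^{-1}$ to both sides of the hypothesis (so both sides become $\mathcal{P}_G$ times something), then close one vertical leg by inserting $\mathfrak{D}Y$ and tracing. The orthogonality relation $\tr[u_g\mathfrak{D}]=|G|\delta_{e,g}$ collapses the left-hand side to a sum $\sum_g \tr[g^{-1}Y\mathfrak{D}]\, g\otimes g\otimes g^{-1}$ over the three remaining legs, while the right-hand side becomes $Y\otimes X\otimes X^{-1}$. Linear independence of the $u_g$ (semiregularity) then forces $X$ and $Y$ to each be proportional to a single group element, and a second contraction pins them to the \emph{same} group element. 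This is exactly your ``endgame'' done correctly --- but note that it does not require first establishing $X,Y\in\mathcal{A}^A$; that falls out of the computation rather than being an input. The whole argument is under half a page.
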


 \begin{proof} 
 Apply the inverse of the tensor on both sides, apply $\mathfrak{D} Y$ on the south leg, where $\mathfrak{D}$ is the matrix which makes the group elements orthogonal for a semi-regular representation ($\tr{[u_g\mathfrak{D}]}=\delta_{e,g}|G|$ defined in \cref{sec:introGinj}). We close the virtual indices as
   $$  \begin{tikzpicture}
      \node[anchor=south] at (0,0.7,0) {$A^{-1}$}; 
   \pic at (0,0.7,0) {3dpepsdown};
   \draw (0,0,0)--(0,0.7,0)--(0,0.7,0.9)--(0,0,0.9)--(0,0,0);
   \node[anchor=north] at (0,0,0.6) {${\mathfrak{D}Y}$}; 
    \pic at (0,0,0) {3dpeps};
          \filldraw[draw=black, fill=red] (0,0,0.6) circle (0.05);
    \end{tikzpicture} 
    =
    \begin{tikzpicture}
          \node[anchor=south] at (0,0.7,0) {$A^{-1}$}; 
       \pic at (0,0.7,0) {3dpepsdown};
   \draw (0,0,0)--(0,0.7,0)--(0,0.7,1)--(0,0,1)--(0,0,0);
      \draw (-0.7,0,0) -- (0.7,0,0);
      \draw (0,0,-0.9) -- (0,0,0.9);
      \pic at (0,0,0) {3dpeps};
      \filldraw[draw=black, fill=red] (-0.5,0,0) circle (0.05);
      \filldraw[draw=black, fill=red] (0.5,0,0) circle (0.05);
      \node[anchor=south east] at (-0.5,0,0) {$\myinv{X}$};
      \node[anchor=north] at (0.5,0,0) {$X$};
      \filldraw[draw=black, fill=red] (0,0,-0.6) circle (0.05);
      \filldraw[draw=black, fill=red] (0,0,0.6) circle (0.05);
       \node[anchor=south] at (0,0,-0.6) {$Y$};
      \node[anchor=north] at (0,0,0.6) {$\mathfrak{D}$}; 
   
    \end{tikzpicture}.
$$
By $G$-injectivity, this results in the identity
\begin{equation}\label{eq:gXY}
\sum_g   \tr{ [{g}^{-1} Y \mathfrak{D}]}  g\otimes  g\otimes   {g}^{-1} = Y\otimes X \otimes {X }^{-1}.
\end{equation}
Because the RHS is not zero and the elements $g$ of $G$ are linearly independent there must exist an element $s\in G$ such that $\tr{ [ Y\mathfrak{D} s ]}\neq0$. Therefore contracting the first factor of the tensor product of \cref{eq:gXY} with $\mathfrak{D} s$ we obtain:
$$   \sum_g   \tr{ [{g}^{-1} Y \mathfrak{D}]}  \tr{ [\mathfrak{D} s g]}\otimes  g\otimes  {g}^{-1}= \tr{ [s Y \mathfrak{D}]}  {s}^{-1} \otimes  s =  \tr{ [s Y\mathfrak{D} ]}  X \otimes {X}^{-1}, $$
which implies that  $X\in G\times \mathbb{C}$ and similarly for $Y$.  Following this argument we find an element $r\in G$ such that $\tr{ [ X\mathfrak{D} r ]}\neq0$ so that from \cref{eq:gXY} we arrive at 
$$    r\otimes r=  \frac{\tr{ [{X}^{-1} r \mathfrak{D}]} }  {\tr{ [Y {r}^{-1} \mathfrak{D}]} }  X \otimes Y, $$
which implies that $X=Y\in G$ up to an arbitrary complex number that can be renormalized to a phase factor. We will drop w.l.o.g. the complex phase dependence.  \end{proof}

Given a $G$-injective PEPS  $|\Psi_A\rangle$ placed on a square lattice $\Lambda$ with periodic boundary conditions.  We consider the case where $|\Psi_A\rangle$ has a global on-site symmetry given by some finite group $Q$:
$$\bigotimes_{v\in \Lambda}U^{[v]}_q |\Psi_A\rangle=|\Psi_A\rangle; \;\forall q\in Q,$$
where $U_q$ is a unitary (linear) representation of $Q$ and $v$ are the vertices of the lattice $\Lambda$. We now can apply \cref{theo:FTGinjective} and conclude that there exist invertible matrices $v_q$ and $w_q$ acting on the virtual d.o.f. such that

  \begin{equation} \label{locsym}
    \begin{tikzpicture}[baseline=-1mm]
      \pic at (0,0,0) {3dpeps};
      \draw (0,0,0) -- (0,0.35,0);
      \filldraw[draw=black, fill=purple] (0,0.2,0) circle (0.06);
      \node[anchor=east] at (0,0.3,0) {$U_q$};
    \end{tikzpicture} =
    \begin{tikzpicture}
      \draw (-0.7,0,0) -- (0.7,0,0);
      \draw (0,0,-0.9) -- (0,0,0.9);
      \pic at (0,0,0) {3dpeps};
      \filldraw [draw=black, fill=red] (-0.5,0,0) circle (0.05);
      \filldraw [draw=black, fill=red] (0.5,0,0) circle (0.05);
      \node[anchor=south] at (-0.5,0,0) {$\myinv{v_q}$};
      \node[anchor=north] at (0.5,0,0) {$v_q$};
      \filldraw [draw=black, fill=red] (0,0,-0.6) circle (0.05);
      \filldraw [draw=black, fill=red] (0,0,0.6) circle (0.05);
       \node[anchor=south] at (0,0,-0.6) {$w_q$};
      \node[anchor=north] at (0,0,0.6) {$\myinv{w_q}$};
    \end{tikzpicture}
    \; \forall q\in Q.
  \end{equation}
Note that \cref{theo:FTGinjective} is stated for a relation between every system size: this is a physically meaningful situation for global symmetries.

 \begin{remark}[Gauge freedom] \label{equivdef} From \cref{locsym}, the operators $w_q,v_q$ are defined up to an arbitrary element of $G$, for each $q\in Q$. Due to the $G$-injectivity of the tensor:
 $$ \begin{tikzpicture}
      \draw (-0.9,0,0) -- (0.9,0,0);
      \draw (0,0,-1.1) -- (0,0,1.1);
      \pic at (0,0,0) {3dpeps};
      \filldraw [draw=black, fill=red] (-0.7,0,0) circle (0.05);
      \filldraw [draw=black, fill=red] (0.7,0,0) circle (0.05);
      \node[anchor=south] at (-0.7,0,0) {$\myinv{v_q}$};
      \node[anchor=north] at (0.7,0,0) {$v_q$};
      \filldraw [draw=black, fill=red] (0,0,-0.8) circle (0.05);
      \filldraw [draw=black, fill=red] (0,0,0.8) circle (0.05);
       \node[anchor=south] at (0,0,-0.8) {$w_q$};
      \node[anchor=north] at (0,0,0.8) {$\myinv{w_q}$};
    \end{tikzpicture} =
    \begin{tikzpicture}
      \draw (-0.9,0,0) -- (0.9,0,0);
      \draw (0,0,-1.1) -- (0,0,1.1);
      \pic at (0,0,0) {3dpeps};
      \filldraw [draw=black, fill=red] (-0.7,0,0) circle (0.05);
      \filldraw [draw=black, fill=red] (0.7,0,0) circle (0.05);
      \node[anchor=south] at (-0.7,0,0) {$\myinv{v_q}$};
      \node[anchor=north] at (0.7,0,0) {$v_q$};
      \filldraw [draw=black, fill=red] (0,0,-0.8) circle (0.05);
      \filldraw [draw=black, fill=red] (0,0,0.8) circle (0.05);
       \node[anchor=west] at (0,0,-0.8) {$w_q$};
      \node[anchor=east] at (0,0,0.8) {$\myinv{w_q}$};
            \filldraw (-0.4,0,0) circle (0.05);
      \filldraw  (0.4,0,0) circle (0.05);
      \node[anchor=south] at (-0.4,0,0) {$\myinv{g}$};
      \node[anchor=north] at (0.4,0,0) {$g$};
      \filldraw (0,0,-0.5) circle (0.05);
      \filldraw (0,0,0.5) circle (0.05);
       \node[anchor= south east] at (0,0,-0.5) {$g$};
      \node[anchor=west] at (0,0,0.5) {$\myinv{g}$};
    \end{tikzpicture} . $$
 For all $g\in G$,  the pairs $(v_q,w_q)$ and $(gv_q ,gw_q)$ have to be considered equivalent since the action on the tensor is the same. This is the gauge freedom of the virtual symmetry operators that has to be considered when we define maps in terms of $v_q$ and $w_q$.
\end{remark}
 
\subsection{Definitions of the maps $\phi$ and $\omega$}\label{sec:defmaps}

Using \cref{locsym} and the $G$-injectivity of the tensor we get for all $ g\in G$ that
$$ \begin{tikzpicture}
      \draw (-0.7,0,0) -- (0.7,0,0);
      \draw (0,0,-0.9) -- (0,0,0.9);
      \pic at (0,0,0) {3dpeps};
      \filldraw [draw=black, fill=red] (-0.5,0,0) circle (0.05);
      \filldraw [draw=black, fill=red] (0.5,0,0) circle (0.05);
      \node[anchor=south] at (-0.5,0,0) {$\myinv{v_q}$};
      \node[anchor=north] at (0.5,0,0) {$v_q$};
      \filldraw [draw=black, fill=red] (0,0,-0.6) circle (0.05);
      \filldraw [draw=black, fill=red] (0,0,0.6) circle (0.05);
       \node[anchor=south] at (0,0,-0.6) {$w_q$};
      \node[anchor=north] at (0,0,0.6) {$\myinv{w_q}$};
    \end{tikzpicture} =
    \begin{tikzpicture}
      \draw (-0.7,0,0) -- (0.7,0,0);
      \draw (0,0,-0.9) -- (0,0,0.9);
      \pic at (0,0,0) {3dpeps};
      \filldraw [draw=black, fill=red] (-0.5,0,0) circle (0.05);
      \filldraw [draw=black, fill=red] (0.5,0,0) circle (0.05);
      \node[anchor=south] at (-0.5,0,0) {$\myinv{(v_q g)}$};
      \node[anchor=north] at (0.5,0,0) {$v_q g$};
      \filldraw [draw=black, fill=red] (0,0,-0.6) circle (0.05);
      \filldraw [draw=black, fill=red] (0,0,0.6) circle (0.05);
       \node[anchor=south] at (0,0,-0.6) {$w_q g$};
      \node[anchor=north] at (0,0,0.6) {$\myinv{(w_q g)}$};
    \end{tikzpicture}, $$
for each $q\in Q$. This implies 

$$ \begin{tikzpicture}
      \draw (-0.7,0,0) -- (0.7,0,0);
      \draw (0,0,-0.9) -- (0,0,0.9);
      \pic at (0,0,0) {3dpeps};
    \end{tikzpicture} =
    \begin{tikzpicture}
      \draw (-0.7,0,0) -- (0.7,0,0);
      \draw (0,0,-0.9) -- (0,0,0.9);
      \pic at (0,0,0) {3dpeps};
      \filldraw [draw=black, fill=red] (-0.5,0,0) circle (0.05);
      \filldraw [draw=black, fill=red] (0.5,0,0) circle (0.05);
      \node[anchor=south] at (-0.5,0,0) {$v_q \myinv{g}\myinv{v_q}$};
      \node[anchor=north west] at (0.5,0,0) {$v_q g \myinv{v_q}$};
      \filldraw [draw=black, fill=red] (0,0,-0.6) circle (0.05);
      \filldraw [draw=black, fill=red] (0,0,0.6) circle (0.05);
       \node[anchor=south west] at (0,0,-0.6) {$w_q g \myinv{w_q}$};
      \node[anchor=north] at (0,0,0.6) {$w_q \myinv{g}\myinv{w_q}$};
    \end{tikzpicture}, $$
so we associate $ v_q g v^{-1}_q\equiv X$ and $ w_q g w^{-1}_q\equiv Y$  in \cref{Glemma} and then $v_q g v^{-1}_q= w_q g w^{-1}_q\in G$. 

\begin{definition}[Definition of $\phi$]\label{def:phi}
For each $q\in Q$ the permutation map $\phi$ is defined as follows
\begin{align}\label{autmap}
  \phi_q \colon G &\to G \notag\\
  g &\mapsto \phi_q(g)=v_q g v^{-1}_q.
\end{align}
The map $ \phi_q$ is invertible, and by \cref{Glemma} is equal to the map $\tilde{\phi}_q(g)=w_q g w^{-1}_q$. It also satisfies $\phi_q(g) \phi_q(h)=\phi_q(gh)$. So $\phi_q$ is a map from $Q$ to Aut$(G)$ for each $q\in Q$.\\
\end{definition}
A linear representation satisfies $U_kU_q=U_{kq}$. Thus, by \cref{locsym}, we have
$$ \begin{tikzpicture}
      \draw (-0.7,0,0) -- (0.7,0,0);
      \draw (0,0,-0.9) -- (0,0,0.9);
      \pic at (0,0,0) {3dpeps};
      \filldraw [draw=black, fill=red] (-0.5,0,0) circle (0.04);
      \filldraw [draw=black, fill=red] (0.5,0,0) circle (0.04);
      \node[anchor=south] at (-0.5,0,0) {$\myinv{v_{kq}}$};
      \node[anchor=north] at (0.5,0,0) {$v_{kq}$};
      \filldraw [draw=black, fill=red] (0,0,-0.6) circle (0.04);
      \filldraw [draw=black, fill=red] (0,0,0.6) circle (0.04);
       \node[anchor=south] at (0,0,-0.6) {$w_{kq}$};
      \node[anchor=north] at (0,0,0.6) {$\myinv{w_{kq}}$};
    \end{tikzpicture} =
    \begin{tikzpicture}
      \draw (-0.7,0,0) -- (0.7,0,0);
      \draw (0,0,-0.9) -- (0,0,0.9);
      \pic at (0,0,0) {3dpeps};
      \filldraw [draw=black, fill=red] (-0.5,0,0) circle (0.04);
      \filldraw [draw=black, fill=red] (0.5,0,0) circle (0.04);
      \node[anchor=south] at (-0.5,0,0) {$\myinv{(v_k v_q)}$};
      \node[anchor=north] at (0.5,0,0) {$v_k v_q$};
      \filldraw [draw=black, fill=red] (0,0,-0.6) circle (0.04);
      \filldraw [draw=black, fill=red] (0,0,0.6) circle (0.04);
       \node[anchor=south] at (0,0,-0.6) {$w_k w_q$};
      \node[anchor=north] at (0,0,0.6) {$\myinv{(w_k w_q)}$};
    \end{tikzpicture}
    \; \forall q,k\in Q.$$
Again, using \cref{Glemma} it follows that $ v_k v_q v^{-1}_{kq}= w_k w_q w^{-1}_{kq} \in G$. 

\begin{definition}[Definition of $\omega$]\label{def:omega}
The map $\omega$ is defined as follows:
\begin{align}
  \omega \colon Q\times Q  &\to G \notag\\
 (k,q) &\mapsto\omega(k,q)=v_k v_q v^{-1}_{kq}.\notag
\end{align}
\end{definition}
\begin{proposition}
The map $\omega$ is a $2$-cocycle, see \cite{Adem04}, i.e. it satisfies the following $2$-cocycle condition:
\begin{equation}\label{eq:2cocyclecond}
\omega(k,q) \omega(kq,p)= \phi_k ( \omega(q,p))  \omega(k,qp).
\end{equation}
\end{proposition}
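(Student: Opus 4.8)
The plan is to derive the $2$-cocycle condition \eqref{eq:2cocyclecond} directly from the associativity of the representation $U$ and the definitions of $\phi$ and $\omega$ in terms of the virtual operators $v_q$. First I would write down the associativity relation $U_kU_qU_p = U_{kq}U_p = U_kU_{qp} = U_{kqp}$ at the physical level and translate it to the virtual level using \cref{locsym}. Since each $U_q$ corresponds to $v_q(\cdot)v_q^{-1}$ on the horizontal virtual legs (and $w_q(\cdot)w_q^{-1}$ on the vertical ones, which give the same information by \cref{Glemma}), I only need to track what happens on one horizontal leg. Composing $U_kU_qU_p$ pushes the operator $v_kv_qv_p$ (times its inverse) onto the bond; grouping as $(U_kU_q)U_p$ produces $\omega(k,q)\,v_{kq}v_p$ while grouping as $U_k(U_qU_p)$ produces $v_k\,\omega(q,p)\,v_{qp}$, and similarly applying the one-step relation once more turns $\omega(k,q)v_{kq}v_p$ into $\omega(k,q)\,\omega(kq,p)\,v_{kqp}$ and $v_k\,\omega(q,p)\,v_{qp}$ into $v_k\,\omega(q,p)\,v_k^{-1}\,v_k\,v_{qp} = \phi_k(\omega(q,p))\,\omega(k,qp)\,v_{kqp}$.

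Concretely, the key steps are: (1) Start from $v_{kqp} = v_{kq\cdot p} = \omega(kq,p)^{-1}\,v_{kq}\,v_p$ (from \cref{def:omega}) and also $v_{kq} = \omega(k,q)^{-1}\,v_k\,v_q$, so that $v_{kqp} = \omega(kq,p)^{-1}\,\omega(k,q)^{-1}\,v_k\,v_q\,v_p$, giving $v_kv_qv_p = \omega(k,q)\,\omega(kq,p)\,v_{kqp}$. (2) Independently compute $v_kv_qv_p = v_k\,(v_qv_p) = v_k\,\omega(q,p)\,v_{qp}$, then insert $v_k^{-1}v_k$ to write $v_k\,\omega(q,p)\,v_{qp} = \bigl(v_k\,\omega(q,p)\,v_k^{-1}\bigr)\,v_k\,v_{qp} = \phi_k(\omega(q,p))\,\omega(k,qp)\,v_{kqp}$, using \cref{def:phi} and $v_kv_{qp} = \omega(k,qp)\,v_{kqp}$. (3) Equate the two expressions for $v_kv_qv_p$ and cancel $v_{kqp}$ on the right (it is invertible), obtaining exactly $\omega(k,q)\,\omega(kq,p) = \phi_k(\omega(q,p))\,\omega(k,qp)$. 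One should note that all these manipulations happen inside the group $G$, since $\omega$ takes values in $G$ and $\phi_k$ is an automorphism of $G$ by \cref{def:phi}, so the cancellations are legitimate group-element cancellations, not merely operator identities up to the $G$-gauge ambiguity.

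The main subtlety — and the place I would be most careful — is the gauge freedom of \cref{equivdef}: the $v_q$ are only defined up to left multiplication by an element of $G$, so a priori $\omega$ is only defined up to a coboundary, and one must check the derived relation is the genuine $2$-cocycle identity rather than an identity twisted by the gauge choice. The resolution is that once a choice of representatives $\{v_q\}_{q\in Q}$ is fixed (which we may do), the elements $\omega(k,q) = v_kv_qv_{kq}^{-1}$ are determined, and the computation above is a literal equation among fixed group elements; changing the gauge $v_q \mapsto g_q v_q$ changes $\omega$ by the coboundary $(k,q)\mapsto g_k\,\phi_k(g_q)\,\omega(k,q)\,g_{kq}^{-1}$, which still satisfies \eqref{eq:2cocyclecond}, consistent with the claim that only the class of $(\phi,\omega)$ is well defined. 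Beyond that bookkeeping, the argument is a routine unwinding of definitions, so I do not anticipate a serious obstacle; the only thing to state explicitly is that $\phi_k$ applied to a product distributes ($\phi_k(\omega(q,p))$ makes sense and $\phi_k$ is a homomorphism by \cref{def:phi}), which is exactly what makes the right-hand side of \eqref{eq:2cocyclecond} well formed.
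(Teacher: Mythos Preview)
Your proposal is correct and follows essentially the same approach as the paper: expand $v_k v_q v_p$ in two ways using the definition $v_a v_b = \omega(a,b)\,v_{ab}$, once grouping as $(v_k v_q)v_p$ and once as $v_k(v_q v_p)$, then in the second case conjugate $\omega(q,p)$ past $v_k$ to produce $\phi_k(\omega(q,p))$, and finally cancel the invertible $v_{kqp}$. Your additional remarks on the gauge freedom are accurate but go beyond what is needed for this particular proposition.
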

\begin{proof}
The $2$-cocycle condition satisfied by $\omega$ comes from the associativity of the matrices. We can decompose the virtual action of $U_k U_q U_p$ in two ways as follows:
$$v_k v_q v_p= \omega(k,q) v_{kq}v_p = \omega(k,q) \omega(kq,p)v_{kpq}\; \;{\rm or}$$
$$v_k v_q v_p= v_k  \omega(q,p) v_{qp}= v_k  \omega(q,p)v^{-1}_k v_k v_{qp}= \phi_k ( \omega(q,p))  \omega(k,qp) v_{kpq}.$$
Then, applying $v^{-1}_{kpq}$, we obtain the $2$-cocycle condition.
\end{proof}

It is important to note the following relation between the maps $\phi$ and $\omega$ defined above:
\begin{equation}\label{eq:quaihomo}
 v_k v_q= \omega(k,q)v_{kq}\Rightarrow \phi_k\circ \phi_q=\tau_{\omega(k,q)}\circ \phi_{kq},
\end{equation}
where $\tau_{g}$ denotes the conjugation by $g \in G$.
Let us show how \cref{eq:quaihomo} allows us to show that $\phi$ can define an homomorphism from $Q$ to ${\rm Aut}(G)$. 
In the case where $G$ is abelian, $\phi$ is directly a homomorphism from $Q$ to Aut($G$) since $\tau_{\omega(k,q)}$ is trivial on elements of $G$:
$$\phi_k\circ \phi_q|_G= \phi_{kq}|_G.$$

In the non-abelian case to define a homomorphism with $\phi$ we have to consider the  group of outer automorphisms of $G$, ${\rm Out}(G)$. That group is defined by quotienting the automorphism group with the conjugation by elements of $G$. The conjugations by $G$ formed a group, the so-called inner automorphism group ${\rm Inn}(G)$, which is normal in Aut($G$). That is,
$$ {\rm Out}(G) =   {\rm Aut}(G)/{\rm Inn}(G) .$$
Therefore, we can define $\psi$, analogous to $\phi$ in Eq.(\ref{autmap}), as the homomorphism from $Q$ to ${\rm Out}(G)$ quotienting the RHS of \cref{eq:quaihomo} by ${\rm Inn}(G)$. Therefore,
$$\psi_k\circ \psi_q|_G= \psi_{kq}|_G.$$ 

\begin{definition}[Equivalence relation of $(\phi,\omega)$]\label{def:eqrelop}
We say that two pairs $(\phi,\omega)$ and $(\phi',\omega')$ are equivalent, $(\phi,\omega)\sim (\phi',\omega')$, if the following holds
\begin{equation}\label{cocyclefreedom} 
\omega'(k,q)=g_k \phi_k(g_q)\omega(k,q)g^{-1}_{kq} \; \; {\rm and}
\end{equation}
$$\phi'_k= \tau_{g_k} \circ \phi_k,$$
for some $g_q, g_k, g_{kq}\in G$.
\end{definition}
The equivalence relation between of the pair $(\phi,\omega)$ comes from the redundancy in the definition of $v_k$; the gauge freedom commented in \cref{equivdef}. That is, if we modified $v_q$ by $v'_q=g v_q$ for all $q\in Q$ we arrive to \cref{cocyclefreedom}.

\begin{proposition}[Classification of $\omega$]\label{prop:H2}
Given $\phi$, the $2$-cocycle $\omega$ is classified, under the equivalence relation of \cref{def:eqrelop},  by the group $H_\phi^2(Q,G)$ when $G$ is abelian. The group $H_\phi^2(Q,G)$ is defined as the quotient between $2$-cocycles and $2$-coboundaries, see Appendix \ref{ap:ext}. $\rho:Q\times Q\mapsto G$ is a $2$-coboundary if there exists a map from $Q$ to $G$: $q\mapsto g_q$ such that
$$ \rho(q,k)= g_k \phi_k(g_q)g^{-1}_{kq}, \; {\it for \; any}\; k,q\in Q.$$
\end{proposition}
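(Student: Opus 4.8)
\textbf{Proof proposal for \cref{prop:H2}.}

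The plan is to verify directly that, once $\phi$ is fixed, the two defining conditions of \cref{prop:H2} —\,the $2$-cocycle condition \cref{eq:2cocyclecond} and the equivalence relation \cref{cocyclefreedom}\,— reduce, in the abelian case, exactly to the classical definition of the second group cohomology $H_\phi^2(Q,G)$ with coefficients in the $Q$-module $G$ (where $Q$ acts on $G$ through $\phi$). So first I would observe that when $G$ is abelian the map $\tau_g$ is trivial on $G$, and hence \cref{eq:quaihomo} gives $\phi_k\circ\phi_q=\phi_{kq}$ on $G$: that is, $\phi\colon Q\to\operatorname{Aut}(G)$ is a genuine group homomorphism, making $G$ into a left $Q$-module via $q\cdot g:=\phi_q(g)$. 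With this module structure fixed, the cocycle condition \cref{eq:2cocyclecond} becomes
\[
  \omega(k,q)\,\omega(kq,p)=(k\cdot\omega(q,p))\,\omega(k,qp),
\]
which — writing $G$ multiplicatively — is precisely the standard inhomogeneous $2$-cocycle identity for $Z_\phi^2(Q,G)$. Thus \cref{def:omega} produces an element of $Z_\phi^2(Q,G)$.

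Next I would examine the equivalence relation of \cref{def:eqrelop}. In the abelian case $\phi'_k=\tau_{g_k}\circ\phi_k=\phi_k$, so fixing $\phi$ means $\phi'=\phi$ automatically, and \cref{cocyclefreedom} collapses to
\[
  \omega'(k,q)=g_k\,\phi_k(g_q)\,\omega(k,q)\,g_{kq}^{-1}
  = \bigl(g_k\,(k\cdot g_q)\,g_{kq}^{-1}\bigr)\,\omega(k,q)
\]
for some function $q\mapsto g_q$ from $Q$ to $G$. The factor $\rho(k,q):=g_k\,(k\cdot g_q)\,g_{kq}^{-1}$ is exactly the general $2$-coboundary in $B_\phi^2(Q,G)$ as recalled in the proposition's statement; one checks (a routine verification) that any such $\rho$ is indeed a $2$-cocycle, so $B_\phi^2(Q,G)\subseteq Z_\phi^2(Q,G)$, and that the set of these coboundaries is a subgroup of $Z_\phi^2(Q,G)$ (closure under product and inverse follows from the fact that $q\mapsto g_q$ ranges over all functions $Q\to G$ and $G$ abelian makes $Z_\phi^2$ an abelian group). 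Hence $\omega\sim\omega'$ in the sense of \cref{def:eqrelop} if and only if $\omega$ and $\omega'$ differ by an element of $B_\phi^2(Q,G)$, i.e. they represent the same class in $H_\phi^2(Q,G)=Z_\phi^2(Q,G)/B_\phi^2(Q,G)$.

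Finally I would address the one genuine subtlety: the gauge freedom of \cref{equivdef} modifies $v_q$ by $v_q'=g\,v_q$ with a \emph{single} $g\in G$ applied to all $q$ simultaneously, whereas \cref{cocyclefreedom} allows an independent $g_q$ for each $q$. The resolution is that \cref{def:eqrelop} is stated with independent $g_q$ precisely because the freedom to be accounted for is larger than the per-tensor gauge of \cref{equivdef}: it also absorbs the freedom in choosing a set-theoretic lift $Q\to\{v_q\}$, which can be altered by an arbitrary $G$-valued function of $q$ without changing the symmetry action of \cref{locsym}. I would spell out that replacing $v_q\mapsto g_q v_q$ keeps \cref{locsym} valid (by $G$-injectivity, as in \cref{equivdef}, applied independently at each $q$), recompute $\omega'$ and $\phi'$ from \cref{def:phi,def:omega}, and recover \cref{cocyclefreedom}; this is essentially the computation already indicated in the sentence following \cref{cocyclefreedom}. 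The main obstacle — and the only place requiring care rather than bookkeeping — is making this last point precise: checking that \emph{every} function $q\mapsto g_q$ genuinely arises from an allowed relabeling of the virtual symmetry operators, so that the equivalence classes of pairs $(\phi,\omega)$ are in bijection with $H_\phi^2(Q,G)$ and not merely surject onto it. Once that is established, together with the cocycle and coboundary identifications above, the classification statement follows.
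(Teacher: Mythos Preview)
Your proposal is correct and follows essentially the same route as the paper: both arguments observe that for abelian $G$ the equivalence relation of \cref{def:eqrelop} collapses to $\omega'=\rho\cdot\omega$ with $\rho(k,q)=g_k\phi_k(g_q)g_{kq}^{-1}$ a $2$-coboundary, so the classes are exactly $H_\phi^2(Q,G)$. Your treatment is more thorough than the paper's brief proof—in particular, your discussion of the subtlety that \cref{equivdef} appears to use a single $g$ while \cref{cocyclefreedom} allows an independent $g_q$ per element, and your resolution via the freedom in the set-theoretic lift, makes explicit a point the paper leaves implicit.
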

\begin{proof}
If $G$ is abelian $\phi'_k|_G=  \phi_k|_G$ and
$$\omega'(k,q)= \rho(k,q)\omega'(k,q),$$
where  $\rho(k,q)=g_k \phi_k(g_q)g^{-1}_{kq}$ is a $2$-coboundary since it satisfies the $2$-cocycle. Quotienting $2$-cocycles by $2$-coboundaries we obtain the second cohomology group $H^2_\phi(Q,G)$. This group is finite due to the finiteness of $Q$ and $G$.
\end{proof}%
\begin{observation}
The class of the pair $(\phi,\omega)$ is robust under blocking since they would act equivalently on the underlying representation of the group: the tensor product representation of the blocked tensors.
\end{observation}

Let us notice that another equivalence relation has to be added in our classification:

\begin{observation}\label{obs:relabelling}
We consider two systems equivalent if their maps  $(\omega,\phi)$ are related by a relabelling of the elements of $Q$. This comes from the ambiguity the label in the elements of the group that defines the symmetry operators. 
The pair $(\phi, \omega)$ is related to $(\phi', \omega')$ by a relabelling if $\omega(q,k)= \omega'(\rho(q),\rho(k))$ and $\phi_q=\phi'_{\rho(q)}$, where $\rho \in$Aut($Q$). We notice that two system has to be consider equivalent even when $\omega$ and $\omega'$ could be inequivalent as $2$-cocycles, i.e. elements of $H^2_\phi(Q,G)$. 
\end{observation}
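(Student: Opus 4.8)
The statement is essentially the assertion that a relabelling of the elements of $Q$ induces precisely the transformation $\omega(q,k)=\omega'(\rho(q),\rho(k))$, $\phi_q=\phi'_{\rho(q)}$ on the invariants, and that systems so related describe the same physics. The plan is to realize a relabelling concretely as an automorphism $\rho\in\mathrm{Aut}(Q)$ acting on the symmetry representation, and then to read off the induced action on $(\phi,\omega)$ directly from \cref{def:phi} and \cref{def:omega}.

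First I would set $U'_q:=U_{\rho(q)}$. Since $\rho$ is a group automorphism, $U'_kU'_q=U_{\rho(k)}U_{\rho(q)}=U_{\rho(k)\rho(q)}=U_{\rho(kq)}=U'_{kq}$, so $U'$ is again a linear representation of $Q$; and because $\{U'_q\}_{q\in Q}=\{U_q\}_{q\in Q}$ as unordered families of operators, $U'$ leaves the same state $|\Psi_A\rangle$ invariant. Thus $U'$ is a legitimate global on-site symmetry to which \cref{theo:FTGinjective} applies, producing virtual operators $v'_q,w'_q$ satisfying \cref{locsym}. I would then invoke the uniqueness clause of the Fundamental Theorem together with the gauge freedom of \cref{equivdef}: since $v_{\rho(q)},w_{\rho(q)}$ already implement $U'_q=U_{\rho(q)}$ on the tensor, one may take $v'_q=v_{\rho(q)}$ and $w'_q=w_{\rho(q)}$.

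Feeding this choice into \cref{def:phi} and \cref{def:omega} and using $\rho(kq)=\rho(k)\rho(q)$ gives the one-line computations
$$\phi'_q(g)=v'_qg(v'_q)^{-1}=v_{\rho(q)}gv_{\rho(q)}^{-1}=\phi_{\rho(q)}(g),\qquad \omega'(k,q)=v'_kv'_q(v'_{kq})^{-1}=v_{\rho(k)}v_{\rho(q)}v_{\rho(k)\rho(q)}^{-1}=\omega(\rho(k),\rho(q)).$$
This is exactly the stated relabelling relation, with $\rho$ in the place of $\rho^{-1}$; this is immaterial, since $\mathrm{Aut}(Q)$ is a group and relabelling by $\rho$ or by $\rho^{-1}$ generates the same equivalence on $(\phi,\omega)$.

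Finally I would argue that the two descriptions are the same system: the PEPS tensor $A$, the state $|\Psi_A\rangle$, and its parent Hamiltonian are literally unchanged, and the symmetry constraint ``$\bigotimes_v U_q|\Psi_A\rangle=|\Psi_A\rangle$ for all $q$'' coincides, as a set of equations, with ``$\bigotimes_v U'_q|\Psi_A\rangle=|\Psi_A\rangle$ for all $q$''; hence the two lie trivially in the same phase (the constant path of \cref{theo:path}), and no physical quantity can depend on the arbitrary naming of the elements of $Q$. The point worth emphasizing — and the only genuinely conceptual step — is that this $\mathrm{Aut}(Q)$ action need not preserve the class of $\omega$ in $H^2_\phi(Q,G)$ of \cref{prop:H2}, so two cohomologically inequivalent cocycles may describe the same system. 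I therefore expect the main obstacle to be not the transformation formulas, which are immediate, but checking that the $\mathrm{Aut}(Q)$ action descends consistently through the equivalence relation of \cref{def:eqrelop} (i.e. it maps coboundaries to coboundaries and commutes with the conjugation ambiguity $v_q\mapsto g\,v_q$), so that the final classification is the double quotient of the pairs $(\phi,\omega)$ by both \cref{def:eqrelop} and $\mathrm{Aut}(Q)$.
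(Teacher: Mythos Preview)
Your derivation is correct, but note that the paper does not actually prove this statement: it is framed as an \emph{observation}, i.e.\ a remark introducing an additional equivalence relation motivated by the evident physical indistinguishability of relabelled symmetry operators. The paper simply asserts the transformation law and follows it with the illustrative example $G=Q=\mathbb{Z}_p$, where $H^2(\mathbb{Z}_p,\mathbb{Z}_p)\cong\mathbb{Z}_p$ collapses to two classes under $\mathrm{Aut}(\mathbb{Z}_p)\cong\mathbb{Z}_{p-1}$.

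What you have written is therefore not a comparison of proofs but a genuine \emph{justification} of a statement the paper leaves at the level of a definition. Your argument --- defining $U'_q=U_{\rho(q)}$, checking it is again a linear representation of $Q$ with the same invariant state, choosing $v'_q=v_{\rho(q)}$ via the uniqueness in the Fundamental Theorem, and reading off $\phi'_q=\phi_{\rho(q)}$, $\omega'(k,q)=\omega(\rho(k),\rho(q))$ --- is exactly the right way to make the observation rigorous. Your remark that the $\rho$ versus $\rho^{-1}$ discrepancy is harmless (since $\mathrm{Aut}(Q)$ is a group) is correct, and your closing comment that one should check the $\mathrm{Aut}(Q)$ action is compatible with the gauge equivalence of \cref{def:eqrelop} is a valid point the paper does not address explicitly; it follows immediately since conjugating $v_q\mapsto g_q v_q$ and then relabelling is the same as relabelling and then conjugating with $g'_q=g_{\rho(q)}$.
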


One example is given by $G=Q=\mathbb{Z}_p$ with $p$ prime because $H^2(\mathbb{Z}_p,\mathbb{Z}_p)\cong \mathbb{Z}_p$ but incorporating the relation of Aut($\mathbb{Z}_p$)$=\mathbb{Z}_{p-1}$ on $\omega$ we only find two distinct classes.
Some remarks are in order:

\begin{remark} \label{SPT label}
The fact that the operators act in a tensor product form and that they are defined up to the phase factor mention in \cref{Glemma} allows them to be a projective representation of $Q$. This would assign a discrete label, when considering the freedom of the phase factor, from $H^2(Q,U(1))$ in each direction. We point out that this label is not stable under blocking, that is why we dropped it in \cref{Glemma}, so we will not consider it in this thesis (besides it could matter in finite size systems). 
\end{remark}

\begin{remark}
Since the maps $\omega$ and $\phi$ are the same using $v_q$ or $w_q$ in their definitions and also by \cref{SPT label}  we will write w.l.o.g. the following
 \begin{equation} \label{locsym2}
    \begin{tikzpicture}[baseline=-1mm]
      \pic at (0,0,0) {3dpeps};
      \draw (0,0,0) -- (0,0.35,0);
      \filldraw[draw=black, fill=purple] (0,0.2,0) circle (0.06);
      \node[anchor=east] at (0,0.3,0) {$U_q$};
    \end{tikzpicture} =
    \begin{tikzpicture}
      \draw (-0.7,0,0) -- (0.7,0,0);
      \draw (0,0,-0.9) -- (0,0,0.9);
      \pic at (0,0,0) {3dpeps};
      \filldraw [draw=black, fill=red] (-0.5,0,0) circle (0.05);
      \filldraw [draw=black, fill=red] (0.5,0,0) circle (0.05);
      \node[anchor=south] at (-0.5,0,0) {$\myinv{v_q}$};
      \node[anchor=north] at (0.5,0,0) {$v_q$};
      \filldraw [draw=black, fill=red] (0,0,-0.6) circle (0.05);
      \filldraw [draw=black, fill=red] (0,0,0.6) circle (0.05);
       \node[anchor=south] at (0,0,-0.6) {$v_q$};
      \node[anchor=north] at (0,0,0.6) {$\myinv{v_q}$};
    \end{tikzpicture}
    \; \forall q\in Q.
  \end{equation}
\end{remark}

\begin{remark} \label{Symform}
Consider a $G$-injective tensor decomposed as a product of an invertible matrix $Y$ and the projector onto the $G$-symmetric space, see Eq.\eqref{Ginje}, $A=Y\mathcal{P}_G$. The symmetry operator acts as follows:
\begin{align*}
U_q (Y\mathcal{P}_G)= (Y\mathcal{P}_G) (v_q\otimes v_q\otimes v^{-1}_q \otimes v^{-1}_q) &= Y  (v_q\otimes v_q\otimes v^{-1}_q \otimes v^{-1}_q)[\phi^{-1}_q\otimes \phi^{-1}_q\otimes \phi_q \otimes \phi_q ] (\mathcal{P}_G)\\
&=Y  (v_q\otimes v_q\otimes v^{-1}_q \otimes v^{-1}_q) \mathcal{P}_G.
\end{align*}
That is, the symmetry operator $U_q$ is projected on the symmetric subspace as $v_q\otimes v_q\otimes v^{-1}_q \otimes v^{-1}_q$ up to an invertible matrix.  If $A=\mathcal{P}_G$ the physical operator $U_q$ is projected to $v_q\otimes v_q\otimes v^{-1}_q \otimes v^{-1}_q$ when considering the subspace generated by the tensor:
\end{remark}

\begin{equation}
\sum_{g\in G}
     \begin{tikzpicture}[scale=1.5]
          \begin{scope}[canvas is zx plane at y=0.32]
    \filldraw [draw=black, fill=purple] (0,0) circle (0.12);
    \end{scope} 
    \node[anchor=east] at (0,0.32,0) {$U_q$};
       \draw (0.1,0,0)--(0.1,0.4,0);
	\draw (-0.1,0,0)--(-0.1,0.4,0);
	\draw (0,0,0.1)--(0,0.4,0.1);
	\draw (0,0,-0.1)--(0,0.4,-0.1);
  \filldraw  (0.1,0.17,0) circle (0.03);
  \filldraw  (-0.1,0.17,0) circle (0.03);
  \filldraw  (0,0.16,0.1) circle (0.03);
  \filldraw  (0,0.18,-0.1) circle (0.03);
    \begin{scope}[canvas is zx plane at y=0]
      \draw (0.1,0)--(0.6,0);
        \draw (-0.6,0)--(-0.1,0);
      \draw (0,0.1)--(0,0.45);
       \draw (0,-0.45)--(0,-0.1);
    \end{scope} 
\end{tikzpicture}
=\sum_{g\in G}
     \begin{tikzpicture}[scale=1.5]
       \draw (0.1,0,0)--(0.1,0.4,0);
	\draw (-0.1,0,0)--(-0.1,0.4,0);
	\draw (0,0,0.1)--(0,0.4,0.1);
	\draw (0,0,-0.1)--(0,0.4,-0.1);
  \filldraw  (0.1,0.25,0) circle (0.03);
  \filldraw  (-0.1,0.25,0) circle (0.03);
  \filldraw  (0,0.24,0.1) circle (0.03);
  \filldraw  (0,0.26,-0.1) circle (0.03);
    \begin{scope}[canvas is zx plane at y=0]
      \draw (0.1,0)--(0.6,0);
        \draw (-0.6,0)--(-0.1,0);
      \draw (0,0.1)--(0,0.45);
       \draw (0,-0.45)--(0,-0.1);
    \end{scope} 
  \filldraw [draw=black, fill=red] (0.3,0,0) circle (0.04);
  \filldraw [draw=black, fill=red] (-0.3,0,0) circle (0.04);
  \filldraw [draw=black, fill=red] (0,0,0.4) circle (0.04);
  \filldraw [draw=black, fill=red] (0,0,-0.4) circle (0.04);
\end{tikzpicture}
=\sum_{g\in G}
     \begin{tikzpicture}[scale=1.5]
       \draw (0.1,0,0)--(0.1,0.4,0);
	\draw (-0.1,0,0)--(-0.1,0.4,0);
	\draw (0,0,0.1)--(0,0.4,0.1);
	\draw (0,0,-0.1)--(0,0.4,-0.1);
  \filldraw  (0.1,0.16,0) circle (0.03);
  \filldraw  (-0.1,0.16,0) circle (0.03);
  \filldraw  (0,0.15,0.1) circle (0.03);
  \filldraw  (0,0.17,-0.1) circle (0.03);
    \begin{scope}[canvas is zx plane at y=0]
      \draw (0.1,0)--(0.6,0);
        \draw (-0.6,0)--(-0.1,0);
      \draw (0,0.1)--(0,0.45);
       \draw (0,-0.45)--(0,-0.1);
    \end{scope} 
  \filldraw [draw=black, fill=red] (0.1,0.3,0) circle (0.035);
  \filldraw [draw=black, fill=red] (-0.1,0.3,0) circle (0.035);
  \filldraw [draw=black, fill=red] (0,0.29,0.1) circle (0.035);
  \filldraw [draw=black, fill=red] (0,0.31,-0.1) circle (0.035);
\end{tikzpicture},
\end{equation}
where the black dots represent the elements $g\in G$ and the red circles the matrices $v_q$.

\subsection{Robustness of the class under smooth deformation}\label{sec:deform}
In this subsection we perturbe $G$-isometric PEPS as it was mentioned already in the introduction of this chapter. We consider the so-called natural perturbations of PEPS \cite{Cirac13} where local operators $R(\epsilon)$ satisfying ${\rm lim}_{\epsilon \to 0} R(\epsilon)=\id$ are applied to the state $\ket{\Psi_A(\epsilon)}=R^{\otimes n}(\epsilon)\ket{\Psi_A}$. 
These transformations correspond to smooth perturbation of the parent Hamiltonian since $R(\epsilon)$ is invertible for small $\epsilon$. After some blocking, depending on the support of $R(\epsilon)$, we can consider these transformations as on-site operations, that is
$$\ket{\Psi_A(\epsilon)}=\ket{\Psi_{R(\epsilon)A}}= \ket{\Psi_{A(\epsilon)}},$$
where we have denoted $A(\epsilon)=R(\epsilon)A$. The required symmetry condition, for $\epsilon$ in some neighbourhood of $0$, can be imposed mainly in two ways:
\begin{enumerate}
\item[(Strong)] The symmetry operators commute with the perturbation: $[U_q,R(\epsilon)]=0$.  Then $U_q\ket{\Psi_{A(\epsilon)}}= \ket{\Psi_{A(\epsilon)}}$ since $U_q{A(\epsilon)}={A(\epsilon)}(v_q\otimes v_q\otimes v^{-1}_q \otimes v^{-1}_q)$, the class of $(\phi, \omega)$ does not change.
\item[(Weak)] The new PEPS satisfies $U^{\otimes n}_q \ket{\Psi_{A(\epsilon)}}= \ket{\Psi_{A(\epsilon)}}$. Then 
$$U_q{A(\epsilon)}={A(\epsilon)} (w_q(\epsilon)\otimes w_q(\epsilon)\otimes w_q(\epsilon)^{-1} \otimes w_q(\epsilon)^{-1}),$$
where $w_q(\epsilon)$ defines a $\tilde{\phi}_q^{[\epsilon]}\in Aut(G)$. Since $A$ is the projector onto the $G$-isometric subspace by \cref{Symform} we can write:
\begin{equation}\label{eq:RUrelw}
 [R(\epsilon)^{-1} U_q R(\epsilon)] A= [ w_q(\epsilon)\otimes w_q(\epsilon)\otimes w_q(\epsilon)^{-1} \otimes w_q(\epsilon)^{-1}] A,
 \end{equation}
with $U_q= v_q\otimes v_q\otimes v_q^{-1} \otimes v_q^{-1}$ which implies that $w_q(\epsilon)\otimes w_q(\epsilon)\otimes w_q(\epsilon)^{-1} \otimes w_q(\epsilon)^{-1}$ is continuous. We can invert the operators in the RHS of Eq.\eqref{eq:RUrelw} and using \cref{Glemma} the following holds:
\begin{equation*}
[w_q(\epsilon)^{-1}\otimes w_q(\epsilon)^{-1}\otimes w_q(\epsilon)\otimes w_q(\epsilon)]\cdot[R(\epsilon)^{-1} (v_q\otimes v_q\otimes v_q^{-1} \otimes v_q^{-1}) R(\epsilon)] = g\otimes g \otimes g^{-1}\otimes g^{-1}
\end{equation*}
for some $g\in G$. Therefore
\begin{equation}\label{contw}
[R(\epsilon)^{-1} (v_q\otimes v_q\otimes v_q^{-1} \otimes v_q^{-1}) R(\epsilon)] =g_qw_q(\epsilon)\otimes g_q w_q(\epsilon)\otimes  g_q^{-1} w_q(\epsilon)^{-1}\otimes g_q^{-1}w_q(\epsilon)^{-1},
\end{equation}
where we have labelled the element of $G$ with the subindex $q$. Since $R(\epsilon)$ and $R(\epsilon)^{-1}$ converge to $\id$ as $\epsilon \rightarrow 0$, the product also converges so ${\rm lim}_{\epsilon \to 0} w_q(\epsilon)=g v_q \sim v_q$ and ${\rm lim}_{\epsilon \to 0} \tilde{\phi}_q^{[\epsilon]} \sim \phi_q$.
\end{enumerate}

We notice that the previous analysis is also valid for $A$ equal to the projector onto the $G$-injective subspace. But, recalling the introduction of this chapter, the topological phase is only well defined for $G$-isometric PEPS since the Hamiltonian is commuting and then gapped. \\

We now suppose that $R(\epsilon)$ is continuous in some neighbourhood $0<\epsilon<\epsilon_0$. 
By \cref{contw} this implies that $w_q(\epsilon)\otimes w_q(\epsilon)\otimes w_q(\epsilon)^{-1} \otimes w_q(\epsilon)^{-1}$ is continuous. Therefore, (contracting indices keeps the continuity by linearity) the functions $(\tilde{\phi}_q^{[\epsilon]}(h))_{m,n}=\delta_{m,\tilde{\phi}_q^{[\epsilon]}(h)n}$ are continuous in $\epsilon$ for fixed $m,n,h\in G$. Since the previous delta function is zero or one for a given $m,n,h\in G$ for all $\epsilon \in (0,\epsilon_0)$ and ${\rm lim}_{\epsilon \to 0} \tilde{\phi}_q^{[\epsilon]} \sim \phi_q$, continuity implies that: $\tilde{\phi}_q^{[\epsilon]} \sim \phi_q,\; \forall \epsilon$. This means that continuous natural perturbations on $G$-isometric PEPS do not change the class of $\phi$, the permutation pattern of the anyons.\\

It is easy to see that Eq.\eqref{contw} is valid for any element of $Q$, in particular for $q$,$k$ and $kq$. We multiply the corresponding expressions, but we invert the one of $kq$, obtaining the following:
\begin{multline*}
R(\epsilon)^{-1} [\omega(k,q)^{\otimes 2} \otimes {\omega(k,q)^{-1}}^{\otimes 2} ] R(\epsilon) = \\
(g_k \tilde{\phi}_k^{[\epsilon]}(g_q)\tilde{\omega}^{[\epsilon]}(k,q)g^{-1}_{kq})^{\otimes 2} \otimes
(g^{-1}_k \tilde{\phi}_k^{[\epsilon]}(g^{-1}_q){\tilde{\omega}^{[\epsilon]}(k,q)}^{-1}g_{kq})^{\otimes 2}.
\end{multline*}
We fix the indices by applying $\bra{n,n,m,m}\cdot \ket{m,m,n,n}$, to both parts of this identity, where $m,n\in G$. We obtain
\begin{multline*}
\bra{n,n,m,m} R(\epsilon)^{-1} [\omega(k,q)^{\otimes 2} \otimes {\omega(k,q)^{-1}}^{\otimes 2} ] R(\epsilon) \ket{m,m,n,n} = \\
\bra{n}g_k \tilde{\phi}_k^{[\epsilon]}(g_q)\tilde{\omega}^{[\epsilon]}(k,q)g^{-1}_{kq}  \ket{m} .
\end{multline*}
Therefore, using the same arguments we used for the case of $\phi$, $\tilde{\omega}^{[\epsilon]}\sim \omega, \; \forall \epsilon$.

This allows us to assert that local continuous transformations preserve the class of $\phi$  and $\omega$ and then \cref{theo:class} is proven.

\section{Proof of \cref{theo:path} }\label{sec:finitepath}

In this section we will connect two $G$-injective PEPS in a finite size system with a smooth path which preserves the symmetry. 
We will see that they can be connected if the two pairs of maps $( \phi^0_q, \omega^0(k,q))$ , $(\phi^1_q, \omega^1(k,q))$ are in the same class, see \cref{def:eqrelop}. Moreover, in \cref{sec:gauging} we show how two PEPS with equivalents ($\phi,\omega$) can be mapped to the same  $E$-injective PEPS. \\

Since in this construction we do not consider the gap of the path we can restrict the form of the $G$-injective tensors. In particular, we consider two $G$-injective PEPS with tensors of the form $\mathcal{P}^0_G$ and $\mathcal{P}^1_G$, the projector into the symetric subspace -see \cref{Symform}-, with representations $u^0_g$ and $u^1_g$ respectively. 

We also suppose that these PEPS have a symmetry realized by the virtual operators $v^0_q$ and $v^1_q$ respectively. Let us define another semi-regular representation:
$$u_g \equiv u^0_g \oplus u^1_g,$$ 
with which we construct the tensor:
$$A(\lambda)=\frac{M(\lambda)}{|G|} \sum_g u_g \otimes u_g \otimes u^{-1}_g\otimes u^{-1}_g; \; \lambda \in [0,1],$$
where $ M(\lambda)= \begin{bmatrix}
    \lambda \id_{D^0} & 0 \\
    0 & (1-\lambda) \id_{D^1}
  \end{bmatrix} ^{\otimes 4}$  is an invertible matrix for $\lambda \in (0,1)$. Then, it is clear that $A(\lambda)$ is a $G$-injective tensor whose extreme points are the two symmetric $G$-injective tensors that we have considered before. 
Consider now the following operator:
$$v_q\equiv v^0_q\oplus v^1_q,$$
which commutes with $M(\lambda)$. It defines the symmetry operator:
$$U_q=v_q\otimes  v_q\otimes v^{-1}_q \otimes v^{-1}_q= U^0_q\oplus U^1_q\oplus U^{\rm path}_q.$$  
It is easy to see that $U^{\rm path}_q$ is only linear if  $\omega^0(k,q)= \omega^1(k,q)$. The operator $U_q$ acts over the tensor $A(\lambda)$ as follows:
$$U_q  A(\lambda)= A'(\lambda)(v_q\otimes  v_q\otimes v^{-1}_q \otimes v^{-1}_q).$$
The tensor $A'(\lambda)$ is the one in which the representation $u_g$ is modified as follows:  

\begin{align}
u_g \rightarrow v_q u_g v^{-1}_q & = [v^0_q\oplus v^1_q] \left(  u^0_g \oplus  u^1_g \right )[v^0_q\oplus v^1_q]^{-1}  \notag \\
& =   u^0_{\phi^0_q(g)} \oplus u^1_{\phi^1_q(g)}\notag.
\end{align}
Therefore $U_q$ is a symmetry if and only if  $A'(\lambda)=A(\lambda)$ and this only happens if $\phi^0_q=\phi^1_q \; \forall q\in Q$.
The condition of linear representation on $U_q$ enforces the map
$$ \omega(k,q)\equiv v_k v_q v^{-1}_{kq} =  v^0_k v^0_q {v^0}^{-1}_{kq} \oplus v^1_k v^1_q {v^1}^{-1}_{kq}=\omega^0(k,q) \oplus \omega^1(k,q)$$
to belong to $G$ and this only holds if $\omega^0 = \omega^1$. We notice that all the above identitites are satisfied in the case  $(\omega^0(k,q), \phi^0_k)\sim (\omega^1(k,q), \phi^1_k)$ choosing the proper gauge for the symmetry operators.

\subsection{Gauging the global symmetry}\label{sec:gauging}
The mathematical procedure to promote a global symmetry into a local (gauge) symmetry is called gauging. On a lattice, the procedure adds new terms to the Hamiltonian of the system which allows to change the character of the symmetry operators from global to local \cite{Barkeshli14}. Gauging can also be formulated at the level of states, in particular in PEPS \cite{Haegeman14, Williamson16}, where the procedure connects SPT phases and topologically ordered phases.
Also in Ref. \cite{Williamson17} the authors generalized the procedure to map a SET phase to a purely topological ordered phase. This is done by modifying the local tensors in such a way that the physical global symmetry becomes a local symmetry and also a virtual invariance. We will follow Ref. \cite{Haegeman14} to transform a $G$-injective tensor with a symmetry characterized by $(\phi,\omega)$ into an $E$-invariant tensor, where $E$ is the group extension of $G$ by $Q$ characterized by $(\phi,\omega)$.  We consider the $G$-injective tensor $A=A(u_g\otimes u_g\otimes u^{-1}_g\otimes u^{-1}_g)$ with an on-site global symmetry given by $Q$: $U_qA=A(v_q\otimes v_q\otimes v^{-1}_q\otimes v^{-1}_q)$.
Let us construct the tensor $B$ from $A$ as follows:
$$A\to B=\sum_{q\in Q} U_q A\otimes |q,q)(q,q|,$$
where $ |q)\in \mathbb{C}[Q]$ are the new virtual d.o.f. that we add. Notice that the virtual space has been enlarged from a $D$ dimensional space to a $D\times |Q|$ dimensional one. The tensor $B$ has the following virtual symmetries:
$$B \left([u_g\otimes \id_{|Q|}]^{\otimes 2} \otimes [\id_D \otimes \id_{|Q|}]^{\otimes 2} \right)=B \left([\id_D \otimes \id_{|Q|}]^{\otimes 2} \otimes [u_g\otimes \id_{|Q|}]^{\otimes 2} \right),$$
$$B \left( [v_q\otimes R_q]^{\otimes 2} \otimes [\id_D \otimes \id_{|Q|}]^{\otimes 2} \right) = B \left( [\id_D \otimes \id_{|Q|}]^{\otimes 2} \otimes [v_q\otimes R_q]^{\otimes 2} \right).$$
This means that the set $E\equiv  G\times Q$ is a gauge symmetry of $B$. 
Moreover $E=G\times Q$ is a group. We identify element $(g,k)$ with the matrix $[u_g\otimes \id_{|Q|}]\cdot[v_k\otimes R_k]$. This is well defined since $u$ and $R$ are faithful representations of $G$ and $Q$ respectively. Then
\begin{align}
(g,k)\cdot (h,q)&=[u_gv_k\otimes R_k]\cdot[u_hv_q\otimes R_q]=[u_gv_ku_hv_q\otimes R_kR_q] \notag \\
&= [u_gv_ku_hv^{-1}_k (v_kv_qv^{-1}_{kq})v_{kq}\otimes R_{kq}] \notag \\
&=(g\phi_k(h)\omega(k,q),kq)\in E. \notag
\end{align}
It is straightforward to show the associativity of this multiplication rule (using that $\omega$ is a $2$-cocycle) and also that the inverse is
$$(g,k)^{-1}=(\phi_{k^{-1}}[g^{-1}\omega^{-1}(k,k^{-1})],k^{-1})\in E.$$
Then, $B$ is an $E$-invariant tensor. The group $G\cong \{(g,e); g\in G\}$ is a normal subgroup of $E$ because $(h,k)(g,e)(h,k)^{-1}=(\phi_k(g),e)$. 
It remains to show that the virtual operators are a semiregular representation of $E$. This is the case if $u_g=L_g\otimes \id_p$ for some $p\in \mathbb{N}$ since  
$$\frac{1}{|E|}\sum_{\epsilon\in E}\bar{\chi}_\alpha(\epsilon)\tr{[u_gv_k\otimes R_k]}= d_\alpha p\neq0,$$
which means that $[u_gv_k\otimes R_k]$ contains all the irreps of $E$. In the case that the constructed representation of $E$ is faithful, the semiregularity is obtained after a finite number of blocking iterations. The $E$-injectivity and the locality of the parent Hamiltonian is proven in \cite{Haegeman14}. Also note the following action on $B$
\begin{align*}
U_kB&=\sum_{q\in Q} U_kq A\otimes |q,q)(q,q|=\sum_{q\in Q} U_q A\otimes |k^{-1}q,k^{-1}q)(k^{-1}q,k^{-1}q|\\
&= B \left( [\id_D \otimes L^{-1}_q]^{\otimes 2} \otimes [\id_D \otimes L_q]^{\otimes 2} \right).
\end{align*}
This implies that the previous action is a global symmetry but, it is disconnected from the topological part since
$$\left[ ( \id_D \otimes L_q), (u_gv_k\otimes R_k)\right]=0, \; {\rm for \; all} \; q,k\in Q \; {\rm and} \; g\in G.$$
Ref. \cite{Haegeman14} also shows that inserting additional tensors in the bonds connecting the $B$ tensors, the previous global symmetry can be mapped to a local symmetry.

With this procedure we have materialized in a new tensor the extension group associated with the global symmetry. 
That is, we have carried out a transformation from a tensor that describes a SET phase to a purely topologically ordered tensor.  Concretely from the quantum double model of $g$, $\mathcal {D}(G)$, plus a global symmetry of the group $Q$ charaterized by $(\phi,\omega)$, to $\mathcal {D}(E)$. 

It is of particular interest that equivalent SET phases are gauged into the same topological ordered phase. This was proposed in Ref.\cite{Barkeshli14} in the abstract language of unitary modular tensor categories. 

\section{Symmetry action over the anyons and ground subspace}\label{sect:anyons}

In this section we focus on $G$-isometric PEPS which allow to construct a gapped Hamiltonian in the same phase as the $\mathcal {D}(G)$, i.e. with the same GS topological degeneracy and same topological excitations \cite{Schuch10}.
We characterize the action of the symmetry, via the maps $\phi$ and $\omega$, over the anyons and ground subspace. The results are stated in the following propositions:

 \begin{tcolorbox}
\begin{proposition}[Permutation]\label{prop:perm}
The map $\phi$ describes the permutation of the anyons and ground states. Given the characterization of both in terms of a pair $([g],\alpha)$, where $[g]$ is a conjugacy class and $\alpha$ is an irrep of the normalizer of $[g]$, see \cref{sec:introGinj}, the global symmetry has the following effect:
$$ \left([g],\alpha \right) \stackrel{U_q^{\otimes n}}{\longrightarrow} \left([\phi_q(g)],\alpha^{[q]} \right),$$
where $\alpha^{[q]}$ is the label associated with the irrep $\pi_\alpha \circ \phi_q$ of the group $N_{\phi_q(g)}$.
\end{proposition}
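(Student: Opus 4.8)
The plan is to show that the anyon data in these PEPS is encoded in virtual string operators (loop configurations for fluxes, point operators for charges, and their combinations for dyons, as recalled in \cref{sec:introGinj}), and that the physical symmetry $U_q^{\otimes n}$, when pushed through the tensors via \cref{locsym2}, conjugates every such virtual operator by the representation $v_q$ of the gauge freedom. Concretely, for a flux labelled by a conjugacy class $[g]$, the associated virtual operator is $\bigotimes_{i\in\gamma}(u_g)^{m_i}$; applying $U_q^{\otimes n}$ and using that $U_qA=A(v_q\otimes v_q\otimes v_q^{-1}\otimes v_q^{-1})$ repeatedly along the string turns $u_g$ into $v_qu_gv_q^{-1}=u_{\phi_q(g)}$ by \cref{def:phi} (after using the uniqueness/consistency of the virtual representation across blocking). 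Since the physical operator is a genuine symmetry of the underlying $G$-isometric PEPS, the resulting state is again a ground state with a flux string, now labelled by $[\phi_q(g)]$. The first step is therefore to set up this conjugation carefully on a single string and check it is independent of the path $\gamma$ (which follows from the $G$-invariance moves of \cref{eq:Gmoves}, noting $\phi_q$ is a well-defined automorphism) and of the representative $g\in[g]$ (which follows because $\phi_q$ maps conjugacy classes to conjugacy classes: $\phi_q(hgh^{-1})=\phi_q(h)\phi_q(g)\phi_q(h)^{-1}$).

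Next I would treat the charge sector. The virtual operator for a pair of charges of type $\sigma$ is $\Pi_\sigma=\sum_{g,h}\chi_\sigma(h^{-1}g)\ket{g}\bra{g}\otimes\ket{h}\bra{h}$. Conjugating the single-edge pieces by $v_q$ (acting as $u_g\mapsto u_{\phi_q(g)}$, equivalently permuting the basis $\ket{g}\mapsto\ket{\phi_q(g)}$ when $u_g=L_g$) gives $\sum_{g,h}\chi_\sigma(h^{-1}g)\ket{\phi_q(g)}\bra{\phi_q(g)}\otimes\ket{\phi_q(h)}\bra{\phi_q(h)} = \sum_{g',h'}\chi_\sigma(\phi_q^{-1}(h'^{-1}g'))\ket{g'}\bra{g'}\otimes\ket{h'}\bra{h'}$, which is exactly $\Pi_{\sigma'}$ for the irrep $\sigma'$ whose character is $\chi_\sigma\circ\phi_q^{-1}$, i.e. $\pi_{\sigma}\circ\phi_q^{-1}$. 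Since $\phi_e=\mathrm{id}$ for charges ($g=e$), this reproduces the claimed $\alpha\mapsto\alpha^{[q]}$ with $\pi_{\alpha^{[q]}}=\pi_\alpha\circ\phi_q$ (modulo the direction convention $\phi_q$ vs.\ $\phi_q^{-1}$, which I would fix by comparing with the ground-state basis labelling $\ket{\Psi_A(g,h)}$ of \cref{eq:gsubspace}, where the symmetry sends $(g,h)\mapsto(\phi_q(g),\phi_q(h))$ directly).

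For the general dyon $([g],\alpha)$ with $\alpha$ an irrep of the normalizer $N_{[g]}$, I would combine the two computations: the dyon's virtual operator (the explicit form recalled from \cref{ap:dyonic}, a chain of $L_g$ capped by the operators $D^w_\alpha$ of \cref{dyonend}) gets its flux chain conjugated to $L_{\phi_q(g)}$ and its charge caps conjugated by $v_q$. One checks that $v_q$ maps $N_g$ onto $N_{\phi_q(g)}$ (since $\phi_q$ is an automorphism, $n g=gn \iff \phi_q(n)\phi_q(g)=\phi_q(g)\phi_q(n)$), so the isomorphism class $N_{[g]}\cong N_{[\phi_q(g)]}$ is respected and $\alpha$ is transported to the irrep $\pi_\alpha\circ\phi_q$ of $N_{\phi_q(g)}$, with the coset representatives $k_j$ relabelled accordingly. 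Finally I would invoke the correspondence between anyons and ground states (the MES basis $\ket{\Psi_A([g],\alpha)}=\sum_{n\in N_g}\chi_\alpha(n)\ket{\Psi_A(g,n)}$) to conclude the same permutation acts on the ground subspace: applying $U_q^{\otimes n}$ inside the sum sends each $\ket{\Psi_A(g,n)}$ to $\ket{\Psi_A(\phi_q(g),\phi_q(n))}$, and re-indexing $n'=\phi_q(n)$ produces $\sum_{n'\in N_{\phi_q(g)}}\chi_\alpha(\phi_q^{-1}(n'))\ket{\Psi_A(\phi_q(g),n')}=\ket{\Psi_A([\phi_q(g)],\alpha^{[q]})}$.

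The main obstacle I anticipate is bookkeeping consistency rather than conceptual difficulty: one must make sure that the virtual operator $v_q$ appearing on each edge in \cref{locsym2} really does conjugate the anyon strings uniformly and compatibly under blocking (the isomorphism-is-a-gauge result of \cref{sec:isols} and the blocking-robustness observation after \cref{prop:H2} are the relevant inputs), and one must pin down the direction convention ($\phi_q$ versus $\phi_q^{-1}$, and whether $\omega$ intervenes) by a careful check on a closed non-contractible loop so that the composite $U_k^{\otimes n}U_q^{\otimes n}=U_{kq}^{\otimes n}$ acts consistently with $\phi_k\circ\phi_q=\tau_{\omega(k,q)}\circ\phi_{kq}$ — note the inner automorphism $\tau_{\omega(k,q)}$ does not change conjugacy classes, so at the level of anyon labels the action factors through $\psi:Q\to\mathrm{Out}(G)$, which is exactly what makes the permutation a genuine $Q$-action. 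I would address this by doing the non-contractible-loop check explicitly once and then stating that all other cases follow by the same mechanism.
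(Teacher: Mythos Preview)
Your proposal is correct and follows essentially the same case-by-case structure as the paper: fluxes, charges, dyons, and then ground states via the MES basis, all reduced to conjugation of the virtual operators by $v_q$ using \cref{locsym2}.

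The one methodological difference worth noting is in the charge sector. You directly conjugate $\Pi_\sigma$ and read off the new character $\chi_\sigma\circ\phi_q^{-1}$ from the transformed operator. The paper instead takes a more operational route: it defines the transformed charge $\Pi_\sigma^{[q]}$ and then probes it by braiding with a test flux $p$, obtaining $\chi_\sigma(\phi_q(p))$, and argues that since charge type is \emph{defined} by its braiding behaviour with fluxes, this identifies the new type as $\sigma^{[q]}$ with $\chi_{\sigma^{[q]}}(p)=\chi_\sigma(\phi_q(p))$. Your approach is algebraically cleaner but leans on the explicit $L_g$-basis form of the charge operator; the paper's approach is more physical and makes manifest that the identification is basis-independent (it also handles the non-abelian case via the measured probability $|\chi_\sigma(\phi_q(p))/d_\sigma|^2$). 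Your anticipated bookkeeping concern about $\phi_q$ versus $\phi_q^{-1}$ and the role of $\omega$ is exactly right and is resolved in the paper just as you suggest: the inner automorphism $\tau_{\omega(k,q)}$ drops out at the level of conjugacy classes, so the action on anyon labels factors through $\psi:Q\to\mathrm{Out}(G)$.
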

\end{tcolorbox}

 \begin{tcolorbox}
\begin{proposition}[Symmetry Fractionalization]\label{prop:sf}
The $2$-cocycle $\omega$ characterizes the projective action of the symmetry of $Q$ on the charges. The different projective actions, given $\phi$, correspond to $H^2_\phi(Q,G)$. This projective action on a charge is equivalent to the braiding of the charge with the flux corresponding to the element $\omega(q,k)\in G$.
\end{proposition}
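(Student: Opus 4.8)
The plan is to prove \cref{prop:sf} by tracing how the global symmetry acts on the virtual operator $\Pi_\sigma$ that creates a charge--anticharge pair, exactly as was done for the braiding calculation in \cref{subsec:braiding}. First I would recall that, by \cref{locsym2}, applying $U_q^{\otimes n}$ to the $G$-isometric PEPS transforms each tensor by conjugating the virtual legs with $v_q$. When we have a charge pair present, given by $\Pi_\sigma = \sum_{g,h\in G}\chi_\sigma(h^{-1}g)\ket{g}\bra{g}\otimes\ket{h}\bra{h}$ inserted on two virtual edges, the symmetry action pushes a $v_q^{-1}$ and a $v_q$ onto each of the two edges carrying the charge operator. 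Thus the net effect on $\Pi_\sigma$ is the conjugation $\Pi_\sigma \mapsto (v_q^{-1}\otimes v_q^{-1})\,\Pi_\sigma\,(v_q\otimes v_q)$ (with an analogous contribution from the $w_q$ operators on the other pair of legs, which by \cref{Glemma} implement the same automorphism $\phi_q$). This already shows the symmetry acts on the internal (charge) space of the pair, and I would compute this conjugation explicitly using $v_q \ket{g} = \ket{\phi_q(g)}$-type relations combined with the definition $\phi_q(g)=v_qgv_q^{-1}$; the key point is that the scalar $\chi_\sigma$ gets composed with $\phi_q$, which (together with \cref{prop:perm}) accounts for the permutation $\sigma \mapsto \sigma^{[q]}$, while the residual non-trivial transformation on the internal state is a phase/matrix depending on $\omega$.

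The core computation is to iterate this for a product of two symmetries $U_kU_q = U_{kq}$. At the virtual level $v_kv_q = \omega(k,q)v_{kq}$ by \cref{def:omega}, so the composition of the two conjugation actions on the charge internal space differs from the single action of $U_{kq}$ precisely by conjugation with the group element $\omega(k,q)\in G$ acting on each of the charge-carrying edges. I would then argue that conjugating a charge operator $\Pi_\sigma$ (restricted to one end of the string, $C_{\sigma,h}$) by an element $m\in G$ on its virtual support is, by the braiding formula \cref{braidcwf} together with \cref{braid-result}, exactly the operation implemented by braiding that charge with the flux labelled by $m$. Concretely, \cref{braidcwf} shows $B^{[\sigma]}_g(\Pi_\sigma)$ is the conjugation of the charge operator by $L_g$; specializing $g=\omega(k,q)$ identifies the defect in the projective composition law of the symmetry action with the braiding of the charge around the flux $\omega(k,q)$. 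This is the content of the second sentence of the Proposition.

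For the classification statement — that the inequivalent projective actions, for fixed $\phi$, are labelled by $H^2_\phi(Q,G)$ — I would invoke \cref{prop:H2} directly: the $2$-cocycle $\omega$ is defined up to the $2$-coboundary freedom coming from the gauge redundancy $v_q \mapsto g_q v_q$ described in \cref{equivdef} and \cref{def:eqrelop}, and this freedom acts on the charge pair by conjugation with $g_q$ on the internal space, i.e. by an overall (trivial) rebasing of the ends of the string. Hence two symmetry actions that differ by a $2$-coboundary induce the same physical projective action on each individual charge, and the classification reduces to the cohomology group $H^2_\phi(Q,G)$, which is finite since $Q,G$ are finite. I would also remark (as in \cref{SPT label}) that the $U(1)$-phase ambiguity in the $v_q$ could in principle contribute an extra $H^2(Q,U(1))$ label on the abelian charges, but this is unstable under blocking and is discarded here.

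The main obstacle I expect is bookkeeping: carefully separating, in the conjugation $(v_q^{-1}\otimes v_q^{-1})\Pi_\sigma(v_q\otimes v_q)$, the piece that is absorbed into the relabelling $\sigma\mapsto\sigma^{[q]}$ and $[g]\mapsto[\phi_q(g)]$ (the permutation part already handled by \cref{prop:perm}) from the genuinely projective residue that depends only on $\omega$ — and doing so uniformly whether $\sigma$ is one-dimensional (so $\Pi_\sigma$ factorizes as $C_\sigma\otimes\bar C_\sigma$) or higher-dimensional (where $\Pi_\sigma = \sum_h C_{\sigma,h}\otimes\bar C_{\sigma,h}$ does not factorize and one must work with the individual string ends $C_{\sigma,h}$, matching the conventions of \cref{braidcwf}). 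A secondary subtlety is checking consistency with the cocycle condition \cref{eq:2cocyclecond}: the associator of the triple composition $U_k U_q U_p$ must match the associator of the three successive braidings, which is automatic from $\omega$ being a $2$-cocycle but should be noted so that the projective action is well-defined on the anyon and not merely on a choice of string endpoint.
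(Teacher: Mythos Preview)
Your proposal is correct and follows essentially the same approach as the paper: both define the virtual conjugation action $\Phi_q(X)=v_qXv_q^{-1}$ on the charge operator, use $v_kv_q=\omega(k,q)v_{kq}$ to exhibit the projective defect $(\Phi_k\circ\Phi_q)(C_{\sigma,h})=(\tau_{\omega(k,q)}\circ\Phi_{kq})(C_{\sigma,h})$, identify $\tau_{\omega(k,q)}$ with braiding by the flux $\omega(k,q)$ via \cref{braidcwf}, and invoke \cref{prop:H2} for the cohomological classification. The paper avoids your anticipated bookkeeping obstacle simply by working from the outset with the single-end operator $C_{\sigma,h}$ rather than the full pair $\Pi_\sigma$, and it complements the argument by checking explicitly that the projective phases on the charge and anticharge cancel so the global action on the pair stays linear (your associativity remark is implicit in this).
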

\end{tcolorbox}

We first deal with the action associated with the permutation map $\phi$. Afterwards we address the action associated with the cocycle $\omega$.

\subsection{Proof of \cref{prop:perm}}
We will prove \cref{prop:perm} analyzing the action of the global symmetry on each anyon type:
\begin{itemize}
\item {\bf Fluxes.} 

Consider the flux, characterized by the conjugacy class $[g]$ (see \cref{sec:introGinj}), placed on the edges of the path $\gamma$ represented by the operator $\bigotimes_{i\in \gamma}L_g^{m_i}$, where $L_g$ is the left regular representation of $G$.
 According to \cref{locsym}, the effect of  the global on-site operator  $\bigotimes_{v\in \Lambda}U^{[v]}_q$ over the $G$-injective PEPS with a flux is the action of $\phi_q$ (or $\phi_q^{-1} $ depending on $\gamma$) over each factor of $\bigotimes_{i\in \gamma}L_g^{m_i}$:
 $$\bigotimes_{i\in \gamma}L_g^{m_i}\mapsto \bigotimes_{i\in \gamma}L_{\phi_q(g)}^{m_i}, $$
 so the symmetry maps the flux-type $[g]$ to $[\phi_q(g)]$. The action of the symmetry is represented graphically as follows:
\begin{equation}
\begin{tikzpicture}
 \draw[densely dotted, blue,rounded corners](0.25,0,2.5)--(0.25,0,1.75)--(0.75,0,1.75)--(0.75,0,0.7);
        \pic at (0,0,2.1) {3dpeps};
        \pic at (0,0,1.4) {3dpeps}; 
      \pic at (0.5,0,0.7) {3dpeps};
      \pic at (0.5,0,1.4) {3dpeps};
     \pic at (0.5,0,2.1) {3dpeps};
       \pic at (1,0,0.7) {3dpeps};
        \pic at (1,0,1.4) {3dpeps};
         \pic at (1,0,2.1) {3dpeps};
	\filldraw [draw=black, fill=blue]  (0.25,0,2.1) circle (0.04);
	\filldraw [draw=blue, fill=white]  (0.5,0,1.75) circle (0.04);
	\filldraw [draw=black, fill=blue]  (0.75,0,1.4) circle (0.04);
	\filldraw [draw=black, fill=blue]  (0.75,0,0.7) circle (0.04);
	  \node[anchor=south] at (0.8,0,0.7)  {$g$};
 \end{tikzpicture} 
\stackrel{U_q^{\otimes n}}{\longrightarrow}
  \begin{tikzpicture}
   \draw[densely dotted, blue,rounded corners](0.25,0,2.5)--(0.25,0,1.75)--(0.75,0,1.75)--(0.75,0,0.7);
        \pic at (0,0,2.1) {3dpeps};
        \pic at (0,0,1.4) {3dpeps}; 
      \pic at (0.5,0,0.7) {3dpeps};
      \pic at (0.5,0,1.4) {3dpeps};
     \pic at (0.5,0,2.1) {3dpeps};
       \pic at (1,0,0.7) {3dpeps};
        \pic at (1,0,1.4) {3dpeps};
         \pic at (1,0,2.1) {3dpeps};
	\filldraw [draw=black, fill=blue]  (0.25,0,2.1) circle (0.04);
	\filldraw [draw=blue, fill=white]  (0.5,0,1.75) circle (0.04);
	\filldraw [draw=black, fill=blue]  (0.75,0,1.4) circle (0.04);
	\filldraw [draw=black, fill=blue]  (0.75,0,0.7) circle (0.04);
	 \filldraw [draw=black, fill=purple] (0,0.13,2.1) circle (0.04);
	 \filldraw [draw=black, fill=purple] (0,0.13,1.4) circle (0.04);
	 \filldraw [draw=black, fill=purple] (0.5,0.13,0.7) circle (0.04);
	 \filldraw [draw=black, fill=purple] (0.5,0.13,1.4) circle (0.04);
	 \filldraw [draw=black, fill=purple] (0.5,0.13,2.1) circle (0.04);
	 \filldraw [draw=black, fill=purple] (1,0.13,0.7) circle (0.04);
	  \filldraw [draw=black, fill=purple] (1,0.13,1.4) circle (0.04);
	   \filldraw [draw=black, fill=purple] (1,0.13,2.1) circle (0.04);
	    \node[anchor=south] at (0.8,0,0.7)  {$g$};
 \end{tikzpicture} 
 =
 \begin{tikzpicture}
 \draw[densely dotted, blue,rounded corners](0.25,0,2.5)--(0.25,0,1.75)--(0.75,0,1.75)--(0.75,0,0.7);
        \pic at (0,0,2.1) {3dpeps};
        \pic at (0,0,1.4) {3dpeps}; 
      \pic at (0.5,0,0.7) {3dpeps};
      \pic at (0.5,0,1.4) {3dpeps};
     \pic at (0.5,0,2.1) {3dpeps};
       \pic at (1,0,0.7) {3dpeps};
        \pic at (1,0,1.4) {3dpeps};
         \pic at (1,0,2.1) {3dpeps};
         \filldraw [draw=black, fill=red]  (0.12,0,2.1) circle (0.04);
	\filldraw [draw=black, fill=blue]  (0.25,0,2.1) circle (0.04);
	\filldraw [draw=black, fill=red]  (0.38,0,2.1) circle (0.04);
	
	\filldraw [draw=black, fill=red]  (0.5,0,1.55) circle (0.04);
	\filldraw [draw=blue, fill=white]  (0.5,0,1.75) circle (0.04);
	\filldraw [draw=black, fill=red]  (0.5,0,1.95) circle (0.04);
	
	\filldraw [draw=black, fill=red]  (0.62,0,1.4) circle (0.04);
	\filldraw [draw=black, fill=blue]  (0.75,0,1.4) circle (0.04);
	\filldraw [draw=black, fill=red]  (0.88,0,1.4) circle (0.04);
	
	\filldraw [draw=black, fill=red]  (0.62,0,0.7) circle (0.04);
	\filldraw [draw=black, fill=blue]  (0.75,0,0.7) circle (0.04);
	\filldraw [draw=black, fill=red]  (0.88,0,0.7) circle (0.04);
	 \node[anchor=south] at (0.8,0,0.5)  {$\phi_q(g)=v_q g v^{-1}_q$};
 \end{tikzpicture} .
 \end{equation}
It is clear that $\phi_k$ acts linearly in the class of fluxes and it permutes between conjugacy classes with the same number of elements, that is, between fluxes with the same quantum dimension. Consider the map $\psi$, which exactly captures the class of the fluxes because ${\rm Inn}(G)$ is the freedom of those fluxes. It is a homomorphism from $Q$ to the automorphisms of the conjugacy classes of $G$. The symmetry is non-trivial if it permutes between inequivalent classes of fluxes, {\it i.e.} if $\psi$ is a non-trivial outer automorphism. 

\item {\bf Charges.} 

Recall that for a non-trivial irrep $\pi_\sigma$ of $G$, the operator associated with a charge-anticharge pair is  $\Pi_{\sigma}=\sum_{g,h\in G}\chi_\sigma(gh^{-1})|g\rangle \langle g|\otimes |h\rangle \langle h|$.
The result of braiding a flux $p$ with one charge of the pair is just the phase factor $\chi_\sigma(p)$ when $G$ is abelian. In any case the effect can be measured, see Section \ref{subsec:braiding}, giving  $ \left |\chi_\sigma(p) /d_\sigma \right|^2$.

To analyze the symmetry action on charges we study the topological, {\it i.e.} braiding, properties of the charge modified by the symmetry operators:
$$\Pi^{[q]}_{\sigma}=\sum_{g,h\in G}\chi_\sigma(h^{-1}g)v_q|g\rangle \langle g| v^{-1}_q\otimes v_q |h\rangle \langle h| v^{-1}_q.$$
If we braid one charge of the modified pair with the flux corresponding to $p\in G$ we obtain:
\begin{align}
  B^{[\sigma]}_p(\Pi^{[q]}_{\sigma})&=\sum_{g,h\in G}\chi_\sigma(h^{-1}g)L_pv_q|g\rangle \langle g| v^{-1}_qL^{\dagger}_p\otimes v_q |h\rangle \langle h| v^{-1}_q\notag \\
 &=\sum_{g,h\in G}\chi_\sigma(h^{-1}g)v_q  L_{\phi^{-1}_q(p)}|g\rangle \langle g| L^{\dagger}_{\phi^{-1}_q(p)} v^{-1}_q\otimes v_q |h\rangle \langle h| v^{-1}_q\notag \\
 &=\sum_{g,h\in G}\chi_\sigma( h^{-1} \phi_q(p) g)v_q  |g\rangle \langle g| v^{-1}_q\otimes v_q |h\rangle \langle h| v^{-1}_q. \notag
\end{align}
When $G$ is abelian the braiding operation is a phase factor:
$$  B^{[\sigma]}_p(\Pi^{[q]}_{\sigma})=\chi_\sigma(\phi_q(p))\Pi^{[q]}_{\sigma}.$$
This phase factor is equal to $\chi_\sigma(p)$ if and only if $\phi_q$ is an inner automorphism of $G$ because for abelian groups $G$ is isomorphic to the group of its irreps. 
Thus, if $\phi$ is non-trivial, $\chi_\sigma(\phi_q(p))$ has to be identified with the phase factor corresponding to the braiding of the flux $p$ with some other charge. We denote this charge as $\sigma^{[q]}$. It satisfies
$$  B^{[\sigma^{[q]}]}_p(\Pi_{\sigma^{[q]}})=\chi_\sigma(\phi_q(p))\Pi_{\sigma^{[q]}}.$$
Since the type of the charge is defined by its transformation under braiding with fluxes we can conclude that the symmetry permutes between charge types: $\Pi^{[q]}_{\sigma}\equiv \Pi_{\sigma^{[q]}}$.

In the case where $G$ is non-abelian, the effect of the braiding can be more complex than a phase factor. We can then measure the probability of zero total charge by projecting on the initial state $\Pi^{[q]}_{\sigma}$, obtaining: 
$$ \left |  \frac{\chi_\sigma(\phi_q(p))}{d_\sigma}\right|^2,$$
where $d_\sigma$ is the dimension of the irrep $\sigma$. If $\phi$ is an inner automorphism, {\it i.e.} if $\psi$ is trivial, the modified charge $\Pi^{[q]}_{\sigma}$ transforms equivalently as $\Pi_{\sigma}$ under braiding operations, so it has to be identified with the same topological excitation. In contrast, if $\phi$ is not an inner automorphism, the representation $\pi_\sigma \circ \phi_q$, with character $\chi_\sigma(\phi_q(\cdot))$, is irreducible (with the same dimension as $\sigma$) and can be inequivalent to $\pi_\sigma$ so we denote its character as $\chi_{\sigma^{[q]}}(\cdot)$. Then, the braiding of a flux $p$ over $\Pi^{[q]}_{\sigma}$ gives the result
$$ \left |  \frac{\chi_\sigma(\phi_q(p))}{d_\sigma}\right|^2 \equiv \left |  \frac{\chi_{\sigma^{[q]}}(p)}{d_\sigma}\right|^2, $$
which implies that $\Pi^{[q]}_{\sigma}$ has to be identified with the irrep $\sigma^{[q]}$ and therefore the global symmetry has permuted between charges.

\item {\bf Dyons.} \\
A dyon is characterized by a pair $([h],\alpha)$, see \cref{sec:introGinj}. We will focus on the dyon of the pair particle-antiparticle so the virtual operator associated is -see Eq.\eqref{dyonend}-:
  
$$h^{\otimes \ell}\otimes h  \sum_{n \in N_{h}} \chi_\alpha(w n) \sum^\kappa_{j=1} |nk_j\rangle\langle nk_j|,$$

where $k_j$ runs over the representatives of right cosets of $G$ by $N_h$ and the chain $ h^{\otimes \ell}$ corresponds to the flux part of the dyon that connects with the antiparticle. The global symmetry for an element $q\in Q$ acts as follows:
$$\phi_q(h)^{\otimes \ell}\otimes v_q \;h  \sum_{n \in N_{h}} \chi_\alpha(w n) \sum^\kappa_{j=1} |nk_j\rangle\langle nk_j|\; v^{-1}_q,$$
which does not change the phase factor of the self-braiding since the virtual symmetry operators cancel out. This particle has to be associated with the dyon-type $([\phi_q(h)],\alpha^{[q]})$. That is, when we braid with $g\in N_{\phi_q(h)}$, that satisfies $gv_qh v^{-1}_q g^{-1}=v_qh v^{-1}_q$, the chain remains invariant  and the operator of the charge part changes to

\begin{align}
\phi_q(h)^{\otimes \ell}\otimes g v_q h   \sum_{n \in N_{h}} \chi_\alpha(w n) \sum^\kappa_{j=1} |nk_j\rangle\langle nk_j|\;  v^{-1}_q g^{-1}  &=\notag \\
\phi_q(h)^{\otimes \ell}\otimes \phi_q(h)   v_q \phi^{-1}_q(g)  \sum_{n \in N_{h}} \chi_\alpha(w n) \sum^\kappa_{j=1} |nk_j\rangle\langle nk_j|\;  \phi^{-1}_q(g^{-1})  v^{-1}_q  &=\notag \\
\phi_q(h)^{\otimes \ell}\otimes  v_q  h  \sum_{n \in N_{h}} \chi_\alpha(w\phi_q(g) n) \sum^\kappa_{j=1} |nk_j\rangle\langle nk_j|\;   v^{-1}_q  &\notag,
\end{align}
which corresponds to the change of the internal state of the irrep $\pi_\alpha \circ \phi_q$ of the group $N_{\phi_q(h)}$ associated with the braiding with $g\in N_{\phi_q(h)}$.

\item {\bf Ground subspace.}\\
To study the effect of a symmetry on fluxes, we have considered strings living in the virtual d.o.f. A similar computation allows to analyze how the ground subspace is affected by the symmetry operators. The ground state basis, formed by pair conjugacy classes, is $|\Psi(g,h)\rangle$ for $g,h\in G$ -see Eq.\eqref{eq:gsubspace}- where $g,h$ represent the two non-contractible loops acting on the torus satisfying $gh=hg$. The symmetry acts according to Eq.\eqref{locsym}, transforming $|\Psi(g,h)\rangle$ as follows:
\begin{equation}\label{permflux}
U^{\otimes n}_q|\Psi(g,h)\rangle =|\Psi(\phi_q(g),\phi_q(h))\rangle.
\end{equation}
The state $|\Psi(\phi_q(g),\phi_q(h))\rangle $ is a ground state. Indeed, $\phi_q(g)\phi_q(h)=\phi_q(h)\phi_q(h)$ shows that $(\phi_q(g),\phi_q(h))$ represents a well-defined pair conjugacy class. Then the homomorphism $\psi$ can permute between different pairs of conjugacy classes. 
We notice that the permutation of ground states is well defined also in $G$-injective PEPS and not only in the isometric point. 
\begin{equation}
|\Psi_A(g,h)\rangle=
\begin{tikzpicture}
        \pic at (0,0,0.7) {3dpeps};
        \pic at (0,0,1.4) {3dpeps};   
         \pic at (0,0,2.1) {3dpeps};  
          \pic at (0,0,0) {3dpeps};
      \pic at (0.5,0,0) {3dpeps};
      \pic at (0.5,0,0.7) {3dpeps};
      \pic at (0.5,0,1.4) {3dpeps};
     \pic at (0.5,0,2.1) {3dpeps};
      \pic at (1,0,0) {3dpeps};
        \pic at (1,0,0.7) {3dpeps};
        \pic at (1,0,1.4) {3dpeps};
         \pic at (1,0,2.1) {3dpeps};     
	  \pic at (1.5,0,0) {3dpeps};
        \pic at (1.5,0,0.7) {3dpeps};
        \pic at (1.5,0,1.4) {3dpeps};
         \pic at (1.5,0,2.1) {3dpeps};  
         
         \filldraw [draw=black, fill=blue]  (0.75,0,0.7) circle (0.04);
          \filldraw [draw=black, fill=blue]  (0.75,0,1.4) circle (0.04);
          \filldraw [draw=black, fill=blue]  (0.75,0,2.1) circle (0.04);
          \filldraw [draw=black, fill=blue]  (0.75,0,0) circle (0.04);
         \draw[densely dotted, blue](0.75,0,-0.5)--(0.75,0,2.6);
         \node[anchor=south] at (0.78,0.1,0) {${g}$};

         \filldraw [draw=black, fill=green]  (0,0,1.05) circle (0.04);
	 \filldraw [draw=black, fill=green]  (0.5,0,1.05) circle (0.04);
	 \filldraw [draw=black, fill=green]  (1,0,1.05) circle (0.04);
	 \filldraw [draw=black, fill=green]  (1.5,0,1.05) circle (0.04);
	   \draw[densely dotted, green](-0.5,0,1.05) --(2,0,1.05);
	 \node[anchor=west] at (1.8,0,1.3) {${h}$};
 \end{tikzpicture}
 \stackrel{U_q^{\otimes n}}{\longrightarrow}
\begin{tikzpicture}
        \pic at (0,0,0.7) {3dpeps};
        \pic at (0,0,1.4) {3dpeps};   
         \pic at (0,0,2.1) {3dpeps};  
          \pic at (0,0,0) {3dpeps};
      \pic at (0.5,0,0) {3dpeps};
      \pic at (0.5,0,0.7) {3dpeps};
      \pic at (0.5,0,1.4) {3dpeps};
     \pic at (0.5,0,2.1) {3dpeps};
      \pic at (1,0,0) {3dpeps};
        \pic at (1,0,0.7) {3dpeps};
        \pic at (1,0,1.4) {3dpeps};
         \pic at (1,0,2.1) {3dpeps};     
	  \pic at (1.5,0,0) {3dpeps};
        \pic at (1.5,0,0.7) {3dpeps};
        \pic at (1.5,0,1.4) {3dpeps};
         \pic at (1.5,0,2.1) {3dpeps};  
         
         \filldraw [draw=black, fill=teal]  (0.75,0,0.7) circle (0.04);
          \filldraw [draw=black, fill=teal]  (0.75,0,1.4) circle (0.04);
          \filldraw [draw=black, fill=teal]  (0.75,0,2.1) circle (0.04);
          \filldraw [draw=black, fill=teal]  (0.75,0,0) circle (0.04);
         \draw[densely dotted, teal](0.75,0,-0.5)--(0.75,0,2.6);
         \node[anchor=south] at (0.78,0.1,0) {${\phi_k(g)}$};

         \filldraw [draw=black, fill=lime]  (0,0,1.05) circle (0.04);
	 \filldraw [draw=black, fill=lime]  (0.5,0,1.05) circle (0.04);
	 \filldraw [draw=black, fill=lime]  (1,0,1.05) circle (0.04);
	 \filldraw [draw=black, fill=lime]  (1.5,0,1.05) circle (0.04);
	   \draw[densely dotted, lime](-0.5,0,1.05) --(2,0,1.05);
	 \node[anchor=west] at (1.8,0,1.3) {${\phi_k(h)}$};
 \end{tikzpicture}
 =|\Psi_A(\phi_k(g),\phi_k(h))\rangle. \notag
\end{equation}
The action on the whole ground subspace extends from the basis by linearity. It is worth computing the action of the symmetry in the another basis, the MES basis. That basis in one-to-one correspondence with the anyon types of the model. For a pair ($[g], \alpha$) the associated MES is the following state
\begin{equation*}
|\Psi_A([g],\alpha)\rangle= \sum_{n\in N_g}\chi_\alpha(n)
\begin{tikzpicture}
        \pic at (0,0,0.7) {3dpeps};
        \pic at (0,0,1.4) {3dpeps};   
         \pic at (0,0,2.1) {3dpeps};  
          \pic at (0,0,0) {3dpeps};
      \pic at (0.5,0,0) {3dpeps};
      \pic at (0.5,0,0.7) {3dpeps};
      \pic at (0.5,0,1.4) {3dpeps};
     \pic at (0.5,0,2.1) {3dpeps};
      \pic at (1,0,0) {3dpeps};
        \pic at (1,0,0.7) {3dpeps};
        \pic at (1,0,1.4) {3dpeps};
         \pic at (1,0,2.1) {3dpeps};     
	  \pic at (1.5,0,0) {3dpeps};
        \pic at (1.5,0,0.7) {3dpeps};
        \pic at (1.5,0,1.4) {3dpeps};
         \pic at (1.5,0,2.1) {3dpeps};  
         
         \filldraw [draw=black, fill=blue]  (0.75,0,0.7) circle (0.04);
          \filldraw [draw=black, fill=blue]  (0.75,0,1.4) circle (0.04);
          \filldraw [draw=black, fill=blue]  (0.75,0,2.1) circle (0.04);
          \filldraw [draw=black, fill=blue]  (0.75,0,0) circle (0.04);
         \draw[densely dotted, blue](0.75,0,-0.5)--(0.75,0,2.6);
         \node[anchor=south] at (0.78,0,0) {${g}$};

         \filldraw [draw=black, fill=green]  (0,0,1.05) circle (0.04);
	 \filldraw [draw=black, fill=green]  (0.5,0,1.05) circle (0.04);
	 \filldraw [draw=black, fill=green]  (1,0,1.05) circle (0.04);
	 \filldraw [draw=black, fill=green]  (1.5,0,1.05) circle (0.04);
	   \draw[densely dotted, green](-0.5,0,1.05) --(2,0,1.05);
	 \node[anchor=west] at (1.5,0,1.05) {${n}$};
 \end{tikzpicture},
 \end{equation*}
 which transforms under the symmetry as:
 $$U^{\otimes n}_q|\Psi_A([g],\alpha)\rangle =|\Psi([\phi_q(g)],\alpha^{[q]})\rangle.$$
This action corresponds to the one of dyons, the more general anyon, which emphasizes in the duality between ground states and anyons.  
 
 \end{itemize}
\subsection{ Proof of \cref{prop:sf}}\label{sec:SFcharges}
%
The symmetry effect of the map $\phi$ is a permutation of the anyon types, that is, a permutation between different eigenstates with the same eigenvalue (energy). This is the regular effect of a (symmetry) operator that commutes with the Hamiltonian of the system: its action preserves the eigenspaces. 
But there is a more subtle behaviour of the symmetry on the anyons: the action can be projective on each quasiparticle individually. This phenomenon is known as Symmetry Fractionalization (SF) -see \cite{Barkeshli14}. Let us explain what it is about.

\

Consider an on-site global symmetry of a topological phase. When anyons are present in the system the action of the symmetry is localized around the region where the quasiparticles are placed\footnote{The actual size of this region will be given by the correlation length so this is zero for RG fixed points.} because the regions between them correspond to the the ground state sector (the vacuum transforms trivially under the symmetry). We talk about SF when the symmetry acts \emph{projectively}, as opposed to linear, over the individual anyons. This freedom is allowed because only the vacuum, in general a collection of excitations with zero total charge, has to transform linearly under the symmetry group. 
So only the global effect of the individual projective actions has to become linear when a collection of anyons with zero total charge is considered. It turns out that these projective actions are equivalent to the braiding with some anyon that characterizes the SF pattern \cite{Barkeshli14}. 

\

When considering a pair particle-antiparticle, one has to transform inversely to the other. For example in abelian theories where the braiding is just a phase factor, given an anyon $\sigma$, its anti-particle $\bar{\sigma}$ transforms as the inverse projective representation (picking up the conjugate phase factor) -see \cite{Chen17} for a review. In any case the SF pattern has to be consistent with the fusion rules of the theory.

\

In the following we will explain how these concepts materialize for charges in the $G$-isometric PEPS picture. As shown in Eq.\eqref{eq:quaihomo}, the operators  $v_q$ do not have to form a linear representation. It actually turns out that  if $\omega(k,q) \neq 1$ the relation 
$$ v_k v_q = {\omega(k,q)} \; v_{kq}$$
means that $v_q$ is a \emph{projective} representation of $Q$. 
In that case $\{ v_q \}$ is an homomorphism up to a matrix since $\omega(k,q)\equiv u_{\omega(k,q)}$, where $u_g\equiv g$ is the representation of $G$ acting on the virtual d.o.f. ($u_g=L_g$ for $G$-isometric PEPS). 
Since the representation $\{U_q : q \in Q \}$ is assumed to be linear, the projective nature of $v_q$ does not show up in the action over $\ket{\Psi_A}$. 
It does not appear either in the action on the whole ground subspace nor on the fluxes. This is because $\phi$ is linear over conjugacy classes of $G$- see Eq.\eqref{eq:quaihomo}.
But the situation changes in regions that contain charges. Let us first define the conjugation map:
\begin{align}\label{eq:defPhi}
  \Phi_q \colon \mathcal{M}_D &\to \mathcal{M}_D \notag\\
  X &\mapsto \Phi_q(X)=v_q X v^{-1}_q.
\end{align}
One can use  Eq.\eqref{locsym} to calculate how the on-site symmetry $U_q$ affects a charge sitting on a virtual bond: $C_{\sigma,h} \to \Phi_q(C_{\sigma,h})$. Diagramatically: 
\begin{equation*}
  \begin{tikzpicture}
   \pic at (0,0,0.7) {3dpeps}; 
        \pic at (0,0,2.1) {3dpeps};
        \pic at (0,0,1.4) {3dpeps}; 
      \pic at (0.5,0,0.7) {3dpeps};
      \pic at (0.5,0,1.4) {3dpeps};
     \pic at (0.5,0,2.1) {3dpeps};
     	 \filldraw[draw=black,fill=purple] (0,0.13,0.7) circle (0.04);
	 \filldraw[draw=black,fill=purple] (0,0.13,2.1) circle (0.04);
	 \filldraw[draw=black,fill=purple] (0,0.13,1.4) circle (0.04);
	 \filldraw[draw=black,fill=purple] (0.5,0.13,0.7) circle (0.04);
	 \filldraw[draw=black,fill=purple] (0.5,0.13,1.4) circle (0.04);
	 \filldraw[draw=black,fill=purple] (0.5,0.13,2.1) circle (0.04);
	    \node[anchor=south] at (-0.1,0.15,1)  {$U_q$};
	         \filldraw[draw=black,fill=orange] (0.15,0,1.27) rectangle (0.35,0,1.53); 
 \end{tikzpicture} 
 =
 \begin{tikzpicture}
     \pic at (0,0,0.7) {3dpeps}; 
        \pic at (0,0,2.1) {3dpeps};
        \pic at (0,0,1.4) {3dpeps}; 
      \pic at (0.5,0,0.7) {3dpeps};
      \pic at (0.5,0,1.4) {3dpeps};
     \pic at (0.5,0,2.1) {3dpeps};

         \filldraw[draw=black,fill=red]  (0.12,0,1.4) circle (0.04);
	  \filldraw[draw=black,fill=orange] (0.15,0,1.27) rectangle (0.35,0,1.53); 
	\filldraw[draw=black,fill=red]  (0.38,0,1.4) circle (0.04);
 \end{tikzpicture}.
  \end{equation*}
We will define
\begin{equation*}
      \Phi_q(C_{\sigma,h})
      \equiv 
        \begin{tikzpicture}
      \draw (0.1,0,0) -- (0.7,0,0);
      \filldraw[draw=black,fill=red] (0.2,0,0) circle (0.04);
      \filldraw[draw=black,fill=red] (0.6,0,0) circle (0.04);
      \node[anchor=south] at (0.1,0,0) {$v_q$};
      \node[anchor=north] at (0.75,0,-0.15) {$\myinv{v_q}$};
       \filldraw[draw=black,fill=orange] (0.4,0,-0.2) rectangle (0.4,0,0.2); 
    \end{tikzpicture}.
         \end{equation*}
\
If the symmetry is applied for two elements of $Q$, we see that 
\begin{equation}\label{eq:symcharge}
(\Phi_k\circ \Phi_q)(C_{\sigma,h})=(\tau_{\omega(k,q)}\circ \Phi_{kq})(C_{\sigma,h}),
\end{equation}
where $\tau_\omega$ denotes conjugation by $\omega$. This implies that the symmetry action over the charge sector can be \emph{projective}, \emph{i.e.} the symmetry fractionalizes. 

We remark a fundamental point for our work:
\begin{observation}[Relation  between braiding and SF]\label{ob:brsf}
In virtue of Eq.\eqref{braidcwf} we can say that the factor that relates the action of $\Phi_k\circ \Phi_q$ and $\Phi_{kq}$ over the charge, see Eq.\eqref{eq:symcharge}, is equal to the braiding with the flux $\omega(k,q)\in G$ on the corresponding charge. That is, the conjugation by $\omega(k,q)$, $\tau_{\omega(k,q)}$, that defines the braiding in \eqref{braidcwf}, appears in Eq.\eqref{eq:symcharge}.
\end{observation}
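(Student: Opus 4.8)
The statement asserts that the non-linearity (projectivity) of the virtual symmetry action $q\mapsto\Phi_q$ on the charge sector is realised, on each charge, by braiding it with the flux labelled by the group element $\omega(k,q)$. Since both the fractionalisation defect and the braiding operation are, at the virtual level, nothing but conjugations by an element of the group algebra $\mathcal{A}$ coming from a single element of $G$, the plan is to compute each of them explicitly as such a conjugation and then observe that, for the element $\omega(k,q)$, the two maps literally coincide.

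First I would isolate the defect measuring the failure of $q\mapsto\Phi_q$ to be a genuine group action. Starting from the definition \eqref{eq:defPhi}, $\Phi_q(X)=v_q\,X\,v_q^{-1}$, and substituting the relation $v_k v_q=\omega(k,q)\,v_{kq}$ of \eqref{eq:quaihomo} (with $\omega(k,q)\in G$ realised virtually as the matrix $u_{\omega(k,q)}$, equal to $L_{\omega(k,q)}$ for $G$-isometric PEPS), one gets for every operator $X$
\begin{equation*}
(\Phi_k\circ\Phi_q)(X)=v_k v_q\,X\,v_q^{-1}v_k^{-1}=\omega(k,q)\,\big(v_{kq}\,X\,v_{kq}^{-1}\big)\,\omega(k,q)^{-1}=(\tau_{\omega(k,q)}\circ\Phi_{kq})(X),
\end{equation*}
which reproduces \eqref{eq:symcharge}. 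Applied to a charge operator $C_{\sigma,h}$, this shows that the factor relating $\Phi_k\circ\Phi_q$ to $\Phi_{kq}$ on the charge sector is exactly the conjugation map $\tau_{\omega(k,q)}$, i.e.\ conjugation by the virtual group element $u_{\omega(k,q)}$.

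Second I would recall from \eqref{braidcwf} how braiding acts on a charge. There the counter-clockwise braiding of a flux $g$ around one charge of the pair sends $C_{\sigma,h}\mapsto\tau_{g^{-1}}(C_{\sigma,h})=L_g^{\dagger}\,C_{\sigma,h}\,L_g$, and the clockwise braiding sends it to $\tau_{g}(C_{\sigma,h})=L_g\,C_{\sigma,h}\,L_g^{\dagger}$; both are conjugations by the virtual group element attached to the flux. Comparing with the previous paragraph, the fractionalisation defect $\tau_{\omega(k,q)}$ is precisely the map implemented by braiding the charge with the flux $\omega(k,q)$, which establishes the observation. In the non-abelian case the same identification holds termwise on each summand $C_{\sigma,h}$ of $\Pi_\sigma=\sum_h C_{\sigma,h}\otimes\bar{C}_{\sigma,h}$, since conjugation is linear and acts summand by summand.

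The only points requiring care, which I regard as bookkeeping rather than genuine obstacles, are the orientation convention and the dual role of $\omega(k,q)$. Matching $\tau_{\omega(k,q)}$ to \eqref{braidcwf} fixes the relevant braiding to be the clockwise one with flux $\omega(k,q)$ (equivalently, the counter-clockwise one with $\omega(k,q)^{-1}$); this is exactly the freedom already present in the clockwise/counter-clockwise distinction of \eqref{braidcwf}. Likewise one must keep track of $\omega(k,q)$ simultaneously as an abstract element of $G$ (entering the $2$-cocycle and the maps $\phi,\omega$) and as the operator $u_{\omega(k,q)}=L_{\omega(k,q)}$ acting on the virtual indices, but this is precisely the $G$-isometric dictionary used throughout, so no additional input is needed.
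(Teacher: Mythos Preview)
Your proposal is correct and follows exactly the logic the paper intends: the observation in the paper is not given a separate proof but simply points out that both \eqref{eq:symcharge} and \eqref{braidcwf} are conjugations by a single group element, so setting $g=\omega(k,q)$ identifies them. Your explicit computation of $(\Phi_k\circ\Phi_q)(X)=\tau_{\omega(k,q)}\circ\Phi_{kq}(X)$ and your matching of orientation conventions (clockwise braiding with $\omega(k,q)$ versus counter-clockwise with $\omega(k,q)^{-1}$) just make explicit what the paper leaves as a one-line remark.
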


The linearity of the symmetry action on a pair of charges is clear in  $G$-isometric PEPS since $B^{[\sigma-\bar{\sigma}]}_\omega(\Pi_\sigma)=\Pi_\sigma$; the braiding of a flux around a composite charge-anticharge is trivial. This can be understood as the fact that a charge and an anti-charge transform inversely under the braiding of a flux:
\begin{align}\label{sginv}
B^{[\sigma]}_\omega(\Pi_{\sigma})& = \sum_{h,t\in G}\chi_\sigma(t^{-1}\omega h) |h\rangle \langle h| \otimes |t\rangle \langle t|\equiv V^{[\sigma]}_\omega, \notag \\
 B^{[\bar{\sigma}]}_\omega(\Pi_{\sigma}) &= \sum_{h,t\in G}\chi_\sigma(t^{-1}\omega^{-1}h) |h\rangle \langle h| \otimes |t\rangle \langle t|\equiv V^{[\bar{\sigma}]}_\omega.\notag
\end{align}
That is,
$$V^{[\sigma]}_{\omega(k,q)} =V^{[\bar{\sigma}]}_{\omega(k,q)^{-1}},$$
which is equivalent in terms of braiding to the expression: $B^{[\bar{\sigma}]}_\omega(\Pi_{\sigma}) = B^{[\sigma]}_{\omega^{-1}}(\Pi_{\sigma})$. Note that this effect is only a phase factor if $\omega\in G$ is abelian or $\sigma$ is a unidimensional irrep; {\it i.e.} abelian anyons. In that case $B^{[\sigma]}_{\omega(k,q)}(\Pi_\sigma)=\chi_\sigma(\omega(k,q))\Pi_\sigma$ and
$$\overline{\chi_\sigma(\omega(k,q))}=\chi_{\bar{\sigma}}(\omega(k,q)).$$

This effect can be generalised for any collection of charges which can fuse to the vacuum. Given two charges $\alpha$ and $\beta$, which correspond to two irreps, their tensor product is:
$$\pi_\alpha \otimes \pi_\beta \cong \bigoplus_\sigma \id_{N_{\alpha \beta}^\sigma} \otimes  \pi_\sigma \Rightarrow \chi_\alpha \times \chi_\beta =\sum_\sigma N_{\alpha \beta}^\sigma \chi_\sigma.$$
The previous equation characterizes the superposition of charges appearing in the fusion $\alpha \times \beta$.
The probability of $\alpha$ and $\beta$ to fuse in $\sigma$ is $ N_{\alpha \beta}^\sigma d_\sigma / d_\alpha d_\beta$ \cite{Preskill04}. 
For any triple $\alpha, \beta , \bar{\sigma}$ that fuse to the vacuum the braiding with any flux is trivial. Therefore  for the abelian case $\chi_\alpha(\omega(k,q))\times \chi_\beta(\omega(k,q))\times\chi_{\bar{\sigma}}(\omega(k,q))=1$ and
$$\chi_\alpha(\omega(k,q))\times \chi_\beta(\omega(k,q))=\chi_{{\sigma}}(\omega(k,q)),$$
for any pair $\alpha,\beta$ that can fuse to $\sigma$ ($N_{\alpha \beta}^\sigma\neq 0$). This is a compatibility condition between the SF effect and the fusion rules of the theory (see \cite{Barkeshli14}).

\section{Symmetry defects as domain walls}

Symmetry defects can be created by acting with the symmetry operators over a compact region. The boundary of the region acts as a Domain Wall (DW) which act over the anyons when they cross it. This DW corresponds to a loop of virtual symmetry operators acting on the virtual d.o.f.:

\begin{equation}
  \begin{tikzpicture}
   	\pic at (0,0,0) {3dpeps};
        \pic at (0,0,0.7) {3dpeps}; 
        \pic at (0,0,1.4) {3dpeps}; 
        \pic at (0,0,2.1) {3dpeps};
        \pic at (0,0,2.8) {3dpeps};
        \pic at (0.5,0,0) {3dpeps};
      \pic at (0.5,0,0.7) {3dpeps};
      \pic at (0.5,0,1.4) {3dpeps};
     \pic at (0.5,0,2.1) {3dpeps};
     \pic at (0.5,0,2.8) {3dpeps};
     \pic at (1,0,0) {3dpeps};
       \pic at (1,0,0.7) {3dpeps};
        \pic at (1,0,1.4) {3dpeps};
         \pic at (1,0,2.1) {3dpeps};
         \pic at (1,0,2.8) {3dpeps};
              \pic at (1.5,0,0) {3dpeps};
       \pic at (1.5,0,0.7) {3dpeps};
        \pic at (1.5,0,1.4) {3dpeps};
         \pic at (1.5,0,2.1) {3dpeps};
         \pic at (1.5,0,2.8) {3dpeps};
              \pic at (2,0,0) {3dpeps};
       \pic at (2,0,0.7) {3dpeps};
        \pic at (2,0,1.4) {3dpeps};
         \pic at (2,0,2.1) {3dpeps};
         \pic at (2,0,2.8) {3dpeps};
	 \filldraw [draw=black, fill=purple] (0.5,0.13,0.7) circle (0.04);
	 \filldraw [draw=black, fill=purple] (0.5,0.13,1.4) circle (0.04);
	 \filldraw [draw=black, fill=purple] (0.5,0.13,2.1) circle (0.04);
	 \filldraw [draw=black, fill=purple] (1,0.13,0.7) circle (0.04);
	  \filldraw [draw=black, fill=purple] (1,0.13,1.4) circle (0.04);
	   \filldraw [draw=black, fill=purple] (1,0.13,2.1) circle (0.04);
	   	 \filldraw [draw=black, fill=purple] (1.5,0.13,0.7) circle (0.04);
	  \filldraw [draw=black, fill=purple] (1.5,0.13,1.4) circle (0.04);
	   \filldraw [draw=black, fill=purple] (1.5,0.13,2.1) circle (0.04);
	   
 \end{tikzpicture} =
   \begin{tikzpicture}
   
    \begin{scope}[canvas is zx plane at y=0]
 \draw[epsilon] (0.35,1.75) rectangle (2.45,0.25);
 \end{scope}
  
   	\pic at (0,0,0) {3dpeps};
        \pic at (0,0,0.7) {3dpeps}; 
        \pic at (0,0,1.4) {3dpeps}; 
        \pic at (0,0,2.1) {3dpeps};
        \pic at (0,0,2.8) {3dpeps};
        \pic at (0.5,0,0) {3dpeps};
      \pic at (0.5,0,0.7) {3dpeps};
      \pic at (0.5,0,1.4) {3dpeps};
     \pic at (0.5,0,2.1) {3dpeps};
     \pic at (0.5,0,2.8) {3dpeps};
     \pic at (1,0,0) {3dpeps};
       \pic at (1,0,0.7) {3dpeps};
        \pic at (1,0,1.4) {3dpeps};
         \pic at (1,0,2.1) {3dpeps};
         \pic at (1,0,2.8) {3dpeps};
              \pic at (1.5,0,0) {3dpeps};
       \pic at (1.5,0,0.7) {3dpeps};
        \pic at (1.5,0,1.4) {3dpeps};
         \pic at (1.5,0,2.1) {3dpeps};
         \pic at (1.5,0,2.8) {3dpeps};
              \pic at (2,0,0) {3dpeps};
       \pic at (2,0,0.7) {3dpeps};
        \pic at (2,0,1.4) {3dpeps};
         \pic at (2,0,2.1) {3dpeps};
         \pic at (2,0,2.8) {3dpeps};
 \filldraw [draw=black, fill=red] (0.5,0,0.35) circle (0.04);   
 \filldraw [draw=black, fill=red] (1.5,0,0.35) circle (0.04); 
  \filldraw [draw=black, fill=red] (1,0,0.35) circle (0.04);
   \filldraw [draw=black, fill=red] (0.5,0,2.45) circle (0.04);   
   \filldraw [draw=black, fill=red] (1.5,0,2.45) circle (0.04);
      \filldraw [draw=black, fill=red] (1,0,2.45) circle (0.04);
       \filldraw [draw=black, fill=red] (0.25,0,0.7) circle (0.04); 
       \filldraw [draw=black, fill=red] (0.25,0,1.4) circle (0.04); 
       \filldraw [draw=black, fill=red] (0.25,0,2.1) circle (0.04); 
          \filldraw [draw=black, fill=red] (1.75,0,0.7) circle (0.04); 
       \filldraw [draw=black, fill=red] (1.75,0,1.4) circle (0.04); 
       \filldraw [draw=black, fill=red] (1.75,0,2.1) circle (0.04); 
 \end{tikzpicture}.
  \end{equation} 
In this subsection we show the following for $G$-isometric PEPS:

 \begin{tcolorbox}
\begin{proposition}[Domain wall permutation]\label{prop:DWperm}
An anyon is able to cross a domain wall, coming from a global on-site symmetry, unitarily. When the anyon crosses the domain wall, the type of the anyon changes according to $\phi$. This gives a method to detect the permutation pattern in small regions.
\end{proposition}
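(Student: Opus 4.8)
\textbf{Proof proposal for Proposition \ref{prop:DWperm}.}

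The plan is to reduce the statement to the local action of the virtual symmetry operators $v_q$ on the anyonic strings, exactly as was done in the proof of \cref{prop:perm}, but now keeping track of the fact that the loop of $v_q$'s is only inserted on a compact region. First I would set up the configuration: a $G$-isometric PEPS on which a symmetry operator $\bigotimes_{v\in \mathcal{R}}U_q^{[v]}$ is applied to a compact, contractible region $\mathcal{R}$, and an anyon string (of one of the three types: flux, charge, dyon) which at some point crosses $\partial\mathcal{R}$. Using \cref{locsym2}, pushing each physical $U_q$ through its tensor turns the block of $U_q$'s into a loop of virtual $v_q$ operators running along $\partial\mathcal{R}$; contractible sub-loops of $v_q$ inside $\mathcal{R}$ are absorbed by the $G$-invariance (here one uses that $v_q g v_q^{-1}=\phi_q(g)\in G$ from \cref{Glemma,def:phi}), so only the operator loop on $\partial\mathcal{R}$ survives — this is the domain wall. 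The anyon string and the domain wall then cross at one (or two) virtual edges.

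Next I would compute the crossing edge by edge. For a flux $[g]$, represented by $\bigotimes_{i\in\gamma}L_g^{m_i}$, the portion of $\gamma$ lying inside $\mathcal{R}$ gets conjugated by $v_q$, which sends each factor $L_g\mapsto v_q L_g v_q^{-1}=L_{\phi_q(g)}$; hence on the far side of the wall the string is $\bigotimes L_{\phi_q(g)}^{m_i}$, i.e. the flux has been relabelled $[g]\mapsto[\phi_q(g)]$. Because $v_q$ is invertible this relabelling is realized unitarily (the domain wall acts by conjugation with a fixed invertible, in fact unitary for $G$-isometric PEPS, operator). For a charge $\Pi_\sigma$ (or the factor $C_{\sigma,h}$), the conjugation $\Phi_q(C_{\sigma,h})=v_q C_{\sigma,h}v_q^{-1}$ computed in \cref{sec:SFcharges} shows that braiding of $\Phi_q(C_{\sigma,h})$ with any flux $p$ picks up $\chi_\sigma(\phi_q(p))=\chi_{\sigma^{[q]}}(p)$, so the charge is relabelled $\sigma\mapsto\sigma^{[q]}$ where $\sigma^{[q]}$ corresponds to $\pi_\sigma\circ\phi_q$, exactly as in the bulleted ``Charges'' case of the proof of \cref{prop:perm}. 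For a dyon $([h],\alpha)$ one combines the two computations: the flux part of the chain gets conjugated to $\phi_q(h)$ and the charge end $D^w_\alpha$ to $v_q D^w_\alpha v_q^{-1}$, which by the same manipulation as in the dyon bullet of \cref{prop:perm} is the charge end for the irrep $\pi_\alpha\circ\phi_q$ of $N_{\phi_q(h)}$; hence $([h],\alpha)\mapsto([\phi_q(h)],\alpha^{[q]})$. In all three cases the self-braiding phase is unchanged because the $v_q$'s cancel against $v_q^{-1}$'s, confirming that the crossing is a legitimate (unitary, topological) operation and that the anyon type after crossing is precisely the $\phi_q$-image, which also gives a way to detect the permutation pattern $\psi$ in a small region by creating a defect loop and moving a probe anyon across it.

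I expect the main obstacle to be bookkeeping rather than anything conceptually deep: one must be careful about the orientation conventions (the $m_i\in\{\pm1\}$ and whether the string is conjugated by $v_q$ or $v_q^{-1}$, as already signalled in \cref{eq:Gmoves} and in \cref{locsym}), and about the case where the string crosses $\partial\mathcal{R}$ twice (entering and leaving $\mathcal{R}$), where one should check that the two conjugations are by $v_q$ and $v_q^{-1}$ at the respective crossings so that an anyon that enters and exits comes back to its original type — consistent with the statement that the \emph{permutation} is attached to the wall, not to the interior. A secondary point needing care is that for $G$-isometric PEPS the $v_q$ can be taken unitary (so the crossing is genuinely unitary, not merely invertible), which follows from $U_q$ unitary together with the $G$-isometric structure and \cref{Symform}; I would spell this out briefly. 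Once these conventions are fixed, the proof is essentially a restatement of the computations already carried out for \cref{prop:perm}, now localized to a neighbourhood of $\partial\mathcal{R}$.
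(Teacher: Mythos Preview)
Your approach is algebraic and genuinely differs from the paper's, which is operational and constructive. You argue by showing that the virtual operators of an anyon string get conjugated by $v_q$ wherever they lie inside the loop of domain-wall operators, and hence the anyon type is relabelled by $\phi_q$; you then invoke that $v_q$ is unitary in the $G$-isometric case to conclude that the crossing is ``realized unitarily''. The paper instead builds explicit local \emph{physical} unitaries that move the anyon endpoint through the wall, adapting the synchronization technique of \cite{Schuch10}: for a flux one works with three bonds (flux outside, the domain-wall edge carrying $v_q$, and an empty bond inside), synchronizes the relevant virtual legs, and applies a controlled operator of the form $\sum_{b,c,d,f}|d\rangle\langle d|(\cdot)|f\rangle\langle f|\otimes|b\rangle\langle b|(\cdot)|c\rangle\langle c|\otimes L_{(df^{-1})(bc^{-1})(df^{-1})^{-1}}$, whose net effect is to transfer $L_g$ across the wall while picking up the conjugation $g\mapsto\phi_q(g)$; charges are handled by a synchronize-then-swap construction.

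The weakness in your route is precisely the ``unitarily'' claim. Conjugation $L_g\mapsto v_q L_g v_q^{-1}$ is a statement at the virtual level, and it does not by itself produce a local physical unitary that moves the anyon endpoint across the wall. The standard two-bond movement operator $M_f$ does not directly apply when one of the neighbouring edges carries the extra domain-wall factor $v_q$; the paper's three-bond construction is introduced exactly to deal with this. Your argument correctly identifies what an anyon string that \emph{already} crosses the wall looks like on each side (and hence why the permutation must be $\phi_q$), but the content of the proposition is the existence of a local physical unitary implementing the crossing. What your structural argument buys is a clean explanation of why the answer is $\phi_q$ and nothing else; what the paper's constructive argument buys is the actual unitarity statement. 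To close the gap in your approach you would need to exhibit the movement unitary in the presence of the wall, which is essentially what the paper does.
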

\end{tcolorbox}

\begin{proof}
We first show how to move a flux through a DW by local unitaries. To be self-contained, we first describe the procedure of Ref.\cite{Schuch10} to move a flux. We need to consider two bonds, one with the operator $L_g$ (the flux part) and the other empty. We can use a unitary operation on the adjacent physical sites to synchronize both bonds, that is, go from $\sum_x L_x \otimes \sum_y L_y$ to $\sum_x L_x \otimes  L_x$. 
The operation is sketched as follows:

\begin{equation}
   \begin{tikzpicture}
 \pic at (0,0,0) {pepsGisopen};
  \pic at (0,0,0.9) {pepsGisopen};
   \pic at (0.7,0,0) {pepsGisopen};
  \pic at (0.7,0,0.9) {pepsGisopen};
   \pic at (1.4,0,0) {pepsGisopen};
  \pic at (1.4,0,0.9) {pepsGisopen};
         \node at (-0.1,0.1,0.55)  {$g$};
   \filldraw [draw=black, fill=blue] (0,0,0.45) circle (0.04);
    \draw[densely dotted, blue](0,0,0.45) --(-0.35,0,0.45);
    \begin{scope}[canvas is zx plane at y=0.1]
    \draw[thin, brown] (0,0.35) ellipse (0.1cm and 0.31cm);
    \draw[thin, brown] (0.9,0.35) ellipse (0.1cm and 0.31cm);
\draw[thin, brown] (0,1.05) ellipse (0.1cm and 0.31cm);
    \draw[thin, brown] (0.9,1.05) ellipse (0.1cm and 0.31cm);
    \draw[thin, brown] (0.45,1.4) ellipse (0.37cm and 0.06cm);
 \end{scope}
   \end{tikzpicture} 
 \longrightarrow
  \begin{tikzpicture}
    \draw (0,0.1,0.5)--(0,0,0.5)--(0,0,-0.5)--(0,0.1,-0.5);
      \draw (0.4,0.1,0.5)--(0.4,0,0.5)--(0.4,0,-0.5)--(0.4,0.1,-0.5);
      \filldraw [draw=black, fill=blue] (0,0,0) circle (0.04);
      \filldraw (0,0,0.3) circle (0.03);
      \filldraw (0,0,-0.3) circle (0.03);
    \draw[densely dotted, blue](0,0,0) --(-0.35,0,0);
      \node at (0,0,0.65)  {$c$};
      \node at (0.05,0,-0.7)  {$b$};
        \node at (0.45,0,-0.7)  {$a$};
  \end{tikzpicture} 
  = \sum_{p \in G} |xg\myinv{x}  p\rangle_b \langle p|_c \otimes  \sum_{s \in G} |s\rangle_a \langle s |= L_{xg\myinv{x}}\otimes \id,
\end{equation}
where we do not have knowledge of $x\in G$ (the black dots), we have labelled each ket and bra and the synchronization is denoted by the ellipses in yellow. The transformation $|a\rangle|b\rangle \langle c| \to |b\myinv{c}a\rangle|b\rangle \langle c|$ is implemented by the operator $M_f(\cdot)=\sum_{b,c} L_{b\myinv{c}}\otimes |b\rangle \langle b|(\cdot) |c\rangle \langle c|$ which finally goes to $L_{xg\myinv{x}}\otimes L_{xg\myinv{x}}$:

\begin{equation}
M_f\left(
  \begin{tikzpicture}
    \draw (0,0.1,0.5)--(0,0,0.5)--(0,0,-0.5)--(0,0.1,-0.5);
      \draw (0.4,0.1,0.5)--(0.4,0,0.5)--(0.4,0,-0.5)--(0.4,0.1,-0.5);
      \filldraw [draw=black, fill=blue] (0,0,0) circle (0.04);
    \draw[densely dotted, blue](0,0,0) --(-0.35,0,0);
  \end{tikzpicture}
  \right )=
  \begin{tikzpicture}
    \draw (0,0.1,0.5)--(0,0,0.5)--(0,0,-0.5)--(0,0.1,-0.5);
      \draw (0.4,0.1,0.5)--(0.4,0,0.5)--(0.4,0,-0.5)--(0.4,0.1,-0.5);
      \filldraw [draw=black, fill=blue] (0,0,0) circle (0.04);
     \filldraw [draw=black, fill=blue] (0.4,0,0) circle (0.04);
    \draw[densely dotted, blue](0.4,0,0) --(-0.35,0,0);
  \end{tikzpicture} .
  \end{equation}
To move a flux through a DW we have to consider three bonds;  the flux outside, the boundary of the DW and the empty bond inside. We suppose that the bond where the flux is placed is synchronized already. We first synchronize as follows: 
\begin{equation*}
   \begin{tikzpicture}
 \pic at (0,0,0) {pepsGisopen};
   \draw(0.2,0,0) --(0.5,0,0);
    \draw(0.2,0,0.9) --(0.5,0,0.9);
       \draw(0.7,0,0.2) --(0.7,0,0.7);
  \pic at (0,0,0.9) {pepsGisopen};
   \pic at (0.7,0,0) {pepsGisopen};
   \draw(0.9,0,0) --(1.2,0,0);
    \draw(0.9,0,0.9) --(1.2,0,0.9);
       \draw(1.4,0,0.2) --(1.4,0,0.7);
  \pic at (0.7,0,0.9) {pepsGisopen};
   \pic at (1.4,0,0) {pepsGisopen};
  \pic at (1.4,0,0.9) {pepsGisopen};
   \filldraw [draw=black, fill=blue] (0,0,0.45) circle (0.04);
    \draw[densely dotted, blue](0,0,0.45) --(-0.35,0,0.45);
         \filldraw [draw=black, fill=red] (0.35,0,0) circle (0.04);   
 \filldraw [draw=black, fill=red] (0.35,0,0.9) circle (0.04); 
    \begin{scope}[canvas is zx plane at y=0.1]
\draw[thin, brown] (0,1.05) ellipse (0.1cm and 0.31cm);
    \draw[thin, brown] (0.9,1.05) ellipse (0.1cm and 0.31cm);
    \draw[thin, brown] (0.45,1.4) ellipse (0.37cm and 0.06cm);
 \end{scope}
   \end{tikzpicture} 
 \longrightarrow
  \begin{tikzpicture}
    \draw (0.35,0.1,-0.8) --(0.35,0,-0.8) --(1.15,0,-0.8)--(1.15,0.1,-0.8);
        \draw (0,0.1,0.7)--(0,0,0.7)--(0,0,-0.7)--(0,0.1,-0.7);
      \draw (1.5,0.1,0.7)--(1.5,0,0.7)--(1.5,0,-0.7)--(1.5,0.1,-0.7);
         \filldraw [draw=black, fill=red] (0.75,0,-0.8) circle (0.05); 
            \filldraw [draw=black, fill=gray] (0.9,0,-0.8) circle (0.04);  
          \filldraw [draw=black, fill=black] (0.6,0,-0.8) circle (0.04);     
      \filldraw [draw=black, fill=blue] (0,0,0) circle (0.05);
      \filldraw (0,0,0.3) circle (0.04);
      \filldraw (0,0,-0.3) circle (0.04);
         \filldraw[draw=black, fill=gray] (1.5,0,0.3) circle (0.04);
      \filldraw[draw=black, fill=gray] (1.5,0,-0.3) circle (0.04);
    \draw[densely dotted, blue](0,0,0) --(-0.35,0,0);
      \node at (-0.05,0,0.8)  {$c$};
      \node at (0.05,0,-0.8)  {$b$};
        \node at (1.55,0,-0.8)  {$a$};
        \node at (0.25,0,-0.8) {$f$};
        \node at (1.2,0,-0.8) {$d$};
  \end{tikzpicture} 
  = \sum_{l,s \in G} |\myinv{y} g y  l\rangle_b \langle l|_c \otimes  |\myinv{x} v_q y s  \rangle_d \langle s|_f \otimes \id_a,
\end{equation*}
where the black and grey dots represent unknown elements $y,x\in G$ respectively. The transformation we want to implement is 
$$  |a\rangle \to  |  (d \myinv{f}) (b \myinv{c}) (d \myinv{f})^{-1}a\rangle.$$
It can be implemented by 
$$\sum_{b,c,d,f} |d\rangle \langle d|(\cdot)  |f\rangle \langle f|\otimes   |b\rangle \langle b|(\cdot) |c\rangle \langle c|  \otimes L_{d \myinv{f}} L_{b \myinv{c}}L^\dagger_{d \myinv{f}}.$$
This operation results in
\begin{equation*}
  \begin{tikzpicture}
    \draw (0.35,0.1,-0.8) --(0.35,0,-0.8) --(1.15,0,-0.8)--(1.15,0.1,-0.8);
        \draw (0,0.1,0.7)--(0,0,0.7)--(0,0,-0.7)--(0,0.1,-0.7);
      \draw (1.5,0.1,0.7)--(1.5,0,0.7)--(1.5,0,-0.7)--(1.5,0.1,-0.7);
         \filldraw [draw=black, fill=red] (0.75,0,-0.8) circle (0.05); 
            \filldraw [draw=black, fill=gray] (0.9,0,-0.8) circle (0.04);  
          \filldraw [draw=black, fill=black] (0.6,0,-0.8) circle (0.04);
                \node at (-0.1,0.1,0.1)  {$g$};     
      \filldraw [draw=black, fill=blue] (0,0,0) circle (0.05);
      \filldraw (0,0,0.3) circle (0.04);
      \filldraw (0,0,-0.3) circle (0.04);
         \filldraw[draw=black, fill=gray] (1.5,0,0.3) circle (0.04);
      \filldraw[draw=black, fill=gray] (1.5,0,-0.3) circle (0.04);
      \node at (-0.05,0,0.8)  {$c$};
      \node at (0.05,0,-0.8)  {$b$};
        \node at (1.55,0,-0.8)  {$a$};
        \node at (0.25,0,-0.8) {$f$};
        \node at (1.2,0,-0.8) {$d$};
    \draw[densely dotted, blue](0.75,0,0) --(-0.35,0,0);
     \draw[densely dotted, purple](0.75,0,0) --(1.5,0,0);
           \filldraw [draw=red, fill=blue] (1.5,0,0) circle (0.05);
               \node at (2,0,0)  {$\phi_q(g)$};
  \end{tikzpicture} 
   = \sum_{l,s,t \in G} |\myinv{y} g y  l\rangle_b \langle l|_c \otimes  |\myinv{x} v_q y s  \rangle_d \langle s|_f \otimes  |\myinv{x} \overbrace{v_q g \myinv{v}_q}^{\phi_q(g)} x t \rangle_a \langle t|,
  \end{equation*}
which describes the permutation effect of moving a flux through a DW. \\

To move a charge, one just has to apply a swap operation between two bonds. This is because the virtual operator of a charge acts only in one bond and not as a string-like operator as the flux case. Let us consider two bonds and synchronize them in the following way:
\begin{equation*}
   \begin{tikzpicture}
 \pic at (0,0,0) {pepsGisopen};
  \pic at (0,0,0.9) {pepsGisopen};
   \pic at (0.7,0,0) {pepsGisopen};
  \pic at (0.7,0,0.9) {pepsGisopen};
         \node[anchor=east] at (-0.1,0.1,0.55)  {$\ket{g}\bra{g}$};
   \filldraw [draw=black, fill=orange] (0,0,0.4) circle (0.04);
    \begin{scope}[canvas is zx plane at y=0.1]
    \draw[thin, brown] (0,0.35) ellipse (0.1cm and 0.31cm);
    \draw[thin, brown] (0.9,0.35) ellipse (0.1cm and 0.31cm);
 \end{scope}
   \end{tikzpicture} 
 \longrightarrow
  \begin{tikzpicture}
    \draw (0,0.1,0.5)--(0,0,0.5)--(0,0,-0.5)--(0,0.1,-0.5);
      \draw (0.4,0.1,0.5)--(0.4,0,0.5)--(0.4,0,-0.5)--(0.4,0.1,-0.5);
      \filldraw [draw=black, fill=orange] (0,0,0) circle (0.04);
      \filldraw (0,0,0.3) circle (0.03);
      \filldraw[draw=black, fill=gray]  (0,0,-0.3) circle (0.03);
            \filldraw (0.4,0,0.3) circle (0.03);
      \filldraw[draw=black, fill=gray]  (0.4,0,-0.3) circle (0.03);
            \node at (0,0,0.65)  {$c$};
      \node at (0.05,0,-0.7)  {$b$};
        \node at (0.45,0,-0.7)  {$a$};
                 \node at (0.45,0,0.65)  {$d$};
  \end{tikzpicture} 
  =|x g \rangle_c \langle yg |_b \otimes  \sum_p |x y^{-1} p  \rangle_d \langle p|_a,
\end{equation*}
Then one has to permute the two bonds to move the charge from one site to the other. In the case when we have to cross the DW the synchronization results in:

 \begin{equation*}
   \begin{tikzpicture}
 \pic at (0,0,0) {pepsGisopen};
  \pic at (0,0,0.9) {pepsGisopen};
   \pic at (0.7,0,0) {pepsGisopen};
  \pic at (0.7,0,0.9) {pepsGisopen};
         \node[anchor=east] at (-0.1,0.1,0.55)  {$\ket{g}\bra{g}$};
   \filldraw [draw=black, fill=orange] (0,0,0.4) circle (0.04);
    \begin{scope}[canvas is zx plane at y=0.1]
    \draw[thin, brown] (0,0.35) ellipse (0.1cm and 0.31cm);
    \draw[thin, brown] (0.9,0.35) ellipse (0.1cm and 0.31cm);
 \end{scope}
              \filldraw [draw=black, fill=red] (0.35,0,0) circle (0.04);   
 \filldraw [draw=black, fill=red] (0.35,0,0.9) circle (0.04); 
   \end{tikzpicture} 
 \longrightarrow
  \begin{tikzpicture}
    \draw (0,0.1,0.5)--(0,0,0.5)--(0,0,-0.5)--(0,0.1,-0.5);
      \draw (0.4,0.1,0.5)--(0.4,0,0.5)--(0.4,0,-0.5)--(0.4,0.1,-0.5);
            \filldraw [draw=black, fill=orange] (0,0,0) circle (0.04);
      \filldraw (0,0,0.38) circle (0.03);
      \filldraw[draw=black, fill=gray]  (0,0,-0.4) circle (0.03);
       \filldraw[draw=black, fill=red]  (0,0,0.2) circle (0.03);
        \filldraw[draw=black, fill=red]  (0,0,-0.2) circle (0.03);
            \filldraw (0.4,0,0.3) circle (0.03);
      \filldraw[draw=black, fill=gray]  (0.4,0,-0.3) circle (0.03);
            \node at (0,0,0.65)  {$c$};
      \node at (0.05,0,-0.7)  {$b$};
        \node at (0.45,0,-0.7)  {$a$};
                 \node at (0.45,0,0.65)  {$d$};
  \end{tikzpicture} .
\end{equation*}
Therefore after permuting the bonds and recovering the DW, we recall that the synchronization is reversible, the charges have been permuted inside the DW by $\phi_q$. 
Analogously, a dyon is also permuted by $\phi$ when it crossed a DW.
\end{proof}

\cref{prop:DWperm} gives us a method to obtain the function $\phi_q$ in small regions, {\it i.e.} determine the anyonic permutation pattern, by measuring the type of the anyon after crossing the domain wall.

\

Moreover, together with the closed loops of operators $v_q$, formed by acting with $U_q$, string-like symmetry defects can be created. These strings, of length $\ell$, in the virtual d.o.f. are created by acting physically, for example, with $\mathcal{O}^{\otimes \ell}_q$, where 
\begin{equation*} \mathcal{O}_q=
   \begin{tikzpicture}
      \draw (-0.8,0.5)--(-0.8,0)--(0.8,0)--(0.8,0.5)--(-0.8,0.5);
      \draw[canvas is zy plane at x=0] (-1,0.25)--(-1,0)--(1,0)--(1,0.25);
      \draw [canvas is zy plane at x=0] (-1,0.25)--(-1,0.5)--(1,0.5)--(1,0.25);
      \filldraw [draw=black, fill=red] (0,0.25,1) circle (0.05);
      \pic at (0,0,0) {3dpepsdown};
      \pic at (0,0.5,0) {3dpeps};
      \node[anchor=north east] at (0,0.25,1) {$\sum_g u_gv_q u_g^{-1}$};
   \end{tikzpicture}
   \; \forall q\in Q.
\end{equation*}
The result of applying $\mathcal{O}^{\otimes \ell}_q$ on $\ket{\Psi_A}$ is the following:
\begin{equation*}
\begin{tikzpicture}
    \foreach \z in {0,0.7,1.4,2.1}{
   	   \foreach \x in {0,0.5,1,1.5,2,2.5}{
           	\pic at (\x,0,\z) {3dpeps}; 
             					} } 
     \foreach \x in {0.5,1,1.5,2}{
             \filldraw [draw=black, fill=magenta]  (\x,0.1,0.7) circle (0.04);
             \node[anchor=east] at (\x+0.08,0.12,0.7) {\tiny{$\mathcal{O}_q$}}; }
 \end{tikzpicture}
 =
\begin{tikzpicture}
    \foreach \z in {0,0.7,1.4,2.1}{
   	   \foreach \x in {0,0.5,1,1.5,2,2.5}{
           	\pic at (\x,0,\z) {3dpeps}; 
             					} } 
 \draw[red](0.5,0,1.05)--(2,0,1.05);
     \foreach \x in {0.5,1,1.5,2}{
             \filldraw [draw=black, fill=red]  (\x,0,1.05) circle (0.04);
             				   }
 \end{tikzpicture}
 \; .
\end{equation*}
We notice that this string cannot be moved freely using the $G$-injectivity of the tensors. The anyons are also permuted by $\phi_q$ when they cross these strings. This is because the exact same operation can be used here in order to cross the string with an anyon (we did not use the fact that the operators $v_q$ were part of a loop).

\section{Discussion}

In this chapter, based on \cref{theo:FTGinjective}, we have classified the different realizations of a global on-site symmetry coming from a finite group $Q$ on $G$-injective PEPS. We have linked this classification to the theory of group extensions and we have analyzed the action of the symmetry on the excitations and on the ground subspace. We have also studied the gauging procedure and domain walls properties in these phases.\\

We anticipate that for each class in our classification, we can construct a representative. This is the content of the next chapter. 
In that sense, if we restrict ourselves to the case where the symmetry, the topological order and the gauge theory are associated to groups, {\it i.e.} $Q$, $\mathcal{D}(G)$ and $\mathcal{D}(E)$ respectively, our classification is complete.

A more general picture of the thesis and the relationship between the chapters can be seen in the following diagram:

\begin{equation}\label{Diathesis}
\tikzstyle{mybox} = [draw=blue, fill=gray!20, very thick,
    rectangle, rounded corners, inner sep=5pt, inner ysep=10pt]
\tikzstyle{fancytitle} =[fill=blue, text=white, ellipse]
\begin{tikzpicture}[transform shape]
 
\node [mybox] (C2) at (-1,0){%
    \begin{minipage}[t!]{0.2\textwidth}
    $E$-isometric PEPS, where $E$ is a group extension of $G$ and $Q$
    \end{minipage}
    };

\node [mybox] (C3) at (5,2){%
    \begin{minipage}[t!]{0.2\textwidth}
    $G$-injective PEPS with a global symmetry given by $Q$.
    ($G \triangleleft E$ and $Q= E/G$)
    \end{minipage}
    };
      
    \node [mybox] (C4) at (5,-4.5){%
    \begin{minipage}[t!]{0.2\textwidth}
  Maps $(\phi,\omega)$ that characterize the action of the symmetry on anyons, {\it i.e.} permutation and SF.
    \end{minipage}
    };
    
\node [mybox] (C5) at (-1,-3){%
    \begin{minipage}[t!]{0.2\textwidth}
Local order parameter for the detection of the SF pattern, {\it i.e.} the class of $\omega$
    \end{minipage}
    };
    
\node [mybox] (C6) at (6,-1){%
    \begin{minipage}[t!]{0.2\textwidth}
Characterization of global symmetries in terms of local tensors
    \end{minipage}
    };

\path[->,draw=gray,line width=1mm, -{Triangle[]}]
    (C2) edge[line width=0.742mm] node[  fill=white, anchor=center, pos=0.5] {Chapter 4}
    						    node[ anchor=center, below right , pos=0.5, text width= 1.9cm ] {(via anyon  condensation)} (C3);

\path[->,draw=gray,line width=1mm, -{Triangle[]}]
    (C3) edge[line width=0.742mm] node[fill=white, anchor=center, above, pos=0.6] {Chapter 2 (via FT)}
    						     (C6); 

\path[->,draw=gray,line width=1mm, -{Triangle[]}]
    (C6) edge[line width=0.742mm] node[  fill=white,anchor=center, above, pos=0.5] {Chapter 3}
    						   node[ anchor=center, below right , pos=0.35, text width=3.5cm ] {(via interplay between topology and symmetry)} (C4); 

\path[->,draw=gray,line width=1mm, -{Triangle[]}]
    (C4) edge[line width=0.742mm] node[ anchor=center, above, pos=0.5] {Chapter 5}
    						   (C5); 						    
\path[<->,draw=gray,line width=1mm]
    (C2) edge[line width=0.742mm] node[ fill=white, anchor=center, above, pos=0.4] {Appendix A}
    node[ anchor=center, right , pos=0.5, text width=3cm ] {(relation between  }
    node[ anchor=center, right , pos=0.6, text width=2cm ] { $(\phi,\omega)$ and $E$ ) }
    						   (C4); 						    
						   
\end{tikzpicture}
\end{equation}

The previous diagram shows how the different chapters are used to study global on-site symmetries in $G$-injective PEPS to give a classification, construction and detection of those phases.

Some comments are in order. 
The classification of SET phases in terms of PEPS requires a Fundamental Theorem for the more general class of PEPS that describes topological order phases, the so-called MPO-injective PEPS \cite{Sahinoglu14}, already mentioned in \cref{dis:qd}. 
At the abstract level, in the language of modular tensor categories, SET phases have been classified in \cite{Barkeshli14} for on-site global symmetries. Since we restrict our study to groups we have two main limitations. First, we do not cover topological ordered phases outside quantum double models of finite groups. Second, there are symmetries of $\mathcal{D}(G)$ not described by our formalism, for example charge-flux permutation or symmetry fractionalization of fluxes. This is related to the fact that the gauged theories of our construction give quantum doubles of the extension, $\mathcal{D}(E)$. For example the toric code with a symmetry permuting the charge and the flux is mapped to the  double Ising model \cite{Barkeshli14}.

\newpage \cleardoublepage 

\chapter{Construction of symmetric phases via \emph{ungauging}} \label{chap:cond}
This chapter is devoted to the realization of the symmetric phases classified in \cref{theo:class}. 
We construct a representative of the phase corresponding to a $G$-isometric PEPS enriched with global on-site symmetry of the group $Q$, characterized by the maps $(\phi, \omega)$ introduced in the previous chapter. For a more detailed connection with the other chapters see diagram \eqref{Diathesis}. 

Let us denote as $E$ the extension group associated with the maps $(\phi, \omega)$, see Appendix \ref{ap:ext}. We will perform a local \emph{ungauging} to the $E$-isometric PEPS tensor. 
This procedure consists in explicitly breaking part of the local symmetry (virtual invariance) to induce a global symmetry. As opposed to gauging, where additional d.o.f. are introduced and the group extension is effectively reconstructed in \cref{sec:gauging}, here we go backwards in the group extension picture. This is done at the level of the local tensor. We start from
\begin{equation}\label{tensorE}
A_E=\frac{1}{|E|}\sum_{\epsilon \in E}L_\epsilon \otimes L_\epsilon \otimes L^{\dagger}_\epsilon \otimes L^{\dagger}_\epsilon\equiv
\begin{tikzpicture}[scale=1.2]
 \pic at (0,0,0) {3dpeps};
\end{tikzpicture}
\end{equation}
where the shape is rounded and restricting the sum to the elements of $G \triangleleft E$, we end up with the following tensor:
\begin{equation}\label{tensorGres}
A_G^{\rm res}=\frac{1}{|E|}\sum_{g\in G}L_g\otimes L_g\otimes L^{\dagger}_g\otimes L^{\dagger}_g \equiv
\begin{tikzpicture}[scale=1.2]
 \pic at (0,0,0) {3dpepsres};
\end{tikzpicture}
\end{equation}
which we denote as \emph{restricted} tensor and we represent it as a squared shape. Note that the local Hilbert space and the dimension of the virtual d.o.f. of $A_E$ and $A_G^{\rm res}$ coincide; it is $|E|$ because $L_g$ is the left regular representation of $E$. This will allow us to compare the action of the same operators acting on both tensors. The main result of this chapter is the following:

\begin{tcolorbox}
\begin{theorem}[Construction of the representatives] \label{theo:consrep}
The following statements hold:
\begin{enumerate}
\item The parent Hamiltonian of $|\Psi_{A_G^{\rm res}}\rangle$ is in the same phase as $\mathcal {D}(G)$ and it breaks part of the local symmetry of $|\Psi_{A_E}\rangle$. The broken local symmetry is degraded to a global symmetry of the group $Q\cong E/G$. 

\item The symmetry of $|\Psi_{A_G^{\rm res}}\rangle$ also corresponds to a symmetry that acts on the space of quasiparticle excitations and of the ground subspace. This action permutes between particle types and can act \emph{projectively} over charges. These effects are characterized by the maps $(\phi,\omega)$ which determine the extension group $E$ as explained in Appendix \ref{ap:ext}.
\end{enumerate}
\end{theorem}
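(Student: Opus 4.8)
The plan is to prove \cref{theo:consrep} in two parts, following the two statements, and to make heavy use of the gauging computation of \cref{sec:gauging} run ``in reverse''.

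\emph{Part 1 (topological order and the emergent global symmetry).} First I would check that $A_G^{\mathrm{res}}$, although living on the enlarged $|E|$-dimensional virtual space, is genuinely $G$-isometric after a finite amount of blocking. The point is that $\frac{1}{|E|}\sum_{g\in G}L_g\otimes L_g\otimes L_g^\dagger\otimes L_g^\dagger$ is, up to the change of basis sorting $\mathbb{C}[E]$ into right $G$-cosets, a direct sum of $|E|/|G|=|Q|$ copies of the left regular representation of $G$ on each leg, tensored with a trivial ancilla labelling the coset; hence it is semiregular, the projector $\mathcal{P}_G$ of \cref{Ginje} is reproduced, and the parent Hamiltonian is the commuting one whose ground space realises the pair-conjugacy-class basis of $G$. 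This puts $|\Psi_{A_G^{\mathrm{res}}}\rangle$ in the same phase as $\mathcal{D}(G)$. Next I would produce the emergent global symmetry: for each $q\in Q$ pick a lift $\bar q\in E$ and set $U_q = L_{\bar q}^{\mathrm{phys}}$, the left regular action of $\bar q$ on the \emph{physical} index (which has the same dimension $|E|$). Since $A_E$ is $E$-invariant, $U_q$ acts trivially on $|\Psi_{A_E}\rangle$; on the restricted tensor the same manipulation used in \cref{sec:gauging} shows that conjugating $\sum_{g\in G}L_g^{\otimes}\cdots$ by $L_{\bar q}$ permutes the $G$-coset labels and acts by $L_{\bar q}g L_{\bar q}^{-1}=\phi_q(g)$ inside each block, so $U_q^{\otimes n}|\Psi_{A_G^{\mathrm{res}}}\rangle=|\Psi_{A_G^{\mathrm{res}}}\rangle$ with virtual operators $v_q$ realised by $L_{\bar q}$ restricted to the appropriate block. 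That different lifts $\bar q$ differ by $G$, so $v_{\bar q}$ is well defined up to the gauge freedom of \cref{equivdef}, and that $\{v_q\}$ is only a projective $Q$-representation with $2$-cocycle $\omega(k,q)=v_kv_qv_{kq}^{-1}=\overline{k}\,\overline{q}\,\overline{kq}^{-1}\in G$, which is precisely the cocycle classifying $E$ as an extension of $G$ by $Q$. The statement that the Hamiltonian ``breaks part of the local symmetry'' is then the observation that $A_G^{\mathrm{res}}$ is no longer invariant under $L_\epsilon^{\otimes 4}$ for $\epsilon\notin G$, only under $L_g^{\otimes 4}$ for $g\in G$, while the quotient symmetry survives globally.

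\emph{Part 2 (action on anyons and ground space).} Here I would simply invoke \cref{prop:perm}, \cref{prop:sf} and the computations of \cref{sect:anyons}: once Part 1 has exhibited $|\Psi_{A_G^{\mathrm{res}}}\rangle$ as a $G$-isometric PEPS with a $Q$-symmetry whose associated maps are exactly $(\phi,\omega)$, the permutation of fluxes/charges/dyons by $\phi_q$ and the projective (symmetry-fractionalised) action on charges governed by the class of $\omega$ in $H^2_\phi(Q,G)$ are immediate from those propositions, and Appendix~\ref{ap:ext} supplies the dictionary between $(\phi,\omega)$ and the extension group $E$. So the only real content of Part 2 is checking compatibility: that the $(\phi,\omega)$ read off from $v_q=L_{\bar q}|_{\mathrm{block}}$ coincide, up to the equivalence of \cref{def:eqrelop}, with the $(\phi,\omega)$ one started from when defining $E$. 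This is a direct computation with coset representatives of $G$ in $E$.

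\emph{Expected main obstacle.} The delicate step is Part 1's claim that $A_G^{\mathrm{res}}$ becomes $G$-isometric (in particular semiregular) after blocking, and that the emergent $U_q$ really is a \emph{linear} on-site symmetry of the blocked PEPS even though $\{v_q\}$ is only projective virtually — one has to track carefully how the coset-labelling ancilla transforms, verify that the physical regular representation of $\bar q$ does not pick up a phase, and confirm that the would-be projective phases cancel between the $v_q\otimes v_q\otimes v_q^{-1}\otimes v_q^{-1}$ factors exactly as in \cref{Glemma} and \cref{sec:gauging}. The rest is bookkeeping with the short exact sequence $1\to G\to E\to Q\to 1$ and reuse of results already proved in \cref{chap:classsymGPEPS}.
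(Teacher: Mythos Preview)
Your proposal is correct and its Part~1 is essentially the paper's argument: the paper shows directly that $A_G^{\mathrm{res}}\cong A_G\otimes\id_{|Q|}^{\otimes4}$ by writing $L_{(g,e)}^E=L_g^G\otimes\id_Q$ in the basis $\mathbb{C}[E]\cong\mathbb{C}[G]\otimes\mathbb{C}[Q]$ (no blocking is actually needed), and then computes $U_{\epsilon_k}A_G^{\mathrm{res}}=A_G^{\mathrm{res}}(L_{\epsilon'_k}^{\otimes2}\otimes L_{\epsilon'_k}^{\dagger\otimes2})$ for $\epsilon_k\in E\setminus G$ to exhibit the degraded global $Q$-symmetry, exactly as you outline. (One small slip: the physical index has dimension $|E|^4$, not $|E|$; the symmetry operator is $U_\epsilon=L_\epsilon\otimes L_\epsilon\otimes L_\epsilon^\dagger\otimes L_\epsilon^\dagger$.)

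For Part~2 the routes genuinely differ. You reduce to \cref{prop:perm} and \cref{prop:sf}, arguing that once $A_G^{\mathrm{res}}$ is identified as a $G$-isometric PEPS with virtual symmetry $v_q=L_{\epsilon_q}$, those propositions immediately give the permutation and fractionalization governed by $(\phi,\omega)$ with $\phi_q(g)=\epsilon_q g\epsilon_q^{-1}$ and $\omega(k,q)=\epsilon_k\epsilon_q\epsilon_{kq}^{-1}$. This is valid and economical. The paper instead reproves everything case by case for the restricted model: it writes down explicit anyon operators adapted to the enlarged virtual space (e.g.\ the restricted charge $C_\sigma^t=\sum_{g\in G}\chi_\sigma(tg)\sum_{q\in Q}|g\epsilon_q\rangle\langle g\epsilon_q|$ and analogous restricted dyons), and then computes the action of $L_{\epsilon'_k}$ on each by hand, invoking Clifford's theorem for the charge permutation. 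What the paper's longer route buys is (i) complete concreteness about where the $(\phi,\omega)$ of the extension show up physically, without appealing back to the abstract classification, and (ii) the explicit restricted anyon operators, which are reused immediately afterwards in the condensation/confinement analysis of Section~4.2. Your shortcut would leave those operators unbuilt.
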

\end{tcolorbox}

In Ref.\cite{Gu14}, the authors propose, at the level of modular tensor categories, that anyon condensation can be a mechanism to enrich topological phases with global symmetries. 
Refs. \cite{Bais09,ThesisMark, Bais03, Bais02, Bais06, Bais07} relate anyon condensation with an explicit symmetry breaking in gauge theories. Moreover, Ref.\cite{Bombin08} exhaustively study confinement and condensation for quantum doubles in terms of groups algebras, using lattice models.

The construction proposed in this chapter relates both approaches. Explicitly, we show how breaking partially a local symmetry, a global symmetry can emerge and the anyon condensation pattern can be identified.

If we consider the construction of SET phases without the connection with anyon condensation, the phases carried out in this chapter include the ones constructed in \cite{Hermele14,Tarantino16}, for $G$ abelian, as exactly solvable lattice models. 
Ref.\cite{Jiang15} also studies how the symmetry fractionalizes on $G$-injective PEPS when $G$ is an abelian group. There, the action of a global symmetry is assumed to be a gauge transformation. This is proven in the \cref{chap:FT}. Ref.\cite{Jiang15} also restricts itself to the case where the symmetry does not permute the anyons.
We also notice the general construction of SET phases of Refs.\cite{Heinrich16,Cheng17} as exactly solvable lattice models and Ref.\cite{Williamson17} as PEPS realizations.

\section{Proof of \cref{theo:consrep}}
To prove statement 1 of \cref{theo:consrep}, we first analyze the properties of the tensor $A_G^{\rm res}$, that is, we characterize its local and global symmetries. 

The state $|\Psi_{A_G^{\rm res}}\rangle$ is in the same topological phase as the $G$-isometric PEPS up to discarded local entangled degrees of freedom. This is derived from the fact that $G$ is isomorphic to the normal subgroup $\{ (g,e) |\; g\in G \}\subset E$ so we can choose $L_{(g,e)}^E= L_g^{G}\otimes\mathds{1}_{Q}$.
This is because acting on the group algebra basis $\mathbb{C}[E]=\mathbb{C}[G] \otimes \mathbb{C}[Q]$: 
\begin{equation}
\begin{split} 
L_{(g,e)}^E|n,k\rangle&=|g\phi_{e}(n) \omega(e,h),k\rangle=\\
 &=|gn,k\rangle= (L_g^{G}\otimes L_e^{Q})|g,k\rangle,\notag
\end{split}
\end{equation}
where we have used that $\omega(e,h)=e$ and $\phi_{e}=\id$ since these choices do not change the class of the extension. Thus,
$$A_G^{\rm res}\cong A_G\otimes \mathds{1}_{|Q|}^{\otimes 4};$$
where the identity operators form maximally entangled pair states between neighbour sites. Let us compare the local symmetries of both tensors by defining the operator $U_\epsilon=L_\epsilon \otimes L_\epsilon \otimes L^{\dagger}_\epsilon \otimes L^{\dagger}_\epsilon$ for all $\epsilon\in E$. The local symmetry, equivalent to the virtual invariance, of the tensors implies the corresponding invariance of the states:
$$U_\epsilon A_E=A_E \Rightarrow   (U^{\mathcal{R}}_\epsilon \otimes \mathds{1}^{{\rm rest}})|\Psi_{A_E}\rangle = |\Psi_{A_E}\rangle \; \forall \epsilon \in E,$$
where $\mathcal{R}$ is any region of the lattice; and analogously for the restricted tensor:
$$U_g A_G^{\rm res}=A_G^{\rm res} \Rightarrow   (U^{\mathcal{R}}_g\otimes \mathds{1}^{{\rm rest}})|\Psi_{A_G^{\rm res}}\rangle = |\Psi_{A_G^{\rm res}}\rangle \; \forall g \in G.$$
However the elements $\epsilon \in E \setminus G$ do not belong to the local symmetry of $A_G^{\rm res}$; the restricted tensor breaks the local symmetry of $A_E$ to the normal subgroup $G$ of $E$. The cosets of $E$ by $G$ form a group $Q\cong E/G$ whose elements we denote as $q,k$ and $z$. We can take representatives $\epsilon_q\in E$ of these cosets $\epsilon_q G=\{g\epsilon_q; g\in G\}$ in correspondence with all $q\in Q$.
We now consider the action on $A_G^{\rm res}$ of the operators that do break the symmetry, that is $U_{\epsilon_k}$ for $k\neq e $  in $Q$; 
\begin{equation}\label{symAres}
U_{\epsilon_k} A_G^{\rm res}= \sum_{\tilde{\epsilon}_k\in \epsilon_kG}U_{\tilde{\epsilon}_k}=A_G^{\rm res} U_{\epsilon'_k}\neq A_G^{\rm res} \quad \forall \epsilon'_k \in \epsilon_kG,
\end{equation}
where we have used that $G$ is normal in $E$.  The differences of the actions on the two tensors  are represented graphically as follows:
  \begin{equation*} 
    \begin{tikzpicture}[baseline=-1mm, scale=1.2]
      \pic at (0,0,0) {3dpepsres};
      \draw (0,0,0) -- (0,0.35,0);
      \filldraw[draw= black,fill=purple] (0,0.2,0) circle (0.05);
      \node[anchor=east] at (0,0.3,0) {$U_{\epsilon_k}$};
    \end{tikzpicture} =
    \begin{tikzpicture}[scale=1.2]
      \draw (-0.7,0,0) -- (0.7,0,0);
      \draw (0,0,-0.9) -- (0,0,0.9);
      \pic at (0,0,0) {3dpepsres};
      \filldraw [draw= black,fill=red] (-0.5,0,0) circle (0.05);
      \filldraw [draw= black,fill=red] (0.5,0,0) circle (0.05);
      \node[anchor=south] at (-0.5,0,0) {$\myinv{{\epsilon}_k}$};
      \node[anchor=north] at (0.5,0,0) {${\epsilon}_k$};
      \filldraw [draw= black,fill=red] (0,0,-0.6) circle (0.05);
      \filldraw [draw= black,fill=red] (0,0,0.6) circle (0.05);
       \node[anchor=south] at (0,0,-0.6) {${\epsilon}_k$};
      \node[anchor=north] at (0,0,0.6) {$\myinv{{\epsilon}_k}$};
    \end{tikzpicture}
   \;\; \longleftrightarrow \;\;
      \begin{tikzpicture}[baseline=-1mm, scale=1.2]
      \pic at (0,0,0) {3dpeps};
      \draw (0,0,0) -- (0,0.35,0);
      \filldraw[draw= black,fill=purple] (0,0.2,0) circle (0.05);
      \node[anchor=east] at (0,0.3,0) {$U_{\epsilon_k}$};
    \end{tikzpicture} =
    \begin{tikzpicture}[scale=1.2]
      \pic at (0,0,0) {3dpeps};
    \end{tikzpicture}.
  \end{equation*}
The local symmetry allows us to write:
$$U_{\epsilon_k} A^{\rm{res}}_G =U_{\epsilon'_k} A^{\rm{res}}_G \quad {\rm if} \;  \epsilon'_kG= \epsilon_kG,$$
and also that $ U_{\epsilon_k} U_{\epsilon'_k} A^{\rm{res}}_G= U_{\epsilon_{kk'}} A^{\rm{res}}_G$. From Eq.(\ref{symAres}) it follows that concatenating the tensors $U_{\epsilon_k} A^{\rm{res}}_G$ the virtual operators $L_{\epsilon'_k}$ cancel out in the contracted legs. Combining this fact with the previous equations, the most general form of symmetry over the state reads
\begin{equation}\label{eq:symstateAres}
\bigotimes_{x\in \Lambda}  U_{\epsilon(x)} |\Psi_{A^{\rm{res}}_G}\rangle = |\Psi_{A^{\rm{res}}_G}\rangle  \;  {\rm if} \; \exists ! k \in Q:\; \epsilon(x) \in \epsilon_kG \; \forall x\in \Lambda,
\end{equation}
where $\Lambda$ denotes the set of lattice sites. The operators of the global symmetry forms a representation of the group $Q$, using the projection from $E$ to $E/G: \epsilon'_k\mapsto k$, on the state $|\Psi_{A^{\rm{res}}_G}\rangle$. Therefore, the $E$ gauge symmetry of  $|\Psi_{A_E}\rangle$ has been reduced to a $G$ gauge symmetry plus a $Q\cong E/G$ global symmetry on $|\Psi_{A_G^{\rm res}}\rangle$. Let us note that the transformation from Eq.(\ref{tensorE}) to Eq.(\ref{tensorGres}) gives us naturally the symmetry operators as the ones discarded in the sum. This finishes the proof of statement 1.

We now analyze the effect of the global symmetry on the anyons, proposing first the appropriate anyonic operators for the restricted tensor, and on the ground subspace ($|\Psi_{A_G^{\rm res}}\rangle$ is only one state of the ground subspace basis). To prove statement 2 we separate the analysis of fluxes, charges, dyons and ground subspace:

\begin{itemize}

\item {\bf Fluxes.} A pair of fluxes is represented as a string of $L_g$ and $L^\dagger_g$  operators, where $g\in G$, placed on the virtual d.o.f. of $|\Psi_{A_G^{\rm res}}\rangle$.  The class of the flux is determined by the conjugacy class of $g$: $[g]$. The string can be deformed freely due to the $G$-invariance of $A_G^{\rm res}$ except on the endpoints; where the excitations are placed. We apply the operator $U_{\epsilon_k}$, $ \epsilon_k \in E \setminus G$ to each lattice site on the state with the flux $[g]$. This action only changes the string from $g$ to $\epsilon'_kg{\epsilon'}^{-1}_k\in G$. Therefore, the operators also correspond to a symmetry in the space of flux excitations.  In fact, this action that we denote as $\phi_k(g)= \epsilon_kg \epsilon_k^{-1}$ can map a conjugacy class $[g]$ into another $[\phi_k(g)]$ if $E$ is non-abelian. This operation corresponds to a representation of the group $Q$ which permutes the class of the fluxes. Let us prove this. 
Given a conjugacy class of $G$ the action $\phi_k$ only depends on the coset $\epsilon_kG\equiv k\in Q$. Let us take $\epsilon_k$ and $\epsilon'_k\in \epsilon_kG$, we can write $\epsilon'_k=g'\epsilon_k$ for some $g'\in G$ and then $\phi'_{k}(g)=\epsilon'_kg{\epsilon'}^{-1}_k= g'\phi_k(g) g'^{-1}\in [\phi_k(g)]$. Also if $\tilde{g}\in [g]$ it follows that $\phi'_{k}(\tilde{g})=\epsilon'_k\tilde{g}{\epsilon'}^{-1}_k= \epsilon'_kg_1g g^{-1}_1{\epsilon'}^{-1}_k= g_2\epsilon'_kg {\epsilon'}^{-1}_k g^{-1}_2= g_2 g'\phi_q(g) (g_2g')^{-1}$ for some $g_1,g_2\in G$. In the same way it can be shown that $\phi_q \circ \phi_k(g')$ belongs to the same conjugacy class as $\phi_{qk}(g)$ for any $g'\in [g]$. The previous construction depends on the extension group $E$. In fact we have shown that the map $\phi$ characterizing this extension is recovered in the action over the fluxes.

\item  {\bf Charges.}
We propose as the virtual operator of a pair charge-anticharge the following:
$$\Pi^{\rm res}_{\sigma,t}=\sum^{G}_{g,h}\chi_{\sigma}(th^{-1}g)\left(\sum_{q\in Q}|g\epsilon_q\rangle\langle g \epsilon_q| \right)\otimes \left(\sum_{z\in Q}|h\epsilon_z\rangle\langle h \epsilon_z|\right),$$
where $\chi_{\sigma}$ is the character of the irrep $\sigma$ of $G$.
This operator is constructed in order to have the correct braiding properties with the flux operator $L_g$ where $g\in G \subset E$. $\Pi^{\rm res}_{\sigma,t}$ is invariant under conjugation by $L^{\otimes2}_g$ for all $g\in G$ which ensures a zero total charge. 
To study the effect of the symmetry operators over the charges it would be enough to proceed analogously to  section \ref{sect:anyons}. That is, analyze how the charge, modified by the symmetry operators, behaves under braiding with fluxes. Instead of that, we work explicitily with the action of the operator $U_{\epsilon_k}$ over an individual charge:
\begin{equation}\label{eq:defQres}
C_{\sigma}^t=  \sum_{g\in G}\chi_{\sigma}(tg) \sum_{q\in Q}|g\epsilon_q\rangle\langle g \epsilon_q|.
\end{equation}
The action on the charge is 
$$L_{\epsilon'_k}C_{\sigma}^t L^{\dagger}_{\epsilon'_k}=\sum_{g\in G}\chi_{\sigma}\circ\phi^{-1}_k(\phi_k(t){g'}^{-1}g) \sum_{q\in Q}|g\epsilon_k\epsilon_q\rangle\langle g \epsilon_k \epsilon_q|,$$
where we have decomposed $\epsilon'_k=g'\epsilon_k$. Clifford's Theorem \cite{Clifford37} establishes that given an irrep $\pi_{\sigma}$ of $G$, the operator $\pi_{\sigma}\circ \phi^{-1}_k$ corresponds to $\pi_{\sigma'}$ where $\sigma'\equiv \sigma' (\sigma, \epsilon^{-1}_k)$ is another irrep of $G$ and only depends on $k\in Q$. Therefore the action of the symmetry operator includes a permutation of the particle type of the charge (according to $\phi_k$):
$$L_{\epsilon'_k}C_{\sigma}^t L^{\dagger}_{\epsilon'_k}=\sum_{g\in G}\chi_{\sigma'}(t'g) \sum_{q\in Q}|g\epsilon'_q\rangle\langle g  \epsilon'_q|,$$
where $t'=\phi_k(t){g'}^{-1}$ and $\epsilon'_q=\epsilon_k\epsilon_q$ is just a relabeling which changes the representative of each coset. The map $\omega$ that characterizes the extension is defined as $\omega(q,k)\equiv \epsilon_k \epsilon_q\epsilon^{-1}_{kq}$ and if it is trivial, i.e. $\omega(q,k)=e$ for all $q,k\in Q$, then
$$L_{\epsilon'_k}C_{\sigma}^t L^{\dagger}_{\epsilon'_k}= C_{\sigma' }^{t'}.$$
We can also show the following:
$$ \label{eq:SF}L_{\epsilon'_k}L_{\epsilon'_q} C_{\sigma}^t L^{\dagger}_{\epsilon'_q}  L^{\dagger}_{\epsilon'_k}=  L_{\epsilon'_{kq}} C_{\sigma}^{t g_{kq}g_q^{-1} g_k^{-1} \omega^{-1}(k,q)} L^{\dagger}_{\epsilon'_{kq}}, 
$$
when $\phi$ is trivial. The virtual action of the operators $U_{\epsilon_k}U_{\epsilon_q}$ and $U_{\epsilon_{kq}}$ over $C_{\sigma}^t$ is related by the braiding with the flux $\omega(k,q)$ up to gauge redundancies ($g_k g_q g_{kq}^{-1}$). That is, the symmetry acts projectively on individual charges and this model realizes the  symmetry fractionalization class corresponding to $\omega$.

 \item {\bf Dyons.} The action of the operator $U_{\epsilon_k}$ over a state with a dyon is the conjugation by $L_{\epsilon'_k}$ on the virtual d.o.f. The dyon is associated with an irrep $\alpha$ of the normalizer of a conjugacy class ok $G$. If $C_h$ is this conjugacy class we choose as its representative $h$ and the associated normalizer $N_h$. Again $k_j$ will denote the representatives of the right cosets of $G$ by $N_h$ ($k_1=e,k_2,\cdots,k_\kappa$ where $\kappa=|G|/|N_h|$). With this notation the charge part of the restricted dyon can be associated with the following operator (at the end plaquette of a string of $L_h$ corresponding to the flux part):  
$$
\sum_{n \in N_{h}} \chi_\alpha(w n)\sum_{q\in Q} \sum^\kappa_{j=1} \ket{n k_j \epsilon_q}\bra{n k_j \epsilon_q},
$$
where $w$ belongs to $N_h$. We decompose $\epsilon'_k=g\epsilon_k$ and then $L_{\epsilon'_k}\ket{n k_j \epsilon_q}=\ket{g\phi_k(nk_j)\epsilon_k\epsilon_{q}}$, where $\phi_k(n)=\epsilon_kn\epsilon^{-1}_k$  which goes from $N_{h}$ to $N_{\phi_k(h)}$. If $\epsilon'_k\in G$ the action is equivalent to a braiding with one of the fluxes of the model: the charge part will transform equivalently as Eq.(\ref{eq:braidyon}) and the flux part will be $\epsilon'_kh{\epsilon'}^{-1}_k\in [h]$ (which does not change the flux type). The action of the conjugation over the charge part of the dyon is:
\beq \label{eq:restdyonsym}
\sum_{m \in N_{\phi_k(h)}} \chi_\alpha\circ \phi^{-1}_k( \phi_k(w) m)\sum_{q\in Q} \sum^\kappa_{j=1} \ket{g m \phi_k(k_j) \epsilon_k\epsilon_q}\bra{g m \phi_k(k_j) \epsilon_k\epsilon_q}
\eeq
and over the flux part is $\phi_k(h)$. If $\epsilon_k\in E\setminus G$ it can be the case that $[h]\neq [\phi_k(h)] $ and then the flux part has been permuted to another class. Let us see that the action of Eq.(\ref{eq:restdyonsym}) also describes a permutation in the charge part. It is clear that $\phi_k(N_h)=N_{\phi_k(h)}$ and also that the representatives of the cosets can be given by $\phi_k(k_j)$. Since $\phi_k$ is an automorphism of $G$ the normalizers of $[\phi_k(h)]$ and $[h]$ are isomorphic. Then by Clifford Theorem we can conclude that $\chi_\alpha\circ \phi^{-1}_k$ is the character of another irrep of the group $N_{[\phi_k(h)]}\cong N_{[h]}$.

\item {\bf Composition of excitations.} We have analyzed above the action of the symmetry over individual pairs of different classes of anyons. Here we show the general setting where a superposition of anyons is placed on the lattice. Let us denote as $\ket{\Psi_A(a^{[x_1]}_1,\cdots,a^{[x_N]}_N)}$ the PEPS associated with the anyon $a_j$ placed on the plaquette/bond $x_j$ (depending whether the anyon is a flux/charge) for all $j=1,\cdots, N$ where they do not overlap $x_j\neq x_i$. We require that the global topological charge is zero, otherwise $\ket{\Psi_A(a^{[x_1]}_1,\cdots,a^{[x_N]}_N)}=0$. This can be satisfied easily if, for each anyon, its antiparticle is in the superposition. We can conclude from the previous points that
$$ U^{\otimes n}_{\epsilon_k} \ket{\Psi_A(a^{[x_1]}_1,\cdots,a^{[x_N]}_N)}= \ket{\Psi_A(\phi_k(a_1)^{[x_1]},\cdots,\phi_k(a_N)^{[x_N]})}.$$
Since the symmetry goes from fluxes to fluxes and charges to charges, the energy does not change by acting with the symmetry: 
\begin{align*}
H U^{\otimes n}_{\epsilon_k} \ket{\Psi_A(a^{[x_1]}_1,\cdots,a^{[x_N]}_N)}&= H \ket{\Psi_A(\phi_k(a_1)^{[x_1]},\cdots,\phi_k(a_N)^{[x_N]})} \\
 \mathcal{E}_N \ket{\Psi_A(\phi_k(a_1)^{[x_1]},\cdots,\phi_k(a_N)^{[x_N]})} &= U^{\otimes n} H\ket{\Psi_A(a^{[x_1]}_1,\cdots,a^{[x_N]}_N)}
\end{align*}
This can be seen as the commutation between the symmetry operators and the Hamiltonian.

 \item {\bf Ground states.} The parent Hamiltonian corresponding to $A_G^{\rm res}$ has a degenerate ground subspace on the torus. This subspace is spanned by placing two non-contractible loops, one in each direction, of virtual operators $L_g$ and $L_h$ on the torus. We denote these states as $|\Psi\left[A_G^{\rm res}|(g,h)\right]\rangle$ with $gh=hg$ and two of theses states $(g,h)$ and $(g',h')$ are equivalent if there exists a $p\in G$ such that $g=pg'p^{-1}$ and $h=ph'p^{-1}$. If we apply the symmetry operator $U_{\epsilon_k}$ at each lattice site we obtain:
$$ U^{\otimes \Lambda}_{\epsilon_k}|\Psi\left[A_G^{\rm res}|(g,h)\right]\rangle = |\Psi\left[A_G^{\rm res}|(\phi_k(g),\phi_k(h))\right]\rangle, $$ 
which also belongs to the ground subspace because $\phi_k(h)\phi_k(g)=\phi_k(g)\phi_k(h)$ and then this action is a representation of $Q$ permuting the different ground states.
\end{itemize}

This finishes the proof of \cref{theo:consrep} since we have shown that the symmetry of $|\Psi_{A_G^{\rm res}}\rangle$ acts on the anyons and on the ground subspace via the maps $(\phi,\omega)$ (that are determined by $E$).

\section{Condensation and Confinement}
In this section we compare the excitations of both models. In particular we place the excitations of the parent Hamiltonian of $|\Psi (A_E)\rangle$ on a background of $A_G^{\rm res}$ tensors to study their topological properties. To do so we will use the parent Hamiltonian of $A_G^{\rm res}$ which is defined as $H^{\rm res}=\sum_{i\in\Lambda}\mathfrak{h}^{\rm res}_i$ where the local projector is $\mathfrak{h}^{\rm res}_i=\id-\Pi_{\mathcal{S}^{\rm res}_{2\times2}}$ and the subspace $\mathcal{S}^{\rm res}_{2\times2}$ is defined as:
 \begin{equation}   
  \mathcal{S}^{\rm res}_{2\times2}= \left \{  
  \Gamma^{\rm res}_{2\times 2}(B)=
 \begin{tikzpicture}
            \begin{scope}[canvas is zx plane at y=0]
                   \filldraw[pink]  (-0.6,-0.5) rectangle (1.1,1.3); 
                   \draw  (-0.6,-0.5) rectangle (1.1,1.3);
                   \filldraw[white]  (-0.5,-0.4) rectangle (1,0.9); 
                   \draw  (-0.5,-0.4) rectangle (1,0.9); 
            \end{scope}
         \pic at (0,0,0) {3dpepsres};
        \pic at (0,0,0.5) {3dpepsres};
         \pic at (0.5,0,0) {3dpepsres};
        \pic at (0.5,0,0.5) {3dpepsres};
    \node at (1.1,0,0.25) {$B$};
  \end{tikzpicture}
  |B\in \left( \mathbb{C}^{|E|}\right)^{\otimes 8}\right \}.
\end{equation}  
The operators corresponding to single excitations can behave differently depending on the background; they can no longer be associated with topological quasiparticles (showing confinement or condensation) or they have to be associated with a superposition of quasiparticles (i.e spliting) or two of them represent the same particle type (identification).

These behaviours has been previously studied as lattice models in Ref \cite{Bombin08}. We will analyze these behaviours for each type of excitation.

\begin{tcolorbox}
\begin{theorem}\label{fluxrestmodel}
The following statements hold
\begin{enumerate}
\item The flux excitations of the state formed with $A_E$ are also excitations of the parent Hamiltonian of $A^{\rm{res}}_G$. 
\item The fluxes associated with elements of $E$ that do not belong to $G$ cannot be moved using the $G$-invariance of $A^{\rm{res}}_G$, they become confined in that model. The string of confined fluxes cannot be extended freely through the lattice of $A^{\rm{res}}_G$ and the energy penalty of the excitation depends on the length of the string. 
\item Some of the fluxes corresponding to $A_E$ that are not confined can be split in a superposition of fluxes of $G$: they are no longer simple anyons.
 \end{enumerate}
\end{theorem}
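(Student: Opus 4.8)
The plan is to lift everything to the identification $\C[E]\cong\C[G]\otimes\C[Q]$ already used in the proof of \cref{theo:consrep}, under which $L^E_{(g,e)}=L^G_g\otimes\id_{|Q|}$ and $A_G^{\rm res}\cong A_G\otimes\id_{|Q|}^{\otimes 4}$. The decomposition $A_G^{\rm res}\cong A_G\otimes\id_{|Q|}^{\otimes 4}$ also gives $\mathcal{S}^{\rm res}_{2\times2}\cong \mathcal{S}^{\mathcal{D}(G)}_{2\times2}\otimes W_0$ for a fixed subspace $W_0$ coming from the maximally entangled $\C[Q]$ pairs, so $H^{\rm res}$ restricted to the $\C[G]$ factor is exactly the $\mathcal{D}(G)$ parent Hamiltonian. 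Since the virtual bond dimension is $|E|$ for both $A_E$ and $A_G^{\rm res}$, the flux strings of $\mathcal{D}(E)$ --- operators $\bigotimes_{i\in\gamma}(L^E_\epsilon)^{m_i}$ along a path $\gamma$ --- are well-defined on $|\Psi_{A_G^{\rm res}}\rangle$, and the whole proof reduces to deciding when such a string can be absorbed or deformed by the $G$-invariance of $A_G^{\rm res}$.

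For statement 1 and the confinement part of statement 2 I would write $\epsilon=g\,\epsilon_q$ for a representative $\epsilon_q$ of the coset $q=\epsilon G\in Q$ and inspect one segment of the string. The pulling-through moves available for $A_G^{\rm res}$ are those of \cref{Ginva} and \cref{eq:Gmoves} with $u_g=L^E_g$, $g\in G$, and only those. Hence if $\epsilon\in G$ the string is absorbed in the bulk exactly as in $\mathcal{D}(G)$: on the $\C[G]$ tensor factor one recovers the genuine $\mathcal{D}(G)$ flux $[\epsilon]$, an eigenstate of $H^{\rm res}$ with energy localized at the two endpoint plaquettes, while on the $\C[Q]$ factor the string acts trivially. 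If $\epsilon\notin G$, the segment $(L^E_\epsilon)^{m_i}$ on an edge is not of the form $L^E_g\otimes L^E_g\otimes(L^E_g)^{-1}\otimes(L^E_g)^{-1}$ on the four legs meeting that edge, so no background tensor can absorb it; consequently the local term $\mathfrak{h}^{\rm res}_i=\id-\Pi_{\mathcal{S}^{\rm res}_{2\times2}}$ is violated at every plaquette crossed by $\gamma$. In either case the state is an excitation of $H^{\rm res}$; in the first case it is a mobile anyon, in the second a defect line that cannot be deformed off $\gamma$, with energy growing linearly in $|\gamma|$ (lengthening $\gamma$ by one edge violates one additional projector).

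For the splitting, statement 3, I would restrict to an uncondensed, unconfined $\mathcal{D}(E)$ flux, i.e. one whose $E$-conjugacy class $[\epsilon]_E$ has a representative in $G$; since $G\triangleleft E$ this forces $[\epsilon]_E\subseteq G$. Because $G$-conjugation is finer than $E$-conjugation, $[\epsilon]_E$ decomposes as a disjoint union of $G$-conjugacy classes $[\epsilon]_E=[\epsilon_1]_G\sqcup\cdots\sqcup[\epsilon_m]_G$, each $[\epsilon_j]_G$ labelling a genuine $\mathcal{D}(G)$ flux of $H^{\rm res}$. The single $\mathcal{D}(E)$ string, once resolved into superselection sectors of $H^{\rm res}$, is then a superposition over the $m$ sectors $[\epsilon_1]_G,\dots,[\epsilon_m]_G$; when $m>1$ the $\mathcal{D}(E)$ flux is not a simple anyon of the restricted model. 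Concretely I would make this explicit with the non-contractible-loop / MES description: the minimally entangled states of $H^{\rm res}$ attached to the classes $[\epsilon_j]_G$ are obtained from one and the same $\mathcal{D}(E)$ loop string by choosing different $G$-conjugate representatives, which is precisely the statement that the $\mathcal{D}(E)$ anyon has been split (mirroring the lattice-model analysis of \cite{Bombin08}).

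The step I expect to be the main obstacle is making "the energy penalty depends on the length of the string" fully rigorous rather than heuristic: one has to prove that for $\epsilon\notin G$ the reduced state of $|\Psi_{A_G^{\rm res}}\rangle$ with a string segment inserted genuinely falls outside $\mathcal{S}^{\rm res}_{2\times2}$ at each site along $\gamma$, and that no local rearrangement of the neighbouring tensors can remove the violation. This is an analogue of \cref{Glemma} for $A_G^{\rm res}$ --- namely that $\mathrm{Im}\,\Gamma^{\rm res}_{2\times2}$ is invariant exactly under the $G$-moves and nothing more --- so that a segment of $L^E_\epsilon$ with $\epsilon\notin G$ provably takes the local reduced state out of the ground-space image. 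Everything else is bookkeeping with the fixed-point tensors and the coset decomposition $E=\bigsqcup_q G\epsilon_q$.
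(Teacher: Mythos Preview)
Your approach via the factorisation $A_G^{\rm res}\cong A_G\otimes\id_{|Q|}^{\otimes 4}$ is genuinely different from the paper's, and it works cleanly for two of the three statements: for $\epsilon\in G$ you correctly reduce everything to the known $\mathcal{D}(G)$ analysis on the $\C[G]$ factor, and your proof of the splitting in statement 3 (decomposing $[\epsilon]_E\subset G$ into $G$-conjugacy classes) is essentially identical to the paper's. The MES remark is a nice addition but not needed.

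The gap you yourself flag is real, and your proposed fix (a Glemma-type characterisation of $\mathrm{Im}\,\Gamma^{\rm res}_{2\times2}$) is not how the paper proceeds. First, note that your factorisation does not help for $\epsilon\notin G$: under $\C[E]\cong\C[G]\otimes\C[Q]$ the operator $L^E_{(g,q)}$ sends $|h,k\rangle$ to $|g\phi_q(h)\omega(q,k),qk\rangle$, which for $q\neq e$ does \emph{not} split as a tensor product on $\C[G]\otimes\C[Q]$, so you cannot analyse the two factors separately. Second, ``cannot be absorbed'' is not the same as ``orthogonal to $\mathcal{S}^{\rm res}_{2\times2}$'', and the latter is what you need to conclude that every plaquette along $\gamma$ contributes energy $1$. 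The paper bypasses any abstract characterisation by computing the scalar product $\langle\Gamma^{\rm res}_{2\times2}(X)\,|\,\Gamma^{\rm res}_{2\times2}(Z;\text{string})\rangle$ directly: using the explicit projector form of $A^{\rm res}_G$ and $\chi_L(g)=|E|\,\delta_{e,g}$ for the left regular representation, the overlap reduces to $\sum_{h\in G}\mathcal{C}(X,Z)\,\delta_{h,\epsilon}$, which vanishes precisely when $\epsilon\notin G$. An analogous one-line computation handles the endpoint plaquettes for all $\epsilon\in E$. So rather than a structural lemma, the obstacle is resolved by a short explicit character calculation at the $G$-isometric fixed point.
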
 
\end{tcolorbox}

\begin{proof}

We calculate explicitly the scalar product between an arbitrary element of $\mathcal{S}^{\rm res}_{2\times2}$ and a $2\times 2$ subset of the lattice containing part of a flux of the state formed with $A_E$. If the result is zero, the state with a string is locally orthogonal to $\mathcal{S}_{2\times2}$ and then an excitation of $\mathfrak{h}^{\rm res}_i$.

First, let us place a string of $L_g$ operators on $|\Psi (A_G^{\rm res})\rangle$ where $g\in E \setminus G$ and check whether the middle part of the string is an excitation of $\mathfrak{h}^{\rm res}_i$. That is,

 \begin{equation}    
  \mathfrak{h}^{\rm res}
 \begin{tikzpicture}
            \begin{scope}[canvas is zx plane at y=0]
                   \filldraw[pink]  (-0.6,-0.5) rectangle (1.1,1.3); 
                   \draw  (-0.6,-0.5) rectangle (1.1,1.3);
                   \filldraw[white]  (-0.5,-0.4) rectangle (1,0.9); 
                   \draw  (-0.5,-0.4) rectangle (1,0.9); 
            \end{scope}
            \draw[red] (-0.3,0,0.25)--(0.8,0,0.25);
         \pic at (0,0,0) {3dpepsres};
        \pic at (0,0,0.5) {3dpepsres};
         \pic at (0.5,0,0) {3dpepsres};
        \pic at (0.5,0,0.5) {3dpepsres};
    \node at (1.1,0,0.25) {$B$};
    	\filldraw [draw= black,fill=red]  (0,0,0.25) circle (0.04);
	\filldraw [draw= black,fill=red]  (0.5,0,0.25) circle (0.04);
  \end{tikzpicture}
  =
   \begin{tikzpicture}
            \begin{scope}[canvas is zx plane at y=0]
                   \filldraw[pink]  (-0.6,-0.5) rectangle (1.1,1.3); 
                   \draw  (-0.6,-0.5) rectangle (1.1,1.3);
                   \filldraw[white]  (-0.5,-0.4) rectangle (1,0.9); 
                   \draw  (-0.5,-0.4) rectangle (1,0.9); 
            \end{scope}
            \draw[red](-0.3,0,0.25)--(0.8,0,0.25);
         \pic at (0,0,0) {3dpepsres};
        \pic at (0,0,0.5) {3dpepsres};
         \pic at (0.5,0,0) {3dpepsres};
        \pic at (0.5,0,0.5) {3dpepsres};
    \node at (1.1,0,0.25) {$B$};
    	\filldraw [draw= black,fill=red]  (0,0,0.25) circle (0.04);
	\filldraw [draw= black,fill=red]  (0.5,0,0.25) circle (0.04);
  \end{tikzpicture}
  ,\; \forall g\in E\setminus G.
\end{equation} 
We can express the scalar product pictorially as follows:
\begin{equation}\label{eq:scapro}
 \left \langle 
 \begin{tikzpicture}
            \begin{scope}[canvas is zx plane at y=0]
                   \filldraw[pink]  (-0.6,-0.5) rectangle (1.1,1.3); 
                   \draw  (-0.6,-0.5) rectangle (1.1,1.3);
                   \filldraw[white]  (-0.5,-0.4) rectangle (1,0.9); 
                   \draw  (-0.5,-0.4) rectangle (1,0.9); 
            \end{scope}
         \pic at (0,0,0) {3dpepsres};
        \pic at (0,0,0.5) {3dpepsres};
         \pic at (0.5,0,0) {3dpepsres};
        \pic at (0.5,0,0.5) {3dpepsres};
    \node at (1.1,0,0.25) {$X$};
  \end{tikzpicture}
 \Bigg |
   \begin{tikzpicture}
            \begin{scope}[canvas is zx plane at y=0]
                   \filldraw[pink]  (-0.6,-0.5) rectangle (1.1,1.3); 
                   \draw  (-0.6,-0.5) rectangle (1.1,1.3);
                   \filldraw[white]  (-0.5,-0.4) rectangle (1,0.9); 
                   \draw  (-0.5,-0.4) rectangle (1,0.9); 
            \end{scope}
            \draw[red](-0.3,0,0.25)--(0.8,0,0.25);
         \pic at (0,0,0) {3dpepsres};
        \pic at (0,0,0.5) {3dpepsres};
         \pic at (0.5,0,0) {3dpepsres};
        \pic at (0.5,0,0.5) {3dpepsres};
    \node at (1.1,0,0.25) {$Z$};
    	\filldraw [draw= black,fill=red]  (0,0,0.25) circle (0.04);
	\filldraw [draw= black,fill=red]  (0.5,0,0.25) circle (0.04);
  \end{tikzpicture}
 \right \rangle= 
 \begin{tikzpicture}
             \begin{scope}[canvas is zx plane at y=0]
                   \filldraw[pink]  (-1,-1) rectangle (1,1); 
                   \draw  (-1,-1) rectangle (1,1);
                   \filldraw[white]  (-0.9,-0.9) rectangle (0.9,0.9); 
                   \draw  (-0.9,-0.9) rectangle (0.9,0.9); 
                                      \draw (-0.9,-0.3)--(-0.6,-0.3);
                                      \draw (-0.9,0.3)--(-0.6,0.3);
                                        \draw (0.9,-0.3)--(0.6,-0.3);
                                      \draw (0.9,0.3)--(0.6,0.3);
                                      \draw (-0.3,-0.9)--(-0.3,-0.6);
                                      \draw (0.3,-0.9)--(0.3,-0.6);
                                          \draw (-0.3,0.9)--(-0.3,0.6);
                                      \draw (0.3,0.9)--(0.3,0.6);
                    \filldraw[pink]  (-0.7,-0.7) rectangle (0.7,0.7); 
                   \draw  (-0.7,-0.7) rectangle (0.7,0.7);
                   \filldraw[white]  (-0.6,-0.6) rectangle (0.6,0.6); 
                   \draw  (-0.6,-0.6) rectangle (0.6,0.6); 
            \end{scope}
                \node at (0,0,0.3) {$Z$};
                \node at (-1.2,0,-0.5) {$X$};
  \end{tikzpicture}
  \times
   \begin{tikzpicture}
   \draw  (-0.3,-0.3,-0.5) rectangle (0.3,0.3,-0.5); 
              \begin{scope}[canvas is yz plane at x=0.3]
             \draw[preaction={draw, line width=1pt, white}] (-0.3,-0.45) rectangle(0.3,0.45);
              \end{scope}
             \begin{scope}[canvas is yz plane at x=-0.3]
             \draw[preaction={draw, line width=1pt, white}]  (-0.3,-0.45) rectangle(0.3,0.45);
              \end{scope}
      \draw [preaction={draw, line width=1pt, white}] (-0.3,-0.3,0.5) rectangle (0.3,0.3,0.5);
       \filldraw [draw= black,fill=red]  (0.3,-0.3,0) circle (0.05);
       \filldraw [draw= black,fill=red]  (-0.3,-0.3,0) circle (0.05);
    \end{tikzpicture}
 \end{equation}
where we have separated the boundary term and the inside term in virtue of the tensor product form of $A^{\rm{res}}_G$ (see Eq.(\ref{tensorGres})). For the sake of simplicity, we have omitted the sum running over the elements of $G$ in all tensors forming the construction. The scalar product of Eq.(\ref{eq:scapro}) is proportional to 
$$\sum^{G}_{a,b,c,d} \mathcal{C}(X,Z,a,b,c,d) \chi_{L}(a^{-1}b) \chi_{L}(b^{-1}gc) \chi_{L}(c d^{-1}) \chi_{L}(a^{-1}gd),$$
where each element of the sum comes from the individual tensors,  $\mathcal{C}(X,Z,a,b,c,d)$ is the boundary term and the traces come from the loops of the last drawing containing the operator of the string $L_g$. The left regular representation obeys $\chi_{L}(g)=|E|\delta_{e,g}$ so the scalar product \cref{eq:scapro} is proportional to $\sum_{h\in G} \mathcal{C} \delta_{h,g}$ which is zero for $g\notin G$. The previous calculation is not valid for $g \in G$. In fact, for $g\in G$ the middle part of the string is not locally detectable so it is not an excitation. However, with a similar calculation, we can verify that the operator, on the virtual d.o.f., corresponding to an end of the string gives rise to an eigenvector of $H^{\rm res}$:
 \begin{equation}    
  \mathfrak{h}^{\rm res}
 \begin{tikzpicture}
            \begin{scope}[canvas is zx plane at y=0]
                   \filldraw[pink]  (-0.6,-0.5) rectangle (1.1,1.3); 
                   \draw  (-0.6,-0.5) rectangle (1.1,1.3);
                   \filldraw[white]  (-0.5,-0.4) rectangle (1,0.9); 
                   \draw  (-0.5,-0.4) rectangle (1,0.9); 
            \end{scope}
            \draw[red](-0.3,0,0.25)--(0,0,0.25);
         \pic at (0,0,0) {3dpepsres};
        \pic at (0,0,0.5) {3dpepsres};
         \pic at (0.5,0,0) {3dpepsres};
        \pic at (0.5,0,0.5) {3dpepsres};
    \node at (1.1,0,0.25) {$B$};
    	\filldraw [draw= black,fill=red]  (0,0,0.25) circle (0.04);
  \end{tikzpicture}
  =
   \begin{tikzpicture}
            \begin{scope}[canvas is zx plane at y=0]
                   \filldraw[pink]  (-0.6,-0.5) rectangle (1.1,1.3); 
                   \draw  (-0.6,-0.5) rectangle (1.1,1.3);
                   \filldraw[white]  (-0.5,-0.4) rectangle (1,0.9); 
                   \draw  (-0.5,-0.4) rectangle (1,0.9); 
            \end{scope}
            \draw[red](-0.3,0,0.25)--(0,0,0.25);
         \pic at (0,0,0) {3dpepsres};
        \pic at (0,0,0.5) {3dpepsres};
         \pic at (0.5,0,0) {3dpepsres};
        \pic at (0.5,0,0.5) {3dpepsres};
    \node at (1.1,0,0.25) {$B$};
    	\filldraw [draw= black,fill=red]  (0,0,0.25) circle (0.04);
  \end{tikzpicture}
  ,\; \forall g\in E.
\end{equation} 
This proves point 1.

A string of $L_g$ operators, where $g\in E \setminus G$, cannot be deformed using the $G$-invariance of the tensors. Instead, the element $g$ of the operator $L_g$ can be sent to $h'gh$ (and then to the operator $L_{h'gh}$), where $h,h'\in G$, applying the $G$-invariance of the tensors for every virtual edge. This transformation cannot change the coset $g$ belongs to because $h'gh\in gG$ ($G$ is normal in $E$). Thus, an element $g\in E \setminus G$ cannot be transformed into one belonging to the trivial coset $G$, so the operator cannot be moved freely from its position. This fact shows that these strings can be detected locally in contrast with a string formed with operators $L_g$, where $g \in G$, that can be deformed freely. That is, these fluxes become confined.

The fact that the state with these operators placed on the virtual d.o.f. is an eigenstate with eigenvalue one of the local hamiltonian $\mathfrak{h}^{\rm res}$ and that a string of this kind cannot be deformed (using the $G$ invariance) through the lattice is equivalent to a string tension. The string tension is manifested in the fact that there exists an energy dependence on the string's length of the excitation. That is:
\begin{equation*} 
H^{\rm res} \left(  |\Psi_\ell^*(A^{\rm{res}}_G,L_g)\rangle\right) \propto \ell |\Psi_\ell^*(A^{\rm{res}}_G,L_g)\rangle,\quad {\rm with}  \; g\in E \setminus G,
\end{equation*}
where $|\Psi_\ell^*(A^{\rm{res}}_G,L_g)\rangle$ is the state constructed with the tensor $A^{\rm{res}}_G$ and placing a string flux of length $\ell$ with element $g\in E \setminus G$ in the virtual d.o.f. When $g$ belongs to $G$, the scalar product is not zero so we cannot conclude that it is an excitation. In fact the operator in the virtual d.o.f. can be cancelled out using the $G$-invariance of the tensor and it then corresponds to a string not locally detectable. 

This shows that we have two very different flux string excitations in the state formed with the tensor $A^{\rm{res}}_G$. The open string created by placing operators $L_g$ with $g\in G$ in the virtual d.o.f. has the freedom of being deformed in the entire chain except at the ending plaquettes. Therefore the energy penalty of this excitation comes from the two end points, regardless of the length of the string. As opposed to this case we have the string constructed with elements $g\in E\setminus G$ which cannot be deformed freely; the operators $L_g$ are confined in its position of the lattice. Therefore the energy penalty of this chain depends on the length of the string. We say that all the flux-type particles of the parent model, conjugacy classes of $G$, which do not belong to $G$ become confined fluxes in the state formed with $A^{\rm{res}}_G$. This finishes the proof of the statement 2.

\begin{figure}[ht!]
\begin{center}
\includegraphics[scale=1.5]{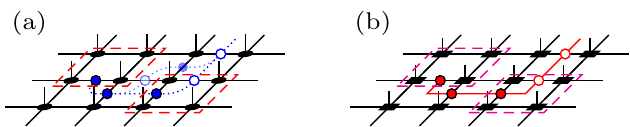}
\caption{A subset of the lattice, in a background of $A_E$ (a) and $A^{\rm res}_G$ (b)  tensors, having a flux excitation string from $E\setminus G$. The dots, blue or red depending on the background, represent the operator $L_{g}$ acting on the virtual d.o.f. of the tensors. Each colored square drawn represents the place where the corresponding local hamiltonian, $\mathfrak{h}$ on (a) and $\mathfrak{h}^{\rm res}$ on (b), acts.  In (a) only the end of the string is an excitation of the local hamiltonian $\mathfrak{h}$: adds $+1$ to the total energy of the state. While in (b) both plaquettes are excitations of the local hamiltonian $\mathfrak{h}^{\rm res}$: adds $+1$ to the total energy of the state.}
\label{fig:string}
\end{center}
\end{figure}

The fluxes that are not confined, where $g\in G$, are called \emph{deconfined} and they can split into a superposition of fluxes of the restricted model. This is because a conjugacy class of $E$ belonging to $G$ can be decomposed into multiple conjugacy classes of $G$.  We denote by $[g]^E$ the conjugacy class of $g$ in $E$. Different internal states of the same type of flux on the parent Hamiltonian of $A_E$, $h,p\in [g]^E \subset G$, should be considered now different types of flux of the restricted model if there is no  element $x\in G$ such that $h=xpx^{-1}$ and then, $h\in [{g_1}]^{G}$, $p\in [{g_2}]^{G}$ with $[{g_1}]^{G} \neq [{g_2}]^{G}$. This can be seen at the level of the creation operators; 
$$\sum_{g'\in [g]^E  }L_{g'}\otimes L_{g'}= \sum_{g_i}\left( \sum_{g'_i \in [g_i]^G } L_{g'_i}\otimes L_{g'_i} \right), $$
where the $g_i$'s runs over the representatives of each $G$-conjugacy class inside the $E$-conjugacy class of $g$. This proves point 3 which finishes the proof of \cref{fluxrestmodel}.
\end{proof}

\begin{tcolorbox}
\begin{theorem}\label{chargesrestmodel}
The following statements hold
\begin{enumerate}
 \item The charges of the state formed with $A_E$ are also quasiparticle excitations of the parent Hamiltonian of $A^{\rm{res}}_G$. 
 \item There is always a charge, or a superposition of charges, of $A_E$ that is invariant under braiding with any flux of $A^{\rm{res}}_G$. This charge has been condensed. 
 \item Also some of the charges of $A_E$ can split in a superposition of charges of $A^{\rm{res}}_G$.
 \end{enumerate}
\end{theorem}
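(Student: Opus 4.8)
The plan is to prove the three claims about charges in parallel with the flux analysis of Theorem \ref{fluxrestmodel}, exploiting that $A_G^{\rm res}$ has the tensor-product structure $A_G^{\rm res}\cong A_G\otimes\mathds{1}_{|Q|}^{\otimes 4}$ together with the restriction of the $E$-regular representation to the subgroup $G\triangleleft E$. First I would treat statement 1. Take a charge operator of the $A_E$ model, $\Pi_{\sigma_E}$ for $\sigma_E$ an irrep of $E$, placed on two virtual bonds of $|\Psi_{A_G^{\rm res}}\rangle$. To show its endpoints are genuine excitations of $H^{\rm res}$ I would compute the local overlap $\langle\Gamma^{\rm res}_{2\times2}(X)\,|\,\text{(2\(\times\)2 patch with part of the charge operator)}\rangle$ exactly as in \eqref{eq:scapro}, separating the boundary term from the inside term via the tensor product form \eqref{tensorGres}, and using $\chi_L(\epsilon)=|E|\delta_{e,\epsilon}$ for the $E$-left-regular representation. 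Because a charge operator is point-like (acts on a single bond, not a string), the only place where the overlap can fail to vanish is the two endpoint plaquettes; away from the endpoints the character constraints force a contradiction unless the patch sits inside $\mathcal S^{\rm res}_{2\times2}$. This gives that $\Pi_{\sigma_E}$ creates a pair of eigenstates of $H^{\rm res}$ with positive energy localized at the two bonds, proving point 1. I would also remark that, unlike the confined fluxes, the energy penalty does not grow with separation, since a charge is not a string operator.

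Second, for statement 2 (condensation) I would invoke the branching of irreducible representations under restriction. By Clifford's theorem (already used after \eqref{eq:defQres}), an irrep $\sigma_E$ of $E$ restricts to $\mathbb C[G]$ as a sum $\sigma_E|_G\cong\bigoplus_i m_i\,\sigma_i$ of irreps of $G$; in particular the trivial representation of $G$ always occurs in the restriction of some $\sigma_E$ (namely the trivial representation of $E$, whose restriction is trivial, and more generally any $\sigma_E$ that is trivial on $G$, i.e.\ pulled back from $Q=E/G$). The braiding of a charge of type $\sigma$ of the restricted model with a flux $g\in G$ is governed by $\chi_\sigma(g)/d_\sigma$, which equals $1$ for all $g\in G$ precisely when $\sigma$ is the trivial irrep of $G$. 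Hence the component of $\Pi_{\sigma_E}$ (or the appropriate superposition $\bigoplus$ over the $G$-charges appearing in $\sigma_E|_G$) that projects onto the trivial-$G$-sector is transparent to braiding with every flux $L_g$, $g\in G$, of $A_G^{\rm res}$ — it has trivial topological spin and trivial monodromy with all deconfined fluxes — which is exactly the statement that this charge (or charge superposition) has condensed. I would phrase this concretely at the level of the creation operator: writing $\Pi^{\rm res}_{\sigma_E}$ in the basis adapted to the decomposition $\mathbb C[E]=\mathbb C[G]\otimes\mathbb C[Q]$, the $\mathbb C[Q]$-part carries no braiding charge and the $\mathbb C[G]$-part decomposes into $G$-charges as above; the summand corresponding to the trivial $G$-irrep is the condensed sector.

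Third, for statement 3 (splitting) I would run the same restriction-of-representations argument in the other direction: any single $E$-charge $\sigma_E$ whose restriction $\sigma_E|_G=\bigoplus_i m_i\sigma_i$ involves more than one inequivalent $G$-irrep (or one with multiplicity), fails to be a simple anyon of the restricted model, and should be identified with the corresponding superposition $\bigoplus_i m_i\,\sigma_i$ of charges of $A_G^{\rm res}$. At the level of creation operators this is just the character identity $\chi_{\sigma_E}|_G=\sum_i m_i\chi_{\sigma_i}$, mirroring the flux formula $\sum_{g'\in[g]^E}L_{g'}\otimes L_{g'}=\sum_{g_i}(\sum_{g'_i\in[g_i]^G}L_{g'_i}\otimes L_{g'_i})$ that appeared in the proof of \cref{fluxrestmodel}. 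The main obstacle, and the step I would be most careful about, is statement 1: making the local-overlap computation fully rigorous requires tracking the boundary contraction $\mathcal C$ and checking that the vanishing of the inside loop-traces genuinely forces orthogonality to \emph{all} of $\mathcal S^{\rm res}_{2\times2}$ (and not merely to the generating set one writes down), as well as confirming that at the endpoint the state is \emph{not} accidentally in the image of $\Gamma^{\rm res}_{2\times2}$ — i.e.\ that the positive-energy assignment is genuine. The condensation and splitting claims, by contrast, are essentially bookkeeping with branching rules and the braiding formula $\chi_\sigma(g)/d_\sigma$ that is already established in \cref{subsec:braiding}.
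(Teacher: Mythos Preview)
Your approach is essentially the same as the paper's. For statement~1 you compute the overlap with $\mathcal{S}^{\rm res}_{2\times 2}$ as in the flux case; the paper does exactly this and finds the overlap proportional to $\sum_{\epsilon\in E}\chi_\nu(\epsilon)=|E|\delta_{\nu,1}$, hence zero for non-trivial $\nu$. For statement~3 you both invoke the branching $\chi^E_\nu|_G=\sum_\sigma m_\sigma\chi^G_\sigma$. The only notable difference is in statement~2: you exhibit condensed charges as non-trivial irreps of $Q=E/G$ lifted to $E$ (which are automatically trivial on $G$), while the paper cites the representation-theoretic fact that there always exists a representation $\rho$ of $E$ with $\ker\chi_\rho=G$ exactly, and then checks $|\chi_\rho(g)/|\rho||^2=1$ for all $g\in G$. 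Both arguments identify the same condensed sector; yours is slightly more elementary, the paper's pins down the kernel precisely. One wording caution: in statement~2 you speak of ``the component of $\Pi_{\sigma_E}$ that projects onto the trivial-$G$-sector'' being transparent, but the claim is that the \emph{entire} $E$-charge $\sigma_E$ (when $\sigma_E$ is pulled back from $Q$) braids trivially with every $G$-flux, not merely a component of it.
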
 
\end{tcolorbox}
\begin{proof}
Let us proof \cref{chargesrestmodel}  by studying the properties of the charge excitations of $A_E$ when they are placed on a background formed by $A^{\rm{res}}_G$ tensors. The single charge virtual operator is $\sum_{\epsilon \in E}\chi_\nu(p\epsilon)|\epsilon\rangle\langle \epsilon|$, where $\chi_\nu$ is the character of the irrep $\nu$ of $E$ \cite{Schuch10}. Unlike flux excitations, charge particles are not associated with a string (in this PEPS representation) and they are point defects in the lattice constructed with $G$-isometric PEPS. Therefore, although they are always created in pairs, we can safely just consider how one particle of the pair affects the background. These operators also correspond to excitations when they are placed in the state constructed with $A^{\rm{res}}_G$. To prove it we compute the scalar product between a state with this operator in the virtual d.o.f. and one arbitrary element of $\mathcal{S}^{\rm res}_{2\times2}$ as we did in the case of flux excitations. We obtain a result proportional to 
$$ \sum^{G}_{a,b,c,d} \mathcal{C}(X,Z,a,b,c,d)  \chi_{L}(a^{-1}b)\chi_{L}(bc^{-1}) {\rm tr} \left[L_{c  d^{-1}}  \sum_{\epsilon \in E}\chi_\nu(p\epsilon)|\epsilon\rangle\langle \epsilon| \right]\chi_{L}(ad^{-1}),$$
which reduces to
$${\rm tr}\left[\sum_{\epsilon \in E}\chi_\nu(p\epsilon)|\epsilon\rangle\langle \epsilon| \right]= \sum_{\epsilon \in E}\chi_\nu(p\epsilon)=\sum_{\epsilon \in E}\chi_\nu(\epsilon)= \sum_{\epsilon \in E}\chi_\nu(\epsilon)\chi_1(\epsilon)=|E|\delta_{\nu,1}.$$
Therefore states that contain plaquettes formed by restricted tensors with one charge operator placed on the virtual d.o.f. are orthogonal to the ground state of the parent Hamiltonian associated with $A^{\rm{res}}_G$. As for the fluxes above, this implies that they are eigenstates with eigenvalue $1$ of $\mathfrak{h}^{\rm res}$ for each neighbouring plaquette and then, as there is no string associated, of the parent Hamiltonian. This proves statement 1.

The confinement of some fluxes is intimately related to the condensation of charge particles. An anyon condensation is a situation where a topological excitation cannot be distinguished from the vacuum with topological interactions (i.e. using braiding operations). Since some fluxes are confined in the $A_G^{\rm res}$ background, there are less fluxes 'available' (the remaining deconfined) to braid with the charge particles to topologically distinguish amongst them.
Let us consider an elementary charge of $A_E$ and try to identify the class of this charge in the restricted model. To perform this experiment we have to create a charge-pair excitation belonging to the class of the irrep $\nu$ of $E$, braid one charge of the pair with a flux characterized by the element $g\in G$ and then fuse the charge modified by the braiding with the other charge of the pair. The probability the charge pair fuses back to the vacuum is given by \cite{Preskill04, Schuch10} 
\begin{equation}
{\rm Prob(vacuum)}=\left | \frac{\chi_\nu(g)}{|\nu|} \right |^2.\notag
\end{equation}
This interferometric process can be conceived as a method to identify the irrep associated with the charge particle. To completely identify the irrep $\nu$ of $E$ with an $A_G^{\rm res}$ background, we would need to braid charges with confined fluxes also. We forbid this operation because we are restricting to topological interactions; we do not allow  processes whose energy cost depend on lengths. If two irreps of $E$ are the same when restricted to elements of $G$; they have to be considered equivalent in the $A_G^{\rm res}$ background. Then if one irrep, say $\mu$, is the identity for all elements of $G$, we obtain that the probability to fuse it with the vacuum after any available braiding is one. Therefore this charge is not modified by the braiding of any of the deconfined fluxes, and we now have to identify it with the trivial topological charge (the vacuum); we will call this phenomenon 'charge condensation'. 

We now claim that there is always a charge excitation of $A_E$ that is condensed in the $A_G^{\rm res}$ background. This charge excitation does not need to be elementary, i.e. associated with an irrep of $E$. It can be a composite of elementary charges (a reducible representation of $E$). We know that given any normal subgroup $G$ of $E$ there is always a representation $\rho$ of $E$ such that the kernel of the corresponding character is exactly $G$ \cite{Isaac76}. Therefore, if we perform an interferometric experiment with $\rho$ we obtain
\begin{equation}
{\rm Prob(vacuum)}=\left | \frac{\chi_\rho(g)}{|\rho|} \right |^2=1, \notag
\end{equation}
for all $g\in G$, that is, the charge associated with $\rho$ has condensed and then statement 2 is proven.

We conclude by proving point 3, which concludes the proof of \cref{chargesrestmodel}. We show that the charges of the model $A_E$ can split into a superposition of charges of the restricted model. As mentioned before, we only allow to braid them with the fluxes corresponding to the elements of $G$. Then, the irreducibility of $\nu$ in $E$ is broken to a superposition of irreps of $G$:
$$\chi^E_{\nu}(g)\cong\sum_{\sigma} m_{\sigma} \chi^{G}_{\sigma}(g),$$
where $m_{\sigma}$ is the multiplicity of the irrep $\sigma$ of $G$ that appears in the decomposition of  the irrep $\nu$ of $E$.

\end{proof}

\begin{tcolorbox}
\begin{corollary}
Similarly to fluxes and charges, dyons of the state formed with $A_E$ are also quasiparticle excitations of the parent Hamiltonian of $A^{\rm{res}}_G$.  They can also be confined or split in simple dyons of $A^{\rm{res}}_G$. 
\end{corollary}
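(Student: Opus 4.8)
The plan is to mirror the proofs of \cref{fluxrestmodel} and \cref{chargesrestmodel}, applying the same three-part analysis (quasiparticle, confinement, splitting) to the dyonic sector. The dyon operator for $A_E$ is a string of $L_\epsilon$ (with $\epsilon \in E$) ending in a charge-part operator of the form built in \cref{dyonend}, now relative to the group $E$: at one endpoint it reads $\sum_{n\in N_\epsilon^E} \chi_\alpha(wn)\sum_j |nk_j\rangle\langle nk_j|$, where $N_\epsilon^E$ is the normalizer of $\epsilon$ in $E$, $\alpha$ an irrep of $N_{[\epsilon]}^E$, and $k_j$ runs over representatives of $E/N_\epsilon^E$. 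These operators are placed on a background of $A_G^{\rm res}$ tensors.

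First I would show that the dyon string of $A_E$, including its endpoint operator, is an eigenstate of $H^{\rm res}$ with nonzero eigenvalue along the confined portion and at the endpoints. This is done exactly as in \cref{eq:scapro}: compute the overlap of a $2\times 2$ patch containing a piece of the dyon string with an arbitrary element of $\mathcal{S}^{\rm res}_{2\times 2}$, using the tensor-product structure $A_G^{\rm res} \cong A_G \otimes \id_{|Q|}^{\otimes 4}$ and the character orthogonality $\chi_L(g) = |E|\delta_{e,g}$. The flux part $L_\epsilon$ of the string contributes exactly the same kind of $\delta_{h,\epsilon}$ factor that appeared in the flux analysis, so the patch is locally orthogonal to $\mathcal{S}^{\rm res}_{2\times 2}$ whenever $\epsilon \notin G$; the endpoint charge-part operator, being point-like, contributes a trace of the form $\tr[\,\cdot\,]$ that again forces orthogonality unless the relevant $E$-irrep restricts trivially to $G$. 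Second, the confinement statement follows verbatim from the flux case: since $G \triangleleft E$, the $G$-invariance of $A_G^{\rm res}$ can only conjugate $\epsilon$ to $h'\epsilon h$ with $h,h' \in G$, which stays in the coset $\epsilon G$, so a dyon whose flux part lies in $E \setminus G$ cannot be deformed and acquires a string tension $H^{\rm res}|\Psi^*_\ell\rangle \propto \ell |\Psi^*_\ell\rangle$. Third, the splitting statement combines the two mechanisms already established: an $E$-conjugacy class $[\epsilon]^E$ contained in $G$ decomposes into several $G$-conjugacy classes (splitting of the flux part, as in point 3 of \cref{fluxrestmodel}), and by Clifford's theorem an irrep $\alpha$ of $N_{[\epsilon]}^E$ restricts to a sum of irreps of $N_{[g]}^G$ for each $G$-class $[g] \subset [\epsilon]^E$ (splitting of the charge part), so the dyon creation operator decomposes into a superposition of dyon creation operators of the $A_G^{\rm res}$ model. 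One should also note, as for charges, that a dyon whose charge part restricts trivially to the relevant normalizer in $G$ and whose flux part lies in $G$ behaves as a pure deconfined flux — i.e. the dyonic label can be partially condensed.

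The main obstacle is bookkeeping rather than conceptual: one has to track carefully how the normalizer $N_\epsilon^E$ and its coset representatives interact with the restriction to $G$, since $N_\epsilon^E \cap G = N_\epsilon^G$ only when $\epsilon \in G$, and for $\epsilon \notin G$ the charge part of the dyon lives over a normalizer that has no direct analogue in the $A_G^{\rm res}$ model — this is precisely what makes such dyons confined, but stating it cleanly requires care. I expect the cleanest write-up simply cites \cref{fluxrestmodel}, \cref{chargesrestmodel} and their proofs, observes that a dyon is a flux-part and a charge-part glued along a string, and remarks that each part is governed by the mechanism already proved, so that confinement of the flux part forces confinement of the dyon and independent splitting of the two parts yields the decomposition into simple dyons of $A_G^{\rm res}$; a full repetition of the overlap computation of \cref{eq:scapro} can be relegated to a sentence since it is structurally identical to the one given there.
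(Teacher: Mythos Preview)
Your proposal is correct and follows essentially the same route as the paper: confinement is inherited from the flux part via \cref{fluxrestmodel}, and splitting is obtained by combining the conjugacy-class decomposition $[h]^E = \bigcup_j [h_j]^G$ with Clifford's theorem applied to the normalizers. The one point the paper makes explicit that you leave implicit is the verification that $N^G_{h_j}$ is \emph{normal} in $N^E_{h_j}$ (checked via $(nkn^{-1})h_j = h_j(nkn^{-1})$ for $n\in N^E_{h_j}$, $k\in N^G_{h_j}$, using $G\triangleleft E$), which is the hypothesis Clifford's theorem requires; you should state this before invoking the theorem.
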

\end{tcolorbox}
\begin{proof}
A dyon of $E$ with the flux part corresponding to a conjugacy class not in $G$, is an excitation using \cref{fluxrestmodel}. It is clear that the chain cannot be moved using the $G$-invariance so the dyon is confined.

Let us analyze the more involved case where the parent dyon is unconfined and its flux and charge parts split. Let $[h]^E=\{h_i, i=1,\cdots, \kappa \}$ be a conjugacy class of $E$ which is in $G$. This conjugacy class can be decomposed in conjugacy classes of $G$: $[h]^E= \cup_j [h_j]^{G}$ where $j$ only runs over the indices corresponding to the elements $h_i$ with disjoint conjugacy classes of $G$. Take now a representative element $h_j$ of $[h]^E$ and denote its normalizer in $E$ as $N^E_{h_j}=\{n\in E | nh_j=h_jn\}$. Trivially $N^{G}_{h_j}=\{k\in G | kh_j=h_jk\}$ is a subgroup of $ N^E_{h_j}$. It is also normal: $(nkn^{-1})h_j=h_j(nkn^{-1})$ $\forall n\in N^E_{h_j}$ and $\forall k\in N^{G}_{h_j}$. Therefore $N^{G}_{gh_jg^{-1}}$ is normal in $N^E_{gh_jg^{-1}}$. By Clifford's Theorem \cite{Clifford37} the irreps of $N_{[h]^E}$ will decompose into a direct sum with equal multiplicity of irreps of $N_{[h]^{G}}$, all of them related by conjugation.

This describes the splitting of the charge part of an unconfined parent dyon into dyons of $A^{\rm{res}}_G$ that is, the unconfined dyon is an excitation. We note that this also describes qualitatively the action of the symmetry over the charge part of a dyon of the restricted model. It is clear that any two of these conjugacy classes of $G$ can be related by conjugation with an element of $E$ and vice versa. This fact is what is causing the splitting of the flux part of an unconfined parent dyon and the action of the symmetry over the flux part of a dyon of the restricted model. 
\end{proof}


We finish this section with a illustrative example for the groups $G= \mathbb{Z}_n ,Q= \mathbb{Z}_2$.\\

Let us consider the dihedral group $G=D_{n}$, with $n$ odd, $\{ r,s | r^n=s^2=e, sr^ks=r^{-k}\}$. There are $1+1+(n-1)/2$ conjugacy classes: $[e], [s]=\{sr^{k}\}_{k=0}^{n-1}, [r^k]=\{r^k,r^{n-k}\}\;k=1,\ldots, n-1$ which correspond to fluxes in the quantum double of $D_{n}$: $\mathcal {D}(D_{n})$. There are two one-dimensional representation of $D_{n}$; the trivial and the irrep $Z$ given by $Z(r^k)=1$ and $Z(sr^{k})=-1 \; \forall k=0,\ldots,n-1$. There are $n-1$ two-dimensional irreps $\Pi_\nu$, labelled by $\nu=1,\cdots, n-1$ given by
\beq
\Pi_\nu(r^k)=\left(\begin{array}{cc}q^{ k\nu}& 0 \\ 0 & q^{- k\nu} \end{array}\right), \; \Pi_\nu(s)=\left(\begin{array}{cc}0&1 \\ 1 & 0 \end{array}\right),\notag
\eeq
where $q=e^{\frac{2\pi i}{n}}, \; k=0,\cdots, n-1$.
The normal subgroup $\{ r^k \}_{k=0}^n$ of $D_{n}$ is isomorphic to the abelian group $\mathbb{Z}_n$. The fluxes of $\mathcal {D}(\mathbb{Z}_n)$ are the $n$ conjugacy classes given by single elements: $\{r^k\}$ for $k=0,\ldots,n-1$. The charges correspond to the $n$ one-dimensional irreps, given by $\pi_\sigma(r^k)= q^{k\sigma}$ where $\sigma=0,\ldots,n-1$. The restriction of fluxes from $D_{n}$ to $\mathbb{Z}_n $ can be associated with a confinement of the flux $[s]$ and a condensation of the charge $Z$ (because $Z(r^k)=1$) of $\mathcal {D}(D_{n})$. The splitting of charges of $\mathcal {D}(D_{n})$ is given by $\Pi_\nu \cong \pi_\sigma \oplus \pi_{-\sigma}$ for $\nu=\sigma$. The associated global symmetry comes from the quotient group $\mathbb{Z}_2\cong D_{n} / \mathbb{Z}_n$. The non-trivial action of the symmetry on the anyons of $\mathcal {D}(\mathbb{Z}_n)$ is given by:
\beq
r^k\mapsto sr^ks=r^{-k}, \; \pi_\sigma(r^k)\mapsto  \pi_\sigma(sr^ks)=  \pi_{-\sigma}(r^k). \notag
\eeq
This example corresponds to the non-trivial extension group of $\mathbb{Z}_n$ by $\mathbb{Z}_2$. One also can consider the restriction from $\mathcal {D}(\mathbb{Z}_n\times \mathbb{Z}_2 )$ (the trivial extension) to $\mathcal {D}(\mathbb{Z}_n)$ where the corresponding condensation and confinement can be identified, see \cite{Bais03}, but the symmetry action is trivial on the anyons.

\subsection{Symmetry reduction induced by anyon condensation}\label{sec:symredcon}

In this subsection we make the connection between some of the properties previously analyzed. First the flux confinement and the emergence of a non-trivial on-site symmetry studied before (see Eq.(\ref{symAres}) and Eq.(\ref{eq:symstateAres})) are inextricably linked features. 
They both come from a reduction of the local symmetry of the tensor.
This is achieved by removing complete conjugacy classes of the local symmetry group of the tensor from $E$ to $G$. Because conjugacy classes are related to the fluxes of the underlying topological model, we are performing an effective confinement of fluxes which naturally gives rise to a charge condensation. Now two facts allow to realize a global symmetry. First, that the relation between the physical and virtual levels is an isometry. Second, $G$ is a normal subgroup of $E$, so that the action by conjugation of elements in $E\setminus G$ does not leave the relevant subspace. Therefore the operators associated with the confined fluxes are the ones used to construct the local symmetry operator of the emergent global symmetry. It is important to note that the restriction does not change the representation; this allows to compare the same operators in both models. Moreover the fact that the confined fluxes effectively represent the operators of the symmetry and that the action is given by conjugation means that the effect is equivalent to a braiding with these confined fluxes. Then this explains the SF effect in the charge sector.

When we apply the operator $U_{\epsilon_k}$ with $\epsilon_k \in E \setminus G$ on a connected subset of the lattice ($\mathcal{M}$) in a background of $A^{\rm{res}}_G$, we create a closed loop of $L_{\epsilon'_k}$ virtual operators around the affected region: 

\begin{equation*}
  \begin{tikzpicture}
     \foreach \z in {0,0.7,1.4,2.1,2.8}{
      \foreach \x in {0,0.5,1,1.5,2}{
           \pic at (\x,0,\z) {3dpepsres}; 
             } } 

	 \filldraw [draw=black, fill=purple] (0.5,0.13,0.7) circle (0.04);
	 \filldraw [draw=black, fill=purple] (0.5,0.13,1.4) circle (0.04);
	 \filldraw [draw=black, fill=purple] (0.5,0.13,2.1) circle (0.04);
	 \filldraw [draw=black, fill=purple] (1,0.13,0.7) circle (0.04);
	  \filldraw [draw=black, fill=purple] (1,0.13,1.4) circle (0.04);
	   \filldraw [draw=black, fill=purple] (1,0.13,2.1) circle (0.04);
	   	 \filldraw [draw=black, fill=purple] (1.5,0.13,0.7) circle (0.04);
	  \filldraw [draw=black, fill=purple] (1.5,0.13,1.4) circle (0.04);
	   \filldraw [draw=black, fill=purple] (1.5,0.13,2.1) circle (0.04);
	   
 \end{tikzpicture} =
   \begin{tikzpicture}
   
    \begin{scope}[canvas is zx plane at y=0]
 \draw[epsilon] (0.35,1.75) rectangle (2.45,0.25);
 \end{scope}
    \foreach \z in {0,0.7,1.4,2.1,2.8}{
      \foreach \x in {0,0.5,1,1.5,2}{
           \pic at (\x,0,\z) {3dpepsres}; 
             } } 
 \filldraw [draw=black, fill=red] (0.5,0,0.35) circle (0.04);   
 \filldraw [draw=black, fill=red] (1.5,0,0.35) circle (0.04); 
  \filldraw [draw=black, fill=red] (1,0,0.35) circle (0.04);
   \filldraw [draw=black, fill=red] (0.5,0,2.45) circle (0.04);   
   \filldraw [draw=black, fill=red] (1.5,0,2.45) circle (0.04);
      \filldraw [draw=black, fill=red] (1,0,2.45) circle (0.04);
       \filldraw [draw=black, fill=red] (0.25,0,0.7) circle (0.04); 
       \filldraw [draw=black, fill=red] (0.25,0,1.4) circle (0.04); 
       \filldraw [draw=black, fill=red] (0.25,0,2.1) circle (0.04); 
          \filldraw [draw=black, fill=red] (1.75,0,0.7) circle (0.04); 
       \filldraw [draw=black, fill=red] (1.75,0,1.4) circle (0.04); 
       \filldraw [draw=black, fill=red] (1.75,0,2.1) circle (0.04); 
 \end{tikzpicture}
  \end{equation*} 
 We will show the following:
 \begin{tcolorbox}
 \begin{proposition}
 The energy of the restricted state modified by  $U^{\otimes \mathcal{M}}_{\epsilon}$, over a compact region $ \mathcal{M}$, depends on the length of the boundary of the region $ \mathcal{M}$, when $\epsilon \in E\setminus G$. The virtual representation of this action is a confined loop defect which only depends on the quotient group $Q\cong E/G$. 
 \end{proposition}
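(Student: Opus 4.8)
The plan is to reduce the statement to the confinement analysis already carried out in \cref{fluxrestmodel}, after making precise the virtual picture of the operation $U_{\epsilon}^{\otimes\mathcal{M}}$. First I would spell out the cancellation mechanism: for $\epsilon\in E$ one has $U_{\epsilon}A_G^{\rm res}=A_G^{\rm res}U_{\epsilon'}$ for any representative $\epsilon'\in\epsilon G$ (this is \cref{symAres}), so when two neighbouring sites of $\mathcal{M}$ both carry $U_{\epsilon}$, using the residual $G$-invariance to align the $G$-representatives on the two sites the operators $L_{\epsilon'}$ and $L_{\epsilon'}^{\dagger}$ meeting on the shared edge multiply to the identity; on an edge of $\partial\mathcal{M}$ — where a site carrying $U_{\epsilon}$ is contracted against a bare $A_G^{\rm res}$ — a single $L_{\epsilon'}$ survives. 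Hence $U_{\epsilon}^{\otimes\mathcal{M}}|\Psi_{A_G^{\rm res}}\rangle$ is exactly the state obtained by inserting a closed loop of $L_{\epsilon'}$ virtual operators along $\partial\mathcal{M}$ (the picture drawn just before the proposition). Since the $G$-invariance only permits the replacements $\epsilon'\mapsto h'\epsilon'h$ with $h,h'\in G$, which preserve the coset $\epsilon'G$, the loop operator is well defined only up to $\epsilon G$; equivalently, the defect is labelled by $k\in Q\cong E/G$ and by nothing else, which is the second half of the claim.

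Next I would read off the energy. By \cref{fluxrestmodel} (statements~1 and~2), for $\epsilon\in E\setminus G$ the state carrying an $L_{\epsilon}$-string is, plaquette by plaquette, an eigenvector of the local projector $\mathfrak{h}^{\rm res}_i$ with eigenvalue $1$ on every plaquette the string crosses, while on plaquettes away from the string — in particular in the bulk of $\mathcal{M}$ and of its complement — the state is locally indistinguishable from $|\Psi_{A_G^{\rm res}}\rangle$ and contributes $0$ to $H^{\rm res}$. Applying this to the closed loop along $\partial\mathcal{M}$ yields $H^{\rm res}\big(U_{\epsilon}^{\otimes\mathcal{M}}|\Psi_{A_G^{\rm res}}\rangle\big)\propto|\partial\mathcal{M}|$, the number of excited plaquettes being, up to a lattice-geometry constant, the length of $\partial\mathcal{M}$.

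The step I expect to be the main obstacle is ruling out that the loop can be deformed inward and shrunk, which would replace the $|\partial\mathcal{M}|$ scaling by an $O(1)$ contribution. For open strings this was settled by the scalar-product computation against $\mathcal{S}^{\rm res}_{2\times2}$ in the proof of \cref{fluxrestmodel}; for a closed \emph{contractible} loop one must additionally argue it is not absorbed. The clean way to see both is that any move that would push a boundary segment of the loop across a tensor of $\mathcal{M}$, or collapse the loop, amounts to pushing an $L_{\epsilon'}$ with $\epsilon'\notin G$ through that tensor — impossible, since the tensor enjoys only the $G$-invariance, not the full $E$-invariance of $A_E$; whereas for $\epsilon\in G$ the loop is absorbed and one recovers the ground state, consistent with $U_{\epsilon}^{\otimes\mathcal{M}}$ acting trivially on $|\Psi_{A_G^{\rm res}}\rangle$ in that case. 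Assembling the virtual loop description, the coset labelling by $Q$, the plaquette-wise excitation count from \cref{fluxrestmodel}, and this pinning argument gives the proposition.
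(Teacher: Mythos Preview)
Your proposal is correct and follows essentially the same route as the paper: reduce the action of $U_{\epsilon}^{\otimes\mathcal{M}}$ to a closed virtual loop of $L_{\epsilon'}$ on $\partial\mathcal{M}$ via \cref{symAres}, invoke the confinement analysis of \cref{fluxrestmodel} to count one unit of energy per boundary plaquette, and observe that the $G$-invariance fixes only the coset $\epsilon G\in Q$. The paper's own proof is considerably terser---it simply notes that the closed string ``is formed with the same virtual operators as the confined fluxes'' and therefore inherits the $|\partial\mathcal{M}|$ scaling---while you spell out the internal-edge cancellation and the obstruction to shrinking the loop more explicitly; but the logic is the same.
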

 \end{tcolorbox}
\begin{proof}
The state with this closed string is an excitation of the parent Hamiltonian of $A^{\rm{res}}_G$ because it is formed with the same virtual operators as the confined fluxes. Then the virtual representation of the unitary $U_{\epsilon_k}$ over a connected region is a confined closed string. Therefore the energy of the state with the confined closed string depends on the length of the loop:
 \begin{equation}
H^{\rm res} \left(U_{\epsilon_k}^{\otimes \mathcal{M}}|\Psi_{A^{\rm{res}}_G}\rangle\right) \propto |\partial \mathcal{M}| \left(U_{\epsilon_k} ^{\otimes \mathcal{M}}|\Psi_{A^{\rm{res}}_G}\rangle \right),\notag
\end{equation}
where $k\neq e$. We notice that we can obtain the same defect acting on the complementary region of $\mathcal{M}$. The fact that these defects cannot be deformed freely through the lattice imply that the state does not remain invariant under the action of $U_{\epsilon_k}$  over $\mathcal{M}$. Unlike the unmodified model situation \cite{Kitaev03}, the action of closed loops of operators on the ground state may increase energy. 

In order to leave the restricted state invariant, i.e., remove the confined closed string, we have to act over the complementary of $\mathcal{M}$. That is, we need to act on the whole lattice to preserve the state, ending up with a global on-site symmetry of the group $Q\cong E/G$. 
\end{proof}

\section{Application to MPS: the 1D SPT  classification}\label{sec:phasesMPS}

We now wish to show how the notion of group extension discussed in the previous sections is also useful to analyse 1D systems with symmetries. Namely, it allows for a transparent derivation of the classification of 1D phases in the MPS formalism. We start with a short summary of this classification \cite{Chen11,Schuch11, Pollmann10,Fidkowski11}. Our main focus will be on symmetry protected topological (SPT) phases. 

\subsection{Overview of SPT classification in 1D} \label{secclass}

Formally, two systems are said to be in the same quantum phase if they can be connected by a smooth path of Hamiltonians, which is local, bounded-strength and uniformly gapped\footnote{One requires that the gap of the Hamiltonian along the path is uniformly lower bounded by a constant in the size of the system. This requirement ensures that the gap is preserved in the thermodynamic limit.}. Along this path the physical properties of the system will change smoothly. If at some point the gap closes, it may result in a change of the global properties and usually a phase transition will occur. When a symmetry is imposed on Hamiltonians and the paths connecting them, phase diagrams become richer; two systems are then said to be in the same phase if the previous path exists and if, moreover, there exists a representation of the symmetry which commutes with the Hamiltonian along the entire path. An example of a system with non-trivial SPT phase is the celebrated spin-1 Haldane chain \cite{Haldane83}.

The classification of quantum gapped phases can be restricted to the task of classifying {MPS}. This is justified, as it was already mentioned in \cref{chapter:Intro}, because it has been proven that the family of  {MPS} approximate efficiently ground states of gapped quantum Hamiltonians \cite{Hastings07A,Hastings07B,Arad13}\footnote{This reduction, mathematically speaking, is still open}. 
And for any MPS an associated parent Hamiltonian can be constructed. The classification can be further restricted to the so-called {\it isometric form} of an  {MPS}: those  {MPS} which are renormalization fixed points.  The reason is that, as shown in \cite{Schuch11}, a gapped path of Hamiltonians can be constracted connecting {\it any} MPS with its corresponding isometric form. The final step is to identify the obstructions to design gapped paths of Hamiltonians between the different isometric forms. 

The main conclusions of \cite{Chen11,Schuch11, Pollmann10} are: (i) without symmetries, all systems with the same ground state degeneracy are in the same phase, where the representative states are the product state for the unique ground state case and the GHZ state for the degenerate case. (ii) When on-site linear symmetries are imposed to the systems, the different phases are classified, in the unique ground state case, by the second cohomology group $H^2(G,U(1))$ of  the symmetry group $G$ over $U(1)$. This classification is best understood if one considers the virtual d.o.f. of the  {MPS}: the unitary representation $U_g$ realising the physical global symmetry translates into an action $V_g \otimes V^{\dagger}_g$ on these virtual d.o.f., where $V_g$ is a projective representation of $G$ (see \cref{Ap:projrep}). When a symmetry is imposed, the possible phases that can be obtained are labeled by the equivalence classes of the representation $V_g$; $H^2(G,U(1))$ precisely identifies them. 

In the case where the ground state is not unique (non-injective case), the tensor $A$ of a system is supported on a "block-diagonal" space
\begin{equation}
\mathcal{H}=\bigoplus^{\mathcal{A}}_{\alpha=1} \mathcal{H}_{\alpha},\notag
\end{equation}
which is the known block structure of the matrices forming the tensor of the {MPS}. The phases are determined by a representation of the symmetry group in terms of permutations between the blocks of the  {MPS}, and the $\mathcal{A}$-fold cartesian product of $H^2(G,U(1))$ with itself.

An on-site global symmetry of an  {MPS} under a linear unitary representation of the group $G$, which we will call $U_g$, is given by the following action on the virtual d.o.f. of the tensor:

\begin{equation}\label{eq:symblocks}
U_g A = A \left ( P_g\left [ \bigoplus^{\mathcal{A}}_{\alpha=1}V^{\alpha}_g\otimes \bar{V}^{\alpha}_g\right] \right).
\end{equation}

The operators $\{ V^{\alpha }_g: g \in G \}$ form a (projective) unitary representation acting on $\mathcal{H}_{\alpha}$, as in the case of a unique ground state. The operators $\{ P_g: g \in G \}$ form a representation of $G$ that acts as a permutation between the subspaces $\mathcal{H}_{\alpha}$. This representation is in general reducible: $\{P_g \mathcal{H}_{\alpha}: g \in G\} \neq \mathcal{H}$. As a result, the subspaces $\mathcal{H}_\alpha$ can be lumped into larger subspaces of $\mathcal{H}$, $\mathcal{H}_{\mathfrak{a}}$, that are irreducible under the action of $G$: $\{P_g \mathcal{H}_{\mathfrak{a}}: g \in G\}= \mathcal{H}_{\mathfrak{a}}$. From the splitting $\mathcal{H}=\bigoplus_{\mathfrak{a}} \mathcal{H}_{\mathfrak{a}}$, a decomposition of the operators $P_g$ into irreducible representations $P_g^{\mathfrak{a}}$ can be derived, which we can use to re-express Eq.(\ref{eq:symblocks}): 
\begin{equation}\label{eq:symblocks_2}
U_gA=A\left (\bigoplus_{ \mathfrak{a} }P^{\mathfrak{a}}_g\left [ \bigoplus_{\alpha\in \mathfrak{a} }V^{ \alpha }_g\otimes \bar{V}^{\alpha }_g\right]\right).
\end{equation}
Interestingly, the decomposition in terms of $\mathcal{H}_{\mathfrak{a}}$ is generally unstable under perturbations, even those that preserve the symmetry \cite{Schuch11}. 

We now study each summand $P^{\mathfrak{a}}_g\left ( \bigoplus_{\alpha\in \mathfrak{a} }V^{\alpha }_g\otimes \bar{V}^{\alpha }_g\right)$ of Eq.(\ref{eq:symblocks_2}) and explain how it relates to the concept of \emph{induced} representation \cite{Simon96}. For a fixed summand index $\mathfrak{a}$, we pick a reference block $\alpha_0\in \mathfrak{a}$ and we define the subgroup:
$$ H:=\{h\in G : \; P^{\mathfrak{a}}_h (\mathcal{H}_{\alpha_0})=\mathcal{H}_{\alpha_0}  \}\subset G.$$
We can split $G$ in disjoint cosets $k_{\beta}H$ labelled by the blocks $\beta \in \mathfrak{a}$ for a (non-unique) choice of $k_{\beta}\in G$ chosen such that $P^{\mathfrak{a}}_{k_{\beta}} (\mathcal{H}_{\alpha_0})=\mathcal{H}_{\beta}$ (let us notice that this is possible because $P_g$ is irreducible in this subset $\mathfrak{a}$). Then we can use that for every $g$ and $\alpha$, there exist unique $h\in H$ and $\beta$ such that
\begin{equation}\label{eq:galphah}
g k_{\alpha}=k_{\beta}h.
\end{equation}
Ref. \cite{Schuch11} shows that the action on each summand is unitarily equivalent to
\begin{equation}
P^{\mathfrak{a}}_g\left ( \bigoplus_{\alpha \in \mathfrak{a} }V^{\alpha_0 }_h\otimes \bar{V}^{\alpha_0 }_h\right),\notag
\end{equation}
where $h$ is determined by $g,\alpha$ in Eq.(\ref{eq:galphah}). Also Ref. \cite{Schuch11} shows that two systems are in the same phase if the permutation representation $P_g$ are the same and if for each irreducible subset $\mathfrak{a}$, the projective representation $V^{\alpha_0 }_h$ has the same cohomology class. Since the permutation is effectively encoded in $H$, a phase is characterized by the choice of $H$ together with one of its cohomology classes.

\subsection{Non-trivial virtual representation from restriction in  {MPS}}

Let us start with a $G$-isometric  {MPS} with tensor $ A_G=|G|^{-1}\left (\sum_{ g \in G}  L_g\otimes L^{\dagger}_g \right )$, where $L_g$ denotes again the left regular representation of $G$ \cite{Schuch10}. This {MPS} has the local symmetry (see Fig. \ref{Aresym}(a)):
\begin{equation}
(L_g\otimes L^{\dagger}_g)_p A_G \; =A_G (L_g\otimes L^{\dagger}_g)_v =A_G \quad \forall g \in G. \notag
\end{equation}
Its parent Hamiltonian has a degenerate ground subspace whose dimension is equal to the number of inequivalent irreps of $G$. In analogy to the 2D case, we wish to study the restricted tensor 
\begin{equation} \label{redtens}
 A^{{\rm res}}_N=\frac{1}{|N|}\left (\sum_{ g \in N}  L_g\otimes L^{\dagger }_g \right ), 
\end{equation} 
where $N$ is a normal subgroup of $G$. Unlike the 2D case, there is no topological content associated with $N$.

\begin{figure}[ht!]
\begin{center}
\includegraphics[scale=0.3]{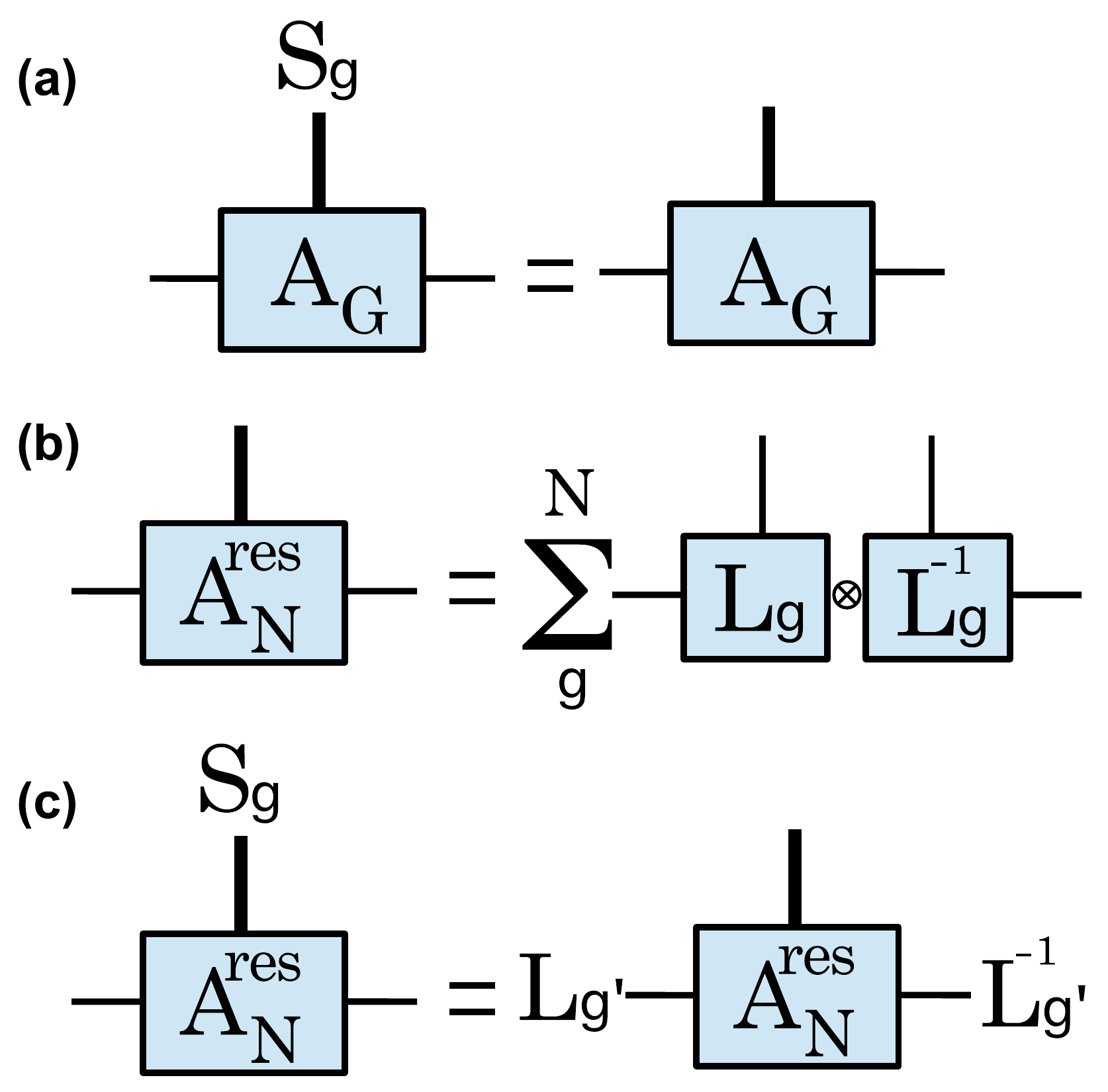}
\caption{(a) The local symmetry of the tensor $A_G$ is represented. (b) The tensor $A^{{\rm res}}_N$ of equation (\ref{redtens}) is depicted. The physical Hilbert space is decomposed into a tensor product form. (c) The action of the operators $S_g=L_g\otimes L_{g^{-1}}$ is translated into the virtual d.o.f. with a freedom on the choice of the element $g'$ within coset $[g]$.} 
\label{Aresym}
\end{center}
\end{figure}

The operators $\{S_g=L_g\otimes L_{g^{-1}}: g\in G\}$ represent a global on-site symmetry of both {MPS} constructed with $A_G$ and $A^{{\rm res}}_N$, as shown in Fig. \ref{Aresym}. As in 2D, these operators no longer represent a local symmetry of the tensor $A^{{\rm res}}_N$ for all the elements $g\in G$. In fact, $S_g$ is a local symmetry of $A^{{\rm res}}_N$ if and only if $g \in N$. 

Given $g,g' \in G$, as already seen in the more general 2D case, whenever $g N=g' N$, the actions of $S_g$ and $S_{g'}$ on $|{\mathcal{M}(A^{{\rm res}}_N)} \rangle$ are identical; $S_g$ is actually a representation of $G_{{\rm sym}}\cong G/N$. Part of the local symmetry of the state $|\mathcal{M}(A_G)\rangle$ is degraded to a global symmetry in the state $|\mathcal{M}(A^{\rm{res}}_N)\rangle$. The local $G$-symmetry is reduced to a local $N$-symmetry plus a global on-site $G_{{\rm sym}}$ symmetry. Applying $S_g$ on $A^{\rm{res}}_N$ translates into a \emph{non}-trivial representation of $G_{{\rm sym}}$ on the virtual d.o.f., in contrast to applying $S_g$ on $A_G$. 

We now study the effect of $S_g$ on $A^{{\rm res}}_N$. To do so, we will exploit the block diagonal structure of $A^{{\rm res}}_N$ and we will express $\{ L_g: g \in G \}$ as its direct sum decomposition in terms of irreps of $G$. It will turn out that the block structure of the virtual matrices of $A^{{\rm res}}_N$ is related to the irreps of $N$. For our purposes, we will be led to study how, given a proper normal subgroup $N \subset G$, and an irrep of $G$, $\Pi_\nu$, the irreps of $N$ contained in $\Pi_\nu$ give a particular structure to the matrices $\Pi_\nu$. This issue has been analysed by Clifford in \cite{Clifford37}. Using his results, we will: 1) obtain all the possible phases in 1D with symmetries and degenerate ground space, 2) show how the notion of induced representation appears naturally in the 1D phase classification, 3) exhibit an explicit method to construct the state and operators of each phase 4) associate the restriction $G \to N$ to an appealing physical mechanism in 2D (see \cref{sec:symredcon}). 

To begin with, let us write $A_G$ a bit more explicitly: 
\begin{equation}\label{eq:generictensorA}
A_G= \frac{1}{|G|} \sum_{g\in G} \sum_{\alpha,\beta,i,j}[L_g]_{\alpha i} \overline{[L_g]}_{\beta j }|\alpha ) \langle i| \otimes|j \rangle (\beta|,
\end{equation}
where  $|i\rangle,|j\rangle,|\alpha)$ and $|\beta)$ are basis elements of the vector space that supports $L$. The basis of the space describing the virtual d.o.f. of the tensor $A_G$ is represented with round brackets in Eq.(\ref{eq:generictensorA}) whereas regular brackets are used for the physical Hilbert space.
Let us denote 
\beq 
L_g \cong \bigoplus_\nu  \id_{m_\nu} \otimes \Pi_\nu(g), \notag
\eeq
the decomposition into irreps of the left regular representation, which acts on $\mathbb{C}^{|G|} \cong \mathcal{H} = \bigoplus_\nu \mathcal{K}_\nu \otimes \mathcal{H}_\nu$. That is, $\Pi_\nu$ are the irreps of $G$, and $m_\nu$ their multiplicities; $\Pi_\nu$ acts on $\mathcal{H}_\nu$ and $\mathcal{K}_\nu$ is the multiplicity space associated with $\Pi_\nu$. Some Clebsch-Gordan matrix allows to write 
$$ A_G \cong \frac{1}{|G|}  \sum_{\substack{g \in G\\ \alpha,\beta,i,j}}   \left[ C \big(\bigoplus_{\nu}  \id_{m_{\nu}} \otimes \Pi_{\nu}(g)\big) C^{\dagger}  \right]_{\alpha i}   
\overline{\left[   C \big(\bigoplus_{\nu'}  \id_{m_{\nu'}} \otimes \Pi_{\nu'}(g)  \big) C^{\dagger} \right]}_{\beta j } |\alpha )\langle i| \otimes|j \rangle (\beta|. $$

If we express $A_G$ using orthonormal bases $\{ |k^{(\nu)},l^{(\nu)}\rangle\}$ for each subspace $\mathcal{K}_{\nu}\otimes \mathcal{H}_{\nu}$, and using the orthogonality relations of irreps, we obtain:
$$ A_G \cong \sum_{\nu}  \frac{1}{d_{\nu}} \sum_{ \substack{ k^{(\nu)}_1,k^{(\nu)}_3 \\ l^{(\nu)}_1,l^{(\nu)}_2 }} |k^{(\nu)}_1,l^{(\nu)}_1)\langle k^{(\nu)}_1,l^{(\nu)}_2 |  
\otimes   |k^{(\nu)}_3,l^{(\nu)}_2 \rangle (k^{(\nu)}_3,l^{(\nu)}_1 |. $$
This tensor exhibits an obvious block diagonal form, in line with \cite{Schuch10},
\begin{equation*}
A_G\cong \bigoplus_{\nu}  \frac{1}{d_{\nu}}  A_G[\nu],
\end{equation*}
where each block $A_G[\nu]$ admits a very simple diagrammatic representation:
\beq \label{eq:tendiagram}
A_G[\nu]= 
\parbox[c]{0.22\textwidth}{ \includegraphics[scale=0.25]{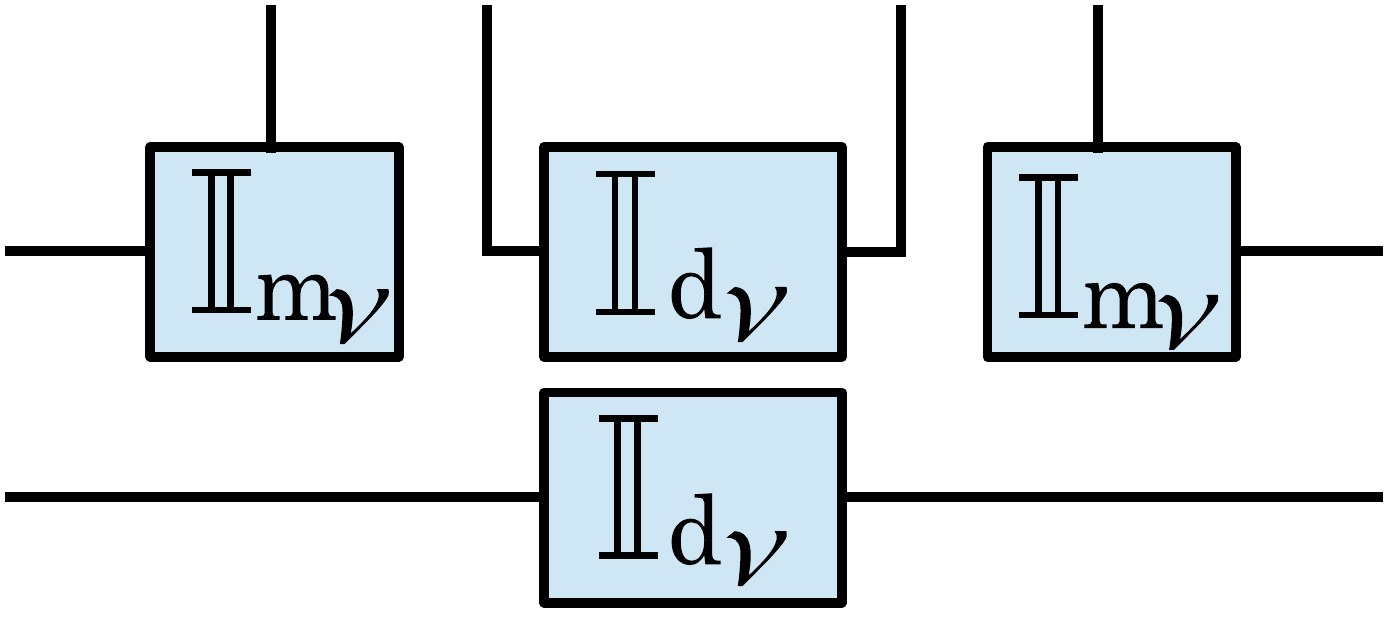}}.
\eeq
Note that the left regular representation is such that $m_\nu=d_\nu$.
Eq.({\ref{eq:tendiagram}) shows that if the irreps of two different groups, $G_1$ and $G_2$, have the same dimensions (and multiplicities), the corresponding tensors are identical modulo a local (Clebsch-Gordan) transformation. For example the left regular representations of $\mathbb{Z}_4$ and $\mathbb{Z}_2\times \mathbb{Z}_2$ both decompose into four one-dimensional irreps. Another example of this situation is that of the quaternionic group $Q_8$ and the order 4 dihedral group $D_4$: in each case, the left regular representation is made of two inequivalent one-dimensional irrep and two equivalent copies of a two-dimensional irrep. \\

Given an irrep $\gamma$ of $G$ and an element $h\in G$, we consider the following operator:
$$S_{\gamma}(h)=\left(\bigoplus_{\nu \neq \gamma}  \id_{m_{\nu}}\otimes  \id_{d_{\nu}} \oplus  \id_{m_{\gamma}}\otimes \Pi^G_{\gamma}(h)\right)
 \otimes  \left(\bigoplus_{\nu \neq \gamma}  \id_{m_{\nu}}\otimes  \id_{d_{\nu}} \oplus  \id_{m_{\gamma}}\otimes \bar{\Pi}^G_{\gamma}(h)\right).$$
Modulo an appropriate change of basis, the action of $S_{\gamma}(h)$ reads:
$$\tilde{A}_G=S_{\gamma}(h) \left( C\otimes C)A_G( C^{\dagger} \otimes C^{\dagger} \right ).$$
It is clear that $\tilde{A}_G[\nu]={A}_G[\nu]$ for $\nu\neq \gamma$, but the block $\gamma$ is modified as
$$\tilde{A}_G[\gamma]  = \frac{1}{d_{\gamma}}\sum_{k^{(\gamma)},k'^{(\gamma)}} 
 \sum_{ \substack{ m^{(\gamma)},n^{(\gamma)} \\ l^{(\gamma)},l'^{(\gamma)}}}
 [\Pi^G_{\gamma}(h)]_{n^{(\gamma)},l'^{(\gamma)}} \overline {[\Pi^G_{\gamma}(h)]}_{m^{(\gamma)},l'^{(\gamma)}  }  
 |k^{(\gamma)},l^{(\gamma)})\langle k^{(\gamma)},n^{(\gamma)} | \otimes|k'^{(\gamma)},m^{(\gamma)} \rangle (k'^{(\gamma)},l^{(\gamma)} |.$$

Since $\sum_{l'} \left([\Pi^G_{\gamma}(h)]_{n,l'} \overline {[\Pi^G_{\gamma}(h)]}_{m,l' }\right)=\delta_{m,n}$, the tensor remains invariant. That $S_{\gamma}(h)$ is a symmetry can be straightforwardly seen diagrammatically:
\begin{equation}\label{eq:locsympa}
\tilde{A}_G[\gamma]= \parbox[c]{0.25\textwidth}{ \includegraphics[scale=0.25]{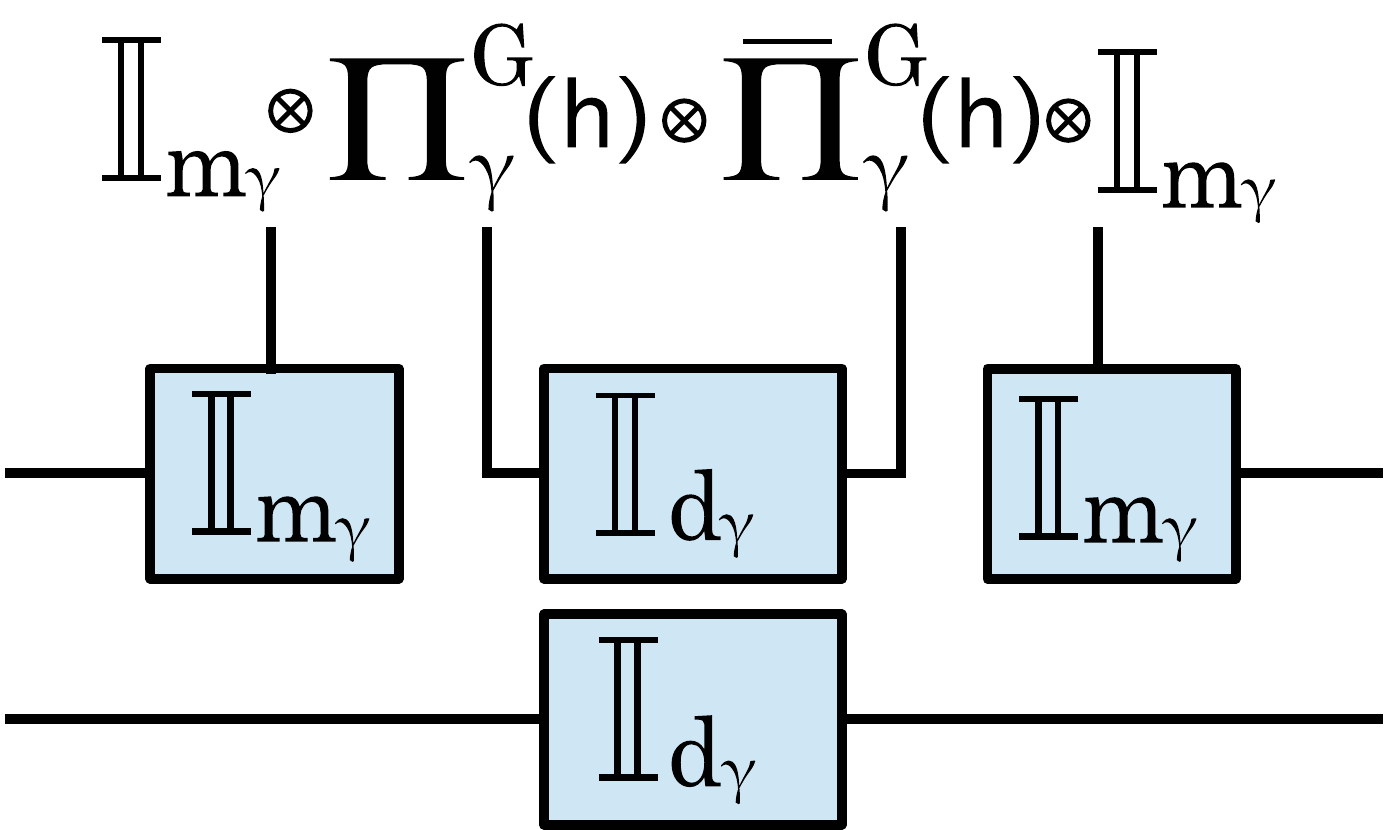}}
= \parbox[c]{0.25\textwidth}{ \includegraphics[scale=0.25]{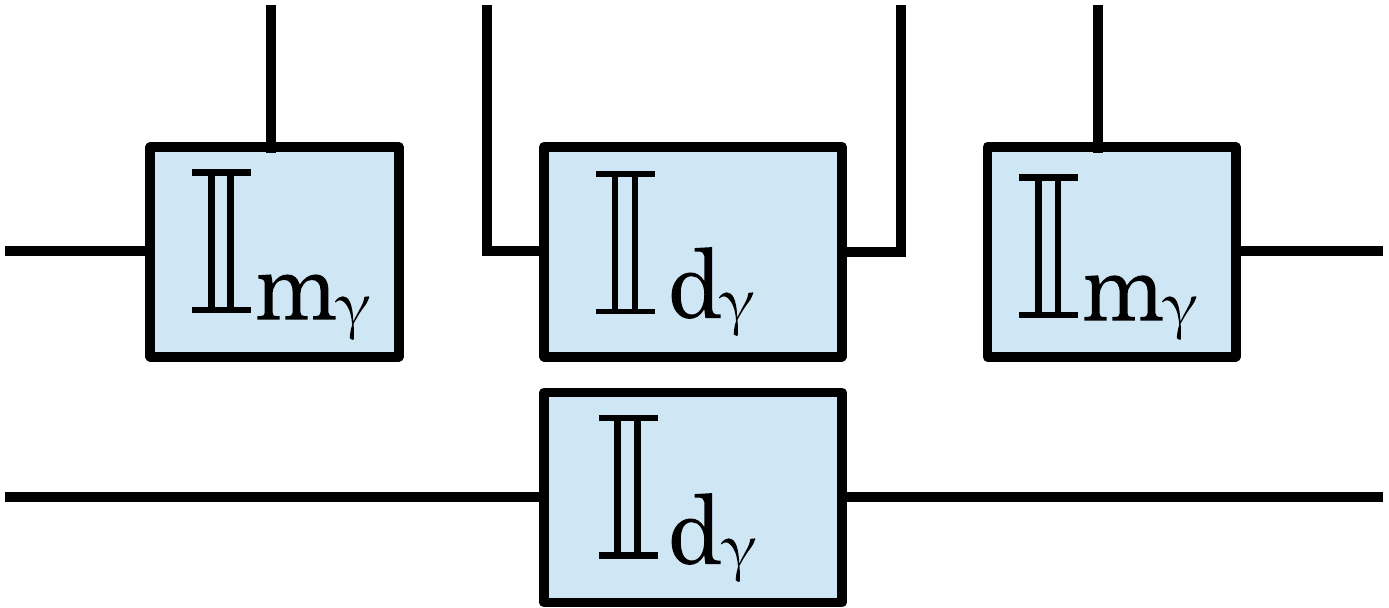}}= A_G[\gamma].
\end{equation}
The operators $S_{\gamma}(h)$ for all $ h\in G$, represent a local symmetry of the state constructed with the tensor $A_G$. 
Similarly the diagrams show that $S_{\gamma}(h)$ for each $ h\in G$ have a trivial action on the virtual d.o.f. of $A_G$. Summarizing, we have the following statement:

\begin{tcolorbox}
\begin{proposition}
For $G$-isometric  {MPS},
$$[S_{\gamma}(h)]_pA_G= A_G=[\Pi^G_{\gamma}(h)]_v A_G [\Pi^G_{\gamma}(h^{-1})]_v,$$
 $$ \left (S_{\gamma}(h)^{\otimes \mathcal{R}} \right) |\mathcal{M}(A_G) \rangle = |\mathcal{M}(A_G) \rangle,$$
where the subscripts $p$ and $v$ stand for the physical Hilbert space and virtual d.o.f. respectively, and where $\mathcal{R}$ is any region of the lattice.
\end{proposition}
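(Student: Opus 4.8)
The plan is to prove the proposition by a direct computation with the block-diagonal form of the $G$-isometric MPS tensor, exactly mimicking the diagrammatic manipulations sketched in \cref{eq:locsympa}. Recall that $A_G = |G|^{-1}\sum_{g\in G} L_g \otimes L_g^\dagger$ decomposes, after a Clebsch-Gordan change of basis $C$, into blocks $A_G \cong \bigoplus_\nu \frac{1}{d_\nu} A_G[\nu]$ with $A_G[\nu]$ given by the diagram in \cref{eq:tendiagram}. Here $\mathcal H = \bigoplus_\nu \mathcal K_\nu \otimes \mathcal H_\nu$ with $m_\nu = d_\nu$, and the operator $S_\gamma(h)_p$ acts on the physical Hilbert space as the identity on every block $\nu\neq\gamma$ and as $\mathrm{id}_{m_\gamma}\otimes\Pi_\gamma^G(h)$ tensored with $\mathrm{id}_{m_\gamma}\otimes\bar\Pi_\gamma^G(h)$ on the block $\gamma$.

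First I would verify the tensor-level identity $[S_\gamma(h)]_p A_G = A_G$. Since $S_\gamma(h)$ acts trivially on blocks $\nu\neq\gamma$, it suffices to check $[S_\gamma(h)]_p A_G[\gamma] = A_G[\gamma]$. Writing $A_G[\gamma]$ explicitly in an orthonormal basis $\{|k^{(\gamma)}, l^{(\gamma)}\rangle\}$ of $\mathcal K_\gamma \otimes \mathcal H_\gamma$ as in the excerpt, the physical operator $\Pi_\gamma^G(h)$ acts on one physical index and $\bar\Pi_\gamma^G(h)$ on the other; contracting these against the $\delta$-structure of $A_G[\gamma]$ produces the factor $\sum_{l'}[\Pi_\gamma^G(h)]_{n,l'}\overline{[\Pi_\gamma^G(h)]}_{m,l'} = \delta_{m,n}$ (unitarity of $\Pi_\gamma^G(h)$), which collapses the sum back to $A_G[\gamma]$. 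This is precisely the computation already displayed before \cref{eq:locsympa} in the excerpt, so I would just cite it. Next I would establish $A_G = [\Pi_\gamma^G(h)]_v A_G [\Pi_\gamma^G(h^{-1})]_v$: conjugating block $\gamma$ of $A_G[\gamma]$ by $\Pi_\gamma^G(h)$ on one virtual leg and $\Pi_\gamma^G(h^{-1}) = \Pi_\gamma^G(h)^{-1}$ on the other leg cancels because of the $\delta_{k,k'}$-and-trace structure of the diagram — this is the second equality of \cref{eq:locsympa}, read with the physical operator removed and the virtual operators inserted. Concretely, $A_G[\gamma] = \frac{1}{d_\gamma}\sum_{k,l_1,l_2}|k,l_1)(k,l_2|\otimes\cdots$, and conjugation by $\Pi_\gamma^G(h)\otimes \Pi_\gamma^G(h)^{-1}$ on the virtual indices reshuffles the $l$-labels in a way that is undone by the internal contraction, so the block is invariant. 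Hence $[S_\gamma(h)]_p A_G = A_G = [\Pi_\gamma^G(h)]_v A_G [\Pi_\gamma^G(h)^{-1}]_v$.

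Finally, the global statement $\bigl(S_\gamma(h)^{\otimes\mathcal R}\bigr)|\mathcal M(A_G)\rangle = |\mathcal M(A_G)\rangle$ follows by applying the first identity site-by-site on the region $\mathcal R$: each factor $[S_\gamma(h)]_p$ is absorbed into the corresponding tensor $A_G$ without change, leaving the full MPS contraction untouched. (Equivalently, one can apply the second identity on each site of $\mathcal R$; the virtual operators $[\Pi_\gamma^G(h)]_v$ and $[\Pi_\gamma^G(h)^{-1}]_v$ inserted on shared bonds between consecutive sites in $\mathcal R$ cancel pairwise, and at the two boundary bonds of $\mathcal R$ one uses the $G$-invariance of the unmodified tensor $A_G$ — i.e. $[\Pi_\gamma^G(h)]_v A_G = A_G [\Pi_\gamma^G(h)]_v$ on the relevant leg, which is again a consequence of the block structure — to push the leftover operators out and annihilate them against the rest of the network, which is just more copies of $A_G$.) The main obstacle is purely bookkeeping: making the index gymnastics in the block-diagonal basis fully rigorous, in particular being careful that $m_\nu = d_\nu$ for the left regular representation so that the multiplicity space $\mathcal K_\gamma$ genuinely carries a copy of $\Pi_\gamma^G$ on which $S_\gamma(h)$ can act, and tracking the Clebsch-Gordan transformation $C$ consistently on physical versus virtual legs. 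No genuinely hard step is involved; everything reduces to unitarity of $\Pi_\gamma^G(h)$ and the Schur-orthogonality structure already encoded in the diagram \cref{eq:tendiagram}.
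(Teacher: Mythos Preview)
Your proposal is correct and follows essentially the same route as the paper: the paper's proof is precisely the block-wise computation displayed just before the proposition, using unitarity $\sum_{l'}[\Pi^G_\gamma(h)]_{n,l'}\overline{[\Pi^G_\gamma(h)]}_{m,l'}=\delta_{m,n}$ on the $\gamma$-block and the diagram \eqref{eq:locsympa}, and then reading off the local (hence global on any region $\mathcal R$) invariance. Your primary argument for the global statement --- absorbing $[S_\gamma(h)]_p$ site by site --- is the clean one; the parenthetical alternative via virtual operators and boundary pushing is unnecessary here and the intermediate claim $[\Pi_\gamma^G(h)]_v A_G = A_G[\Pi_\gamma^G(h)]_v$ would require its own check, so I would simply drop that aside.
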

\end{tcolorbox}

We now turn to the restricted  {MPS} (\ref{redtens}) and study its symmetries. For that, we begin with the decomposition of $ \Pi_\nu(n)$, $ n \in N$, into irreps of $N$: 
\begin{equation}\label{eq:Irdec}
\Pi_\nu(n)\cong \bigoplus_{\sigma \subset \nu}  \id_{\widehat{\mathcal{K}}_\sigma} \otimes \pi_\sigma(n),
\end{equation} 
where $\{\pi_\sigma(n): n \in N\}$ denote the irreps of $N$ and $\Pi_\nu(n)$ acts on  $\mathcal{H}_\nu= \bigoplus_{\sigma \subset \nu} \widehat{\mathcal{K}}_\sigma \otimes \widehat{\mathcal{H}}_\sigma$. As before $\widehat{\mathcal{K}}_\sigma$ is a multiplicity space, i.e. the number of copies of $\pi_\sigma(n)$ contained in $\Pi_\nu$ is equal to $\textrm{dim} \; \widehat{\mathcal{K}}_\sigma$.  Clifford's Theorem \cite{Clifford37} states that if $\sigma, \sigma' \subset \nu$, then (i) $\textrm{dim} \; \widehat{\mathcal{H}}_\sigma=\textrm{dim} \; \widehat{\mathcal{H}}_{\sigma'}=d_{\sigma(\nu)}$, (ii) the multiplicity spaces $\widehat{\mathcal{K}}_\sigma$ and $\widehat{\mathcal{K}}_{\sigma'}$  are isomorphic: $\textrm{dim} \widehat{\mathcal{K}}_\sigma= \textrm{dim} \widehat{\mathcal{K}}_{\sigma'}=\ell_{\nu}$ (iii) there exists $g \in G$ (that depends on $\sigma,\sigma'$) such that $\pi_{\sigma'}(n)=\pi_{\sigma}(g n g^{-1}), \forall n \in N$, i.e. $\pi_\sigma$ and $\pi_{\sigma'}$ are related by conjugation. 

Let $\{ \ket{l^{(\sigma)},q^{(\sigma)}} \}$ denote an orthonormal basis of $\widehat{\mathcal{K}}_\sigma \otimes \widehat{\mathcal{H}}_\sigma$. Using again irrep orthogonality relations, one can readily show that 
$$ A^{\rm res}_N =  \frac{1}{|N|}\sum_{n \in N} L (n)\otimes L^{\dagger}(n) \cong   \sum_{\nu,\mu}  \;  \id_{\mathcal{K}_\nu}\otimes A^{\rm res}_G(\nu,\mu) \otimes  \id_{\mathcal{K}_\mu}$$
where
\begin{equation} \label{eq:blockdesc}
A^{\rm res}_N(\nu,\mu)= \sum_{\sigma(\nu) \sim \rho(\mu)}  \frac{1}{d_{\sigma(\nu)}}  
\sum_{\substack{ q^{(\sigma)},q^{(\rho)}\\l^{(\sigma)},l^{(\rho)} }} | l^{(\sigma)},q^{(\sigma)} ) 
\langle  l^{(\sigma)},q^{(\rho)}|  \otimes    |l^{(\rho)},q^{(\rho)}\rangle  ( l^{(\rho)},q^{(\sigma)}|,
 \end{equation}
and where $\sigma \sim \rho$ indicates that both representations are equivalent. We observe that if $\nu$ and $\mu$ do not contain any common irrep of $N$, $A^{\rm res}_N(\nu,\mu)=0$. We also point out that, in virtue of Clifford's Theorem, if $\nu$ and $\mu$ have one common irrep of $N$, then \emph{any} irrep of $N$ contained in $\nu$ is also contained in $\mu$ and vice versa. They are so-called associate.

We can represent Eq.(\ref{eq:blockdesc}) diagrammatically:
\beq \label{eq:diares}
A^{\rm res}_N(\nu,\mu)=  \sum_{\sigma(\nu) \sim \rho(\mu)}  \frac{1}{d_{\sigma(\nu)}}
\parbox[c]{0.2\textwidth}{ \includegraphics[scale=0.25]{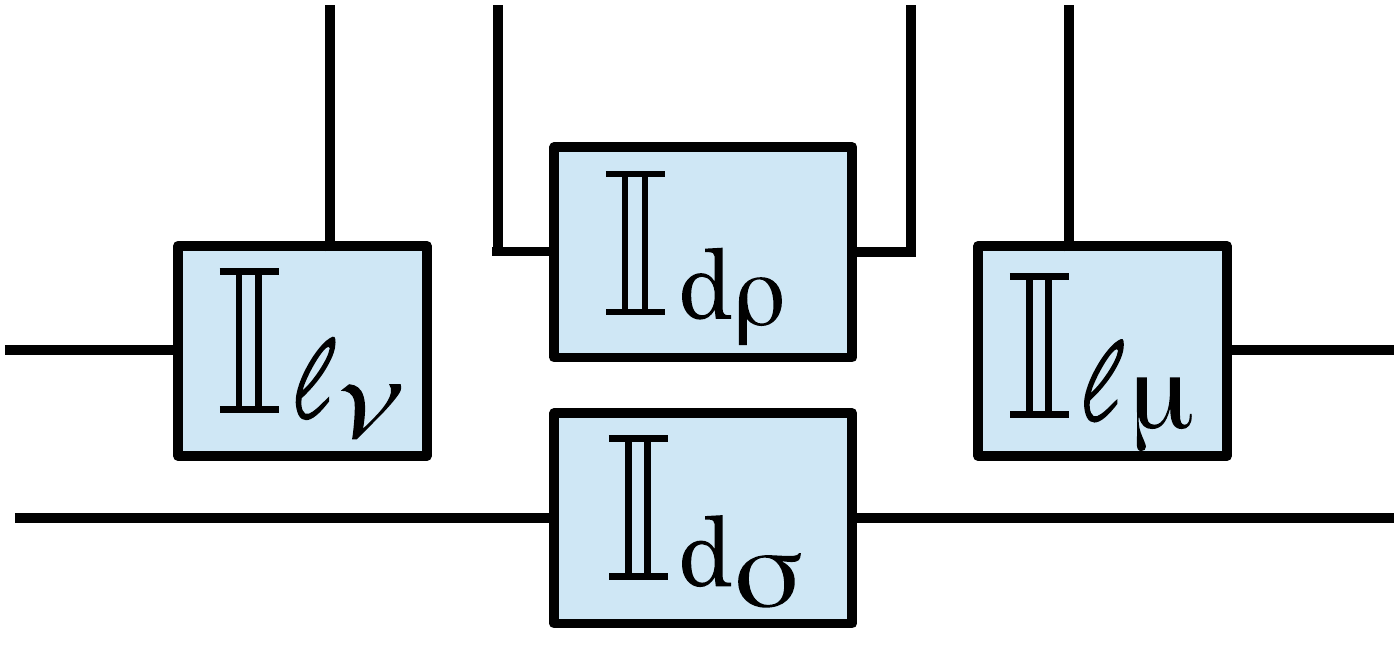}}
\eeq

In analogy with our discussion of $A_G$, we will construct symmetry operators in terms of irreps of $G$. For the sake of clarity, it is desirable that these irreps reflect the decomposition theory of the restricted tensor (\ref{eq:diares}). Clifford's theorem shows us exactly how to do that. Let $\{ \pi_{\sigma_{i}} : i=1 \ldots r_\nu\}$ represent the set of inequivalent irreps of $N$ that appear in the decomposition theory of $\{ \Pi_\nu(n): n \in N \}$. We can express the matrix $\Pi_\nu(g)$ in terms of submatrices $T_{ij}(g): \widehat{\mathcal{K}}_{\sigma_j} \otimes \widehat{\mathcal{H}}_{\sigma_j} \mapsto \widehat{\mathcal{K}}_{\sigma_i} \otimes \widehat{\mathcal{H}}_{\sigma_i}$ as:
\beq\label{eq:pi_nu_cliff}
\Pi_{\nu}(g)=\left( \begin{array}{ccc}T_{11}(g)&\cdots&T_{1r}(g)\\ \vdots &\ddots& \vdots\\ T_{r1}(g) &\cdots & T_{rr}(g)\end{array}\right).
\eeq
It is clear that $T_{ij}(n)=\delta_{i,j}   \id_{\widehat{\mathcal{K}}_{\sigma_i}} \otimes \pi_{\sigma_i}(n) \forall n \in N$. For fixed $i$ and $g$, it can be shown that there is one and only one value of $j$ for which $T_{ij}(g) \neq 0$. That is, $\Pi_\nu(g)$ has a permutation form, for example:
$$\Pi_{\nu}(g)=\left( \begin{array}{cccc} 0&*&0&0\\  0&0&*&0 \\ 0&0 &0& *\\ *&0 &0 & 0\end{array}\right),$$ interchanging the subspaces $\{ \widehat{\mathcal{K}}_{\sigma_i} \otimes \widehat{\mathcal{H}}_{\sigma_i} :i =1 \ldots r_\nu\}$, and acting non-trivially on them.

To elucidate the content of the operators $T_{ij}$, it is convenient to introduce the subgroup $G' \subset G$ that leaves $\widehat{\mathcal{K}}_{\sigma_1} \otimes \widehat{\mathcal{H}}_{\sigma_1}$ invariant: 
\begin{equation*}
G' \equiv \{g'\in G ; T_{11}(g') \neq 0\},
\end{equation*}
and we choose a set of elements $\{ \check{g}_2,\cdots, \check{g}_{r_\nu} \}$, such that $  T_{i1}(\check{g}_i) \neq 0$. These elements index the cosets of $G'$ in $G$:
$$G=G'+\check{g}_2 G'+\cdots+\check{g}_{r_\nu} G'. $$
We want to describe each matrix $T_{ij}(g)$ in terms of the representation $T_{11}(g')$ of $G'$. We notice that if one takes an element $g\in G$ and some element $\check{g}_i$, there exists a unique $\check{g}_j$ and $g'\in G'$ such that $g \check{g}_j=\check{g}_i g'$. Building on this observation, we see that
\begin{equation}\label{eq:indurep} T_{ij}(g) \cong  \tilde{T}_{ij}(g) \equiv
\Big\{
	\begin{array}{ll}
		T_{11}(\check{g}_i ^{-1}g \check{g}_j)  & \mbox{if } \check{g}^{-1}_i g \check{g}_j\in G' \\
		0 & \mbox{otherwise } 
	\end{array}.
\end{equation}
Hence $\Pi_\nu$ can be expressed as an induced representation of an irrep of $G'$. Moreover one can prove that $T_{11}(g' )$ admits a tensor product decomposition:
\beq \label{eq:VC}
 T_{11}(g')=V_\nu(g')\otimes C_\nu(g'),
\eeq
 where $V_\nu(g'): \widehat{\mathcal{K}}_{\sigma_1} \to \widehat{\mathcal{K}}_{\sigma_1}$ and $C_\nu(g'): \widehat{\mathcal{K}}_{\sigma_1} \to \widehat{\mathcal{K}}_{\sigma_1}$ are irreducible representations of $G'$. Since $T_{11}(n)= \id_{\widehat{\mathcal{K}}_{\sigma_1}} \otimes \pi_{\sigma_1}(n) \forall n \in  N$, we find by identification that $V_\nu(n)= \id_{\widehat{\mathcal{K}}_{\sigma_1}}$ for all $n\in N$. 

It can be proven that $V_\nu$ is projective:
\begin{equation*}
V_\nu(g) V_\nu(h')=\omega(g',h') V_\nu(g'h'),
\end{equation*}
where $\omega$ satisfies the cocycle condition:
$$ \omega(k'g',h') \omega(k',g')=\omega(k',g'h') \omega(g',h').$$
Since $\omega(g'n,h'm)=\omega(g',h')$ for all $n,m\in N$, $V_\nu(g')$ is actually a representation of $G'/N$.
Finally we can write:
\begin{equation}\label{eq:simdec}
 \tilde{T}_{ij}(g)\equiv
\Big\{
	\begin{array}{ll}
		V_\nu(\check{g}_i^{-1}g \check{g}_j)\otimes C_\nu(\check{g}_i ^{-1}g \check{g}_j)  & \mbox{if } \check{g}^{-1}_i g \check{g}_j\in G' \\
		0 & \mbox{otherwise } 
	\end{array}.
\end{equation}
See Fig. \ref{Aresymblock2}(c) for a diagrammatic representation. 
\begin{figure}[ht!]
\begin{center}
\includegraphics[scale=0.7]{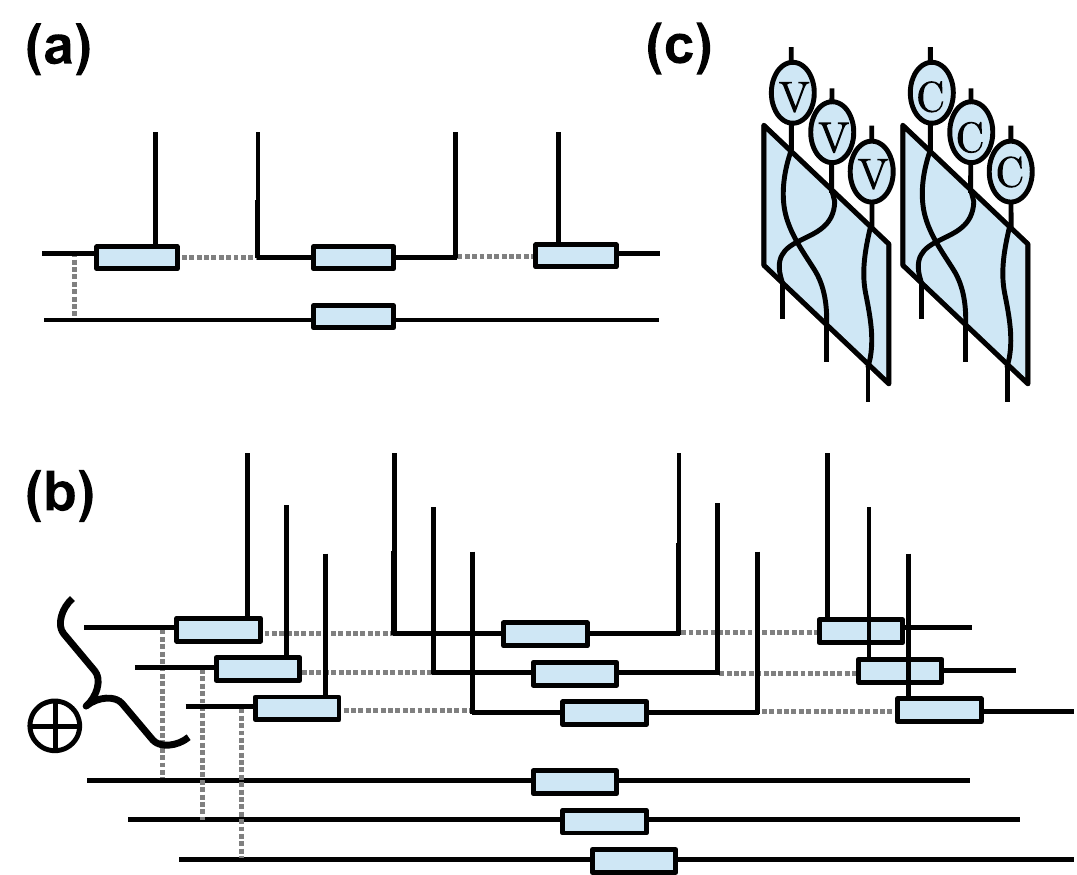}
\caption{ (a) Diagrammatic representation of a term appearing in the block $A^{\rm res}_N(\nu,\mu)$; see Eq.(\ref{eq:diares}). The grey dashed line is meant to indicate that the four boxes represent the operator $ \id_{\ell_\nu} \otimes  \id_{d_\rho} \otimes  \id_{d_\sigma} \otimes  \id_{\ell_\mu}$ corresponding to \emph{fixed} values of $\rho, \sigma$ in the sum. (b) Diagrammatic representation of the \emph{whole sum} Eq.(\ref{eq:diares}): each set of four tensors in a same plane parallel to the sheet relate to a same term of the sum, i.e. a same value of the pair $(\rho,\sigma)$. The $\oplus$ symbol is meant to mark the summation over all possible values of $(\rho,\sigma)$. (c) Decomposition of an irrep of $G$, described by the Eq.(\ref{eq:simdec}). We omit to represent the dependence of $V$ and $C$ on $g,\nu, i, j$, see Eq.(\ref{eq:simdec}).} 
\label{Aresymblock2}
\end{center}
\end{figure}
The matrix $\tilde{\Pi}_\nu(g)$ defined as (\ref{eq:pi_nu_cliff}), with $T_{ij}$ replaced with $\tilde{T}_{ij}$ maps $\widehat{\mathcal{K}}_{\sigma_i} \otimes \widehat{\mathcal{H}}_{\sigma_i}$ to $\widehat{\mathcal{K}}_{\sigma_j} \otimes \widehat{\mathcal{H}}_{\sigma_j}$ bijectively according to $gg_j=g_i g'$. If we further define $\psi_{ij}(g) \equiv \check{g}_i^{-1} g \check{g}_j$, we are in a position to specify the action of $\tilde{\Pi}_\nu(g)$,
$$\bra{l^{(\sigma_j)},q^{(\sigma_j)}} \tilde{\Pi}_\nu(g) \ket{l^{(\sigma_i)},q^{(\sigma_i)}}=
\bra{l^{(\sigma_j)}} V_\nu(\psi_{ij}(g))\ket{l^{(\sigma_i)}}  
\times \bra{q^{(\sigma_j)}} C_\nu(\psi_{ij}(g)) \ket{q^{(\sigma_i)}},$$
\begin{figure}[ht!]
\begin{center}
\includegraphics[scale=0.6]{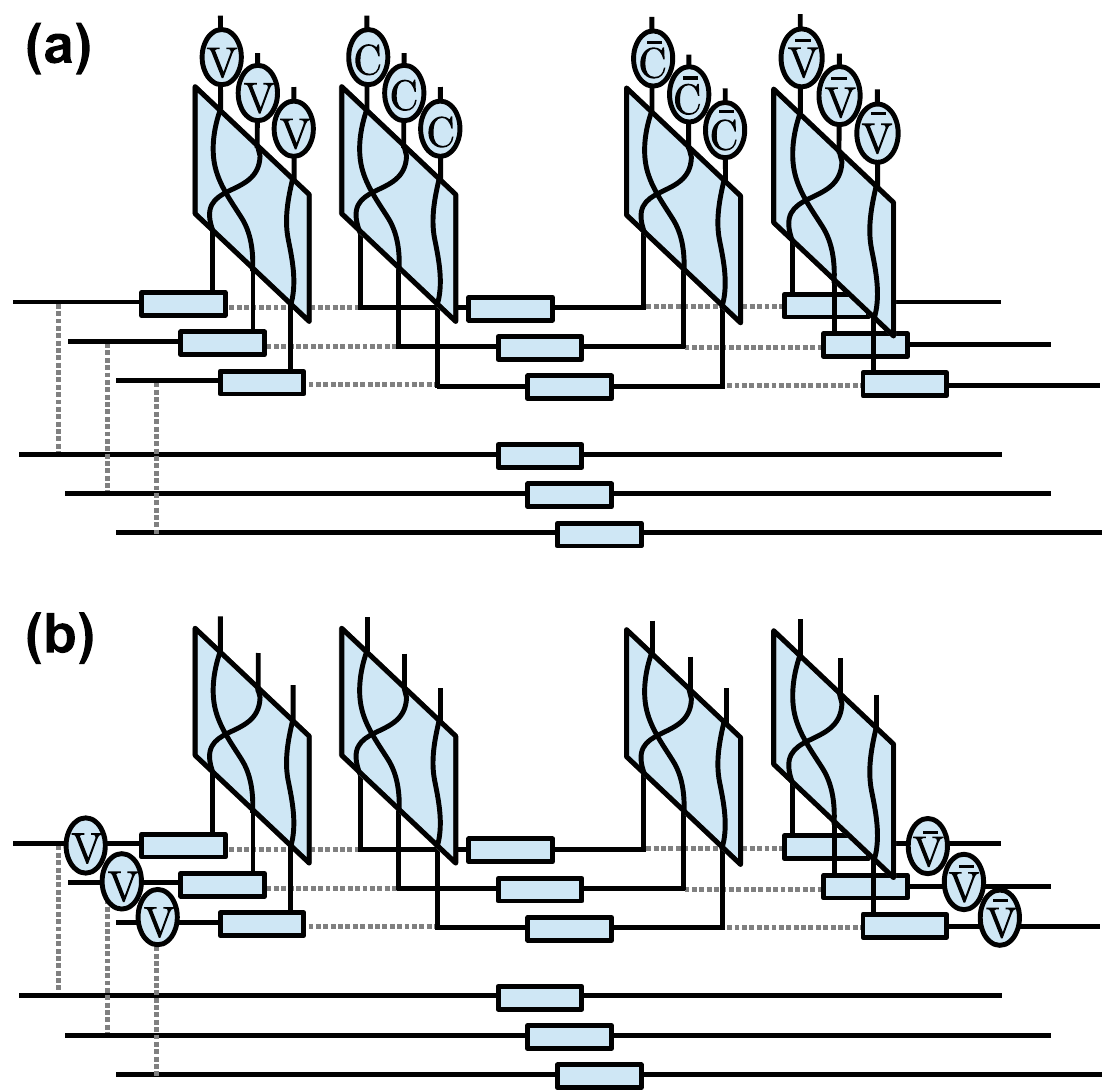}
\caption{ (a) Action of the irrep operator on the tensor according to Eq.(\ref{eq:symresfor1}). (b) Final result of the action on the restricted tensor given by Eq.(\ref{eq:actionblock}).} 
\label{Aresymblock}
\end{center}
\end{figure}

and to apply the operator corresponding to two associate irreps $\nu$ and $\mu$ over $A^{\rm res}_N(\nu,\mu)$:

\begin{equation}\label{eq:symresfor1}
\begin{split}
\Big(&\tilde{\Pi}_\nu(g)\otimes \tilde{\Pi}^{\dagger}_\mu(g)\Big)_p A^{\rm res}_N(\nu,\mu)=
\sum_{\sigma_i(\nu) \sim \rho_i(\mu)}  \frac{1}{d_{\sigma_i}}  
\sum_{\substack{ q^{(\rho_j)}\\l^{(\sigma_j)},l^{(\rho_j)} }}  
\sum_{\substack{ q^{(\sigma_i)}\\l^{(\sigma_i)},l^{(\rho_i)} }} 
 \bra{q^{(\rho_j)}}C^{\dagger}_\mu(\psi_{ij}(g))  \id_{d_{\rho_i}} C_\nu(\psi_{ij}(g))\ket{q^{(\rho_j)}}\times \\
&\hspace{3cm}|l^{(\sigma_i)},q^{(\sigma_i)} ) \langle  l^{(\sigma_j)},q^{(\rho_j)}| \bra{l^{(\sigma_i)}} V_\nu(\psi_{ij}(g))\ket{l^{(\sigma_j)}}\otimes \bra{l^{(\rho_j)}}V^{\dagger}_\mu(\psi_{ij}(g))\ket{l^{(\rho_i)}} 
 |l^{(\rho_j)},q^{(\rho_j)}\rangle  ( l^{(\rho_i)},q^{(\sigma_i)}|,
\end{split}
\end{equation}
where $ \bra{q^{(\rho_j)}}C^{\dagger}_\mu(\psi_{ij}(g))  \id_{d_{\rho_i}} C_\nu(\psi_{ij}(g))\ket{q^{(\rho_j)}}=1 $ since $C_\nu=C_\mu$ for associate irreps $\nu$ and $\mu$ \cite{Clifford37}. Then
\begin{equation}
\begin{split}
\Big(&\tilde{\Pi}_\nu(g)\otimes \tilde{\Pi}^{\dagger}_\mu(g)\Big)_p A^{\rm res}_N(\nu,\mu)=\\
&\sum_{\sigma_i(\nu) \sim \rho_i(\mu)}  \frac{1}{d_{\sigma_i}}  
\sum_{\substack{ q^{(\rho_j)}\\l^{(\sigma_i)},l^{(\rho_i)} \\l^{(\sigma_j)},l^{(\rho_j)} }}  
 |l^{(\sigma_i)},q^{(\sigma_i)} ) \langle  l^{(\sigma_j)},q^{(\rho_j)}|  \bra{l^{(\sigma_i)}} V_\nu(\psi_{ij}(g))\ket{l^{(\sigma_j)}} 
  \bra{l^{(\rho_j)}}V^{\dagger}_\mu(\psi_{ij}(g))\ket{l^{(\rho_i)}}  |l^{(\rho_j)},q^{(\rho_j)}\rangle  ( l^{(\rho_i)},q^{(\sigma_i)}|= \\
  &\sum_{\sigma_j(\nu) \sim \rho_j(\mu)}  \frac{1}{d_{\sigma_j}}  
\sum_{\substack{ q^{(\rho_j)}\\l^{(\sigma_i)},l^{(\rho_i)} \\l^{(\sigma_j)},l^{(\rho_j)} }}   
(l^{(\sigma_i)}| V_\nu(\psi_{ij}(g))|l^{(\sigma_j)})
 |l^{(\sigma_i)},q^{(\sigma_i)} ) \langle  l^{(\sigma_j)},q^{(\rho_j)}| \otimes 
   |l^{(\rho_j)},q^{(\rho_j)}\rangle  ( l^{(\rho_i)},q^{(\sigma_i)}| (l^{(\rho_j)}|V^{\dagger}_\mu(\psi_{ij}(g)) | l^{(\rho_i)}) .\notag
\end{split}
\end{equation}

Therefore 
\begin{equation}\label{eq:actionblock}
\Big(\tilde{\Pi}_\nu(g)\otimes \tilde{\Pi}^{\dagger}_\mu(g)\Big)_p A^{\rm res}_N(\nu,\mu)=  
A^{\rm res}_N(\nu,\mu)\left( P_\nu(g)\bigoplus V_\nu(\psi_{ij}(g))\otimes P_\mu(g) \bigoplus\bar{V}_\mu(\psi_{ij}(g))\right)_v,
\end{equation}
where $P_\nu(g)$ represents the permutation part of the irrep $\nu$ given by the induced representation of $G'$ (see Eq.(\ref{eq:indurep})) mapping $\widehat{\mathcal{K}}_{\sigma_i} \otimes \widehat{\mathcal{H}}_{\sigma_i}$ to $\widehat{\mathcal{K}}_{\sigma_j} \otimes \widehat{\mathcal{H}}_{\sigma_j}$ and where $V_\nu(\psi_{ij}(g))$ is the projective representation of $G'/N$, appearing in the decomposition of  Eq.(\ref{eq:VC}), acting on $\widehat{\mathcal{K}}_{\sigma_j}$. The diagrams for this action are shown in Fig. \ref{Aresymblock}.

We now consider the operators
$$S_{\nu,\mu}(g)=\left(\bigoplus_{\gamma \neq \nu,\mu}  \id_{m_{\gamma}}\otimes  \id_{d_{\gamma}} \oplus  \id_{m_{\nu}}\otimes \tilde{\Pi}_\nu(g) \oplus  \id_{m_{\mu}}\otimes \tilde{\Pi}_\mu(g) \right)  \otimes \left(\bigoplus_{\gamma \neq \nu,\mu}  \id_{m_{\gamma}}\otimes  \id_{d_{\gamma}} \oplus  \id_{m_{\nu}}\otimes \tilde{\Pi}^{\dagger}_\nu(g) \oplus  \id_{m_{\mu}}\otimes \tilde{\Pi}^{\dagger}_\mu(g)
\right) $$
and
\begin{equation}\label{eq:sinop}
S_{\nu}(g)=\left(\bigoplus_{\gamma \neq \nu,\mu}  \id_{m_{\gamma}}\otimes  \id_{d_{\gamma}} \oplus  \id_{m_{\nu}}\otimes \tilde{\Pi}_\nu(g) 
\right) \otimes \left(\bigoplus_{\gamma \neq \nu,\mu}  \id_{m_{\gamma}}\otimes  \id_{d_{\gamma}} \oplus  \id_{m_{\nu}}\otimes \tilde{\Pi}^{\dagger}_\nu(g) \right).
\end{equation}
In virtue of Eq.(\ref{eq:actionblock}), these operators correspond to an on-site global symmetry of the state constructed with $A^{\rm res}_G$:
 
$$S^{\otimes L}_{\nu,\mu}(g)|\mathcal{M}(A^{\rm res}_G) \rangle=|\mathcal{M}(A^{\rm res}_G) \rangle.$$
Summarizing:

\begin{tcolorbox}
\begin{theorem}
The action of the symmetry operator of the group $G$ on each block is a conjugation by a projective representation ($V(g')$ on the virtual d.o.f.) of the group $ G'/N$ which can be extended to the group $G/N$ as an induced representation carrying intrinsically the pattern of permutation-action between blocks. The group $G'$ is defined as the elements of $G$ leaving one (chosen) block invariant under the permutation action of the operator for these elements. The normal subgroup $N$ of $G$ corresponds to the local invariance of the tensor and it encodes the splitting of irreducible blocks under the permutation action of the symmetry.
\end{theorem}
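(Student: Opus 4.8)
The plan is to assemble the statement directly from the structural decomposition already carried out in this subsection, essentially by collecting Eqs.~\eqref{eq:pi_nu_cliff}--\eqref{eq:actionblock} into a single coherent proof. First I would recall the setup: start from a $G$-isometric MPS with tensor $A_G$ and pass to the restricted tensor $A^{\rm res}_N$ of Eq.~\eqref{redtens}, where $N\triangleleft G$. The block structure of $A^{\rm res}_N$, as displayed in Eq.~\eqref{eq:blockdesc} and Eq.~\eqref{eq:diares}, is indexed by pairs $(\nu,\mu)$ of \emph{associate} irreps of $G$ (those sharing a common irrep of $N$ in their restriction to $N$), and within each such block the multiplicity-and-representation spaces $\widehat{\mathcal{K}}_{\sigma_i}\otimes\widehat{\mathcal{H}}_{\sigma_i}$ are permuted among themselves. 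This is the incarnation in the MPS language of the reduction $G\to N$.

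Next I would make the group-theoretic input precise. Fix a block $\nu$; let $\{\pi_{\sigma_i}\}_{i=1}^{r_\nu}$ be the inequivalent irreps of $N$ occurring in $\Pi_\nu|_N$, and define $G'\subset G$ as the stabilizer of $\widehat{\mathcal{K}}_{\sigma_1}\otimes\widehat{\mathcal{H}}_{\sigma_1}$, i.e.\ $G'=\{g'\in G: T_{11}(g')\neq 0\}$ in the notation of Eq.~\eqref{eq:pi_nu_cliff}. Clifford's theorem \cite{Clifford37} gives that $N\triangleleft G'$, that the $\pi_{\sigma_i}$ are $G$-conjugate, and that $\Pi_\nu$ is the representation of $G$ \emph{induced} from the irrep $T_{11}$ of $G'$; this is exactly Eq.~\eqref{eq:indurep}. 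Then the tensor-product decomposition $T_{11}(g')=V_\nu(g')\otimes C_\nu(g')$ of Eq.~\eqref{eq:VC}, with $V_\nu|_N=\id$, together with the projectivity relation $V_\nu(g')V_\nu(h')=\omega(g',h')V_\nu(g'h')$ and the cocycle identity for $\omega$, shows that $V_\nu$ descends to a projective representation of $G'/N$. I would state each of these as a step, citing \cite{Clifford37} for the representation-theoretic facts and Eqs.~\eqref{eq:pi_nu_cliff}--\eqref{eq:simdec} for the explicit block form.

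With this in hand, the action of the symmetry operator is Eq.~\eqref{eq:symresfor1}, which I would rederive by a direct (but not fully spelled out) computation: inserting $\bigl(\tilde\Pi_\nu(g)\otimes\tilde\Pi^\dagger_\mu(g)\bigr)_p$ on the physical legs of $A^{\rm res}_N(\nu,\mu)$, using $C_\nu=C_\mu$ for associate $\nu,\mu$ so that the $C$-factors cancel via $C^\dagger_\mu C_\nu=\id$, and re-indexing the sum. This yields Eq.~\eqref{eq:actionblock}: the physical operator is absorbed into the virtual legs as $P_\nu(g)\bigoplus V_\nu(\psi_{ij}(g))\otimes P_\mu(g)\bigoplus\bar V_\mu(\psi_{ij}(g))$, where $P_\nu$ is precisely the permutation part of the induced representation and $V_\nu$ the projective representation of $G'/N$. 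Finally, defining $S_{\nu,\mu}(g)$ and $S_\nu(g)$ as in Eqs.~\eqref{eq:sinop} and the display preceding it, the invariance $S^{\otimes L}_{\nu,\mu}(g)\ket{\mathcal{M}(A^{\rm res}_G)}=\ket{\mathcal{M}(A^{\rm res}_G)}$ follows because the induced/permutation structure of $\tilde\Pi_\nu$ makes the virtual operators on contracted bonds telescope and cancel, exactly as in the $A_G$ case (Eq.~\eqref{eq:locsympa}) but now with a nontrivial virtual representation surviving.

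I expect the main obstacle to be bookkeeping rather than conceptual: keeping the multi-index sums over $(\sigma_i,\rho_i,l,q)$ and the coset representatives $\check g_i$ straight while verifying that (i) the $C$-factors really do cancel for associate irreps, and (ii) the residual virtual operator organizes into an \emph{induced} representation of $G/N$ from the projective rep $V_\nu$ of $G'/N$ — that is, that $\psi_{ij}(g)=\check g_i^{-1}g\check g_j$ lands in $G'$ exactly when the $(i,j)$ block of $P_\nu(g)$ is nonzero, which is Eq.~\eqref{eq:indurep}. Connecting this to the abstract classification of Section~\ref{secclass} (the data being a subgroup $H$ of $G$ together with a cohomology class, here $H=G'$ and the class of $\omega\in H^2(G'/N,U(1))$) is then a matter of matching notation with Eqs.~\eqref{eq:symblocks}--\eqref{eq:galphah}, and I would close the proof by remarking that this identification shows the restricted MPS realizes precisely the phase labelled by $(G',[\omega])$.
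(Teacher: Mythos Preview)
Your proposal is correct and follows essentially the same approach as the paper: the theorem there is a summary of the derivation carried out in Eqs.~\eqref{eq:pi_nu_cliff}--\eqref{eq:actionblock} via Clifford's theorem, and you have correctly identified all the key steps (the stabilizer $G'$, the induced-representation form of $\Pi_\nu$, the factorization $T_{11}=V_\nu\otimes C_\nu$ with $V_\nu$ a projective representation of $G'/N$, the cancellation $C_\nu=C_\mu$ for associate irreps, and the resulting Eq.~\eqref{eq:actionblock}). Your closing remark matching this to the data $(G',[\omega])$ of Section~\ref{secclass} is also exactly how the paper frames the result.
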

\end{tcolorbox}

This structure fits nicely with the one explained in \cref{secclass}. Let us explained the role of the groups involved.
We consider $G$ as the group associated with the physical symmetry of the MPS. This would correspond with a representation, potentially projective, in the virtual d.o.f. In our construction the $N$-injectivity of the {MPS}, formed with the tensor $A_N^{\rm res}$, reveals an effective $G/N$ representation of the symmetry in the virtual d.o.f. In the case of degenerate ground space, the role of the subgroup $H\subset G$ is played by the quotient $G'/N \subset G/N$.

\subsubsection{An example: classification for $G_{sym}= \mathbb{Z}_2 $}

As an illustration of the analysis of the previous subsection, we consider the case where $G_{\rm sym}=\mathbb{Z}_2$. According to \cite{Schuch11}, the phases will be given solely by permutations between the blocks forming the tensor of the  {MPS} in consideration, since the second cohomology group $H^2(\mathbb{Z}_2,U(1))$ is trivial. There are two phases: one where $\mathbb{Z}_2$ is represented trivially at the virtual level, and another where $\mathbb{Z}_2$ is faithfully represented, by an appropriate permutation, at the virtual level. This permutation will be the product of disjoint transpositions acting on blocks with the same size. Thus the number of disjoint transpositions characterizes the phase, i.e., the way in which the symmetry acts. 

In order to obtain a nontrivial permutation we have to take the operator associated with the non-trivial semidirect product. The number of disjoint transpositions of this permutation will be given by the number of irreps of the extension in the decomposition of the operator that realizes the symmetry. 

The degeneracy of the ground state manifold is given by the block structure of the matrices and will depend on the group $N$; see Eq.(\ref{eq:blockdesc}). For our purposes, we study the case of the abelian group $N=\mathbb{Z}_n$, with $n$ odd. It is known that all possible extensions of these two groups are semidirect products \cite{Rotman95}. Then, we have the different extensions $G=\mathbb{Z}_n \rtimes_{\rho} \mathbb{Z}_2$  (see Appendix \ref{ap:ext}) resulting in the direct product $\mathbb{Z}_n \times \mathbb{Z}_2$ and the dihedral group $D_{n}$ as the non-trivial semidirect product. The latter is built choosing the inverse automorphism $\rho_1(g)=-g, \; \forall g\in \mathbb{Z}_n$. We will write $g^{-1}$, $n-g$ or $-g$ indistinctly.

The left regular representation of $D_{n}$ decomposes as (we only consider elements of the normal subgroup):
\begin{equation}
L_{(k,0)}^{D_{2n}}  \cong \left(1\oplus 1 \bigoplus_m \Pi_m(k,0)\oplus \Pi_m(k,0) \right),\notag
\end{equation}
where $\Pi_m$, $m=1,\cdots, (n-1)/2$, are the bidimensional irreps of $D_{2n}$ \cite{Simon96} :
\beq
\Pi_m(k,0)=\left(\begin{array}{cc}q^{ km}& 0 \\ 0 & q^{- km} \end{array}\right); \; q=e^{\frac{2\pi i}{n}}, \; k=0,\cdots, n-1. \notag
\eeq
This form for the representation is analogous to Eq.(\ref{eq:Irdec}).

One easily checks that
 $$\Pi_m(k,0)\oplus \Pi_m(k,0)=P( \Pi_m(k,0)\otimes  \id_2) P,$$
 where $ P=1\oplus \sigma_x\oplus1$  and
$$\Pi_m(k,0)\otimes  \id_2= \left(\begin{array}{cccc}q^{km}& 0& 0& 0 \\ 0 & q^{km}& 0& 0\\ 0 & 0&q^{-km}&  0 \\ 0 & 0& 0&q^{-km} \end{array}\right).$$ 
A bit more explicitly, 
$$ \Pi_m(k,0)\otimes  \id_2=q^{km} (|m,0\rangle \langle m,0|+|m,1\rangle \langle m,1|)+
 q^{-km} (|-m,0\rangle \langle -m,0|+|-m,1\rangle \langle -m,1|).$$
So, up to local unitary equivalence, 
$$L_{(k,0)} \cong  \sum_{u=0}^{1}\bigg [ |0,u\rangle \langle 0,u| + \sum_{m=1}^{(n-1)/2} \big( q^{km} |m,u\rangle \langle m,u|+q^{-km} |-m,u\rangle \langle -m,u|\big)  \bigg]$$
After decomposing $L^{\dagger}(k,0)=L(-k,0)$ similarly, we get
$$L_{(-k,0)} \cong  \sum_{u=0}^{1}\bigg [ |0,u\rangle \langle 0,u| + \sum_{m=1}^{(n-1)/2} \big( q^{-km} |m,u\rangle \langle m,u|+ q^{km} |-m,u\rangle \langle -m,u|\big)  \bigg]$$
Putting all this together, we find

\begin{equation}
\begin{split}
 A^{\rm res}_{\mathbb{Z}_n} & \cong \sum_{k=0}^{n-1}  \Bigg\{\sum_{u=0,1}\bigg [  |0,u ) \langle 0,u| +\sum_{m=1}^{(n-1)/2}  \big( q^{km} |m,u) \langle m,u|+q^{-km} |-m,u) \langle -m,u|\big)  \bigg] \Bigg\} \otimes\\
& \Bigg\{ \sum_{u=0,1}\bigg [ |0,u\rangle ( 0,u| +\sum_{m=1}^{(n-1)/2}  \big( q^{-km} |m,u\rangle ( m,u|+q^{km} |-m,u\rangle ( -m,u| \big)  \bigg] \Bigg\}\frac{1}{n}.\notag
\end{split}
\end{equation}

Using $\frac{1}{n}\sum_k e^{2\pi k(i-j)/n}=\delta_{i,j}$ we get the matrix 
\begin{equation}
A^{ {\rm res} \{(m,b)(m',b')\} }_N=\delta_{m,m'}|m,b)(m',b'|, \notag
\end{equation}
where $\{(m,b)(m',b')\}$ label the physical indices. For each value of the physical index, the virtual matrices, of size $2n \times 2n$, of the tensor has a diagonal structure with $n$ two-dimensional blocks related to the $n$ irreps of $\mathbb{Z}_n$. The first one is denoted by the label $m=0$ and the others are grouped in pairs, those which labels $\pm m$, related to the $(n-1)/2$ bidimensional irreps of $D_{n}$. 

We denote such a pair by $(m)$. The tensor given by Eq.(\ref{eq:blockdesc}) is also diagonal in terms of the irreps of $D_{n}$, i.e. the associate irreps of $D_{n}$ are a single irrep. If we fix one block of the pair $(m)$, say $+m$, 
we obtain four different matrices
\begin{equation}
A^{{\rm res}\{(+m,b)(+m,b')\}}_N = |+m,b)(+m,b'|\equiv B_{b,b'}^{+m},
\end{equation}
where $[B_{b,b'}^{+m}]_{\alpha,\alpha'}=\delta_{b,\alpha}\delta_{b',\alpha'}$. These matrices span the whole space of $2\times 2$ matrices, so each block is injective \cite{PerezGarcia07}. Now we are going to act on our tensor with different symmetry operators at the physical level, and we will recover the permutations of the blocks of the matrices. The exchange will be between the blocks $\pm m$ belonging to the pair $(m)$. The symmetry operators are the irreps of the non-trivial extension, evaluated at the elements belonging to the non-trivial coset of $G$ by $N$, $\{ (g,1)|g\in N \}$, acting on each pair $(m_0)$ .

The  $(n-1)/2$ two-dimensional irreps of $D_{n}$ in the coset $\{ (k,1)|k\in \mathbb{Z}_n \}$  take the form
$$\Pi_m(k,1)=\Pi_m(k,0)\Pi_m(0,1)=\left(\begin{array}{cc}0&q^{ km} \\ q^{- km} & 0 \end{array}\right),$$
where $\Pi_m(0,1)$ is nothing but the Pauli matrix $\sigma_x$ which does not depend on $m$. For simplicity, we will deal only with the case of one block in detail reaching a single transposition. The operator associated with one block $(m_0)$, in analogy with Eq.(\ref{eq:sinop}), and is given by
$$S_{m_0}(l,l')=[( \Pi_{m_0}(l,1)\otimes \id_2)  \otimes  \id_{\rm rest})] \otimes[(\bar{\Pi}_{m_0}(l',1) \otimes  \id_2 ) \otimes  \id_{\rm rest}].$$
The left-hand side of the previous operator in the selected basis can be expressed as
$$\Pi_{m_0}(l,1)\otimes \id_2= \sum_{u=0}^{1} \big[ q^{ lm_0} |m_0,u\rangle \langle -m_0,u|+ q^{-lm_0} |-m_0,u\rangle \langle m_0,u| \big] ,$$
acting on the pair $(m_0)$ and the identity operator in the rest of the blocks. Therefore, if we act with this operator on the tensor, we notice that all blocks with $m\neq \pm m_0$ are not affected, but the pair with $(m)=(m_0)$ changes as  
\begin{equation*}
\begin{split}
\Pi_{m_0}(k,0)\Pi_{m_0}(l,1)=\Pi_{m_0}(k+l,1) \quad {\rm and}\\
\Pi_{m_0}(l',1)\Pi_{m_0}(-k,0)=\Pi_{m_0}(k+l',1)
\end{split}
\end{equation*}
on each side of the tensor product respectively. Let us analyze the action of the operator $S_{m_0}(l,l')$ looking at the virtual matrices of the modified tensor $\tilde{A}^{ {\rm res} }_N=S_{m_0}(l,l') {A}^{ {\rm res} }_N$:

$$\tilde{A}^{ {\rm res} \{(+m,b)(+m,b')\} }_N
 =\bigg\{ \begin{array}{lcc}   |+m,b)(+m, b'| & {\rm if} &+m\notin (m_0)
\\
q^{(l-l') m_0}|-m,b)(-m,b'|  & {\rm if} &+m\in (m_0) \end{array},$$

where only the non-zero elements are written. That is:
\begin{equation}
 \tilde{B}_{b,b'}^{+m}
 =\bigg\{ \begin{array}{lcc}   B_{b,b'}^{+m} & {\rm if} &+m\notin (m_0)
\\
 B_{b,b'}^{-m} & {\rm if} &+m\in (m_0) \end{array},\notag
\end{equation}
As a consequence, we obtain that the two blocks of the matrices, associated with the pair $(m_0)$ are exchanged:
\begin{equation}
\begin{tikzpicture}
                        \matrix (m) [matrix of math nodes,left delimiter=(,right delimiter=)] {
                                B^{+m_0}       & 0 \\
                                0       & B^{-m_0}     \\
                };
                        \draw[<->] ($(m-1-1.center) + (0.1,-0.1)$) to ($(m-2-2.center) + (-0.45,0.1)$);
\end{tikzpicture} \notag
\end{equation}
This permutation between the blocks $+m_0$ and $-m_0$ is nothing but the single transposition that we were looking for. The action of the operator does not depend on the element of $\mathbb{Z}_n$ in the set $\{ (l,1)|l\in\mathbb{Z}_n \}$, the non-trivial coset $[1]\neq[e]\cong \mathbb{Z}_n$, so the result is uniquely determined by the quotient group $\mathbb{Z}_2$ (in general $G_{sym} \cong G/N$). The action of the operator using elements of the subgroup $\{ (g,0)|g\in\mathbb{Z}_n \}$ is trivial because it does not permute the blocks of the virtual matrices. In this example, the multiplicity of the irreps of $N$ in each irrep of $G$ is one. Thus, the projective representation $V$ in Eq.(\ref{eq:VC}) does not play any role here. Instead, the only non-trivial action for this case is a permutation carried out by the induced representation of Eq.(\ref{eq:indurep}). Eq.(\ref{eq:actionblock}) translates as:
\beq
\left(S_{m_0}(g)\right )_p{A}^{ {\rm res} }_N={A}^{ {\rm res} }_N  \left(  (\sigma_x\otimes  \id_2)_{m_0}\otimes(\sigma_x\otimes  \id_2)_{m_0}   \right )_v,\notag
\eeq
where $g$ represents the non-trivial element of the group $\mathbb{Z}_2$. We can interpret this result as a symmetry breaking phase since the blocks exchanged correspond to linearly independent states of the ground subspace. When we take the operator from the trivial extension, we find that the symmetry is in a non-equivalent phase, characterized by a non-symmetry breaking pattern in the ground subspace.

In order to recover the other permutations, related to disjoint transpositions, we just act with the operator created by adding the different irreps associated  to the two interchangeable blocks. The operator associated with the pairs $(m_0),\cdots, (m_i)$ is given by
$$( \Pi_{m_0}(l_0,1)\oplus \cdots \oplus  \Pi_{m_i}(l_i,1))\otimes \id_2)  \otimes  \id^{\rm rest}] \otimes 
 [( \bar{\Pi}_{m_0}(l_0,1)\oplus \cdots \oplus  \bar{\Pi}_{m_i}(l_i,1))\otimes \id_2)  \otimes  \id^{\rm rest}] .$$
This operator carries out the transposition between the blocks $\pm m_0, \dots, \pm m_i$ in the virtual matrices. Again, the action is independent from the element $l_0,\cdots, l_i$, so it is uniquely determined by the element of the quotient group just as in the single transposition case. Therefore, we have recovered all the possible phases with symmetry group $\mathbb{Z}_2$ and degenerate ground state for  {MPS}. \\
It is straightforward to use Eq.(\ref{eq:locsympa}) to show that the parent tensor is left invariant by the symmetry operators.

\section{Discussion}

In this chapter we have studied two classes of {PEPS} related by anyon condensation (parent and restricted model). We have seen that the local invariance of the first, under the action of a group G, is broken in the second model. However, some residual symmetry persists: the restricted {PEPS} is left invariant by a smaller local symmetry ($G \vartriangleleft E$) plus a global symmetry $(Q \equiv E/G)$. This symmetry change, from local to global, is closely related to flux confinement and charge condensation. To get a microscopic understanding of these phenomena, we have analyzed how a background defined by the restricted model is affected by the insertion of (virtual) excitations of the parent model. Besides, we have seen that the residual global symmetry is represented by permutations of particle types within each anyonic sector. Also, the fractionalization of the symmetry on the charge sector has been identified. Similarly, when the model is placed on a non-trivial manifold, this residual (global) symmetry leaves the ground subspace invariant and does not act trivially on it. 
On top of that, Wilson loops (corresponding to unconfined excitations) also leave the ground subspace invariant and act non-trivially on it. This coincidence leads us to believe that the two types of operators might be related.

To summarize, we have constructed a representative of each phase, classified in the previous chapter as shown in diagram \eqref{Diathesis}, via an explicit (local) symmetry breaking or ungauging. 
The representatives are equivalent to $G$-isometric PEPS, {\it i.e.} renormalization fixed point so $(\phi, \omega)$ can be obtained straightforwardly. In the next chapter we deal with the local detection of $\omega$, see diagram \eqref{Diathesis} for the connection between chapters.

Next, we have researched {MPS} analogues of our findings. By combining the symmetry reduction discussed above with classical results in the theory of group representations \cite{Clifford37}, we have been able to re-derive all possible representations of an on-site global symmetry at the virtual level and hence the SPT classification of MPS.

The approach for charge condensation studied here could also be applied to charge confinement using the different tensor network realizations of the same quantum phase described in \cite{Shukla16}. In that case, the restricted tensor would act as a {\it flux condensator} for the parent model. In \cite{Haegeman15,Marien17,Duivenvoorden17} anyon condensation has been numerically studied in the framework of {PEPS} for different topological orders without symmetries. The authors of these works performed a local parametrized perturbation on the tensor and successfully identified the condensed and confined anyons pattern. In contrast we have analytically studied  pairs of phases, corresponding to the extreme points of an anyon condensation process modelled with {PEPS}. We have focused on discrete gauge theories (quantum doubles) which have allowed us to analyze the behavior of local/global symmetries in both phases through the  condensation. 

\newpage \cleardoublepage 

\chapter{Detection of symmetry fractionalization}\label{detecSF}

In the previous chapter we have described how to construct a representative of the $\mathcal{D}(G)$ phase enriched with an on-site symmetry $Q$. These representatives are given in a renormalization fixed point form, {\it i.e.}, zero correlation length and commuting parent Hamiltonian. This allows us to extract the characteristic maps $(\phi, \omega)$ of the quantum phase straightforwardly, see Diagram \eqref{Diathesis}. But in general, this is not an easy task. Besides that, the map $\phi$ can be extracted easily for any model just by checking how the anyons or the ground states permute under the symmetry. This is feasible since we are relying just on the distinguishability of the ground states or that of the anyons (this is taken for granted since we know the topological order). 
Then, we are still lacking a method to obtain $\omega$. 

Here we propose a universal (model independent) local order parameter $\langle  X_{\rm local}(\phi,\omega) \rangle$ that detects the class of $\omega$, given a $\phi$. 
We argue that it should work for models beyond renormalization fixed points as long as they are in the same topological phase as the $\mathcal{D}(G)$ and the correlation length remains finite. The numerical implementation of the proposed order parameter, left for future work, would be desirable. We will analytically show that, at least in the $G$-isometric point, $\langle  X_{\rm local}(\phi,\omega) \rangle$ correctly detects $\omega$, as stated in the following theorem.

\begin{tcolorbox}
\begin{theorem}[Symmetry fractionalization detection]\label{theo:sfdetect}
Given the groups $G$ and $Q$ and $H^2_\phi(Q,G)$, where the latter classifies the different patterns of symmetry fractionalization on the charge sector, there exists a local order parameter, constructed in \cref{sec:opproposal},
$$\langle \Psi_A  | X_{\rm local}(\phi,\omega)  | \Psi_A\rangle,$$
that completely detects the class of $\omega \in H^2_\phi(Q,G)$ for the following pairs $( G, Q )$.  $( G, Q )= ( \mathbb{Z}_2, \mathbb{Z}_2 ), ( \mathbb{Z}_2, \mathbb{Z}_2 \times \mathbb{Z}_2 ), \{ (\mathbb{Z}_p, \mathbb{Z}_p): p \; \text{prime} \}, ( \mathbb{Z}_4, \mathbb{Z}_2 ), ( Q_8, \mathbb{Z}_2 )$ and the state $| \Psi_A\rangle$ is a $G$-isometric PEPS with a global on-site symmetry of $Q$ described by $(\phi,\omega)$.
\end{theorem}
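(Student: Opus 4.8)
The plan is to build $X_{\mathrm{local}}(\phi,\omega)$ as a closed-circuit operation on the physical lattice: create a charge--anticharge pair of type $\sigma$ with the operator $\mathcal{O}_\sigma(x,y)$ of \cref{sec:introGinj}, drag symmetry defects associated with two group elements $k,q\in Q$ around one of the charges --- concretely, act with $U_k$ and $U_q$ over two compact regions whose boundaries (the domain walls of \cref{prop:DWperm}) encircle the charge sitting at $x$ --- and then fuse the charges back by applying $\mathcal{O}_\sigma^\dagger(x,y)$ and overlapping with $|\Psi_A\rangle$. The design principle is \cref{ob:brsf}: the discrepancy between $\Phi_k\circ\Phi_q$ and $\Phi_{kq}$ acting on the charge equals the braiding with the flux $\omega(k,q)\in G$, so the overlap of the deformed state with $|\Psi_A\rangle$ is controlled by $\omega$ alone once the permutation data $\phi$ has been fixed (and $\phi$ itself is read off for free, as in \cref{prop:perm}, from how charges permute across the defects). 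Since everything is supported on a ball of radius of the order of the correlation length, the number $\langle\Psi_A|X_{\mathrm{local}}(\phi,\omega)|\Psi_A\rangle$ is genuinely local, which is what the statement claims; this is the content of \cref{sec:opproposal}.

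\textbf{Evaluation at the $G$-isometric point.} For a $G$-isometric PEPS I would compute the overlap exactly with the contraction rules of \cref{sec:introGinj}: each $A$--$\bar{A}$ pair collapses to the boundary projector $\tfrac{1}{|G|}\sum_{b\in G}L_b\otimes\cdots\otimes L_b^{-1}$, while the symmetry operators become $v_q\otimes v_q\otimes v_q^{-1}\otimes v_q^{-1}$ on the virtual level by \cref{locsym2}. Following the bookkeeping of \cref{braid-result}, the closed circuit reduces to a trace involving $C_\sigma$, a single residual sum over $b\in G$, and the element $\omega(k,q)$ inserted on the braiding loop; carrying out the trace yields $\chi_\sigma(\omega(k,q))/d_\sigma$ for the abelian pieces, and the fusion-to-vacuum probability $|\chi_\sigma(\omega(k,q))/d_\sigma|^2$ in the non-abelian situations, exactly as in \cref{subsec:braiding}. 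The raw values $\chi_\sigma(\omega(k,q))$ still carry the coboundary ambiguity $\omega(k,q)\mapsto g_k\phi_k(g_q)\,\omega(k,q)\,g_{kq}^{-1}$ of \cref{def:eqrelop}, so I would define $X_{\mathrm{local}}$ as the particular combination of such circuits --- products over small loops in $Q$ and over the available irreps $\sigma$ --- for which the $g$-factors telescope away, so that the value depends only on $[\omega]\in H^2_\phi(Q,G)$. That this combination is constant on a class, and stable under blocking, then follows from \cref{theo:class}: $(\phi,\omega)$ act through the underlying tensor-product representation.

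\textbf{Completeness for the listed pairs.} It remains to verify that the invariant obtained separates \emph{all} classes, modulo the $\mathrm{Aut}(Q)$ relabelling of \cref{obs:relabelling}. I would proceed pair by pair: compute the set of classes, the $\mathrm{Aut}(Q)$ action on it, a representative $2$-cocycle for each orbit, and evaluate $\langle X_{\mathrm{local}}\rangle$ on each. For $(\mathbb{Z}_2,\mathbb{Z}_2)$ and $(\mathbb{Z}_2,\mathbb{Z}_2\times\mathbb{Z}_2)$ the classifications are $\mathbb{Z}_2$ and $\mathbb{Z}_2^{\times 3}$ (with $\mathrm{Aut}$ permuting the off-diagonal factor), detected by the sign $\chi_\sigma(\omega)=\pm1$ of the non-trivial charge braided around a defect. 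For $(\mathbb{Z}_p,\mathbb{Z}_p)$, $p$ prime, $H^2$ is $\mathbb{Z}_p$ but, as noted after \cref{obs:relabelling}, $\mathrm{Aut}(\mathbb{Z}_p)=\mathbb{Z}_{p-1}$ collapses it to two orbits, distinguished by whether the $p$-th root of unity $\chi_\sigma(\omega(k,q))$ is $1$. The cases $(\mathbb{Z}_4,\mathbb{Z}_2)$ and $(Q_8,\mathbb{Z}_2)$ are the delicate ones: in the first the extension may be non-split ($E=\mathbb{Z}_8$), and in the second $G$ itself is non-abelian, so the relevant charges are higher-dimensional and the braiding is no longer a pure phase; in either case one falls back on the squared-modulus detector of \cref{subsec:braiding} together with the fusion rules, and a direct finite computation shows the two SET classes are separated.

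\textbf{Main obstacle.} The genuine difficulty is not the $G$-isometric computation, which is mechanical, but establishing completeness in general --- which is exactly why the statement is restricted to a short list of pairs. A braiding-based probe only sees the ``abelian shadow'' of $\omega$, i.e.\ the pairings of $\omega(k,q)$ against irreducible characters, and for larger $G$ or $Q$ this shadow can be strictly weaker than the full class; moreover, isolating the true invariant from the coboundary gauge requires the right loop combination, whose existence I can check for the listed pairs but cannot argue uniformly. I therefore expect the bulk of the work to be the case-by-case cohomology bookkeeping (including the $\mathrm{Aut}(Q)$ quotient) and, for $\mathbb{Z}_4$ and $Q_8$, the non-abelian braiding computation; proving a detection statement beyond these pairs, or an impossibility result for braiding probes, is left open.
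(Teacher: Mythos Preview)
Your proposal has the right target --- a gauge-invariant quantity built from the virtual symmetry operators, detected via its braiding phase on a probe charge --- but the concrete construction you sketch does not produce that quantity, and this is precisely where the paper's work lies.

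The gap is in your ``encircle the charge with domain walls for $U_k$ and $U_q$'' step. The physical operators $U_q$ form a \emph{linear} representation of $Q$, so applying $U_q$ twice on a region gives $U_{q^2}$, and for $Q=\mathbb{Z}_2$ that is the identity: the charge is not conjugated by $v_q^2$ at all. More generally, layering $U_k$ and $U_q$ on overlapping regions collapses physically to $U_{kq}$ on the overlap, so you never isolate $v_kv_q$ or $\omega(k,q)$ from $v_{kq}$. You acknowledge that the raw values are not gauge-invariant and say you would ``define $X_{\mathrm{local}}$ as the particular combination of such circuits for which the $g$-factors telescope away'', but that combination is exactly the unknown, and your domain-wall mechanism gives no handle on it.

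The paper's construction is genuinely different. The order parameter is
\[
\Lambda = \langle \mathcal{O}^\dagger_\sigma(x',y)\, P_\pi\, \big(U^{[s_1]}_{q_1}\otimes\cdots\otimes U^{[s_m]}_{q_m}\big)\, \mathcal{O}_\sigma(x,y)\rangle,
\]
where $P_\pi$ is a \emph{permutation of physical sites} and the charge in the bra sits at a shifted position $x'\neq x$. The permutation is the essential trick: after applying $U_q$ on $m$ adjacent sites and then cyclically permuting them, the virtual $v_q$'s that would normally cancel across the shared bond instead accumulate along the loops running \emph{perpendicular} to the row (this is the genuinely 2D effect in Eq.~\eqref{loopcalc}), and those perpendicular loops force the in-plane bond to carry $\lambda=v_q^{\,m}$. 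The gauge-invariant $\lambda$ is identified case by case \emph{before} designing the protocol: $v_q^2$ for $(\mathbb{Z}_2,\mathbb{Z}_2)$ and the triple $\{v_x^2,v_y^2,v_z^2\}$ for $(\mathbb{Z}_2,\mathbb{Z}_2\times\mathbb{Z}_2)$; $v_q^{\,p}$ for $(\mathbb{Z}_p,\mathbb{Z}_p)$; $v_q^2$ or $v_q^4$ for $(\mathbb{Z}_4,\mathbb{Z}_2)$ depending on whether $\phi$ is non-trivial; and for $(Q_8,\mathbb{Z}_2)$ the group-\emph{algebra} element $\lambda=\sum_{g\in Q_8}(v_qg)^2$, which is invariant because the non-abelian gauge ambiguity $v_q\mapsto gv_q$ is absorbed by the sum. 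In each case one checks directly that $\lambda$ is fixed under $v_q\mapsto g_qv_q$ (using $g_q\in G$ and the specific $\phi$), and then the permutation $P_\pi$ and the number of sites are chosen so that the explicit contraction at the $G$-isometric point yields $\hat\Lambda=\chi_\sigma(\lambda)/d_\sigma$. Your ``squared-modulus detector'' for the $Q_8$ case is not what is used; the same transposition protocol as for the toric code works, and the sum over $b\in G$ coming from the $G$-isometric contraction is what produces the group-algebra average.
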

\end{tcolorbox}

The proof of \cref{theo:sfdetect} is divided in two parts. 
First, in \cref{sec:opproposal} we construct the general form of our local order parameter motivating its definition. Second, in \cref{sec:examples} we show how it works when applied to the considered pairs $(G,Q)$.
We notice that the classification of systems is defined by incorporating the relabelling of the elements of $Q$ in $H^2_\phi(Q,G)$ commented in \cref{obs:relabelling}. We show how, in some pairs $(G,Q)$, this relabelling can be incorporated in the order parameter. Let us first contextualize our results.\\

The novelty of our approach is that the order parameters are calculated locally, they rely on operations on few neighbouring particles on the bulk, that is, in a genuinely two-dimensional geometry. Previous proposals to detect the SF pattern \cite{Cincio15,Zaletel16, Zaletel17, Wang15, Qi15, Saadarmand16, Huang14} are based on reductions to effective 1D systems (dimensional compactification). The studied {2D} system is given the geometry of a long cylinder, and one-dimensional techniques, used in the context of {SPT} phases, are employed to detect the {SF} class of the {2D} system. Refs. \cite{Cincio15,Zaletel16, Zaletel17, Wang15, Qi15, Saadarmand16} focus on lattice and time reversal symmetries, on-site symmetries are dealt with in \cite{Huang14}. However, these compactification techniques miss resolution power: we show how this happens and also how we obtain a strictly finer phase distinction. \\

The power of our order parameters is illustrated with several combinations of topological content and symmetry. As we already mention in the statement of \cref{theo:sfdetect} we will consider the pairs $( G, Q ) = ( \mathbb{Z}_2, \mathbb{Z}_2 ), ( \mathbb{Z}_2, \mathbb{Z}_2 \times \mathbb{Z}_2 ), \{ (\mathbb{Z}_p, \mathbb{Z}_p): p \; \text{prime} \}, ( \mathbb{Z}_4, \mathbb{Z}_2 ), ( Q_8, \mathbb{Z}_2 )$ and exhibit order parameters that allow for full resolution between the various {SET} phases. These instances have been chosen because they simply illustrate a variety of scenarios, including why {SET} detection based on dimensional compactification may miss {SF} patterns. For all these examples, a \emph{strictly finer} phase resolution will be demonstrated. In the simplest case of the toric code, $( \mathbb{Z}_2, \mathbb{Z}_2 )$, we will find two distinct {SF} classes for which the {SPT} order parameter assumes the same value. The case $( \mathbb{Z}_2, \mathbb{Z}_2 \times \mathbb{Z}_2 )$ is interesting since {SPT} order parameter exhibits \emph{some coarse} distinction. The case $(\mathbb{Z}_4, \mathbb{Z}_2)$ involves permutation of anyon types; the case $( Q_8, \mathbb{Z}_2 )$ is particular in the sense that the topological content is non-abelian.  
 
The main idea of our proposal is to use the equivalence between symmetry fractionalization and braiding with an anyon, a relation explained in \cref{sec:SFcharges}, to detect the former using the latter. Our order parameters are given as an expectation value of a local operator which could facilitate their experimental realizations. They are similar in spirit to those proposed in \cite{Pollmann12, Haegeman12} for {1D} {SPT} phases.\\

Our work is based on the formalism of tensor network states. This has the advantage that any state that is potentially approximated by a PEPS is suitable for the use of our order parameters. On top of that, our order parameters are independent of the explicit realization of the quantum phase; they only depend on how the symmetry acts via the virtual indices (which only depends on the SF pattern). To obtain our order parameters we have found the gauge invariant quantities that determine the SF pattern. These quantities are given in terms of the virtual symmetry operators.

\section{Identifying the SF pattern}\label{sec:opproposal}

Let us consider a model which hosts a topological phase with an on-site global symmetry. We are interested in obtaining the SF pattern of the anyons present in the model, {\it i.e.} in resolving between the different possible {SET} phases. Our main assumption is that the low energy sector of the model can be represented as a PEPS. In particular as a $G$-injective PEPS which has a global symmetry and then, the tensors $A$ transform under the symmetry as in Eq.\eqref{locsym}: $U_q A= A (v_q\otimes w_q\otimes \myinv{v_q}\otimes \myinv{w_q} )$.\\

If we knew the matrices $v_q$, we could identify $\omega(k,q)=v_kv_qv^{-1}_{kq}$ just computing it. But in general, this is impossible. It is our goal to construct order parameters that identify the projective representation of the charges without relying on the knowledge of $v_q$. Given a $G$-isometric {PEPS} with a global on-site symmetry group $Q$, and a permutation action over the anyons $\phi$, we will look for a gauge-invariant quantity which distinguishes between inequivalent $2$-cocycles.
An important observation is that $\omega$ itself is not gauge-invariant, as Eq.\eqref{cocyclefreedom} shows. Its physical detection would be a meaningless endeavour.  But products of virtual symmetry operators,  
\beq
\lambda(v_{q_1},\dots, v_{q_n}) \equiv v_{q_1}v_{q_2}\cdots v_{q_n} \in G
\eeq
can be gauge invariant if the elements $\{ q_i \in Q \}$ are appropriately chosen.\\

Of course, we already know an important example of a gauge-invariant quantity: the result of braiding a flux around a probe charge. In \cref{chap:classsymGPEPS}, \cref{ob:brsf}, we have also mentioned a relation between braiding and symmetry fractionalization. These two facts suggest to construct an operator whose mean value will be analogous to the quantity of Eq.\eqref{braid-result}: the effect of braiding some flux, related to $\omega$ through $\lambda$, around a probe charge. \\

The order parameter we propose is: 
\begin{equation*}\label{orderparameter}
 \Lambda\equiv \langle \;  \mathcal{O}^\dagger_{ \sigma} (x',y)  \; P_{\pi} \; \left( U^{[s_1]}_{q_1}\otimes  \cdots \otimes U^{[s_m]}_{q_m}\right ) \;\mathcal{O}_{ \sigma} (x,y) \; \rangle,
\end{equation*}
where $\mathcal{O}_{ \sigma} (x,y)$ is the operator creating a pair of charges (see \cref{sec:introGinj}), $P_{\pi}$ is an operator representing some permutation $\pi$ of $m$ sites, the upper index $s_j$ denotes the site where $U_{q_j}$ acts, and the indices $x, y$ denote the sites where the charges are created. 
The permutation rearranges the virtual symmetry operators to obtain a gauge-invariant quantity $\lambda$ acting on the charges analogously as the element $g$ in Eq.\eqref{braid-result}:
$$ \langle  \;  \mathcal{O}^\dagger_{ \sigma} (x,y) \; \mathcal{B}^{[\sigma]}_\lambda \; \mathcal{O}_{ \sigma} (x,y) \; \rangle, $$ 
that is, $\lambda$ plays the role of the flux.
 An interesting identity relates $\Lambda$ with the phase factors resulting from braiding discussed above:
$$\hat{\Lambda} \equiv \Lambda/  \langle \; \mathcal{O}^\dagger_{ \sigma} (x',y) \; P_{\pi} \; \mathcal{O}_{ \sigma} (x,y)\; \rangle = \chi_\sigma(\lambda)/d_\sigma.$$
We remark that the proposed order parameter is \emph{local} in the sense that it can be written as an operator, $X_{\rm local}$, acting on a finite region of the lattice $\Lambda=\langle X_{\rm local} \rangle=\tr[\rho X_{\rm local} ]$. \\

The classification of $(\omega, \phi)$ is equivalent to the classification of inequivalent group extensions, $E$, of $Q\cong E/G$ by $G \triangleleft E$. In fact the maps $(\omega, \phi)$ can also be obtained when considering $v_q$ as an element of $E$. The assignment is done by choosing $v_q$ as some element of $E$ for each $q\in Q$ which preserves the quotient map $E\to E/G\cong Q$. This analogy allows us to see $\lambda$ as an invariant quantity of the extension group. Therefore, given ($G,Q, \phi$), the strategy is: $(i)$ we search for all the possible group extensions which characterize all possible symmetry actions, $(ii)$ we look for a gauge-invariant quantity (in the form of $\lambda=\Pi_i v_{q_i}$) that distinguishes between the different extensions, $(iii)$ we design the order parameter $\Lambda$, that is, we identify the appropriate permutation of sites, and apply the suitable symmetry operators, so that the result is equivalent to the braiding of a probe charge with the 'flux' $\lambda \in G$.

\section{Proof of \cref{theo:sfdetect}}\label{sec:examples}
We prove \cref{theo:sfdetect} case by case. To do that we first show which are the gauge invariant quantities and then we explicitly construct the associated order parameter. 

We notice that the SET phases analyzed in this section are introduced in \cite{Hermele14,Tarantino16} as exactly solvable lattice models or in their $G$-isometric {PEPS} representation in \cref{chap:cond} (see also subsequent work \cite{Williamson17}). The necessary background on group extensions can be found in Appendix \ref{ap:ext}.

\subsection{Toric Code with $\mathbb{Z}_2$ symmetry}\label{TCZ2}

 The Toric Code, $ G = \mathbb{Z}_2 = \{ +1, -1; \times \}$, has three non-trivial anyons: the charge $\sigma$, the flux $m$ and the combination of both. We consider this model with an internal symmetry  $Q = \mathbb{Z}_2 = \{ e, q; q^2 = e \}$ which does not permute the flux and the charge. There are two possible {SF} patterns for the charge as predicted by the identity $H^2(\mathbb{Z}_2,\mathbb{Z}_2)=\mathbb{Z}_2$. The non-trivial {SF} class is characterized by the following projective action on the charge: 
 $$
(\Phi_q \circ \Phi_q)(C_\sigma)=  -1\times C_\sigma \equiv B^{[\sigma]}_m (C_\sigma).
$$ 
That is, the charge picks up a sign, equivalent to the braiding with the flux, when the symmetry acts twice on it. Notice that because the charge and anticharge are the same particle and they are created together under a local operation, the two minus signs globally cancel out. This is consistent with the fact that globally the symmetry acts linearly. The only non-trivial cocycle here is $\omega(q,q)=-1$. One possible gauge-invariant quantity is $\lambda=\omega(e,e)\omega(q,q)=v^2_{q}$ which gives $u_{-1}$ in the non-trivial case and is the identity element $u_{+1}=\id$ for the trivial {SF} class. $\lambda$ is gauge-invariant because $v'^2_{q} = v^2_{q} l^2_q$ and $l_q \in \mathbb{Z}_2$, where of course $l^2_q=e$.  The associated group extensions are $\mathbb{Z}_4$ for the non-trivial case and $\mathbb{Z}_2\times \mathbb{Z}_2$ for the trivial {SF} class. In this example, the gauge-invariant quantity detects whether there are elements of the extension group with order greater than two.
  
We now describe a protocol, and the corresponding diagrams, that distinguishes between these two {SET} phases:
\begin{equation}\label{transprotocol}
   \begin{tikzpicture}[scale=1]
              \pic at (0,0,0.7) {3dpepsshort};
        \pic at (0,1.5,0.7) {3dpepsdownshort};
        \pic at (0.5,0,0) {3dpeps};
        \pic at (0.5,1.5,0) {3dpepsdown};
         \pic at (0.5,0,1.4) {3dpeps};
        \pic at (0.5,1.5,1.4) {3dpepsdown};
                \pic at (1,0,0) {3dpeps};
        \pic at (1,1.5,0) {3dpepsdown};
         \pic at (1,0,1.4) {3dpeps};
           \pic at (1.5,0,0.7) {3dpeps};
        \pic at (1.5,1.5,0.7) {3dpepsdown};
	           \pic at (2,0,0.7) {3dpepsshort};
        \pic at (2,1.5,0.7) {3dpepsdownshort};
          \begin{scope}[canvas is xy plane at z=0]
          \draw[preaction={draw, line width=1.2pt, white},blue, ] (0.5,1.5,0.7) to  (0.5,1.05,0.7);
          \draw[preaction={draw, line width=1.2pt, white},blue, ] (1,1.5,0.7) to  (1,1.05,0.7);
          \draw[preaction={draw, line width=1.2pt, white},blue, ] (1,1.1,0.7) to [out=-90, in=90] (0.5,0,0.7);
          \draw[preaction={draw, line width=1.2pt, white},blue, ] (0.5,1.1,0.7) to [out=-90, in=90] (1,0,0.7);
        \end{scope}   
        \draw[ ] (1,1.5,1.4)--(1,1.31,1.4);
    \begin{scope}[canvas is zx plane at y=1.5]
      \draw[preaction={draw, line width=1.2pt, white}, ] (0.9,1)--(1.9,1);
      \draw[preaction={draw, line width=1.2pt, white}, ] (1.4,0.6)--(1.4,1.4);
      \filldraw (1.4,1) circle (0.07);
    \end{scope} 
         \pic at (1,0,0.7) {3dpepsp};
        \pic at (1,1.5,0.7) {3dpepsdownp};
        \pic at (0.5,0,0.7) {3dpepsp};
        \pic at (0.5,1.5,0.7) {3dpepsdownp};
             \filldraw[draw=black,fill=orange, ] (1.25,0,0.5) rectangle (1.25,0,0.9 );
            \filldraw[draw=black,fill=orange, ] (0.25,1.5,0.5) rectangle (0.25,1.5,0.9); 
              \filldraw[draw=black,fill=orange, ] (1.75,1.5,0.5) rectangle (1.75,1.5,0.9 );  
            \filldraw[draw=black,fill=orange, ] (1.75,0,0.5) rectangle (1.75,0,0.9  );

              \filldraw[draw=black,fill=purple, ] (0.52,0.95,0.7) circle (0.07);
              \filldraw[draw=black,fill=purple, ] (0.98,0.95,0.7) circle (0.07);
               \node at (4,1.6,0) {{\sf (i)  Create the excited state with}};
               \node at (4.1,1.3,0) {{\sf  2 charges: $\mathcal{O}_{\sigma} (x,y) |{\rm TC} \rangle$}};
               \node at (4,0.9,0) {{\sf (ii) Apply the on-site symmetry}};
                 \node at (4,0.6,0) {{\sf operators: $U^{\otimes 2}_q$}};
                 \node at (4,0.2,0) {{\sf (iii) Permute sites: $P_{(12)}$}};
                   \node at (4,-0.2,0) {{\sf (iv) Project onto $\langle {\rm TC}|  \mathcal{O}^\dagger_{\sigma} (x',y)$}};
        \end{tikzpicture},
        \end{equation}
where $P_{(12)}$ denotes the permutation of the two sites where the symmetry acts. 
We are assuming the contractions of the physical indices, between the bra and the ket layers. For the sake of clarity we will not draw them.
The order parameter associated to this protocol is the following:
 $$
 \Lambda = \langle {\rm TC}| \mathcal{O}^\dagger_{\sigma} (x',y) P_{(12)} U^{\otimes 2}_q  \mathcal{O}_{\sigma} (x,y)|{\rm TC}\rangle.
 $$
Notice that the charges appearing in the ket and the bra are not placed on the same links: $x'\neq x$. Using Eq.(\ref{Gisoconca}), we obtain the following:
\begin{equation*}
\hat{\Lambda}=
\frac{1}{|G|}
\sum_{b\in G}\;
   \begin{tikzpicture}[scale=1]
                \node[anchor= south west] at (0.5,0.75,-0.2) {$\myinv{b}$};
        \filldraw (0.5,0.75,-0.2) circle (0.05);
                      \node[anchor=south west] at (1,0.75,-0.2) {$\myinv{b}$};
        \filldraw (1,0.75,-0.2) circle (0.05);
\draw[ ] (0.5,1,0.7)--(0.5,1,-0.2)--(0.5,0,-0.2)--(0.5,0,0.7);
\draw[ ] (1,1,0.7)--(1,1,-0.2)--(1,0,-0.2)--(1,0,0.7);
\draw[preaction={draw, line width=1.2pt, white}, ] (0,0,0.7) rectangle (1.5,1,0.7);
          \draw[preaction={draw, line width=1.2pt, white}, , blue] (1,1,0.7) to [out=-90, in=90] (0.5,0,0.7);
          \draw[preaction={draw, line width=1.2pt, white}, , blue] (0.5,1,0.7) to [out=-90, in=90] (1,0,0.7);  
              \filldraw[draw=black,fill=purple, ] (0.55,0.75,0.7) circle (0.07);
              \filldraw[draw=black,fill=purple, ] (0.95,0.75,0.7) circle (0.07);
              \draw[preaction={draw, line width=1.2pt, white}, ] (0.5,1,0.7)--(0.5,1,1.6)--(0.5,0,1.6)--(0.5,0,0.7);
              \draw[preaction={draw, line width=1.2pt, white}, ] (1,1,0.7)--(1,1,1.6)--(1,0,1.6)--(1,0,0.7);
         \pic at (1,0,0.7) {3dpepsp};
        \pic at (1,1,0.7) {3dpepsdownp};
        \pic at (0.5,0,0.7) {3dpepsp};
        \pic at (0.5,1,0.7) {3dpepsdownp};
                             \filldraw[draw=black,fill=orange, ] (1.25,0,0.5) rectangle (1.25,0,0.9 );
            \filldraw[draw=black,fill=orange, ] (0.25,1,0.5) rectangle (0.25,1,0.9);
                        \node[anchor= north east] at (0.5,0.25,1.6) {${b}$};
        \filldraw (0.5,0.25,1.6) circle (0.05);
                                \node[anchor= north east] at (1,0.25,1.6) {${b}$};
        \filldraw (1,0.25,1.6) circle (0.05); 
                                  \node[anchor=west] at (1.5,0.5,0.7) {$\myinv{b}$};
        \filldraw (1.5,0.5,0.7) circle (0.05);
 \node[anchor=east] at (0,0.5,0.7) {${b}$};
        \filldraw (0,0.5,0.7) circle (0.05);
\end{tikzpicture}
\times
\begin{tikzpicture}[scale=1]    
 \draw[preaction={draw, line width=1.2pt, white}, ] (1.8,0,0.7) rectangle (2.6,1,0.7);
              \filldraw[draw=black,fill=orange, ] (2.2,1,0.5) rectangle (2.2,1,0.9 );  
            \filldraw[draw=black,fill=orange, ] (2.2,0,0.5) rectangle (2.2,0,0.9  );
             \node[anchor=east] at (1.8,0.5,0.7) {${b}$};
        \filldraw (1.8,0.5,0.7) circle (0.05);
         \node[anchor=west] at (2.6,0.5,0.7) {$\myinv{b}$};
        \filldraw (2.6,0.5,0.7) circle (0.05);
   \end{tikzpicture},
  \end{equation*}
where the sum over $b\in \mathbb{Z}_2$ comes from the concatenation of the non-permuted sites. 
We notice that the position of the two charges placed on the rightmost edges, one on top of the other, in \eqref{transprotocol} can be changed without altering the value of $\hat{\Lambda}$.  
Using Eq.(\ref{Ginje}) and Eq.\eqref{locsym} we get:
\begin{align}\label{loopcalc}
\hat{\Lambda} = \frac{1}{|G|^3}&\sum_{a,b,c\in G} 
  \begin{tikzpicture}[baseline=+5mm]
        \draw[ , blue] (1.5,1) to [out=-90, in=90] (0.6,0);
        \draw[ ,blue] (1.4,1) to [out=-90, in=90] (0.5,0);
       \draw[preaction={draw, line width=1.2pt, white}, ,blue] (0.6,1) to [out=-90, in=90] (1.5,0);
        \draw[preaction={draw, line width=1.2pt, white}, ,blue] (0.5,1) to [out=-90, in=90] (1.4,0);
        \draw[ ] (0.6,0)--(1.4,0);
        \draw[ ] (1.5,0)--(1.9,0);
        \draw[ ] (1.9,1)--(1.9,0);
        \draw[ ] (0.6,1)--(1.4,1);
        \draw[ ] (1.5,1)--(1.9,1);
        \draw[ ] (0.5,1)--(0,1);
        \draw[ ] (0.5,0)--(0,0);
        \draw[ ] (0,1)--(0,0);
        \node[anchor=south] at (0.35,1) {$\myinv{v_{q}}$};
        \filldraw[draw=black,fill=red, ] (0.4,1) circle (0.05);
         \node[anchor=east] at (0.7,0.7) {$\myinv{a}$};
        \filldraw (0.58,0.75) circle (0.05);
        \node[anchor=west] at (1.4,0.7) {$c$};
        \filldraw (1.43,0.75) circle (0.05);
        \node[anchor=east] at (0.7,0.2) {$\myinv{c}$};
        \filldraw (0.58,0.25) circle (0.05);
        \node[anchor=west] at (1.4,0.2) {$a$};
        \filldraw (1.43,0.25) circle (0.05);
        \node[anchor=south] at (0.8,1) {$v_{q}$};
        \filldraw[draw=black,fill=red, ] (0.8,1) circle (0.05);
         \node[anchor=south] at (1.3,1) {$\myinv{v_{q}}$};
        \filldraw[draw=black,fill=red, ] (1.3,1) circle (0.05);
        \node[anchor=south, ] at (1.9,1) {$v_{q} $};
        \filldraw[draw=black,fill=red] (1.7,1) circle (0.05);
        \node[anchor=east] at (0,0.5) {$b$};
        \filldraw (0,0.5) circle (0.05);
        \node[anchor=west, ] at (1.9,0.5) {$\myinv{b}$};
        \filldraw (1.9,0.5) circle (0.05);
        \filldraw[draw=black,fill=orange, ] (1.65,-0.1) rectangle (1.85,0.1);
        \filldraw[draw=black,fill=orange, ] (0.05,0.9) rectangle (0.25,1.1);
         \end{tikzpicture}  \times  \;
  \begin{tikzpicture}
       \draw[ ] (0,0) rectangle (1,1);
        \node[anchor=west] at (0,0.5) {$b$};
        \filldraw (0,0.5) circle (0.05);
       \node[anchor=east] at (1.1,0.5) {$\myinv{b}$};
        \filldraw (1,0.5) circle (0.05);
        \filldraw[draw=black,fill=orange, ] (0.4,0.9) rectangle (0.6,1.1);
      \filldraw[draw=black,fill=orange, ] (0.4,-0.1) rectangle (0.6,0.1);
     \end{tikzpicture} \notag \\
 &  \times \;
    \begin{tikzpicture}
          \draw[preaction={draw, line width=1.2pt, white}, ,blue] (0,0,-1) to [out=90, in=-90] (1,1,-1);
          \draw[preaction={draw, line width=1.2pt, white}, , blue] (1,0,-1) to [out=90, in=-90] (0,1,-1);
           \draw[ ] (0,0,0)--(0,1,0);
       \draw[preaction={draw, line width=1.2pt, white}, ] (1,0,0)--(1,1,0);
         \node[anchor=west] at (0.05,0.8,-1) {$\myinv{a}$};
        \filldraw (0.05,0.8,-1) circle (0.05);
        \node[anchor=east] at (0.2,0.3,-1) {$\myinv{c}$};
        \filldraw (0.05,0.2,-1) circle (0.05);
        \node[anchor=east] at (0,0.5,0) {$b$};
        \filldraw (0,0.5,0) circle (0.05);
        \node[anchor=east] at (1,0.5,0) {$b$};
        \filldraw (1,0.5,0) circle (0.05);
   \begin{scope}[canvas is zx plane at y=0]
     \draw[ ] (0,0)--(-1,0);
     \draw[ ] (0,1)--(-1,1);
   \end{scope} 
  \begin{scope}[canvas is zx plane at y=1]
      \draw[ ] (0,0)--(-1,0);
      \draw[ ] (0,1)--(-1,1);
   \end{scope} 
            \node[anchor=south] at (0,1,-0.5) {$v_{q}$};
        \filldraw[draw=black,fill=red, ] (0,1,-0.5) circle (0.06);
      \node[anchor=south] at (1,1,-0.5) {$v_{q}$};
        \filldraw[draw=black,fill=red, ] (1,1,-0.5) circle (0.06);
      \end{tikzpicture}
      \; \times  
        \begin{tikzpicture}
        \draw[ ] (0,0,0)--(0,1,0);
         \draw[ ] (1,0,0)--(1,1,0);
          \draw[preaction={draw, line width=1.2pt, white}, ,blue] (0,0,1) to [out=90, in=-90] (1,1,1);
          \draw[preaction={draw, line width=1.2pt, white}, ,blue] (1,0,1) to [out=90, in=-90] (0,1,1);
         \node[anchor=east] at (0.05,0.8,1) {$a$};
        \filldraw (0.05,0.8,1) circle (0.05);
        \node[anchor=east] at (0.05,0.2,1) {$c$};
        \filldraw (0.05,0.2,1) circle (0.05);
        \node[anchor=west] at (0,0.5,0) {$\myinv{b}$};
        \filldraw (0,0.5,0) circle (0.05);
        \node[anchor=east] at (1.1,0.5,0) {$\myinv{b}$};
        \filldraw (1,0.5,0) circle (0.05);
     \begin{scope}[canvas is zx plane at y=0]
     \draw[ ] (0,0)--(1,0);
     \draw[ ] (0,1)--(1,1);
      \end{scope} 
     \begin{scope}[canvas is zx plane at y=1]
      \draw[ ] (0,0)--(1,0);
      \draw[ ] (0,1)--(1,1);
     \end{scope} 
     \node[anchor=south] at (0,1,0.5) {$\myinv{v_{q}}$};
        \filldraw[draw=black,fill=red, ] (0,1,0.5) circle (0.06);
      \node[anchor=south] at (1,1,0.5) {$\myinv{v_{q}}$};
        \filldraw[draw=black,fill=red, ] (1,1,0.5) circle (0.06);
   \end{tikzpicture}.
 \end{align}
 The first factor in the sum is $\tr[ v^{-1}_q c^{-1} b  C^{[u]}_\sigma v^{-1}_q a^{-1}  c v_qb^{-1} C^{[d]}_\sigma a v_q ]$ and simplifies to  $\tr[  C^{[u]}_\sigma {b}^{-1} (a {c}^{-1})^{-1}  C^{[d]}_\sigma (a {c}^{-1}) b ]$. Diagrammatically,
  \begin{equation*}
 \begin{tikzpicture}
        \draw[ , blue] (1.5,1) to [out=-90, in=90] (0.6,0);
        \draw[ , blue] (1.4,1) to [out=-90, in=90] (0.5,0);
       \draw[ , preaction={draw, line width=1pt, white}, blue] (0.6,1) to [out=-90, in=90] (1.5,0);
        \draw[preaction={draw, line width=1pt, white}, , blue] (0.5,1) to [out=-90, in=90] (1.4,0);
        \draw[ ] (0.6,0)--(1.4,0);
        \draw[ ] (1.5,0)--(1.9,0);
        \draw[ ] (1.9,1)--(1.9,0);
        \draw[ ] (0.6,1)--(1.4,1);
        \draw[ ] (1.5,1)--(1.9,1);
        \draw[ ] (0.5,1)--(0,1);
        \draw[ ] (0.5,0)--(0,0);
        \draw[ ] (0,1)--(0,0);
        \node[anchor=south] at (0.35,1) {$\myinv{v_{q}}$};
        \filldraw[ ,draw=black,fill=red] (0.4,1) circle (0.06);
         \node[anchor=east] at (0.7,0.7) {$\myinv{a}$};
        \filldraw[ ] (0.58,0.75) circle (0.05);
        \node[anchor=west] at (1.4,0.7) {$c$};
        \filldraw[ ] (1.43,0.75) circle (0.05);
        \node[anchor=east] at (0.7,0.2) {$\myinv{c}$};
        \filldraw[ ] (0.58,0.25) circle (0.05);
        \node[anchor=west] at (1.4,0.2) {$a$};
        \filldraw[ ] (1.43,0.25) circle (0.05);
        \node[anchor=south] at (0.8,1) {$v_{q}$};
        \filldraw[ , draw=black,fill=red] (0.8,1) circle (0.06);
         \node[anchor=south] at (1.3,1) {$\myinv{v_{q}}$};
        \filldraw[ , draw=black,fill=red] (1.3,1) circle (0.06);
        \node[anchor=south] at (1.9,1) {$v_{q} $};
        \filldraw[ ,draw=black,fill=red] (1.7,1) circle (0.06);
        \node[anchor=east] at (0,0.5) {$b$};
        \filldraw[ ] (0,0.5) circle (0.05);
        \node[anchor=west] at (1.9,0.5) {$\myinv{b}$};
        \filldraw[ ] (1.9,0.5) circle (0.05);
        \filldraw[ ,draw=black,fill=orange] (1.65,-0.1) rectangle (1.85,0.1);
        \filldraw[ ,draw=black,fill=orange] (0.05,0.9) rectangle (0.25,1.1);
         \end{tikzpicture} =
             \begin{tikzpicture}[baseline=-2mm]
 \draw[ ] (-0.4,0.3) rectangle (0.4,-0.3); 
    \filldraw[ , draw=black,fill=orange] (-0.1,-0.4) rectangle (0.1,-0.2);
    \filldraw[ , draw=black,fill=orange] (-0.1,0.4) rectangle (0.1,0.2);
    \filldraw[draw=blue,fill=white, ]  (-0.25,-0.3) circle (0.06);
   \node[anchor=north ] at (-0.5,-0.3)  {$\myinv{(a\myinv{c})}$}; 
     \filldraw[blue, ]  (0.25,-0.3) circle (0.06);
     \node[anchor=north] at (0.5,-0.3)  {${a\myinv{c}}$}; 
         \node[anchor=west] at (-0.4,0)  {${b}$};
          \filldraw (-0.4,0)  circle (0.05);
 \node[anchor=west] at (0.4,0)  {$\myinv{b}$};
        \filldraw (0.4,0) circle (0.05);
 \end{tikzpicture}.
 \end{equation*}
We see that all symmetry operators cancel out. Together with the second factor, $\tr[  \bar{C}^{[u]}_\sigma  b^{-1} \bar{C}^{[d]}_\sigma b ]$, these two loops are equal to the result of the braiding detection, Eq.(\ref{braid-result}), with $g\equiv a {c}^{-1}$ . The dependence on the {SF} class lies in the remaining two loops, both equal to $\tr[a v^{-1}_q b^{-1} c v^{-1}_q b^{-1}]= \tr[a c v^{-2}_q]= |G|\delta_{a ,v^2_qc^{-1}}$. These factors can only be non-zero if $ac^{-1}= v^2_q \equiv \lambda$. Notice that the two loops implying the previous identity are not part of the plane formed by the charges and the permutation in Eq.(\ref{transprotocol}): it is an intrinsic 2D effect of $G$-injective PEPS. In summary, we can write
\begin{align*}
\hat{\Lambda}=& \;\frac{1}{2} \sum_{b\in \mathbb{Z}_2} \tr[ C^{[u]}_\sigma  v^{-2}_q b^{-1} C^{[d]}_\sigma b v^2_q  ]\times \tr[  \bar{C}^{[u]}_\sigma  b^{-1} \bar{C}^{[d]}_\sigma b ]  \\
=&\; \frac{1}{2} \sum_{b\in \mathbb{Z}_2} \;
 \begin{tikzpicture}[baseline=-1mm]
 \draw[ ] (-0.4,0.3) rectangle (0.4,-0.3); 
    \filldraw[draw=black,fill=orange, ] (-0.1,-0.4) rectangle (0.1,-0.2);
    \filldraw[draw=black,fill=orange, ] (-0.1,0.4) rectangle (0.1,0.2);
    \filldraw[draw=blue,fill=white, ]  (-0.25,-0.3) circle (0.06);
     \filldraw[blue]  (0.25,-0.3) circle (0.06);
        \node[anchor=north ] at (-0.4,-0.3)  {$\myinv{\lambda}$}; 
     \node[anchor=north] at (0.4,-0.3)  {${\lambda}$}; 
         \node[anchor=west] at (-0.4,0)  {${b}$};
          \filldraw (-0.4,0)  circle (0.05);
 \node[anchor=west] at (0.4,0)  {$\myinv{b}$};
        \filldraw (0.4,0) circle (0.05);
 \end{tikzpicture}
\times
   \begin{tikzpicture}
    \draw[ ] (-0.4,0.3) rectangle (0.4,-0.3); 
    \filldraw[draw=black,fill=orange, ] (-0.1,-0.4) rectangle (0.1,-0.2);
    \filldraw[draw=black,fill=orange, ] (-0.1,0.4) rectangle (0.1,0.2);
      \node[anchor=west] at (-0.4,0)  {${b}$};
          \filldraw (-0.4,0)  circle (0.05);
 \node[anchor=west] at (0.4,0)  {$\myinv{b}$};
        \filldraw (0.4,0) circle (0.05);
 \end{tikzpicture}
=\chi_{\sigma}(\lambda). 
\end{align*}
The value of $\hat{\Lambda}$ is equal to $+ 1$ or $-1$, depending on the {SF} class; trivial or non-trivial respectively.      
\\

Let us compare our approach with a discrimination based on mapping a 2D {SET} system to a cylinder, and using {SPT} classification tools. Concretely, for each anyon type one constructs a 1D system, and places an anyon and its antiparticle on each edge of the cylinder. Then one could wonder if the symmetric projective action on the bonds in those chains, characterized by SPT phases in 1D, is equivalent to the SF pattern of the corresponding anyons. The algebraic object that classifies SPT phases in 1D is $H^2(Q,U(1))$, whereas in SF for 2D systems is $H^2(Q,G)$: the topological order constraints the values of the projective actions from $U(1)$ to $G$ when $G$ is abelian. Therefore for each anyon of the TC with a $\mathbb{Z}_2$ symmetry, the corresponding SPT phase is in $H^2(\mathbb{Z}_2,U(1))$, which is trivial. Therefore \emph{no} signature of the SF pattern of the charge for a unitary simmetry given by $\mathbb{Z}_2$ can be found with compactification methods.\\
 
In general for symmetries coming from a cyclic group, the 1D {SPT} phase is always trivial: $H^2(\mathbb{Z}_n,U(1))=1$, so any non-trivial SF pattern of a cyclic group have no 1D SPT analogue (this situation includes examples \ref{secq}, \ref{ex:perm} and \ref{ex:nonab} below). In the next example we show that even when there are non-trivial {SPT} phases after compactification, they do not fully resolve between all the SF patterns of the anyons. \\

Another reason for the compactification to fail, apart from the mismatch between cohomological identities, is that the mapping is highly non-local. This is because in the 1D picture a site corresponds to the blocked sites of a non-contractible loop in the torus; in the direction that we have closed to form the cylinder. Then one would expect that a non-local mapping could not capture completely a local effect, the SF pattern.

\subsection{ Toric Code with $\mathbb{Z}_2 \times \mathbb{Z}_2$ symmetry }

The symmetry $\mathbb{Z}_2 \times \mathbb{Z}_2=\{e,x,y,z\} \subset SO(3)$, considered as $\pi$ rotations over each axis, acting on the Toric Code gives four inequivalent patterns of {SF} on the charge (when the {SF} of the flux is trivial and there is no permutation of anyons). There are three non-trivial {SF} classes, associated with the group extensions $\mathbb{Z}_4\times \mathbb{Z}_2, D_8,Q_8$ and one trivial class, associated with $\mathbb{Z}_2\times\mathbb{Z}_2\times\mathbb{Z}_2\equiv \mathbb{Z}^3_2$. These four classes come from $H^2( Q, G ) = H^2(\mathbb{Z}_2 \times \mathbb{Z}_2 ,\mathbb{Z}_2)=\mathbb{Z}^3_2$ when one has incoporated the redundancy of cocycles by relabeling the elements of $\mathbb{Z}_2 \times \mathbb{Z}_2$ (see end of this section). 
The gauge-invariant quantity that identifies any of the four {SF} classes is $v^2_q$ for each $q=x,y,z$. We are led to choose gauge-invariant quantities to be the triple
\begin{equation}\label{eq:z2z2z2} 
\lambda = \{v^2_x,v^2_z,v^2_y\}.
\end{equation} 
The concrete values for each SF pattern are shown in Table \ref{tablev}. Then, the scheme presented in Section \ref{TCZ2} can be used in order to calculate each element of the triple (the three non-trivial elements of the symmetry group) to distinguish between all SF patterns.

 \begin{table}[htb]
\centering
\begin{tabular}{c|c|c|c|c|}
 & $\mathbb{Z}^3_2$ & $\mathbb{Z}_4\times \mathbb{Z}_2$ & $D_8$ & $Q_8$ \\ \hline 
$\{v^2_x,v^2_z,v^2_y\}$ & $\{1,1,1\}$ & $\{1,-1,-1\}$ & $\{1,1,-1\}$ & $\{-1,-1,-1\}$ \\  \hline
$v_xv_zv^{-1}_x v^{-1}_z$ & $+1$ & $+1$ & $-1$ & $-1$ \\ \hline
\end{tabular}
\caption{Comparison between the values of the gauge-invariant quantities.}
\label{tablev}
 \end{table}
 
Let us compare this order parameter with the analogous quantity used in the detection of {1D} system invariants, SPT phases, under the symmetry  $\mathbb{Z}_2 \times \mathbb{Z}_2 \subset SO(3)$. The discrete magnitude relevant in that problem is $H^2(\mathbb{Z}_2 \times \mathbb{Z}_2,U(1))=\mathbb{Z}_2$, predicting two non-equivalent phases where the non-trivial one (the trivial corresponds to a product state) is the Haldane phase \cite{Pollmann12}. 

The order parameter for the effective 1D system obtained when putting the system around a (long) cylinder is \cite{Haegeman12, Pollmann12}:
$$
v_xv_zv^{-1}_x v^{-1}_z=\omega(x,z)v_{xz}v^{-1}_{zx}\omega^{-1}(z,x)=\omega(x,z)\omega^{-1}(z,x).
$$
A comparison between Table \ref{tablev} and Fig. \ref{fig:cocycleconstruction} allows to contrast the {SPT} approach with ours: the former doesn't resolve between the 4 {SF} phases, whereas the latter does. The {SPT} approach is only able to discriminate between the sets corresponding to $\{ \mathbb{Z}^3_2, \mathbb{Z}_4\times \mathbb{Z}_2\}$ and $\{D_8,Q_8\}$. This has important implications since using compactification methods cannot even ensure us that we are in the trivial phase, no SF on the anyons.
\begin{figure}[ht!]
\begin{center}
\includegraphics[scale=1]{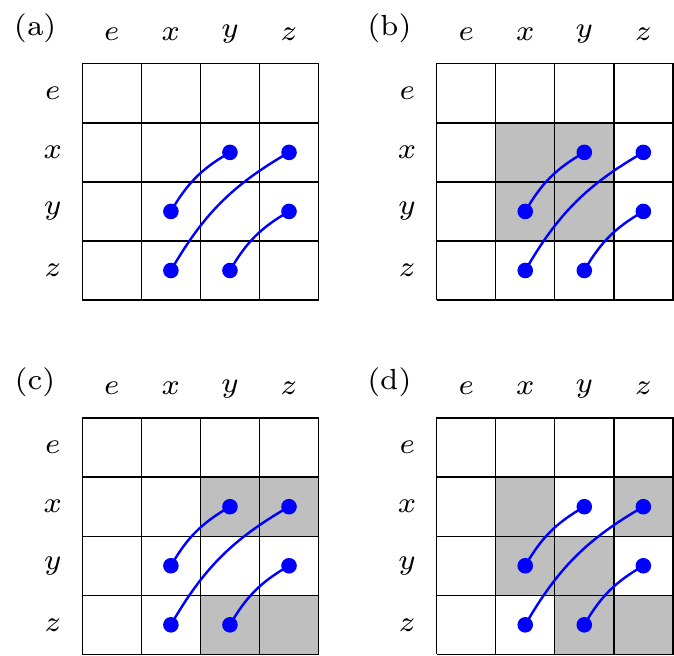}
\caption{(Color online) Matrix representation, $(M)_{q,k}=\omega(q,k)$, of the cocycles related to the four $\mathbb{Z}_2\times \mathbb{Z}_2$ {SF} patterns in the TC. The grey shaded areas correspond to the cocycle value $-1$ and the white to $+1$ (the matrices are shown for a specific gauge choice). (a) corresponds to $\mathbb{Z}^3_2$, (b) to $\mathbb{Z}_4 \times \mathbb{Z}_2$ (c) to $D_8$ and (d) to $Q_8$.  The order parameter (\ref{eq:z2z2z2}), the set made of the lower 3 diagonal elements, is distinct for each phase. In turn, the {SPT}-induced order parameter, the set of product of upper and lower diagonal elements identifies (a) with (b) and (c) with (d), blue line connecting $\omega(k,q)$ and $\omega^{-1} (q,k)$ (the concrete values are given in Table \ref{tablev}). }
\label{fig:cocycleconstruction}
\end{center}
\end{figure}

Now we show how the group $H^2(\mathbb{Z}_2 \times \mathbb{Z}_2 ,\mathbb{Z}_2)=\mathbb{Z}_2 \times\mathbb{Z}_2 \times\mathbb{Z}_2 \equiv \mathbb{Z}^3_2$ is reduced to four classes, related to $\{ \mathbb{Z}^3_2, \mathbb{Z}_4\times \mathbb{Z}_2, D_8,Q_8\}$, after taking the relabelling in the group $\mathbb{Z}_2 \times \mathbb{Z}_2=\{e,x,y,xy\}$ into account. 

We can always choose that $\omega(e,q)=\omega(q,e)=1$ for any $q\in \mathbb{Z}_2 \times \mathbb{Z}_2$   so we only write the elements $\omega(q,k)$ where $q$ and $k$ are both different from $e$. Thus we represent the different 2-cocycles as a $3\times 3$ table where white indicates $+1$ and grey $-1$. The trivial 2-cocycles, 2-coboundaries, can be constructed as $\omega(q,k)=g_q g_k g^{-1}_{qk}$ where $g: \mathbb{Z}_2 \times \mathbb{Z}_2 \to \mathbb{Z}_2$.  There are two distinct 2-coboundaries shown in Fig. \ref{fig:cocyclesQ8Z2}. These 2-coboundaries are the gauge freedom of the 2-cocycles in the second cohomology group and correspond to the trivial extension, this is the direct product of $\mathbb{Z}_2 \times \mathbb{Z}_2$ and $\mathbb{Z}_2$. We note that the quantities ${\lambda}=\{\omega(q,q)\}$ and $\lambda'=\omega(q,k)\omega(k,q)^{-1}$ for $q,k\in \mathbb{Z}_2\times \mathbb{Z}_2$ are invariant under this gauge freedom. Given an automorphism $\alpha$ of $\mathbb{Z}_2\times \mathbb{Z}_2$, we group together 2-cocycles $\omega, \omega'$ related by $\omega'(q,k)=\omega(\alpha(q),\alpha(k))$. It is also clear that this composition does not change the value of $\lambda$, considered it as a unordered triple, and $\lambda'$; these quantities are invariant under all the allowed freedom. The possible 2-cocycles are shown in Fig. \ref{fig:cocyclesQ8Z2}.

\begin{figure}[ht!]
\begin{center}
\includegraphics[scale=0.8]{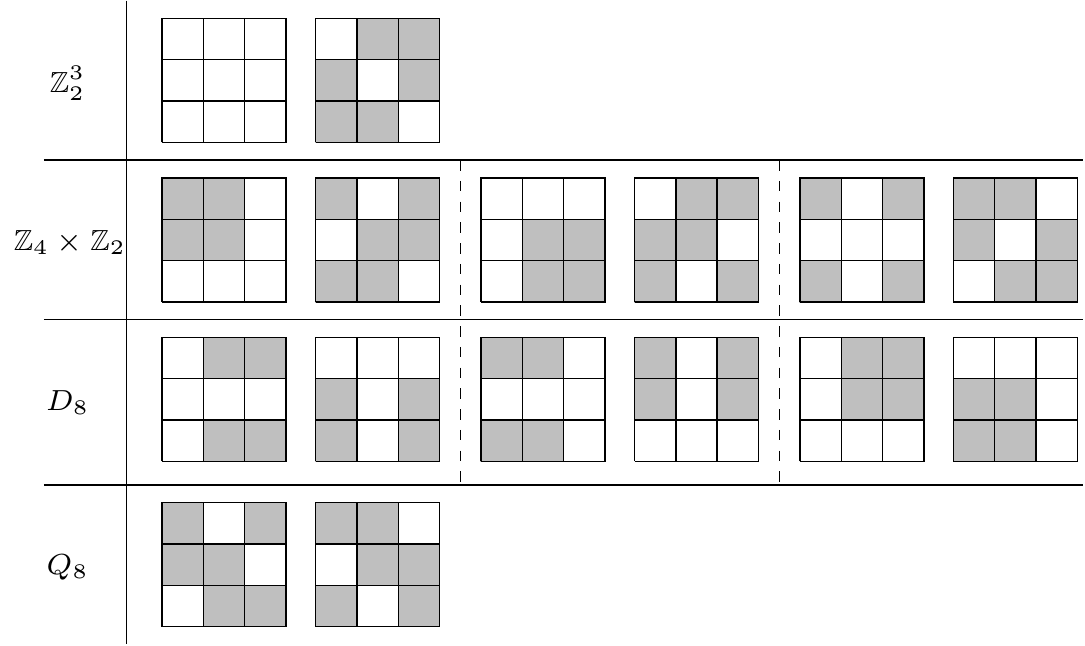}
\caption{The different cocycles of $\mathbb{Z}_2\times \mathbb{Z}_2$ by $\mathbb{Z}_2$ related to the extensions. The dashed lines separate the inequivalent cocycles coming from the classification of the second cohomology group; these are related by an automorphism of $\mathbb{Z}_2\times \mathbb{Z}_2$ giving the same group.}
\label{fig:cocyclesQ8Z2}
\end{center}
\end{figure}        
.

\subsection{$\mathcal{D}$($\mathbb{Z}_p$) with $\mathbb{Z}_p$ symmetry, $p$ prime}\label{secq}

We are going to work with the following presentation of the cyclic group with $p$ elements: $\mathbb{Z}_p = \{ \langle q \rangle: q^p = e \}$. Since $p$ is prime, the homomorphism from $Q=\mathbb{Z}_p$ to ${\rm Aut}(G) = {\rm Aut}(\mathbb{Z}_p)\cong \mathbb{Z}_{p-1}$ is trivial ($\phi_q=\id$\footnote{
Since $\phi$ is an homomorphism from $\mathbb{Z}_p \to \mathbb{Z}_{p-1}$ it satisfies $\id=\phi^{p-1}_q=\phi_{q^{p-1}}$ for any $q\in \mathbb{Z}_p$ but since $p-1$ and $p$ are coprimes $\{ q^{p-1}\}_{q\in \mathbb{Z}_p}= \mathbb{Z}_p$ so $\phi_q=\id , \; \forall q\in \mathbb{Z}_p$. 
}) 
so there is no permutation of anyons. This implies that the possible phases are distinguished only by the inequivalent {SF} patterns on the charges.
The cohomological classification gives $H^2(\mathbb{Z}_p,\mathbb{Z}_p)\cong \mathbb{Z}_p$ but there are only two group extensions $\mathbb{Z}_p \times \mathbb{Z}_p$ and $\mathbb{Z}_{p^2}$. There are $p-1$ non-trivial inequivalent cocycles corresponding to $\mathbb{Z}_{p^2}$ that can be related by relabelling the symmetry operators. That is, $\omega \not \equiv \omega'$ but $\omega' \equiv \omega \circ (\rho\times \rho)$ where $\rho \in {\rm Aut}(Q)$. The non-trivial {SF} class is characterized by a projective action on all non-trivial charges of the model.
The quantity we want to measure is 
$$\lambda=v^p_q=\omega(q,e)\omega(q,q)\omega(q,q^2)\cdots \omega(q,q^{p-1}), $$ 
which is gauge-invariant since  $v'^p_q=v^p_q l^p_q=v^p_q$ whenever $l_q\in \mathbb{Z}_p$. The values are $v^p_q=e$ if it is the trivial cocycle (group extension $\mathbb{Z}_p \times \mathbb{Z}_p$) or $v^p_q =\alpha$
where $\alpha\in [1,\dots,p-1]$ represents the $p-1$ inequivalent non-trivial cocycles. $\lambda$ measures if the group extension has elements of order greater than $p$.
To construct the order parameter we apply $U^{\otimes p}_q$ and the permutation $(12\cdots p)$ to $p$ consecutive sites on the state with a pair of charges and then we project onto the state with the pair of charges placed as in the previous example:$$\Lambda \equiv  \langle \;  \mathcal{O}^\dagger_{\sigma}(x',y) \;P_{(1\cdots p)}\; U^{\otimes p}_q \; \mathcal{O}_{\sigma} (x,y) \; \rangle .$$ For instance, in the case $p=3$ the order parameter would correspond to
   \begin{equation*} 
   \Lambda=\begin{tikzpicture}[scale=1]
        \pic at (1,0,0) {3dpeps};
        \pic at (1,1,0) {3dpepsdown}; 
        \pic at (0.5,0,0) {3dpeps};
        \pic at (0.5,1,0) {3dpepsdown};
         \pic at (1.5,0,0) {3dpeps};
        \pic at (1.5,1,0) {3dpepsdown};
                   \pic at (0,0,0.7) {3dpepsshort};
        \pic at (0,1,0.7) {3dpepsdownshort};
            \pic at (2,0,0.7) {3dpeps};
       \pic at (2,1,0.7) {3dpepsdown};
          \begin{scope}[canvas is xy plane at z=0]
          \draw[preaction={draw, line width=1.2pt, white}, ,blue] (1.5,1,0.7) to [out=-90, in=90] (0.5,0,0.7);
          \draw[preaction={draw, line width=1.2pt, white}, ,blue] (0.5,1,0.7) to [out=-90, in=90] (1,0,0.7);
          \draw[preaction={draw, line width=1.2pt, white}, ,blue] (1,1,0.7) to [out=-90, in=90] (1.5,0,0.7);
        \end{scope}
         \filldraw[draw=black,fill=purple, ] (1.47,0.85,0.7) circle (0.06);
         \filldraw[draw=black,fill=purple, ] (1.03,0.85,0.7) circle (0.06);
         \filldraw[draw=black,fill=purple, ] (0.53,0.85,0.7) circle (0.06);
            \pic at (1.5,0,0.7) {3dpepsp};
        \pic at (1.5,1,0.7) {3dpepsdownp};
                \pic at (1,0,0.7) {3dpepsp};
        \pic at (1,1,0.7) {3dpepsdownp}; 
                \pic at (0.5,0,0.7) {3dpepsp};
        \pic at (0.5,1,0.7) {3dpepsdownp};
         \pic at (0.5,0,1.4) {3dpeps};
        \pic at (0.5,1,1.4) {3dpepsdown};
                \pic at (1,0,1.4) {3dpeps};
                \pic at (1.5,0,1.4) {3dpeps};
        \draw[ ] (1,1,1.4)--(1,0.81,1.4);
    \begin{scope}[canvas is zx plane at y=1]
      \draw[preaction={draw, line width=1.2pt, white}, ] (0.9,1)--(1.9,1);
      \draw[preaction={draw, line width=1.2pt, white}, ] (1.4,0.6)--(1.4,1.4);
      \filldraw (1.4,1) circle (0.07);
    \end{scope} 
            \draw[ ] (1.5,1,1.4)--(1.5,0.81,1.4);
    \begin{scope}[canvas is zx plane at y=1]
      \draw[preaction={draw, line width=1.2pt, white}, ] (0.9,1.5)--(1.9,1.5);
      \draw[preaction={draw, line width=1.2pt, white}, ] (1.4,1.1)--(1.4,1.9);
      \filldraw (1.4,1.5) circle (0.07);
    \end{scope} 
             \filldraw[draw=black,fill=orange, ] (1.75,0,0.5) rectangle (1.75,0,0.9);
            \filldraw[draw=black,fill=orange, ] (0.25,1,0.5) rectangle (0.25,1,0.9); 
          \pic at (2.5,0,0.7) {3dpepsshort};
        \pic at (2.5,1,0.7) {3dpepsdownshort};
             \filldraw[draw=black,fill=orange, ] (2.25,0,0.5) rectangle (2.25,0,0.9);
            \filldraw[draw=black,fill=orange, ] (2.25,1,0.5) rectangle (2.25,1,0.9); 
\end{tikzpicture}.
\end{equation*}
Using Eq.(\ref{Gisoconca}):
\begin{equation*}
\hat{\Lambda}=
\frac{1}{|G|}
\sum_{b\in G}\;
   \begin{tikzpicture}[scale=1]
                \node[anchor= south west] at (0.5,0.75,-0.2) {$\myinv{b}$};
        \filldraw (0.5,0.75,-0.2) circle (0.05);
                      \node[anchor=south west] at (1,0.75,-0.2) {$\myinv{b}$};
        \filldraw (1,0.75,-0.2) circle (0.05);
                              \node[anchor=south west] at (1.5,0.75,-0.2) {$\myinv{b}$};
        \filldraw (1.5,0.75,-0.2) circle (0.05);
\draw[ ] (0.5,1,0.7)--(0.5,1,-0.2)--(0.5,0,-0.2)--(0.5,0,0.7);
\draw[ ] (1,1,0.7)--(1,1,-0.2)--(1,0,-0.2)--(1,0,0.7);
\draw[ ] (1.5,1,0.7)--(1.5,1,-0.2)--(1.5,0,-0.2)--(1.5,0,0.7);
\draw[preaction={draw, line width=1.2pt, white}, ] (0,0,0.7) rectangle (2,1,0.7);
          \draw[preaction={draw, line width=1.2pt, white}, ,blue] (0.5,1,0.7) to [out=-90, in=90] (1,0,0.7);
          \draw[preaction={draw, line width=1.2pt, white}, ,blue] (1,1,0.7) to [out=-90, in=90] (1.5,0,0.7);
          \draw[preaction={draw, line width=1.2pt, white}, ,blue] (1.5,1,0.7) to [out=-90, in=90] (0.5,0,0.7);  
              \filldraw[draw=black,fill=purple, ] (0.57,0.75,0.7) circle (0.06);
              \filldraw[draw=black,fill=purple, ] (1.05,0.75,0.7) circle (0.06);
               \filldraw[draw=black,fill=purple, ] (1.4,0.75,0.7) circle (0.06);
               
              \draw[preaction={draw, line width=1.2pt, white}, ] (0.5,1,0.7)--(0.5,1,1.6)--(0.5,0,1.6)--(0.5,0,0.7);
              \draw[preaction={draw, line width=1.2pt, white}, ] (1,1,0.7)--(1,1,1.6)--(1,0,1.6)--(1,0,0.7);
                            \draw[preaction={draw, line width=1pt, white}, ] (1.5,1,0.7)--(1.5,1,1.6)--(1.5,0,1.6)--(1.5,0,0.7);
                       \pic at (1.5,0,0.7) {3dpepsp};
        \pic at (1.5,1,0.7) {3dpepsdownp};
         \pic at (1,0,0.7) {3dpepsp};
        \pic at (1,1,0.7) {3dpepsdownp};
        \pic at (0.5,0,0.7) {3dpepsp};
        \pic at (0.5,1,0.7) {3dpepsdownp};
                             \filldraw[draw=black,fill=orange, ] (1.75,0,0.5) rectangle (1.75,0,0.9 );
            \filldraw[draw=black,fill=orange, ] (0.25,1,0.5) rectangle (0.25,1,0.9);
                        \node[anchor= north east] at (0.5,0.25,1.6) {${b}$};
        \filldraw (0.5,0.25,1.6) circle (0.05);
                                \node[anchor= north east] at (1,0.25,1.6) {${b}$};
        \filldraw (1,0.25,1.6) circle (0.05); 
                                \node[anchor= north east] at (1.5,0.25,1.6) {${b}$};
        \filldraw (1.5,0.25,1.6) circle (0.05);         
                                  \node[anchor=west] at (2,0.5,0.7) {$\myinv{b}$};
        \filldraw (2,0.5,0.7) circle (0.05);
 \node[anchor=east] at (0,0.5,0.7) {${b}$};
        \filldraw (0,0.5,0.7) circle (0.05);
\end{tikzpicture}
\times
\begin{tikzpicture}[scale=1]    
 \draw[preaction={draw, line width=1.2pt, white}, ] (1.8,0,0.7) rectangle (2.6,1,0.7);
              \filldraw[draw=black,fill=orange, ] (2.2,1,0.5) rectangle (2.2,1,0.9 );  
            \filldraw[draw=black,fill=orange, ] (2.2,0,0.5) rectangle (2.2,0,0.9  );
             \node[anchor=east] at (1.8,0.5,0.7) {${b}$};
        \filldraw (1.8,0.5,0.7) circle (0.05);
         \node[anchor=west] at (2.6,0.5,0.7) {$\myinv{b}$};
        \filldraw (2.6,0.5,0.7) circle (0.05);
   \end{tikzpicture}.
  \end{equation*}
Using now Eq.\eqref{Ginje} and Eq.\eqref{locsym}, we obtain a sum over $G=\mathbb{Z}_3$ for each of the three permuted sites (we denote these elements $s_1,s_2$ and $s_3$). The final expression to compute is the following
\begin{align}
\hat{\Lambda} &= \frac{1}{|G|^4}\sum_{s_1,s_2,s_3,b\in G} 
  \begin{tikzpicture}[baseline=+5mm]
        \draw[ , blue] (0.6,0)..controls (1,0.5) and (2,0.5)..(2.4,1);
                \draw[ , blue] (0.5,0)..controls (0.9,0.5) and (1.9,0.5)..(2.3,1);
                       \draw[preaction={draw, line width=1.2pt, white}, ,blue] (0.6,1) to [out=-90, in=90] (1.5,0);
        \draw[preaction={draw, line width=1.2pt, white}, ,blue] (0.5,1) to [out=-90, in=90] (1.4,0);
                       \draw[preaction={draw, line width=1.2pt, white}, ,blue] (0.6+0.9,1) to [out=-90, in=90] (1.5+0.9,0);
        \draw[preaction={draw, line width=1.2pt, white}, ,blue] (0.5+0.9,1) to [out=-90, in=90] (1.4+0.9,0);
                \draw[ ](0.6,0)--(1.4,0);
        \draw[ ](1.5,0)--(2.3,0);
        \draw[ ](2.4,0)--(2.8,0)--(2.8,1)--(2.4,1);
        \draw[ ](0.6,1)--(1.4,1);
        \draw[ ](1.5,1)--(2.3,1);
        \draw[ ](0.5,1)--(0,1);
        \draw[ ](0.5,0)--(0,0);
        \draw[ ](0,1)--(0,0);
        \node[anchor=south] at (0.35,1) {$\myinv{v_{q}}$};
        \filldraw[draw=black,fill=red, ] (0.4,1) circle (0.06);
         \node[anchor=east] at (0.7,0.7) {$\myinv{s}_1$};
        \filldraw (0.58,0.75) circle (0.05);
        \node[anchor=east] at (1.5,0.7) {$\myinv{s}_2$};
        \filldraw (1.47,0.75) circle (0.05);
          \node[anchor=west] at (2.27,0.85) {${s_3}$};
        \filldraw (2.27,0.85) circle (0.05);
         \node[anchor=east] at (0.8,0.3) {$\myinv{s}_3$};
        \filldraw (0.76,0.25) circle (0.05);
        \node[anchor=west] at (2.33,0.3) {${s_2}$};
        \filldraw (2.33,0.25) circle (0.05);
        \node[anchor=west] at (1.4,0.2) {$s_1$};
        \filldraw (1.43,0.25) circle (0.05);
        \node[anchor=south] at (0.8,1) {$v_{q}$};
        \filldraw[draw=black,fill=red, ] (0.8,1) circle (0.06);
         \node[anchor=south] at (1.3,1) {$\myinv{v_{q}}$};
        \filldraw[draw=black,fill=red, ] (1.3,1) circle (0.06);
        \node[anchor=south] at (1.7,1) {$v_{q} $};
        \filldraw[draw=black,fill=red, ] (1.7,1) circle (0.06);
                 \node[anchor=south] at (1.3+0.9,1) {$\myinv{v_{q}}$};
        \filldraw[draw=black,fill=red, ] (1.3+0.8,1) circle (0.06);
        \node[anchor=south] at (2.6,1) {$v_{q} $};
        \filldraw[draw=black,fill=red, ] (2.6,1) circle (0.06);
        \node[anchor=east] at (0,0.5) {$b$};
        \filldraw (0,0.5) circle (0.05);
        \node[anchor=west] at (2.8,0.5) {$\myinv{b}$};
        \filldraw (2.8,0.5) circle (0.05);
        \filldraw[draw=black,fill=orange, ] (2.5,-0.1) rectangle (2.7,0.1);
        \filldraw[draw=black,fill=orange, ] (0.05,0.9) rectangle (0.25,1.1);
         \end{tikzpicture}  \times 
 \notag \\
  &  \begin{tikzpicture}
       \draw[ ] (0,0) rectangle (1,1);
        \node[anchor=west] at (0,0.5) {$b$};
        \filldraw (0,0.5) circle (0.05);
       \node[anchor=east] at (1.1,0.5) {$\myinv{b}$};
        \filldraw (1,0.5) circle (0.05);
        \filldraw[draw=black,fill=orange, ] (0.4,0.9) rectangle (0.6,1.1);
      \filldraw[draw=black,fill=orange, ] (0.4,-0.1) rectangle (0.6,0.1);
     \end{tikzpicture}
  \times 
    \begin{tikzpicture}
          \draw[preaction={draw, line width=1.2pt, white}, ,blue] (0,0,-1) to [out=90, in=-90] (1.2,1,-1);
          \draw[preaction={draw, line width=1.2pt, white}, ,blue] (0.6,0,-1) to [out=90, in=-90] (0,1,-1);
 \draw[preaction={draw, line width=1.2pt, white}, ,blue] (1.2,0,-1) to [out=90, in=-90] (0.6,1,-1);
        \foreach \x in {0,0.6,1.2}{
        \draw[preaction={draw, line width=1.2pt, white}, ] (\x,1,-1)--(\x,1,0)--(\x,0,0)--(\x,0,-1);
        \node[anchor= north east] at (\x,0.4,0) {$b$};
        \filldraw (\x,0.35,0) circle (0.05);
      \node[anchor=south] at (\x,1,-0.5) {$v_{q}$};
        \filldraw[draw=black,fill=red, ] (\x,1,-0.5) circle (0.06);
        }
         \filldraw (0.15,0.3,-1) circle (0.05);
          \node[anchor=south east] at (0.2,0.25,-1) {$\myinv{s}_3$};
          \filldraw (0.5,0.3,-1) circle (0.05);
          \node[anchor=west] at (0.45,0.3,-1) {$\myinv{s}_1$};
          \filldraw (1.1,0.3,-1) circle (0.05);
          \node[anchor=west] at (1.05,0.3,-1) {$\myinv{s}_2$};
      \end{tikzpicture}
 \times  
        \begin{tikzpicture}
                        \draw[preaction={draw, line width=1.2pt, white}, , blue] (0,0,1) to [out=90, in=-90] (1.2,1,1);
          \draw[preaction={draw, line width=1.2pt, white}, ,blue] (0.6,0,1) to [out=90, in=-90] (0,1,1);
 \draw[preaction={draw, line width=1.2pt, white}, ,blue] (1.2,0,1) to [out=90, in=-90] (0.6,1,1);
        \foreach \x in {0,0.6,1.2}{
        \draw[preaction={draw, line width=1.2pt, white}, ] (\x,1,1)--(\x,1,0)--(\x,0,0)--(\x,0,1);
        \node[anchor=  east] at (\x+0.15,0.6,0) {$\myinv{b}$};
        \filldraw (\x,0.5,0) circle (0.05);
      \node[anchor=south] at (\x,1,0.5) {$\myinv{v}_{q}$};
        \filldraw[draw=black,fill=red, ] (\x,1,0.5) circle (0.06);
        }
         \filldraw (0.15,0.3,1) circle (0.05);
          \node[anchor=south east] at (0.2,0.25,1) {$s_3$};
          \filldraw (0.5,0.3,1) circle (0.05);
          \node[anchor=west] at (0.45,0.3,1) {$s_1$};
          \filldraw (1.1,0.3,1) circle (0.05);
          \node[anchor=west] at (1.05,0.3,1) {$s_2$};
 \end{tikzpicture}.\notag
 \end{align}
 Each term of this sum contains four diagrams. The first is made of two loops 
$$
\tr[   C^{[u]}_\sigma  \myinv{s}_1  s_3 \myinv{b} C^{[d]}_\sigma s_2 \myinv{s}_3 b ] \times \tr[ \myinv{s}_2 s_1].
$$
The last factor of this expression is equal to $|G|\delta_{s_2,s_1}$. The last two diagrams are both equal to $\tr[s_1s_2s_3v^{-3}_q b^{-3}]$ which reduces the sum to its terms that satisfy $v^3_q=s^2_1s_3$. Putting it all together, we get
$$
\hat{\Lambda} =\frac{1}{|G|}  \tr[ C^{[u]}_\sigma  v^3_q b^{-1} C^{[d]}_\sigma b v^3_q  ]\times \tr[  \bar{C}^{[u]}_\sigma  b^{-1} \bar{C}^{[d]}_\sigma b ]=\chi_{\sigma}( v^3_q).
$$
An analogous calculation can be carried out for arbitrary $p$ prime: we would also obtain a sum of four diagrams. The first contains $p-1$ loops:
$$\tr[   C^{[u]}_\sigma  \myinv{s}_1  s_p \myinv{b} C^{[d]}_\sigma s_2 \myinv{s}_p b ]  \tr[  s_1\myinv{s}_2] \tr[  s_2\myinv{s}_3]  \cdots \tr[  s_{p-2}\myinv{s}_{p-1}], $$
where the last $p-2$ factors reduces the sum to its elements that satisfy $s_1=s_2=\cdots =s_{p-2}=s_{p-1}$. The last two diagrams are both equal to $ \tr[s_1\cdots s_p v^{-p}_q b^{-p}] = \tr[s^{p-1}_1 s_p v^{-p}_q]$. So the only terms that survive in the sum are those satisfying $s_1s^{-1}_p=  v^p_q$. Finally
$$\hat{\Lambda }=  \chi_\sigma(v^p_q),$$ 
 where $\chi_\sigma$ denotes one of the $p-1$ non-trivial irreps of $\mathbb{Z}_p$. Therefore the order parameter is only equal to one if the {SF} pattern is trivial.

\subsection{ $\mathcal{D}$($\mathbb{Z}_4$) with $\mathbb{Z}_2$ symmetry }\label{ex:perm}
       
We denote the topological group as $\mathbb{Z}_4=\{+1,-1,i,-i;\times \}$ and the symmetry group as $\mathbb{Z}_2=\{e,q;q^2=e \}$. There are two cases here to be analyzed depending on whether there is a non-trivial permutation action over the anyons or not. These permutations come from the possible homomorphism from $\mathbb{Z}_2$ to ${\rm Aut}(\mathbb{Z}_4)$. In each case there are two inequivalent {SF} classes.\\
 
 {\bf Non-trivial permutation}. In this case there is a non-trivial action of the symmetry operators over the topological group: $\phi_q(g)=v_q g v^{-1}_q= g^{-1}$;  $\forall g \in \mathbb{Z}_4$. That is, it permutes the fluxes $i$ and $-i$. There are two inequivalent cocycles since $H_\phi^2(\mathbb{Z}_2,\mathbb{Z}_4)= \mathbb{Z}_2$. 
If we denote the irreps of $\mathbb{Z}_4$ as $\sigma=0,1,2,3$, the non-trivial {SF} class (group extension $Q_8$) is characterized by a projective action on charges $1$ and $3$. Applying this action twice is equivalent to braiding with the flux $-1$. The symmetry permutes between charges $1$ and $3$, and multiplies the wave function of the system by minus one phase factor.

The class can be distinguished with the quantity $\lambda=\omega(e,e)\omega(q,q)=v^2_{q}\in \mathbb{Z}_4$, which is represented by the identity element for the trivial {SF} class (group extension $D_8$) and $-1$ for the non-trivial one. $\lambda$ is not affected by gauge transformations because $v'^2_{q}= v^2_{q}l_q \phi_q(l_q)= v^2_{q}l_q l^{-1}_q=v^2_{q}$. This quantity measures if there are elements of the extension, that do not belong to $\mathbb{Z}_4$, with order greater than two.
We use the same configuration of operators as for the TC example, see Eq.(\ref{transprotocol}),  to construct the order parameter but creating the charges, irreps $\sigma=1,3$ of $\mathbb{Z}_4$, of the $\mathcal{D}$($\mathbb{Z}_4$). The final expression is the one of Eq.\eqref{loopcalc} particularising for the above relations of $G=\mathbb{Z}_4$. We get:
\begin{align*}
        \begin{tikzpicture}
        \draw[ , blue] (1.5,1) to [out=-90, in=90] (0.6,0);
        \draw[ , blue] (1.4,1) to [out=-90, in=90] (0.5,0);
       \draw[ , preaction={draw, line width=1pt, white}, blue] (0.6,1) to [out=-90, in=90] (1.5,0);
        \draw[preaction={draw, line width=1pt, white}, , blue] (0.5,1) to [out=-90, in=90] (1.4,0);
        \draw[ ] (0.6,0)--(1.4,0);
        \draw[ ] (1.5,0)--(1.9,0);
        \draw[ ] (1.9,1)--(1.9,0);
        \draw[ ] (0.6,1)--(1.4,1);
        \draw[ ] (1.5,1)--(1.9,1);
        \draw[ ] (0.5,1)--(0,1);
        \draw[ ] (0.5,0)--(0,0);
        \draw[ ] (0,1)--(0,0);
        \node[anchor=south] at (0.35,1) {$\myinv{v_{q}}$};
        \filldraw[ ,draw=black,fill=red] (0.4,1) circle (0.06);
         \node[anchor=east] at (0.7,0.7) {$\myinv{a}$};
        \filldraw[ ] (0.58,0.75) circle (0.05);
        \node[anchor=west] at (1.4,0.7) {$c$};
        \filldraw[ ] (1.43,0.75) circle (0.05);
        \node[anchor=east] at (0.7,0.2) {$\myinv{c}$};
        \filldraw[ ] (0.58,0.25) circle (0.05);
        \node[anchor=west] at (1.4,0.2) {$a$};
        \filldraw[ ] (1.43,0.25) circle (0.05);
        \node[anchor=south] at (0.8,1) {$v_{q}$};
        \filldraw[ , draw=black,fill=red] (0.8,1) circle (0.06);
         \node[anchor=south] at (1.3,1) {$\myinv{v_{q}}$};
        \filldraw[ , draw=black,fill=red] (1.3,1) circle (0.06);
        \node[anchor=south] at (1.9,1) {$v_{q} $};
        \filldraw[ ,draw=black,fill=red] (1.7,1) circle (0.06);
        \node[anchor=west] at (0,0.5) {$b$};
        \filldraw[ ] (0,0.5) circle (0.05);
        \node[anchor=east] at (1.9,0.5) {$\myinv{b}$};
        \filldraw[ ] (1.9,0.5) circle (0.05);
        \filldraw[ ,draw=black,fill=orange] (1.65,-0.1) rectangle (1.85,0.1);
        \filldraw[ ,draw=black,fill=orange] (0.05,0.9) rectangle (0.25,1.1);
         \end{tikzpicture}
     & = \tr[ v^{-1}_q c^{-1} b  C^{[u]}_\sigma v^{-1}_q a^{-1}  c v_q b^{-1} C^{[d]}_\sigma a v_q ]\\
     &= \tr[C^{[u]}_\sigma a c^{-1} b^{-1} C^{[d]}_\sigma a c^{-1} b ],
       \end{align*}
and the last two diagrams, see Eq. \eqref{loopcalc}, both equal to $\tr[ v_q a^{-1}b  v_q c^{-1} b]= \tr[ v^2_q a^{-1} c]$, are $|G|\delta_{ac^{-1}  ,v^{2}_q}$.
Next, we observe that $ \tr[C^{[u]}_\sigma a c^{-1} b^{-1} C^{[d]}_\sigma a c^{-1} b ]=\tr[ C^{[u]}_\sigma  v^2_q b^{-1} C^{[d]}_\sigma b v^2_q  ]$, which implies that together with the factor $\tr[  C^{[u]}_\sigma  b^{-1} C^{[d]}_\sigma b  ]$
$$
\hat{\Lambda} =\frac{1}{|G|}  \tr[ C^{[u]}_\sigma  v^2_q b^{-1} C^{[d]}_\sigma b v^2_q  ]\times \tr[  \bar{C}^{[u]}_\sigma  b^{-1} \bar{C}^{[d]}_\sigma b ]=\chi_{\sigma}( v^2_q).
$$

This calculation is valid for any $\sigma$ but to distinguish between phases, we have to choose the two irreps, $\sigma=1,3$, that satisfy $\chi_{\sigma}( -1)\neq1$ (the ones that fractionalize the symmetry).\\

 {\bf Trivial permutation}. 
 There are two fractionalization classes since $H^2(\mathbb{Z}_2,\mathbb{Z}_4)= \mathbb{Z}_2$. A gauge-invariant quantity is $\lambda=v^4_q=\omega^2(e,q) \omega^2(q,q)$ since $v'^4_q=v^4_ql^4_q$ and $l_q\in \mathbb{Z}_4$. The value of $\lambda$ is $-1\in \mathbb{Z}_4$ for the non-trivial class (group extension $\mathbb{Z}_8$) and the identity for the trivial one (group extension $\mathbb{Z}_4\times \mathbb{Z}_2$). $\lambda$ measures whether there are elements of order greater than $4$ in the group extension. The order parameter is
   \begin{equation} 
   \Lambda=\begin{tikzpicture}[scale=1]
         \pic at (1,0,0) {3dpeps};
        \pic at (1,1,0) {3dpepsdown};
        \pic at (0.5,0,0) {3dpeps};
        \pic at (0.5,1,0) {3dpepsdown};
        \pic at (1.5,0,0) {3dpeps};
        \pic at (1.5,1,0) {3dpepsdown};
          \pic at (2,0,0) {3dpeps};
        \pic at (2,1,0) {3dpepsdown};
                 \filldraw[draw=black,fill=orange, ] (0.75,1,-0.2) rectangle (0.75,1,0.2);  
          \filldraw[draw=black,fill=orange, ] (0.75,0,-0.2) rectangle (0.75,0,0.2 );
                  \begin{scope}[canvas is xy plane at z=0]
          \draw[preaction={draw, line width=1.2pt, white}, ,blue] (2,1,0.7) to [out=-90, in=90] (0.5,0,0.7);
          \draw[preaction={draw, line width=1.2pt, white}, ,blue] (0.5,1,0.7) to [out=-90, in=90] (1,0,0.7);
          \draw[preaction={draw, line width=1.2pt, white}, ,blue] (1,1,0.7) to [out=-90, in=90] (1.5,0,0.7);
          \draw[preaction={draw, line width=1.2pt, white}, ,blue] (1.5,1,0.7) to [out=-90, in=90] (2,0,0.7);
        \end{scope}
        \filldraw[draw=black,fill=purple, ] (1.97,0.85,0.7) circle (0.07);
                 \filldraw[draw=black,fill=purple, ] (1.53,0.85,0.7) circle (0.07);
         \filldraw[draw=black,fill=purple, ] (1.03,0.85,0.7) circle (0.07);
         \filldraw[draw=black,fill=purple, ] (0.53,0.85,0.7) circle (0.07);
                 \pic at (0,0,0.7) {3dpeps};
        \pic at (0,1,0.7) {3dpepsdown};
        \pic at (0.5,0,0.7) {3dpepsp};
        \pic at (0.5,1,0.7) {3dpepsdownp};
        \pic at (1,0,0.7) {3dpepsp};
        \pic at (1,1,0.7) {3dpepsdownp};
           \pic at (1.5,0,0.7) {3dpepsp};
        \pic at (1.5,1,0.7) {3dpepsdownp};
                \pic at (2,0,0.7) {3dpepsp};
        \pic at (2,1,0.7) {3dpepsdownp};
                   \pic at (2.5,0,0.7) {3dpeps};
        \pic at (2.5,1,0.7) {3dpepsdown};
         \pic at (1,0,1.4) {3dpeps};
         \pic at (0.5,0,1.4) {3dpeps};
        \pic at (0.5,1,1.4) {3dpepsdown};
         \pic at (1.5,0,1.4) {3dpeps};
         \pic at (2,0,1.4) {3dpeps};      
        \draw[ ] (1,1,1.4)--(1,0.81,1.4);
    \begin{scope}[canvas is zx plane at y=1]
      \draw[preaction={draw, line width=1.2pt, white}, ] (0.9,1)--(1.9,1);
      \draw[preaction={draw, line width=1.2pt, white}, ] (1.4,0.6)--(1.4,1.4);
      \filldraw (1.4,1) circle (0.07);
    \end{scope} 
            \draw[ ] (1.5,1,1.4)--(1.5,0.81,1.4);
    \begin{scope}[canvas is zx plane at y=1]
      \draw[preaction={draw, line width=1pt, white}, ] (0.9,1.5)--(1.9,1.5);
      \draw[preaction={draw, line width=1pt, white}, ] (1.4,1.1)--(1.4,1.9);
      \filldraw (1.4,1.5) circle (0.07);
    \end{scope} 
             \draw[ ] (2,1,1.4)--(2,0.81,1.4);
    \begin{scope}[canvas is zx plane at y=1]
      \draw[preaction={draw, line width=1.2pt, white}, ] (0.9,2)--(1.9,2);
      \draw[preaction={draw, line width=1.2pt, white}, ] (1.4,1.6)--(1.4,2.4);
      \filldraw (1.4,2) circle (0.07);
    \end{scope} 
      \filldraw[draw=black,fill=orange, ] (2.25,0,0.5) rectangle (2.25,0,0.9);
       \filldraw[draw=black,fill=orange, ] (0.25,1,0.5) rectangle (0.25,1,0.9); 
 \end{tikzpicture} \notag.\end{equation}
The calculation done in Section \ref{secq} (for $p=4$) is also valid here, since $\phi_q=\id$, therefore  
$$\hat{\Lambda} =  \chi_{\sigma}( v^4_q),$$
where  $\sigma$ denotes  the two irreps of $\mathbb{Z}_4$ that satisfy $\chi_{\sigma}( -1)\neq1$.

\subsection{ $\mathcal{D}(Q_8)$ with $\mathbb{Z}_2$ symmetry}\label{ex:nonab}

Here we study a case of non-abelian topological order: $\mathcal{D}$($Q_8$) with $Q=\mathbb{Z}_2=\{e,q;q^2=e \}$ symmetry and trivial anyon permutation action which can host two inequivalent {SF} classes. The non-trivial class is associated with a projective action of $\mathbb{Z}_2$ over the two-dimensional charge (irrep). The two related group extensions are $\mathbb{Z}_2 \times Q_8 \equiv \{ (s,g)| s=0,1 ; g\in Q_8 \}$ and $(\mathbb{Z}_4 \times Q_8)/\mathbb{Z}_2 \equiv \{ (t,g)| t=0,1,2,3 ; g\in Q_8 \} / \{(0,e),(2,-1)  \}$.
Let us consider the posible values of $v_q$ and $v^2_q=\omega(e,e)\omega(q,q)\in Q_8 $ in both group extensions:

\begin{table}[htb]
\centering
\begin{tabular}{|c|c|}
\hline
\multicolumn{2}{|c|}{$\mathbb{Z}_2 \times Q_8$} \\
 \hline\hline
$v_q$ & $v^2_q\in Q_8$ \\ \hline 
$(1,+1)$ & $+1$ \\ \hline
$(1,-1)$ & $+1$ \\ \hline
$(1,+i)$ & $-1$ \\ \hline
$(1,-i)$ & $-1$ \\ \hline
$(1,+j)$ & $-1$ \\ \hline
$(1,-j)$ & $-1$ \\ \hline
$(1,+k)$ & $-1$ \\ \hline
$(1,-k)$ & $-1$ \\ \hline
\end{tabular}
\qquad
\begin{tabular}{|c|c|}
\hline
\multicolumn{2}{|c|}{$(\mathbb{Z}_4 \times Q_8)/\mathbb{Z}_2 $} \\
 \hline\hline
$v_q$ & $v^2_q\in Q_8$ \\ \hline 
$(1,+1)\equiv (3,-1)$ & $-1$ \\ \hline
$(1,-1)\equiv (3,+1)$ & $-1$ \\ \hline
$(1,+i)\equiv (3,-i)$ & $+1$ \\ \hline
$(1,-i)\equiv (3,+i)$ & $+1$ \\ \hline
$(1,+j)\equiv (3,-j)$ & $+1$ \\ \hline
$(1,-j)\equiv (3,+j)$ & $+1$ \\ \hline
$(1,+k)\equiv (3,-k)$ & $+1$ \\ \hline
$(1,-k)\equiv (3,+k)$ & $+1$ \\ \hline
\end{tabular}
\end{table}
The fact that the topological group is non-abelian implies that the natural candidate for $\lambda$ of a $\mathbb{Z}_2$ symmetry, $v^2_q$, is not gauge-invariant. Besides that, there exists a difference in $v^2_q$ between extensions in the number of times that it can be $-1$ (or $+1$). This distinction can be captured by the magnitude:
\begin{align*}
\lambda&= \sum_{g\in Q_8} (v_q\times g)^2 =  \sum_{g\in Q_8}(1,g)^2 \\
 &=\bigg\{ \begin{array}{lcc}   
 6u_{+1}+2u_{-1}& {\rm in} & (\mathbb{Z}_4 \times Q_8)/\mathbb{Z}_2 
\\
  6u_{-1}+2u_{+1} &  {\rm in} &\mathbb{Z}_2 \times Q_8 \end{array},
\end{align*}
which does not depend on the gauge and where $u_{+1},u_{-1}$ correspond to the representation of $Q_8$ of the elements $+1,-1$.  Let us notice that $\lambda$ here does not belong to $Q_8$, but it belongs to the algebra generated by the representation of $G$ acting on the virtual d.o.f. This suggests that $\lambda$ can be interpreted as a superposition of fluxes which depends on the extension, {\it i.e.} SF pattern. To obtain $\lambda$ at the virtual level, we use again the same configuration of operators as for the TC example, see Eq.(\ref{transprotocol}),  to construct the order parameter. But we now create the non-abelian charge of  $\mathcal{D}$($Q_8$) which corresponds to the two-dimensional irrep of $Q_8$.
Then, the final expression is the one of Eq.\eqref{loopcalc} particularising it for this case. There are two factors equal to $\tr[c v^{-1}_q b^{-1} a v^{-1}_q b^{-1}]$ which reduce the sum over $c$ in Eq.\eqref{loopcalc} to the terms where $c^{-1}= v^{-1}_qb^{-1}av^{-1}_q b^{-1}$. This implies that 
 $$ac^{-1}b =(av^{-1}_qb^{-1})(av^{-1}_q b^{-1})b= [v^{-1}_q a^{-1}  c v_q b^{-1}]^{-1}.$$
The factor $av^{-1}_q b^{-1}$ can be relabelled as $a\phi_q(b^{-1})v^{-1}_q\equiv g^{-1} v^{-1}_q$ with $g\in Q_8$  replacing the sum over $a$ and $b$ with the sum over $g$ (and a factor $|Q_8|$). The remaining two factors are $\tr[  \bar{C}^{[u]}_{\sigma,h}  b^{-1} \bar{C}^{[d]}_{\sigma,h} b ]$ and $\tr[ v^{-1}_q c^{-1} b  C^{[u]}_\sigma v^{-1}_q a^{-1}  c v_q b^{-1} C^{[d]}_{\sigma,h} a v_q ]= \tr[ C^{[u]}_{\sigma,h} v^{-1}_q a^{-1}  c v_q b^{-1} C^{[d]}_{\sigma,h} ac^{-1}b ]$ where in the last relation we can identify the factor $ac^{-1}b$ previously discussed. We finally obtain:

\begin{align}
\hat{\Lambda} & \propto  \sum_{b,g,h\in Q_8} \tr[ C^{[u]}_{\sigma, h}  b^{-1}(v_q g)^{2} C^{[d]}_{\sigma , h}  (v_q g )^{-2} b ] \tr[  \bar{C}^{[u]}_{\sigma , h}  b^{-1} \bar{C}^{[d]}_{\sigma ,h} b ] \notag \\
& =  \bigg\{ \begin{array}{ccc}  
\left(6\chi_\sigma(+1)+2\chi_\sigma(-1)\right)/8& {\rm for} & E = (\mathbb{Z}_4 \times Q_8)/ \mathbb{Z}_2\\ 
\left( 6 \chi_\sigma(-1) +2\chi_\sigma(+1) \right) /8 &  {\rm for} & E = \mathbb{Z}_2 \times Q_8 
		\end{array}\notag \\
 & = \bigg\{ \begin{array}{lcc} 
  			+1 & {\rm for } &  E = (\mathbb{Z}_4 \times Q_8)/\mathbb{Z}_2  \\
  			-1 &  {\rm for } & E = \mathbb{Z}_2 \times Q_8,
   		\end{array}\notag
\end{align}
where in the last step we have used that $\sigma$ is the two-dimensional irrep of $Q_8$. The fact that $\lambda$ belongs to the group algebra instead of just to the group comes from $Q_8$ being non-abelian. \\

This non-abelian case finishes the proof of \cref{theo:sfdetect} since we have shown that there exists an order parameter, for every case, that detects all the symmetry fractionalization patterns.

\section{Discussion}

In this chapter, we have presented order parameters to distinguish the fractionalization class of an internal symmetry in $G$-injective {PEPS}. These are calculated in the bulk of the {2D} system and only depend on the virtual symmetry operators --and not on the explicit representation of the symmetry--. Our technique works for some very interesting cases like non-abelian topological order or {SF} classes involving permutations of anyons. 
In all the examples provided, we have also shown that the {SF} classes are better resolved with our order parameters than it would be possible through dimensional compactification.\\

The calculations have been done using square lattices but they can be generalized. 
We notice that \cref{theo:sfdetect} can be potentially useful for a generalisation of our order parameter in general string nets models \cite{Levin05} enriched with symmetries. To do so, it would be interesting to study how our findings could be adapted to the case where the symmetry is not internal and when it is not represented virtually as a tensor product, but as a matrix product operator (MPO) \cite{Williamson17}. In fact, the MPO-like form would allow to carry out more SF patterns: in our approach we only consider SF patterns equivalent to braiding with fluxes.

The construction of order parameters for non-unitary symmetries and lattice symmetries is left for future work; some gauge-invariant quantities of these symmetries are calculated in \cref{symTRS}.

Our order parameter is proven to work for the RG fixed points of $G$-injective {PEPS} where the correlation length is zero and Eq.(\ref{Gisoconca}) can be used to compute all the expressions analytically.  At those points we can ensure that the state is in the $\mathcal{D}$($G$) phase in the thermodynamic limit. 

One would expect that, after blocking $G$-injective {PEPS} tensors in the phase of the $\mathcal{D}$($G$), the blocked tensor will satisfy the following:

\begin{equation*}
  \begin{tikzpicture}
  \node[anchor=south] at (0,0.2,0) {$\bar{A}$};
  \node[anchor=north] at (0,0,1.4) {$A$};
     \foreach \z in {0,0.7,1.4}{
      \foreach \x in {0,0.5,1}{
           \pic at (\x,0,\z) {3dpeps}; 
           \pic at (\x,0.2,\z) {3dpepsdown}; 
             } } 
   \end{tikzpicture}
   \approx
     \begin{tikzpicture}
       \pic at (0,0,0) {3dGisopepsproj};
                      \node at (-0.25,0.2,0.3) {$\myinv{g}$};
              \node at (0.15,-0.1,0.3) {$\myinv{g}$};
        \node at (0.37,0.35,0.3) {${g}$};
                \node at (0.5,-0.03,0.3) {${g}$};
\end{tikzpicture}
+{\rm error \; terms} \;.
\end{equation*}
This would allow us to obtain a leading term of the form of Eq.\eqref{Gisoconca}, when we concatenate tensors. Then, since the maps $(\omega, \phi)$ are well defined in arbitrary $G$-injective {PEPS} and also when doing blocking, the order parameter for the blocked tensor should work, at least approximately.

These considerations are important for the numerical implementation of our order parameters which is left for future work. 
We notice that the locality of our approach allows to use 2D techniques as the Corner Transfer Method (CTM) -see \cite{Nishino96} and \cite{Fishman18} for a recent review.\\

We also point out that the locality of our order parameters does not contradict the fact that the topological order cannot be detected locally. We are measuring a local symmetric effect.
But remarkably, the SF pattern is identified thanks to its duality with braiding, so it would be interesting to clarify if the SF of some quasiparticles implies a non-trivial braiding of the quantum phases that we are considering. Moreover the SF detection in our calculations involves only one type of anyon: the other anyon to carry out the braiding is simulated by the symmetry operators.\\


\newpage \cleardoublepage 

\chapter{Mathematical open problems in PEPS}\label{Openproblems}

Tensor network states play a prominent role in the rigorous study of central results of quantum many-body problems -see \cite{reviewPEPS} for a complete review. In particular, PEPS capture the relevant physics in the low-energy sector of locally interacting systems. Then, the study of these systems is translated into the formal PEPS framework where different mathematical techniques have been developed. Fruitful results in this field have derived from the interplay with different areas in mathematics. 
The aim of this chapter is to present some mathematical problems related to PEPS, providing the necessary background and motivation for them. We will separate the questions in three main topics, each of them presented in a separate section. The first one deals with questions related to the correspondence between PEPS and ground states. The second section deals with the use of PEPS to prove rigorous results in condensed-matter problems. The last section collects some open questions about PEPS that appear in different fields.

\section{Are PEPS and GS of local gapped Hamiltonians the same set?}

As commented in the \cref{chapter:Intro}, one of the key features of PEPS is that they are conjectured to {\it correspond}  to the set of ground states of gapped and locally interacting Hamiltonians (modified Area Law Conjecture). This is motivated by the fact that this is the situation for the one dimensional case with MPS. This correspondence can be divided in two statements:

\begin{enumerate}
\item[1] Ground states of gapped locally interacting Hamiltonians can be well approximated by PEPS with a \emph{small} bond dimension (i.e. GS $\subset$ PEPS).
\item[2] PEPS are exact ground states of (gapped) locally interacting Hamiltonians (i.e. PEPS $\subset$ GS).
\end{enumerate}

Some comments are in order: 

As shown in \cite{Ge16}, there are examples of states in 2D that fulfill an area law and, however, are not ground states of local Hamiltonians (nor well approximated by PEPS). In this sense, the set of area-law states is too big to capture the desired set of ground states and it is precisely the family of PEPS the one that seems to capture better such set.

The gap in statement 2 cannot be always guaranteed, as there are examples of PEPS that cannot be ground states of any gapped Hamiltonian \cite{Verstraete06}. This will be commented in detail in subsection \ref{sec:spectral-gap} below.

\

In the following we will pose the main open questions concerning points 1 and 2, together with the state of the art for both of them. 
\subsection{Are all PEPS the GS of a local gapped Hamiltonian?}

We recall from \cref{chapter:Intro} that every PEPS is the GS of a locally interacting Hamiltonian, called {\it parent Hamiltonian}. This parent Hamiltonian is constructed using the map $\Gamma_\mathcal{R}$ defined in a region $\mathcal{R}$ with the $A$ tensors of the PEPS. The basic open question here is the following 

\begin{question}
Which are the minimal requirements on $A$ and the minimal size of $\mathcal{R}$ under which one can guarantee that the given PEPS is the unique ground state of $H$ and in addition $H$ is gapped? 
\end{question}

This question turns out to be very difficult, specially beyond 1D systems. Let us now go slowly  through the known results and divide this question into more specific ones.

For that, we use the key concepts of {\it normal} and {\it injective} tensors, see \cref{eq:injec}, \cref{eq:PEPSinj} and \cref{def:nomalTN}, which endow $A$ with some special properties. 

\begin{definition}\label{def:injindex}
The injectivity index of a normal tensor $A$, $i(A)$, is the smallest $n\in\N$ so that $\Gamma_\mathcal{R}$ is an injective map for the square region $\mathcal{R}$ of size $n\times n$.
\end{definition}

It is known \cite{Fannes92,reviewPEPS} that, given a normal tensor $A$ with injectivity index $i(A)$, by taking $\mathcal{R}$ as the square region of size $i(A)+1$, the parent Hamiltonian associated to $\mathcal{R}$ with the above construction has the PEPS $\ket{\Psi_A}$ as the unique ground state with zero energy $H\ket{\Psi_A}=0$. 

Therefore, the bounds on the injectivity index correspond to the bounds on the interaction length of the parent Hamiltonian.  To comment on such bounds we will start with the case of 1D. There, in order to briefly illustrate about the techniques used so far, we will make a small detour and talk about a classic inequality of Wielandt in the context of stochastic matrices.

\subsubsection{Wielandt inequalities}

In 1950 \cite{Wielandt}, Wielandt proved that the index of primitivity of a primitive stochastic matrix $A\in \mathcal{M}_{D\times D}$ must be less or  equal than $ D^2-2D+2$, and that this bound is optimal. 

Let us recall that an stochastic matrix $A= (A_{i,j})_{i,j}\in \mathcal{M}_{D\times D}$ is a matrix with $A_{i,j}\ge 0$ for all $i,j$ and $\sum_{i} A_{i,j}=1$ for all $j$. This implies that if $p=(p_i)_i$ is a probability distribution ($p_i\ge 0$ and $\sum_i p_i=1$), the same holds for $Ap=(\sum_j A_{i,j}p_j)_i$. In this sense, $A$ models a noisy memoryless communication channel acting on an alphabet of size $D$ -- the basic object in Shannon's information theory.  

A stochastic matrix $A$ is called primitive if there exists $n\in \N$ such that $(A^n)_{i,j}>0$ for all $i,j$. The minimum of such $n$ is called the index of primitivity of $A$. 

The range of applications of Wielandt's inequality is wide: Markov chains \cite{Seneta}, graph theory and number theory \cite{Alfonsin}, or numerical analysis \cite{Varga} to name a few. 

In quantum information theory, the object that models a memoryless noisy channel is a trace-preserving completely positive linear map (also called quantum channel) $T:\mathcal{M}_{D\times D} \rightarrow \mathcal{M}_{D\times D}$ \cite{NielsenChuang10}. The quantum channel  $T$, by means of its Kraus decomposition, is nothing but a map of the form $T(X)=\sum_{i=1}^d A_i X A_i^{\dagger}$, where the {\it Kraus operators} $A_i$ are $D\times D$ matrices fulfilling $\sum_i A_i^{\dagger} A_i=\id$ (this is precisely the trace preserving condition) and $A^{\dagger}$ denotes the adjoint matrix of $A$.

Note that quantum channels include stochastic matrices as particular cases. Given a stochastic matrix $A=(a_{i,j})$, the quantum channel  $T_A$ with Kraus operators $\sqrt{a_{i,j}} \ket{i}\bra{j}$ has the following property: given a probability vector $p$, if we consider the diagonal matrix $\rho={\rm diag} (p)=\sum_i p_i \ket{i}\bra{i}$ then, $T_A(\rho)$ is exactly ${\rm diag} (Ap)$.  That is, the quantum channel $T_A$ restricted to the diagonal matrices is exactly the stochastic matrix $A$.

The following definition is the natural quantum (non-commutative) analogue of the notion of primitivity for a stochastic matrix \cite{Sanz10}.

\begin{definition}
A quantum channel is called primitive if there exists an $n\in \N$ so that $T^n(\rho)$ is full rank for all positive-semidefinite input $\rho$.   The minimum of such $n$ is called the primitivity index $p(T)$. 
\end{definition}

Note that given a stochastic matrix $A$, the associated quantum channel $T_A$ is primitive if and only if $A$ is primitive. Moreover, the corresponding primitivity indices coincide. There is an equivalent notion of primitive quantum channel, related to the classical Perron-Frobenius-like characterization of primitivity for the stochastic case \cite{Sanz10}:

\begin{proposition}
A quantum channel $T$ is primitive if and only if $T$ has a unique non-degenerate eigenvalue $\lambda$ with $|\lambda|=1$ and the corresponding eigenvector (which is necessarily semidefinite positive) is full rank. 
\end{proposition}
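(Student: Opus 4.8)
## Proof proposal for the proposition characterizing primitive quantum channels

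The plan is to mimic the classical Perron–Frobenius proof, but working at the level of operators on $\mathcal{M}_{D\times D}$ rather than vectors in $\mathbb{C}^D$, and carefully tracking positivity of the fixed-point operators. First I would recall the basic spectral facts about quantum channels: any trace-preserving completely positive map $T$ has spectral radius exactly $1$ (because $T$ preserves the trace norm on positive operators and $\tr[T(\rho)] = \tr[\rho]$), so all eigenvalues lie in the closed unit disk, and $1$ is always an eigenvalue of the dual map $T^*$ (since $T^*(\id) = \id$), hence of $T$ itself. The content of the proposition is therefore: primitivity is equivalent to ($1$ being the \emph{only} peripheral eigenvalue, \emph{non-degenerate}, with \emph{full-rank} positive eigenvector).

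For the forward direction (primitive $\Rightarrow$ spectral condition), I would argue as follows. Suppose $T$ is primitive with index $n$, so $T^n(\rho)$ is full rank for every $\rho \geq 0$, $\rho \neq 0$. Take any peripheral eigenvalue $\lambda = e^{i\theta}$ with eigenvector $X$; decomposing $X$ into positive and negative parts and using that $T^n$ strictly improves rank, one shows the only way $\|T^{kn}(X)\|$ can fail to decay is if $X$ is (a multiple of) a positive full-rank operator, which forces $\lambda = 1$; a standard averaging/Cesàro argument on $\frac{1}{N}\sum_{k=0}^{N-1} e^{-ik\theta} T^{kn}$ produces such a fixed point and rules out any other peripheral $\lambda$, as well as any Jordan block (a non-trivial Jordan block at $\lambda=1$ would make $\|T^{kn}\|$ grow, contradicting trace-preservation norm bounds). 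Non-degeneracy of the eigenvalue $1$ then follows: two linearly independent fixed points $\rho_1, \rho_2 \geq 0$ could be combined into a fixed point that is positive but not full rank (choose the combination killing a lowest eigenvalue), contradicting that $T^n$ of it — which equals it — is full rank. For the converse (spectral condition $\Rightarrow$ primitive), I would use the ergodic decomposition: the unique full-rank fixed point $\rho_\infty$ has support equal to all of $\mathbb{C}^D$, and since $1$ is the only peripheral eigenvalue and is non-degenerate, the iterates $T^k(\rho)$ converge to $\tr[\rho]\,\rho_\infty$ for every input $\rho$; convergence to a full-rank operator means $T^k(\rho)$ is full rank for $k$ large, uniformly over the (compact) set of density matrices by a continuity/compactness argument, which is exactly primitivity with some finite index.

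The main obstacle I anticipate is the careful positivity bookkeeping in the forward direction — specifically, showing that a peripheral eigenvector must be a scalar multiple of a \emph{strictly positive} (full-rank) operator. In the classical (stochastic matrix) case this is Perron–Frobenius and is clean; in the non-commutative setting one must handle the fact that $X = X_+ - X_-$ with $X_\pm \geq 0$ having orthogonal supports, and that $T$ does not simply act coordinatewise. The right tool here is a Schwarz-type inequality for completely positive maps together with the observation that $\|T(\rho)\|_1 = \|\rho\|_1$ for $\rho \geq 0$ but $\|T(X)\|_1 < \|X\|_1$ strictly when $X$ has a genuine negative part and $T$ (or a power) is strictly positivity-improving; iterating gives the decay unless $X_- = 0$. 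I would also need the elementary lemma that if $T^m(\rho)$ is full rank for one particular full-rank-supported $\rho$ and all $m \geq n$, a convexity argument extends this to all nonzero $\rho \geq 0$, since any such $\rho$ dominates $\epsilon \rho_\infty$ for small $\epsilon$ after one application of $T$ — this is the cleanest route and sidesteps having to re-derive Wielandt-type index bounds, which are not needed for the qualitative statement.
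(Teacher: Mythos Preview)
The paper does not prove this proposition; it is stated with a citation to \cite{Sanz10}, so there is no in-paper argument to compare your proposal against.

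Your sketch follows the standard Perron--Frobenius-for-CP-maps route and is essentially sound. The converse direction (spectral condition $\Rightarrow$ primitive) is clean as you describe it: spectral convergence $T^k(\rho)\to\tr[\rho]\,\rho_\infty$ with $\rho_\infty$ full rank, plus compactness of the state space, gives a uniform primitivity index. In the forward direction, your strict trace-norm contraction argument is correct for \emph{Hermitian} peripheral eigenvectors: if $X=X_+-X_-$ with both parts nonzero and orthogonal supports, then $T^n(X_+)$ and $T^n(X_-)$ are full rank, their supports overlap, and $\|T^n(X)\|_1<\|X_+\|_1+\|X_-\|_1=\|X\|_1$, contradicting $\|T^n(X)\|_1=\|X\|_1$. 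The one point that needs an extra line is the non-Hermitian case: if $T(X)=\lambda X$ with $\lambda\notin\mathbb{R}$, the decomposition $X=X_+-X_-$ is not available, and $\mathrm{Re}\,X$, $\mathrm{Im}\,X$ are not eigenvectors of $T$. The usual fix (and what the Schwarz inequality you allude to ultimately yields) is that the peripheral spectrum of any CPTP map on $\mathcal{M}_D$ consists of roots of unity; hence some power $T^m$ has $X$ as a genuine fixed point, $T^m$ is still primitive, your Hermitian argument applied to $T^m$ forces $X$ to be a scalar multiple of a positive full-rank operator, and then $T(X)=\lambda X$ with $X\ge 0$ and $T$ positivity-preserving forces $\lambda=1$. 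With that refinement the plan goes through.
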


A natural question arises:

\begin{question}
Which is the optimal upper bound for the primitivity index of a primitive quantum channel $T$ acting on $\mathcal{M}_{D\times D}$?
\end{question}

In \cite{Sanz10} it is shown that $p(T)\le (D^2-d+1)D^2$. This result has been recently improved \cite{Rahaman} to $p(T)\le 2(D-1)^2$. The order $O(D^2)$ is optimal just by invoking the optimality of the classical Wielandt inequality. However, the exact optimal bound is still unkown.

As shown in \cite{Sanz10}, this type of bounds gives universal thresholds for the behaviour in time of the {\it zero-error classical capacity of a quantum channel} ,denoted by $C_0(T)$, defined as the optimal rate (measured in number of bits per use of the channel) at which a quantum channel can transmit classical information without errors \cite{Leung12}. Concretely, the following dichotomy result can be shown \cite{Sanz10}: \\
\begin{proposition}
Let $T$ be a quantum channel with a full-rank fixed point then, either $C_0(T^n)\ge1$ for all $n\in \N$ or $C_0(T^{p(T)})=0$,
\end{proposition}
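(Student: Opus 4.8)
The plan is to split the argument according to whether $T$ is primitive. Since $T$ has a full-rank fixed point, these two cases are exhaustive and mutually exclusive (a primitive channel automatically possesses a full-rank fixed state, its unique peripheral eigenvector), and they will match exactly the two alternatives of the dichotomy: when $T$ is not primitive we shall establish the first alternative, and when it is primitive we shall establish the second. The common first step is to recast the capacity condition in support-theoretic terms: by the characterization underlying \cite{Sanz10}, $C_0(T^n)\geq 1$ holds if and only if there exist two states $\rho_1,\rho_2$ whose images $T^n(\rho_1)$ and $T^n(\rho_2)$ have orthogonal (disjoint) supports. Together with data processing --- distinguishability cannot increase under a further application of $T$ --- this shows that $n\mapsto C_0(T^n)$ is non-increasing, so the alternative ``$C_0(T^n)\geq 1$ for all $n$'' is equivalent to ``$C_0(T^n)\geq 1$ for arbitrarily large $n$''. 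This reformulation is what makes the non-primitive case tractable.

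If $T$ is primitive, the conclusion is immediate from the definition of the primitivity index: $T^{p(T)}(\rho)$ is full rank for every nonzero $\rho$, so no two outputs $T^{p(T)}(\rho_1)$, $T^{p(T)}(\rho_2)$ can have disjoint supports in a space of positive dimension, and hence $C_0(T^{p(T)})=0$. This is precisely the mechanism turning a Wielandt-type bound on $p(T)$ into a \emph{universal} threshold beyond which the zero-error classical capacity of the iterated channel vanishes.

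If $T$ is not primitive, I would invoke the structure theory of the peripheral spectrum of a channel with a faithful fixed point: the peripheral eigenvalues are semisimple, their eigenvectors span a subspace $\mathcal{X}$ containing $\sigma$, and, after a fixed unitary conjugation, $\mathcal{X}\cong\bigoplus_k \mathcal{M}_{m_k}\otimes\omega_k$ for certain fixed density matrices $\omega_k$, with $T$ acting on $\mathcal{X}$ by a permutation of the summands together with unitary conjugations on the $\mathcal{M}_{m_k}$ factors. In particular, $T$ maps any pair of states lying in $\mathcal{X}$ with orthogonal supports to another such pair. By the Perron--Frobenius-type characterization recalled above, non-primitivity together with the faithful fixed point forces $\dim\mathcal{X}\geq 2$, that is, either there are at least two summands or some $m_k\geq 2$; in either case $\mathcal{X}$ contains two states $\rho_1,\rho_2$ with orthogonal supports. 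Iterating, $T^n(\rho_1)\perp T^n(\rho_2)$ for every $n$, whence $C_0(T^n)\geq 1$ for all $n$, which is the first alternative.

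The main technical input --- and the place where care is really needed --- is this structure theorem for the peripheral part of a channel with a faithful invariant state (semisimplicity of the peripheral eigenvalues, the block-plus-rotation form of $T|_{\mathcal{X}}$, and the fact that $\dim\mathcal{X}=1$ characterizes primitivity in the presence of a faithful fixed point); I would quote it from the literature rather than reprove it. The one elementary point still to be checked is the equivalence ``$C_0(T^n)\geq 1 \Longleftrightarrow$ there is an orthogonal-support-preserving pair of inputs'', namely that one-shot distinguishability already decides whether at least one bit gets through and that it suffices to take inputs with orthogonal supports. Everything else --- data-processing monotonicity, and the verification that in the structured form above the distinguished states genuinely have orthogonal supports and that $T$ carries such pairs to such pairs --- is routine.
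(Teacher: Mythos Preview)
The paper does not actually prove this proposition: it is stated as a consequence of the Wielandt-type bound and attributed to \cite{Sanz10} without argument. So there is no ``paper's own proof'' to compare against; your proposal is essentially reconstructing the argument from the cited reference, and the split into primitive versus non-primitive (with full-rank fixed point) is exactly the right dichotomy used there.

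Your sketch is sound. Two small remarks. First, in the primitive case you conclude $C_0(T^{p(T)})=0$ from the fact that all outputs of $T^{p(T)}$ are full rank; strictly speaking this only kills one-shot zero-error transmission, and to get $C_0=0$ (the regularized quantity) you need that this property tensorizes. It does, because ``$T^{p(T)}(\rho)$ full rank for all $\rho$'' is equivalent to the Kraus span of $T^{p(T)}$ being all of $\mathcal{M}_D$, and $\mathcal{M}_D^{\otimes m}=\mathcal{M}_{D^m}$, so $(T^{p(T)})^{\otimes m}$ also has full-rank outputs. You flag the one-shot-versus-asymptotic issue at the end but don't quite supply this step. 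Second, the full biconditional ``$C_0(T^n)\geq 1$ iff two inputs with orthogonal-support outputs exist'' is stronger than what the argument needs: the non-primitive case uses only the easy direction (orthogonal outputs $\Rightarrow$ rate $\geq 1$ via products), and the primitive case is handled directly by the tensorization above, so you can drop the harder direction rather than leave it as something ``still to be checked.''
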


\subsubsection{Index of injectivity of a MPS}

Let us now connect the previous discussion with the injectivity index of an MPS, as defined in Definition \ref{def:injindex}.

We recall that a translationally invariant MPS is given by a rank-3 tensor $A$, which is nothing but a set of matrices $A_i\in \mathcal{M}_{D\times D}$, $i=1,\ldots, d$, and hence it naturally defines a completely positive linear map $\mathcal{E}_{{A}}(X)= \sum_i A_i XA_i^\dagger$. Such map is usually called the transfer operator associated to the MPS. Using the transformation $A_i\mapsto YA_iY^{-1}$ that leaves invariant the MPS $\ket{\psi_A}$, one can assume w.l.o.g that the transfer operator $\mathcal{E}_{A}$ is trace-preserving and hence a quantum channel.

It is easy to see \cite{Sanz10} that the MPS is injective if and only if its associated transfer operator is primitive. In the normal case, the injectivity index of an MPS is an upper bound to the index of primitivity of its associated transfer operator, i.e. $i(A)\ge p(\mathcal{E}_{A})$. This finally brings us to the following key question: 

\begin{question} \label{question:injectivity-length-MPS}
Which is the optimal upper bound for the injectivity index of a normal MPS in terms of its bond dimension?
\end{question}

In \cite{Sanz10} it is shown that if $A$ is normal then $i({A})\le (D^2-d+1)D^2$. This result has been recently improved \cite{Michalek} to $p(T)\le 2D^2(6+\log_2(D))$. Up to a   logarithmic factor, the order $O(D^2\log(D))$ is optimal just by invoking the optimality of the classical Wielandt inequality. As before, the exact optimal bound is still unkown.\\

\subsubsection{Index of injectivity of a PEPS}

Motivated by the connection between the injectivity index and the interaction length of the parent Hamiltonian, one may ask the analogue of Question \ref{question:injectivity-length-MPS} in 2D.

\begin{question} \label{question:injectivity-PEPS1}
Which is the optimal upper bound for the injectivity index of a normal  PEPS in terms of its bond dimension?
\end{question}

As opposed to the 1D case, the only known result, proven recently in \cite{Michalek-2D} is the existence of a function of the bond dimension $f(D)$ that bounds $i(A)$ for every PEPS with bond dimension $D$. Unfortunately, in principle such function could be uncomputable.

Indeed, checking normality becomes undecidable if one generalizes the notion of normal PEPS as those tensors ${A}$ with the following properties: 
\begin{enumerate}
\item There exists an orthogonal projector $P:\mathbb{C}^D\to \mathbb{C}^D$ so that the tensor ${B}=(\id_d \otimes P^{\otimes 4}){A}$ is normal and
\item the PEPS associated to ${A}$ and ${B}$ coincide for every system size, i.e. $\ket{\Psi_{{A}}}= \ket{\Psi_{B}}$.
\end{enumerate}
This can be proven easily with the techniques in \cite{Scarpa18}. Therefore a weaker version of  Question \ref{question:injectivity-PEPS1} should be considered:

\begin{question} \label{question:injectivity-PEPS2}
Given an explicit (computable and if possible polynomial) function $f(D)$, which is an upper bound for the injectivity index of a normal PEPS?
\end{question}

\subsubsection{Spectral gap in PEPS} \label{sec:spectral-gap}

Let us finish this subsection tackling the problem of the spectral gap of the parent Hamiltonian. It is proven in \cite{Fannes92} (see \cite{Kastoryano19} for an alternative proof) that the parent Hamiltonian of a normal MPS is always gapped.  Unfortunately, this is not the case for 2D in PEPS, as it is shown in \cite{Verstraete06} by constructing an explicit counterexample.

In fact, for general PEPS the existence of gap in the parent Hamiltonian is undecidable, as shown in \cite{Scarpa18}, which highlights the complexity of the problem.  Moreover, the spectral gap of even the simplest non-trivial PEPS --the AKLT model \cite{AKLT88} as the paradigmatic example-- is still open.

However, some light has been shed on checking whether a Hamiltonian is gapped or not translating the question into a problem on the boundary. For instance, in \cite{Cirac11}, motivated by the holographic correspondence uncovered by Li and Haldane in \cite{LiHaldane}, an exact bulk-boundary correspondence was found, constructing for every PEPS a (family of) 1D mixed states, named as boundary states. In that work it is conjectured (see also \cite{Kastoryano19}), based on numerical evidence that

\begin{conj} \label{gap2Dboundary1dlocal}The gap of the parent Hamiltonian of a PEPS corresponds exactly to the possibility of writing the boundary states as Gibbs states of 1D short range Hamiltonians 
$$\rho=e^{-\beta H}, \quad \text{with } H=\sum_{i,j} h_{i,j},\quad \|h_{i,j}\|\le Je^{-\alpha |i-j|},$$
where $h_{i,j}$ acts non-trivially only on spins $i$ and $j$. 
\end{conj}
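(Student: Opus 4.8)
Since the final statement is a \emph{conjecture} rather than a settled result, what follows is a proposed line of attack, together with the structural caveat that it can only be hoped to hold for \emph{normal} (or at least injective-transfer-operator) PEPS: the undecidability of the spectral gap for general PEPS established in Ref.~\cite{Scarpa18} rules out any decidable characterization, so the most one can aim for is a (possibly non-constructive) equivalence, and the natural arena is the class where the parent Hamiltonian is finite-range. Throughout, the central object is the boundary density matrix $\rho_\partial$ attached to a cut: for a half-plane region $\mathcal{R}$ one sets $\rho_\partial$ to be the operator on the virtual indices crossing $\partial\mathcal{R}$ coming from $\Gamma_\mathcal{R}^\dagger\Gamma_\mathcal{R}$ --- equivalently the fixed point of the double-layer transfer operator built from $A$ and $\bar A$ --- so that the half-plane reduced state is $\rho_\mathcal{R}=\Gamma_\mathcal{R}\,\rho_\partial\,\Gamma_\mathcal{R}^\dagger$ up to normalization. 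The plan is to route both implications through the transfer operator: show that (i) a gap of the bulk parent Hamiltonian is equivalent to a spectral gap / rapid-mixing property of the double-layer transfer operator, and (ii) the latter is equivalent to $-\tfrac1\beta\log\rho_\partial$ being a $1$D Hamiltonian with exponentially decaying interactions $\|h_{i,j}\|\le Je^{-\alpha|i-j|}$.

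For the direction \textsl{bulk gapped} $\Rightarrow$ \textsl{boundary short-range Gibbs}, I would first invoke exponential clustering in gapped systems (Hastings--Koma, Nachtergaele--Sims) to obtain exponential decay of connected correlations across the cut, and then transport this to $\rho_\partial$ itself, using that observables on $\partial\mathcal{R}$ pull back through $\Gamma_\mathcal{R}$ to bulk observables localized near the cut. The delicate half is the reconstruction step, since a $1$D state with exponential clustering need not a priori be the Gibbs state of a finite-range Hamiltonian; here I would exploit that $\rho_\partial$ is a \emph{matrix-product} fixed point of a gapped transfer operator and appeal to the locality-of-temperature results of Kliesch et al.\ and the effective-Hamiltonian estimates of P\'erez-Garc\'ia--P\'erez-Hern\'andez, together with Araki's theorem on $1$D Gibbs states, to produce the required exponentially decaying $\sum_{i,j}h_{i,j}$. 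The key intermediate lemma --- that a gap of the $2$D parent Hamiltonian implies a gap of the double-layer transfer operator --- I would attempt via the canonical-purification / quasi-adiabatic picture, bounding the mixing time of the transfer operator by the inverse bulk gap through a Lieb--Robinson / detectability-lemma argument applied to strips of growing width.

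For the converse \textsl{boundary short-range Gibbs} $\Rightarrow$ \textsl{bulk gapped}, the plan is to combine a finite-size (Knabe / martingale) criterion with decay of \emph{conditional} mutual information. The local terms $\mathfrak{h}_i$ of the parent Hamiltonian are orthogonal projectors onto $\mathrm{Im}(\Gamma_\mathcal{R})$, so after applying the inverses of the PEPS tensors away from the region of interest the overlaps $\|\mathfrak{h}_i\mathfrak{h}_j\|$ and the energy of a trial low-lying state become quantities expressible through the boundary states of adjacent patches. A short-range-Gibbs $\rho_\partial$ satisfies exponential decay of conditional mutual information (Kato--Brand\~ao, Kuwahara--Kato--Brand\~ao), which controls exactly these overlaps; feeding the resulting bounds into the Nachtergaele martingale method (or the Lemm--Mozgunov / Anshu refinements) should yield a system-size-independent lower bound on the bulk gap. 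One must check that the $2$D martingale method demands the overlaps to shrink fast enough with patch size --- which is precisely where the decay rate $\alpha$ of the boundary Hamiltonian enters quantitatively.

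The main obstacle, in my view, is that the dictionary ``bulk gap $\leftrightarrow$ transfer-operator gap'' is at present only heuristic and numerical, and is genuinely subtle: there are PEPS whose parent Hamiltonian is gapless while the transfer operator is gapped, and conversely a gapless transfer operator can accompany a gapped bulk Hamiltonian after blocking, so turning this equivalence into a clean theorem will likely require both a careful restriction of the admissible class (normality, injectivity of the transfer operator, or absence of ``long-range order on the boundary'') and a correspondingly relaxed notion of ``boundary Gibbs state'' --- perhaps a quasi-local rather than strictly finite-range $H$, or an MPO form for $\rho_\partial$. Secondarily, the undecidability results (Ref.~\cite{Scarpa18}, and in the Hamiltonian setting the general $2$D undecidability of the gap) mean that any theorem one proves will at best be a non-constructive equivalence, with no effective bounds relating $\alpha, J$ to the bulk gap; and finally the $1$D reconstruction step --- promoting exponential clustering of $\rho_\partial$ to a bona fide short-range Gibbs representation --- is itself not fully settled even in one dimension, so a complete proof would have to contribute to that question as well.
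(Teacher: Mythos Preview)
The paper does not prove this statement; it is explicitly presented as a conjecture and immediately followed by the open question ``Is Conjecture~\ref{gap2Dboundary1dlocal} true?''. The only progress the paper records is a reference to~\cite{Kastoryano19}, which establishes one implication (short-range boundary Gibbs $\Rightarrow$ bulk gap) under the stronger hypothesis of \emph{faster than exponential} decay of $\|h_{i,j}\|$. So there is no proof in the paper to compare your proposal against.

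That said, your outline is a reasonable roadmap and largely overlaps with what is understood in the field. Your ``$\Leftarrow$'' direction via a martingale/Knabe criterion fed by approximate factorization of the boundary states is essentially the strategy of~\cite{Kastoryano19}, and you correctly flag that the decay rate $\alpha$ must be strong enough for the $2$D martingale bounds to close --- which is precisely why the existing result needs superexponential decay rather than the exponential decay stated in the conjecture. For the ``$\Rightarrow$'' direction your plan is more speculative: the step ``bulk gap $\Rightarrow$ gapped double-layer transfer operator'' is, as you yourself note, only heuristic, and this is the genuine obstruction --- no rigorous version of that dictionary exists, and counterexamples in both directions are known for non-normal PEPS. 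Your caveats about restricting to normal/injective PEPS and about undecidability ruling out effective bounds are exactly the right qualifications; they match the spirit in which the paper poses the conjecture.
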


The boundary states are simply the semidefinite operators defined on $(\C^D)^{\otimes |\partial \mathcal{R}|}$ obtained in the boundary of a region when tracing out the bulk as shown in the figure

\begin{equation*}
\rho_\mathcal{R}=
 \begin{tikzpicture}
             \pic at (-0.3,0) {tensor};
             \pic at (-0.3,-0.2) {tensord};
            \pic at (0,0) {tensor};
             \pic at (0,-0.2) {tensord};
                 \draw (-0.3,0)--(0.15,0);
                 \draw (-0.3,-0.2)--(0.15,-0.2);
                 \node at (0.35,0) {${\cdots}$};
                 \node at (0.35,-0.2) {${\cdots}$};
    \pic at (0.7,0) {tensor};
    \pic at (0.7,-0.2) {tensord};
                     \draw (0.55,0)--(1.15,0);
                     \draw (0.55,-0.2)--(1.15,-0.2);
                         \node at (0.85,0.1) {$\alpha$};
                         \node at (0.85,-0.35) {$\beta$};
    \pic at (1,0) {tensor};
    \pic at (1,-0.2) {tensord};
                \draw[preaction={draw, line width=1pt, white}][line width=0.5pt] (1,0.19) to [out=45, in=-45] (1,-0.39);
    \node at (1.35,0) {${\cdots}$};
                 \node at (1.35,-0.2) {${\cdots}$};
                  \pic at (1.7,0) {tensor};
             \pic at (1.7,-0.2) {tensord};
                               \pic at (2,0) {tensor};
             \pic at (2,-0.2) {tensord};
             \draw (2,0)--(1.55,0);
                 \draw (2,-0.2)--(1.55,-0.2);
            \draw[preaction={draw, line width=1pt, white}][line width=0.5pt] (1.7,0.19) to [out=45, in=-45] (1.7,-0.39);
                        \draw[preaction={draw, line width=1pt, white}][line width=0.5pt] (2,0.19) to [out=45, in=-45] (2,-0.39);
    \end{tikzpicture}
    = \sum_{\alpha,\beta} (\ket{\Psi^{[\mathcal{R}]}_A})_\alpha  (\sigma_{\mathcal{R}^c})_{\alpha,\beta}(\bra{\Psi^{[\mathcal{R}]}_A})_\beta
\end{equation*}
As in any holographic correspondence, one is interested in creating a dictionary that maps bulk properties to boundary properties. The reason that such dictionary is expected in PEPS comes from the way in which expectation values are computed (see Eq.(\ref{ExpValue}) in Section \ref{sec:TNS} ): the boundary states are exactly the operators that mediate at the virtual level the correlations present at the physical level. Then,
 
\begin{question}
Is Conjecture \ref{gap2Dboundary1dlocal} true?
\end{question}

An important step in this direction was given in \cite{Kastoryano19}, proving one of the implications for the case of a faster than exponential decay in $\|h_{i,j}\|$.

\subsection{Can any GS of a local gapped Hamiltonian be represented as a PEPS?}

One of the main features of PEPS, and the one that makes them a relevant ansatz in the classical simulation of quantum systems, is the conjectured fact that PEPS approximate well ground states of locally interacting gapped Hamiltonians. To formalize this, we consider a gapped, translationally invariant Hamiltonian on an $L\times L$ torus given by a finite range interaction $\mathfrak{h}$, $H=\sum_\tau \tau(\mathfrak{h})$. We will assume a unique ground state denoted by  $\ket{\Psi_{\rm GS}}$.

There are two types of relevant approximations, global and local, depending on whether one is interested in approximating an extensive or an intensive quantity in the ground state. 

In the global approximation problem, the aim is to find a function $f(L)$ such that one can guarantee the existence of a (non-necessarily translational invariant) PEPS $\ket{\Psi_{\rm PEPS}}$ with bond dimension $D\le f(L)$ so that in the Hilbert norm
$$\|\ket{\Psi_{\rm PEPS}}-\ket{\Psi_{\rm GS}}\|_2\le \frac{1}{{\rm poly}(L)} \; .$$

For the local approximation problem, the goal is to find a function, if it exists, $g(\epsilon)$, so that one can guarantee the existence of a translational invariant PEPS $\ket{\Psi_A}$ given by a tensor $A$, with bond dimension $D\le g(\epsilon)$, so that in trace-class norm, 
$$\lim_{L\rightarrow \infty} \|\rho^{[L]}_{\mathcal{R}, GS}-\rho^{[L]}_{\mathcal{R}, A}\|_1\le \epsilon\; ,$$
where $\rho^{[L]}_{\mathcal{R}, GS}$ is the reduced density matrix of the region $\mathcal{R}$ associated to $\ket{\Psi_{\rm GS}}$ in the torus of size $L\times L$ ($\rho^{[L]}_{\mathcal{R}, A}$ is defined analogously). Note that being both $\ket{\Psi_A}$ and $\ket{\Psi_{\rm GS}}$ translationally invariant, the exact position of region $\mathcal{R}$ in the torus is irrelevant.  This type of approximation guarantees that in the thermodynamic limit, compactly supported observables can be well approximated by translational invariant PEPS (with finite bond dimension).

Both the global and local approximation problems have a positive satisfactory solution in 1D, with the current best bounds being  
\begin{align}\label{eq:bounds-approx}
f(L))&= e^{O(\log^{3/4}L)} \\
g(\epsilon)&= e^{O(\log^{3/4}\frac{1}{\epsilon})} \nonumber \; .
\end{align}
 proven in \cite{Arad13} and \cite{Huang15} respectively .

Both results come from refined versions of the so-called detectability lemma \cite{Aharonov,Anshu}. For simplicity, we will state it in 1D for nearest-neighbor interactions but a similar result holds in any dimension for finite range interactions. 

\begin{lemma}[Detectability Lemma in 1D]\label{lemma:DL}
Let $P$ be an orthogonal  projector on $\C^d\otimes \C^d$ and $Q=\id-P$ its orthogonal complement. Denote by $P_i$ the projector $P$ acting on sites $i,i+1$ of a chain of $L$ spins with periodic boundary conditions. Let $H=\sum_{i=1}^L P_i$ be a frustration free Hamiltonian and let 
$DL(H)$ be the operator
$$DL(H)=\left(\bigotimes_{i\text{ even}} Q_i\right)\left(\bigotimes_{i\text{ odd}} Q_i\right)$$ 

Then $$\left\| \ket{\Psi_{\rm GS}}\bra{\Psi_{\rm GS}} - DL(H)^\ell\right\|_\infty\le \left(\frac{1}{\sqrt{\frac{\Delta}{4}+1}} \right)^\ell= e^{-\alpha\ell},$$
where $\Delta$ is the spectral gap of $H$ (and $\alpha=\frac{1}{2}\log(\frac{\Delta}{4}+1)$).
\end{lemma}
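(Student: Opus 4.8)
The plan is to prove the Detectability Lemma in 1D by exploiting the tensor-product structure of the operator $DL(H)$ together with a careful accounting of how the projectors $Q_i$ overlap. First I would set up the basic objects: since $H=\sum_i P_i$ is frustration free with ground state $\ket{\Psi_{\rm GS}}$, we have $P_i\ket{\Psi_{\rm GS}}=0$, hence $Q_i\ket{\Psi_{\rm GS}}=\ket{\Psi_{\rm GS}}$ for all $i$, and therefore $DL(H)\ket{\Psi_{\rm GS}}=\ket{\Psi_{\rm GS}}$. So $\ket{\Psi_{\rm GS}}$ is a fixed point of $DL(H)$, and it suffices to bound the operator norm of $DL(H)$ restricted to the orthogonal complement of $\ket{\Psi_{\rm GS}}$, i.e. to show $\|DL(H)(\id-\ket{\Psi_{\rm GS}}\bra{\Psi_{\rm GS}})\|_\infty\le e^{-\alpha}$; the bound for $DL(H)^\ell$ then follows by submultiplicativity of the operator norm, since $DL(H)$ commutes with the rank-one projector onto $\ket{\Psi_{\rm GS}}$.

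The heart of the argument is a quantitative statement relating the action of $DL(H)$ on a state $\ket{\phi}\perp\ket{\Psi_{\rm GS}}$ to the energy $\bra{\phi}H\ket{\phi}$. I would proceed by: (i) observing that the even layer $\bigotimes_{i\text{ even}} Q_i$ and the odd layer $\bigotimes_{i\text{ odd}} Q_i$ are each products of mutually commuting projectors, so each layer is itself an orthogonal projector; (ii) using the fact that for any state $\ket{\phi}$, if $\bra{\phi}P_i\ket{\phi}$ is small for every $i$ then applying a full layer of $Q_i$'s barely changes $\ket{\phi}$, while if some $\bra{\phi}P_i\ket{\phi}$ is of order one then the corresponding $Q_i$ kills a constant fraction of that component. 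The standard way to make this precise is via the spectral gap: decompose $\ket{\phi}=c_0\ket{\Psi_{\rm GS}}+\sqrt{1-|c_0|^2}\ket{\phi^\perp}$, note $\bra{\phi}H\ket{\phi}\ge \Delta(1-|c_0|^2)$, and then bound $\|DL(H)\ket{\phi}\|^2$ from above by expanding it in terms of the $\bra{\phi}P_i\ket{\phi}$. The key combinatorial input is that consecutive $P_i$ and $P_{i+1}$ overlap on one site but $P_i$ and $P_{i+2}$ in the same parity layer are disjoint, which lets one control cross-terms. One obtains an inequality of the shape $\|DL(H)\ket{\phi^\perp}\|^2\le \frac{1}{1+\Delta/4}$, and taking square roots gives the claimed one-step bound with $\alpha=\tfrac12\log(1+\Delta/4)$.

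The main obstacle I anticipate is step (ii): getting the constant $\Delta/4$ rather than a worse constant requires the sharp "XY-decomposition" trick from Aharonov--Arad--Landau--Vazirani, in which one writes $DL(H)=BA$ with $A$ the odd layer and $B$ the even layer, uses $A^\dagger A=A$ and $B^\dagger B=B$ (they are projectors), and then estimates $\bra{\phi}A^\dagger B^\dagger B A\ket{\phi}=\|BA\ket{\phi}\|^2$ by inserting $\id=\sum_i$ (a partition of unity adapted to which $P_i$ first detects the excitation). Concretely, one partitions $\id - \ket{\Psi_{\rm GS}}\bra{\Psi_{\rm GS}}$ into pieces $R_i$ supported "to the left of site $i$", shows $BAR_i$ has small norm because $A$ contains $Q_i$ which annihilates the low-energy part on the $i$-th bond up to the gap, and sums the $O(\sqrt{\text{energy}})$ contributions using orthogonality of the $R_i$. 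I would also need to check the boundary/periodicity bookkeeping (for odd $L$ the two-layer factorization needs a small modification, e.g. an extra layer or grouping), but that is routine. Once the single-step norm bound $\|DL(H)|_{\ket{\Psi_{\rm GS}}^\perp}\|_\infty\le e^{-\alpha}$ is in hand, the statement $\|\ket{\Psi_{\rm GS}}\bra{\Psi_{\rm GS}}-DL(H)^\ell\|_\infty\le e^{-\alpha\ell}$ is immediate, and $\alpha=\tfrac12\log(\tfrac\Delta4+1)$ matches the claim.
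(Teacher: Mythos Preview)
The paper does not actually prove this lemma: it is stated as a known result with citations to \cite{Aharonov} and \cite{Anshu}, and the surrounding text only illustrates how to \emph{use} the lemma to approximate ground-state projectors by MPOs. So there is no in-paper proof to compare your proposal against.

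That said, your overall strategy is sound and matches the literature: the reduction to bounding $\|DL(H)\|$ on $\ket{\Psi_{\rm GS}}^\perp$ via the fixed-point property, the observation that each parity layer is a product of commuting projectors (hence itself a projector), and the iteration by submultiplicativity are all correct. One point of caution: the sharp constant $1/\sqrt{\Delta/4+1}$ is due to Anshu--Arad--Vidick rather than the original Aharonov--Arad--Landau--Vazirani paper, and their argument is somewhat different from the partition-of-unity picture you sketch. Rather than decomposing $\id-\ket{\Psi_{\rm GS}}\bra{\Psi_{\rm GS}}$ into pieces $R_i$, they take a unit vector $\ket{\phi}\perp\ket{\Psi_{\rm GS}}$ achieving the maximum $\eta=\|DL(H)\ket{\phi}\|$, set $\ket{\psi}=DL(H)\ket{\phi}/\eta$, and directly bound $\bra{\psi}H\ket{\psi}$ term by term using that each $Q_i$ in a layer commutes with the others and that at most two $P_j$'s from the opposite layer fail to commute with a given $Q_i$; this yields $\bra{\psi}H\ket{\psi}\le 4(1-\eta^2)/\eta^2$, whence $\eta^2\le 1/(1+\Delta/4)$. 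Your sketch would likely recover a weaker constant unless you adopt this sharper bookkeeping.
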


To get an intuition of its application, let us briefly show how to use Lemma \ref{lemma:DL} to show approximation in operator norm of the ground state projector of $H$ by a MPO. Each $Q_i$ in $DL(H)$ is a two-body operator so both operators can be represented graphically as:

\begin{equation*}
Q_i= 
\begin{tikzpicture}
\pic at (0,0) {O2};
\end{tikzpicture}
\Rightarrow DL(H)=
\begin{tikzpicture}
 \node at (-0.6,0.15) {${\cdots}$};
  \pic at (0,0) {O2};
  \pic at (0.6,0.3) {O2};  
   \pic at (1.2,0) {O2};
     \pic at (1.8,0.3) {O2};  
        \pic at (2.4,0) {O2};
     \pic at (3,0.3) {O2}; 
      \node at (3.6,0.15) {${\cdots}$};
      \end{tikzpicture}.
 \end{equation*}
Then, by doing a SVD decomposition in each $Q_i$; 
 \begin{equation*}
\begin{tikzpicture}
\pic at (0,0) {O2};
\end{tikzpicture}
=
\begin{tikzpicture}
\draw (-0.3,0)--(0.3,0);     
  \draw (-0.3,0.2)--(-0.3,-0.2);
    \draw (0.3,0.2)--(0.3,-0.2);
     \filldraw[draw=black, fill=red] (0,0) circle (0.06);
      \filldraw[draw=black, fill=blue] (-0.35,-0.05) rectangle (-0.25,0.05);
        \filldraw[draw=black, fill=blue] (0.25,-0.05) rectangle (0.35,0.05);
             \node at (-0.47,-0.15)  {$U$};
                \node[anchor=north] at  (0,0)  {$\Sigma$};
                \node at (0.5,-0.13)  {$V^\dagger$};
 \end{tikzpicture}
 \equiv
 \begin{tikzpicture}
\pic at (0,0) {O2SVD};
 \end{tikzpicture},
 \end{equation*}
 it is easy to see that $DL(H)^{\ell}$ is an MPO with bond dimension $D\le d^{2\ell}$:
 \begin{equation*}
 \begin{tikzpicture}
  \node at (-0.3,0.075) {$\big($};
  \pic at (0,0) {O2p};
  \pic at (0.3,0.15) {O2p};  
   \pic at (0.6,0) {O2p};
     \pic at (0.9,0.15) {O2p};  
      \node at (1.2,0.075) {$\big)$};
            \node at (0.5,0.3) {${\cdot}$};
            \node at (0.5,0.35) {${\cdot}$};
              \node at (0.5,0.4) {${\cdot}$};
  \node at (-0.3,0.625) {$\big($};
  \pic at (0,0.55) {O2p};
  \pic at (0.3,0.7) {O2p};  
   \pic at (0.6,0.55) {O2p};
     \pic at (0.9,0.7) {O2p};  
      \node at (1.2,0.625) {$\big)$};
      \end{tikzpicture}
      \equiv
  \begin{tikzpicture}
  \node at (-0.45,0.085) {$\big($};
  \pic at (0,0) {O2SVD};
          \pic at (0.8,0) {O2SVD}; 
      \pic at (0.4,0.17) {O2SVD};  
     \pic at (1.2,0.17) {O2SVD}; 
      \node at (1.45,0.085) {$\big)$};
            \node at (0.6,0.3) {${\cdot}$};
            \node at (0.6,0.35) {${\cdot}$};
              \node at (0.6,0.4) {${\cdot}$};
       \node at (-0.45,0.635) {$\big($};       
  \pic at (0,0.55) {O2SVD};
          \pic at (0.8,0.55) {O2SVD}; 
      \pic at (0.4,0.72) {O2SVD};  
     \pic at (1.2,0.72) {O2SVD}; 
      \node at (1.45,0.635) {$\big)$};
      \end{tikzpicture}.
 \end{equation*}

Now, fixing $\epsilon$ and solving $\epsilon= e^{-\alpha\ell}$ (see Lemma \ref{lemma:DL}), we get $\ell= \frac{1}{\alpha}\log\frac{1}{\epsilon}$ and, by Lemma \ref{lemma:DL}, the operator $DL(H)^\ell$ approximates within $\epsilon$ the ground state projector on operator norm and has bond dimension 	
$$D\le d^{2\ell}=d^{\frac{2}{\alpha}\log\frac{1}{\epsilon}}= \left(\frac{1}{\epsilon}\right)^{\frac{\log d^2}{\alpha}}={\rm poly}\left(\frac{1}{\epsilon}\right).$$

In order to have the required approximation in trace class norm, and to get it beyond frustration free systems, more sophisticated versions of Lemma \ref{lemma:DL} are required \cite{Arad13}, leading to the bounds of Eq.(\ref{eq:bounds-approx}).

However in 2D the analogue problems are quite open. First of all, there is no known solution of the local approximation problem. Second, the best known function associated to the global approximation problem is superpolynomial $f(L)=e^{O(\log^2 L)}$ and, moreover, it can only be guaranteed to work under extra spectral assumptions on the Hamiltonian. Specifically, under the following assumption about the absence of concentration of eigenvalues close to the ground state energy: for each $M>0$, the number of eigenstates with energy lower than $E_0+M$ grows at most polynomially with the system size $L$. 

Three questions arise here which can be seen as variants of the Area Law Conjecture.

\begin{question}
Does there exist a global approximation result in 2D only under the spectral gap assumption?
\end{question}

\begin{question}
Can the function $f(L)$ be taken polynomial in $L$?
\end{question}

\begin{question}
Does there exist a local approximation result in 2D? Is this possible assuming only the spectral gap assumption? Can $g(\epsilon)$ be taken polynomial in $\frac{1}{\epsilon}$?
\end{question}

\section{PEPS as a framework to give formal proofs in cond-mat problems}

The results and questions stated in the previous section point to the informal statement that $PEPS = GS$. This opens the possibility to analyze relevant questions for GS, that are really hard to solve in the case of arbitrary systems, using the framework of PEPS where rigorous mathematical proofs can be found.

An illustrative example is the study of 1D GS that are invariant under symmetries. In particular the question is the following: {\it in how many different ways a group can act as a symmetry in a quantum many-body system?} The inequivalent ways of that action classify the so-called Symmetry Protected Topological (SPT) phases and they are defined formally as follows:

\begin{definition}
Consider two gapped locally interacting Hamiltonians $H_0=\sum_\tau \tau(\mathfrak{h}_0)$ and $H_1=\sum_\tau \tau(\mathfrak{h}_1)$ on a ring $\Lambda$,  supported on local Hilbert spaces $\mathcal{H}_0=\C^{d_0}$ and $\mathcal{H}_1=\C^{d_1}$ respectively and such that they commute with unitary representations $U_0:G\rightarrow  \mathcal{U}(d_0)$, $U_1:G\rightarrow  \mathcal{U}(d_1)$ of a group $G$ (meaning that $[H_i,U_i(g)^{\otimes |\Lambda|}]=0$ for all $g\in G$) respectively. We say that $H_0$ and $H_1$ are in the same SPT phase if there exist another local ancillary Hilbert space $\mathcal{H}_a=\C^{d_a}$ and a locally interacting Hamiltonian $H_\lambda=\sum_\tau \tau(h_\lambda)$ with local Hilbert space $\mathcal{H}=\mathcal{H}_0\oplus  \mathcal{H}_1\oplus  \mathcal{H}_a$ so that
\begin{enumerate}
\item $[0,1]\ni\lambda\mapsto h_\lambda$ is smooth (real analytic) (where $\mathcal{H}_0$ and $\mathcal{H}_1$ are embedded in the corresponding sector of $\mathcal{H}$.)
\item There exists a representation $U_a:G\mapsto \mathcal{U}( d_a)$ so that $H_\lambda$ commutes with $(U_0\oplus U_1\oplus U_a)^{\otimes |\Lambda|}$  for all $\lambda$.
\item The spectral gap of $H_\lambda$ is bounded from below by a constant $c>0$ which is independent of $\lambda$ and the system size $|\Lambda|$.
\end{enumerate}
\end{definition}

It is clear that this definition gives rise to an equivalence relation, the different equivalent classes being the different SPT phases. Then, one can rephrase the question by: \emph{how many SPT phases are there for a given group $G$?}

In order to solve this question in unique GS of local gapped Hamiltonians, one can restrict to the case of injective MPS (and their parent Hamiltonians), see \cref{secclass} for the general case,  that are invariant under the action of a symmetry, i.e. MPS so that 
\begin{equation}\label{eq:MPS-sym}
\ket{\psi_{A}}=U(g)^{\otimes L}\ket{\psi_{A}}.
\end{equation}
It is proven in \cite{Schuch11} that Eq.\eqref{eq:MPS-sym} holds for all $L$ if and only if there exists a projective representation $V_g$ of $G$ acting on the virtual space $\mathcal{M}_{D\times D}$ so that

 \begin{equation} \label{fig:local-sym-MPS}
    \begin{tikzpicture}[baseline=-1mm]
      \pic at (0,0) {tensor};
      \draw (-0.25,0) -- (0.25,0);
      \draw (0,0) -- (0,0.25);
      \filldraw[draw=black, fill=purple] (0,0.15,0) circle (0.04);
      \node[anchor=east] at (0,0.25,0) {$U(g)$};
    \end{tikzpicture} =
    \begin{tikzpicture}[baseline=-1mm]
         \pic at (0,0) {tensor};
      \draw (-0.3,0) -- (0.3,0);
      \filldraw [draw=black, fill=red] (-0.2,0,0) circle (0.04);
      \filldraw [draw=black, fill=red] (0.2,0,0) circle (0.04);
      \node[anchor=north] at (-0.25,0.05) {$\myinv{V_g}$};
      \node[anchor=north] at (0.25,0) {$V_g$};
          \end{tikzpicture},
  \; \forall g\in G.
  \end{equation} 
 
From there, one can prove \cite{Pollmann10, Schuch11,Chen11} that 1D SPT phases in MPS are exactly given by the different non-equivalent projective representations of $G$, which is exactly the second cohomology group $H^2(G,U(1))$.

\subsection{Fundamental theorem in PEPS}

It is clear from the above argument how crucial it is to have a local (single tensor) characterization as in Eq. (\ref{fig:local-sym-MPS}) of the existence of a global symmetry \eqref{eq:MPS-sym}. In fact such local characterization is just a particular case, by fixing $g$ and defining ${B}= U(g) A$, of the following more general question for PEPS:
\begin{question}
What is the relation between two tensors $A$ and ${B}$  that define the same PEPS, i.e. $\ket{\Psi_A}=\ket{\Psi_{B}}$, on a torus $L\times L$ for all possible sizes $L$?
\end{question}

The Fundamental Theorem of MPS \cite{Cirac17A} shows that this happens in 1D if and only if there exists an invertible matrix $Y$ so that 
 \begin{equation} 
    \begin{tikzpicture}[baseline=-1mm]
      \pic at (0,0) {tensor};
      \draw (-0.25,0) -- (0.25,0);
      \draw (0,0) -- (0,0.25);
      \node[anchor=north] at (0,0) {${B}$};
    \end{tikzpicture} =
    \begin{tikzpicture}[baseline=-1mm]
         \pic at (0,0) {tensor};
         \node[anchor=north] at (0,0) {${A}$};
      \draw (-0.3,0) -- (0.3,0);
      \filldraw [draw=black, fill=red] (-0.2,0,0) circle (0.04);
      \filldraw [draw=black, fill=red] (0.2,0,0) circle (0.04);
      \node[anchor=north] at (-0.3,0.05) {$\myinv{Y}$};
      \node[anchor=north] at (0.3,0) {$Y$};
          \end{tikzpicture}.
  \end{equation} 
Note that in \cref{chap:FT} we have proven a Fundamental Theorem for normal PEPS in two and larger dimensions, in particular MPS, describing the same state without the condition of equality for all system sizes. Unfortunately, as opposed to the 1D case, in 2D  the restriction to normal tensors excludes all non-trivial SPT phases. This is why extending the Fundamental Theorem in 2D beyond normal tensors becomes a crucial question to solve (see \cite{Andras18B} for one such extension to the case of so-called quasi-injective PEPS).

On the opposite direction, it is shown in \cite{Scarpa18} that it is undecidable to know whether two general local tensors give rise to the same state for all system sizes in 2D. Therefore, if there is a local characterization of such fact must be an uncomputable (and hence useless) one.  

The big question then is to fill the gap in between these two extremes points: the true but rather incomplete normal case and the undecidable general case:

\begin{question}
Give a Fundamental Theorem in 2D (and higher dimensions) for the largest possible family of PEPS.
\end{question}

The relation between the tensors $A$ and ${B}$ has been investigated so far from the equality of their defining PEPS, nevertheless other conditions can be considered.  One of those could be the approximability of two PEPS in the thermodynamic limit:
\begin{question}
Given $A$ and ${B}$ such that there exists an $\epsilon >0$ and a system size $L_0$ such that for all $L \ge L_0$ 
$$\| \ket{\Psi_{A}} -\ket{\Psi_{{B}}}\|_2\le \epsilon,$$
is there a local relation between both tensors?
\end{question}
In contrast to previous questions, here there are no known results; one first step would be answering the question for normal PEPS.

\section{Miscellanea}

There are many other relevant questions about PEPS that were not formulated in the previous sections due to the need of introducing too specialized prerequisites. In this section we will list a selection of those, with the hope that researchers in the corresponding fields could be attracted to such problems:

\paragraph{Machine Learning.} MPS (and other Tensor Networks such as MERA) have been successfully used numerically in the context of Supervised Machine Learning (ML)  \cite{Stoudenmire}. They lack however an in-depth theoretical analysis. A concrete (relevant) question is the following: 

\begin{question}
Can one write the Rademacher complexity or the Vapnik-Chervonenkis (VC)-dimension for such ML algorithms as a function of the bond dimension?
\end{question}
 
 \paragraph{Computational Complexity.} Part of the difficulty of dealing with PEPS is that, as we saw before, they can encode hard (even undecidable) problems. For some type of problems concerning PEPS, the exact complexity class is known \cite{Schuch07}. In \cite{Scarpa18} it is shown that zero-testing in 2D PEPS is a central question to understand their fundamental limitations and the NP-hardness of that problem is proven (see also \cite{Gharibian}).
 
\begin{question}
Which is the exact complexity class for 2D PEPS zero-testing?
\end{question} 
 
 \paragraph{Topological complexity} The complexity of a state (in particular a PEPS) can also be measured in an operational way by the depth of the quantum circuit required to construct it from a different (usually simpler) state. Indeed, fast (meaning low-depth) convertibility in both directions between different states is the quantum-information-like definition for two states to belong to the same quantum phase (see \cite{Coser} for an in-depth discussion on that). One would expect however that one can always reduce complexity fast. Making this statement rigorous for topologically ordered phases boils down to find low-depth circuits of (noisy) gates that implement dynamically the notion of anyon condensation.  The formal question becomes (see \cite{Coser} for the necessary notions and definitions):
 
 \begin{question}
Is there a low-depth noisy quantum circuit that maps the quantum double phase associated to a finite group $G$ to the one associated to a normal subgroup $H$?. 
\end{question} 

\paragraph{Quantum Cellular Automata.} Quantum Cellular Automata (QCA) are unitary evolutions on a lattice that have a finite propagation cone \cite{QCA-Werner}. By means of the Lieb-Robinson bounds they can be seen as discrete analogues of time-evolutions of locally interacting systems. In \cite{Cirac17B} (see also \cite{Sahinoglu18}) it is shown that 1D translational invariant QCAs correspond exactly with the set of Matrix Product Unitaries MPU (MPOS that are unitary for all system size). This opens the possibility to combine techniques from MPS and QCAs in order to classify the different QCAs up to continuous deformations, as illustrated in \cite{Cirac17B} and \cite{QCA-onsite}.  The question is:

\begin{question}
Which is the exact relation between PEPS and QCA in 2D and higher dimensions? 
\end{question} 

See \cite{Haah} for recent work in this direction.

\newpage \cleardoublepage

\chapter{Conclusions and outlook}

In quantum mechanics, the number of parameters needed to describe a composite system grows exponentially with the number of sites of the system. 
Naturally, strongly-correlated systems deal with large composite systems.
%
The complexity of this scaling is captured by the entanglement pattern. Interestingly, when a gapped Hamiltonian is made of local interactions, the pattern of entanglement in the low-energy sector is restricted. This pattern is captured by tensor network states which arise as a suitable framework to study these systems numerically and analytically.

\

This thesis has the main goal of investigating symmetries in 2D topological tensor networks. For that purpose, we have studied what the allowed freedom in the tensors generating the same tensor network state is. These results are the so-called 'fundamental theorems' proven in \cref{chap:FT} and they give the necessary knowledge to properly study symmetries (actions that leave the states invariant). We focus our study on the family of PEPS describing quantum double models of $G$, the so-called $G$-injective PEPS, and on global on-site symmetries. 

\

The classification of symmetric $G$-injective PEPS is addressed in \cref{chap:classsymGPEPS}. We have obtained a finite number of phases for a $G$-injective PEPS with a global symmetry coming from a finite group $Q$. The classification of phases in that setting is closely related to the theory of group extensions. Remarkably, we provide the maps that appear in group extension theory, called $\phi$ and $\omega$ in this thesis, with physical meaning and we characterize their actions on the $G$-injective PEPS models: both on the ground subspace and on the anyons. 

\

A theoretical classification might in principle produce some non-realizable classes. We solve this issue in our setting by constructing a representative of each class of our classification. We do this in \cref{chap:cond} using the theory of group extensions as well. Moreover, we connect our construction with an interesting physical phenomenon, the so-called anyon condensation, which describes topological phase transitions.

The representatives of each of the phases we construct have a particular form, they are renormalization fixed points, that is,  their parent Hamiltonians are commuting and their ground states have zero correlation length. These properties allow for straightforward detection of the phase which the representative belongs to. But in general, we would like to identify the phase outside renormalization fixed points. To that end, we have proposed a family of gauge invariant quantities and their corresponding order parameters in \cref{detecSF}. These order parameters are formulated as expectation values of a local operator. We have shown that we can distinguish between all studied phases. Our approach does not rely on dimensional reduction, used in previous works, which fails to identify all phases.

The connection between the chapters of this thesis is summarized in the following diagram:

\begin{equation}\label{diathesiscon}
\tikzstyle{mybox} = [draw=blue, fill=gray!20, very thick,
    rectangle, rounded corners, inner sep=5pt, inner ysep=10pt]
\tikzstyle{fancytitle} =[fill=blue, text=white, ellipse]
\begin{tikzpicture}[transform shape]
 
\node [mybox] (C2) at (-1,0){%
    \begin{minipage}[t!]{0.2\textwidth}
    $E$-isometric PEPS, where $E$ is a group extension of $G$ and $Q$
    \end{minipage}
    };

\node [mybox] (C3) at (5,2){%
    \begin{minipage}[t!]{0.2\textwidth}
    $G$-injective PEPS with a global symmetry given by $Q$.
    ($G \triangleleft E$ and $Q= E/G$)
    \end{minipage}
    };
      
    \node [mybox] (C4) at (5,-4.5){%
    \begin{minipage}[t!]{0.2\textwidth}
  Maps $(\phi,\omega)$ that characterize the action of the symmetry on anyons, {\it i.e.} permutation and SF.
    \end{minipage}
    };
    
\node [mybox] (C5) at (-1,-3){%
    \begin{minipage}[t!]{0.2\textwidth}
Local order parameter for the detection of the SF pattern,{\it i.e.} the class of $\omega$
    \end{minipage}
    };
    
\node [mybox] (C6) at (6,-1){%
    \begin{minipage}[t!]{0.2\textwidth}
Characterization of global symmetries in terms of local tensors
    \end{minipage}
    };

\path[->,draw=gray,line width=1mm, -{Triangle[]}]
    (C2) edge[line width=0.742mm] node[  fill=white, anchor=center, pos=0.5] {Chapter 4}
    						    node[ anchor=center, below right , pos=0.5, text width= 1.9cm ] {(via anyon  condensation)} (C3);

\path[->,draw=gray,line width=1mm, -{Triangle[]}]
    (C3) edge[line width=0.742mm] node[fill=white, anchor=center, above, pos=0.6] {Chapter 2 (via FT)}
    						     (C6); 

\path[->,draw=gray,line width=1mm, -{Triangle[]}]
    (C6) edge[line width=0.742mm] node[  fill=white,anchor=center, above, pos=0.5] {Chapter 3}
    						   node[ anchor=center, below right , pos=0.35, text width=3.5cm ] {(via interplay between topology and symmetry)} (C4); 

\path[->,draw=gray,line width=1mm, -{Triangle[]}]
    (C4) edge[line width=0.742mm] node[ anchor=center, above, pos=0.5] {Chapter 5}
    						   (C5); 						    
\path[<->,draw=gray,line width=1mm]
    (C2) edge[line width=0.742mm] node[ fill=white, anchor=center, above, pos=0.4] {Appendix A}
    node[ anchor=center, right , pos=0.5, text width=3cm ] {(relation between  }
    node[ anchor=center, right , pos=0.6, text width=2cm ] { $(\phi,\omega)$ and $E$ ) }
    						   (C4); 						    
						   
\end{tikzpicture}
\end{equation}

\

This thesis contributes to the understanding of symmetry-enriched topological phases focusing on their descriptions in terms of tensor network states. The PEPS formalism allows us to locally encode the main properties of the models (like topological order, symmetries and their interplay) in the tensors. We have used that encoding to classify, construct and detect, see diagram \eqref{diathesiscon}, some class of symmetry-enriched topological phases in PEPS.

\

From our work, some open questions arise. The classification of the SF effect lacks a study of its effect on the entanglement spectrum in the anyonic sector. Based on the intuition on how a global symmetry modifies the entanglement spectrum in 1D SPT phases, see \cite{Pollmann10}, we would expect some analogy in the 2D case. This is because in the 1D case the gauge transformation of the MPS tensors under the symmetry allows for deduction of a degeneracy in the entanglement spectrum. It is also interesting to study the analog of string-orders for MPS, see \cite{PerezGarcia08B, Pollmann12}, in PEPS hosting a non-trivial SF pattern.

SET phases with PEPS can be potentially applied in quantum computation. Tensor network states have been used to perform measurement-based quantum computation, see \cite{Gross07,Stephen17,Raussendorf17,Raussendorf19}, using the gauge transformation of the symmetry in SPT phases. Topological order is used for topological quantum computing, see \cite{Nayak08,Freedman03}. Then, the construction of SET in PEPS suggests that some interplay between measurement-based and topological quantum computation can be achieved in this framework. There are already some works in this direction like Ref.\cite{Teo16} and Ref.\cite{Delaney18} that study how SET phases can be used for topological quantum computation in the framework of category theory.

An interesting point is the numerical and experimental implementation of our order parameter. 
For the experimental implementation, it would be interesting to generalize Ref. \cite{Elben19}, which proposes a method to implement in spin Hamiltonian the 1D SPT order parameters using randomized measurements.

\

Our work allows for some generalizations which are left for future work. First of all, one could consider continuous symmetry groups. It is not clear whether our approach for constructing the representative of each phase can be applied in this situation since the extension group $E$ would also be continuous and then the $E$-injective PEPS, see \cref{chap:cond}, is not well-defined. One can also consider more general classes of PEPS with topological order, like the so-called MPO-injective PEPS \cite{Sahinoglu14}, and obtain a fundamental theorem to characterize the corresponding topological phases under symmetries.

Finally, this work has been focused on bosonic phases of matter but fermionic systems also allow for a description in terms of PEPS \cite{Corboz10}. In particular, PEPS with fermionic topological order have been proposed in \cite{Wille17,Bultinck17} but fermionic SET phases in this formalism have not been constructed yet.

\begin{appendices}  
\addappheadtotoc

\chapter{Group extensions}\label{ap:ext} 

An extension of a group $Q$ is a group $E$, together with a surjective homomorphism $\pi:E\rightarrow Q$. The kernel of  $\pi$ is a normal subgroup $G$ of $E$. We say that the group $E$ with the homomorphism $\pi$ is an extension of $Q$ by $G$ \cite{Adem04} and it is encoded in the following short exact sequence:
$$1\rightarrow  G  \stackrel{i}{\rightarrow} E\stackrel{\pi}{\rightarrow} Q \rightarrow 1,$$
where $i$ is the inclusion map and $Q$ is isomorphic to the quotient group $E/G$.\\

In the case where $G$ is an abelian group, given a group extension $E$ of $Q$ by $G$ and the homomorphism $\pi$, two maps can be defined: (i) a homomorphism $\phi:Q  \rightarrow  {\rm Aut}(G)$ and (ii) a $2$-cocycle $\omega:Q\times Q\rightarrow G$ which satisfies 
\begin{equation} \label{concocycle}
\omega(k,q)\omega(kq,z)=\phi_k(\omega(q,z))\omega(k,qz).
\end{equation}
These maps are defined as follows. Given $k$, we pick a pre-image $\epsilon_k\in E$ such that $\pi(\epsilon_k)=k$, and we construct $\phi_k:g\mapsto \epsilon_k g \epsilon^{-1}_k$ and $\omega(k,q)=\epsilon_k \epsilon_q \epsilon^{-1}_{kq}$. There is some arbitrariness in the choice of the pre-image: $\epsilon_k$ and $g\epsilon_k $ are mapped to $k$ under $\pi$ for any $g \in G$.  This does not modify the map $\phi_k$ but this arbitrariness modifies the cocycle as follows
\begin{equation} \label{relcocycle}
\omega(k,q)\to g_k \phi_k(g_q)g^{-1}_{kq}\omega(k,q),
\end{equation}
where $g_k,g_q,g_{kq}\in G$ are the posible choices. The second cohomology group $H_\phi^2(Q,G)$ is defined as the quotient of the $2$-cocycles satisfying Eq.(\ref{concocycle}) by the $2$-coboundaries of the form $g_k \phi_k(g_q)g^{-1}_{kq}$, that is, we identify cocycles that are related by Eq.(\ref{relcocycle}). The elements of $E$ are decomposed uniquely as $g\epsilon_q$ for some $g\in G$ and $q\in Q$. The multiplication rule of $E$ can be written as
$$ (g\epsilon_q)\cdot (h\epsilon_k)= g\phi_q(h)\omega(g,h) \epsilon_{qk}.$$

As a set, $E$ can be expressed as the cartesian product $G\times Q$ with the rule for multiplication:
\begin{equation}\label{eq:carprod}
(g,q)(b,h)=(g\phi_q(h)\omega(q,k),qk).
\end{equation}
The product $(g,e)(h,e)=(gh,e)$ generates the normal subgroup $G$ and the product $(e,q)(e,k)=(\omega(q,k),qk)$ generates the group $Q$ after quotienting by $G$. The inverse of an element is $(g,q)^{-1}=(\phi_{q^{-1}}[(g\omega(q,q^{-1}))^{-1}],q^{-1})$.\\
If there exists a homomorphism $\phi:Q\rightarrow E$ such that $\pi \circ \phi =  \id_Q$, we say that the group extension {\it splits} and it is associated with the semidirect product $E=G \rtimes_{\rho} Q$. If such a homomorphism $\phi$ exists, the cocycle $\omega$ is trivial, i.e. $H^2(Q,G)=1$.
Two extensions are said to be equivalent if there is a isomorphism $\sigma:E\rightarrow E'$ such that the following diagram commutes:
\begin{equation} \label{eq:extdia}
\begin{array}{lllllllll}
1 & \longrightarrow  &G &\stackrel{i}{\longrightarrow} & E& \stackrel{\pi}{\longrightarrow}
& Q& \longrightarrow  & 1 \\
&  & \downarrow \,{\id}&  & \downarrow \,\sigma &  & \downarrow \,{\id} &  &  \\
1 & \longrightarrow  & G & \stackrel{i'}{\longrightarrow} &E' & \stackrel{\pi'}{\longrightarrow}
& Q & \longrightarrow  & 1.
\end{array}
\end{equation}
If $E$ and $E'$ come from a commutative diagram as Eq. (\ref{eq:extdia}) then, they are isomorphic as groups. However it is possible that the diagram (\ref{eq:extdia}) does not commute even though $E\equiv E'$ constructed with the same groups $Q$ and $G$. Hence equivalence of extensions is a more subtle notion than group equivalence. An important result is that if two extensions are equivalent then the action $Q\rightarrow {\rm Aut}(G)$ is the same for both extensions, and the cocycles describing the two extensions are in the same class in $H^2(Q,G)$.\\

To deal with the non-abelian case, two maps $\omega$ and $\phi$ can also be constructed as in the abelian case. But the map $\phi:Q\rightarrow {\rm Aut}(G)$ need not be a group homomorphism now. In fact it satisfies
$$\phi_k \circ \phi_q={\rm Inn}(\omega(k,q)) \circ \phi_{kq},$$
where Inn$(g)$ denotes the inner automorphism $h\mapsto g h g^{-1}: g,h \in G$. The map $\omega(k,q)$ is defined as in the abelian case and also satisfies the cocycle condition (\ref{concocycle}). However, the group homomorphism $\phi$ now maps elements of $Q$ to $ {\rm Out}(G) \equiv {\rm Aut}(G)/{\rm Inn}(G)$. The extension group equivalence is again defined as the commutation of the diagram (\ref{eq:extdia})  and is classified by $\phi$ and the cocycle $\omega$. It can be shown that the group $H_\phi^2(Q,Z(G))$ acts freely and transitively over the set of extensions, where $Z(G)$ is the center of $G$ \cite{Adem04}. In particular, this implies that $|H_\phi^2(Q,Z(G))|$ is equal to the number of the inequivalent cocycles. The elements of $H_\phi^2(Q,Z(G))$ are constructed as $c(q,k)=\omega'(q,k)\omega^{-1}(q,k)$, \textit{i.e.} the difference between cocycles, so that the non-trivial element maps one class into another. That is, the difference between  cocycles of non-abelian groups is given by a second cohomology group that classifies the general cocycles of the abelian groups. \\
\chapter{Projective representations}\label{Ap:projrep}

A projective representation of a group $Q$ is a homomorphism from $Q$ to $GL(n)$ up to a phase factor:
$$
V_qV_k=\rho(q,k) V_{qk};\; \forall q,k\in Q,
$$
where $\rho(q,k)\in U(1)$. Associativity of group multiplication implies the so called cocycle condition: $ \rho(q,k) \rho(qk,p)=  \rho(k,p)  \rho(q,kp)$. A change of basis of the vector space where $\{V_q\}$ act does not affect $\{\rho(q,k)\}$ but a phase redefinition $V_q\to V'_q \equiv \nu_q V_q$ induces the modification 
\beq\label{eq:equiv-rho}
\rho(q,k) \to \rho'(q,k) \equiv \nu^{-1}_q \nu^{-1}_k \nu_{qk}\rho(q,k).
\eeq
Eq.\eqref{eq:equiv-rho} is taken to be the equivalent relation to classify cocycles resulting in the group $H^2(Q,U(1))$.  An example of projective representation of $\mathbb{Z}_2\times \mathbb{Z}_2=\{x^2=y^2=z^2=e, xy=z \}$ is given by the Pauli matrices.
\

Given $Q$ there exists a group $E$ such that a projective representation $V$ of $Q$ can be expressed as a linear representation $U$ of $E$. $E$ is a so-called representation group of $Q$ \cite{Simon96}. The relation of the two groups is the extension $1\rightarrow G \rightarrow E \rightarrow Q \rightarrow1$ where $G$ is a group that satisfies $U_g\propto \id;\;\forall g \in G$. For the previous example $E=Q_8=\{ \pm 1,\pm i,\pm j,\pm k\}$ with the representation $U_i=i\sigma_z$, $U_j=-i\sigma_y$, $U_k=-i\sigma_x$.

\

Considering the definition of $\omega$ in the previous section
$$
\epsilon_k \epsilon_q =\omega(k,q) \epsilon_{kq},
$$
one can take a faithful representation $W$ of $E$ to realize this equation. The representation $V_q\equiv W_{\epsilon_q}$ can be seen as a \emph{projective} representation of $Q$:
$$
V_k V_q =W_{\omega(k,q)} V_{kq},
$$
where the projective factors are matrices: the representation $W$ restricted to the elements of $G\subset E$.
For example the case $G=Q=\mathbb{Z}_2$ realizing $\omega(-1,-1)=-1$ is $E=\mathbb{Z}_4$ where one can take the following representation:
$$
V_{-1}=
  \left( {\begin{array}{cccc}
   0 & 1 & 0&0 \\
 0 & 0 & 1&0 \\
  0 & 0 & 0&1 \\
   1 & 0 & 0&0 
  \end{array} } \right), \;
  W_{-1}=
  \left( {\begin{array}{cccc}
   0 & 0 & 1&0 \\
 0 & 0 & 0&1 \\
  1 & 0 & 0&0 \\
   0 & 1 & 0&0 
  \end{array} } \right).
  $$

\chapter{Creation of Dyons}\label{ap:dyonic}

A composite dyon-antidyon excitation is created by acting with certain combination of operators, the so-called ribbon operators \cite{Kitaev03, Bombin08}, over the ground state of $\mathcal{D}(G)$. This ground state can be constructed using $G$-isometric PEPS tensors \cite{Schuch10}. Here we have obtained the virtual representation of the ribbon operator corresponding to the composite dyon-antidyon excitation. In order to do so we apply that operator over the physical indices of a tensor network. Analyzing the virtual indices of the boundary we obtain the desired equivalence between physical and virtual operator.

\begin{figure}[ht!]
\begin{center}
\includegraphics{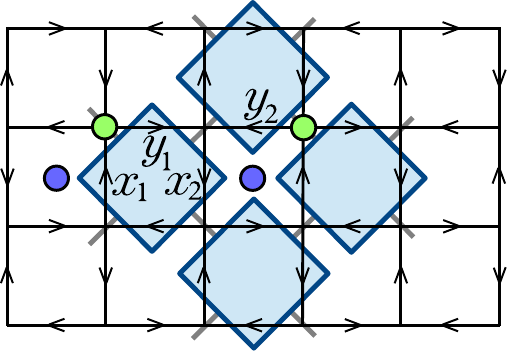}
\caption{The operator of Eq.\eqref{eq:ribbop} acts on four adjacent edges, denoted as $x_1,y_1,x_2,y_2$, and involves three vertices and three plaquettes. This operator depends on the orientation of each edge and we take the one represented by the arrows. The green and blue points identify the vertices and plaquettes excited respectively.}
\label{orientation}
\end{center}
\end{figure}

The ribbon operator we choose acts in four edges of the square lattice with the orientation illustrated in \cref{orientation}. The ribbon operator of a dyon can be written as follows \cite{Bombin08}:
\begin{equation}\label{eq:ribbop}
\mathcal{O}_{\alpha,h}\equiv \frac{d_{\alpha}}{|N_h|} \sum_{n\in N_h} \bar{\chi}_\alpha(n)\sum_{i,j=1}^{\kappa}  F^{\myinv{h}_i,k_in\myinv{k}_j},
\end{equation}
where the $k_i$'s are the representatives of the left cosets of $G$ by $N_h$ and the operator $ F^{h,g}$ acts over the four chosen edges as follows (see Fig. \ref{orientation} for clarification):
$$F^{h,g}\ket{x_1,y_1,x_2,y_2}= \delta_{g,x_1\myinv{x}_2}\ket{x_1\myinv{h},y_1,\myinv{y}_1h y_1x_2,y_2}. $$

The ground state of the quantum double of $G$ can be constructed with the following tensor \cite{Schuch10}:
$$K= \sum_{l,r,s,p\in G}\ket{p\myinv{l},l\myinv{r},r\myinv{s},s\myinv{p}} |p,l)(r,s|.$$
This tensor has the following virtual invariance:
\begin{equation} \label{eq:Rregular}
K\big[R_g\otimes R_g\otimes R^{\dagger}_g\otimes R^{\dagger}_g\big]_v =  
\sum_{l,r,s,p\in G}\ket{p\myinv{l},l\myinv{r},r\myinv{s},s\myinv{p}} \big[R_g\otimes R_g |p,l)(r,s| R^{\dagger}_g \otimes R^{\dagger}_g\big]=K \; \forall g \in G,
\end{equation}
which endows the state with topological properties. We now express the edges involved in the action of the operator of Eq.(\ref{eq:ribbop}) in its tensor network representation:
\begin{equation}\label{eq:physten}
 \mathcal{P}(K)\equiv \sum_{l,r,s,p,t\in G}\ket{p\myinv{l},l\myinv{r},r\myinv{s},s\myinv{p},t\myinv{r}} |p,l)(s,t|,
\end{equation}
for a driagrammatic representation see Fig. \ref{fig:indices}.
\begin{figure}[ht!]
\begin{center}
\includegraphics[scale=0.8]{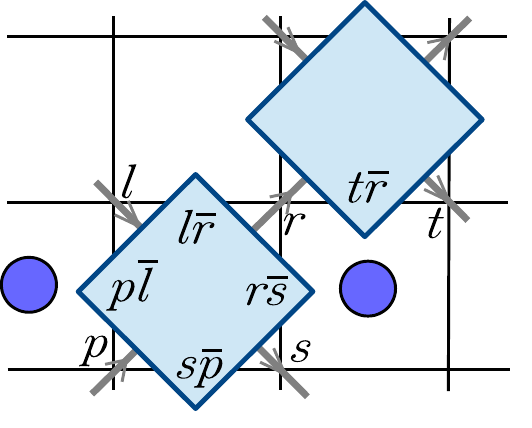}
\caption{Tensor network representation of the physical system involved in the creation of a dyon. The blue dots are depicted for comparison with Fig. \ref{orientation} and the virtual index $r$ is depicted for clarification.}
\label{fig:indices}
\end{center}
\end{figure}
The creation operator of the dyon acts on the tensor network representation as follows:
\begin{equation*}
\frac{|N_h|}{ d_{\alpha}}\mathcal{O}_{\alpha}  \mathcal{P}(K)=
 \sum_{\substack{ n\in N_h   \\  i,j=1,\cdots,\kappa  \\  l,r,s,p,t\in G }} \delta_{k_in\myinv{k}_j,l\myinv{t}} \;\bar{\chi}_\alpha(n) \; \ket{p\myinv{l}h_i,l\myinv{r},r\myinv{l}\; \myinv{h}_il\myinv{s},s\myinv{p},t\myinv{r}}  |p,l)(s,t|.
 \end{equation*}
We can now relabel the indices ($s'=s\myinv{l} h_i l$ and $p'=p\myinv{l} h_i l$) to obtain the action on the virtual d.o.f.:
\begin{equation}\label{eq:creadyon}
\sum_{\substack{  i,j=1,\cdots,\kappa  \\  l,r,s,p,t\in G }} \bar{\chi}_\alpha(\myinv{k}_i l \myinv{t} k_j)  \ket{p\myinv{l},l\myinv{r},r\myinv{s},s\myinv{p},t\myinv{r}}  |p\myinv{l}\;\myinv{h}_il,l)(s\myinv{l}\;\myinv{h}_il,t| 
=\sum_i\mathcal{F}_i \circ\mathcal{C}_i[ \mathcal{P}(K)],
\end{equation}
 where the operator $\sum_i\mathcal{F}_i \circ\mathcal{C}_i$ acts purely on the virtual d.o.f of $\mathcal{P}(K)$ and its components are defined as follows:
\begin{align}\label{eq:opedyon}
\mathcal{F}_i\big[| p,l)(s,t| \big]\equiv & \sum_{g\in G} R^{\dagger}_{\myinv{g}\;\myinv{h}_i g}\otimes |g)(g| \big[| p,l)(s,t|\big]R_{\myinv{g}\;\myinv{h}_i g}\otimes  \id,\notag \\
\mathcal{C}_i \big[| p,l)(s,t| \big]\equiv
&\sum_{n,m\in N_h} \bar{\chi}_\alpha(n\myinv{m})\;  \id\otimes|k_in)(k_i n|\big[| p,l)(s,t|\big]  \id\otimes \sum_j^\kappa |k_jm)(k_j m|, \notag
\end{align}

and where $\mathcal{F}_i$ and $\mathcal{C}_i$ can be regarded as the flux and charge part of the dyon respectively.
We can represent diagramatically the virtual operator corresponding to Eq.(\ref{eq:creadyon}) as follows:
$$\sum_{\substack{  i=1,\cdots,\kappa  \\ g\in G \\n,m\in N_h }} \bar{\chi}_\alpha(n\myinv{m})\;
\parbox[c]{0.4\textwidth}{ \includegraphics[scale=0.9]{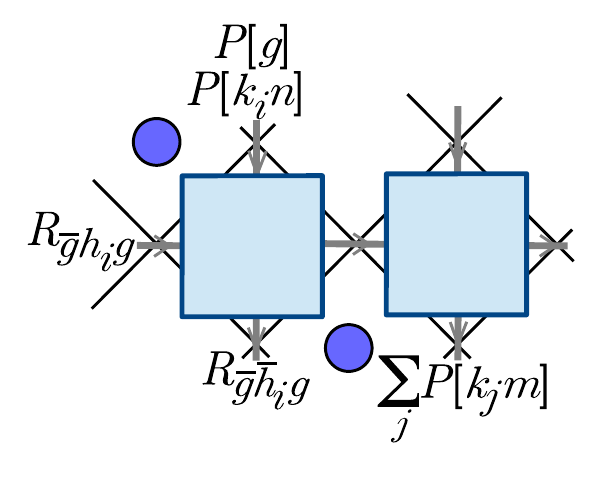}},$$
where $P[a]=\ket{a}\bra{a}$. Let us simplify this expression; $P_g P_{k_i n}= P_{k_i n}\delta_{g, k_i n}$  and also $\myinv{g} h_i g=h$, then the virtual operator is
$$\sum_{n,m\in N_h } \bar{\chi}_\alpha(n\myinv{m}) \sum_i P_{k_i n}\otimes R_h\otimes R_{\myinv{h}}\otimes \sum_j P_{k_j m},$$
which can be represented in the following forms using the $G$-invariance of the tensors:
\begin{equation*}    
 \begin{tikzpicture}
   \draw (-1,0)--(1,0);  
  \draw (-0.5,-0.5)--(-0.5,0.5);
    \draw (0.5,-0.5)--(0.5,0.5);   
 	\filldraw[draw=black,fill=orange] (-0.6,0.2) rectangle (-0.4,0.4); 
	     \node[anchor=west] at (-0.5,0.4) {$P_{k_in}$};
	 	\filldraw[draw=black,fill=orange] (0.6,-0.2) rectangle (0.4,-0.4);  
		 \node[anchor=west] at (0.5,-0.3) {$P_{k_j m}$};     
	\filldraw[draw=black,fill=blue]  (-0.5,-0.1) circle (0.05);
	 \node[anchor=west] at (-0.5,-0.2) {$\myinv{h}$};
	\filldraw[draw=black,fill=blue]  (-0.7,0) circle (0.05);
	 \node[anchor=south] at (-0.7,0) {${h}$};
\end{tikzpicture}
=
 \begin{tikzpicture}
   \draw (-1,0)--(1,0);  
  \draw (-0.5,-0.5)--(-0.5,0.5);
    \draw (0.5,-0.5)--(0.5,0.5);   
 	\filldraw[draw=black,fill=orange] (-0.6,0.2) rectangle (-0.4,0.4); 
	     \node[anchor=west] at (-0.5,0.4) {$P_{k_in}$};
	 	\filldraw[draw=black,fill=orange] (0.6,-0.2) rectangle (0.4,-0.4);  
		 \node[anchor=west] at (0.5,-0.3) {$P_{k_j m}$};     
	\filldraw[draw=black,fill=blue]  (-0.5,0.1) circle (0.05);
	 \node[anchor=east] at (-0.5,0.2) {$\myinv{h}$};
	\filldraw[draw=black,fill=blue]  (0,0) circle (0.05);
	 \node[anchor=north] at (0,0) {${h}$};
\end{tikzpicture}
=
 \begin{tikzpicture}
   \draw (-1,0)--(1,0);  
  \draw (-0.5,-0.5)--(-0.5,0.5);
    \draw (0.5,-0.5)--(0.5,0.5);   
 	\filldraw[draw=black,fill=orange] (-0.6,0.2) rectangle (-0.4,0.4); 
	     \node[anchor=west] at (-0.5,0.4) {$P_{k_in}$};
	 	\filldraw[draw=black,fill=orange] (0.6,-0.2) rectangle (0.4,-0.4);  
		 \node[anchor=west] at (0.5,-0.3) {$P_{k_j m}$};     
	\filldraw[draw=black,fill=blue]  (-0.5,0.1) circle (0.05);
	 \node[anchor=east] at (-0.5,0.2) {$\myinv{h}$};
	 \filldraw[draw=black,fill=blue]  (0.5,0.1) circle (0.05);
	 \node[anchor=east] at (0.5,0.2) {$\myinv{h}$};
	\filldraw[draw=black,fill=blue]  (0.7,0) circle (0.05);
	 \node[anchor=south west] at (0.7,0) {${h}$};
	 	\filldraw[draw=black,fill=blue]  (0.5,-0.1) circle (0.05);
	 \node[anchor= east] at (0.5,-0.1) {${h}$}; 
\end{tikzpicture}.
 \end{equation*} 
An analogous construction can be given for longer ribbon operators which create pairs dyon-antidyon separated for longer distancies. We can consider only part of the pair: an isolated dyon. This operator would correspond with a string of $R_g$ operators ended with the operator \cref{dyonend}.

We point out that the tensor $K$ has the invariance described in \cref{eq:Rregular} due to the chosen clockwise direction of the edges contained in $K$ (see \cref{orientation}) \cite{Schuch10}. A counterclockwise direction would give rise to a tensor with the virtual invariance represented by $L_g$ instead of $R_g$, which would be unitary equivalent to the $G$-isometric PEPS. This relation connects the tensor $K$, obtained in \cite{Schuch10} and used in this section, with the convention used through the main text.

\chapter{Time reversal symmetry and reflexion symmetry}\label{symTRS}

Let us consider a $G$-injective {PEPS} invariant under time reversal symmetry $ \mathcal{T} |\psi\rangle=  |\psi\rangle$ which is realised by an anti-unitary global operator $\mathcal{T}=U_{\mathcal{T}}^{\otimes N}K$, where $U_{\mathcal{T}}$ is unitary and $K$ is the complex conjugation operator and we will denote its action as $K v= v^*$ where $^*$ means complex conjugation. The local transformation of the tensors is
\begin{equation}\label{eq:TRS}
U_{\mathcal{T}}A^*=A(\myinv{V}_{\mathcal{T}}\otimes V_{\mathcal{T}}\otimes \myinv{V}_{\mathcal{T}}\otimes V_{\mathcal{T}}).
\end{equation}
The condition $ \mathcal{T}^2=\id$ implies $A= U_{\mathcal{T}}(U_{\mathcal{T}}A^*)^*= A ({\myinv{V}_{\mathcal{T}}}^* \myinv{V}_{\mathcal{T}}\otimes V_{\mathcal{T}}V^*_{\mathcal{T}}\otimes {\myinv{V}_{\mathcal{T}}}^* \myinv{V}_{\mathcal{T}}\otimes V_{\mathcal{T}}V^*_{\mathcal{T}})$
\cite{Molnarinprep}. Therefore, we conclude that $$V_{\mathcal{T}}V^*_{\mathcal{T}}\equiv\omega_{\mathcal{T}}\in G.$$ We point out that $A^*$ is a $G$-injective tensor with respect to the conjugated representation of $G$ acting on $A$ that we will denote as $g^*\in G^*$. Using Eq.(\ref{eq:TRS}) and the corresponding $G$-injectivity of both $A^*$ and $A$,  the following holds $$V_{\mathcal{T}}g^*\myinv{V}_{\mathcal{T}}\equiv\phi_{\mathcal{T}}(g^*)\in G\;\Rightarrow \phi_{\mathcal{T}}\circ \phi^*_{\mathcal{T}}=\tau_{\omega_{\mathcal{T}}},$$ where $\phi^*_{\mathcal{T}}(g)\equiv V^*_{\mathcal{T}}g{\myinv{V}_{\mathcal{T}}}^*\in G^*$. We notice the difference with an  internal $\mathbb{Z}_2=\{1,k\}$ symmetry where we would have obtained $V^2_k\in G$ and $V_k g \myinv{V}_k\in G$. If the representation of $G$, in some basis, is real, then $U_{\mathcal{T}}A=A(\myinv{V}_{\mathcal{T}}\otimes V_{\mathcal{T}}\otimes \myinv{V}_{\mathcal{T}}\otimes V_{\mathcal{T}})$. This is the case for $G$-isometric {PEPS} with the left regular representation in the group algebra basis: $L_g=\sum_{h\in G}|gh\rangle\langle h|\Rightarrow  L^*_g=L_g$ which implies that $ A=A^*$.
From Eq.(\ref{eq:TRS}), it is clear that the operator $V_{\mathcal{T}}$ is defined up to an element of $G$: $V_{\mathcal{T}}\sim gV_{\mathcal{T}}$ so%
$$ \omega'_{\mathcal{T}} = g \phi_{\mathcal{T}} (g^*) \omega_{\mathcal{T}}. $$%
We can define recursively the coefficient of the $m$-power of $\omega'_{\mathcal{T}}$: $${\omega'_{\mathcal{T}}}^m = h_m \omega_{\mathcal{T}}^m,$$ where $ h_m=h_1\tau_{\omega_{\mathcal{T}}} (h_{m-1})$  and $h_1= g\phi_{\mathcal{T}}(g^*)$. Given a finite group $G$ we are looking for an $m<|G|$ such that $h_m=e\Rightarrow {\omega'_{\mathcal{T}}}^m={\omega_{\mathcal{T}}}^m$. For the toric code, $G=\mathbb{Z}_2=\{1,g\}$ the two phases are distinguished by $ \omega_{\mathcal{T}}=\{1,g\}$, which is gauge-invariant ($m=1$), and it follows that in both cases $\phi_{\mathcal{T}}(g^*)=g$. The phase with $ \omega_{\mathcal{T}}=g$ corresponds to a non-trivial symmetry fractionalization of the charge and it is equivalent to the braiding with the $m$ particle, resulting in a $-1$ sign. It is left for future work to understand how these phases can be distinguished for any group $G$.

We consider now the case of a $G$-injective PEPS invariant under reflection with respect to a horizontal line. This symmetry is realized at the level of tensors by
$$ U_\sigma A \pi=A(\myinv{V}\otimes W\otimes V\otimes \myinv{W}),$$
where $U_\sigma$ is a transposition of the blocked sites of the tensors, $\pi$ is the horizontal flip operator in the virtual level (interchange plus transposition) and $V$ is the virtual operator acting on the horizontal part. Notice that we are assuming a translational invariance under blocked tensors. If we apply again another horizontal reflection, it follows that
\begin{align*}
A= & \;U_\sigma A(\myinv{V}\otimes W\otimes V\otimes \myinv{W}) \pi\\
 =& \; [U_\sigma A\pi](\myinv{V}\otimes (\myinv{W})^T \otimes V\otimes W^T )\\
 =&  \; A(V^{-2}\otimes W(\myinv{W})^T \otimes V^2\otimes \myinv{W} W^T ),
 \end{align*}
which implies that $V^2=W(\myinv{W})^T\in G$ .
\newpage \cleardoublepage 

\end{appendices}

\bibliographystyle{acm}
\bibliography{Bibliografia}

\end{document}